\newtheorem{theorem}{Theorem}[section]
\newtheorem{ass}{Assumption}
\newtheorem{lemma}[theorem]{Lemma}
\newtheorem{proposition}[theorem]{Proposition}
\DeclareMathOperator*{\argmin}{arg\,min}
\newenvironment{proof}[1][Proof]{\textbf{#1.} }{\  \rule{0.5em}{0.5em}}
\numberwithin{equation}{section}
\newcommand{\RNum}[1]{\uppercase\expandafter{\romannumeral #1\relax}}
\begin{document}

\title{Low-rank Panel Quantile Regression: Estimation and Inference\thanks{%
Su gratefully acknowledges the support from the National Natural Science
Foundation of China under Grant No. 72133002. Zhang acknowledges the financial support from a Lee Kong Chian fellowship. Any and all errors are our own. }}
\author{Yiren Wang$^{a}$, Liangjun Su$^{b}$ and Yichong Zhang$^{a}$ \\
%EndAName
$^{a}$School of Economics, Singapore Management University, Singapore\\
$^{b}$School of Economics and Management, Tsinghua University, China}
\maketitle

\begin{abstract}
In this paper, we propose a class of low-rank panel quantile regression
models which allow for unobserved slope heterogeneity over both individuals
and time. We estimate the heterogeneous intercept and slope matrices via
nuclear norm regularization followed by sample splitting, row- and
column-wise quantile regressions and debiasing. We show that the estimators
of the factors and factor loadings associated with the intercept and slope
matrices are asymptotically normally distributed. In addition, we develop
two specification tests: one for the null hypothesis that the slope
coefficient is a constant over time and/or individuals under the case that
true rank of slope matrix equals one, and the other for the null hypothesis
that the slope coefficient exhibits an additive structure under the case
that the true rank of slope matrix equals two. We illustrate the finite
sample performance of estimation and inference via Monte Carlo simulations
and real datasets.\medskip

\noindent \textbf{Key words:} Debiasing, heterogeneity, nuclear norm
regularization, panel quantile regression, sample splitting, specification
test. \medskip

\noindent \textbf{JEL Classification:} C23, C31, C32, C52\medskip\ 

\ \ \ \ \ \ \ \ \ \ \ 

\ \ \ \ \ \ \ \ \ \ \ 
\end{abstract}

\ \ \ \ \pagebreak

\section{Introduction}

Panel quantile regressions are widely used to estimate the conditional
quantiles, which can capture the heterogeneous effects that may vary across
the distribution of the outcomes. Such effects are usually assumed to be
homogeneous across individuals and over time periods. However, in empirical
analyses, it is usually unknown whether the slope coefficients are
homogeneous across individuals and/or time. Mistakenly forcing slopes to be
homogeneous across time and individuals may lead to inconsistent estimation
and misleading inferences. This prompts two questions to be answered: how
can we estimate the true model at different quantiles when we allow for
heterogeneous slopes across individuals and time at the same time? How to
conduct specification tests for homogeneous effects over individuals or time
and tests for the additive structure of the slope coefficients?

To answer the first question, we propose an estimation procedure for
heterogeneous panel quantile regression models where we allow the fixed
effects to be either additive or interactive, and the slope coefficients to
be heterogeneous over both individuals and time. We impose a low-rank
structure for both the intercept and slope coefficient matrices and estimate
them via nuclear norm regularization (NNR) followed by the sample splitting,
row- and column-wise quantile regressions and debiasing steps. The
estimation algorithm is inspired by \cite{chernozhukov2019inference}, where
the main difference is that we split the full sample into three subsamples
rather than two because we need certain uniform results which require
independence of regressors and regressand used in the debiasing step, and we
do not have the closed form for the quantile regression estimates. At last,
we derive the asymptotic distributions for the estimators of the factors and
factor loadings associated with slope coefficient matrices.

To answer the second question, under the case when the rank of slope
coefficient matrix equals one, we conduct sup-type specification tests for
homogeneous effects over individuals or time following the lead of \cite%
{castagnetti2015inference} and \cite{lu2021uniform}. We show that our
sup-test statistics follow the Gumbel distribution under the null, and the
tests have non-trivial power against certain classes of local alternatives.
Under the case when the rank of slope matrix equals two, our sup-type test
statistic is also shown to follow the Gumbel distribution under the null
that the slope coefficient exhibits an additive structure.

This paper relates to three bunches of literature. First, we contribute to
the large literature on panel quantile regressions (PQRs). Since \cite%
{koenker2004quantile} studied the PQRs with individual fixed effects, there
has been an increasing number of papers on PQRs. \cite{galvao2010penalized}, 
\cite{kato2012asymptotics}, \cite{galvao2015efficient}, \cite%
{galvao2016smoothed}, \cite{machado2019quantiles}, and \cite%
{galvao2020unbiased} study the asymptotics for PQRs with individual fixed
effects. \cite{chen2021quantile} study quantile factor models and \cite%
{chen2019two} considers PQRs with interactive fixed effects (IFEs). We
complement the literature by allowing for unobserved heterogeneity in the
slope coefficients of PQRs.

Second, our paper also pertains to slope heterogeneity in panel data models.
Latent group structures across individuals and structural changes over time
are two common types of slope heterogeneity that have received vast
attention in the literature. To recover the unobserved group structures,
various methods have been proposed. For example, \cite{lin2012estimation}, 
\cite{bonhomme2015grouped} and \cite{ando2016panel} use the K-means
algorithm; \cite{su2016identifying} propose the C-lasso algorithm which is
further studied and extended by \cite{su2018identifying}, \cite{su2019sieve}
and \cite{wang2019heterogeneous}; \cite{wang2018homogeneity} propose an
clustering algorithm in regression via data-driven segmentation called
CARDS; \cite{wang2021identifying} propose a sequential binary segmentation
algorithm to identify the latent group structures in nonlinear panels.
Recent literature on the estimation with structural changes in panel data
models includes, but is not limited to, \cite{chen2015estimating}, \cite%
{cheng2016shrinkage}, \cite{ma2018estimation}, \cite{baltagi2021estimating}.
In addition, \cite{galvao2018testing} and \cite{zhang2019quantile} consider
individual heterogeneity in PQRs while they assume homogeneity across time.
To allow for both latent groups and structural breaks, \cite%
{okui2021heterogeneous} study a linear panel data model with individual
fixed effects where each latent group has common breaks and the breaking
points can be different across different groups, and they propose a grouped
adaptive group fused lasso (GAGFL) approach to estimate slope coefficients. 
\cite{lumsdaine2021estimation} consider a linear panel data model with a
grouped pattern of heterogeneity where the latent group membership structure
and/or the values of slope coefficients can change at a breaking point, and
they propose a K-means-type estimation algorithm and establish the
asymptotic properties of the resulting estimators. Compared with the models
studied above, our model combines both individual and time heterogeneity and
only requires certain low-rank structure in the slope coefficient matrix. So
the unobserved heterogeneity takes a more flexible form in our model than
those in the literature such as \cite{okui2021heterogeneous} and \cite%
{lumsdaine2021estimation}.

% Rather than the one-way heterogeneity which can have consistent estimators by using data from the other dimension, two-way heterogeneity setting in our model can not obtain consistent estimators easily by using either individual or time span information. Compared to the literature, our paper contribute by allowing a more general model set up.

Last, our paper also connects with the burgeoning literature on nuclear norm
regularization. Such a method has been widely adopted to study panel and
network models. See, \cite{alidaee2020recovering}, \cite{athey2021matrix}, 
\cite{bai2019rank}, \cite{belloni2019high}, \cite{chen2020noisy}, \cite%
{chernozhukov2019inference}, \cite{feng2019regularized}, \cite%
{Hong_Su_Jiang2022}, \cite{Miao_Phillips_Su2022}, among others. In the least
squares panel framework, \cite{moon2018nuclear} consider a homogeneous panel
with IFEs by using NNR-based estimator as an initial estimator to construct
iterative estimators that are asymptotically equivalent to the least squares
estimators; \cite{chernozhukov2019inference} study a heterogenous panel
where both the intercept and slope coefficient matrices exhibit a low-rank
structure and establish the asymptotic distribution theory based on NNR. In
the presence of endogeneity, \cite{Hong_Su_Jiang2022} proposes a profile GMM
method to estimate panel data models with IFEs. In the panel quantile
regression setting, \cite{feng2019regularized} develops error bounds for the
low-rank estimates in terms of Frobenius norms under independence
assumption; \cite{belloni2019high} relaxes the independence assumption to
the $\beta $-mixing condition along the time dimension. Our paper extends 
\cite{chernozhukov2019inference} from the least squares framework to the PQR
framework, derives the asymptotic distribution theory and develops various
specification tests under some strong mixing conditions along the time
dimension that is weaker than the $\beta $-mixing condition. We also rely on
the sequential symmetrization technique developed by \cite%
{rakhlin2015sequential} to obtain the convergence rates of the nuclear norm
regularized estimators.

% Since the $\beta$-mixing coefficients are always bigger than the strong mixing ones, i.e., strong mixing condition is a weaker assumption compared to $\beta$-mixing assumption, our paper extends the literature by showing desired result under weaker condition.

The rest of the paper is organized as follows. We first introduce the
low-rank structure PQR model and the estimation algorithm in Section 2. We
study the asymptotic properties of our estimators in Section 3. In Section
4, we propose two specification tests: one for the no-factor structure and
one for the additive structure, and study the asymptotic properties of the
test statistics. In Section 5, we show the finite sample performance of our
method via Monte Carlo simulations. In Section 6, we apply our method to two
datasets: one is to study how Tobin's q and cash flows affect corporate
investment and whether firm's external investment to its internal financing
exhibits heterogeneity structure, and the other is to study the relationship
between economics growth, foreign direct investment and unemployment.
Section 7 concludes. All proofs are related to the online supplement.

\textit{Notation.} $\left\Vert \cdot \right\Vert _{1}$, $\left\Vert \cdot
\right\Vert _{op}$, $\left\Vert \cdot \right\Vert _{\infty }$, $\left\Vert
\cdot \right\Vert _{\max }$ $\left\Vert \cdot \right\Vert _{2}$, $\left\Vert
\cdot \right\Vert _{F}$, $\left\Vert \cdot \right\Vert _{\ast }$ denote the
matrix norm induced by 1-norms, the matrix norm induced by 2-norms, the
matrix norm induced by $\infty $-norms, the maximum norm, the Euclidean
norm, the Frobenius norm and the nuclear norm. $\odot $ is the element-wise
product. $\lfloor \cdot \rfloor $ and $\lceil \cdot \rceil $ denote the
floor and ceiling functions, respectively. $a\vee b$ and $a\wedge b$ return
the max and the min of $a$ and $b,$ respectively. The symbol $\lesssim $
means \textquotedblleft the left is bounded by a positive constant times the
right\textquotedblright . Let $A=\{A_{it}\}_{i \in [n], t\in [T]}$ be a matrix with its $(i,t)$-th
entry denoted as $A_{it}$, where $[n]$ to denote the set $%
\{1,\cdots ,n\}$ for any positive integer $n$. Let $\{A_{j}\}_{j=0}^{p}$ denote the collection
of matrices $A_{j}$ for all $j\in \{0,\cdots ,p\}$. When $A$ is symmetric, $%
\lambda _{\max }(A)$ and $\lambda _{\min }(A)$ denote its largest and
smallest eigenvalues, respectively. The operators $\rightsquigarrow $ and $%
\operatornamewithlimits{\to}\limits^{p}$ denote convergence in distribution
and in probability, respectively. Besides, we use w.p.a.1 and
a.s. to abbreviate \textquotedblleft with probability approaching
1\textquotedblright\ and \textquotedblleft almost surely\textquotedblright ,
respectively.

\section{Model and Estimation}

In this section, we introduce the PQR model and estimation algorithm.

\subsection{Model}

Consider the PQR model 
\begin{equation}
\mathscr{Q}_{\tau }\left( Y_{it}\bigg|\left\{ X_{j,it}\right\} _{j\in
\lbrack p],t\in \lbrack T]},\left\{ \Theta _{j,it}^{0}\left( \tau \right)
\right\} _{j\in \lbrack p]\cup \{0\},t\in \lbrack T]}\right) =\Theta
_{0,it}^{0}(\tau )+\sum_{j=1}^{p}X_{j,it}\Theta _{j,it}^{0}(\tau ),
\label{eq:model}
\end{equation}%
where $i\in \left[ N\right] ,$ $t\in \left[ T\right] ,$ $\tau \in (0,1)$ is
the quantile index, $Y_{it}$ is the dependent variable, $X_{j,it}$ is the $j$%
-th regressor for individual $i$ at time $t$, $\{\Theta _{j,it}^{0}\}_{j\in
\lbrack p]}$ is the corresponding slope coefficient, $\Theta _{0,it}^{0}$ is
the intercept, and $\mathscr{Q}_{\tau }\left( Y_{it}\bigg|\left\{
X_{j,it}\right\} _{j\in \lbrack p],t\in \lbrack T]},\left\{ \Theta
_{j,it}^{0}\left( \tau \right) \right\} _{j\in \lbrack p]\cup \{0\},t\in
\lbrack T]}\right) $ denotes the conditional $\tau $-quantile of $Y_{it}$
given the regressors $\left\{ X_{j,it}\right\} _{j\in \lbrack p],t\in
\lbrack T]}$ and\ the parameters $\left\{ \Theta _{j,it}^{0}\left( \tau
\right) \right\} _{j\in \lbrack p]\cup {0},t\in \lbrack T]}$.\footnote{%
We will assume that both the intercept term $\Theta _{0,it}^{0}$ and the
slope coefficients $\{\Theta _{j,it}^{0}\}_{j\in \lbrack p]}$ have low-rank
structures, and follow the convention in the panel data literature by
treating the factors to be random. Therefore, $\{\Theta _{j,it}^{0}\}_{j\in
\lbrack p]\cup \{0\}}$ are random as well.} Alternatively, we can rewrite
the above model as 
\begin{align}
& Y=\Theta _{0}^{0}(\tau )+\sum_{j=1}^{p}X_{j}\odot \Theta _{j}^{0}(\tau
)+\epsilon (\tau)\quad \text{and}  \notag \\
& \mathscr{Q}_{\tau }\left( \epsilon _{it} (\tau)\bigg|\left\{ X_{j,it}\right\}
_{j\in \lbrack p],t\in \lbrack T]},\left\{ \Theta _{j,it}^{0}\left( \tau
\right) \right\} _{j\in \lbrack p]\cup \{0\},t\in \lbrack T]}\right) =0,
\end{align}%
where $\epsilon (\tau)$ is the idiosyncratic error matrix with the $(i,t)$%
-th entry being $\epsilon _{it} (\tau)$. Similarly, $X_{j}$, $\Theta
_{j}\left( \tau \right) $, and $Y$ are matrices with the $(i,t)$-th entry
being $X_{j,it}$, $\Theta _{j,it}\left( \tau \right) $, and $Y_{it}$,
respectively. In this model, we assume $p$, the number of regressors, is
fixed and both $N$ and $T$ pass to infinity. In Assumption \ref{ass:1}
below, we characterize the dependence of the data, under which %
\eqref{eq:model} holds.

In the paper, we focus on the panel quantile regression for a fixed $\tau $
and thus suppress the dependence of $\Theta _{j}^{0}(\tau )$ and $\epsilon
(\tau )$ on $\tau $ for notation simplicity. In addition, we impose low-rank
structures for the intercept and slope matrices, i.e., $\text{rank}(\Theta
_{j}^{0})=K_{j}$ for some positive constant $K_{j}$ and for each $j\in
\{0,\cdots ,p\}$. By the singular value decomposition (SVD), we have 
\begin{equation*}
\Theta _{j}^{0}=\sqrt{NT}\mathcal{U}_{j}^{0}\Sigma _{j}^{0}\mathcal{V}%
_{j}^{0\prime }=U_{j}^{0}V_{j}^{0\prime }\text{ }\,\forall \,j=0,\cdots ,p,
\end{equation*}%
where $\mathcal{U}_{j}^{0}\in \mathbb{R}^{N\times K_{j}}$, $\mathcal{V}%
_{j}^{0}\in \mathbb{R}^{T\times K_{j}}$, $\Sigma _{j}^{0}=\text{diag}(\sigma
_{1,j},\cdots ,\sigma _{K_{j},j})$, $U_{j}^{0}=\sqrt{N}\mathcal{U}%
_{j}^{0}\Sigma _{j}^{0}$ with each row being $u_{i,j}^{0\prime }$, and $%
V_{j}^{0}=\sqrt{T}\mathcal{V}_{j}^{0}$ with each row being $v_{t,j}^{0\prime
}$.

The low-rank structure assumption includes several popular cases. For the
intercept term, one commonly assumes that $\Theta _{0,it}^{0}$ to take the
forms $\alpha _{i}^{0},$ $\mu _{t}^{0},$ or $\alpha _{i}^{0}+\mu _{t}^{0}$
in classical PQRs. Then the matrix $\Theta _{0}^{0}$ has rank 1, 1, and 2,
respectively. It is also possible to assume $\Theta _{0,it}^{0}$ to take an
interactive form, say, $\Theta _{0,it}^{0}=\lambda _{0,i}^{0\prime
}f_{0,t}^{0},$ where both $\lambda _{0,i}^{0}$ and $f_{0,t}^{0}$ are $K_{0}$%
-vectors. For the slope matrix $\Theta _{j}^{0}$, $j\in \left[ p\right] ,$
the early PQR models frequently assume that $\Theta _{j,it}^{0}$ is a
constant across $\left( i,t\right) \ $ to yield a homogenous PQR model.
Obviously, such a model is very restrictive by assuming homogenous slope
coefficients. It is possible to allow the slope coefficients to change over
either $i$, or $t$, or both. See the following examples for different
low-rank structures.\bigskip

\noindent \textbf{Example 1.} When $\Theta _{j,it}^{0}=\Theta _{j,i}^{0}$ $%
\forall t\in \left[ T\right] ,$ or $\Theta _{j,it}^{0}=\Theta _{j,t}^{0}$ $%
\forall i\in \left[ N\right] ,$ or $\Theta _{j,it}^{0}=\Theta _{j}^{0}$ $%
\forall \left( i\text{,}t\right) \in \left[ N\right] \times \left[ T\right] $%
, and this holds for all $j\in \left[ p\right] ,$ we have the PQR models
with only individual heterogeneity, with only time heterogeneity, and with
homogeneity, respectively. We observe that $K_{j}=1$ for these three
cases.\bigskip

\noindent \textbf{Example 2.} When $\Theta _{j,it}^{0}=\lambda
_{j,i}^{0}+f_{j,t}^{0}$, we notice that 
\begin{equation*}
\frac{\Theta _{j}^{0}}{\sqrt{NT}}=%
\begin{bmatrix}
\frac{1}{\sqrt{N}} & \frac{\lambda _{j,1}^{0}}{\sqrt{N}} \\ 
\vdots  & \vdots  \\ 
\frac{1}{\sqrt{N}} & \frac{\lambda _{j,N}^{0}}{\sqrt{N}}%
\end{bmatrix}%
\begin{bmatrix}
\frac{f_{j,1}^{0}}{\sqrt{T}} & \cdots  & \frac{f_{j,T}^{0}}{\sqrt{T}} \\ 
\frac{1}{\sqrt{T}} & \cdots  & \frac{1}{\sqrt{T}}%
\end{bmatrix}%
:=A_{j}B_{j}^{\prime }.
\end{equation*}%
Let $\Sigma _{A,j}:=A_{j}^{\prime }A_{j}$ and $\Sigma _{B,j}:=B_{j}^{\prime
}B_{j}$. Let $\Sigma _{A,j}^{\frac{1}{2}}$ (resp. $\Sigma _{B,j}^{\frac{1}{2}%
}$) be the symmetric square root of $\Sigma _{A,j}$ (resp. $\Sigma _{B,j}$).
By eigendecomposition, we have $\Sigma _{A,j}^{\frac{1}{2}%
}=P_{j,1}S_{j,1}P_{j,1}^{\prime }$ and $\Sigma _{B,j}^{\frac{1}{2}%
}=P_{j,2}S_{j,2}P_{j,2}^{\prime }$. Besides, we apply singular value
decomposition to matrix $S_{j,1}P_{j,1}^{\prime }P_{j,2}S_{j,2}$: $%
S_{j,1}P_{j,1}^{\prime }P_{j,2}S_{j,2}=Q_{j,1}R_{j}Q_{j,2}^{\prime }$. Then
it follows that 
\begin{align*}
\frac{\Theta _{j}^{0}}{\sqrt{NT}}& =A_{j}B_{j}^{\prime }=A_{j}\Sigma
_{A,j}^{-\frac{1}{2}}P_{j,1}S_{j,1}P_{j,1}^{\prime
}P_{j,2}S_{j,2}P_{j,2}^{\prime }\Sigma _{B,j}^{-\frac{1}{2}}B_{j}^{\prime }
\\
& =A_{j}\Sigma _{A,j}^{-\frac{1}{2}}P_{j,1}Q_{j,1}R_{j}Q_{j,2}^{\prime
}P_{j,2}^{\prime }\Sigma _{B,j}^{-\frac{1}{2}}B_{j}^{\prime }:=\mathcal{U}%
_{j}^{0}\Sigma _{j}^{0}\mathcal{V}_{j}^{0\prime },
\end{align*}%
where $\mathcal{U}_{j}^{0}=A_{j}\Sigma _{A,j}^{-\frac{1}{2}}P_{j,1}Q_{j,1}$, 
$\Sigma _{j}^{0}=R_{j}$ and $\mathcal{V}_{j}^{0}=B_{j}\Sigma _{B,j}^{-\frac{1%
}{2}}P_{j,2}Q_{j,2}$. Given $P_{j,1}$, $P_{j,2}$, $Q_{j,1}$ and $Q_{j,2}$
are orthonormal matrices, it's easy to  that $\mathcal{U}_{j}^{0}$ and $%
\mathcal{V}_{j}^{0}$ are also orthonormal so that $\mathcal{U}_{j}^{0\prime }%
\mathcal{U}_{j}^{0}=\mathcal{V}_{j}^{0\prime }\mathcal{V}_{j}^{0}=I_{2}$.
When $j=0$, $\left\{ \lambda _{0,i}^{0}\right\} _{i=1}^{N}$ and $\left\{
f_{0,t}^{0}\right\} _{t=1}^{T}$ are usually referred to as the individual
and time fixed effects, respectively, so that the intercept term exhibits an
additive fixed effects structure.\bigskip 

\noindent \textbf{Example 3.} Let $\Theta _{j,it}^{0}=\sum_{k\in \lbrack
K_{j,t}]}\alpha _{j,kt}\mathbf{1}\{i\in G_{j,kt}\}$, where $\left\{
G_{j,kt}\right\} $ forms a partition of $[N]$ for each specific time $t$ and 
$K_{j,t}$ is the number of groups at time $t$. Moreover, let 
\begin{equation*}
\alpha _{j,kt}=\left\{ \begin{aligned} &\alpha_{j,k}^{(1)},
\quad\text{for}\quad t=1,\dots,T_{b},\\ &
\alpha_{j,k}^{(2)},\quad\text{for}\quad t=T_{b}+1,\dots,T,\\ \end{aligned}%
\right.
\end{equation*}%
\begin{equation*}
G_{j,kt}=\left\{ \begin{aligned} &G_{j,k}^{(1)},\quad \text{for}\quad
t=1,\dots,T_{b}, k=1,\dots,K_{j}^{(1)},\\ & G_{j,k}^{(2)},\quad
\text{for}\quad t=T_{b}+1,...,T, k=1,...,K_{j}^{(2)},\\ \end{aligned}\right.
\end{equation*}%
where $K_{j}^{(1)}$ and $K_{j}^{(2)}$ are the number of groups before and
after the break point $T_{b}$. If $K_{j}^{(1)}=K_{j}^{(2)}$, it is clear
that $rank(\Theta _{j}^{0})=1$. If the group structure does not change after
the break but $\alpha _{j,k}^{(1)}=c\alpha _{j,k}^{(2)}$ for some constant $%
c $, we also have $rank(\Theta _{j}^{0})=1$. Except for these two cases, we
can show that 
\begin{equation*}
\Theta _{j}^{0}=%
\begin{bmatrix}
\operatornamewithlimits{\sum}\limits_{k\in \lbrack K^{(1)}]}\alpha
_{j,k}^{(1)}\mathbf{1}\left\{ 1\in G_{j,k}^{(1)}\right\} , & %
\operatornamewithlimits{\sum}\limits_{k\in \lbrack K^{(2)}]}\alpha
_{j,k}^{(2)}\mathbf{1}\left\{ 1\in G_{j,k}^{(2)}\right\} \\ 
\vdots & \vdots \\ 
\operatornamewithlimits{\sum}\limits_{k\in \lbrack K^{(1)}]}\alpha
_{j,k}^{(1)}\mathbf{1}\left\{ i\in G_{j,k}^{(1)}\right\} , & %
\operatornamewithlimits{\sum}\limits_{k\in \lbrack K^{(2)}]}\alpha
_{j,k}^{(2)}\mathbf{1}\left\{ i\in G_{j,k}^{(2)}\right\} \\ 
\vdots & \vdots \\ 
\operatornamewithlimits{\sum}\limits_{k\in \lbrack K^{(1)}]}\alpha
_{j,k}^{(1)}\mathbf{1}\left\{ N\in G_{j,k}^{(1)}\right\} , & %
\operatornamewithlimits{\sum}\limits_{k\in \lbrack K^{(2)}]}\alpha
_{j,k}^{(2)}\mathbf{1}\left\{ N\in G_{j,k}^{(2)}\right\}%
\end{bmatrix}%
\begin{bmatrix}
\iota _{T_{b}} &  & \mathbf{0}_{T_{b}} \\ 
\mathbf{0}_{T-T_{b}} &  & \iota _{T-T_{b}}%
\end{bmatrix}%
^{\prime }
\end{equation*}%
where $\iota _{T_{b}}$ is a $T_{b}\times 1$ vector of ones and $\mathbf{0}%
_{T_{b}}$ is a $T_{b}\times 1$ vector of zeros. In this case, we notice that 
$rank(\Theta _{j}^{0})=2$.\bigskip

\noindent \textbf{Example 4.} When $\Theta _{j,it}^{0}=\lambda
_{j,i}^{0\prime }f_{j,t}^{0}$ with $\lambda _{j,i}^{0}$ and $f_{j,t}^{0}$
being two $K_{j}$-vectors, we have the IFEs structure. This is the most
general example without further restrictions.

Like \cite{chernozhukov2019inference}, we assume that for each $j\in \left[ p%
\right] ,$ $X_{j,it}$ exhibits a factor structure: $X_{j,it}=\mu
_{j,it}+e_{j,it}=l_{j,i}^{0\prime }w_{j,t}^{0}+e_{j,it},$ where $w_{j,t}^{0}$
and $l_{j,i}^{0}$ are the factors and factor loadings of dimension $r_{j}.$

\subsection{Estimation Algorithm}

In this subsection we provide the estimation algorithm by assuming that $%
K_{j}$ are all known for all $j$. In the next subsection, we will introduce
a rank estimation method to estimate $K_{j}$ consistently.

Define the check function $\rho _{\tau }(u)=u\left( \tau -\mathbf{1}\{u\leq
0\}\right) $. The estimation procedure goes as follows:

\begin{itemize}[leftmargin=40pt]
\item[Step 1:] \textbf{Sample Splitting and Nuclear Norm Regularization.} Along the cross-section span, randomly split the sample into three
subsets denoted as $I_{1}$, $I_{2}$ and $I_{3}$, where $I_{\ell }$ has $%
N_{\ell }$ individuals such that $N_{1}\approx N_{2}\approx N_{3}\approx N/3$%
. Using the data with $(i,t)\in I_{1}\times \lbrack T]$, we run the nuclear
norm regularized quantile regression (QR) and obtain $\{\tilde{\Theta}%
_{j}^{(1)}\}_{j\in \{0,\cdots ,p\}}$, i.e., 
\begin{equation}
\{\tilde{\Theta}_{j}^{(1)}\}_{j=0}^{p}=\argmin\limits_{\left\{ \Theta
_{j}\right\} _{j=0}^{p}}\frac{1}{N_{1}T}\sum_{i\in I_{1}}\sum_{t=1}^{T}\rho
_{\tau }(Y_{it}-\sum_{j=1}^{p}X_{j,it}\Theta _{j,it}-\Theta
_{0,it})+\sum_{j=0}^{p}\nu _{j}\left\Vert \Theta _{j}\right\Vert _{\ast },
\label{obj1}
\end{equation}%
where $\nu _{j}$ is a tuning parameter. For each $j$, conduct the SVD: $%
\frac{1}{\sqrt{N_{1}T}}\tilde{\Theta}_{j}^{(1)}=\hat{\tilde{\mathcal{U}}}%
_{j}^{(1)}\hat{\tilde{\Sigma}}_{j}^{(1)}\hat{\tilde{\mathcal{V}}}%
_{j}^{(1)\prime }$, where $\hat{\tilde{\Sigma}}_{j}^{(1)}$ is the diagonal
matrix with the diagonal elements being the descending singular values of $%
\tilde{\Theta}_{j}^{(1)}$. Let $\tilde{\mathcal{V}}_{j}^{(1)}$ consists the
first $K_{j}$ columns of $\hat{\tilde{\mathcal{V}}}_{j}^{(1)}.$ Let $\tilde{V%
}_{j}^{(1)}=\sqrt{T}\tilde{\mathcal{V}}_{j}^{(1)}$ and $\tilde{v}%
_{t,j}^{(1)\prime }$ be the $t$-th row of $\tilde{V}_{j}^{(1)}$ $\forall
t\in \lbrack T]$.

\item[Step 2:] \textbf{Row- and Column-Wise Quantile Regression.} Using the
data with $(i,t)\in I_{2}\times \lbrack T]$, we first run the row-wise QR
of\ $Y_{it}$ on $\left( \tilde{v}_{t,0}^{(1)},\left\{ \tilde{v}%
_{t,j}^{(1)}X_{j,it}\right\} _{j\in \lbrack p]}\right) $ to obtain $\{\dot{u}%
_{i,j}^{(1)}\}_{j=0}^{p}$ for $i\in I_{2}$, and then run the column-wise QR
of $Y_{it}$ on $(\dot{u}_{i,0}^{(1)},\{\dot{u}_{i,j}^{(1)}X_{j,it}\}_{j\in
\lbrack p]})$ to obtain $\{\dot{v}_{t,j}^{(1)}\}_{j=0}^{p}$ for $t\in
\lbrack T]$. That is, 
\begin{align}
\{\dot{u}_{i,j}^{(1)}\}_{j=0}^{p}& =\argmin\limits_{\{u_{i,j}\}_{j\in
\lbrack p]\cup \{0\}}}\frac{1}{T}\sum_{t\in \lbrack T]}\rho _{\tau }\left(
Y_{it}-u_{i,0}^{\prime }\tilde{v}_{t,0}^{(1)}-\sum_{j=1}^{p}u_{i,j}^{\prime }%
\tilde{v}_{t,j}^{(1)}X_{j,it}\right) ,\forall i\in I_{2},  \label{obj2} \\
\{\dot{v}_{t,j}^{(1)}\}_{j=0}^{p}& =\argmin\limits_{\{v_{t,j}\}_{j\in
\lbrack p]\cup \{0\}}}\frac{1}{N_{2}}\sum_{i\in I_{2}}\rho _{\tau }\left(
Y_{it}-v_{t,0}^{\prime }\dot{u}_{i,0}^{(1)}-\sum_{j=1}^{p}v_{t,j}^{\prime }%
\dot{u}_{i,j}^{(1)}X_{j,it}\right) ,\forall t\in \lbrack T].
\end{align}%
Similarly, we run the row-wise QR of $Y_{it}$ on $(\dot{v}_{t,0}^{(1)},\{%
\dot{v}_{t,j}^{(1)}X_{j,it}\}_{j\in \lbrack p]})$ to obtain $\{\dot{u}%
_{i,j}^{(1)}\}_{j=0}^{p}$ for $i\in I_{3}$, i.e., 
\begin{equation*}
\{\dot{u}_{i,j}^{(1)}\}_{j=0}^{p}=\argmin\limits_{\{u_{i,j}\}_{j\in \lbrack
p]\cup \{0\}}}\frac{1}{T}\sum_{t\in \lbrack T]}\rho _{\tau }\left(
Y_{it}-u_{i,0}^{\prime }\dot{v}_{t,0}^{(1)}-\sum_{j=1}^{p}u_{i,j}^{\prime }%
\dot{v}_{t,j}^{(1)}X_{j,it}\right) ,\forall i\in I_{3}.
\end{equation*}

\item[Step 3:] \textbf{Debiasing.}

\begin{itemize}
\item[Step 3.1:] For each $j\in \lbrack p]$, we conduct the principle
component analysis (PCA) for $X_{j,it}$ with $\left( i,t\right) \in \lbrack
N]\times \lbrack T]$ to obtain the factor and factor loading estimates as 
\begin{equation}
\left\{ \hat{l}_{j,i},\hat{w}_{j,t}\right\} _{i\in \lbrack N],t\in \lbrack
T]}=\operatornamewithlimits{\argmin}\limits_{\left\{ l_{j,i},w_{j,t}\right\}
_{i\in \lbrack N],t\in \lbrack T]}}\frac{1}{NT}\sum_{i\in \lbrack
N]}\sum_{t\in \lbrack T]}\left( X_{j,it}-l_{j,i}^{\prime }w_{j,t}\right)
^{2},  \label{debias_0}
\end{equation}%
subject to the normalizations: $\frac{1}{N}\sum_{i=1}^{N}l_{i,j}l_{i,j}^{%
\prime }=I_{r_{j}}$ and $\frac{1}{T}\sum_{t=1}^{T}w_{j,t}w_{j,t}^{\prime }$
is a diagonal matrix with descending diagonal elements. Then we define $\hat{%
\mu}_{j,it}=\hat{l}_{j,i}^{\prime }\hat{w}_{j,t}$ and $\hat{e}%
_{j,it}=X_{j,it}-\hat{\mu}_{j,it}$. 
% Using principle component analysis (PCA) for $X_{j,it}$ with $i\in [N]$ and $t\in[T]$ to obtain the singular vector estimators $\left\{\hat{l}_{j,i}\right\}_{i\in[N]}$ and $\left\{\hat{w}_{j,t}\right\}_{t\in[T]}$. Then we define $\hat{\mu}_{j,it}=\hat{l}_{j,i}^{\prime}\hat{w}_{j,t}$ and $\hat{e}_{j,it}=X_{j,it}-\hat{\mu}_{j,it}$.

\item[Step 3.2:] For $(i,t)\in I_{3}\times [T]$, let $\tilde{Y}_{it}=Y_{it}-%
\operatornamewithlimits{\sum}\limits_{j=1}^{p}\hat{\mu}_{j,it}\dot{u}%
_{i,j}^{(1)\prime}\dot{v}_{t,j}^{(1)}.$ We run the row-wise QR\ of $\tilde{Y}%
_{it}$ on $(\dot{v}_{t,0}^{(1)},\{\dot{v}_{t,j}^{(1)}\hat{e}_{j,it}\}_{j\in[p%
]})$ to obtain the final estimates $\hat{u}_{i,j}^{(3,1)}$, i.e., 
\begin{equation}
\{\hat{u}_{i,j}^{(3,1)}\}_{j=0}^{p}=\argmin\limits_{\{u_{i,j}\}_{j=0}^{p}}%
\frac{1}{T}\sum_{t\in[T]}\rho_{\tau }\left(\tilde{Y}_{it}-u_{i,0}^{\prime}%
\dot{v}_{t,0}^{(1)}-\sum_{j=1}^{p}u_{i,j}^{\prime}\dot{v}_{t,j}^{(1)}\hat{e}%
_{j,it}\right) ,\forall i\in I_{3}.  \label{debias_1}
\end{equation}
Updating $\hat{Y}_{it}=Y_{it}-\sum_{j=1}^{p}\hat{\mu}_{j,it}\hat{u}%
_{i,j}^{(3,1)\prime}\dot{v}_{t,j}^{(1)}$, we run the column-wise QR\ of $%
\hat{Y}_{it}$ on $(\hat{u}_{i,0}^{(3,1)},\{\hat{u}_{i,j}^{(3,1)}\hat{e}%
_{j,it}\}_{j\in[p]})$ to obtain $\hat{v}_{t,j}^{(3,1)}$, i.e., 
\begin{equation}
\{\hat{v}_{t,j}^{(3,1)}\}_{j=0}^{p}=\argmin\limits_{\{v_{t,j}\}_{j=0}^{p}}%
\frac{1}{N_{3}}\sum_{i\in I_{3}}\rho_{\tau }\left(\hat{Y}_{it}-v_{t,0}^{%
\prime}\hat{u}_{i,0}^{(3,1)}-\sum_{j=1}^{p}v_{t,j}^{\prime}\hat{u}%
_{i,j}^{(3,1)}\hat{e}_{j,it}\right) ,\forall t\in[T].  \label{debias_2}
\end{equation}
\end{itemize}
\end{itemize}

In order to obtain the final estimators for the full sample, we propose to
switch the role of each subsample for the low-rank estimation, row- and
column-wise QR and debiasing, then repeat Steps 1-3 to obtain $\left\{ \hat{u%
}_{i,j}^{(a,b)}\right\} _{j=0}^{p}$ and $\left\{ \hat{v}_{t,j}^{(a,b)}\right%
\} _{j=0}^{p}$ for $a\in \lbrack 3]$ and $b\in \lbrack 3]\setminus \{a\}$.
Here $(a,b)$ denotes the final estimates for subsample $I_{a}$ obtained from
the first step NNR estimates with subsample $I_{b}$. Table \ref%
{tab:estimator symbol} shows the final estimators we obtain by using
different combination of subsamples. 
\begin{table}[th]
\caption{Estimators using different subsamples at different steps in
algorithm.}
\label{tab:estimator symbol}\centering  
\begin{tabular}{cccc}
\toprule \toprule Step 1 $(b)$ & Step 2 & Step 3 $(a)$ & estimators $(a,b)$ \\ 
\midrule $I_{1}$ & $I_{2}$ & $I_{3}$ & $\hat{u}_{i,j}^{(3,1)}$, $\hat{v}%
_{t,j}^{(3,1)}$ \\ 
$I_{2}$ & $I_{1}$ & $I_{3}$ & $\hat{u}_{i,j}^{(3,2)}$, $\hat{v}%
_{t,j}^{(3,2)} $ \\ 
$I_{1}$ & $I_{3}$ & $I_{2}$ & $\hat{u}_{i,j}^{(2,1)}$, $\hat{v}%
_{t,j}^{(2,1)} $ \\ 
$I_{3}$ & $I_{1}$ & $I_{2}$ & $\hat{u}_{i,j}^{(2,3)}$, $\hat{v}%
_{t,j}^{(2,3)} $ \\ 
$I_{2}$ & $I_{3}$ & $I_{1}$ & $\hat{u}_{i,j}^{(1,2)}$, $\hat{v}%
_{t,j}^{(1,2)} $ \\ 
$I_{3}$ & $I_{2}$ & $I_{1}$ & $\hat{u}_{i,j}^{(1,3)}$, $\hat{v}%
_{t,j}^{(1,3)} $ \\ 
\bottomrule &  &  & 
\end{tabular}%
\end{table}

Several remarks are in order. First, we randomly split the full sample into
three subsamples, each playing a significant role in the algorithm. We use
the first subsample for the low-rank estimation to obtain the preliminary
NNR estimators of the submatrices of the intercept and slope matrices. But
these estimators are only consistent in terms of Frobenius norm, and one
cannot derive the pointwise or uniform convergence rates for them. With the
low-rank estimates, we use the second subsample to do the row- and
column-wise QRs and can now establish the uniform convergence rates for each
row of factor and factor loading estimators. Then we use the remaining
subsample to debias the second-stage estimator and to obtain the final
estimators that have the desirable asymptotic properties.

Second, to reduce the randomness of sample splitting, one can run the
estimation algorithm several times with different splittings in practice.
Once one obtains factor and factor loading estimates, one can construct
estimators for $\Theta _{j}^{0}$ under different splittings and then choose
the one specific splitting which yields the minimum quantile objective function.

Third, the bias in the second-stage estimator is inherent from the
first-stage NNR estimator. We follow the lead of \cite%
{chernozhukov2019inference} to assume that $X_{j,it}$ has a factor structure
with an additive idiosyncratic term, and remove the bias by a QR with the
demeaned $X_{j,it}$ as regressors. In the least squares panel regression
framework, the objective function is smooth and one has closed-form
solutions in the last stage so that \cite{chernozhukov2019inference} only
need to split the sample into two subsamples. In contrast, in the PQR
framework, the objective function is non-smooth, we do not have closed-form
solutions in any stage. In order to remove the bias from the early stage
estimation and to derive the distributional results, we need to split the
sample into three subsamples.

To save space, we relegate the detailed algorithm for the nuclear norm
regularization to the online supplement.

\subsection{Rank Estimation}

In this subsection we discuss how to estimate the ranks $K_{j}$
consistently. To estimate the ranks, we consider the full sample NNR QR
estimation: 
\begin{equation}
\{\tilde{\Theta}_{j}\}_{j=0}^{p}=\argmin\limits_{\left\{ \Theta _{j}\right\}
_{j=0}^{p}}\,\,\frac{1}{NT}\sum_{i=1}^{N}\sum_{t=1}^{T}\rho _{\tau
}(Y_{it}-\sum_{j=1}^{p}X_{j,it}\Theta _{j,it}-\Theta
_{0,it})+\sum_{j=0}^{p}\nu _{j}\left\Vert \Theta _{j}\right\Vert _{\ast }.
\label{pre rank}
\end{equation}%
For $j\in \left\{ 0,\cdots ,p\right\} $, we estimate $K_{j}$ by the popular
singular value thresholding (SVT) as follows%
\begin{equation*}
\hat{K}_{j}=\sum_{m}\mathbf{1}\left\{ \lambda _{m}\left( \tilde{\Theta}%
_{j}\right) \geq 0.6\left( NT\nu _{j}\left\Vert \tilde{\Theta}%
_{j}\right\Vert _{op}\right) ^{1/2}\right\} .
\end{equation*}%
It is standard to show that $\mathbb{P}(\hat{K}_{j}=K_{j})\rightarrow 1$ as $%
\left( N,T\right) \rightarrow \infty $ under some regularity conditions
given in the next section; see also Proposition D.1 in \cite%
{chernozhukov2019inference} and Theorem 2 in \cite{Hong_Su_Jiang2022}. Since
the ranks can be estimated consistently, we assume that they are known in
the asymptotic theory below.

%%%%%%%%%%%%%%%%%%%%%%%%%%%%%%%%%%%%%%%%%%%%%%%%%%%%%%%%%%%%%%%%%%%%%%%%%%%%%%%%%%%%%%%%%%%%%%%%%%%%%%%%%%%%%%%%%%%%%%%%%%%%%%%%%%%%%%%%%%%%%%%%%%%%%%%%%%%%%%%Asymptotic Theory%%%%%%%%%%%%%%%%%%%%%%%%%%%%%%%
%%%%%%%%%%%%%%%%%%%%%%%%%%%%%%%%%%%%%%%%%%%%%%%%%%%%%%%%%%%%%%%%%%%%%%%%%%%%%%%%%%%%%%%%%%%%%%%%%%%%%%%%%%%%%%%%%%%%%%%%%%%%%%%%%%%%%%%%%%%%%%%%%%%%%%%%%%%%%%%%%%%%%%%%%%%%%%%%%%%%%%%%%%%%%%%%%%%%%%%%%%%%%%%

\section{Asymptotic Theory}

In this section, we study the asymptotic properties of the estimators
introduced in the last section.

\subsection{First Stage Estimator}

Recall that $X_{j,it}=\mu _{j,it}+e_{j,it}=l_{j,i}^{0\prime
}w_{j,t}^{0}+e_{j,it}$ for each $j\in \lbrack p]$. Let $%
X_{it}=(X_{1,it},...,X_{p,it})^{\prime }$ and $%
e_{it}=(e_{1,it},...,e_{p,it})^{\prime }.$ Define $\epsilon _{i}=\left(
\epsilon _{i1},\cdots ,\epsilon _{it}\right) ^{\prime }$, $e_{j,i}=\left(
e_{j,i1},\cdots ,e_{j,iT}\right) ^{\prime }$, $W_{j}^{0}$ as the $T\times
r_{j}$ matrix with each row being $w_{j,t}^{0\prime }$, and $V_{j}^{0}$ as
the $T\times K_{j}$ matrix with each row being $v_{t,j}^{0\prime }$. Further
define $a_{it}=\tau -\mathbf{1}\left\{ \epsilon _{it}\leq 0\right\} $ with $%
a_{i}=\left( a_{i1},\cdots ,a_{iT}\right) ^{\prime }$ and $a=\left(
a_{1},\cdots ,a_{N}\right) ^{\prime }$. Throughout the paper, we treat the
factors $\{v_{t,j}^{0}\}_{t\in \lbrack T],j\in \lbrack p]\cup \{0\}}$ and $%
\{w_{j,t}^{0}\}_{t\in \lbrack T],j\in \lbrack p]}$ as random and their
loadings $\{u_{i,j}^{0}\}_{i\in \lbrack N],j\in \lbrack p]\cup \{0\}}$ and $%
\{l_{j,i}^{0}\}_{i\in \lbrack N],j\in \lbrack p]}$ as deterministic.

Table \ref{sigma fields} defines several $\sigma$-fields. We use $\mathscr{D}
$ to denote the minimal $\sigma$-field generated by $\left\{V_{j}^{0}\right%
\}_{j\in [p]\cup \{0\}}\bigcup \left\{W_{j}^{0}\right\}_{j\in [p]};$ the
superscripts $I_{1}$ and $I_{1}\cup I_{2}$ are associated with the first
subsample and the first two subsamples, respectively.  For example, $%
\mathscr{D}_{e_{i}}^{I_{1}}$ denotes the minimal $\sigma$-field generated by 
$\mathscr{D},$ $\left\{e_{it}\right\}_{t\in \left[T\right] }$ and $\left\{
\epsilon_{it},e_{it}\right\}_{i\in I_{1},t\in \left[T\right] }.$ 
\begin{table}[h]
\caption{Definition of various $\protect\sigma $-fields}
\label{sigma fields}\centering%
\begin{tabular}{cc}
\toprule\toprule Notation & $\sigma$-fields generated by \\ 
\midrule$\mathscr{D}$ & $\left\{ V_{j}^{0}\right\}_{j\in [p]\cup
\{0\}}\bigcup \left\{ W_{j}^{0}\right\}_{j\in [p]}$ \\ 
\midrule$\mathscr{D}_{e_{it}}$ & $\mathscr{D}\bigcup e_{it}$ \\ 
\midrule$\mathscr{D}_{e_{i}}$ & $\mathscr{D}\bigcup \left\{ e_{it}\right\}
_{t\in[T]}$ \\ 
\midrule$\mathscr{D}_{e}$ & $\mathscr{D}\bigcup \left\{ e_{it}\right\} _{i\in%
[N], t\in[T]}$ \\ 
\midrule$\mathscr{D}^{I_{1}\cup I_{2}}$ & $\mathscr{D}\bigcup \left\{
\epsilon_{it},e_{it}\right\}_{i\in I_{1}\cup I_{2},t\in[T]}$ \\ 
\midrule$\mathscr{D}^{I_{1}}_{\{e_{is}\}_{s<t}}$ & $\mathscr{D}\bigcup
\{e_{is}\}_{s<t}\bigcup \left\{
\epsilon_{i^{*}t^{*}},e_{i^{*}t^{*}}\right\}_{i^{*}\in I_{1},t^{*}\in[T]}$
\\ 
\midrule$\mathscr{D}_{e_{i}}^{I_{1}}$ & $\mathscr{D}\bigcup \left\{
e_{it}\right\}_{t\in[T]}\bigcup \left\{ \epsilon
_{i^{*}t^{*}},e_{i^{*}t^{*}}\right\}_{i^{*}\in I_{1},t^{*}\in[T]}$ \\ 
\midrule$\mathscr{D}_{e_{i}}^{I_{1}\cup I_{2}}$ & $\mathscr{D}\bigcup
\left\{ e_{it}\right\}_{t\in[T]}\bigcup \left\{ \epsilon
_{i^{*}t^{*}},e_{i^{*}t^{*}}\right\}_{i^{*}\in I_{1}\cup I_{2},t^{*}\in[T]}$
\\ 
\midrule$\mathscr{D}_{e}^{I_{1}\cup I_{2}}$ & $\mathscr{D}\bigcup \left\{
e_{it}\right\}_{i\in [N,]t\in[T]}\bigcup \left\{ \epsilon
_{it},e_{it}\right\}_{i\in I_{1}\cup I_{2},t\in[T]}$ \\ 
\bottomrule & 
\end{tabular}%
\end{table}

Let $M$ denote a generic bounded constant that may vary across places. Let $%
\mathscr{G}_{i,t-1}$ denote the minimal $\sigma $-field generated by $%
\mathscr{D}\cup \{e_{ls}\}_{l\leq i-1,s\in \lbrack T]}\cup \{e_{is}\}_{s\leq
t}\cup \{\epsilon _{ls}\}_{l\leq i-1,s\in \lbrack T]}\cup \{\epsilon
_{is}\}_{s\leq t-1}$. Let $\mathsf{F}_{it}(\cdot )$ and $\mathsf{f}%
_{it}(\cdot )$ be the conditional cumulative distribution function (CDF) and
probability density function (PDF) of $\epsilon _{it}$ given $\mathscr{G}%
_{i,t-1}$, respectively. Similarly, let $\mathfrak{F}_{it}(\cdot )$ and $%
\mathfrak{f}_{it}(\cdot )$ denote the conditional CDF and PDF of $\epsilon
_{it}$ given $\mathscr{D}_{e_{i}};$ $F_{it}(\cdot )$ and $f_{it}(\cdot )$
denote the conditional CDF and PDF of $\epsilon _{it}$ given $\mathscr{D}%
_{e}.$ Let $\mathsf{f}_{it}^{\prime }\left( \cdot \right) $, $\mathfrak{f}%
_{it}^{\prime }\left( \cdot \right) ,$ and $f_{it}^{\prime }\left( \cdot
\right) $ denotes the first derivative of the density $\mathsf{f}_{it}\left(
\cdot \right) $, $\mathfrak{f}_{it}\left( \cdot \right) ,$ and $f_{it}\left(
\cdot \right) ,$ respectively.

We make the following assumptions.

\begin{ass}
\begin{itemize}
\item[(i)] $\left\{ \epsilon_{it},e_{it}\right\}_{t\in[T]}$ are
conditionally independent across $i$ given $\mathscr{D}$.

\item[(ii)] $\mathbb{E}\left(a_{it}\bigg |\mathscr{D}_{e}\right) =0$.

\item[(iii)] For each $i$, $\left\{ \epsilon_{it},t\geq 1\right\} $ is
strong mixing conditional on $\mathscr{D}_{e_{i}}$, and $\{\left(%
\epsilon_{it},e_{it}\right) ,t\geq 1\}$ is strong mixing conditional on $%
\mathscr{D}$. Both mixing coefficients are upper bounded by $%
\alpha_{i}(\cdot)$ such that $\max_{i\in [N]}\alpha_{i}(z)\leq M\alpha ^{z}$
for some constant $\alpha \in \left(0,1\right) $.

\item[(iv)] $\max_{i\in [N]}\frac{1}{T}\sum_{t\in [ T]}\left\Vert
X_{it}\right\Vert_{2}^{3}\leq M~$a.s., $\max_{t\in[T]}\frac{1}{N_{2}}%
\sum_{i\in I_{2}}\left\Vert X_{it}\right\Vert_{2}^{4}\leq M~$a.s., \newline
$\max_{\left(i,t\right) \in [N]\times [T]}\mathbb{E}\left[ \left\Vert
X_{it}\right\Vert_{2}^{3}\bigg|\mathscr{D}\right] \leq M$a.s., $%
\max_{i\in [N]}\sqrt{\frac{1}{T}\sum_{t\in[T]}\left[ \mathbb{E}%
\left(\epsilon_{it}^{2}\bigg|\mathscr{D}_{e_{i}}\right) \right] ^{2}}$ $\leq
M~a.s.$, and \newline
$\max_{\left(i,t\right) \in [N]\times [T]}\mathbb{E}\left(\left\Vert
X_{it}\right\Vert_{2}^{2}\bigg|\mathscr{D}_{\left\{e_{is}\right\}_{s<t}}
\right)$ $\leq M~a.s.$

\item[(v)] For $j\in [p]$, there exists a positive sequence $\xi_{N}$ such
that $\max_{\left(i,t\right) \in [N]\times [T]}\left\vert
X_{j,it}\right\vert \leq \xi_{N}~a.s.$

\item[(vi)] $\min_{\left(i,t\right) \in [N]\times [T]} \mathsf{f}_{it}(0)\geq \underline{\mathsf{f}}>0$ and $%
\max_{\left(i,t\right) \in [N]\times [T]}\sup_{\epsilon}\left\vert \mathsf{f}%
_{it}^{\prime }\left(\epsilon \right) \right\vert\leq \bar{\mathsf{f}}%
^{\prime }$.

\item[(vii)] $\min_{\left(i,t\right) \in [N]\times [T]}\mathfrak{f}_{it}(0)\geq \underline{\mathfrak{f}}>0$ and $%
\max_{\left(i,t\right) \in [N]\times [T]}\sup_{\epsilon}\left\vert \mathfrak{%
f}_{it}^{\prime }\left(\epsilon \right) \right\vert\leq \bar{\mathfrak{f}}%
^{\prime }$.

\item[(viii)] $\min_{\left(i,t\right) \in [N]\times [T]}f_{it}(0)\geq \underline{f}>0$ and $\max_{\left(i,t\right)\in
[N]\times [T]}\sup_{\epsilon }\left\vert f_{it}^{\prime}\left(\epsilon
\right) \right\vert \leq \bar{f}^{\prime }$.

\item[(ix)] $\frac{\xi_{N}^{4}\log (N\vee T)\sqrt{N\vee T}}{N\wedge T}=o(1)$
and $\frac{\left(\frac{N}{T}\vee 1\right) ^{1/2}}{\left(N\wedge T\right) ^{%
\frac{1}{4+2\vartheta }}}\left(\log (N\vee T)\right) ^{\frac{3+\vartheta }{%
4+2\vartheta }}\xi_{N}^{\frac{5+\vartheta }{2+\vartheta }}=o(1)$ for any $%
\vartheta >0$.
\end{itemize}

\label{ass:1}
\end{ass}

Assumptions \ref{ass:1}(i) imposes conditional independence of the error
terms and covariates $X_{j,it}$ given the fixed effects. Assumptions \ref%
{ass:1}(ii) imposes the moment condition for QR. Assumptions \ref{ass:1}%
(iii) imposes the weak dependence assumption along the time dimension via
the use of the notion of conditional strong mixing. See \cite%
{Prakasa_Rao2009} for the definition of conditional strong mixing and \cite%
{Su_Chen2013} for an application in the panel setup. Assumptions \ref{ass:1}%
(iv)-(v) essentially imposes some conditions on the moments and tail
behavior of the both covariates and errors. Note that we allow $X_{j,it}$ to
have an infinite support. Assumptions \ref{ass:1}(vi)-(viii), which are used
in the proofs of Theorems \ref{Thm1}, \ref{Thm2} and \ref{Thm3},
respectively, specify conditions on the conditional density of $\epsilon
_{it}$ given different $\sigma $-fields. Assumption \ref{ass:1}(ix) imposes
some restrictions on $N$, $T$ and $\xi _{N}$ in order to obtain the error
bound of NNR estimators and to achieve the unbiasedness. It allows not only the case that $N$ and $T$ diverge to infinity at the the same rate, but also the case that $N$ diverges to infinity not too faster than $T$, and vice versa.

\begin{ass}
\label{ass:2} $\Theta _{0}^{0}$ is the fixed effect matrix with fixed rank $%
K_{0}$ and $\left\Vert \Theta _{0}^{0}\right\Vert _{\max }\leq M$. For each $%
j\in \lbrack p],$ $\Theta _{j}^{0}$ is the slope matrix of regressor $j$
with rank being $K_{j}$ such that $\max_{j\in \lbrack p]}\left\Vert \Theta
_{j}\right\Vert _{\max }\leq M$ and $\max_{j\in \lbrack p]}K_{j}\leq \bar{K}$
for some fixed finite $\bar{K}.$
\end{ass}

Assumption \ref{ass:2} is the low-rank assumption for the intercept and
slope matrices, which is the key assumption for the NNR.
The uniform boundedness of elements of these matrices facilitates the
asymptotic analysis,  but can be relaxed at the cost of more lengthy argument.
See \cite{ma2020detecting} for a similar condition.

\begin{ass}
\label{ass:3} There exist some constants $C_{\sigma}$ and $c_{\sigma}$ such
that 
\begin{align*}
\infty>C_{\sigma}\geq \lim\sup_{N,T} \max_{j\in [p]\cup\{0\}}
\sigma_{1,j}\geq \lim\inf_{N,T} \min_{j\in [p]\cup\{0\}}
\sigma_{K_{j},j}\geq c_{\sigma}>0.
\end{align*}
\end{ass}

Assumption \ref{ass:3} imposes some conditions on the singular values of the
coefficient matrices. It implies that we only allow pervasive factors when
these matrices are written as a factor structure. Such an assumption is
common in the literature; see, e.g., Assumption 3 in \cite{ma2020detecting}.

To introduce the next assumption, we need some notation. Let $\Theta
_{j}^{0}=R_{j}\Sigma _{j}S_{j}^{\prime }$ be the SVD for $\Theta _{j}^{0}$.
Further decompose $R_{j}=\left( R_{j,r},R_{j,0}\right) $ with $R_{j,r}$
being the singular vectors corresponding to the nonzero singular values, $%
R_{j,0}$ being the singular vectors corresponding to the zero singular
values. Decompose $S_{j}=\left( S_{j,r},S_{j,0}\right) $ with $S_{j,r}$ and $%
S_{j,0}$ defined analogously. For any matrix $W\in \mathbb{R}^{N\times T}$,
we define 
\begin{equation*}
\mathcal{P}_{j}^{\bot }\left( W\right) =R_{j,0}R_{j,0}^{\prime
}WS_{j,0}S_{j,0}^{\prime },\quad \mathcal{P}_{j}\left( W\right) =W-\mathcal{P%
}_{j}^{\bot }\left( W\right) ,
\end{equation*}%
where $\mathcal{P}_{j}\left( W\right) $ and $\mathcal{P}_{j}^{\bot }\left(
W\right) $ are the linear projection of matrix $W$ onto the low-rank space
and its orthogonal space, respectively. Let $\Delta _{\Theta _{j}}=\Theta
_{j}-\Theta _{j}^{0}$ for any $\Theta _{j}$. With some positive constants $C_{1}$ and $C_{2}$, we define the
following cone-like restricted set: 
\begin{equation*}
\mathcal{R}(C_{1},C_{2}):=\left\{ \left(\{\Delta _{\Theta
_{j}}\}_{j=0}^{p}\right):\sum_{j=0}^{p}\left\Vert \mathcal{P}_{j}^{\bot }(\Delta
_{\Theta _{j}})\right\Vert _{\ast }\leq C_{1}\sum_{j=0}^{p}\left\Vert 
\mathcal{P}_{j}(\Delta _{\Theta _{j}})\right\Vert _{\ast
},\,\sum_{j=0}^{p}\left\Vert \Delta _{\Theta _{j}}\right\Vert _{F}^{2}\geq
C_{2}\sqrt{NT}\right\} .
\end{equation*}

% Lemma \ref{Lem:RS} in the Appendix shows that our nuclear norm estimators are in a larger restricted set $\mathcal{R}(C_{1})$, i.e.,
% \begin{align*}
% 	\mathcal{R}(C_{1})= \left\{ (\{\Delta_{\Theta_{j}}\}_{j=0}^{p}):\sum_{j=0}^{p}\left\Vert\mathcal{P}_{j}^{\bot}(\Delta_{\Theta_{j}})\right\Vert_{\ast}\leq C_{1}\sum_{j=0}^{p}\left\Vert\mathcal{P}_{j}(\Delta_{\Theta_{j}})\right\Vert_{\ast}\right\}.
% 	\end{align*}
% This restrictive set means the projection to the orthogonal low-rank space of the estimator error can not be larger than its projection to the low-rank space. Theorems below will greatly rely on this property.\bigskip 

\begin{ass}
\label{ass:4} Let $C_{2}>0$ be a sufficiently large but fixed constant.
There are constants $C_{3},C_{4}$, such that, uniformly over $(\{\Delta
_{\Theta _{j}}\}_{j=0}^{p})\in \mathcal{R}(3,C_{2})$, we have 
\begin{equation*}
\left\Vert \Delta _{\Theta _{0}}+\sum_{j=1}^{p}\Delta _{\Theta _{j}}\odot
X_{j}\right\Vert _{F}^{2}\geq C_{3}\sum_{j=0}^{p}\left\Vert \Delta _{\Theta
_{j}}\right\Vert _{F}^{2}-C_{4}(N+T)~w.p.a.1.
\end{equation*}%
The same condition holds when $\Theta_j^0$ is replaced by $\{\Theta_{j,it}^0\}_{i \in I_a, t \in [T]}$ for $a = 1,2,3.$
\end{ass}

Assumption \ref{ass:4} parallels the restricted strong convexity (RSC)
condition in Assumption 3.1 of \cite{chernozhukov2019inference} who also
provide some sufficient primitive conditions.

For any $j\in \left\{ 0,\cdots ,p\right\} $, define $\tilde{\Delta}_{\Theta
_{j}}=\tilde{\Theta}_{j}-\Theta _{j}^{0}$ and $\tilde{\Delta}_{\Theta
_{j}}^{(1)}=\tilde{\Theta}_{j}^{(1)}-\Theta _{j}^{0,(1)}$, where $\Theta
_{j}^{0,(1)}=\left\{ \Theta _{j,it}^{0}\right\} _{i\in I_{1},t\in \lbrack
T]} $. The following theorem establishes the convergence rates of the NNR
estimators of the coefficient matrices.

\begin{theorem}
\label{Thm1} If Assumptions \ref{ass:1}-\ref{ass:4} hold, for $\forall
j\in\left\{0,\cdots,p\right\}$, we have

\begin{itemize}
\item[(i)] $\frac{1}{\sqrt{NT}}\left\Vert \tilde{\Delta}_{\Theta_{j}}\right%
\Vert_{F}=O_{p}\left(\sqrt{\frac{\log (N\vee T)}{N\wedge T}}%
\xi_{N}^{2}\right) $, $\frac{1}{\sqrt{NT}}\left\Vert \tilde{\Delta}%
_{\Theta_{j}}^{(1)}\right\Vert_{F}=O_{p}\left(\sqrt{\frac{\log (N\vee T)}{%
N\wedge T}}\xi_{N}^{2}\right) $,

\item[(ii)] $\max_{k\in [K_{j}]}\left\vert \tilde{\sigma}_{k,j}-\sigma_{k,j}%
\right\vert =O_{p}\left(\sqrt{\frac{\log (N\vee T)}{N\wedge T}}%
\xi_{N}^{2}\right) $, $\max_{k\in [K_{j}]}\left\vert \tilde{\sigma}%
_{k,j}^{(1)}-\sigma_{k,j}\right\vert =O_{p}\left(\sqrt{\frac{\log (N\vee T)}{%
N\wedge T}}\xi_{N}^{2}\right) $,

\item[(iii)] $\frac{1}{\sqrt{T}}\left\Vert V_{j}^{0}-\tilde{V}%
_{j}O_{j}\right\Vert_{F}=O_{p}\left(\sqrt{\frac{\log (N\vee T)}{N\wedge T}}%
\xi_{N}^{2}\right) $, $\frac{1}{\sqrt{T}}\left\Vert V_{j}^{0}-\tilde{V}%
_{j}^{(1)}O_{j}^{(1)}\right\Vert_{F}=O_{p}\left(\sqrt{\frac{\log \,N\vee T}{%
N\wedge T}}\xi_{N}^{2}\right) $,
\end{itemize}

where $O_{j}$ and $O_{j}^{(1)}$ are some orthogonal rotation matrices
defined in the proof.
\end{theorem}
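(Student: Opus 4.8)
The plan is to obtain part (i) by the standard penalized $M$-estimation route adapted to the non-smooth check loss, and then to deduce parts (ii) and (iii) from (i) via matrix perturbation arguments. I describe the argument for the full-sample estimator $\{\tilde\Theta_j\}$ of \eqref{pre rank}; the subsample estimator $\{\tilde\Theta_j^{(1)}\}$ of \eqref{obj1} is treated identically, since $N_1\asymp N$, the bound $\xi_N$ also controls the $I_1$-covariates, and Assumption \ref{ass:4} (and the analogue of Assumption \ref{ass:3}) holds for the $I_1$-submatrix $\Theta_j^{0,(1)}$, whose right singular subspace coincides with that of $\Theta_j^0$ since $\Theta_j^{0,(1)}$ equals the $I_1$-rows of $U_j^0$ times $V_j^{0\prime}$. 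Write $\Delta_{\Theta_j}=\tilde\Theta_j-\Theta_j^0$ and $\delta_{it}=\Delta_{\Theta_0,it}+\sum_{j=1}^{p}X_{j,it}\Delta_{\Theta_j,it}$. By optimality of $\{\tilde\Theta_j\}$,
\[
\frac{1}{NT}\sum_{i,t}\big[\rho_\tau(\epsilon_{it}-\delta_{it})-\rho_\tau(\epsilon_{it})\big]\ \le\ \sum_{j=0}^{p}\nu_j\big(\|\Theta_j^0\|_\ast-\|\tilde\Theta_j\|_\ast\big),
\]
and Knight's identity $\rho_\tau(u-v)-\rho_\tau(u)=-v(\tau-\mathbf 1\{u\le0\})+\int_0^{v}(\mathbf 1\{u\le s\}-\mathbf 1\{u\le0\})ds$ with $u=\epsilon_{it}$, $v=\delta_{it}$ splits the left side into a linear \emph{score} term $-\frac{1}{NT}\sum_{i,t}\delta_{it}a_{it}$, a \emph{curvature} term, and a mean-zero \emph{empirical-process remainder}; a second-order expansion of the curvature term, using Assumption \ref{ass:1}(viii), bounds it below by $\frac{\underline f}{2NT}\|\delta\|_F^2$ minus a third-order piece.

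For the score term, write $\frac{1}{NT}\sum_{i,t}\delta_{it}a_{it}=\frac{1}{NT}\langle a,\Delta_{\Theta_0}\rangle+\sum_{j=1}^{p}\frac{1}{NT}\langle X_j\odot a,\Delta_{\Theta_j}\rangle$ and bound each inner product by $\|\cdot\|_{op}\|\Delta_{\Theta_j}\|_\ast$ via trace duality. By Assumption \ref{ass:1}(ii) the entries of $a$ and of $X_j\odot a$ are conditionally mean zero given $\mathscr D_e$, with $a_{it}$ bounded and $|X_{j,it}|\le\xi_N$. The crux of this step is to show $\|a\|_{op}=O_p(\sqrt{N\vee T})$ and $\max_{j\in[p]}\|X_j\odot a\|_{op}=O_p(\xi_N\sqrt{(N\vee T)\log(N\vee T)})$ w.p.a.1; because of the conditional strong mixing along $t$ of Assumption \ref{ass:1}(iii) these do not follow from independence-based matrix concentration, and instead we invoke the sequential symmetrization technique of \cite{rakhlin2015sequential} together with a Bernstein/chaining bound and the moment conditions in Assumptions \ref{ass:1}(iv)--(v). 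This dictates the choice of $\nu_j$ proportional (up to the $\sqrt{NT}$ scaling of the SVD normalization) to the operator-norm bound above, so that the score term is dominated. With such $\nu_j$, the usual nuclear-norm decomposability argument forces $\{\Delta_{\Theta_j}\}$ to satisfy the cone inequality $\sum_j\|\mathcal P_j^\bot(\Delta_{\Theta_j})\|_\ast\le 3\sum_j\|\mathcal P_j(\Delta_{\Theta_j})\|_\ast$; hence either $\sum_j\|\Delta_{\Theta_j}\|_F^2<C_2\sqrt{NT}$, which already yields a bound of order $(NT)^{-1/4}$ for $\frac{1}{\sqrt{NT}}(\sum_j\|\Delta_{\Theta_j}\|_F^2)^{1/2}$, dominated by the claimed rate since $N\wedge T\le\sqrt{NT}$, or $\{\Delta_{\Theta_j}\}\in\mathcal R(3,C_2)$ and Assumption \ref{ass:4} gives $\|\delta\|_F^2\ge C_3\sum_j\|\Delta_{\Theta_j}\|_F^2-C_4(N+T)$ w.p.a.1. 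On the cone $\|\Delta_{\Theta_j}\|_\ast\lesssim\sqrt{K_j}\|\Delta_{\Theta_j}\|_F$, so combining the curvature lower bound, the restricted strong convexity bound, the penalty upper bound from the display above, and a uniform (peeling plus sequential symmetrization) control of the empirical-process remainder over a Frobenius ball, one gets a quadratic inequality $\underline f C_3 r^2\lesssim (N+T)+\sqrt{NT}\,(\max_j\nu_j)\sqrt{\bar K}\,r+o_p(r^2)$ with $r=(\sum_j\|\Delta_{\Theta_j}\|_F^2)^{1/2}$. Solving it and using Assumption \ref{ass:1}(ix) to absorb the third-order and remainder pieces yields $\frac{1}{\sqrt{NT}}\|\tilde\Delta_{\Theta_j}\|_F=O_p(\xi_N^2\sqrt{\log(N\vee T)/(N\wedge T)})$, which is part (i); the second power of $\xi_N$ arises because the covariate bound enters both the operator-norm bound above and the higher-order Knight terms.

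Part (ii) follows from (i) and Weyl's inequality: the singular values of $\frac{1}{\sqrt{NT}}\tilde\Theta_j$ and $\frac{1}{\sqrt{NT}}\Theta_j^0$ differ by at most $\frac{1}{\sqrt{NT}}\|\tilde\Delta_{\Theta_j}\|_{op}\le\frac{1}{\sqrt{NT}}\|\tilde\Delta_{\Theta_j}\|_F$, and for the $I_1$ version one additionally uses that the scaled singular values of $\Theta_j^{0,(1)}$ converge to $\sigma_{k,j}$ (by Assumption \ref{ass:3} and the representativeness of the random subsample). For part (iii), $\frac{1}{\sqrt{NT}}\Theta_j^0=\mathcal U_j^0\Sigma_j^0\mathcal V_j^{0\prime}$ has exactly $K_j$ nonzero singular values, all $\ge c_\sigma$ by Assumption \ref{ass:3}, while the perturbation is $o_p(1)$ in Frobenius norm; hence w.p.a.1 the $K_j$-th singular value of $\frac{1}{\sqrt{NT}}\tilde\Theta_j$ exceeds $c_\sigma/2$ and the $(K_j{+}1)$-th is $o_p(1)$, giving a spectral gap of order $c_\sigma$. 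The Davis--Kahan $\sin\Theta$ theorem applied to the right singular subspaces produces an orthogonal $O_j$ with $\|\tilde{\mathcal V}_jO_j-\mathcal V_j^0\|_F\lesssim c_\sigma^{-1}\frac{1}{\sqrt{NT}}\|\tilde\Delta_{\Theta_j}\|_F$; multiplying by $\sqrt T$ and recalling $V_j^0=\sqrt T\,\mathcal V_j^0$, $\tilde V_j=\sqrt T\,\tilde{\mathcal V}_j$ gives $\frac{1}{\sqrt T}\|V_j^0-\tilde V_jO_j\|_F=O_p(\xi_N^2\sqrt{\log(N\vee T)/(N\wedge T)})$, and identically for the $I_1$ version with $O_j^{(1)}$.

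The genuinely delicate ingredients are the operator-norm bounds on the score matrices $a$ and $X_j\odot a$ and the accompanying uniform control of the Knight remainder: these require matrix concentration for arrays whose rows are only conditionally strong mixing, which is precisely the role played by the sequential symmetrization device of \cite{rakhlin2015sequential}. The remaining steps --- the cone/decomposability argument, restricted strong convexity via Assumption \ref{ass:4}, Weyl's inequality, and Davis--Kahan --- are by now standard once those bounds are in hand.
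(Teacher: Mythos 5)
Your overall architecture matches the paper's: establish the cone inequality and dispose of the non-cone case, get a restricted-strong-convexity lower bound on the population objective via Knight's identity, control the empirical-process remainder by sequential symmetrization, combine into a quadratic inequality, and then Weyl plus Davis--Kahan for parts (ii) and (iii). However, there are two substantive misconceptions. First, you claim that the operator-norm bounds on $a$ and $X_j\odot a$ ``do not follow from independence-based matrix concentration'' and must be handled by sequential symmetrization. That is backwards. By Assumption~\ref{ass:1}(i), given $\mathscr D$ the rows of $a$ and $X_j\odot a$ are \emph{independent across $i$}; the paper bounds $\|a\|_{op}$, $\|X_j\odot a\|_{op}$ by matrix Bernstein applied across rows, with the time-series strong mixing only entering through a Davydov bound on each row's covariance $\Sigma_i$ (Lemmas B.1--B.3). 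This also yields an operator-norm bound that is \emph{$\xi_N$-free}, since Assumption~\ref{ass:1}(iv) controls $\frac1T\sum_t X_{j,it}^2$ rather than forcing you to pay $\xi_N$. Sequential symmetrization (together with the Rademacher contraction) is used for a different purpose: to control the centered Knight remainder \emph{uniformly over the cone} (Lemma B.11), where the nonlinearity in $\Delta$ rules out reducing to a single operator-norm bound.

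Second, and more seriously, your curvature step is incomplete: you expand the Knight integral to second order and get a lower bound ``$\frac{\underline f}{2NT}\|\delta\|_F^2$ minus a third-order piece,'' then relegate the latter to an $o_p(r^2)$ remainder absorbed by Assumption~\ref{ass:1}(ix). But the third-order piece is $\frac{\bar f'}{6NT}\sum_{i,t}|\delta_{it}|^3$, and since $|\delta_{it}|\lesssim(1+p)M\xi_N$ can diverge, it is of order $\xi_N\cdot\frac{1}{NT}\|\delta\|_F^2$, which is \emph{not} $o_p$ of the quadratic term and cannot be absorbed by a rate condition. The paper avoids this by the convexity/rescaling device of Lemma B.5: since $u\mapsto\int_0^u(\mathsf F_{it}(s)-\mathsf F_{it}(0))ds$ is convex and minimized at $0$, one may replace $\int_0^{\delta_{it}}$ by $\int_0^{c_6\xi_N^{-1}\delta_{it}}$, after which the rescaled argument is uniformly bounded and the third-order term is dominated. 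This rescaling is precisely what produces the $\xi_N^{-2}$ in the curvature lower bound, and hence the $\xi_N^{2}$ in the final rate; it is not, as you suggest, a byproduct of $\xi_N$ entering the operator-norm bound and the Knight remainder (neither of which carries a $\xi_N$ factor in the paper). Without the rescaling step your lower bound can be negative for diverging $\xi_N$, so the quadratic inequality you solve is not established.
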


\textbf{Remark 1.} Theorem \ref{Thm1}(i) reports the \textquotedblleft
rough\textquotedblright\ convergence rates of the NNR estimators of the
coefficient matrices in terms of Frobenius norm for both the full-sample and
sub-sample estimators. Unlike the traditional $\left( N\wedge T\right)
^{-1/2}$-rate in the least squares framework, NNR estimators' convergence rates in the PQR
framework usually have an additional $\sqrt{\log (N\vee T)}$ term due to the use
of some exponential inequalities. The extra term $\xi
_{N}^{2}$\ in our rate is due to the upper bound of $|X_{j,it}|$, and it
disappears in case $X_{j,it}$'s are uniformly bounded. Theorem \ref{Thm1}(ii)-(iii) report the convergence rates for the estimators
of the factors and factor loadings of $\Theta _{j}^{0}$, which are inherited from those in Theorem \ref{Thm1}(i).  To derive these results, we establish the symmetrization inequality and contraction principle for the sequential symmetrization developed by \cite{rakhlin2015sequential}. See Lemmas \ref{Lem:symmetrization} and \ref{Lem:contraction} in the online supplement for more detail.

% Given the consistency of the preliminary NNR estimators $\tilde{%
% \Theta}_{j}$ or $\tilde{\Theta}_{j}^{(1)}$ for $j\in \left\{ 0,\cdots
% ,p\right\} $ in terms of Frobenius norm, if we are willing to assume that
% the minimization is done in the $\xi _{N}^{-1}$-neighborhood of $\{\Theta
% _{j}^{0}\}_{j=0}^{p}$ in terms of $\left\Vert \cdot \right\Vert _{\max }$
% norm, a close examination of the proof of Lemma \ref{Lem:exp lower} suggests
% that the effect of $\xi _{N}^{2}$ could be removed so that we can achieve
% the near-optimal rate $O_{p}\left( \sqrt{\frac{\log (N\vee T)}{N\wedge T}}%
% \right) $ in Theorem \ref{Thm1}(i). This assumption is sensible given the
% fact even based on the rough rate for the NNR estimates here, we can obtain
% debiased estimates of $\{\Theta _{j}^{0}\}_{j=0}^{p}$ that are uniformly
% consistent with a convergence rate much faster than $\xi _{N}^{-1};$ see
% Proposition \ref{Pro4} below. In principle, given the debiased estimates, we
% can repeat Steps 1-3 once again but now searching solutions in the $\xi
% _{N}^{-1}$-neighborhood of debiased estimates.

%See the proofs of Theorem \ref{Thm1}(i) and Lemma \ref{Lem:exp lower} in the online supplement
% for details. 

\subsection{Second Stage Estimator}

To study the asymptotic properties of the second-stage estimators, we add some notation. Define 
\begin{equation*}
\Phi_{i}=\frac{1}{T}\sum_{t=1}^{T}\Phi_{it}^{0}\Phi_{it}^{0\prime}\quad \text{and }\Psi_{t}=\frac{1}{N_{2}}\sum_{i\in I_{2}}\Psi_{it}^{0}\Psi_{it}^{0\prime},
\end{equation*}%
where $\Phi_{it}^{0}=(v_{t,0}^{0\prime},v_{t,1}^{0\prime}X_{1,it},\cdots,v_{t,p}^{0\prime}X_{p,it})^{\prime}$ and $\Psi_{it}^{0}=(u_{i,0}^{0\prime},u_{i,1}^{0\prime}X_{1,it},\cdots,u_{i,p}^{0\prime}X_{p,it})^{\prime}.$ Let $K=\sum_{j=0}^{p}K_{j}.$ Note that $\Phi_{i}$ and $\Psi_{t}$ are $K\times K$ matrices. We add the following two assumptions.

\begin{ass}
\label{ass:5} There exist constants $C_{\phi }$ and $c_{\phi }$ such that
a.s. 
\begin{align*}
\infty & >C_{\psi }\geq \limsup_{T}\max_{t\in[T]}\lambda_{\max}(\Psi_{t})\geq \liminf_{T}\min_{t\in[T]}\lambda_{\min }(\Psi_{t})\geq c_{\psi }>0, \\
\infty & >C_{\phi }\geq \limsup_{N}\max\limits_{i\in I_{2}}\lambda_{\max}(\Phi_{i})\geq \liminf_{N}\min_{i\in I_{2}}\lambda_{\min}(\Phi_{i})\geq c_{\phi }>0.
\end{align*}
\end{ass}

Assumption \ref{ass:5} is similar to Assumption 8 in \cite{ma2020detecting}.
To introduce Theorem \ref{Thm2}, we define 
\begin{align*}
& \dot{\varpi}_{it}=\left(\dot{v}_{t,0}^{(1)\prime},\dot{v}%
_{t,1}^{(1)\prime}X_{1,it},\cdots ,\dot{v}_{t,p}^{(1)\prime}X_{p,it}\right)
^{\prime}, \\
& \varpi_{it}^{0}=\left(\left(O_{0}^{(1)}v_{t,0}^{0}\right)
^{\prime},\left(O_{1}^{(1)}v_{t,1}^{0}\right) ^{\prime}X_{1,it},\cdots
,\left(O_{p}^{(1)}v_{t,p}^{0}\right) ^{\prime}X_{p,it}\right) ^{\prime}, \\
& u_{i}^{0}=\left(u_{i,0}^{0\prime},\cdots ,u_{i,p}^{0\prime}\right)
^{\prime},\quad \dot{\Delta}_{t,j}=O_{j}^{(1)\prime}\dot{v}%
_{t,j}^{(1)}-v_{t,j}^{0},\quad \dot{\Delta}_{t,v}=\left(\dot{\Delta}%
_{t,0}^{\prime},\cdots ,\dot{\Delta}_{t,p}^{\prime}\right) ^{\prime}, \\
& \dot{\Delta}_{i,j}=O_{j}^{(1)\prime}\dot{u}_{i,j}^{(1)}-u_{i,j}^{0},\quad 
\dot{\Delta}_{i,u}=\left(\dot{\Delta}_{i,0}^{\prime},\cdots ,\dot{\Delta}%
_{i,p}^{\prime}\right) ^{\prime}, \\
& D_{i}^{I}=\frac{1}{T}\sum_{t=1}^{T}\mathfrak{f}_{it}(0)\varpi_{it}^{0}%
\varpi_{it}^{0\prime},\quad D_{i}^{II}=\frac{1}{T}\sum_{t=1}^{T}\left[ \tau -%
\mathbf{1}\left\{ \epsilon_{it}\leq 0\right\} \right] \varpi_{it}^{0}, \\
& \mathbb{J}_{i}\left(\left\{ \dot{\Delta}_{t,v}\right\}_{t\in[T]}\right) =%
\frac{1}{T}\sum_{t=1}^{T}\left[ \mathbf{1}\left\{ \epsilon_{it}\leq
0\right\} -\mathbf{1}\left\{ \epsilon_{it}\leq \dot{\Delta}%
_{t,v}^{\prime}\Psi_{it}^{0}\right\} \right] \varpi_{it}^{0}.
\end{align*}%
Theorem \ref{Thm2} below gives the uniform convergence rate and linear
expansion of the factor loading estimators from second stage estimation.

\begin{theorem}
\label{Thm2} Suppose Assumptions \ref{ass:1}-\ref{ass:5} hold. Then for each 
$j\in \left\{ 0,\cdots ,p\right\} $, we have

\begin{itemize}
\item[(i)] $\operatornamewithlimits{\max}\limits_{i\in I_{2}\cup I_{3}}
\left\Vert\dot{u}_{i,j}^{(1)}-O_{j}^{(1)}u_{i,j}^{0}\right\Vert_{2}=O_{p}%
\left(\sqrt{\frac{\log(N \vee T)}{N\wedge T}}\xi_{N}^{2 }\right)$,

\item[(ii)] $\operatornamewithlimits{\max}\limits_{t\in[T]} \left\Vert\dot{v}%
_{t,j}^{(1)}-O_{j}^{(1)}v_{t,j}^{0}\right\Vert_{2}=O_{p}\left(\sqrt{\frac{%
\log(N \vee T)}{N\wedge T}}\xi_{N}^{2 }\right)$,

\item[(iii)] $\dot{\Delta}_{i,u}=\left[ D_{i}^{I}\right] ^{-1}\left[
D_{i}^{II}+\mathbb{J}_{i}\left( \left\{ \dot{\Delta}_{t,v}\right\} _{t\in
\lbrack T]}\right) \right] +o_{p}\left( \left( N\vee T\right) ^{-1/2}\right) 
$ uniformly over $i\in I_{3}$.
\end{itemize}
\end{theorem}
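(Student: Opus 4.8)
Write $r_{NT}:=\sqrt{\log(N\vee T)/(N\wedge T)}\,\xi_N^{2}$ for the rate appearing in Theorems \ref{Thm1}--\ref{Thm2}. I would establish the three parts in the order dictated by Step 2 of the algorithm: first (i) restricted to $i\in I_2$, then (ii), then (i) restricted to $i\in I_3$, and finally (iii); parts (i)--(ii) are needed before (iii) because the remainder of the expansion in (iii) involves $\max\|\dot{\Delta}_{i,u}\|$ and $\max\|\dot{\Delta}_{t,v}\|$. The structural fact used repeatedly is that sample splitting together with Assumption \ref{ass:1}(i) makes the rows feeding each regression conditionally independent of that regression's inputs: when studying the row-wise QR that defines $\dot u_{i,j}^{(1)}$ on $I_2$ I would condition on $\mathscr{D}^{I_1}$, so that $\tilde v_{t,j}^{(1)}$ is fixed while $\{(\epsilon_{it},e_{it})\}_{t}$, $i\in I_2$, remain independent across $i$ with $\mathbb{E}[a_{it}\mid\mathscr{D}_e]=0$ still holding after the enlargement of the conditioning set; when studying the $I_3$ regressions I would condition on $\mathscr{D}_{e}^{I_{1}\cup I_{2}}$, under which $\dot v_{t,j}^{(1)}$ is measurable and the $I_3$ rows are fresh. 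Each estimate is compared with its rotated population counterpart (e.g.\ $O_j^{(1)}u_{i,j}^0$), which reproduces the true conditional $\tau$-quantile up to the first-stage factor error.

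For (i) and (ii) I would run one common convexity argument. Fix $i$ (resp.\ $t$), let $\Delta$ denote the deviation from the rotated population value, and use convexity of $\rho_\tau$ to reduce the claim to showing that, uniformly over the relevant index set and over $\|\Delta\|= M r_{NT}$ with $M$ large, the recentered objective is positive w.p.a.1. Knight's identity splits the recentered objective into a linear score term and a nonnegative integral term. Taking the conditional expectation of the integral term and invoking the density lower bounds in Assumption \ref{ass:1}(vi)--(vii) together with $\lambda_{\min}(\Phi_i)\ge c_\phi$ and $\lambda_{\min}(\Psi_t)\ge c_\psi$ from Assumption \ref{ass:5}, it is bounded below by $c\|\Delta\|^2$; its centered part and the linear score term are both $O_p(r_{NT}\|\Delta\|)=o(\|\Delta\|^2)$ by a maximal inequality over the finite-dimensional class indexed by $\Delta$, combined with a Bernstein-type inequality for conditionally strong-mixing arrays (Assumption \ref{ass:1}(iii)) along $t$ in the row regressions and for conditionally independent summands along $i$ in the column regression, plus a union bound over the at most $N$ (resp.\ $T$) indices; here the truncation $|X_{j,it}|\le\xi_N$ of Assumption \ref{ass:1}(v), the moment bounds of Assumption \ref{ass:1}(iv), and the rate restriction of Assumption \ref{ass:1}(ix) are exactly what is needed. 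The extra ingredient is the generated-regressor error: for the $I_2$ row regression the inputs $\tilde v_{t,j}^{(1)}$ are known only in the averaged sense of Theorem \ref{Thm1}(iii), but since that error enters only through $\frac1T\sum_t$, a Cauchy--Schwarz step turns the Frobenius bound into an $O_p(r_{NT}\|\Delta\|)$ contribution; for the column regression and the $I_3$ row regression the inputs $\dot u_{i,j}^{(1)}$ and $\dot v_{t,j}^{(1)}$ are controlled uniformly by the preceding parts, so their perturbation of the objective is $O_p(r_{NT})$. Balancing $c\|\Delta\|^2$ against the $O_p(r_{NT}\|\Delta\|)$ bound yields $\|\Delta\|=O_p(r_{NT})$, which is (i) and (ii).

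For (iii) I would linearize the subgradient condition of the row-wise QR defining $\dot u_{i,j}^{(1)}$ on $I_3$: up to a term of order $1/T$ from the non-smoothness of $\rho_\tau$, one has $\frac1T\sum_t\varpi_{it}^0[\tau-\mathbf{1}\{\epsilon_{it}\le \varpi_{it}^{0\prime}\dot{\Delta}_{i,u}+\Psi_{it}^{0\prime}\dot{\Delta}_{t,v}+\text{(cross terms)}\}]=o_p((N\vee T)^{-1/2})$ uniformly over $i\in I_3$. Adding and subtracting $\mathbf{1}\{\epsilon_{it}\le \Psi_{it}^{0\prime}\dot{\Delta}_{t,v}\}$ and then $\mathbf{1}\{\epsilon_{it}\le 0\}$ isolates $\mathbb{J}_i(\{\dot{\Delta}_{t,v}\}_{t})$ and $D_i^{II}$ verbatim, while a one-term Taylor expansion of the conditional CDF $\mathfrak{F}_{it}$ at $0$ (legitimate by the bounded density and derivative in Assumption \ref{ass:1}(vii)) produces $D_i^I\dot{\Delta}_{i,u}$ plus a quadratic remainder; solving for $\dot{\Delta}_{i,u}$ and using invertibility of $D_i^I$ (Assumptions \ref{ass:1}(vii) and \ref{ass:5}) gives the stated expansion.

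The substance of the proof --- and the main obstacle --- is showing that every residual piece is $o_p((N\vee T)^{-1/2})$ uniformly over $i\in I_3$: the $1/T$ non-smoothness term, the empirical-process fluctuation of $\mathbf{1}\{\epsilon_{it}\le\cdot\}$ around its conditional mean uniformly in $i$ and in the generated regressor $\dot v_{t,j}^{(1)}$ (handled by conditioning on $\mathscr{D}_{e}^{I_{1}\cup I_{2}}$ and a chaining bound for conditionally strong-mixing arrays, the same machinery behind Theorem \ref{Thm1}), the quadratic Taylor remainder of order $\max_{i\in I_3}\|\dot{\Delta}_{i,u}\|^2+\max_t\|\dot{\Delta}_{t,v}\|^2$, and the cross terms between $\dot{\Delta}_{i,u}$ and $\dot{\Delta}_{t,v}$. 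All of these are controlled by feeding in parts (i)--(ii), Theorem \ref{Thm1}, and the rate restrictions in Assumption \ref{ass:1}(ix); the most delicate point is the uniform-in-$i$ control of the generated-regressor empirical process, which is why the final stage is run on the fresh subsample $I_3$.
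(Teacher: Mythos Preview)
Your proposal is correct and follows essentially the same route as the paper: the same ordering (i) on $I_2$ $\to$ (ii) $\to$ (i) on $I_3$ $\to$ (iii), the same Knight-identity/convexity localization for (i)--(ii) with the integral term providing the quadratic lower bound and the remaining pieces controlled by conditional Bernstein/Hoeffding plus a union bound, and the same subgradient-plus-Taylor linearization for (iii) with the stochastic-equicontinuity term handled by conditioning on $\mathscr{D}_{e_i}^{I_1\cup I_2}$ and a sup-over-ball/covering argument to decouple $\dot{\Delta}_{i,u}$ from $\epsilon_{it}$. Two cosmetic points: the claim ``$O_p(r_{NT}\|\Delta\|)=o(\|\Delta\|^2)$'' is not literally right on the boundary $\|\Delta\|\asymp r_{NT}$ (your subsequent balancing sentence is the correct statement), and the paper uses a finite $\varepsilon$-net rather than full chaining for the empirical-process step in (iii), but neither affects the validity of your plan.
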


\textbf{Remark 2.} Theorem \ref{Thm2}(i) reports the uniform convergence
rate for the factor loading estimators of $\Theta _{j}^{0}$ for $i\in
I_{2}\cup I_{3};$ Theorem \ref{Thm2}(ii) reports the uniform convergence
rate for the factor estimators of $\Theta _{j}^{0}$ for $t\in \left[ T\right]
$; Theorem \ref{Thm2}(iii) reports the linear expansion for the factor
loading estimators of $\Theta _{j}^{0}$ for $i\in I_{3}$. However, the $\mathbb{J}_{i}\left( \left\{ \dot{\Delta}_{t,v}\right\} _{t\in
\lbrack T]}\right)$ term is not mean-zero and represents the bias induced by the first stage NNR. In the third stage below, we aim to remove such a bias from the linear expansion. 

% \textbf{Remark 7.} 
% The proof of the first statement in this theorem relies on the splitting sample construction, which makes estimators form the first half subsample to be independent from the second half data and this independence set up greatly facilitate the proof. However, to derive the second statement, the independence vanish and we cannot use the same way as in the proof of statement (i). However, we notice the dependence is caused by estimators, which can be replaced by the true value plus a error bound. Details can be found in Appendix.

\subsection{Third Stage Estimator}

In the debiasing stage, we first apply PCA to all independent variables $X_{j,it}$, and then run the row- and column-wise QRs to obtain the final
estimators. Below we give Assumptions \ref{ass:7}-\ref{ass:9} for the PCA
procedure and establish the asymptotic linear expansions of PCA estimates in
the online supplement. Theorem \ref{Thm3} below gives the asymptotic
distribution of our final factor and factor loading estimates.

\begin{ass}
\label{ass:7} For all $j\in[p]$, there exists a constant $M>0$ such that
\begin{itemize}
\item[(i)] $\mathbb{E}\left(e_{j,it}|\mu_{j,it} \right)=0$,

\item[(ii)] $\mathbb{E}\left[\frac{1}{\sqrt{N}}\sum_{i=1}^{N}\left[%
e_{j,it}e_{j,is}-\mathbb{E}(e_{j,it}e_{j,is})\right]\right]^{2}\leq M, $

\item[(iii)] for all $i\in[N]$, $\frac{1}{T}\sum_{t=1}^{T}\sum_{s=1}^{T}%
\left\vert\mathbb{E}(e_{j,it}e_{j,is})\right\vert\leq M$,

\item[(iv)] $\max_{t\in[T]}\frac{1}{N\sqrt{T}}\left\Vert
e_{j,t}^{\prime}E_{j}\right\Vert_{2}=O_{p}\left(\frac{\log N\vee T}{N\wedge T%
}\right)$, and for $\max_{i\in [N]}\frac{1}{T\sqrt{N}}\left\Vert
e_{j,i}^{\prime}E_{j}^{\prime}\right\Vert_{2}=O_{p}\left(\frac{\log N\vee T}{%
N\wedge T}\right)$, where $e_{j,i}=\left(e_{j,i1},...,e_{j,iT}\right)
^{\prime}$, $e_{j,t}=\left(e_{j,1t},...,e_{j,Nt}\right) ^{\prime}$, and $%
E_{j}=\{e_{j,it}\}_{i \in [N],t \in [T]}$.
\end{itemize}
\end{ass}

\begin{ass}
\label{ass:8} For all $j\in[p]$, 

\begin{itemize}
\item[(i)] recall that $L_{j}^{0}=\left(l_{j,1}^{0},\cdots,l_{j,N}^{0}%
\right) ^{\prime}$ and $W_{j}^{0}=\left(w_{j,1}^{0},\cdots,w_{j,T}^{0}%
\right) ^{\prime}$. $\lim_{N\rightarrow \infty }\frac{L_{j}^{0%
\prime}L_{j}^{0}}{N}=\Sigma_{L_{j}}>0$ and $\lim_{T\rightarrow\infty }\frac{%
W_{j}^{0\prime}W_{j}^{0}}{T}=\Sigma_{W_{j}}>0$,

\item[(ii)] the $r_{j}$ eigenvalues of $\Sigma_{L_{j}}\Sigma_{W_{j}}$ are
distinct.
\end{itemize}
\end{ass}

\begin{ass}
\label{ass:9} For all $j\in[p]$, there exists a constant $M>0$ such that

\begin{itemize}
\item[(i)] $\max_{t\in[T]}\mathbb{E}\left\Vert \frac{1}{\sqrt{N}}%
\sum_{i=1}^{N}l_{j,i}^{0}e_{j,it}\right\Vert_{2}^{2}\leq M$ and $\max_{t\in[T%
]}\frac{1}{NT}e_{j,t}^{\prime}E_{j}^{\prime}L_{j}^{0}=O_{p}\left(\frac{\log
\left(N\vee T\right) }{N\wedge T}\right) $,

\item[(ii)] $\max_{i\in [N]}\mathbb{E}\left\Vert \frac{1}{\sqrt{T}}%
\sum_{t=1}^{T}w_{j,t}^{0}e_{j,it}\right\Vert_{2}^{2}\leq M$ and $\max_{i\in[N%
]}\frac{1}{NT}e_{j,t}^{\prime}E_{j}^{\prime}W_{j}^{0}=O_{p}\left(\frac{\log
\left(N\vee T\right) }{N\wedge T}\right) $.
\end{itemize}
\end{ass}

\begin{ass}
\label{ass:10} For $\forall j\in[p]$,

\begin{itemize}
\item[(i)] $\mathbb{E}\left[f_{it}(0)e_{j,it} \bigg|\mathscr{D}\right]=0$,

\item[(ii)] for each $i\in[N]$ and $j\in[p]$, $\left%
\{f_{it}(0),f_{it}(0)e_{j,it}\right\}$ is stationary strong mixing across $t$
conditional on $\mathscr{D}$.
\end{itemize}
\end{ass}

Assumptions \ref{ass:7}-\ref{ass:9} are stronger than those in \cite%
{bai2020simpler} because we strengthen their Assumptions A1(c) and A3 to hold uniformly. Assumption \ref{ass:10} imposes some moment
and mixing conditions. Even though $f_{it}(\cdot )$ (the PDF of $\epsilon _{it}$ given $\mathscr{D}%
_{e}$) is a function of $%
\left\{ e_{j,it}\right\} _{j\in \lbrack p],i\in \lbrack N],t\in \lbrack T]}$%
, we can show that Assumption \ref{ass:10} holds under some reasonable
conditions. For example, we consider the location scale model: 
\begin{equation*}
Y_{it}=\beta _{0,it}+\sum_{j\in \lbrack p]}X_{j,it}\beta _{j,it}+\left(
\gamma _{0,it}+\sum_{j\in \lbrack p]}X_{j,it}\gamma _{j,it}\right)
u_{it},\quad \text{with}\quad X_{j,it}=l_{j,i}^{0\prime
}w_{j,t}^{0}+e_{j,it},
\end{equation*}%
where $u_{it}$ is independent of $\{w_{j,t}^{0},e_{j,it}\}_{j\in \lbrack
p],t\in \lbrack T]}$ and $l_{j,i}^{0}$ and $\beta _{j,it}$ are nonrandom. In
this case, $\Theta _{j,it}^{0}=\beta _{j,it}+\gamma _{j,it}\mathscr{Q}_{\tau
}(u_{it})$, $\epsilon _{it}=\left( \gamma _{0,it}+\sum_{j\in \lbrack
p]}X_{j,it}\gamma _{j,it}\right) \left[ u_{it}-\mathscr{Q}_{\tau }(u_{it})%
\right] $, where $\mathscr{Q}_{\tau }(u_{it})$ is the $\tau $-quantile of $%
u_{it}$. It is clear that $f_{it}(\cdot )$ is the function of $\left\{
e_{j,it}\right\} _{j\in \lbrack p],i\in \lbrack N],t\in \lbrack T]}$ and all
factors. However, if $u_{it}$ is independent of sequence $\left\{
e_{j,it}\right\} _{j\in \lbrack p],t\in \lbrack T]}$, we observe that $%
f_{it}(0)$ is the PDF of $u_{it}-\mathscr{Q}_{\tau }(u_{it})$ evaluated at
zero point, which is independent of $\left\{ e_{j,it}\right\} _{j\in \lbrack
p],t\in \lbrack T]}$. Therefore, Assumption \ref{ass:10}%
(i) holds under mild conditions that $u_{it}$ is independent of the sequence $%
\left\{ e_{it}\right\} _{t\in \lbrack T]}$ and $\mathbb{E}\left( e_{it}\big|%
\mathscr{D}\right) =0$.

Define 
\begin{align*}
& \hat{V}_{u_{j},i}=\frac{1}{T}\sum_{t=1}^{T}\mathbb{E}\left[%
f_{it}(0)e_{j,it}^{2}\big|\mathscr{D}\right] v_{t,j}^{0}v_{t,j}^{0\prime},%
\quad V_{u_{j},i}=\mathbb{E}\left(\hat{V}_{u_{j}}\right) \\
& \Omega_{u_{j},i}=Var\left[ \frac{1}{\sqrt{T}}%
\sum_{t=1}^{T}e_{j,it}v_{t,j}^{0}(\tau -\mathbf{1}\left\{ \epsilon_{it}\leq
0\right\} )\right] , \\
& \hat{V}_{v_{j},t}^{(3)}=\frac{1}{N_{3}}\sum_{i\in
I_{3}}f_{it}(0)e_{j,it}^{2}u_{i,j}^{0}u_{i,j}^{0\prime},\quad V_{v_{j}}=%
\mathbb{E}\left(\hat{V}_{v_{j},t}^{(3)}\right),
\quad\Omega_{v_{j}}^{(3)}=\tau \left(1-\tau \right) \frac{1}{N_{3}}%
\sum_{i\in I_{3}}\mathbb{E}\left(e_{j,it}^{2}u_{i,j}^{0}u_{i,j}^{0\prime}%
\right) .
\end{align*}%
Let $\Sigma_{u_{j},i}=O_{j}^{(1)}V_{u_{j},i}^{-1}%
\Omega_{u_{j},i}V_{u_{j},i}^{-1}O_{j}^{(1)\prime}$, $%
\Sigma_{v_{j}}^{(3)}=O_{j}^{(1)}\left(V_{v_{j}}^{(3)}\right)^{-1}%
\Omega_{v_{j}}\left(V_{v_{j}}^{(3)}\right)^{-1}O_{j}^{(1)\prime}$, $%
b_{j,it}^{0}=e_{j,it}v_{t,j}^{0}(\tau -\mathbf{1}\left\{ \epsilon_{it}\leq
0\right\} )$ and $\xi_{j,it}^{0}=e_{j,it}u_{i,j}^{0}\left(\tau -\mathbf{1}%
\left\{ \epsilon_{it}\leq 0\right\} \right) $. The following theorem establishes
the asymptotic properties of the third-stage estimators.

\begin{theorem}
\label{Thm3} Suppose that Assumptions \ref{ass:1}-\ref{ass:10} hold. Suppose
that Assumption \ref{ass:15} in Appendix B.3 of the online supplement hold.
Let $O_{u,j}^{(1)}$ be the bounded matrix defined in the appendix that is
related to rotation matrix $O_{j}^{(1)}$. Then we have that $\forall j\in
\lbrack p]$,

\begin{itemize}
\item[(i)] $\hat{u}_{i,j}^{(3,1)}-O_{u,j}^{(1)}u_{i,j}^{0}=O_{j}^{(1)}\hat{V}%
_{u_{j},i}^{-1}\frac{1}{T}\sum_{t=1}^{T}b_{j,it}^{0}+\mathcal{R}_{i,u}^{j}$
and $\sqrt{T}\left(\hat{u}_{i,j}^{(3,1)}-O_{u,j}^{(1)}u_{i,j}^{0}\right)
\rightsquigarrow \mathcal{N}\left(0,\Sigma_{u_{j},i}\right) $ $\forall i\in
I_{3}$,

\item[(ii)] $\hat{v}_{t,j}^{(3,1)}-\left( O_{u,j}^{(1)\prime }\right)
^{-1}v_{t,j}^{0}=O_{j}^{(1)}\left( \hat{V}_{v_{j},t}^{(3)}\right) ^{-1}\frac{%
1}{N_{3}}\sum_{i\in I_{3}}\xi _{j,it}^{0}+\mathcal{R}_{t,v}^{j}$ and $\sqrt{%
N_{3}}\left( \hat{v}_{t,j}^{(3,1)}-O_{v,j}^{(1)}v_{t,j}^{0}\right)
\rightsquigarrow \mathcal{N}\left( 0,\Sigma _{v_{j}}^{(3)}\right) $ $\forall
t\in \lbrack T]$,\newline
where $\max_{i\in I_{3}}\left\vert \mathcal{R}_{i,u}^{j}\right\vert
=o_{p}\left( \left( N\vee T\right) ^{-1/2}\right) $, and $\max_{t\in \lbrack
T]}\left\vert \mathcal{R}_{t,v}^{j}\right\vert =o_{p}\left( \left( N\vee
T\right) ^{-\frac{1}{2}}\right) $.
\end{itemize}
\end{theorem}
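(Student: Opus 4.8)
The plan is to treat the row-wise debiasing QR in \eqref{debias_1} and the column-wise debiasing QR in \eqref{debias_2} as perturbed M-estimation problems, conditioning throughout on the $\sigma$-field $\mathscr{D}_{e}^{I_{1}\cup I_{2}}$ so that the first-stage objects $\dot v_{t,j}^{(1)}$, $\dot u_{i,j}^{(1)}$ and the PCA outputs $\hat e_{j,it},\hat\mu_{j,it}$ are all measurable and only the randomness in $I_{3}$ remains active. First I would record the linear expansion from Theorem \ref{Thm2}(iii) and identify exactly what must be subtracted off: the bias term $\mathbb{J}_{i}(\{\dot\Delta_{t,v}\}_{t\in[T]})$ is, up to smaller order, a smooth function of the first-stage error $\dot\Delta_{t,v}=O_{j}^{(1)\prime}\dot v_{t,j}^{(1)}-v_{t,j}^{0}$, and a Taylor expansion of the indicator-difference (using Assumption \ref{ass:1}(viii) to expand $\mathbf{1}\{\epsilon_{it}\le\dot\Delta_{t,v}^{\prime}\Psi_{it}^{0}\}$ around $0$) shows it equals $\tfrac1T\sum_{t}f_{it}(0)(\varpi_{it}^{0\prime}\text{-weighted})\dot\Delta_{t,v}^{\prime}\Psi_{it}^{0}+(\text{remainder})$. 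The key algebraic point is that replacing $X_{j,it}$ by $\hat\mu_{j,it}\dot u_{i,j}^{(1)\prime}\dot v_{t,j}^{(1)}$ in forming $\tilde Y_{it}$, and using the demeaned regressor $\hat e_{j,it}$ in \eqref{debias_1}, annihilates exactly the $\hat\mu_{j,it}$-component of this bias, because the factor structure $X_{j,it}=\mu_{j,it}+e_{j,it}$ makes the cross term $\mu_{j,it}\times(\text{first-stage error})$ cancel between the constructed response and the regressor; what survives is driven only by $e_{j,it}$, which is mean-zero given $\mathscr{D}$ by Assumptions \ref{ass:1}(ii) and \ref{ass:7}(i).

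Next I would set up the Bahadur-type expansion for \eqref{debias_1} itself. Write the row-$i$ subgradient condition $\tfrac1T\sum_{t}(\tau-\mathbf{1}\{\tilde Y_{it}-\hat u_{i}^{\prime}\hat\varpi_{it}\le 0\})\hat\varpi_{it}=o_p(\cdot)$ where $\hat\varpi_{it}=(\dot v_{t,0}^{(1)\prime},\{\dot v_{t,j}^{(1)\prime}\hat e_{j,it}\}_{j})^{\prime}$. Add and subtract the population Hessian, using Assumption \ref{ass:5} to invert the average-Hessian matrix (this is where $\hat V_{u_{j},i}$ appears as the relevant diagonal block, with $f_{it}(0)$ coming from Assumption \ref{ass:1}(viii)), and use the uniform rates from Theorem \ref{Thm2}(i)-(ii) together with Theorem \ref{Thm1} and Assumption \ref{ass:1}(ix) to control all remainders uniformly over $i\in I_{3}$. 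The leading stochastic term is $\tfrac1T\sum_{t}b_{j,it}^{0}=\tfrac1T\sum_{t}e_{j,it}v_{t,j}^{0}(\tau-\mathbf{1}\{\epsilon_{it}\le 0\})$, up to the rotation $O_{u,j}^{(1)}$ (which collects $O_{j}^{(1)}$ and a further rotation induced by the PCA estimation error in $\hat e_{j,it}$ versus $e_{j,it}$, quantified via Assumptions \ref{ass:7}-\ref{ass:9} and the PCA expansions stated to be in the supplement). For the CLT, apply a conditional strong-mixing central limit theorem (Assumptions \ref{ass:1}(iii) and \ref{ass:10}) to $\tfrac1{\sqrt T}\sum_{t}b_{j,it}^{0}$ given $\mathscr{D}_{e}$; the asymptotic variance is $V_{u_{j},i}^{-1}\Omega_{u_{j},i}V_{u_{j},i}^{-1}$, hence $\sqrt T(\hat u_{i,j}^{(3,1)}-O_{u,j}^{(1)}u_{i,j}^{0})\rightsquigarrow\mathcal N(0,\Sigma_{u_{j},i})$ after the rotation.

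Part (ii) follows the same template applied to the column-wise QR \eqref{debias_2}, now conditioning additionally on $\{\hat u_{i,j}^{(3,1)}\}_{i\in I_{3}}$ and averaging over $i\in I_{3}$ (so the sample size is $N_{3}$); here Assumption \ref{ass:1}(i) gives cross-sectional conditional independence, so the CLT is the classical Lindeberg-Feller one rather than a mixing one, the Hessian block is $\hat V_{v_{j},t}^{(3)}$, the score is $\tfrac1{N_{3}}\sum_{i}\xi_{j,it}^{0}$, and the variance is $\Omega_{v_{j}}^{(3)}=\tau(1-\tau)\tfrac1{N_{3}}\sum_{i}\mathbb{E}(e_{j,it}^{2}u_{i,j}^{0}u_{i,j}^{0\prime})$ because $\mathbb{E}[(\tau-\mathbf{1}\{\epsilon_{it}\le0\})^{2}|\mathscr D_{e}]=\tau(1-\tau)$. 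One extra wrinkle: the constructed response $\hat Y_{it}$ uses $\hat u_{i,j}^{(3,1)}$, so the bias cancellation argument must be re-run with the already-debiased loading in place of $\dot u_{i,j}^{(1)}$, and the propagated error from part (i) must be shown to be $o_p((N\vee T)^{-1/2})$ uniformly in $t$ — this is where Assumption \ref{ass:15} in the supplement is invoked. The main obstacle, I expect, is exactly this bias-cancellation bookkeeping: showing rigorously and uniformly (over $i\in I_{3}$ and over $t\in[T]$) that the $\mu_{j,it}$-driven part of $\mathbb{J}_{i}$, after plugging in $\hat\mu_{j,it},\hat e_{j,it}$ and the first-stage estimates, is negligible, which requires carefully combining the PCA error bounds, the Theorem \ref{Thm1}-\ref{Thm2} rates, the boundedness $|X_{j,it}|\le\xi_{N}$ (Assumption \ref{ass:1}(v)), and the rate condition \ref{ass:1}(ix) — together with an empirical-process/maximal-inequality argument (the sequential symmetrization tools of Lemmas \ref{Lem:symmetrization}-\ref{Lem:contraction}) to make the "$o_p$" hold uniformly rather than pointwise.
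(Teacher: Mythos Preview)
Your high-level architecture matches the paper --- fresh Bahadur expansion for \eqref{debias_1}, isolate the score $\tfrac1T\sum_t b_{j,it}^0$, absorb residual bias into a rotation, apply a mixing CLT, then repeat column-wise --- and your identification of where Assumption \ref{ass:15} enters (the $\hat u^{(3,1)}$-dependence in part (ii)) is correct. But two pieces of the debiasing mechanism are misidentified and would derail the execution.

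First, the bias does \emph{not} ``cancel between the constructed response and the regressor.'' The offset $\iota_{it}$ at the true (rotated) parameter still contains the $\hat\mu_{1,it}\dot u_{i,1}^{(1)\prime}\dot v_{t,1}^{(1)}-\mu_{1,it}u_{i,1}^{0\prime}v_{t,1}^{0}$ term at full $O_p(\eta_N)$ size; nothing algebraic kills it. What makes the \emph{slope} block of the resulting bias $\hat{\mathbb I}_{2,i}=\tfrac1T\sum_t e_{j,it}v_{t,j}^0 f_{it}(0)\,(\mu_{j,it}u_{i,j}^{0\prime}v_{t,j}^0-\hat\mu_{j,it}\dot u_{i,j}^{(1)\prime}\dot v_{t,j}^{(1)})$ negligible is Assumption \ref{ass:10}(i), $\mathbb E[f_{it}(0)e_{j,it}\,|\,\mathscr D]=0$, not Assumptions \ref{ass:1}(ii) or \ref{ass:7}(i): each summand after expanding the difference has the form $e_{j,it}f_{it}(0)\times(\mathscr D^{I_1\cup I_2}\text{-measurable})$, hence is conditionally mean zero and averages to $O_p(\eta_N^2)$. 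The same moment condition forces the off-diagonal blocks of the limiting Hessian $D_i^F$ to vanish, which is what lets you extract $\hat V_{u_j,i}$ as a clean diagonal block. The intercept block lacks the $e_{j,it}$ factor and stays $O_p(\eta_N)$ --- this is exactly why the theorem is stated only for $j\in[p]$ (cf.\ Remark 4). If you proceed citing \ref{ass:7}(i) you will not be able to close the argument.

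Second, the extra rotation in $O_{u,j}^{(1)}$ is not induced by PCA error. The paper defines $O_{u,j}^{(1)}$ as the $j$-th block of $I_{K_0+K_1}+(D^F)^{-1}D^J$, and the slope block of $D^J$ carries the \emph{Step-2 factor estimation error} $O_j^{(1)}v_{t,j}^0-\dot v_{t,j}^{(1)}$, weighted by $f_{it}(0)e_{j,it}^2$. This piece survives at $O_p(\eta_N)$ and is absorbed into the rotation rather than cancelled; the PCA errors $\hat e_{j,it}-e_{j,it}$ enter only in the $\dot D_i^J-D_i^J$ difference and contribute to the intercept-block bias, not to $O_{u,1}^{(1)}$. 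Finally, the uniformity over $i\in I_3$ is obtained not by sequential symmetrization (that was for Theorem \ref{Thm1}) but by conditioning on high-probability events such as $\{\max_i\|\dot\Delta_{i,u}\|\le M\eta_N\}$, covering the low-dimensional nuisance parameter, and applying the conditional Bernstein inequality for strong mixing sequences (Lemma \ref{Lem:Bern}) pointwise after conditioning on $\mathscr D_e^{I_1\cup I_2}$.
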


\textbf{Remark 3.} Theorem \ref{Thm3} reports the linear expansions for the
factor and factor loading estimators for each slope matrix obtained in Step
3. Compared with \cite{chernozhukov2019inference}, Theorem \ref{Thm3}
obtains the uniform convergence rate rather than the point-wise result for
the reminder terms $\mathcal{R}_{i,u}^{j}$ and $\mathcal{R}_{t,v}^{j}.$ In
addition, since the regressors in the debiasing step are obtained from Step
2 instead of Step 1, we don't have independence between the regressors and
error terms, which makes the proof more complex than that in \cite%
{chernozhukov2019inference}. See the proof in the appendix on how to handle
the dependence. Assumption \ref{ass:15} in the online supplement is a regularity condition on the density of $\epsilon_{it}$. 

Following Theorem \ref{Thm3} and estimators defined in Table \ref%
{tab:estimator symbol}, we have that $\forall j\in \left[ p\right] $, $%
\forall i\in \lbrack N]$ and $\forall t\in \left[ T\right] $, 
\begin{align*}
& \hat{u}_{i,j}^{(a,b)}-O_{u,j}^{(b)}u_{i,j}^{0}=O_{j}^{(b)}\hat{V}%
_{u_{j},i}^{-1}\frac{1}{T}\sum_{t=1}^{T}v_{t,j}^{0}b_{j,it}^{0}+\mathcal{R}%
_{i,u}^{j}, \\
& \hat{v}_{t,j}^{(a,b)}-\left( O_{u,j}^{(b)\prime }\right)
^{-1}v_{t,j}^{0}=O_{j}^{(b)}\left( \hat{V}_{v_{j},t}^{(a)}\right) ^{-1}\frac{%
1}{N_{a}}\sum_{i\in I_{a}}u_{i,j}^{0}\xi _{j,it}^{0}+\mathcal{R}_{t,v}^{j},
\end{align*}%
where $\hat{V}_{v_{j},t}^{(a)}=\frac{1}{N_{a}}\sum_{i\in
I_{a}}f_{it}(0)e_{j,it}^{2}u_{i,j}^{0}u_{i,j}^{0\prime }$, $a\in \lbrack 3]$
and $b\in \lbrack 3]\setminus \{a\}$.

Given the above estimates for the factors and factor loadings, we can
estimate $\Theta _{j,it}^{0}$ by 
\begin{equation*}
\hat{\Theta}_{j,it}=\frac{1}{2}\sum_{a\in \lbrack 3]}\sum_{b\in \lbrack
3]\setminus \{a\}}\left\{ \hat{u}_{i,j}^{(a,b)\prime }\hat{v}%
_{t,j}^{(a,b)}\right\} \mathbf{1}_{ia}
\end{equation*}%
where $\mathbf{1}_{ia}=\mathbf{1}\left\{ i\in I_{a}\right\} $ for $i\in
\lbrack N]$. Let $\Xi _{j,it}^{0}=\frac{1}{T}v_{t,j}^{0\prime }\Sigma
_{u_{j},i}v_{t,j}^{0}+\sum_{a=1}^{3}\frac{1}{N_{a}}\mathbf{1}%
_{ia}u_{i,j}^{0\prime }\Sigma _{v_{j}}^{a}u_{i,j}^{0}$. The following
proposition studies the asymptotic properties of $\hat{\Theta}_{j,it}$.

\begin{proposition}
\label{Pro4} Under Assumptions \ref{ass:1}-\ref{ass:10} and Assumption \ref%
{ass:15}, $\forall j\in \lbrack p]$ we have

\begin{itemize}
\item[(i)] $\hat{\Theta}_{j,it}-\Theta_{j,it}^{0}=\sum_{a=1}^{3}u_{i,j}^{0%
\prime}\left(\hat{V}_{v_{j},t}^{(a)}\right) ^{-1}\frac{1}{N_{a}}%
\sum_{i^{*}\in I_{a}}\xi_{j,i^{*}t}\mathbf{1}_{i^{*}a}+v_{t,j}^{0\prime}\hat{%
V}_{u_{j}}^{-1}\frac{1}{T}\sum_{t^{*}=1}^{T}b_{j,it^{*}}^{0}+\mathcal{R}%
_{it}^{j}$, where $\max_{i\in I_{3},t\in[T]}\left\vert \mathcal{R}_{it}^{j}\right\vert
=o_{p}\left(\left(N\vee T\right) ^{-1/2}\right)$, 

\item[(ii)] $\max_{i\in[N],t\in[T]}\left\vert \hat{\Theta}%
_{j,it}-\Theta_{j,it}^{0}\right\vert =O_{p}\left(\sqrt{\frac{\log N\vee T}{%
N\wedge T}}\right)$,

\item[(iii)] $\left(\Xi_{j,it}^{0}\right) ^{-1/2}\left(\hat{\Theta}%
_{j,it}-\Theta_{j,it}^{0}\right) \rightsquigarrow \mathcal{N}%
\left(0,1\right) $.
\end{itemize}

\end{proposition}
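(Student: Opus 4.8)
The plan is to deduce all three parts from Theorem~\ref{Thm3} and the analogous expansions obtained by permuting $I_1,I_2,I_3$ (Table~\ref{tab:estimator symbol}), so that a linear expansion of $\hat u_{i,j}^{(a,b)}$ and $\hat v_{t,j}^{(a,b)}$ is available for every $i\in[N]$, $t\in[T]$ and every admissible $(a,b)$. Fix $j$, $i$, $t$ and let $a$ be the unique index with $i\in I_a$, so $\hat\Theta_{j,it}=\tfrac12\sum_{b\in[3]\setminus\{a\}}\hat u_{i,j}^{(a,b)\prime}\hat v_{t,j}^{(a,b)}$. Writing $\hat u_{i,j}^{(a,b)}=O_{u,j}^{(b)}u_{i,j}^{0}+\delta_{i,u}^{(a,b)}$ and $\hat v_{t,j}^{(a,b)}=(O_{u,j}^{(b)\prime})^{-1}v_{t,j}^{0}+\delta_{t,v}^{(a,b)}$, where $\delta_{i,u}^{(a,b)}$ and $\delta_{t,v}^{(a,b)}$ are the stochastic parts in Theorem~\ref{Thm3}, the product decomposes into $u_{i,j}^{0\prime}v_{t,j}^{0}$, two ``linear'' cross terms, and the ``quadratic'' term $\delta_{i,u}^{(a,b)\prime}\delta_{t,v}^{(a,b)}$. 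The rotation matrices cancel in the leading term (giving $u_{i,j}^{0\prime}v_{t,j}^{0}=\Theta_{j,it}^{0}$) and, using the identities relating $O_{u,j}^{(b)}$ to $O_j^{(b)}$ established in the appendix, in the linear terms, which then reduce to $u_{i,j}^{0\prime}(\hat V_{v_j,t}^{(a)})^{-1}\tfrac1{N_a}\sum_{i^*\in I_a}\xi_{j,i^*t}^{0}$ and $v_{t,j}^{0\prime}\hat V_{u_j,i}^{-1}\tfrac1T\sum_{t^*=1}^T b_{j,it^*}^{0}$, neither of which depends on $b$, so the average over $b$ is trivial. Since $\tfrac1T\sum_{t^*} b_{j,it^*}^{0}=O_p(T^{-1/2})$ and $\tfrac1{N_a}\sum_{i^*\in I_a} \xi_{j,i^*t}^{0}=O_p(N^{-1/2})$ (both averages of conditionally mean-zero terms by Assumptions~\ref{ass:1}(i)--(ii)), the quadratic term and every product involving a remainder $\mathcal R_{i,u}^{j}$ or $\mathcal R_{t,v}^{j}$ are $o_p((N\vee T)^{-1/2})$; collecting them into $\mathcal R_{it}^{j}$ gives part~(i).

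For part~(ii) I would bound the two leading terms of part~(i) uniformly over $(i,t)$. Uniform boundedness of $u_{i,j}^{0}$, $v_{t,j}^{0}$, $(\hat V_{v_j,t}^{(a)})^{-1}$ and $\hat V_{u_j,i}^{-1}$ (from Assumptions~\ref{ass:2}, \ref{ass:3}, \ref{ass:5} and \ref{ass:1}(viii)) reduces the task to controlling $\max_t\|\tfrac1{N_a}\sum_{i^*\in I_a}\xi_{j,i^*t}^{0}\|$ and $\max_i\|\tfrac1T\sum_{t^*} b_{j,it^*}^{0}\|$. The first is an average of conditionally independent (across $i^*$, Assumption~\ref{ass:1}(i)), conditionally mean-zero vectors, so a Bernstein inequality plus a union bound over $t\in[T]$ yields $O_p(\sqrt{\log(N\vee T)/N})$; the second is a conditional strong-mixing average (Assumption~\ref{ass:1}(iii)), so a Bernstein-type inequality for mixing sequences plus a union bound over $i\in[N]$ yields $O_p(\sqrt{\log(N\vee T)/T})$. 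Since $\max_{i,t}|\mathcal R_{it}^{j}|=o_p((N\vee T)^{-1/2})$ is of smaller order, part~(ii) follows.

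For part~(iii), write $\hat\Theta_{j,it}-\Theta_{j,it}^{0}=A_{it}+B_{it}+\mathcal R_{it}^{j}$ with $A_{it}=u_{i,j}^{0\prime}(\hat V_{v_j,t}^{(a)})^{-1}\tfrac1{N_a}\sum_{i^*\in I_a}\xi_{j,i^*t}^{0}$ and $B_{it}=v_{t,j}^{0\prime}\hat V_{u_j,i}^{-1}\tfrac1T\sum_{t^*=1}^T b_{j,it^*}^{0}$. Replacing the sample weighting matrices by their expectations where needed (via Assumption~\ref{ass:10} and a conditional law of large numbers, at negligible cost), observe that, conditional on $\mathscr D$, the summands of $\sqrt{N_a}A_{it}$ are independent and the $i^*=i$ summand is only $O_p(N^{-1/2})$; a Lyapunov CLT applies, the Lyapunov ratio being $O(N^{-1/2})$ by the conditional third-moment bound in Assumption~\ref{ass:1}(iv), giving $\sqrt{N_a}A_{it}\rightsquigarrow\mathcal N(0, u_{i,j}^{0\prime}\Sigma_{v_j}^{(a)}u_{i,j}^{0})$ once the rotations are restored. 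A central limit theorem for conditional strong-mixing sequences (geometric mixing by Assumption~\ref{ass:1}(iii), moments by Assumption~\ref{ass:1}(iv)) gives $\sqrt T B_{it}\rightsquigarrow\mathcal N(0, v_{t,j}^{0\prime}\Sigma_{u_j,i}v_{t,j}^{0})$. Since $A_{it}$ is a cross-sectional average at time $t$ and $B_{it}$ a time-series average for individual $i$, sharing only the single $(i,t)$ observation, they are asymptotically independent, so $A_{it}+B_{it}$ is asymptotically $\mathcal N(0,\Xi_{j,it}^{0})$. Finally $\Xi_{j,it}^{0}\asymp(N\wedge T)^{-1}$, hence $(\Xi_{j,it}^{0})^{-1/2}\mathcal R_{it}^{j}=(N\wedge T)^{1/2}\,o_p((N\vee T)^{-1/2})=o_p(1)$, and Slutsky's theorem delivers part~(iii).

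The main obstacle is establishing the asymptotic independence of $A_{it}$ and $B_{it}$ together with the exact matching of their combined limiting variance with $\Xi_{j,it}^{0}$: one must condition so that, given the $\sigma$-field generating $B_{it}$, the term $A_{it}$ with its single shared $(i,t)$ summand removed is still a sum of conditionally independent mean-zero vectors with the stated variance, then verify that the covariance between the two pieces is $o((N\wedge T)^{-1})$ and that the rotation matrices $O_{u,j}^{(b)}$ and $O_j^{(b)}$ recombine precisely into the $\Sigma$'s appearing in $\Xi_{j,it}^{0}$. A secondary difficulty, already present in part~(ii), is that the uniform-in-$(i,t)$ arguments require exponential inequalities of two different types (conditional independence along the cross-section, conditional strong mixing along time), and one must check that the $\xi_N$-dependent bounds that these produce are absorbed by Assumption~\ref{ass:1}(ix).
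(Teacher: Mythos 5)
Your proposal is correct and takes essentially the same route as the paper: decompose $\hat u_{i,j}^{(a,b)\prime}\hat v_{t,j}^{(a,b)}$ into the leading term $u_{i,j}^{0\prime}v_{t,j}^0$, two linear cross terms, and a quadratic term, substitute the linear expansions from Theorem~\ref{Thm3}, absorb the quadratic and remainder products into $\mathcal{R}_{it}^j$, and bound the linear terms via Bernstein/Hoeffding and Bernstein-for-mixing inequalities. The only difference is that you spell out the asymptotic-independence argument in part (iii), which the paper leaves implicit by invoking Theorem~\ref{Thm3} directly; your elaboration (cross-covariance confined to the single shared $(i,t)$ summand, hence $O((NT)^{-1})=o((N\wedge T)^{-1})$) is consistent with the paper's implicit reasoning.
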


\textbf{Remark 4.} Proposition \ref{Pro4} establishes the distribution
theory for the slope estimators. Recall that we remove the principle
component from the independent variables $X_{j,it}$ which is the key point
in the debiasing step and why we don't have the distribution theory
result for the intercept estimates $\hat{\Theta}_{0,it}$ in the current
framework. However, once we have the distribution theory for the slope
estimates, we can follow \cite{chen2021quantile} and obtain a new estimator for  $\Theta_{0,it}^0$ 
from the smoothed quantile regression and establish its distribution theory. We leave this for the further research.

To make inference for $u_{i,j}^{0},$ $v_{t,j}^{0},$ and $\Theta _{j,it}^{0},$
one needs to estimate their asymptotic variances $\Sigma _{u_{j},i},$ $\Sigma
_{v_{j}}$ and $\Xi _{j,it}^{0}$ consistently. Let $k(\cdot )$ be a PDF-type
kernel function and $K(\cdot )$ be its survival function such that $\int
k(u)du=1$ and $K(u):=\int_{u}^{\infty }k(v)dv$. Let $h_{N}$ be the bandwidth
such that $h_{N}\rightarrow 0$ with $N\rightarrow \infty $. Define $%
K_{h_{N}}(\cdot )=K(\frac{\cdot }{h_{N}})$, $k_{h_{N}}(\cdot )=\frac{1}{h_{N}%
}k(\frac{\cdot }{h_{N}})$. Let $\hat{\epsilon}_{it}=Y_{it}-\hat{\Theta}%
_{0,it}-\sum_{j\in \lbrack p]}X_{j,it}\hat{\Theta}_{j,it}$, $\hat{\mathrm{v}}%
_{t,s,j}=\frac{1}{6}\sum_{a\in \lbrack 3]}\sum_{b\in \lbrack 3]\setminus
\{a\}}\hat{v}_{t,j}^{(a,b)}\hat{v}_{s,j}^{(a,b)\prime }$, and $\hat{\mathrm{u%
}}_{i,i,j}=\frac{1}{2}\sum_{a\in \lbrack 3]}\sum_{b\in \lbrack 3]\setminus
\{a\}}\hat{u}_{i,j}^{(a,b)}\hat{u}_{i,j}^{(a,b)\prime }\mathbf{1}_{ia}$.
Define 
\begin{align*}
\hat{\mathbb{V}}_{u_{j}}& =\frac{1}{NT}\sum_{i\in \lbrack N]}\sum_{t\in
\lbrack T]}k_{h_{N}}(\hat{\epsilon}_{it})\hat{e}_{j,it}^{2}\hat{\mathrm{v}}%
_{t,t,j},\quad \hat{\mathbb{V}}_{v_{j}}=\frac{1}{NT}\sum_{i\in \lbrack
N]}\sum_{t\in \lbrack T]}k_{h_{N}}(\hat{\epsilon}_{it})\hat{e}_{j,it}^{2}%
\hat{\mathrm{u}}_{i,i,j}, \\
\hat{\Omega}_{u_{j}}& =\frac{1}{NT}\sum_{i\in \lbrack N]}\left\{ \sum_{t\in
\lbrack T]}\tau (1-\tau )\hat{e}_{j,it}^{2}\hat{\mathrm{v}}%
_{t,t,j}+\sum_{t=1}^{T-T_{1}}\sum_{s=t+1}^{t+T_{1}}S_{j,its}+%
\sum_{t=1+T_{1}}^{T}\sum_{s=t-T_{1}}^{t-1}S_{j,its}\right\} , \\
\hat{\Omega}_{v_{j}}& =\frac{\tau (1-\tau )}{NT}\sum_{i\in \lbrack
N]}\sum_{t\in \lbrack T]}\hat{e}_{j,it}^{2}\hat{\mathrm{u}}_{i,i,j},\quad 
\hat{\Sigma}_{u_{j}}=\hat{\mathbb{V}}_{u_{j}}^{-1}\hat{\Omega}_{u_{j}}\hat{%
\mathbb{V}}_{u_{j}}^{-1},\quad \hat{\Sigma}_{v_{j}}=\hat{\mathbb{V}}%
_{v_{j}}^{-1}\hat{\Omega}_{v_{j}}\hat{\mathbb{V}}_{v_{j}}^{-1},
\end{align*}%
where $S_{j,its}=\hat{e}_{j,it}\hat{e}_{j,is}\hat{\mathrm{v}}_{t,s,j}\left[
\tau -K\left( \frac{\hat{\epsilon}_{it}}{h_{N}}\right) \right] \left[ \tau
-K\left( \frac{\hat{\epsilon}_{is}}{h_{N}}\right) \right] .$ We further
define 
\begin{equation*}
\hat{\Xi}_{j,it}=\frac{1}{2}\sum_{a\in \lbrack 3]}\sum_{b\in \lbrack
3]\setminus \{a\}}\left( \frac{1}{T}\hat{v}_{t,j}^{(a,b)\prime }\hat{\Sigma}%
_{u_{j}}\hat{v}_{t,j}^{(a,b)}+\frac{1}{N_{a}}\mathbf{1}_{ia}\hat{u}%
_{i,j}^{(a,b)\prime }\hat{\Sigma}_{v_{j}}\hat{u}_{i,j}^{(a,b)}\right) .
\end{equation*}%
Let $F_{i,ts}(\cdot ,\cdot )$ and $f_{i,ts}(\cdot ,\cdot )$ denote the joint
CDF and PDF of $(\epsilon _{it},\epsilon _{is})$ given $\mathscr{D}_{e},$
respectively. To justify the consistency of the variance estimators, we add
the following assumption.

\begin{ass}
\label{ass:11}

\begin{itemize}
\item[(i)] $\int_{-\infty}^{+\infty}k(u)du=1$, $\int_{-\infty}^{+%
\infty}k(u)u^{j}du=0$ for $j\in\{1,\cdots,m-1\}$ and $\int_{-\infty}^{+%
\infty}k(u)u^{m}du\ne0$ for $m\geq 1$.

\item[(ii)] $h_{N}\rightarrow 0$ and $\left(\frac{\log (N\vee T)}{N\wedge T}%
\right) ^{1/4}\frac{\xi_{N}^{2}}{h_{N}}\rightarrow 0$.

\item[(iii)] $T_{1}\rightarrow \infty $ and $\sqrt{\frac{\log (N\vee T)}{%
N\wedge T}}\frac{\xi_{N}^{2 }T_{1}}{h_{N}^{2}}\rightarrow 0$.

\item[(iv)] $f_{it}(c)$ is $m$ times continuously differentiable with
respect to $c$ and $f_{i,ts}(c_{1},c_{2})$ is $m$ times continuously
differentiable with respect to $(c_{1},c_{2})$.

\item[(v)] $\forall i\in \lbrack N]$, $V_{u_{j},i}=V_{u_{j}}$ and $\Omega
_{u_{j},i}=\Omega _{u_{j}}$.

\item[(vi)] $\forall a\in \lbrack 3]$, $V_{v_{j}}^{(a)}=\frac{1}{N}%
\operatornamewithlimits{\sum}\limits_{i\in \lbrack N]}\mathbb{E}\left[
f_{it}(0)e_{j,it}^{2}u_{i,j}^{0}u_{i,j}^{0\prime }\right] +o_{p}(1)$ and $%
\Omega _{v_{j}}^{(a)}=\frac{\tau \left( 1-\tau \right) }{N}%
\operatornamewithlimits{\sum}\limits_{i\in \lbrack N]}\mathbb{E}\left(
e_{j,it}^{2}u_{i,j}^{0}u_{i,j}^{0\prime }\right) +o_{p}(1)$.
\end{itemize}
\end{ass}

Assumption \ref{ass:11}(i)-(iv) are standard for consistent estimation of
the asymptotic variance matrix; see, e.g., \cite{chen2019two} and \cite%
{galvao2016smoothed}. Assumption \ref{ass:11}(v) imposes the homogeneity
moment condition across individuals, and Assumption \ref{ass:11}(vi) assumes
the moments calculated from subsamples are close to those from the full
sample given the random splitting. Under Assumption \ref{ass:11}, following
the idea of \cite{chen2019two}, we establish in Lemma \ref{Lem:covhat} of
the online supplement the consistency of $\hat{\Sigma}_{u_{j}}$ and $\hat{%
\Sigma}_{v_{j}}$. Similar conclusions hold for the other estimates.

%%%%%%%%%%%%%%%%%%%%%%%%%%%%%%%%%%%%%%%%%%%%%%%%%%%%%%%%%%%%%%%%%%%%%%%%%%%%%%%%%%%%%%%%%%%%%%%%%%%%%%%%%%%%%%%%%%%%%%%%%%%%%%%%%%%%%%%%%%%%%%%%%%%%%%%%%%%%%%%Specification Tests%%%%%%%%%%%%%%%%%%%%%%%%%%%%%%%%%%%
%%%%%%%%%%%%%%%%%%%%%%%%%%%%%%%%%%%%%%%%%%%%%%%%%%%%%%%%%%%%%%%%%%%%%%%%%%%%%%%%%%%%%%%%%%%%%%%%%%%%%%%%%%%%%%%%%%%%%%%%%%%%%%%%%%%%%%%%%%%%%%%%%%%%%%%%%%%%%%%%%%%%%%%%%%%%%%%%%%%%%%%%%%%%%%%%%%%%%%%%%%%%%%%%%%%%%

\section{Specification Tests}

In this section, we consider two specification tests under different rank
conditions.

\subsection{Testing for Homogeneity across Individuals or Time}

When $K_{j}=1$ for some $j\in [p]$, it is interesting to test whether the
matrix $\Theta_{j}^{0}$ is homogeneous across individuals (i.e., row-wise) or
across time (i.e., column-wise). For these two cases, we can write factors and
factor loadings as
\begin{equation*}
u_{i,j}^{0}=u_{j}+c_{i,j}^{u}\quad \text{and}\quad v_{t,j}^{0}=v_{j}+c_{t,j}^{v}, \quad \text{respectively}, 
\end{equation*}%
where $u_{j}=\frac{1}{N}\sum_{i=1}^{N}u_{i,j}^{0}$ and $v_{j}=\frac{1}{T}%
\sum_{t=1}^{T}v_{t,j}^{0}.$ For the homogeneity across individuals, the null
and alternative hypotheses can be written as 
\begin{equation}
H_{0}^{I}:c_{i,j}^{u}=0\quad \forall i\in [N]\quad v.s.\quad
H_{1}^{I}:c_{i,j}^{u}\neq 0\text{ for some }i\in [N].  \label{Hypo1}
\end{equation}%
Similarly, for the homogeneity across time, the null and alternative
hypotheses can be written as 
\begin{equation}
H_{0}^{II}:c_{t,j}^{v}=0\quad \forall t\in[T]\quad v.s.\quad
H_{1}^{II}:c_{t,j}^{v}\neq 0\text{ for some }t\in[T].  \label{Hypo2}
\end{equation}%
Note that we aim to test the two null hypotheses separately. That is, we can
test for homogeneous slope across individuals while allowing for
heterogeneous slopes across time and vice versa. This is different from the
majority of the literature which either tests for slope homogeneity across
individuals while assuming the slopes are homogeneous across time or tests
for structural breaks across time while assuming the slopes are homogeneous
across individuals.

% and let $H_{1}^{I}$, $H_{1}^{II}$ being alternatives for $H_{0}^{I}$ and $H_{0}^{II}$, respectively.

% For $H_{0}^{I}$, ordering $\left\{\hat{u}_{i,j}^{(3,1)}\right\}_{i\in I_{1}}$, $\left\{\hat{u}_{i,j}^{(3)}\right\}_{i\in I_{2}}$ and $\left\{\hat{u}_{i,j}^{(5)}\right\}_{i\in I_{1}}$, and then combine them to be the singular vector matrix $\hat{U}_{j}^{I}:=\left(\hat{u}_{1,j}^{I},\cdots,\hat{u}_{N,j}^{I} \right)^{\prime}\in\mathbb{R}^{N\times K_{j}}$. Analogously, ordering $\left\{\hat{u}_{i,j}^{(2)}\right\}_{i\in I_{3}}$, $\left\{\hat{u}_{i,j}^{(4)}\right\}_{i\in I_{2}}$ and $\left\{\hat{u}_{i,j}^{(6)}\right\}_{i\in I_{1}}$, and then combine them to be the singular vector matrix $\hat{U}_{j}^{II}:=\left(\hat{u}_{1,j}^{II},\cdots,\hat{u}_{N,j}^{II} \right)^{\prime}\in\mathbb{R}^{N\times K_{j}}$.
We first consider testing $H_{0}^{I}$. Following the lead of \cite%
{castagnetti2015inference}, we define\footnote{%
Alternatively, we can also define $S_{u_{j}}^{o}=\max \left(
S_{u_{j}}^{(3,2)},S_{u_{j}}^{(2,1)},S_{u_{j}}^{(1,3)}\right) .$ It is easy
to show that this statistic shares the same asymptotic null distribution as $%
S_{u_{j}}.$ But due to the unknown dependence structure between the two, we
cannot take the maximum or the other continuous function of $S_{u_{j}}$ and $%
S_{u_{j}}^{o}$ as a new test statistic.} 
\begin{align}
& S_{u_{j}}^{(a,b)}=\max_{i\in I_{a}}T(\hat{u}_{i,j}^{(a,b)}-\hat{\bar{u}}%
_{j}^{(a,b)})^{\prime }\hat{\Sigma}_{u_{j}}^{-1}(\hat{u}_{i,j}^{(a,b)}-\hat{%
\bar{u}}_{j}^{(a,b)})\quad \text{and}  \notag \\
& S_{u_{j}}=\max \left(
S_{u_{j}}^{(3,1)},S_{u_{j}}^{(2,3)},S_{u_{j}}^{(1,2)}\right) ,
\label{statistic for u}
\end{align}%
where $\hat{\bar{u}}_{j}^{(a,b)}=\frac{1}{N_{a}}\sum_{i\in I_{a}}\hat{u}%
_{i,j}^{(a,b)}$. Similarly, to test for $H_{0}^{II}$, we construct 
\begin{equation*}
S_{v_{j}}=\max \left( \tilde{S}_{v_{j}}^{(3,1)},\tilde{S}_{v_{j}}^{(2,3)},%
\tilde{S}_{v_{j}}^{(1,3)}\right) ,
\end{equation*}%
where 
\begin{equation*}
S_{v_{j}}^{(a,b)}=\max_{t\in \lbrack T]}N(\hat{v}_{t,j}^{(a,b)}-\hat{\bar{v}}%
_{j}^{(a,b)})^{\prime }\hat{\Sigma}_{v_{j}}^{-1}(\hat{v}_{t,j}^{(a,b)}-\hat{%
\bar{v}}_{j}^{(a,b)}),\quad  \tilde{S}_{v_{j}}^{(a,b)}=\frac{1%
}{2}S_{v_{j}}^{(a,b)}-\mathsf{b}(T),
\end{equation*}
$\hat{\bar{v}}_{j}^{(a,b)}=\frac{1}{T}\sum_{t=1}^{T}\hat{v}%
_{t,j}^{(a,b)},$ and $\mathsf{b}(n)=\log n-\frac{1}{2}\log \log n-\log
\Gamma (\frac{1}{2})$ for $n\in \left\{ N,T,NT\right\} .$

% Then we define test statistic under $H_{0}^{I}$ being $$, and test statistic under $H_{0}^{II}$ being $$.

To proceed, we introduce some notation. Recall that $%
b_{j,it}^{0}=e_{j,it}v_{t,j}^{0}(\tau -\mathbf{1}\left\{ \epsilon_{it}\leq
0\right\} )$ and $\xi_{j,it}^{0}=e_{j,it}u_{i,j}^{0}\left(\tau -\mathbf{1}%
\left\{ \epsilon_{it}\leq 0\right\} \right) $. Define 
\begin{equation*}
\mathfrak{b}_{j,it}^{(1)}=\hat{V}_{u_{j}}^{-1}b_{j,it}^{0},\quad \mathfrak{b}%
_{j,it}^{(2)}=\left(\hat{V}_{v_{j},t}^{(3)}\right) ^{-1}\xi_{j,it}^{0},\text{
}\mathfrak{b}_{j,it}^{(3)}=\left(\hat{V}_{v_{j},t}^{(2)}\right)^{-1}%
\xi_{j,it}^{0},\text{ and }\mathfrak{b}_{j,it}^{(4)}=\left(\hat{V}%
_{v_{j},t}^{(1)}\right) ^{-1}\xi_{j,it}^{0}.
\end{equation*}%
Let $\mathfrak{B}_{j,t}^{(\ell )}=\left(\mathfrak{b}_{j,1t}^{(\ell
)\prime},\cdots ,\mathfrak{b}_{j,Nt}^{(\ell )\prime }\right) ^{\prime}$ for $%
\ell \in \left[ 4\right] .$ Define 
\begin{align*}
& \Sigma_{\mathfrak{B},j}^{(1)}=\frac{1}{T}\sum_{t=1}^{T}\sum_{s=1}^{T}%
\mathbb{E}\left(\mathfrak{B}_{j,t}^{(1)}\mathfrak{B}_{j,s}^{(1)\prime}%
\right) ,\quad \Sigma_{\mathfrak{B},j}^{(2)}=\frac{1}{N_{3}}\sum_{i\in I_{3}}%
\mathbb{E}\left(\mathfrak{B}_{j,i}^{(2)}\mathfrak{B}_{j,i}^{(2)\prime}%
\right) , \\
& \Sigma_{\mathfrak{B},j}^{(3)}=\frac{1}{N_{2}}\sum_{i\in I_{2}}\mathbb{E}%
\left(\mathfrak{B}_{j,i}^{(3)}\mathfrak{B}_{j,i}^{(3)\prime }\right) ,\quad 
\text{and }\Sigma_{\mathfrak{B},j}^{(4)}=\frac{1}{N_{1}}\sum_{i\in I_{1}}%
\mathbb{E}\left(\mathfrak{B}_{j,i}^{(4)}\mathfrak{B}_{j,i}^{(4)\prime}%
\right) .
\end{align*}%
We add the following two assumptions. \bigskip

\begin{ass}
\label{ass:12} $\forall j\in \lbrack p]$, we assume 
\begin{equation*}
\bar{\lambda}\geq \lambda _{\max }\left( \Sigma _{u_{j}}\right) \geq \lambda
_{\min }\left( \Sigma _{u_{j}}\right) \geq \underline{\lambda }>0,\quad \bar{%
\lambda}\geq \lambda _{\max }\left( \Sigma _{v_{j}}\right) \geq \lambda
_{\min }\left( \Sigma _{v_{j}}\right) \geq \underline{\lambda }>0.
\end{equation*}
\end{ass}

\begin{ass}
\label{ass:13}
\begin{itemize}
\item[(i)] There exists a high dimensional Gaussian vector $\mathbb{Z}_{%
\mathfrak{B}}^{(1)}\sim N\left(0,\Sigma_{\mathfrak{B},j}^{(1)}\right) $ such
that $\left\Vert \frac{1}{\sqrt{T}}\sum_{t=1}^{T}\mathfrak{B}_{j,t}^{(1)}-%
\mathbb{Z}_{\mathfrak{B}}^{(1)}\right\Vert_{\max }=o_{p}(1).$

\item[(ii)] There exists high dimensional Gaussian vectors $\mathbb{Z}_{%
\mathfrak{B}}^{(\ell )}\sim N\left(0,\Sigma_{\mathfrak{B},j}^{(\ell)}\right) 
$ for $\ell =2,3,4$ such that $\left(\mathbb{Z}_{%
\mathfrak{B}}^{(2 )}, \mathbb{Z}_{%
\mathfrak{B}}^{(3 )}, \mathbb{Z}_{%
\mathfrak{B}}^{(4)}\right)$ are independent, 
\begin{align*}
& \left\Vert \frac{1}{\sqrt{N_{3}}}\sum_{i\in I_{3}}\mathfrak{B}_{j,i}^{(2)}-%
\mathbb{Z}_{\mathfrak{B}}^{(2)}\right\Vert_{\max }=o_{p}(1),\quad\left\Vert 
\frac{1}{\sqrt{N_{2}}}\sum_{i\in I_{2}}\mathfrak{B}_{j,i}^{(3)}-\mathbb{Z}_{%
\mathfrak{B}}^{(3)}\right\Vert_{\max }=o_{p}(1),\text{ and} \\
& \left\Vert \frac{1}{\sqrt{N_{1}}}\sum_{i\in I_{1}}\mathfrak{B}_{j,i}^{(4)}-%
\mathbb{Z}_{\mathfrak{B}}^{(4)}\right\Vert_{\max }=o_{p}(1).
\end{align*}
\end{itemize}
\end{ass}

Assumption \ref{ass:12} implies that both $\Sigma _{u_{j}}$ and $%
\Sigma_{v_{j}}$ are well behaved. Assumption \ref{ass:13} imposes that we
can approximate high dimensional vectors $\frac{1}{\sqrt{T}}\sum_{t\in[T]}%
\mathfrak{B}_{j,t}^{(1)}$, $\frac{1}{\sqrt{N_{3}}}\sum_{i\in I_{3}}\mathfrak{%
B}_{j,i}^{(2)}$, $\frac{1}{\sqrt{N_{2}}}\sum_{i\in I_{2}}\mathfrak{B}%
_{j,i}^{(3)}$ and $\frac{1}{\sqrt{N_{1}}}\sum_{i\in I_{1}}\mathfrak{B}%
_{j,i}^{(4)}$ by four Gaussian vectors. Similar conditions have been imposed
in the literature; see, e.g., Assumption SA3 \cite{lu2021uniform}.

The following theorem reports the asymptotic properties of $S_{u_{j}}$ and $%
S_{v_{j}}$ under the respective null and alternative hypotheses.

\begin{theorem}
\label{Thm5} Suppose that Assumptions \ref{ass:1}-\ref{ass:13} and
Assumptions \ref{ass:15} in the online supplement hold and $(N,T)\rightarrow \infty $. Then

\begin{itemize}
\item[(i)] Under $H_{0}^{I}$, we have $\mathbb{P}\left(\frac{1}{2}%
S_{u_{j}}\leq x+\mathsf{b}(N)\right) \rightarrow e^{-e^{-x}}$; and under $%
H_{0}^{II}$, we have $\mathbb{P}\left(S_{v_{j}}\leq x\right) \rightarrow
e^{-3e^{-x}}.$

\item[(ii)] Under $H_{1}^{I}$, if $\frac{T}{\log N}\max_{i\in [N]}\left\Vert
c_{i,j}^{u}\right\Vert_{2}^{2}\rightarrow \infty $, we have $\mathbb{P}%
\left(S_{u_{j}}>c_{\alpha ,1\cdot N}\right) \rightarrow 1$ with $c_{\alpha
,1\cdot N}=2\mathsf{b}(N)-\log \left\vert \log \left(1-\alpha \right)
\right\vert ^{2}$ and $\alpha $ is the significance level. Under $H_{1}^{II}$%
, if $\frac{N}{\log T}\max_{t\in[T]}\left\Vert
c_{t,j}^{v}\right\Vert_{2}^{2}\rightarrow \infty $, we have $\mathbb{P}%
\left(S_{v_{j}}>c_{\alpha,2}\right)\to 1$ with $c_{\alpha,2}=-\log \left(-%
\frac{1}{3}\log \left(1-\alpha\right)\right) $.
\end{itemize}
\end{theorem}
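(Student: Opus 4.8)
The plan is to reduce each subsample-specific statistic $S_{u_j}^{(a,b)}$ (resp.\ $S_{v_j}^{(a,b)}$) to a maximum of asymptotically i.i.d.\ $\chi_1^2$ random variables, apply the classical Gumbel extreme-value limit, and then combine the three pieces forming $S_{u_j}$ (resp.\ $S_{v_j}$) through their asymptotic independence across the disjoint subsamples $I_1,I_2,I_3$. Throughout, $K_j=1$, so $u_{i,j}^0,v_{t,j}^0$ are scalars and the rotations $O_j^{(b)},O_{u,j}^{(b)}$ are signs that cancel in every quadratic form.

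The first step is a uniform linear-and-studentized representation. Starting from the expansions of $\hat u_{i,j}^{(a,b)},\hat v_{t,j}^{(a,b)}$ displayed after Theorem~\ref{Thm3} (with uniform remainders $o_p((N\vee T)^{-1/2})$) and using that under $H_0^{I}$, where $c_{i,j}^u\equiv 0$, the centering $\hat{\bar{u}}_j^{(a,b)}$ concentrates at its probability limit at rate $(NT)^{-1/2}$, one gets, uniformly over $i\in I_a$,
\begin{equation*}
\sqrt T\left(\hat u_{i,j}^{(a,b)}-\hat{\bar{u}}_j^{(a,b)}\right)=O_j^{(b)}\hat V_{u_j}^{-1}\frac{1}{\sqrt T}\sum_{t=1}^{T}b_{j,it}^{0}+o_p(1),
\end{equation*}
and, symmetrically under $H_0^{II}$, $\sqrt{N_a}(\hat v_{t,j}^{(a,b)}-\hat{\bar{v}}_j^{(a,b)})=O_j^{(b)}(\hat V_{v_j,t}^{(a)})^{-1}N_a^{-1/2}\sum_{i\in I_a}\xi_{j,it}^{0}+o_p(1)$ uniformly over $t\in[T]$. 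With the consistency of $\hat\Sigma_{u_j},\hat\Sigma_{v_j}$ (proved in the online supplement under Assumption~\ref{ass:11}) and Assumption~\ref{ass:12}, this makes $S_{u_j}^{(a,b)}$ equal $\max_{i\in I_a}\zeta_{i}^{2}+o_p(1)$ with $\zeta_{i}=(\hat\Sigma_{u_j})^{-1/2}O_j^{(b)}\hat V_{u_j}^{-1}T^{-1/2}\sum_t b_{j,it}^{0}$ a self-normalized scalar, and similarly for $S_{v_j}^{(a,b)}$.

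The second step applies Assumption~\ref{ass:13}(i)--(ii) to replace $T^{-1/2}\sum_t\mathfrak B_{j,t}^{(1)}$ (resp.\ $N_a^{-1/2}\sum_{i\in I_a}\mathfrak B_{j,i}^{(\ell)}$) by the Gaussian vectors $\mathbb Z_{\mathfrak B}^{(\ell)}$ in max-norm. The crucial observation is that the limiting covariances are block-diagonal: the $i\neq i'$ block of $\Sigma_{\mathfrak B,j}^{(1)}$ vanishes because $\mathbb E(b_{j,it}^{0}\mid\mathscr D)=v_{t,j}^{0}\,\mathbb E[e_{j,it}\,\mathbb E(a_{it}\mid\mathscr D_e)\mid\mathscr D]=0$ by Assumption~\ref{ass:1}(i)--(ii), while the $t\neq s$ block of $\Sigma_{\mathfrak B,j}^{(2)}$ vanishes because $\{a_{it}\}_t$ is a conditional martingale difference sequence; Assumption~\ref{ass:11}(v)--(vi) makes the diagonal blocks identical. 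Hence $S_{u_j}^{(a,b)}$ is asymptotically the maximum of $N_a\approx N/3$ i.i.d.\ $\chi_1^2$'s and $S_{v_j}^{(a,b)}$ the maximum of $T$ i.i.d.\ $\chi_1^2$'s, and the classical limit $\mathbb P(\tfrac12\max_{m\le n}\chi_{1,m}^{2}\le\mathsf b(n)+x)\to e^{-e^{-x}}$ applies with $n=N_a$ and $n=T$. With $\mathsf b(N/3)=\mathsf b(N)-\log 3+o(1)$ one gets $\mathbb P(\tfrac12 S_{u_j}^{(a,b)}-\mathsf b(N)\le x)\to e^{-e^{-x}/3}$, and since $S_{u_j}^{(3,1)},S_{u_j}^{(2,3)},S_{u_j}^{(1,2)}$ are maxima over the disjoint $I_3,I_2,I_1$ they are asymptotically independent given $\mathscr D$, so $\mathbb P(\tfrac12 S_{u_j}\le\mathsf b(N)+x)\to(e^{-e^{-x}/3})^{3}=e^{-e^{-x}}$; for $S_{v_j}$, each $\tilde S_{v_j}^{(a,b)}=\tfrac12 S_{v_j}^{(a,b)}-\mathsf b(T)$ converges to a standard Gumbel, the three are asymptotically independent by Assumption~\ref{ass:13}(ii), and $\mathbb P(S_{v_j}\le x)\to(e^{-e^{-x}})^{3}=e^{-3e^{-x}}$. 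For part (ii): under $H_1^{I}$ take $i^{\ast}\in\argmax_{i\in[N]}\|c_{i,j}^u\|_2^2$, lying in some $I_a$; the linear expansion shows the deterministic part of $\hat u_{i^{\ast},j}^{(a,b)}-\hat{\bar{u}}_j^{(a,b)}$ is of order $\max_i\|c_{i,j}^u\|_2$ (after checking that $I_a$ contains, w.p.a.1., an individual with a well-separated loading), which dominates the $O_p(T^{-1/2})$ noise when $\tfrac{T}{\log N}\max_i\|c_{i,j}^u\|_2^2\to\infty$, so $S_{u_j}\gtrsim T\max_i\|c_{i,j}^u\|_2^2\gg\log N\asymp c_{\alpha,1\cdot N}$ w.p.a.1.; the case $H_1^{II}$ is symmetric, $c_{\alpha,2}$ being a fixed constant while $S_{v_j}\convP\infty$.

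\textbf{The main obstacle} is to make the reduction of the last paragraph rigorous \emph{jointly} over the three subsample combinations: (a) verifying the block-diagonality of the Gaussian approximating covariances so that the within-statistic summands are genuinely i.i.d.\ $\chi_1^2$---this uses Assumption~\ref{ass:1}(i)--(ii) and the conditional martingale-difference structure and requires first replacing the estimated sandwich matrices $\hat V_{u_j},\hat V_{v_j,t}^{(a)},\hat\Sigma_{u_j},\hat\Sigma_{v_j}$ by their probability limits; (b) justifying the asymptotic independence of the three pieces via the conditional-independence-across-$i$ in Assumption~\ref{ass:1}(i) and the built-in independence in Assumption~\ref{ass:13}(ii), even though $\hat\Sigma_{u_j},\hat\Sigma_{v_j}$ are common to all three; and (c) the exact bookkeeping of the centering sequences---$\mathsf b(N/3)$ versus $\mathsf b(N)$---which is what makes the three $e^{-e^{-x}/3}$-type limits multiply to the $e^{-e^{-x}}$ in $S_{u_j}$ while the three standard Gumbels multiply to the $e^{-3e^{-x}}$ in $S_{v_j}$.
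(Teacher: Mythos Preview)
Your treatment of $S_{u_j}$ is correct and equivalent to the paper's, though the paper takes a more direct route: since $S_{u_j}^{(3,1)},S_{u_j}^{(2,3)},S_{u_j}^{(1,2)}$ are maxima over the disjoint $I_3,I_2,I_1$, the paper simply writes $S_{u_j}$ as a maximum over all $i\in[N]$ of asymptotically i.i.d.\ $\chi_1^2$'s and applies the Gumbel limit with normalizer $\mathsf b(N)$ in one step, without your $\mathsf b(N/3)$-to-$\mathsf b(N)$ bookkeeping. Your argument for part~(ii) also matches the paper's in substance.

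Your treatment of $S_{v_j}$, however, contains a genuine gap. You claim the $t\neq s$ block of $\Sigma_{\mathfrak B,j}^{(2)}$ vanishes because ``$\{a_{it}\}_t$ is a conditional martingale-difference sequence''. This is not assumed anywhere: Assumption~\ref{ass:1}(ii) gives only the \emph{static} moment $\mathbb E(a_{it}\mid\mathscr D_e)=0$, not an MDS condition along $t$, and Assumption~\ref{ass:1}(iii) gives only conditional strong mixing of $\{\epsilon_{it}\}_t$. Consequently $\mathbb E[\xi_{j,it}^{0}\xi_{j,is}^{0\prime}]\neq 0$ in general for $t\neq s$ (think of the AR(1) errors in DGPs~3,~4,~6), so the $T$ quadratic forms building $S_{v_j}^{(a,b)}$ are \emph{weakly dependent}, not i.i.d., $\chi_1^2$'s. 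The classical Gumbel limit for i.i.d.\ maxima does not apply as you invoke it.

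The paper closes this gap by appealing to the extremal theory for weakly dependent Gaussian sequences: Davydov's inequality plus the geometric mixing rate in Assumption~\ref{ass:1}(iii) yield $\|\mathrm{Cov}(\mathfrak b_{j,it}^{(2)},\mathfrak b_{j,is}^{(2)})\|_\infty\log|t-s|\to 0$, and then Theorem~3.5.1 of \cite{leadbetter1988extremal} delivers the same Gumbel limit for $\max_{t\in[T]}\mathbb Z_{\mathfrak B,t}^{(2)\prime}(O_j^{(b)\prime}\Sigma_{v_j}O_j^{(b)})^{-1}\mathbb Z_{\mathfrak B,t}^{(2)}$ as in the i.i.d.\ case. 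Once this is in place, your remaining step---three asymptotically independent Gumbels by Assumption~\ref{ass:13}(ii), hence the $e^{-3e^{-x}}$ limit---goes through unchanged.
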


\textbf{Remark 5.} Theorem \ref{Thm5} implies that our test statistics
follow the Gumbel distributions asymptotically under the null, are
consistent under the global alternatives, and have non-trivial power against
the local alternatives. The power function of $S_{u_{j}}$ approaches 1 as
long as $\frac{T}{\log N}\max_{i\in [N]}\left\Vert
c_{i,j}^{u}\right\Vert_{2}^{2}$ diverges to infinity as $\left(N,T\right)
\rightarrow \infty .$

% For instance, if $\frac{T}{\log N}\max_{i\in
% [N]}\left\Vert c_{i,j}^{u}\right\Vert_{2}^{2}\asymp 1,$ $S_{u_{j}}$ has
% non-trivial local power even though it is not easy to present the explicit
% local power function; and 

\subsection{Test for an Additive Structure}

When $K_{j}=2$ for some $j\in [p]$, it is interesting to test whether $%
\Theta_{j,it}^{0}$ exhibits the additive structure which is widely assumed in a
two-way fixed effects model. That is, one may test the following null
hypothesis 
\begin{equation}
H_{0}^{III}:\Theta_{j,it}^{0}=\lambda_{j,i}+f_{j,t},~ \forall
\left(i,t\right) \in [N]\times [T],  \label{Hypo3}
\end{equation}%
The alternative hypothesis $H_{1}^{III}$ is the negation of $H_{0}^{III}.$

Let $\bar{\Theta}_{j,i\cdot }=\frac{1}{T}\sum_{t\in \lbrack T]}\Theta
_{j,it}^{0}$, $\bar{\Theta}_{j,\cdot t}^{I_{a}}=\frac{1}{N_{a}}\sum_{i\in
I_{a}}\Theta _{j,it}^{0}$, and $\bar{\Theta}_{j}^{I_{a}}=\frac{1}{N_{a}T}%
\sum_{i\in I_{a}}\sum_{t\in \lbrack T]}\Theta _{j,it}^{0}$ for $a\in \lbrack
3]$. Define 
\begin{equation*}
\Theta _{j,it}^{\ast }=\Theta _{j,it}^{0}-\bar{\Theta}_{j,i\cdot }-\bar{%
\Theta}_{j,\cdot t}^{I_{a}}+\bar{\Theta}_{j}^{I_{a}},\quad \forall i\in
I_{a},t\in \lbrack T],j\in \lbrack p].
\end{equation*}%
Note that $\Theta _{j,it}^{\ast }=0$ $\forall \left( i,t\right) \in \lbrack
N]\times \lbrack T]$ under $H_{0}^{III}.$ So we can propose a test for $%
H_{0}^{III}$ based on estimates of $\Theta _{j,it}^{\ast }.$ Define 
\begin{equation*}
\hat{\bar{\Theta}}_{j,i\cdot }=\frac{1}{T}\sum_{t\in \lbrack T]}\hat{\Theta}%
_{j,it},\text{ }\hat{\bar{\Theta}}_{j,\cdot t}^{I_{a}}=\frac{1}{N_{a}}%
\sum_{i\in I_{a}}\hat{\Theta}_{j,it},\text{ and }\hat{\bar{\Theta}}%
_{j}^{I_{a}}=\frac{1}{N_{a}T}\sum_{i\in I_{a}}\sum_{t\in \lbrack T]}\hat{%
\Theta}_{j,it}
\end{equation*}%
for $a\in \lbrack 3]$. Then, we can define the sample analogue of $\hat{%
\Theta}_{j,it}^{\ast }$ as 
\begin{equation*}
\hat{\Theta}_{j,it}^{\ast a}=\hat{\Theta}_{j,it}-\hat{\bar{\Theta}}%
_{j,i\cdot }-\hat{\bar{\Theta}}_{j,\cdot t}^{I_{a}}+\hat{\bar{\Theta}}%
_{j}^{I_{a}},\quad \forall i\in I_{a},t\in \lbrack T],j\in \lbrack p].
\end{equation*}%
Its corresponding asymptotic variance can be estimated by $\hat{\Sigma}%
_{j,it}^{\ast }$ defined as
\begin{align*}
\hat{\Sigma}_{j,it}^{\ast }& =\frac{1}{2}\sum_{a\in \lbrack 3]}\sum_{b\in
\lbrack 3]\setminus \{a\}}\frac{1}{N_{a}}\left( \hat{u}_{i,j}^{(a,b)}-\hat{%
\bar{u}}_{j}^{(a,b)}\right) ^{\prime }\hat{\Sigma}_{v_{j}}\left( \hat{u}%
_{i,j}^{(a,b)}-\bar{u}_{j}^{(a,b)}\right) \mathbf{1}_{ia} \\
& +\frac{1}{6}\sum_{a\in \lbrack 3]}\sum_{b\in \lbrack 3]\setminus \{a\}}%
\frac{1}{T}\left( \hat{v}_{t,j}^{(a,b)}-\hat{\bar{v}}_{j}^{(a,b)}\right)
^{\prime }\hat{\Sigma}_{u_{j}}\left( \hat{v}_{t,j}^{(a,b)}-\hat{\bar{v}}%
_{j}^{(a,b)}\right) ,
\end{align*}%
where $\hat{\bar{u}}_{j}^{(a,b)}=\frac{1}{N_{a}}\sum_{i\in I_{a}}\hat{u}%
_{i,j}^{(a,b)}$ and $\hat{\bar{v}}_{j}^{(a,b)}=\frac{1}{T}\sum_{t\in \lbrack
T]}\hat{v}_{t,j}^{(a,b)}$. Then, the final test statistic is 
$$S_{NT}=\max_{i\in \lbrack N],t\in \lbrack T]}\left( \hat{\Theta}%
_{j,it}^{\ast }\right) ^{2}/\hat{\Sigma}_{j,it}^{\ast }.$$

The following theorem studies the asymptotic properties of $S_{NT}$ under
the null and alternatives.

\begin{theorem}
\label{Thm6} Suppose Assumptions \ref{ass:1}-\ref{ass:15} hold and $(N,T)\rightarrow \infty $. Under $H_{0}^{III}$, 
\begin{equation*}
\mathbb{P}\left(\frac{1}{2}S_{NT}\leq x+\mathsf{b}(NT)\right) \rightarrow
e^{-e^{-x}};
\end{equation*}%
under $H_{1}^{III}$, if $\frac{N\wedge T}{\log NT}\max_{i\in [N],t\in[T]%
}\left\vert \Theta_{j,it}^{\ast }\right\vert ^{2}\rightarrow\infty $, then
we have $\mathbb{P}\left(S_{NT}>c_{\alpha ,3\cdot NT}\right) \rightarrow 1$
with $c_{\alpha ,3\cdot NT}=2\mathsf{b}(NT)-\log \left\vert \log
\left(1-\alpha \right) \right\vert ^{2}$.
\end{theorem}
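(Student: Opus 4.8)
The plan is to obtain, under $H_0^{III}$, a uniform linear representation of $\hat{\Theta}_{j,it}^{\ast a}$ over all $NT$ cells, standardize it by $\hat{\Sigma}_{j,it}^{\ast}$, couple the resulting array with a Gaussian one, and read off the Gumbel limit from extreme-value theory; the power claim then follows from a direct lower bound on $S_{NT}$. First note that under $H_0^{III}$ we have $\Theta_{j,it}^{\ast}=0$, so $\hat{\Theta}_{j,it}^{\ast a}$ is exactly the estimation error $\hat{\Theta}_{j,it}-\Theta_{j,it}^{0}$ after demeaning over $t\in[T]$ and over $i\in I_a$. Substituting the uniform expansion of Proposition \ref{Pro4}(i) for the relevant subsample, the ``row'' component $v_{t,j}^{0\prime}\hat{V}_{u_j}^{-1}\frac{1}{T}\sum_{t^{\ast}}b_{j,it^{\ast}}^{0}$ is a linear form in $v_{t,j}^{0}$ with an $i$-dependent coefficient $\beta_i$, and the ``column'' component $\sum_a u_{i,j}^{0\prime}(\hat{V}_{v_j,t}^{(a)})^{-1}\frac{1}{N_a}\sum_{i^{\ast}\in I_a}\xi_{j,i^{\ast}t}^{0}$ is a linear form in $u_{i,j}^{0}$ with a $t$-dependent coefficient $\gamma_t$; double-demeaning therefore yields
\[
\hat{\Theta}_{j,it}^{\ast a}=(u_{i,j}^{0}-\bar{u}_{j}^{I_a})^{\prime}(\gamma_{t}-\bar{\gamma})+(v_{t,j}^{0}-\bar{v}_{j})^{\prime}(\beta_{i}-\bar{\beta}^{I_a})+\mathcal{R}_{it}^{\ast}=:L_{it}+\mathcal{R}_{it}^{\ast},
\]
where $L_{it}$ is mean zero given $\mathscr{D}_{e}$ by Assumption \ref{ass:1}(ii). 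The crucial input is the uniform bound $\max_{i\in[N],t\in[T]}|\mathcal{R}_{it}^{\ast}|=o_p(((N\wedge T)\log(N\vee T))^{-1/2})$, which is strictly stronger than the pointwise $o_p((N\vee T)^{-1/2})$ remainder of Proposition \ref{Pro4} and uses the regularity condition in Assumption \ref{ass:15}; I expect establishing this sharpened rate to be the main obstacle of the proof.

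Next, let $\Sigma_{j,it}^{\ast}=\frac{1}{T}(v_{t,j}^{0}-\bar{v}_{j})^{\prime}\Sigma_{u_j}(v_{t,j}^{0}-\bar{v}_{j})+\frac{1}{N_a}(u_{i,j}^{0}-\bar{u}_{j}^{I_a})^{\prime}\Sigma_{v_j}(u_{i,j}^{0}-\bar{u}_{j}^{I_a})$ be the conditional variance of $L_{it}$; the cross-term between the two components involves only cell $(i,t)$ and is of order $(N_aT)^{-1}$, hence negligible. Using the uniform consistency of the factor and loading estimates (Theorems \ref{Thm2}--\ref{Thm3}) and of $\hat{\Sigma}_{u_j},\hat{\Sigma}_{v_j}$ (Lemma \ref{Lem:covhat}), with the rotation matrices cancelling, one shows $\max_{i,t}|\hat{\Sigma}_{j,it}^{\ast}/\Sigma_{j,it}^{\ast}-1|\convP 0$, while $\Sigma_{j,it}^{\ast}\asymp(N\wedge T)^{-1}$ uniformly by Assumptions \ref{ass:2} and \ref{ass:12} (and the rank condition $K_j=2$). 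Combining with the previous step, $S_{NT}=\max_{i,t}L_{it}^{2}/\Sigma_{j,it}^{\ast}+o_p(1)=:\max_{i,t}\mathcal{Z}_{it}^{2}+o_p(1)$, where $\{\mathcal{Z}_{it}\}$ is a standardized, mean-zero, unit-variance array whose dependence is two-way — a shared $\gamma_t$ across $i$ within a subsample, a shared $\beta_i$ across $t$ — and is conditionally independent across the three cross-section subsamples.

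I would then invoke the Gaussian couplings in Assumption \ref{ass:13}: the row-sum $\frac{1}{\sqrt{T}}\sum_t\mathfrak{B}_{j,t}^{(1)}$ and the three subsample column-sums $\frac{1}{\sqrt{N_a}}\sum_{i\in I_a}\mathfrak{B}_{j,i}^{(\ell)}$, $\ell=2,3,4$, are approximated in $\|\cdot\|_{\max}$ by Gaussian vectors $\mathbb{Z}_{\mathfrak{B}}^{(\ell)}$, the column ones mutually independent. This produces a Gaussian array $\{\mathcal{Z}_{it}^{g}\}$ with the same covariance as $\{\mathcal{Z}_{it}\}$ and $\max_{i,t}|\mathcal{Z}_{it}-\mathcal{Z}_{it}^{g}|=o_p(1)$, so Gaussian anti-concentration transfers the weak limit of $\max_{i,t}(\mathcal{Z}_{it}^{g})^{2}$ to $S_{NT}$. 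Following \cite{castagnetti2015inference} and \cite{lu2021uniform}, one then checks that this two-way (and block-independent) correlation structure is weak enough that the normalized maximum of squares over the $NT$ cells behaves like that of $NT$ i.i.d.\ standard normals $Z_k$, for which $\frac{1}{2}\max_{k\le NT}Z_k^{2}-\mathsf{b}(NT)$ has a standard Gumbel limit; this gives $\mathbb{P}(\frac{1}{2}S_{NT}\le x+\mathsf{b}(NT))\to e^{-e^{-x}}$. Verifying that the effective number of extremes is $NT$ (rather than $N$ or $T$) in the presence of the two-way dependence is the second delicate point.

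For the power statement, under $H_1^{III}$ choose $(i^{\dagger},t^{\dagger})\in\argmax_{i\in[N],t\in[T]}|\Theta_{j,it}^{\ast}|$. The same linearization together with the uniform rate (Proposition \ref{Pro4}(ii) applied to $\hat{\Theta}_{j,it}^{\ast a}-\Theta_{j,it}^{\ast}$) give $|\hat{\Theta}_{j,i^{\dagger}t^{\dagger}}^{\ast a}|\ge|\Theta_{j,i^{\dagger}t^{\dagger}}^{\ast}|-O_p(\sqrt{\log(N\vee T)/(N\wedge T)})$, which under $\frac{N\wedge T}{\log NT}\max_{i,t}|\Theta_{j,it}^{\ast}|^{2}\to\infty$ is $\gtrsim\max_{i,t}|\Theta_{j,it}^{\ast}|$ w.p.a.1; since $\hat{\Sigma}_{j,i^{\dagger}t^{\dagger}}^{\ast}=O_p((N\wedge T)^{-1})$ uniformly, we obtain $S_{NT}\ge(\hat{\Theta}_{j,i^{\dagger}t^{\dagger}}^{\ast a})^{2}/\hat{\Sigma}_{j,i^{\dagger}t^{\dagger}}^{\ast}\gtrsim(N\wedge T)\max_{i,t}|\Theta_{j,it}^{\ast}|^{2}\gg\log NT\asymp c_{\alpha,3\cdot NT}$ w.p.a.1, whence $\mathbb{P}(S_{NT}>c_{\alpha,3\cdot NT})\to1$.
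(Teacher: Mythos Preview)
Your approach coincides with the paper's: double-demean the linear expansion of Proposition \ref{Pro4}(i) to obtain a representation of the form $(u_{i,j}^{0}-\bar{u}_{j}^{0,I_a})^{\prime}\gamma_t+(v_{t,j}^{0}-\bar{v}_{j}^{0})^{\prime}\beta_i+\bar{\mathcal{R}}_{it}^{j}$, show uniform ratio-consistency of $\hat{\Sigma}_{j,it}^{\ast}$ via Theorem \ref{Thm3} and Lemma \ref{Lem:covhat}, invoke the Gaussian couplings of Assumption \ref{ass:13}, and carry over both the Gumbel argument and the power argument from Theorem \ref{Thm5}. Two minor differences: (i) the paper does not center your $\gamma_t,\beta_i$; instead it shows directly, via a conditional Bernstein bound on $\frac{1}{N_aT}\sum_{i^{\ast}\in I_a,t\in[T]}u_{i,j}^{0\prime}(\hat{V}_{v_j,t}^{(a)})^{-1}\xi_{j,i^{\ast}t}^{0}$, that the $t$-average of the column component and the $I_a$-average of the row component are each $o_p((N\vee T)^{-1/2})$, which yields the uncentered expansion above; (ii) the remainder of Proposition \ref{Pro4} is already \emph{uniform} over $(i,t)$, not pointwise as you wrote. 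Where you are in fact more careful than the paper is in flagging that the stated rate $\max_{i,t}|\bar{\mathcal{R}}_{it}^{j}|=o_p((N\vee T)^{-1/2})$ does not by itself make the standardized cross-term with the leading piece $o_p(1)$ uniformly---an extra $(\log NT)^{1/2}$ is needed. The paper does not isolate this; it records the $o_p((N\vee T)^{-1/2})$ rate and then defers to ``arguments as used in the proof of Theorem \ref{Thm5}'', where the same slack (equation (E.5)) is written simply as $o_p(1)$.
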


Similar remark after Theorem \ref{Thm5} holds here. In particular, $%
S_{NT}$ has the desired asymptotic Gumbel distribution under the null and is
consistent under the global alternative.

\section{Monte Carlo Simulations}

In this section, we conduct a set of Monte Carlo simulations to show the
finite sample performance of our low-rank quantile regression estimates and specification tests. 

\subsection{Data Generating Processes}

Below we will consider the following data generating process (DGP): 
\begin{equation*}
Y_{it}=\Theta _{0,it}+X_{it}^{\prime }\Theta
_{it}+(1+0.1X_{1,it}+0.1X_{2,it})u_{it},
\end{equation*}%
where $X_{it}=(X_{1,it},X_{2,it})^{\prime }$, $\Theta _{it}=(\Theta
_{1,it},\Theta _{2,it})^{\prime }$, $\Theta _{0,it}$ is the intercept term
which will be specified via the IFEs.

First, we consider four DGPs where the rank of each slope matrix is 1:

\begin{itemize}[leftmargin=40pt]
\item[DGP 1:] \textbf{Constant slope with i.i.d. error.} Let $\Theta
_{0,it}=\lambda _{i}f_{t}$, where $\lambda _{i},f_{t}\sim N(2,5)$. Then let $%
\Theta _{1,it}=\Theta _{2,it}=2$ $\forall \left( i,t\right) \in \lbrack
N]\times \lbrack T]$, and $X_{j,it}=l_{j,i}^{0}w_{j,t}^{0}+U(0,1)$ for $j\in
\{1,2\}$ with $l_{1,i}^{0}$, $l_{2,i}^{0}$, $w_{1,t}^{0}$ and $w_{2,t}\sim U(0,1)$. $%
u_{it}\operatornamewithlimits{\sim}\limits^{i.i.d}\frac{t(3)}{\sqrt{3}}$.

\item[DGP 2:] \textbf{Factor slope with rank 1 and i.i.d. error.} Same as DGP 1 except that the slope coefficients follow the factor structure with one
factor rather than homogeneous across both individuals and time, i.e., $%
\Theta _{1,it}=a_{1,i}g_{1,t}$, $\Theta _{2,it}=a_{2,i}g_{2,t}$, where $%
a_{1,i}$, $g_{1,t}$, $a_{2,i}$ and $g_{2,t}\sim N(0,2)$. Except these, all
other settings remain the same as in DGP 1.

\item[DGP 3:] \textbf{Constant slope with serial correlation.} Same as DGP 1 except that we set $u_{it}=0.2u_{i,t-1}+\varepsilon _{it}$, $\varepsilon _{it}%
\operatornamewithlimits{\sim}\limits^{i.i.d}\frac{t(3)}{\sqrt{3}}$ and all
other settings remain the same.

\item[DGP 4:] \textbf{Factor slope with rank 1 and serial correlation.}
Same as DGP 2 except that we set $u_{it}=0.2u_{i,t-1}+\varepsilon _{it}$, $%
\varepsilon _{it}\operatornamewithlimits{\sim}\limits^{i.i.d}\frac{t(3)}{%
\sqrt{3}}$ and all other settings remain the same.
\end{itemize}

For the case that the rank of the slope matrix is 2, we consider two DGPs
which have the additive structure for the slope coefficient of one regressor
and the factor structure with two factors for the slope coefficient of
another regressor. Specifically,

\begin{itemize}[leftmargin=40pt]
\item[DGP 5:] \textbf{Additive and factor slopes with i.i.d. error.} $%
\Theta_{0,it}=\lambda_{i}f_{t}$, $\Theta_{1,it}=a_{1,i}+g_{1,t}$ and $%
\Theta_{2,it}=a_{2,i}^{\prime}g_{2,t}$ such that $%
a_{2,i}=(a_{2,i,1},a_{2,i,2})^{\prime}$, $g_{2,t}=(g_{2,t,1},g_{2,t,2})^{%
\prime}$, $\lambda_{i},f_{t},a_{1,i},g_{1,i}\sim N(2,5)$ and $%
a_{2,i,1},a_{2,i,2},g_{2,i,1},g_{2,i,2}\sim N(0,5)$. Moreover, $%
X_{1,it}=l_{1,i}^{0}w_{1,t}^{0}+U(0,4)$, $%
X_{2,it}=l_{2,i}^{0}w_{2,t}^{0}+Beta(2,5)$ with $l_{1,i}^{0},w_{1,t}^{0}\sim
U(0,4)$ and $l_{2,i}^{0},w_{2,t}^{0}\sim Beta(2,5)$. $u_{it}%
\operatornamewithlimits{\sim}\limits^{i.i.d}\frac{t(3)}{\sqrt{3}}$.

\item[DGP 6:] \textbf{Additive and factor slopes with serial correlation.}
Same as DGP 5 except that the error $u_{it}$ follows AR(1) process like in DGPs 3 and 4.
\end{itemize}

% Define $MSE(\Theta)=\frac{1}{\sqrt{NT}}\left\Vert\Theta-\Theta^{0} \right\Vert_{F}$ for a specific $\Theta$ with dimension $N\times T$. Table 1-4 show the MSE of fixed effect and slope estimates for the low-rank estimation at different quantiles. Results show that we can see the clear convergence from $N$ and $T=75$ to $N$ and $T=300$. From Theorem 1, it states the MSE at $N$ and $T=300$ should be nearly a half compared to MSE when $N$ and $T=75$. 

% Table 3 shows the empirical rejection frequency with the nominal size 5\% for DGP 4. With $N$ and $T$ become larger, we can see a better trend. To be specific, we observe that the empirical size of the factor loading is close to 5\% when $N$ and $T=300$ and the empirical size of the factor is a bit lower, around 3\%.  As for the power, Table 4 shows the empirical rejection frequency for DGP 3. When the DGP itself shows the heterogeneity, our test statistics will always reject the null for $N$ from 50 to 200 and $T$ from 50 to 200.

\subsection{Estimation Results}

% \textbf{Evaluation criterion}
% \textbf{MSE: As both }$N$\textbf{\ and }$T$\textbf{\ increase from 75 to
% 300, the MSE should decrease to be about 1/4 of the original ones. But we
% did not observe this.}
% \textbf{Suggestion: }
% \begin{enumerate}
% \item \textbf{Consider only two sample sizes for each }$N$\textbf{\ and }$T.$%
% \textbf{\ No need to consider }$\gamma =0$\textbf{\ case. If possible, add
% one DGPs for each of the rank =1 and 2 cases by allowing for both
% heteroskedasticity and serial correlation in the error terms. Then we have 6
% DGPs in total and can report all estimation results in two tables (one for
% the normal error case and the other for the scaled student t case).}
% \item \textbf{Two tables for the determination of the ranks (one for the
% normal error case and the other for the scaled student t case).}
% \item \textbf{Both size and power should be considered for the tests.}
% \end{enumerate}

For $\Theta \in \mathbb{R}^{N\times T}$, define $RMSE(\Theta )=\frac{1}{%
\sqrt{NT}}\left\Vert \Theta -\Theta ^{0}\right\Vert _{F}$. Table \ref%
{tab:RMSE} shows the RMSEs of the full-sample low rank matrix estimates
under different quantiles for each DGP. As Theorem \ref{Thm1}(i) predicts,
the RMSEs decrease as both $N$ and $T\ $increase. Given the fact that $%
N\wedge T=T$ in the simulations, the decrease of the RMSEs is largely driven
by the increase of $T.$

Table \ref{tab:rank} reports the frequency of correct rank estimation by the
singular value thresholding (SVT) approach based on 1000 replications. Note
that the true ranks of the intercept and slope matrices in DGPs 1-4 and
5-6 are 1 and 2, respectively. The results show that the SVT can
accurately determine the correct rank of the coefficient matrices in all
DGPs for all three quantile indices under investigation.

\begin{table}[h]
\caption{RMSEs of low rank estimates in the full sample}
\scriptsize \centering
\label{tab:RMSE}
\scalebox{0.8}{\begin{tabular}{cccccccccccc}
    \toprule
    \toprule
    \multirow{2}[4]{*}{DGP} & \multirow{2}[4]{*}{N} & \multirow{2}[4]{*}{T} & \multicolumn{3}{c}{$\tau=0.25$} & \multicolumn{3}{c}{$\tau=0.50$} & \multicolumn{3}{c}{$\tau=0.75$} \\
\cmidrule{4-12}          &       &       & $\tilde{\Theta}_{0}$ & $\tilde{\Theta}_{1}$ & $\tilde{\Theta}_{2}$ & $\tilde{\Theta}_{0}$ & $\tilde{\Theta}_{1}$ & $\tilde{\Theta}_{2}$ & $\tilde{\Theta}_{0}$ & $\tilde{\Theta}_{1}$ & $\tilde{\Theta}_{2}$ \\
    \midrule
    \multirow{4}[2]{*}{1} & \multirow{2}[1]{*}{75} & 35    & 0.922 & 0.324 & 0.329 & 1.242 & 0.288 & 0.297 & 1.839 & 0.609 & 0.658 \\
          &       & 70    & 0.707 & 0.280 & 0.275 & 0.819 & 0.220 & 0.203 & 1.266 & 0.519 & 0.523 \\
          & \multirow{2}[1]{*}{150} & 35    & 1.012 & 0.337 & 0.340 & 1.099 & 0.258 & 0.262 & 1.932 & 0.661 & 0.623 \\
          &       & 70    & 0.745 & 0.272 & 0.265 & 0.825 & 0.205 & 0.206 & 1.324 & 0.522 & 0.504 \\
    \midrule
    \multirow{4}[2]{*}{2} & \multirow{2}[1]{*}{75} & 35    & 0.871 & 0.521 & 0.505 & 0.881 & 0.704 & 0.680 & 1.278 & 1.055 & 0.970 \\
          &       & 70    & 0.692 & 0.401 & 0.373 & 0.672 & 0.553 & 0.537 & 1.057 & 0.744 & 0.768 \\
          & \multirow{2}[1]{*}{150} & 35    & 0.877 & 0.507 & 0.480 & 1.022 & 0.790 & 0.815 & 1.334 & 1.018 & 1.040 \\
          &       & 70    & 0.703 & 0.374 & 0.373 & 0.689 & 0.531 & 0.538 & 1.059 & 0.829 & 0.787 \\
    \midrule
    \multirow{4}[2]{*}{3} & \multirow{2}[1]{*}{75} & 35    & 0.945 & 0.334 & 0.329 & 1.115 & 0.280 & 0.265 & 1.876 & 0.630 & 0.627 \\
          &       & 70    & 0.682 & 0.286 & 0.279 & 0.809 & 0.230 & 0.214 & 1.244 & 0.486 & 0.492 \\
          & \multirow{2}[1]{*}{150} & 35    & 0.973 & 0.334 & 0.331 & 1.211 & 0.287 & 0.291 & 1.771 & 0.590 & 0.612 \\
          &       & 70    & 0.757 & 0.274 & 0.272 & 0.801 & 0.208 & 0.195 & 1.360 & 0.494 & 0.527 \\
    \midrule
    \multirow{4}[2]{*}{4} & \multirow{2}[1]{*}{75} & 35    & 0.885 & 0.515 & 0.519 & 0.915 & 0.693 & 0.723 & 1.382 & 1.125 & 1.037 \\
          &       & 70    & 0.669 & 0.393 & 0.384 & 0.652 & 0.511 & 0.520 & 1.053 & 0.812 & 0.774 \\
          & \multirow{2}[1]{*}{150} & 35    & 0.889 & 0.513 & 0.483 & 0.905 & 0.761 & 0.686 & 1.409 & 1.118 & 1.133 \\
          &       & 70    & 0.725 & 0.376 & 0.377 & 0.717 & 0.547 & 0.565 & 1.058 & 0.724 & 0.775 \\
    \midrule
    \multirow{4}[2]{*}{5} & \multirow{2}[1]{*}{75} & 35    & 0.218 & 0.268 & 0.450 & 0.307 & 0.308 & 0.606 & 0.844 & 0.466 & 0.936 \\
          &       & 70    & 0.174 & 0.226 & 0.414 & 0.213 & 0.200 & 0.493 & 0.610 & 0.388 & 0.838 \\
          & \multirow{2}[1]{*}{150} & 35    & 0.236 & 0.245 & 0.458 & 0.299 & 0.291 & 0.634 & 1.299 & 0.863 & 1.778 \\
          &       & 70    & 0.174 & 0.214 & 0.423 & 0.216 & 0.203 & 0.450 & 0.629 & 0.377 & 0.679 \\
    \midrule
    \multirow{4}[2]{*}{6} & \multirow{2}[1]{*}{75} & 35    & 0.253 & 0.267 & 0.293 & 0.382 & 0.227 & 0.421 & 1.293 & 0.609 & 0.892 \\
          &       & 70    & 0.207 & 0.239 & 0.278 & 0.261 & 0.192 & 0.366 & 0.576 & 0.287 & 0.415 \\
          & \multirow{2}[1]{*}{150} & 35    & 0.225 & 0.254 & 0.269 & 0.363 & 0.225 & 0.422 & 1.486 & 0.695 & 0.992 \\
          &       & 70    & 0.193 & 0.254 & 0.263 & 0.254 & 0.171 & 0.379 & 0.797 & 0.391 & 0.551 \\
    \bottomrule
    \end{tabular}} 
\end{table}

\begin{table}[h]
\caption{Frequency of correct rank estimation via the SVT approach}
\label{tab:rank}
\scriptsize \centering
\scalebox{0.8}{\begin{tabular}{cccccccccccc}
    \toprule
    \toprule
    \multirow{2}[4]{*}{DGP} & \multirow{2}[4]{*}{N} & \multirow{2}[4]{*}{T} & \multicolumn{3}{c}{$\tau=0.25$} & \multicolumn{3}{c}{$\tau=0.50$} & \multicolumn{3}{c}{$\tau=0.75$} \\
\cmidrule{4-12}          &       &       & $\hat{K}_{0}$ & $\hat{K}_{1}$ & $\hat{K}_{2}$ & $\hat{K}_{0}$ & $\hat{K}_{1}$ & $\hat{K}_{2}$ & $\hat{K}_{0}$ & $\hat{K}_{1}$ & $\hat{K}_{2}$ \\
    \midrule
    \multirow{4}[2]{*}{1} & \multirow{2}[1]{*}{75} & 35    & 1.00  & 0.996 & 0.996 & 1.00  & 0.999 & 1.00  & 1.00  & 0.999 & 0.999 \\
          &       & 70    & 1.00  & 0.994 & 0.996 & 1.00  & 1.00  & 1.00  & 1.00  & 1.00  & 1.00 \\
          & \multirow{2}[1]{*}{150} & 35    & 1.00  & 0.994 & 0.995 & 1.00  & 1.00  & 0.999 & 1.00  & 1.00  & 0.999 \\
          &       & 70    & 1.00  & 0.995 & 0.996 & 1.00  & 0.999 & 1.00  & 1.00  & 0.999 & 1.00 \\
    \midrule
    \multirow{4}[2]{*}{2} & \multirow{2}[1]{*}{75} & 35    & 1.00  & 0.993 & 0.999 & 1.00  & 0.999 & 1.00  & 1.00  & 1.00  & 1.00 \\
          &       & 70    & 1.00  & 0.997 & 0.998 & 1.00  & 1.00  & 1.00  & 1.00  & 1.00  & 1.00 \\
          & \multirow{2}[1]{*}{150} & 35    & 1.00  & 0.995 & 0.996 & 1.00  & 0.998 & 1.00  & 1.00  & 1.00  & 1.00 \\
          &       & 70    & 1.00  & 1.00  & 1.00  & 1.00  & 1.00  & 0.998 & 1.00  & 1.00  & 1.00 \\
    \midrule
    \multirow{4}[2]{*}{3} & \multirow{2}[1]{*}{75} & 35    & 1.00  & 0.990 & 0.997 & 1.00  & 0.997 & 0.997 & 1.00  & 0.999 & 0.999 \\
          &       & 70    & 1.00  & 0.994 & 0.994 & 1.00  & 0.999 & 0.999 & 1.00  & 0.999 & 0.998 \\
          & \multirow{2}[1]{*}{150} & 35    & 1.00  & 0.999 & 0.992 & 1.00  & 1.00  & 1.00  & 1.00  & 1.00  & 1.00 \\
          &       & 70    & 1.00  & 0.996 & 0.994 & 1.00  & 0.998 & 1.00  & 1.00  & 1.00  & 1.00 \\
    \midrule
    \multirow{4}[2]{*}{4} & \multirow{2}[1]{*}{75} & 35    & 1.00  & 0.992 & 0.991 & 1.00  & 0.999 & 0.999 & 1.00  & 0.999 & 1.00 \\
          &       & 70    & 1.00  & 0.995 & 0.995 & 1.00  & 0.999 & 0.999 & 1.00  & 1.00  & 1.00 \\
          & \multirow{2}[1]{*}{150} & 35    & 1.00  & 0.996 & 0.997 & 1.00  & 0.999 & 1.00  & 1.00  & 1.00  & 1.00 \\
          &       & 70    & 1.00  & 0.997 & 0.999 & 1.00  & 0.999 & 1.00  & 1.00  & 1.00  & 1.00 \\
    \midrule
    \multirow{4}[2]{*}{5} & \multirow{2}[1]{*}{75} & 35    & 1.00  & 1.00  & 1.00  & 1.00  & 1.00  & 1.00  & 1.00  & 1.00  & 1.00 \\
          &       & 70    & 1.00  & 1.00  & 1.00  & 1.00  & 1.00  & 1.00  & 1.00  & 1.00  & 1.00 \\
          & \multirow{2}[1]{*}{150} & 35    & 1.00  & 1.00  & 1.00  & 1.00  & 1.00  & 1.00  & 1.00  & 1.00  & 1.00 \\
          &       & 70    & 1.00  & 1.00  & 1.00  & 1.00  & 1.00  & 1.00  & 1.00  & 1.00  & 1.00 \\
    \midrule
    \multirow{4}[2]{*}{6} & \multirow{2}[1]{*}{75} & 35    & 1.00  & 1.00  & 1.00  & 1.00  & 1.00  & 1.00  & 1.00  & 1.00  & 1.00 \\
          &       & 70    & 1.00  & 1.00  & 1.00  & 1.00  & 1.00  & 1.00  & 1.00  & 1.00  & 1.00 \\
          & \multirow{2}[1]{*}{150} & 35    & 1.00  & 1.00  & 0.999 & 1.00  & 1.00  & 1.00  & 1.00  & 1.00  & 1.00 \\
          &       & 70    & 1.00  & 1.00  & 1.00  & 1.00  & 1.00  & 1.00  & 1.00  & 1.00  & 1.00 \\
    \bottomrule
    \end{tabular}} 
\end{table}

\subsection{Test Results}

In Section 4, we define $S_{u_{j}}$\ and $S_{v_{j}}$\ as the sup-type test
statistics. Table \ref{tab:test_homo} reports the empirical size and power
at the 5\% nominal level for the null hypothesis that the slope coefficient
is homogeneous across either $i$ or $t$. The results in DGPs 1 and 3 give
the empirical size, and those in DGPs 2 and 4 give the empirical power. As
the results in Table \ref{tab:test_homo} indicate, our tests have reasonable
size despite the fact that they are slightly conservative like most extreme-value
based sup-tests in the literature. In terms of power, out tests have superb
power in both DGPs across all three quantile indices.

Table \ref{tab:test_additive} shows the empirical size and power of our test
for DGPs 5 and 6. The findings are similar to those in Table \ref%
{tab:test_homo}. In particular, our tests are a bit conservative under the null. The empirical power tends to 1 quickly as $T$ increases.

\begin{table}[h]
\caption{Empirical size and power of testing slope homogeneity across either 
$i$ or $t$ (nominal level: $0.05$)}
\label{tab:test_homo}
\scriptsize \centering
\begin{tabular}{ccccccccccccccc}
\toprule \toprule \multirow{2}[4]{*}{DGP} & \multirow{2}[4]{*}{N} & %
\multirow{2}[4]{*}{T} & \multicolumn{4}{c}{$\tau=0.25$} & \multicolumn{4}{c}{%
$\tau=0.5$} & \multicolumn{4}{c}{$\tau=0.75$} \\ 
\cmidrule{4-15} &  &  & $u_{1}$ & $v_{1}$ & $u_{2}$ & $v_{2}$ & $u_{1}$ & $%
v_{1}$ & $u_{2}$ & $v_{2}$ & $u_{1}$ & $v_{1}$ & $u_{2}$ & $v_{2}$ \\ 
\midrule \multirow{4}[2]{*}{DGP 1} & \multirow{2}[1]{*}{75} & 35 & 0.040 & 
0.051 & 0.049 & 0.032 & 0.024 & 0.054 & 0.034 & 0.054 & 0.036 & 0.047 & 0.036
& 0.048 \\ 
&  & 70 & 0.040 & 0.055 & 0.050 & 0.044 & 0.020 & 0.056 & 0.017 & 0.068 & 
0.025 & 0.037 & 0.029 & 0.029 \\ 
& \multirow{2}[1]{*}{150} & 35 & 0.028 & 0.036 & 0.058 & 0.048 & 0.065 & 
0.054 & 0.052 & 0.055 & 0.074 & 0.030 & 0.076 & 0.024 \\ 
&  & 70 & 0.034 & 0.025 & 0.030 & 0.023 & 0.035 & 0.048 & 0.028 & 0.040 & 
0.035 & 0.025 & 0.039 & 0.025 \\ 
\midrule \multirow{4}[2]{*}{DGP 2} & \multirow{2}[1]{*}{75} & 35 & 1.00 & 
1.00 & 1.00 & 1.00 & 1.00 & 1.00 & 1.00 & 1.00 & 1.00 & 1.00 & 1.00 & 1.00
\\ 
&  & 70 & 1.00 & 1.00 & 1.00 & 1.00 & 1.00 & 1.00 & 1.00 & 1.00 & 1.00 & 1.00
& 1.00 & 1.00 \\ 
& \multirow{2}[1]{*}{150} & 35 & 1.00 & 1.00 & 1.00 & 1.00 & 1.00 & 1.00 & 
1.00 & 1.00 & 1.00 & 1.00 & 1.00 & 1.00 \\ 
&  & 70 & 1.00 & 1.00 & 1.00 & 1.00 & 1.00 & 1.00 & 1.00 & 1.00 & 1.00 & 1.00
& 1.00 & 1.00 \\ 
\midrule \multirow{4}[2]{*}{DGP 3} & \multirow{2}[1]{*}{75} & 35 & 0.045 & 
0.057 & 0.050 & 0.041 & 0.022 & 0.050 & 0.038 & 0.089 & 0.054 & 0.047 & 0.049
& 0.047 \\ 
&  & 70 & 0.048 & 0.031 & 0.031 & 0.033 & 0.028 & 0.086 & 0.023 & 0.069 & 
0.041 & 0.046 & 0.032 & 0.038 \\ 
& \multirow{2}[1]{*}{150} & 35 & 0.065 & 0.054 & 0.058 & 0.034 & 0.064 & 
0.051 & 0.068 & 0.045 & 0.084 & 0.018 & 0.089 & 0.023 \\ 
&  & 70 & 0.046 & 0.030 & 0.044 & 0.025 & 0.022 & 0.037 & 0.037 & 0.030 & 
0.046 & 0.022 & 0.048 & 0.015 \\ 
\midrule \multirow{4}[2]{*}{DGP 4} & \multirow{2}[1]{*}{75} & 35 & 1.00 & 
1.00 & 1.00 & 1.00 & 1.00 & 1.00 & 1.00 & 1.00 & 1.00 & 1.00 & 1.00 & 1.00
\\ 
&  & 70 & 1.00 & 1.00 & 1.00 & 1.00 & 1.00 & 1.00 & 1.00 & 1.00 & 1.00 & 1.00
& 1.00 & 1.00 \\ 
& \multirow{2}[1]{*}{150} & 35 & 1.00 & 1.00 & 1.00 & 1.00 & 1.00 & 1.00 & 
1.00 & 1.00 & 1.00 & 1.00 & 1.00 & 1.00 \\ 
&  & 70 & 1.00 & 1.00 & 1.00 & 1.00 & 1.00 & 1.00 & 1.00 & 1.00 & 1.00 & 1.00
& 1.00 & 1.00 \\ 
\bottomrule &  &  &  &  &  &  &  &  &  &  &  &  &  & 
\end{tabular}
\end{table}

\begin{table}[h]
\caption{Empirical size and power for testing additive slopes (nomial level: 
$0.05$)}
\label{tab:test_additive}
\scriptsize \centering
\begin{tabular}{ccccccccc}
\toprule \toprule \multirow{2}[4]{*}{DGP} & \multirow{2}[4]{*}{N} & %
\multirow{2}[4]{*}{T} & \multicolumn{2}{c}{$\tau=0.25$} & \multicolumn{2}{c}{%
$\tau=0.50$} & \multicolumn{2}{c}{$\tau=0.75$} \\ 
\cmidrule{4-9} &  &  & size & power & size & power & size & power \\ 
\midrule \multirow{4}[2]{*}{DGP 5} & \multirow{2}[1]{*}{75} & 35 & 0.027 & 
0.807 & 0.034 & 1.00 & 0.026 & 0.979 \\ 
&  & 70 & 0.065 & 1.00 & 0.061 & 1.00 & 0.034 & 1.00 \\ 
& \multirow{2}[1]{*}{150} & 35 & 0.018 & 1.00 & 0.026 & 1.00 & 0.011 & 1.00
\\ 
&  & 70 & 0.012 & 1.00 & 0.029 & 1.00 & 0.010 & 1.00 \\ 
\midrule \multirow{4}[2]{*}{DGP 6} & \multirow{2}[1]{*}{75} & 35 & 0.039 & 
0.796 & 0.058 & 1.00 & 0.045 & 0.979 \\ 
&  & 70 & 0.024 & 1.00 & 0.06 & 1.00 & 0.032 & 1.00 \\ 
& \multirow{2}[1]{*}{150} & 35 & 0.019 & 1.00 & 0.025 & 1.00 & 0.022 & 1.00
\\ 
&  & 70 & 0.01 & 1.00 & 0.022 & 1.00 & 0.014 & 1.00 \\ 
\bottomrule &  &  &  &  &  &  &  & 
\end{tabular}
\end{table}

\section{Empirical Study}

In this section we consider two empirical applications: the heterogeneous investment equation and the heterogeneous
quantile effect of foreign direct investment on unemployment.
\subsection{Investment Equation}

In this subsection, we revisit the investment equation. \cite%
{fazzari1988financing} point out that investment may show sensitivity to
movements in cash flow when firms face constraints for external finance.
Since \cite{fazzari1988financing}, there has been a large literature on the effect of cash flow on
the corporate investment; see \cite{devereux199011}, \cite%
{gilchrist1995evidence}, \cite{kaplan1995financing}, \cite%
{cleary1999relationship}, \cite{rauh2006investment}, and \cite%
{almeida2007financial}, among others. Using the panel dataset, we consider
the scaled version of the investment equation as follows: 
\begin{equation*}
\frac{I_{it}}{K_{i,t-1}}=\Theta _{0,it}+\Theta _{1,it}\frac{CF_{it}}{%
K_{i,t-1}}+\Theta _{2,it}q_{i,t-1}+u_{it},
\end{equation*}%
where $I$ is the corporate investment, $CF$ is the cash flow, $q$ is the
Tobin's q, $K$ is the capital stock and $u$ is the innovation. $\Theta
_{0,it}$ refers to the fixed effects (FEs). Rather than the mean estimation, 
\cite{galvao2015efficient} estimate the effects of the firm's cash flow and
Tobin's q on investment at different quantiles. By using the panel quantile
regression with individual FEs, they show that the slope estimates change
across $\tau $. However, they do not allow the slope
coefficients, $\Theta _{1}$ and $\Theta _{2},$ to change either over $i$ or $%
t.$ Inspired by \cite{galvao2015efficient}, we estimate the following model 
\begin{equation}
\label{base}
\mathscr{Q}_{\tau }\left( IK_{it}\big |\left\{ CFK_{it},q_{i,t-1}\right\}
_{t\in \lbrack T]},\left\{ \Theta _{j,it}\right\} _{t\in \lbrack T],j\in
\{0,1,2\}}\right) =\Theta _{0,it}(\tau )+\Theta _{1,it}(\tau
)CFK_{it}+\Theta _{2,it}(\tau )q_{i,t-1},
\end{equation}
where $IK_{it}=\frac{I_{it}}{K_{i,t-1}}$, and $CFK_{it}=\frac{CF_{it}}{%
K_{i,t-1}}$. Here we don't restrict the specific structure on the FEs and
they can be either additive or interactive.

The data are taken from the China Stock Market \& Accounting Research
(CSMAR) Database. We use quarterly data for 195 manufacturing firms in China
from 2003 to 2020. Based on the model (\ref{base}), we define corporate
investment as $I_{it}=LI_{it}-LI_{i,t-1}$, where $LI_{it}$ is the total
value of long-term corporate investment as the sum of long-term equity
investment, long-term bound investment, fixed assets and immaterial assets.
The investment measures the change of firm's total investment compared to
the last period. All these four variables can be easily obtained from the
balance sheet. We directly use Tobin's q from the CSMAR database, where by
definition $q=\frac{MV}{K}$ and $MV$ is the market value of the firm. We obtain a balanced panel dataset with 195 firms and 72
time periods. %Table \ref{tab:sum_stats1} shows the summary statistics. 
The units of corporate investment, capital and cash flow are measured by billions of
Chinese RMB.

By using the SVT approach, we obtain the estimates of the ranks of $\Theta
_{1}$ and $\Theta _{2}:$ $\hat{r}_{1}=\hat{r}_{2}=1$ for each $\tau
=\{0.25,0.5,0.75\}$. Consequently, we can consider the test that whether $%
\Theta _{j,it}$ is constant over $i$ or constant over $t$ for both $j=1,2$.
Specifically, we want to test whether the effect of cash flow and Tobin's q
on the firm's investment is homogeneous over $i$ or across $t$ with market
imperfection. That is, for $j\in \{1,2\},$ we shall test

\begin{itemize}
\item $H_{0}^{a}$: $\Theta _{j,it}$ is a constant over $i$,

\item $H_{0}^{b}$: $\Theta _{j,it}$ is a constant over $t$.
\end{itemize}

Figure \ref{fig:emp_1} shows the estimation results for the factor and
factor loadings of two slope coefficient matrices under different quantiles.
In each sub-figure, the first and second rows report the results for $\Theta
_{1}$ and $\Theta _{2},$ respectively. Specifically, the first row of Figure %
\ref{fig:emp_1}(a) gives the plot of $\left\{ \hat{u}_{i,1}\right\} _{i\in
\lbrack N]}$ as a catenation of $\{\hat{u}_{i,1}^{(1,2)},\hat{u}%
_{i,1}^{(2,3)},\hat{u}_{i,1}^{(3,1)}\}$ at the left and as a cantenation of $%
\{\hat{u}_{i,1}^{(1,3)},\hat{u}_{i,1}^{(2,1)},\hat{u}_{i,1}^{(3,2)}\}$ at
the right in the first row, and similarly the plot of $\left\{ \hat{u}%
_{i,2}\right\} _{i\in \lbrack N]}\ $in the second row. Similarly, the first
row of Figure \ref{fig:emp_1}(d) shows $\{\hat{v}_{t,1}^{(a,b)}\}_{t\in
\lbrack T]}$ for $a\in \lbrack 3],$ $b\in \lbrack 3]\setminus \{a\}$ in the
first row and $\{\hat{v}_{t,3}^{(a,b)}\}_{t\in \lbrack T]}$ for $a\in
\lbrack 3],$ $b\in \lbrack 3]\setminus \{a\}$ in the second row.

Table \ref{tab:emp_1} reports the test statistics, critical values, and $p$-values. Tobin's q can measure a firm's investment demand. After controlling
the Tobin's q and the intercept FEs, the coefficient of cash flow captures a
firm's potential for external investment with the variation of internal
finance. It is clear that we can reject the homogeneous hypotheses for both $%
i$ and $t$ at the 1\% significance level for each $\tau \in
\{0.25,0.5,0.75\} $. This indicates that with high probability, the slope
coefficient of both $CFK$ and Tobin's q follow the factor structure with one
factor.

The above study shows strong evidence that under imperfect market, the
sensitivity of corporate investment to cash flow exhibits both individual
heterogeneity and time heterogeneity across quantiles. It implies that
neither the usual homogenous panel QR model nor the panel QR model with
either cross-section or time heterogeneity alone in the slope coefficients
fails to fully capture the unobserved heterogeneity in the investment
equation.

% \begin{table}[htbp]
%   \scriptsize\centering
%   \caption{Summary statistics}
%     \begin{tabular}{cccccc}
%     \toprule
%     \toprule
%     Variable & Mean  & Std. dev. & Min   & Median & Max \\
%     \midrule
%     I     & 0.000 & 2.838 & -235.370 & -0.002 & 61.897 \\
%     K     & 1.137 & 2.614 & 0.071 & 0.528 & 45.585 \\
%     CF    & 0.196 & 1.060 & -14.582 & 0.036 & 26.267 \\
%     q     & 1.772 & 1.129 & 0.023 & 1.407 & 17.116 \\
%     \bottomrule
%     \end{tabular}%
%   \label{tab:sum_stats1}%
% \end{table}%

\begin{table}[tbph]
\caption{Test results under different quantiles for the investment equation}
\label{tab:emp_1}
\scriptsize \centering
\begin{tabular}{ccccccc}
\toprule \toprule $\tau$ & Test & $S$ & $cv_{\alpha=0.01}$ & $%
cv_{\alpha=0.05}$ & $cv_{\alpha=0.1}$ & $p$-value \\ 
\midrule \multirow{4}[2]{*}{0.25} & $u_{CFK}$ & $1.28\times10^{3}$ & %
\multirow{2}[1]{*}{16.94} & \multirow{2}[1]{*}{13.68} & %
\multirow{2}[1]{*}{12.24} & 0.00 \\ 
& $u_{q}$ & $4.16\times10^{4}$ &  &  &  & 0.00 \\ 
& $v_{CFK}$ & 13.85 & \multirow{2}[1]{*}{5.70} & \multirow{2}[1]{*}{4.07} & %
\multirow{2}[1]{*}{3.35} & 0.00 \\ 
& $v_{q}$ & 870.85 &  &  &  & 0.00 \\ 
\midrule \multirow{4}[2]{*}{0.50} & $u_{CFK}$ & 148.28 & %
\multirow{2}[1]{*}{16.94} & \multirow{2}[1]{*}{13.68} & %
\multirow{2}[1]{*}{12.24} & 0.00 \\ 
& $u_{q}$ & $1.24\times10^{5}$ &  &  &  & 0.00 \\ 
& $v_{CFK}$ & 49.57 & \multirow{2}[1]{*}{5.70} & \multirow{2}[1]{*}{4.07} & %
\multirow{2}[1]{*}{3.35} & 0.00 \\ 
& $v_{q}$ & 138.83 &  &  &  & 0.00 \\ 
\midrule \multirow{4}[2]{*}{0.75} & $u_{CFK}$ & 313.21 & %
\multirow{2}[1]{*}{16.94} & \multirow{2}[1]{*}{13.68} & %
\multirow{2}[1]{*}{12.24} & 0.00 \\ 
& $u_{q}$ & $2.03\times10^{4}$ &  &  &  & 0.00 \\ 
& $v_{CFK}$ & 31.50 & \multirow{2}[1]{*}{5.70} & \multirow{2}[1]{*}{4.07} & %
\multirow{2}[1]{*}{3.35} & 0.00 \\ 
& $v_{q}$ & 58.29 &  &  &  & 0.00 \\ 
\bottomrule &  &  &  &  &  & \\
\multicolumn{7}{p{10cm}}{$Notes$: $S$ is the test statistics for the factor
or factor loadings under different quantiles, $H_{0}^{a}(CFK)$ and $%
H_{0}^{a}(q) $ refer to the hypotheses that the slope of CFK and Tobin'q is
homogeneous across $i$, respectively. $H_{0}^{b}(CFK)$ and $H_{0}^{b}(q)$
refer to the the hypotheses that the slope of CFK and Tobin'q is homogeneous
across $t$, respectively. $\text{cv}_{\alpha=a}$ is the critical value under
the significance level a where a=0.1, 0.05, and 0.01.}%
\end{tabular}
\end{table}

\begin{figure}[tbp]
\centering
\begin{subfigure}[b]{0.3\textwidth}
         \centering
         \includegraphics[width=\textwidth,height=2.5cm]{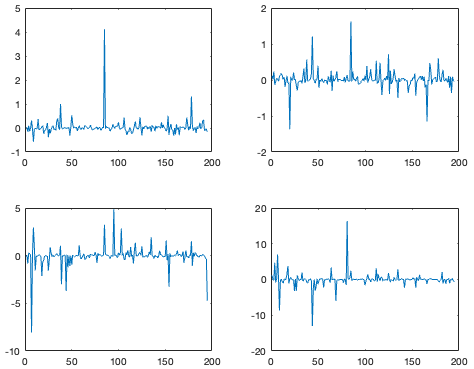}
         \caption{factor loading estimates under $\tau=0.25$}
        %  \label{fig:emp1_0.25_u}
     \end{subfigure}
\hfill 
\begin{subfigure}[b]{0.3\textwidth}
         \centering
         \includegraphics[width=\textwidth,height=2.5cm]{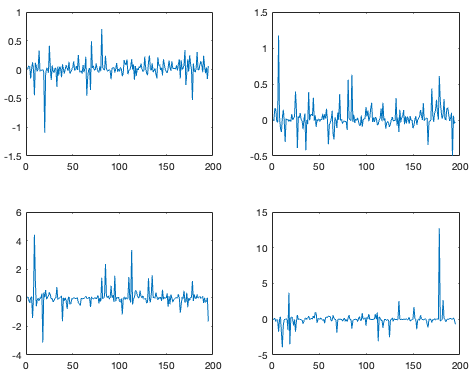}
         \caption{factor loading estimates under $\tau=0.5$}
        %  \label{fig:emp1_0.5_u}
     \end{subfigure}
\hfill 
\begin{subfigure}[b]{0.3\textwidth}
         \centering
         \includegraphics[width=\textwidth,height=2.5cm]{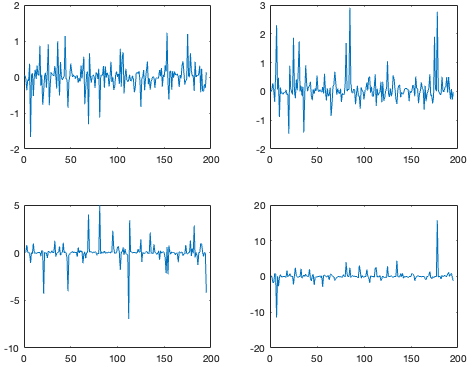}
         \caption{factor loading estimates under $\tau=0.75$}
        %  \label{fig:emp1_0.75_u}
     \end{subfigure}
\hfill 
\begin{subfigure}[b]{0.3\textwidth}
         \centering
         \includegraphics[width=\textwidth,height=2.5cm]{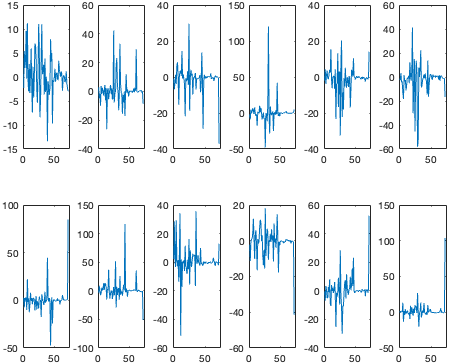}
         \caption{factor estimates under $\tau=0.25$}
        %  \label{fig:emp1_0.25_v}
     \end{subfigure}
\hfill 
\begin{subfigure}[b]{0.3\textwidth}
         \centering
         \includegraphics[width=\textwidth,height=2.5cm]{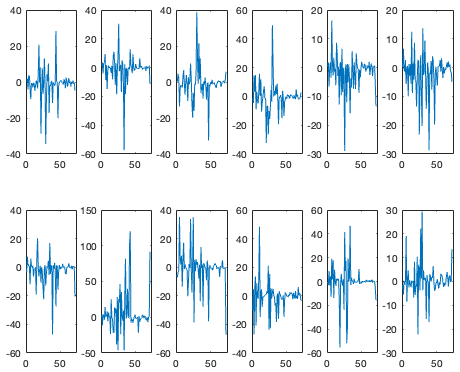}
         \caption{factor estimates under $\tau=0.5$}
        %  \label{fig:emp1_0.5_v}
     \end{subfigure}
\hfill 
\begin{subfigure}[b]{0.3\textwidth}
         \centering
         \includegraphics[width=\textwidth,height=2.5cm]{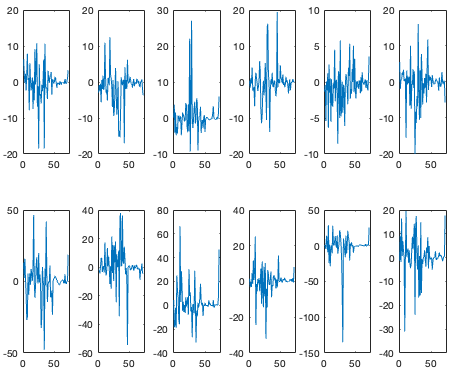}
         \caption{factor estimates under $\tau=0.75$}
        %  \label{fig:emp1_0.75_v}
     \end{subfigure}
\caption{Factor loading and factor estimates under different quantiles}
\label{fig:emp_1}
\end{figure}

\subsection{Foreign Direct Investment and Unemployment}

Investment is one of the major driving forces for economic growth and
employment. Among the investment, foreign direct investment (FDI) is an
important contributor to the employment. See \cite{craigwell2006foreign}, 
\cite{aktar2009can}, \cite{karlsson2009foreign}, \cite{mucuk2013effect}, and 
\cite{strat2015fdi}, among others. Controversially, \cite{mucuk2013effect}
argue that FDI may have both positive and negative effects on employment. On
the one hand, FDI adds to the net capital and creates jobs through forward and backward linkages and
multiplier effects in local economy. On the other hand, acquisitions may
rely on imports or displacement of existing firms which may result in job
loss.

To study the relationship of FDI, economic growth rate and unemployment at
the country level, we consider the following panel quantile regression
model, 
\begin{equation*}
\mathscr{Q}_{\tau }\left( U_{it}\big |\left\{ G_{i,t-1},FDI_{it}\right\}
_{t\in \lbrack T]},\left\{ \Theta _{j,it}\right\} _{t\in \lbrack T],j\in
\{0,1,2\}}\right) =\Theta _{0,it}(\tau )+\Theta _{1,it}(\tau
)G_{i,t-1}+\Theta _{2,it}(\tau )FDI_{it},
\end{equation*}%
where $U_{it}$ is the unemployment rate of country $i$ at year $t$, $%
G_{i,t-1}$ is the economic growth measured by the growth of real GDP. $%
\Theta _{0,it}$ is the FEs of country $i$ and year $t$, $\Theta _{1,it}$ is
the elasticity of the economic growth in the previous year to the
unemployment this year, and $\Theta _{2,it}$ is the elasticity of FDI to the
unemployment.

We draw the data for 126 countries from 1992-2019. The data for the
unemployment rate are taken from International Labor Organization (ILO) and
GDP growth and FDI are from the World Bank Development Indicators (WDI)
historical database. The rank estimation procedure shows that $\hat{r}_{1}=2$
and $\hat{r}_{2}=1$. Consequently, we can test whether the elasticity of FDI
to the unemployment rate is homogeneous across individual countries and over
years 1992-2009, and whether the elasticity of growth rate to unemployment
follows the additive structure, i.e.,

\begin{itemize}
\item $H_{0}^{c}$: $\Theta_{1,it}=\Theta_{1,i}+\Theta_{1,t}$,

\item $H_{0}^{d}$: $\Theta_{2,it}$ is a constant over $i$,

\item $H_{0}^{e}$: $\Theta_{2,it}$ is a constant over $t$.
\end{itemize}

Table \ref{tab:emp_2} reports the test results under quantiles 0.25, 0.5 and
0.75 for the above three null hypotheses. Figure \ref{fig:emp_2} gives the
estimation results for the factor and factor loading estimates of the slope
coefficient $\Theta _{2}$. As Table \ref{tab:emp_2} suggests, we can reject
all the above three null hypotheses safely at the conventional 5\%
significance level. This means that the effect of FDI on the unemployment
rate is different across both countries and time even though the estimated
rank of $\Theta _{2}$ is one, and the effect of economic growth rate on the
unemployment is heterogeneous across both countries and time and it does not
exhibit an additive structure.

% \begin{table}[htbp]
%   \scriptsize\centering
%   \caption{Summary statistics of 126 countries from 1992-2019}
%     \begin{tabular}{cccccc}
%     \toprule
%     \toprule
%     Variable & Mean  & Std. dev. & Min   & Median & Max \\
%     \midrule
%     Growth & 0.008 & 0.021 & -0.280 & 0.009 & 0.381 \\
%     FDI (unit: billion)   & 10.006 & 38.469 & -344.710 & 0.635 & 733.830 \\
%     unemployment & 7.698 & 6.031 & 0.250 & 5.820 & 37.940 \\
%     \bottomrule
%     \end{tabular}%
%   \label{tab:sum_stats2}%
% \end{table}%

\begin{table}[htbp]
  \caption{Test results under different quantiles}
  \label{tab:emp_2}
   \scriptsize \centering
    \begin{tabular}{ccccccc}
    \toprule
    \toprule
    Test  & $\tau$ & $S$   & $cv_{\alpha=0.01}$ & $cv_{\alpha=0.05}$ & $cv_{\alpha=0.10}$ & $p-$value \\
    \midrule
    \multirow{3}[2]{*}{$H_{0}^{c}$} & 0.25  & 38.92 & 35.55 & 32.29 & 30.85 & 0.00 \\
          & 0.50   & 80.84 & 22.29 & 19.03 & 17.59 & 0.00 \\
          & 0.75  & 66.24 & 35.55 & 32.29 & 30.85 & 0.00 \\
    \midrule
    \multirow{3}[2]{*}{$H_{0}^{d}$} & 0.25  & $1.41\times 10^{6}$ & \multirow{3}[2]{*}{16.15} & \multirow{3}[2]{*}{12.89} & \multirow{3}[2]{*}{11.45} & 0.00 \\
          & 0.50   & $6.39\times 10^{6}$ &       &       &       & 0.00 \\
          & 0.75  & $3.07\times 10^{7}$ &       &       &       & 0.00 \\
    \midrule
    \multirow{3}[2]{*}{$H_{0}^{e}$} & 0.25  & 36.36 & \multirow{3}[2]{*}{5.70} & \multirow{3}[2]{*}{4.07} & \multirow{3}[2]{*}{3.35} & 0.00 \\
          & 0.50   & 164.03 &       &       &       & 0.00 \\
          & 0.75  & 5.44  &       &       &       & 0.013 \\
    \bottomrule
    \end{tabular}%
\end{table}%

\begin{figure}[h]
\centering
\begin{subfigure}[b]{0.3\textwidth}
         \centering
         \includegraphics[width=\textwidth,height=2.5cm]{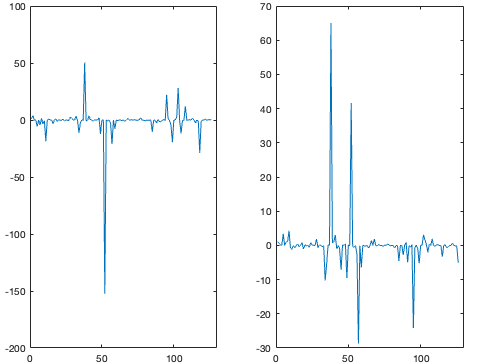}
         \caption{factor loading estimates under $\tau=0.25$}
     \end{subfigure}
\hfill 
\begin{subfigure}[b]{0.3\textwidth}
         \centering
         \includegraphics[width=\textwidth,height=2.5cm]{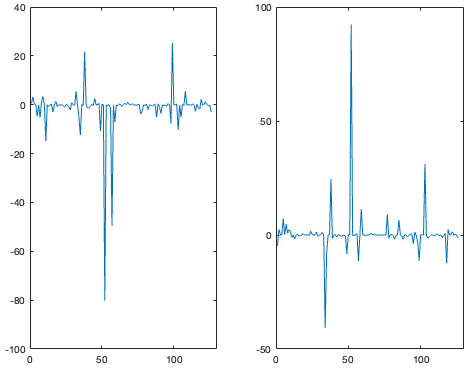}
         \caption{factor loading estimates under $\tau=0.5$}
     \end{subfigure}
\hfill 
\begin{subfigure}[b]{0.3\textwidth}
         \centering
         \includegraphics[width=\textwidth,height=2.5cm]{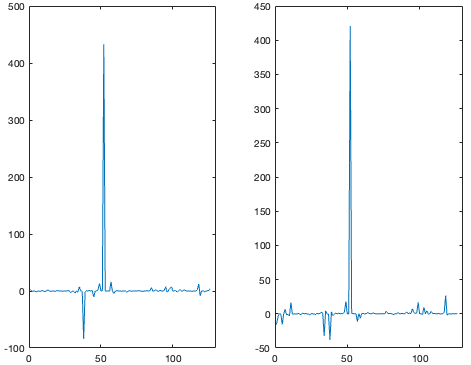}
         \caption{factor loading estimates under $\tau=0.75$}
     \end{subfigure}
\hfill 
\begin{subfigure}[b]{0.3\textwidth}
         \centering
         \includegraphics[width=\textwidth,height=2.5cm]{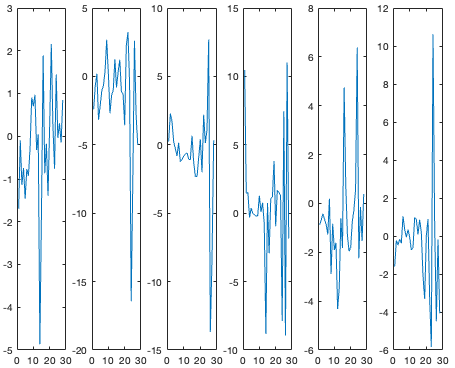}
         \caption{factor estimates under $\tau=0.25$}
     \end{subfigure}
\hfill 
\begin{subfigure}[b]{0.3\textwidth}
         \centering
         \includegraphics[width=\textwidth,height=2.5cm]{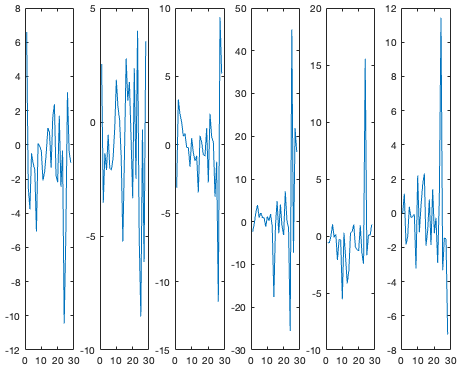}
         \caption{factor estimates under $\tau=0.5$}
     \end{subfigure}
\hfill 
\begin{subfigure}[b]{0.3\textwidth}
         \centering
         \includegraphics[width=\textwidth,height=2.5cm]{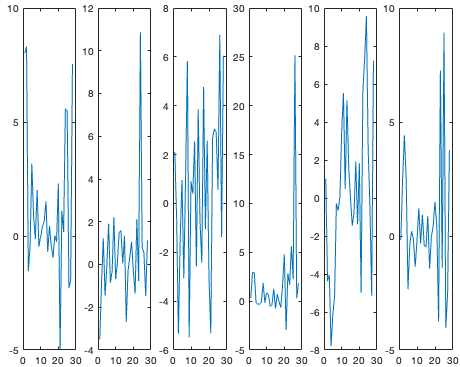}
         \caption{factor estimates under $\tau=0.75$}
     \end{subfigure}
\caption{Factor loading and factor estimates of $\Theta_{2}$ under different
quantiles}
\label{fig:emp_2}
\end{figure}

\section{Conclusion}

This paper considers panel QR model with heterogeneous slopes over both $i$ and $t$. Compared to \cite{chernozhukov2019inference}, to remove the bias from the nuclear norm regularization, we split the full sample into three subsamples. We then use the first subsample to compute initial estimators via NNR, the second sample to refine the convergence rate of the initial estimator, and the last subsample to debias the refined estimator. Our asymptotic theory shows that the factor estimates, factor loading estimates and the slope estimates all follow the normal distributions asymptotically. By constructing the consistent estimator for the asymptotic variance, we also conduct two specification tests: (1) the slope coefficient is constant over time or individuals under the case that true rank of slope matrix equals one and (2) the slope coefficient exhibits the additive structure under the case that true rank of the slope coefficient matrix equals two. Our test statistics are shown to follow the Gumbel distribution asymptotically under the null, consistent under the global alternative and have non-trivial power against local alternatives.  Monte Carlo simulation and empirical studies illustrate the finite sample performance of our algorithm and test statistics.

\bibliography{chapter1}
\newpage %\setcounter{page}{1}%
%TCIMACRO{\TeXButton{appendix}{\appendix}}%
%BeginExpansion
\appendix%
%EndExpansion
\linespread{1.2}%
%TCIMACRO{\TeXButton{small}{\small}}%
%BeginExpansion
\small%
%EndExpansion
\newpage

\setcounter{footnote}{0}\setcounter{page}{1}\setlength{\baselineskip}{16pt}{}

\begin{center}
{\small {\Large Online Supplement for} }

{\Large \textquotedblleft Low-rank Panel Quantile Regression: Estimation and
Inference\textquotedblright }

$\medskip $

Yiren Wang$^{a}$, Liangjun Su$^{b}$ and Yichong Zhang$^{a}$

$^{a}$School of Economics, Singapore Management University, Singapore

$^{b}$School of Economics and Management, Tsinghua University, China

{\small \ \ \ \ \ \ \ \ \ }
\end{center}

\noindent This supplement contains three sections. Section A contains the
proofs of the main results by calling upon some technical lemmas in Section
B. Section B states and proves the technical lemmas used in Section A.
Section C provides detail algorithm for the nuclear norm regularized panel
quantile regression.

\section{Proofs of the Main Results}

\subsection{Proof of Theorem \protect\ref{Thm1}}

We focus on the full sample estimators $\tilde{\Delta}_{\Theta_j}$, $\tilde{%
\sigma}_{k,j}$, and $\tilde{V}_j$ in the proof. The results for their
subsample counterparts can be established in the same manner, and we omit
the detail for brevity.

\subsubsection{Proof of Statement (i)}

Recall that 
\begin{equation*}
\mathcal{R}(C_{1},C_{2}):=\left\{
(\{\Delta_{\Theta_{j}}\}_{j=0}^{p}):\sum_{j=0}^{p}\left\Vert \mathcal{P}%
_{j}^{\bot }(\Delta_{\Theta_{j}})\right\Vert_{\ast }\leq
C_{1}\sum_{j=0}^{p}\left\Vert \mathcal{P}_{j}(\Delta_{\Theta_{j}})\right%
\Vert_{\ast},\,\,\sum_{j=0}^{p}\left\Vert
\Delta_{\Theta_{j}}\right\Vert_{F}^{2}\geq C_{2}\sqrt{NT}\right\} .
\end{equation*}%
Define $\mathcal{R}(C_{1}):=\left\{
\{\Delta_{\Theta_{j}}\}_{j=0}^{p}:\sum_{j=0}^{p}\left\Vert \mathcal{P}%
_{j}^{\bot }(\Delta_{\Theta_{j}})\right\Vert_{\ast }\leq
C_{1}\sum_{j=0}^{p}\left\Vert \mathcal{P}_{j}(\Delta_{\Theta_{j}})\right%
\Vert_{\ast }\right\} $. By Lemma \ref{Lem:RS}, $\mathbb{P\{}\{\tilde{\Delta}%
_{\Theta_{j}}(\tau)\}_{j=0}^{p}$ $\in \mathcal{R}(3)\}\rightarrow 1.$ When $%
\{\tilde{\Delta}_{\Theta_{j}}\}_{j=0}^{p}\in \mathcal{R}(C_{1})$ and $\{%
\tilde{\Delta}_{\Theta_{j}}\}_{j=0}^{p}\notin \mathcal{R}(3,C_{2})$, we have 
$\sum_{j=0}^{p}\left\Vert \tilde{\Delta}_{\Theta_{j}}\right%
\Vert_{F}^{2}<C_{2}\sqrt{NT},$ which implies $\frac{1}{\sqrt{NT}}\left\Vert 
\tilde{\Delta}_{\Theta_{j}}\right\Vert_{F}=O_{p}\left((N\wedge
T)^{-1/2}\right), \quad \forall j\in [p]\cup \{0\}.$ It suffices to consider
the case that $\{\tilde{\Delta}_{\Theta_{j}}\}_{j=0}^{p}\in \mathcal{R}%
(3,C_{2})$.

Define 
\begin{eqnarray*}
\mathbb{Q}_{\tau }\left(\left\{ \Theta_{j}\right\}_{j=0}^{p}\right) &=&\frac{%
1}{NT}\sum_{i=1}^{N}\sum_{t=1}^{T}\rho_{\tau
}\left(Y_{it}-\Theta_{0,it}-\sum_{j=1}^{p}X_{j,it}\Theta_{j,it}\right) ,%
\text{ and} \\
Q_{\tau }(\left\{ \Theta_{j}\right\}_{j=0}^{p}) &=&\frac{1}{NT}%
\sum_{i=1}^{N}\sum_{t=1}^{T}\mathbb{E}\left[ \rho_{\tau
}\left(Y_{it}-\Theta_{0,it}-\sum_{j=1}^{p}X_{j,it}\Theta_{j,it}\right) \bigg
|\mathscr{G}_{i,t-1}\right] ,
\end{eqnarray*}%
where $\mathscr{G}_{i,t-1}$ is defined in Assumption \ref{ass:1}. Then we
have 
\begin{align}
0& \geq \mathbb{Q}_{\tau }\left(\left\{ \Theta_{j}^{0}+\tilde{\Delta}%
_{\Theta_{j}}\right\}_{j=0}^{p}\right) -\mathbb{Q}_{\tau
}\left(\left\{\Theta_{j}^{0}\right\}_{j=0}^{p}\right)
+\sum_{j=0}^{p}\nu_{j}\left(\left\Vert \Theta_{j}^{0}+\tilde{\Delta}%
_{\Theta_{j}}\right\Vert_{\ast}-\left\Vert \Theta_{j}^{0}\right\Vert_{\ast
}\right)  \notag  \label{A.1} \\
& =\left\{ \mathbb{Q}_{\tau }\left(\left\{ \Theta_{j}^{0}+\tilde{\Delta}%
_{\Theta_{j}}\right\}_{j=0}^{p}\right) -\mathbb{Q}_{\tau
}\left(\left\{\Theta_{j}^{0}\right\}_{j=0}^{p}\right) -\left[ Q_{\tau
}\left(\left\{\Theta_{j}^{0}+\tilde{\Delta}_{\Theta_{j}}\right\}_{j=0}^{p}%
\right)-Q_{\tau }\left(\left\{ \Theta_{j}^{0}\right\}_{j=0}^{p}\right) %
\right]\right\}  \notag \\
& +\left[ Q_{\tau }\left(\left\{ \Theta_{j}^{0}+\tilde{\Delta}%
_{\Theta_{j}}\right\}_{j=0}^{p}\right) -Q_{\tau }\left(\left\{
\Theta_{j}^{0}\right\}_{j=0}^{p}\right) \right] +\sum_{j=0}^{p}\nu_{j}\left(%
\left\Vert \Theta_{j}^{0}+\tilde{\Delta}_{\Theta_{j}}\right\Vert_{\ast}-%
\left\Vert \Theta_{j}^{0}\right\Vert_{\ast }\right) ,
\end{align}%
where the first inequality holds by the definition of the estimator. Noted
that 
\begin{equation}  \label{A.2}
\nu_{j}\left\vert \left\Vert \Theta_{j}^{0}+\tilde{\Delta}%
_{\Theta_{j}}\right\Vert_{\ast }-\left\Vert
\Theta_{j}^{0}\right\Vert_{\ast}\right\vert \leq \nu_{j}\left\Vert \tilde{%
\Delta}_{\Theta_{j}}\right\Vert_{\ast }\leq
c_{8}\nu_{j}\sum_{j=0}^{p}\left\Vert \tilde{\Delta}_{\Theta_{j}}\right%
\Vert_{F}
\end{equation}
where the first inequality is due to triangle inequality and the second
inequality holds by Lemma \ref{Lem:delta nuclear} with positive constant $%
c_{8}$ defined in the lemma.

Define 
\begin{align}
& \rho_{it}\left(\left\{
\Delta_{\Theta_{j},it},X_{j,it}\right\}_{j=0}^{p},\epsilon_{it}\right)
=\rho_{\tau
}\left(\epsilon_{it}-\Delta_{0,it}-\sum_{j=1}^{p}X_{j,it}\Delta_{%
\Theta_{j},it}\right) -\rho_{\tau}\left(\epsilon_{it}\right) ,
\label{eq:rho} \\
& \bar{\rho}_{it}\left(\left\{
\Delta_{\Theta_{j},it},X_{j,it}\right\}_{j=0}^{p},\epsilon_{it}\right) =%
\mathbb{E}\left[ \rho_{\tau
}\left(\epsilon_{it}-\Delta_{0,it}-\sum_{j=1}^{p}X_{j,it}\Delta_{%
\Theta_{j},it}\right) -\rho_{\tau }\left(\epsilon_{it}\right) \bigg |%
\mathscr{G}_{i,t-1}\right] ,  \notag \\
& \tilde{\rho}_{it}\left(\left\{
\Delta_{\Theta_{j},it},X_{j,it}\right\}_{j=0}^{p},\epsilon_{it}\right)
=\rho_{it}\left(\left\{
\Delta_{\Theta_{j},it},X_{j,it}\right\}_{j=0}^{p},\epsilon_{it}\right) -\bar{%
\rho}_{it}\left(\left\{
\Delta_{\Theta_{j},it},X_{j,it}\right\}_{j=0}^{p},\epsilon_{it}\right) ,
\label{eq:rhotilde} \\
& \mathscr{A}_{1}=\left\{ \sup_{\left\{
\Delta_{\Theta_{j}}\right\}_{j=0}^{p}\in \mathcal{R}(3,C_{2})}\frac{%
\left\vert \frac{1}{NT}\sum_{i=1}^{N}\sum_{t=1}^{T}\tilde{\rho}%
_{it}\left(\left\{
\Delta_{\Theta_{j},it},X_{j,it}\right\}_{j=0}^{p},\epsilon_{it}\right)
\right\vert }{\sum_{j=0}^{p}\left\Vert \Delta_{\Theta_{j}}\right\Vert_{F}}%
\leq C_{5}a_{NT}\right\} ,  \notag
\end{align}%
with $a_{NT}=\frac{\sqrt{\left(N\vee T\right) \log \left(N\vee T\right) }}{NT%
}$ for some positive constant $C_{5}$, and $\mathscr{A}_{1}^{c}$ as the
complement of $\mathscr{A}_{1}$.

On $\mathscr{A}_{1}$, following (\ref{A.1}), we have w.p.a.1 
\begin{align*}
0& \geq \left[ Q_{\tau }\left(\left\{ \Theta_{j}^{0}+\tilde{\Delta}%
_{\Theta_{j}}\right\}_{j=0}^{p}\right)-Q_{\tau }\left(\left\{
\Theta_{j}^{0}\right\}_{j=0}^{p}\right) \right] \\
&-\left\vert \mathbb{Q}_{\tau }\left(\left\{ \Theta_{j}^{0}+\tilde{\Delta}%
_{\Theta_{j}}\right\}_{j=0}^{p}\right) -\mathbb{Q}_{\tau
}\left(\left\{\Theta_{j}^{0}\right\}_{j=0}^{p}\right) -\left[ Q_{\tau
}\left(\left\{\Theta_{j}^{0}+\tilde{\Delta}_{\Theta_{j}}\right\}_{j=0}^{p}%
\right)-Q_{\tau }\left(\left\{ \Theta_{j}^{0}\right\}_{j=0}^{p}\right) %
\right]\right\vert \\
&-\sum_{j=0}^{p}\nu_{j}\left\vert \left\Vert \Theta_{j}^{0}+\tilde{\Delta}%
_{\Theta_{j}}\right\Vert_{\ast }-\left\Vert \Theta_{j}^{0}\right\Vert_{\ast
}\right\vert \\
& \geq \frac{c_{7}C_{3}}{NT\xi_{N}^{2}}\sum_{j=0}^{p}\left\Vert\tilde{\Delta}%
_{\Theta_{j}}\right\Vert_{F}^{2}-\frac{c_{7}C_{4}}{NT\xi_{N}^{2}}%
\left(N+T\right)-\frac{C_{5}\sum_{j=0}^{p}\left\Vert \tilde{\Delta}%
_{\Theta_{j}}\right\Vert_{F}\sqrt{\left(N\vee T\right) \log \left(N\vee
T\right) }}{NT} \\
& -c_{8}\sum_{j=0}^{p}\nu_{j}\sum_{j=0}^{p}\left\Vert\tilde{\Delta}%
_{\Theta_{j}}\right\Vert_{F},
\end{align*}%
where the first inequality is by triangle inequality, the second inequality
holds by (\ref{A.2}) and Lemmas \ref{Lem:exp lower} and \ref{Lem:empirical
process}. It follows that 
\begin{align*}
& \frac{c_{7}C_{3}}{NT\xi_{N}^{2}}\sum_{j=0}^{p}\left\Vert \tilde{\Delta}%
_{\Theta_{j}}\right\Vert_{F}^{2}-\frac{c_{8}\left(p+1\right) c_{0}\sqrt{%
N\vee \left(T\log T\right) }+C_{5}\sqrt{\left(N\vee T\right) \log
\left(N\vee T\right) }}{NT}\sum_{j=0}^{p}\left\Vert \tilde{\Delta}%
_{\Theta_{j}}\right\Vert_{F} \\
& -\frac{c_{7}C_{4}}{NT\xi_{N}^{2}}\left(N+T\right) \leq 0,
\end{align*}%
which implies 
\begin{equation*}
\frac{1}{\sqrt{NT}}\left\Vert \tilde{\Delta}_{\Theta_{j}}\right\Vert_{F}=O%
\left(\frac{\sqrt{\log (N\vee T)}\xi_{N}^{2}}{\sqrt{N\wedge T}}\right)
\end{equation*}%
under the event $\mathscr{A}_{1}$.

By Lemma \ref{Lem:empirical process}, for any $\delta >0$, we can choose a
sufficiently large $C_{5}$ such that $\mathbb{P}\left\{ \mathscr{A}%
_{1}^{c}\right\} \leq \delta $. This implies 
\begin{equation*}
\frac{1}{\sqrt{NT}}\left\Vert \tilde{\Theta}_{j}-\Theta_{j}^{0}\right%
\Vert_{F}=O_{p}\left(\frac{\sqrt{\log (N\vee T)}\xi_{N}^{2}}{\sqrt{N\wedge T}%
}\right) ,\quad \forall j\in \left\{ 0,\cdots ,p\right\} .\quad \blacksquare
\end{equation*}

\subsubsection{Proof of Statement (ii)}

With the statement (i), the second statement holds by the Weyl's inequality. 
$\blacksquare$

\subsubsection{Proof of Statement (iii)}

For $\forall j\in \left\{ 0,\cdots ,p\right\} $, let $\tilde{D}_{j}=\frac{1}{%
NT}\tilde{\Theta}_{j}^{\prime }\tilde{\Theta}_{j}=\hat{\tilde{\mathcal{V}}}%
_{j}\hat{\tilde{\Sigma}}_{j}\hat{\tilde{\mathcal{V}}}_{j}^{\prime }$, and
recall that $D_{j}^{0}=\frac{1}{NT}\Theta _{j}^{0\prime }\Theta _{j}^{0}=%
\mathcal{V}_{j}^{0}\Sigma _{j}^{0}\mathcal{V}_{j}^{0\prime }$. Define the
event $\mathscr{A}_{2}(M)=\left\{ \frac{1}{\sqrt{NT}}\left\Vert \tilde{\Theta%
}_{j}-\Theta _{j}^{0}\right\Vert _{F}\leq M\eta _{N},\forall j\in \left\{
0,\cdots ,p\right\} \right\} $ with $\eta _{N}=\frac{\sqrt{\log (N\vee T)}%
\xi _{N}^{2}}{\sqrt{N\wedge T}}$. On event $\mathscr{A}_{2}(M)$, for some
positive constant $C_{6}$, 
\begin{equation*}
\left\Vert \tilde{D}_{j}-D_{j}^{0}\right\Vert _{F}^{2}\leq C_{6}\eta _{N}.
\end{equation*}%
By Lemma C.1 of \cite{su2020strong} and Davis-Kahan sin$\Theta $ theorem in 
\cite{yu2015useful}, there exists an orthogonal rotation matrix $O_{j}$ such
that 
\begin{equation*}
\left\Vert \mathcal{V}_{j}^{0}-\hat{\tilde{\mathcal{V}}}_{j}O_{j}\right\Vert
_{F}\leq \sqrt{K_{j}}\left\Vert \mathcal{V}_{j}^{0}-\hat{\tilde{\mathcal{V}}}%
_{j}O_{j}\right\Vert _{op}\leq \sqrt{K_{j}}\frac{\sqrt{2}C_{6}\eta _{N}}{%
\Sigma _{K_{j},1}^{2}-C_{6}\eta _{N}}\leq \sqrt{K_{j}}\frac{\sqrt{2}%
C_{6}\eta _{N}}{c_{\sigma }^{2}-C_{6}\eta _{N}}\leq \sqrt{K_{j}}\frac{\sqrt{2%
}C_{6}\eta _{N}}{C_{7}c_{\sigma }^{2}}\leq C_{8}\eta _{N},
\end{equation*}%
for $C_{8}=\frac{\sqrt{2}C_{6}\sqrt{\bar{K}}}{C_{7}c_{\sigma }^{2}}$. The
second last inequality holds with some positive constant $C_{7}$ and the
fact that $\eta _{N}$ is sufficiently small.

Then $\left\Vert V_{j}^{0}-\tilde{V}_{j}O_{j}\right\Vert_{F}\leq C_{8}\sqrt{T%
}\eta_{N}$ by the definition of $\tilde{V}_{j}$ and $V_{j}$. Let $\mathscr{A}%
_{2}^{c}(M)$ be the complement of event $\mathscr{A}_{2}(M)$. Combining the
fact that $\mathbb{P}\left\{ \mathscr{A}_{2}^{c}(M)\right\}\rightarrow 0$,
it implies $\left\Vert V_{j}^{0}-\tilde{V}_{j}O_{j}\right\Vert_{F}=O_{p}%
\left(\sqrt{T}\eta_{N}\right) .\quad\blacksquare $

\subsection{Proof of Theorem \protect\ref{Thm2}}

\subsubsection{Proof of Statement (i)}

We prove that $\max_{i\in I_{2}}\left\Vert O_{j}^{(1)\prime}\dot{u}_{i,j}^{(1)}-u_{i,j}^{0}\right\Vert_{2}=O_{p}(\eta_{N})$ and the $\max_{i\in I_{3}}\left\Vert O_{j}^{(1)\prime}\dot{u}_{i,j}^{(1)}-u_{i,j}^{0}\right\Vert_{2}=O_{p}(\eta_{N})$ can be derived in the same manner once statement (ii) is satisfied. Define 
\begin{align*}
& \tilde{\mathbb{Q}}_{\tau i}\left(\left\{ u_{i,j}\right\}_{j\in [p]\cup
\{0\}}\right) =\frac{1}{T}\sum_{t=1}^{T}\rho_{\tau
}\left(Y_{it}-u_{i,0}^{\prime}\tilde{v}_{t,0}^{(1)}-\sum_{j=1}^{p}u_{i,j}^{%
\prime}\tilde{v}_{t,j}^{(1)}X_{j,it}\right) , \\
& u_{i}^{0}=\left[ u_{i,0}^{0\prime},\cdots ,u_{i,p}^{0\prime}\right]
^{\prime},\quad \dot{\Delta}_{i,j}=O_{j}^{(1)\prime}\dot{u}%
_{i,j}^{(1)}-u_{i,j}^{0},\quad \dot{\Delta}_{i,u}=\left(\dot{\Delta}%
_{i,0}^{\prime},\cdots ,\dot{\Delta}_{i,p}^{\prime}\right) ^{\prime}, \\
& \tilde{\Phi}_{it}^{(1)}=\left[ \left(O_{0}^{(1)\prime}\tilde{v}%
_{t,0}^{(1)}\right) ^{\prime},\left(O_{1}^{(1)\prime}\tilde{v}%
_{t,1}^{(1)}X_{1,it}\right) ^{\prime},\cdots ,\left(O_{p}^{(1)\prime}\tilde{v%
}_{t,p}^{(1)}X_{p,it}\right) ^{\prime}\right] ^{\prime},\quad \tilde{\Phi}%
_{i}^{(1)}=\frac{1}{T}\sum_{t=1}^{T}\tilde{\Phi}_{it}^{(1)}\tilde{\Phi}%
_{it}^{(1)\prime}, \\
& w_{1,it}=Y_{it}-\left(O_{0}^{(1)}u_{i0}^{0}\right) ^{\prime}\tilde{v}%
_{t,0}^{(1)}-\sum_{j=1}^{p}\left(O_{j}^{(1)}u_{i,j}^{0}\right) ^{\prime}%
\tilde{v}_{t,j}^{(1)}X_{j,it}=Y_{it}-u_{i}^{0\prime}\tilde{\Phi}%
_{it}^{(1)}=\epsilon_{it}-u_{i}^{0\prime}(\tilde{\Phi}_{it}^{(1)}-%
\Phi_{it}^{0}),
\end{align*}%
and for $i\in I_{2}$, recall that $\mathscr{D}_{e_{i}}^{I_{1}}$ is the $%
\sigma$-field generated by 
\begin{equation*}
\left\{ \epsilon_{i^{*}t},e_{i^{*}t}\right\}_{i^{*}\in I_{1},t\in[T]%
}\bigcup\left\{ e_{it}\right\}_{t\in[T]}\bigcup \left\{
V_{j}^{0}\right\}_{j\in [p]\cup \{0\}}\bigcup \left\{
W_{j}^{0}\right\}_{j\in [ p]}.
\end{equation*}

By construction, we have 
\begin{align}
0& \geq \dot{Q}_{\tau i,u}\left(\left\{ \dot{u}_{i,j}^{(1)}\right\}_{j\in[p]%
\cup \{0\}}\right) -\dot{Q}_{\tau
i,u}\left(\left\{O_{j}^{(1)}u_{i,j}^{0}\right\}_{j\in [p]\cup \{0\}}\right) 
\notag  \label{A:obj diff} \\
& =\frac{1}{T}\sum_{t=1}^{T}\rho_{\tau }\left(Y_{it}-\dot{u}%
_{i,0}^{(1)\prime}O_{0}^{(1)}O_{0}^{(1)\prime}\tilde{v}_{t,0}^{(1)}-%
\sum_{j=1}^{p}\dot{u}_{i,j}^{(1)\prime}O_{j}^{(1)}O_{j}^{(1)\prime}\tilde{v}%
_{t,j}^{(1)}X_{j,it}\right)  \notag \\
& -\frac{1}{T}\sum_{t=1}^{T}\rho_{\tau
}\left(Y_{it}-u_{i,0}^{0\prime}O_{0}^{(1)\prime}\tilde{v}_{t,0}^{(1)}-%
\sum_{j=1}^{p}u_{i,j}^{0\prime}O_{j}^{(1)\prime}\tilde{v}%
_{t,j}^{(1)}X_{j,it}\right)  \notag \\
& =\frac{1}{T}\sum_{t=1}^{T}\left[ \tilde{\Phi}_{it}^{(1)\prime}\dot{\Delta}%
_{i,u}\left(\tau -\mathbf{1}\left\{ w_{1,it}\leq 0\right\} \right) \right] +%
\frac{1}{T}\sum_{t=1}^{T}\int_{0}^{\tilde{\Phi}_{it}^{(1)\prime}\dot{\Delta}%
_{i,u}}\left(\mathbf{1}\left\{ w_{1,it}\leq s\right\} -\mathbf{1}%
\left\{w_{1,it}\leq 0\right\} \right) ds  \notag \\
& =\frac{1}{T}\sum_{t=1}^{T}\mathbb{E}\left[ \tilde{\Phi}_{it}^{(1)\prime}%
\left(\tau -\mathbf{1}\left\{ w_{1,it}\leq 0\right\} \right) \bigg |%
\mathscr{D}_{e_{i}}^{I_{1}}\right] \dot{\Delta}_{i,u}  \notag \\
& +\left\{ \frac{1}{T}\sum_{t=1}^{T}\left[ \tilde{\Phi}_{it}^{(1)\prime}%
\left(\tau -\mathbf{1}\left\{ w_{1,it}\leq 0\right\} \right) -\mathbb{E}%
\left(\tilde{\Phi}_{it}^{(1)\prime}\left(\tau -\mathbf{1}\left\{w_{1,it}\leq
0\right\} \right) \bigg |\mathscr{D}_{e_{i}}^{I_{1}}\right) \right] \right\} 
\dot{\Delta}_{i,u}  \notag \\
& +\frac{1}{T}\sum_{t=1}^{T}\int_{0}^{\tilde{\Phi}_{it}^{(1)\prime}\dot{%
\Delta}_{i,u}}\left(\mathbf{1}\left\{ \epsilon_{it}\leq s\right\} -\mathbf{1}%
\left\{ \epsilon_{it}\leq 0\right\} \right) -\mathbb{E}\left(\mathbf{1}%
\left\{ \epsilon_{it}\leq s\right\} -\mathbf{1}\left\{ \epsilon_{it}\leq
0\right\} \bigg |\mathscr{D}_{e_{i}}^{I_{1}}\right) ds  \notag \\
& +\frac{1}{T}\sum_{t=1}^{T}\int_{0}^{\tilde{\Phi}_{it}^{(1)\prime}\dot{%
\Delta}_{i,u}}\mathbb{E}\left(\mathbf{1}\left\{ \epsilon_{it}\leq s\right\} -%
\mathbf{1}\left\{ \epsilon_{it}\leq 0\right\} \bigg |\mathscr{D}%
_{e_{i}}^{I_{1}}\right) ds  \notag \\
& +\frac{1}{T}\sum_{t=1}^{T}\int_{0}^{\tilde{\Phi}_{it}^{(1)\prime}\dot{%
\Delta}_{i,u}}\left(\mathbf{1}\left\{ w_{1,it}\leq s\right\} -\mathbf{1}%
\left\{\epsilon_{it}\leq s\right\} \right) -\mathbb{E}\left(\mathbf{1}%
\left\{w_{1,it}\leq s\right\} -\mathbf{1}\left\{ \epsilon_{it}\leq s\right\} %
\bigg|\mathscr{D}_{e_{i}}^{I_{1}}\right) ds  \notag \\
& +\frac{1}{T}\sum_{t=1}^{T}\int_{0}^{\tilde{\Phi}_{it}^{(1)\prime}\dot{%
\Delta}_{i,u}}\mathbb{E}\left(\mathbf{1}\left\{ w_{1,it}\leq s\right\} -%
\mathbf{1}\left\{ \epsilon_{it}\leq s\right\} \bigg |\mathscr{D}%
_{e_{i}}^{I_{1}}\right) ds  \notag \\
& +\frac{1}{T}\sum_{t\in[T]}\left[ \tilde{\Phi}_{it}^{(1)\prime}\dot{\Delta}%
_{i,u}\left(\mathbf{1}\left\{ \epsilon_{it}\leq 0\right\} -\mathbf{1}\left\{
w_{1,it}\leq 0\right\} \right) \right]  \notag \\
& :=\sum_{m=1}^{7}A_{m,i},
\end{align}%
where the first inequality holds by the definition of the estimator and the
second equality holds by Knight's identity in \cite{knight1998limiting}
which states that 
\begin{equation*}
\rho_{\tau }(u-v)-\rho_{\tau }(u)=v(\tau -\mathbf{1}\left\{ u\leq 0\right\}
)+\int_{0}^{v}\left(\mathbf{1}\left\{ u\leq s\right\} -\mathbf{1}\{u\leq
0\}\right) ds.
\end{equation*}%
After simple manipulation, we have 
\begin{align*}
\left\vert A_{4,i}\right\vert &
=A_{4,i}\leq-A_{1,i}-A_{2,i}-A_{3,i}-A_{5,i}-A_{6,i}-A_{7,i} \\
& \leq \left\vert A_{1,i}\right\vert +\left\vert
A_{2,i}\right\vert+\left\vert A_{3,i}\right\vert +\left\vert
A_{5,i}\right\vert +\left\vert A_{6,i}\right\vert +\left\vert
A_{7,i}\right\vert .
\end{align*}

Define, for some constant $M$, an event set 
\begin{equation*}
\mathscr{A}_{3}(M)=\left\{ \max_{i\in I_{2}}\left(|A_{m,i}|/\left\Vert \dot{%
\Delta}_{i,u}\right\Vert_{2}\right) \leq M\eta_{N},\quad
m=1,2,3,5,6,7\right\}
\end{equation*}%
and 
\begin{equation}
q_{i}^{I}=\inf_{\Delta }\frac{\left[ \frac{1}{T}\sum_{t\in[T]}\left(\tilde{%
\Phi}_{it}^{(1)\prime}\Delta \right) ^{2}\right] ^{\frac{3}{2}}}{\frac{1}{T}%
\sum_{t\in[T]}\left\vert \tilde{\Phi}_{it}^{(1)\prime}\Delta\right\vert ^{3}}%
.  \label{eq:qiI}
\end{equation}%
Then, under $\mathscr{A}_{3}(M)$, we have 
\begin{align}
M\eta_{N}\left\Vert \dot{\Delta}_{i,u}\right\Vert_{2}& \geq |A_{4,i}|  \notag
\\
& \geq \min \left(\frac{\left(3c_{11}^{2}\underline{\mathfrak{f}}-c_{11}^{3}%
\bar{\mathfrak{f}}^{\prime }\right) c_{\phi }\max_{i\in I_{2}}\left\Vert 
\dot{\Delta}_{i,u}\right\Vert_{2}^{2}}{12},\frac{\left(3c_{11}^{2}\underline{%
\mathfrak{f}}-c_{11}^{3}\bar{\mathfrak{f}}^{\prime}\right) \sqrt{c_{\phi }}%
q_{i}^{I}\max_{i\in I_{2}}\left\Vert \dot{\Delta}_{i,u}\right\Vert_{2}}{6%
\sqrt{2}}\right) ,  \label{A:less case}
\end{align}%
where $c_{11}<\min (\frac{3\underline{\mathfrak{f}}}{\bar{\mathfrak{f}}%
^{\prime }},1)$ and the second inequality holds by Lemma \ref{Lem:A}. 
% $\left\Vert V_{j}^{0}-\tilde{V}_{j}O_{j}\right\Vert_{F}=O\left(\sqrt{T}\eta_{N}\right)$ for $\forall j\in[p]\cup\{0\}$. Under event $\mathscr{A}_{3}(M)$, Lemma \ref{Lem:A} shows there exist constants $C_{9}>0$ an $c_{11} <?$ such that
% \begin{align}
% \label{A:less case}
%     &\leq C_{9}\max_{i\in I_{2}}\left\Vert\dot{\Delta}_{i,u}\right\Vert_{2}\eta_{N}\xi_{N}.
%     \end{align}
In addition, note that 
\begin{align*}
& \max_{i\in I_{2}}\frac{1}{T}\sum_{t\in[T]}\left\Vert \tilde{\Phi}%
_{it}^{(1)}\right\Vert_{2}^{3}=\max_{i\in I_{2}}\frac{1}{T}\sum_{t\in[T]%
}\left(\left\Vert \tilde{v}_{t,0}^{(1)}\right\Vert_{2}^{2}+\sum_{j\in
[p]}\left\Vert \tilde{v}_{t,j}^{(1)}X_{j,it}\right\Vert_{2}^{2}\right) ^{3/2}
\\
& \leq \max_{i\in I_{2}}\frac{1}{T}\sum_{t\in[T]}\left[ \left(\frac{2M}{%
c_{\sigma }}\right) ^{2}\left(1+\sum_{j\in [p]}X_{j,it}^{2}\right) \right]
^{3/2}\leq C_{9},\quad \text{a.s.}
\end{align*}%
with $C_{9}$ being a positive constant, where the first inequality holds by
Lemma \ref{Lem:bounded u&v_tilde}(ii) and the second inequality is by
Assumption \ref{ass:1}(iv). Then we have by Lemma \ref{Lem:phi eig} 
\begin{equation*}
q_{i}^{I}\geq \inf_{\Delta }\frac{\left\Vert \Delta \right\Vert_{2}^{3}\left[
\lambda_{\min }\left(\tilde{\Phi}_{i}^{(1)}\right) \right] ^{2/3}}{%
\left\Vert \Delta \right\Vert_{2}^{3}\frac{1}{T}\sum_{t\in [T]}\left\Vert 
\tilde{\Phi}_{it}^{(1)}\right\Vert_{2}^{3}}\geq \frac{\min_{i\in I_{2}}\left[
\lambda_{\min }\left(\tilde{\Phi}_{i}^{(1)}\right) \right] ^{2/3}}{%
\max_{i\in I_{2}}\frac{1}{T}\sum_{t\in[T]}\left\Vert \tilde{\Phi}%
_{it}^{(1)}\right\Vert_{2}^{3}}>\frac{\left(c_{\phi }/2\right)^{2/3}}{C_{9}},
\end{equation*}%
which implies 
\begin{equation*}
\frac{\left(3c_{11}^{2}\underline{\mathfrak{f}}-c_{11}^{3}\bar{\mathfrak{f}}%
^{\prime }\right) \sqrt{c_{\phi }}q_{i}^{I}\max_{i\in I_{2}}\left\Vert \dot{%
\Delta}_{i,u}\right\Vert_{2}}{6\sqrt{2}}>C_{10}\max_{i\in I_{2}}\left\Vert 
\dot{\Delta}_{i,u}\right\Vert_{2}\eta_{N},
\end{equation*}%
as $C_{10}$ is defined to be the positive constant and $\eta_{N}=o(1)$.
Combining this with (\ref{A:less case}), we have 
\begin{equation*}
\max_{i\in I_{2}}\left\Vert O_{j}^{(1)\prime}\dot{u}_{i,j}^{(1)}-u_{i,j}^{0}%
\right\Vert_{2}\leq \left\Vert \dot{\Delta}_{i,u}\right\Vert_{2}\leq
M^{\prime }\eta_{N}
\end{equation*}%
for some constant $M^{\prime }$ which may depends on $M$. In addition, for
an arbitrary constant $e>0$, we can find a sufficiently large constant $M$
such that $\mathbb{P}(\mathscr{A}_{2}^{c}(M))\leq e$, which implies $%
\max_{i\in I_{2}}\left\Vert O_{j}^{(1)\prime}\dot{u}_{i,j}^{(1)}-u_{i,j}^{0}%
\right\Vert_{2}=O_{p}(\eta_{N}) .\quad \blacksquare $

\subsubsection{Proof of Statement (ii)}

Differently from the proof in the previous subsection, owing to the
dependence of $\dot{u}_{i,j}^{(1)}$ and $\epsilon_{it}$, we can not directly
use conditional exponential inequality. In this subsection, we will show how
to handle this dependence in detail. Recall that $\mathscr{D}_{e_{it}}$ is
the $\sigma$-field generated by $\{e_{j,it}\}_{j\in [p]}\cup \left\{
V_{j}^{0}\right\}_{j\in [p]\cup \{0\}}\cup \left\{W_{j}^{0}\right\}_{j\in
[p]}$ and define 
\begin{align*}
& \dot{Q}_{\tau t,v}\left(\left\{ v_{t,j}\right\}_{j\in [p]\cup\{0\}}\right)
=\frac{1}{N_{2}}\sum_{i\in I_{2}}\rho_{\tau }\left(Y_{it}-v_{t,0}^{\prime}%
\dot{u}_{i,0}^{(1)}-\sum_{j=1}^{p}v_{t,j}^{\prime}\dot{u}%
_{i,j}^{(1)}X_{j,it}\right) , \\
& v_{t}^{0}=\left(v_{t,0}^{0\prime},\cdots ,v_{t,p}^{0\prime}\right)
^{\prime},\quad \dot{\Delta}_{t,j}=O_{j}^{(1)\prime}\dot{v}%
_{t,j}^{(1)}-v_{t,j}^{0},\quad \dot{\Delta}_{t,v}=\left(\dot{\Delta}%
_{t,0}^{\prime},\cdots ,\dot{\Delta}_{t,p}^{\prime}\right) ^{\prime}, \\
& \dot{\Psi}_{it}^{(1)}=\left[ \left(O_{0}^{(1)\prime}\dot{u}%
_{i,0}^{(1)}\right) ^{\prime},\left(O_{1}^{(1)\prime}\dot{u}%
_{i,1}^{(1)}X_{1,it}\right) ^{\prime},...,\left(O_{p}^{(1)\prime}\dot{u}%
_{i,p}^{(1)}X_{p,it}\right) ^{\prime}\right] ^{\prime}, \\
& \dot{\Psi}_{t}^{(1)}=\frac{1}{N_{2}}\sum_{i\in I_{2}}\dot{\Psi}%
_{it}^{(1)}\dot{\Psi}_{it}^{(1)\prime}.
\end{align*}%
As in (\ref{A:obj diff}), we have 
\begin{align}
0& \geq \dot{Q}_{\tau t,v}\left(\left\{ \dot{v}_{t,j}^{(1)}\right\}_{j\in[p]%
\cup \{0\}}\right) -\dot{Q}_{\tau
t,v}\left(\left\{O_{j}^{(1)}v_{t,j}^{0}\right\}_{j\in [p]\cup \{0\}}\right) 
\notag  \label{B:obj diff} \\
& =\frac{1}{N_{2}}\sum_{i\in I_{2}}\rho_{\tau }\left(Y_{it}-\dot{v}%
_{t,0}^{(1)\prime}O_{0}^{(1)}O_{0}^{(1)\prime}\dot{u}_{i,0}^{(1)}-%
\sum_{j=1}^{p}\dot{v}_{i,j}^{(1)\prime}O_{j}^{(1)}O_{j}^{(1)\prime}\dot{u}%
_{t,j}^{(1)}X_{j,it}\right)  \notag \\
& -\frac{1}{N_{2}}\sum_{i\in I_{2}}\rho_{\tau
}\left(Y_{it}-v_{t,0}^{0\prime}O_{0}^{(1)\prime}\dot{u}_{i,0}^{(1)}-%
\sum_{j=1}^{p}v_{t,j}^{0\prime}O_{j}^{(1)\prime}\dot{u}_{t,j}^{(1)}X_{j,it}%
\right)  \notag \\
& =\frac{1}{N_{2}}\sum_{i\in I_{2}}\left[ \dot{\Psi}_{it}^{(1)\prime}\dot{%
\Delta}_{t,v}\left(\tau -\mathbf{1}\left\{ w_{3,it}\leq 0\right\} \right) %
\right] +\frac{1}{N_{2}}\sum_{i\in I_{2}}\int_{0}^{\dot{\Psi}%
_{it}^{(1)\prime}\dot{\Delta}_{t,v}}\left(\mathbf{1}\left\{ w_{3,it}\leq
s\right\} -\mathbf{1}\left\{ w_{3,it}\leq 0\right\} \right) ds  \notag \\
& =\frac{1}{N_{2}}\sum_{i\in I_{2}}\left[ \Psi_{it}^{0\prime}\dot{\Delta}%
_{t,v}\left(\tau -\mathbf{1}\left\{ \epsilon_{it}\leq 0\right\} \right) %
\right] +\frac{1}{N_{2}}\sum_{i\in I_{2}}\left[ \left(\dot{\Psi}%
_{it}^{(1)}-\Psi_{it}^{0}\right) ^{\prime}\dot{\Delta}_{t,v}\left(\tau -%
\mathbf{1}\left\{ \epsilon_{it}\leq 0\right\} \right) \right]  \notag \\
& +\frac{1}{N_{2}}\sum_{i\in I_{2}}\left[ \dot{\Psi}_{it}^{(1)\prime}\dot{%
\Delta}_{t,v}\left(\mathbf{1}\left\{ \epsilon_{it}\leq 0\right\} -\mathbf{1}%
\left\{ w_{3,it}\leq 0\right\} \right) \right]  \notag \\
& +\frac{1}{N_{2}}\sum_{i\in I_{2}}\int_{0}^{\dot{\Psi}_{it}^{(1)\prime}\dot{%
\Delta}_{t,v}}\mathbb{E}\left[ \left(\mathbf{1}\left\{ \epsilon_{it}\leq
s\right\} -\mathbf{1}\left\{ \epsilon_{it}\leq 0\right\} \right) ds\bigg |%
\mathscr{D}_{e_{it}}\right]  \notag \\
& +\frac{1}{N_{2}}\sum_{i\in I_{2}}\left\{ \int_{0}^{\dot{\Psi}%
_{it}^{(1)\prime}\dot{\Delta}_{t,v}}\left[ \left(\mathbf{1}\left\{
\epsilon_{it}\leq s\right\} -\mathbf{1}\left\{ \epsilon_{it}\leq 0\right\}
\right) -\mathbb{E}\left(\left(\mathbf{1}\left\{ \epsilon_{it}\leq s\right\}
-\mathbf{1}\left\{ \epsilon_{it}\leq 0\right\} \right) ds\bigg |\mathscr{D}%
_{e_{it}}\right) \right] \right\}  \notag \\
& +\frac{1}{N_{2}}\sum_{i\in I_{2}}\int_{0}^{\dot{\Psi}_{it}^{(1)\prime}\dot{%
\Delta}_{t,v}}\left(\mathbf{1}\left\{ w_{3,it}\leq s\right\} -\mathbf{1}%
\left\{ \epsilon_{it}\leq s\right\} \right) ds  \notag \\
& +\frac{1}{N_{2}}\sum_{i\in I_{2}}\int_{0}^{\dot{\Psi}_{it}^{(1)\prime}\dot{%
\Delta}_{t,v}}\left(\mathbf{1}\left\{ \epsilon_{it}\leq 0\right\}-\mathbf{1}%
\left\{ w_{3,it}\leq 0\right\} \right) ds  \notag \\
& =\frac{1}{N_{2}}\sum_{i\in I_{2}}\left[ \Psi_{it}^{0\prime}\dot{\Delta}%
_{t,v}\left(\tau -\mathbf{1}\left\{ \epsilon_{it}\leq 0\right\} \right) %
\right] +\frac{1}{N_{2}}\sum_{i\in I_{2}}\left[ \left(\dot{\Psi}%
_{it}^{(1)}-\Psi_{it}^{0}\right) ^{\prime}\dot{\Delta}_{t,v}\left(\tau -%
\mathbf{1}\left\{ \epsilon_{it}\leq 0\right\} \right) \right]  \notag \\
& +\frac{2}{N_{2}}\sum_{i\in I_{2}}\left[ \dot{\Psi}_{it}^{(1)\prime}\dot{%
\Delta}_{t,v}\left(\mathbf{1}\left\{ \epsilon_{it}\leq 0\right\} -\mathbf{1}%
\left\{ w_{3,it}\leq 0\right\} \right) \right]  \notag \\
& +\frac{1}{N_{2}}\sum_{i\in I_{2}}\int_{0}^{\dot{\Psi}_{it}^{(1)\prime}\dot{%
\Delta}_{t,v}}\mathbb{E}\left[ \left(\mathbf{1}\left\{ \epsilon_{it}\leq
s\right\} -\mathbf{1}\left\{ \epsilon_{it}\leq 0\right\} \right) \bigg |%
\mathscr{D}_{e_{it}}\right] ds  \notag \\
& +\frac{1}{N_{2}}\sum_{i\in I_{2}}\left\{ \int_{0}^{\dot{\Psi}%
_{it}^{(1)\prime}\dot{\Delta}_{t,v}}\left[ \left(\mathbf{1}\left\{
\epsilon_{it}\leq s\right\} -\mathbf{1}\left\{ \epsilon_{it}\leq 0\right\}
\right) -\mathbb{E}\left(\left(\mathbf{1}\left\{ \epsilon_{it}\leq s\right\}
-\mathbf{1}\left\{ \epsilon_{it}\leq 0\right\} \right) \bigg |\mathscr{D}%
_{e_{it}}\right) \right] ds\right\}  \notag \\
& +\frac{1}{N_{2}}\sum_{i\in I_{2}}\int_{0}^{\dot{\Psi}_{it}^{(1)\prime}\dot{%
\Delta}_{t,v}}\left(\mathbf{1}\left\{ w_{3,it}\leq s\right\} -\mathbf{1}%
\left\{ \epsilon_{it}\leq s\right\} \right) ds  \notag \\
& :=\sum_{m=1}^{6}B_{m,t}
\end{align}%
where $w_{3,it}=Y_{it}-v_{t,0}^{0\prime}O_{0}^{(1)\prime}\dot{u}%
_{i,0}^{(1)}-\sum_{j=1}^{p}v_{t,j}^{0\prime}O_{j}^{(1)\prime}\dot{u}%
_{i,j}^{(1)}X_{j,it}=Y_{it}-v_{t}^{0\prime}\dot{\Psi}_{it}^{(1)}=%
\epsilon_{it}-v_{t}^{0\prime}\left(\dot{\Psi}_{it}^{(1)}-\Psi_{it}^{0}%
\right) $, the last equality is by the fact that the third and the last
terms after the second equality are identical. Then we obtain $\left\vert
B_{4,t}\right\vert\leq \sum_{m\neq 4}\left\vert B_{m,t}\right\vert .$ By
Lemma \ref{Lem:B} and similar arguments for Theorem \ref{Thm2}(i), we obtain
that 
\begin{equation*}
\max_{t\in[T]}\left\Vert O_{j}^{(1)\prime}\dot{v}_{t,j}^{(1)}-v_{t,j}^{0}%
\right\Vert_{2}\leq \left\Vert\dot{\Delta}_{t,v}\right\Vert_{2}=O_{p}(%
\eta_{N}).
\end{equation*}
$\blacksquare $

\subsubsection{Proof of Statement (iii)}

In this proof we derive the linear expansion of $\{\dot{u}%
_{i,j}^{(1)}\}_{j\in [p]\cup \{0\}}$ for each $i\in I_{3}$. Recall that $%
\forall i\in I_{3}$, 
\begin{equation*}
\left\{ \dot{u}_{i,j}^{(1)}\right\}_{j\in [p]\cup \{0\}}=\argmin%
\limits_{\left\{ u_{i,j}\right\}_{j\in [p]\cup \{0\}}}\frac{1}{T}%
\sum_{t=1}^{T}\rho_{\tau }\left(Y_{it}-u_{i,0}^{\prime}\dot{v}%
_{t,0}^{(1)}-\sum_{j\in [p]}u_{i,j}^{\prime}\dot{v}_{t,j}^{(1)}X_{j,it}%
\right) .
\end{equation*}%
Define $\dot{\varpi}_{it}=\left(\dot{v}_{t,0}^{(1)\prime},X_{1,it}\dot{v}%
_{t,1}^{(1)\prime},\cdots ,X_{p,it}\dot{v}_{t,p}^{(1)\prime}\right)
^{\prime} $ and 
\begin{align*}
\dot{\mathbb{H}}_{i}\left(\{u_{i,j}\}_{j\in [p]\cup \{0\}}\right) & =\frac{1%
}{T}\sum_{t=1}^{T}\left[ \tau -\mathbf{1}\left\{ \epsilon_{it}\leq
g_{it}(\{u_{i,j}\}_{j\in [p]\cup \{0\}})\right\} \right] \dot{\varpi}_{it},
\\
\dot{\mathcal{H}}_{i}\left(\{u_{i,j}\}_{j\in [p]\cup \{0\}}\right) &=\frac{1%
}{T}\sum_{t=1}^{T}\mathbb{E}\left\{ \left[ \tau -\mathbf{1}%
\left\{\epsilon_{it}\leq g_{it}(\{u_{i,j}\}_{j\in [p]\cup \{0\}})\right\} %
\right] \dot{\varpi}_{it}\bigg |\mathscr{D}_{e_{i}}^{I_{1}\cup
I_{2}}\right\} , \\
& =\frac{1}{T}\sum_{t=1}^{T}\left\{ \left[ \tau -\mathfrak{F}%
_{it}\left(g_{it}(\{u_{i,j}\}_{j\in [p]\cup \{0\}})\right) \right] \dot{%
\varpi}_{it}\right\} , \\
\dot{\mathbb{W}}_{i}\left(\{\dot{u}_{i,j}^{(1)}\}_{j\in [p]\cup
\{0\}}\right) & =\dot{\mathbb{H}}_{i}\left(\{\dot{u}_{i,j}^{(1)}\}_{j\in[p]%
\cup \{0\}}\right) -\dot{\mathbb{H}}_{i}\left(\{O_{j}^{(1)}u_{i,j}^{0}\}_{j%
\in [p]\cup \{0\}}\right) \\
& -\left\{ \dot{\mathcal{H}}_{i}\left(\{\dot{u}_{i,j}^{(1)}\}_{j\in [p]\cup
\{0\}}\right) -\dot{\mathcal{H}}_{i}\left(\{O_{j}^{(1)}u_{i,j}^{0}\}_{j\in
[p]\cup \{0\}}\right) \right\} ,
\end{align*}
where $\mathscr{D}_{e_{i}}^{I_{1}\cup I_{2}}$ being the $\sigma$-field
generated by $\left\{ \epsilon_{i^{*}t},e_{i^{*}t}\right\}_{i^{*}\in
I_{1}\cup I_{2},t\in[T]}\bigcup \left\{ e_{it}\right\}_{t\in[T]}\bigcup
\left\{ V_{j}^{0}\right\}_{j\in [p]\cup \{0\}}\bigcup \left\{
W_{j}^{0}\right\}_{j\in [p]}$, and 
\begin{equation*}
g_{it}(\{u_{i,j}\}_{j\in [p]\cup \{0\}})=u_{i,0}^{\prime}\dot{v}%
_{t,0}^{(1)}+\sum_{j\in [p]}u_{i,j}^{\prime}\dot{v}%
_{t,j}^{(1)}X_{j,it}-u_{i,0}^{0\prime}v_{t,0}^{0}-\sum_{j\in
[p]}u_{i,j}^{0\prime}v_{t,j}^{0}X_{j,it}.
\end{equation*}%
Then we have 
\begin{align}
\dot{\mathbb{H}}_{i}\left(\{\dot{u}_{i,j}^{(1)}\}_{j\in [p]\cup\{0\}}\right)
& =\dot{\mathbb{H}}_{i}\left(\{O_{j}^{(1)}u_{i,j}^{0}\}_{j\in[p]\cup
\{0\}}\right) +\dot{\mathcal{H}}_{i}\left(\{\dot{u}_{i,j}^{(1)}\}_{j\in
[p]\cup \{0\}}\right) -\dot{\mathcal{H}}_{i}\left(\{O_{j}^{(1)}u_{i,j}^{0}%
\}_{j\in [p]\cup \{0\}}\right)  \notag  \label{B.25} \\
& +\dot{\mathbb{W}}_{i}\left(\{\dot{u}_{i,j}^{(1)}\}_{j\in
[p]\cup\{0\}}\right) .
\end{align}%
By Assumptions \ref{ass:1}(v) and \ref{ass:2} and Theorem \ref{Thm2}(ii), we
have $\max_{i\in I_{3},t\in[T]}\left\Vert \dot{\varpi}_{it}\right\Vert_{2}%
\leq C_{11}\xi_{N}$ a.s.. By the first order condition of the quantile
regression, we have 
\begin{equation}
\max_{i\in I_{3}}\left\Vert \dot{\mathbb{H}}_{i}\left(\{\dot{u}%
_{i,j}^{(1)}\}_{j\in [p]\cup \{0\}}\right) \right\Vert_{2}=O_{p}\left(\frac{1%
}{T}\max_{i\in I_{3},t\in[T]}\left\Vert \dot{\varpi}_{it}\right\Vert_{2}%
\right) =O_{p}\left(\frac{\xi_{N}}{T}\right) .  \label{B.26}
\end{equation}

Next, we show that $\max_{i\in I_{3}}\left\Vert\dot{\mathbb{W}}_{i}\left(\{%
\dot{u}_{i,j}^{(1)}\}_{j\in [p]\cup \{0\}}\right)\right\Vert_{2}
=o_{p}\left(\left(N\vee T\right) ^{-1/2}\right) $. Notice that 
\begin{align}
\dot{\mathbb{W}}_{i}\left(\{\dot{u}_{i,j}^{(1)}\}_{j\in [p]\cup\{0\}}\right)
& =\frac{1}{T}\sum_{t=1}^{T}\dot{\varpi}_{it}\left(\mathbf{1}\left\{
\epsilon_{it}\leq g_{it}(\{O_{j}^{(1)}u_{i,j}^{0}\}_{j\in [p]\cup
\{0\}})\right\} -\mathbf{1}\left\{ \epsilon_{it}\leq g_{it}(\{\dot{u}%
_{i,j}^{(1)}\}_{j\in [p]\cup \{0\}})\right\} \right)  \notag \\
& -\frac{1}{T}\sum_{t\in[T]}\dot{\varpi}_{it}\left[ \mathfrak{F}%
_{it}\left(g(\{O_{j}^{(1)}u_{i,j}^{0}\}_{j\in [p]\cup \{0\}})\right)-%
\mathfrak{F}_{it}\left(g_{it}(\{\dot{u}_{i,j}^{(1)}\}_{j\in
[p]\cup\{0\}})\right) \right]  \notag \\
& =\dot{\mathbb{W}}_{i}^{I}-\dot{\mathbb{W}}_{i}^{II}\left(\{\dot{u}%
_{i,j}^{(1)}\}_{j\in [p]\cup \{0\}}\right) ,  \label{B.27}
\end{align}%
where $\dot{\mathbb{W}}_{i}^{I}=\frac{1}{T}\sum_{t=1}^{T}\dot{\mathbb{W}}%
_{it}^{I}$ with 
\begin{align*}
&\dot{\mathbb{W}}_{it}^{I}=\dot{\varpi}_{it}\left\{ \left(\mathbf{1}\left\{
\epsilon_{it}\leq g_{it}(\{O_{j}^{(1)}u_{i,j}^{0}\}_{j\in [p]\cup
\{0\}})\right\} -\mathbf{1}\left\{ \epsilon_{it}\leq 0\right\}\right) -\left[
\mathfrak{F}_{it}\left(g_{it}(\{O_{j}^{(1)}u_{i,j}^{0}\}_{j\in [p]\cup
\{0\}})\right) -\mathfrak{F}_{it}(0)\right] \right\}
\end{align*}
and 
\begin{align*}
\dot{\mathbb{W}}_{i}^{II}\left(\{\dot{u}_{i,j}^{(1)}\}_{j\in [p]\cup
\{0\}}\right) &=\frac{1}{T}\sum_{t=1}^{T}\dot{\varpi}_{it}\biggl\{\left(%
\mathbf{1}\left\{ \epsilon_{it}\leq g_{it}(\{\dot{u}_{i,j}^{(1)}\}_{j\in
[p]\cup \{0\}})\right\} -\mathbf{1}\left\{\epsilon_{it}\leq 0\right\} \right)
\\
& -\left[ \mathfrak{F}_{it}\left(g_{it}(\{\dot{u}_{i,j}^{(1)}\}_{j\in
[p]\cup \{0\}})\right) -\mathfrak{F}_{it}(0)\right] \biggr\}.
\end{align*}%
Noting that 
\begin{equation*}
g_{it}(\{O_{j}^{(1)}u_{i,j}^{0}\}_{j\in [p]\cup
\{0\}})=\left(O_{0}^{(1)}u_{i,0}\right) ^{\prime}\left(\dot{v}%
_{t,0}^{(1)}-O_{0}^{(1)}v_{t,0}^{0}\right) +\sum_{j\in
[p]}\left(O_{j}^{(1)}u_{i,j}\right) ^{\prime}\left(\dot{v}%
_{t,j}^{(1)}-O_{j}^{(1)}v_{t,j}^{0}\right) X_{j,it},
\end{equation*}%
and $\max_{i\in I_{2},t\in[T]}\left\vert
g_{it}(\{O_{j}^{(1)}u_{i,j}^{0}\}_{j\in [p]\cup
\{0\}})\right\vert=O_{p}(\xi_{N}\eta_{N})$, we have, $\max_{i\in I_{3},t\in
[T]}\left\Vert \dot{\mathbb{W}}_{it}^{I}\right\Vert_{2}=O_{p}(\xi_{N})$, $%
\mathbb{E}\left(\mathbb{W}_{it}^{I}\bigg|\mathscr{D}_{e_{i}}^{I_{1}\cup
I_{2}}\right) =0$, and for some positive constants $C_{11}$, 
\begin{align*}
& \max_{i\in I_{3},t\in[T]}\left\Vert Var\left(\dot{\mathbb{W}}_{it}^{I}%
\bigg |\mathscr{D}_{e_{i}}^{I_{1}\cup I_{2}}\right) \right\Vert_{F} \\
& \leq \max_{i\in I_{3},t\in[T]}\left\Vert \mathbb{E}\left\{ \dot{\varpi}%
_{it}\dot{\varpi}_{it}^{\prime}\left(\mathbf{1}\left\{ \epsilon_{it}\leq
g(\{O_{j}^{(1)}u_{i,j}^{0}\}_{j\in [p]\cup \{0\}})\right\} -\mathbf{1}%
\left\{ \epsilon_{it}\leq 0\right\} \right) ^{2}\bigg |\mathscr{D}%
_{e_{i}}^{I_{1}\cup I_{2}}\right\} \right\Vert_{F} \\
& \leq \max_{i\in I_{3},t\in[T]}\left\Vert \dot{\varpi}_{it}\right%
\Vert_{2}^{2}\mathbb{E}\left(\mathbf{1}\left\{ 0\leq
\left\vert\epsilon_{it}\right\vert \leq \left\vert
g(\{O_{j}^{(1)}u_{i,j}^{0}\}_{j\in[p]\cup \{0\}})\right\vert \right\} \bigg |%
\mathscr{D}_{e_{i}}^{I_{1}\cup I_{2}}\right) =O_{p}(\xi_{N}^{3 }\eta_{N}),
\end{align*}%
and 
\begin{align*}
& \max_{i\in I_{3},t\in[T]}\sum_{s=t+1}^{T}\left\Vert Cov\left(\dot{\mathbb{W%
}}_{it}^{I},\dot{\mathbb{W}}_{is}^{I}\bigg |\mathscr{D}_{e_{i}}^{I_{1}\cup
I_{2}}\right) \right\Vert_{F} \\
& \lesssim \max_{i\in I_{3},t\in[T]}\sum_{s=t+1}^{T}\left[ \mathbb{E}%
\left(\left\Vert \dot{\mathbb{W}}_{it}^{I}\right\Vert_{F} ^{2+\vartheta }%
\bigg |\mathscr{D}_{e_{i}}^{I_{1}\cup I_{2}}\right) \right] ^{\frac{1}{%
2+\vartheta }}\left[ \mathbb{E}\left(\left\Vert \dot{\mathbb{W}}%
_{is}^{I}\right\Vert_{F}^{2+\vartheta }\bigg |\mathscr{D}_{e_{i}}^{I_{1}\cup
I_{2}}\right) \right] ^{\frac{1}{2+\vartheta }}\left[ \alpha (t-s)\right]
^{1-\frac{2}{2+\vartheta }} \\
& \leq \max_{i\in I_{3},t\in[T]}\left[ \mathbb{E}\left(\left\Vert \dot{%
\mathbb{W}}_{it}^{I}\right\Vert_{F} ^{2+\vartheta }\bigg |\mathscr{D}%
_{e_{i}}^{I_{1}\cup I_{2}}\right) \right] ^{\frac{2}{2+\vartheta }}\max_{t\in%
[T]}\sum_{s=t+1}^{T}\left[ \alpha (t-s)\right] ^{1-\frac{2}{2+\vartheta }}%
\text{a.s.} \\
& =O_{p}\left(\xi_{N}^{\frac{6+2\vartheta }{2+\vartheta }}\eta_{N}^{\frac{2}{%
2+\vartheta }}\right) ,
\end{align*}%
for any $\vartheta >0$, where the first inequality holds by Davydov's
inequality for conditional strong mixing processes, and the last equality
holds by the fact that $\mathbb{E}\left(\left\Vert \dot{\mathbb{W}}%
_{it}^{I}\right\Vert ^{2+\vartheta }\bigg |\mathscr{D}_{e_{i}}^{I_{1}\cup
I_{2}}\right) =O_{p}\left(\xi_{N}^{3+\vartheta }\eta_{N}\right) $ following
the similar argument as in (\ref{B.26}). Combining (\ref{B.26}) and (\ref%
{B.27}) yields 
\begin{equation*}
\max_{i\in I_{3},t\in[T]}\left\{ \left\Vert Var\left(\dot{\mathbb{W}}%
_{it}^{I}\right) \right\Vert_{F}+2\sum_{s>t}\left\Vert Cov\left(\dot{\mathbb{%
W}}_{it}^{I},\dot{\mathbb{W}}_{is}^{I}\right) \right\Vert_{F}\right\}
=O_{p}\left(\xi_{N}^{\frac{6+2\vartheta }{2+\vartheta }}\eta_{N}^{\frac{2}{%
2+\vartheta }}\right) .
\end{equation*}%
For any constant $C_{12}>0$, define 
\begin{align*}
\mathscr{A}_{5,i}(C_{12})& =\left\{ \max_{t\in[T]}\left\{ \left\Vert
Var\left(\dot{\mathbb{W}}_{it}^{I}\right)
\right\Vert_{F}+2\sum_{s>t}\left\Vert Cov\left(\dot{\mathbb{W}}_{it}^{I},%
\dot{\mathbb{W}}_{is}^{I}\right) \right\Vert_{F}\right\} \leq C_{12}\xi_{N}^{%
\frac{6+2\vartheta }{2+\vartheta }}\eta_{N}^{\frac{2}{2+\vartheta }}\right\}
, \\
\mathscr{A}_{5}(C_{12})& =\left\{ \max_{i\in I_{3},t\in[T]}\left\{\left\Vert
Var\left(\dot{\mathbb{W}}_{it}^{I}\right)
\right\Vert_{F}+2\sum_{s>t}\left\Vert Cov\left(\dot{\mathbb{W}}_{it}^{I},%
\dot{\mathbb{W}}_{is}^{I}\right) \right\Vert_{F}\right\} \leq C_{12}\xi_{N}^{%
\frac{6+2\vartheta }{2+\vartheta }}\eta_{N}^{\frac{2}{2+\vartheta }}\right\}.
\end{align*}
For any $e>0$, we can find a sufficiently large constants $C_{12}$ such that 
$\mathbb{P}(\mathscr{A}_{5}^{c}(C_{12}))\leq e$. Therefore, we have 
\begin{align}
& \mathbb{P}\left\{ \max_{i\in I_{3}}\left\Vert \frac{1}{T}\sum_{t=1}^{T}%
\dot{\mathbb{W}}_{it}^{I}\right\Vert_{2}>C_{13}\xi_{N}^{\frac{5+\vartheta}{%
2+\vartheta }}\left(\frac{\log (N\vee T)}{N\wedge T}\right) ^{\frac{1}{%
4+2\vartheta }}\sqrt{\frac{\log (N\vee T)}{T}}\right\}  \notag
\label{Wdot_exp1} \\
& \leq \mathbb{P}\left\{ \max_{i\in I_{3}}\left\Vert \frac{1}{T}%
\sum_{t=1}^{T}\dot{\mathbb{W}}_{it}^{I}\right\Vert_{2}>C_{13}\xi_{N}^{\frac{%
5+\vartheta }{2+\vartheta }}\left(\frac{\log (N\vee T)}{N\wedge T}\right) ^{%
\frac{1}{4+2\vartheta }}\sqrt{\frac{\log (N\vee T)}{T}},\mathscr{A}%
_{5}(C_{12})\right\} +e  \notag \\
& \leq \sum_{i\in I_{3}}\mathbb{P}\left\{ \left\Vert \frac{1}{T}%
\sum_{t=1}^{T}\dot{\mathbb{W}}_{it}^{I}\right\Vert_{2}>C_{13}\xi_{N}^{\frac{%
5+\vartheta }{2+\vartheta }}\left(\frac{\log (N\vee T)}{N\wedge T}\right) ^{%
\frac{1}{4+2\vartheta }}\sqrt{\frac{\log (N\vee T)}{T}},\mathscr{A}%
_{5}(C_{12})\right\} +e  \notag \\
& \leq \sum_{i\in I_{3}}\mathbb{P}\left\{ \left\Vert \frac{1}{T}%
\sum_{t=1}^{T}\dot{\mathbb{W}}_{it}^{I}\right\Vert_{2}>C_{13}\xi_{N}^{\frac{%
5+\vartheta }{2+\vartheta }}\left(\frac{\log (N\vee T)}{N\wedge T}\right) ^{%
\frac{1}{4+2\vartheta }}\sqrt{\frac{\log (N\vee T)}{T}},\mathscr{A}%
_{5,i}(C_{12})\right\} +e  \notag \\
& =\sum_{i\in I_{3}}\mathbb{E}\mathbb{P}\left\{ \left\Vert \frac{1}{T}%
\sum_{t=1}^{T}\dot{\mathbb{W}}_{it}^{I}\right\Vert_{2}>C_{13}\xi_{N}^{\frac{%
5+\vartheta }{2+\vartheta }}\left(\frac{\log (N\vee T)}{N\wedge T}\right) ^{%
\frac{1}{4+2\vartheta }}\sqrt{\frac{\log (N\vee T)}{T}}\Bigg|\mathscr{D}%
_{e_{i}}^{I_{1}\cup I_{2}}\right\} 1\{\mathscr{A}_{5,i}(C_{12})\}+e,
\end{align}%
where the second inequality is by the union bound, the third inequality is
by $\mathscr{A}_{5}(C_{12})\subset \mathscr{A}_{5,i}(C_{12})$, and the last
equality is owing to the fact that $\mathscr{A}_{5,i}(C_{12})$ is $%
\mathscr{D}_{e_{i}}^{I_{1}\cup I_{2}}$ measurable. Given $\mathscr{D}%
_{e_{i}}^{I_{1}\cup I_{2}}$, the randomness in $\dot{\mathbb{W}}_{i}^{I}$
only comes from $\{\epsilon_{it}\}_{t\in[T]}$, which are strong mixing given 
$\mathscr{D}_{e_{i}}^{I_{1}\cup I_{2}}$. Therefore, on $\mathscr{A}%
_{5,i}(C_{12})$, Lemma \ref{Lem:Bern}(ii) implies 
\begin{align}
& \mathbb{P}\left\{ \left\Vert\frac{1}{T}\sum_{t=1}^{T}\dot{\mathbb{W}}%
_{it}^{I}\right\Vert_{2}>C_{13}\xi_{N}^{\frac{5+\vartheta }{2+\vartheta }%
}\left(\frac{\log (N\vee T)}{N\wedge T}\right) ^{\frac{1}{4+2\vartheta }}%
\sqrt{\frac{\log (N\vee T)}{T}}\Bigg|\mathscr{D}_{e_{i}}^{I_{1}\cup
I_{2}}\right\}  \notag  \label{Wdot_exp2} \\
&=\mathbb{P}\left\{ \left\Vert\sum_{t=1}^{T}\dot{\mathbb{W}}%
_{it}^{I}\right\Vert_{2}>C_{13}\xi_{N}^{\frac{5+\vartheta }{2+\vartheta }%
}\left(\frac{\log (N\vee T)}{N\wedge T}\right) ^{\frac{1}{4+2\vartheta }}%
\sqrt{T\log (N\vee T)}\Bigg|\mathscr{D}_{e_{i}}^{I_{1}\cup I_{2}}\right\} 
\notag \\
& \leq \exp \left\{ -\frac{c_{12}C_{13}^{2}T\left(\frac{\log (N\vee T)}{%
N\wedge T}\right) ^{\frac{1}{2+\vartheta }}\xi_{N}^{\frac{10+2\vartheta}{%
2+\vartheta }}\log (N\vee T)}{C_{12}T\left(\frac{\log (N\vee T)}{N\wedge T}%
\right) ^{\frac{1}{2+\vartheta }}\xi_{N}^{\frac{10+2\vartheta}{2+\vartheta }%
}+C_{11}^{2}\xi_{N}^{2}+C_{11}C_{13}\sqrt{T\log (N\vee T)}\xi_{N}^{\frac{%
7+2\vartheta }{2+\vartheta }}\left(\frac{\log(N\vee T)}{N\wedge T}\right)^{%
\frac{1}{4+2\vartheta }}\left(\log T\right) ^{2}}\right\} ,
\end{align}%
which further implies 
\begin{equation*}
\mathbb{P}\left\{ \max_{i\in I_{3}}\left\Vert \frac{1}{T}\sum_{t=1}^{T}\dot{%
\mathbb{W}}_{it}^{I}\right\Vert_{2}>C_{13}\xi_{N}^{\frac{5+\vartheta }{%
2+\vartheta }}\left(\frac{\log (N\vee T)}{N\wedge T}\right) ^{\frac{1}{%
4+2\vartheta }}\sqrt{\frac{\log (N\vee T)}{T}}\right\} =o(1)+e.
\end{equation*}%
As $e$ is arbitrary, by Assumption \ref{ass:1}(ix), we obtain that 
\begin{equation*}
\max_{i\in I_{3}}\left\Vert \frac{1}{T}\sum_{t=1}^{T}\dot{\mathbb{W}}%
_{it}^{I}\right\Vert_{2}=O_{p}\left(C_{13}\xi_{N}^{\frac{5+\vartheta }{%
2+\vartheta }}\left(\frac{\log (N\vee T)}{N\wedge T}\right) ^{\frac{1}{%
4+2\vartheta }}\sqrt{\frac{\log (N\vee T)}{T}}\right) =o_{p}\left(\left(
N\vee T\right) ^{-1/2}\right) .
\end{equation*}

For $\dot{\mathbb{W}}_{i}^{II}\left(\{\dot{u}_{i,j}^{(1)}\}_{j\in [p]\cup
\{0\}}\right) $, we observe that 
\begin{align*}
g_{it}(\{\dot{u}_{i,j}^{(1)}\}_{j\in [p]\cup \{0\}})& =\dot{u}%
_{i,0}^{(1)\prime}\dot{v}_{t,0}^{(1)}+\sum_{j\in [p]}\dot{u}%
_{i,j}^{(1)\prime}\dot{v}_{t,j}^{(1)}X_{j,it}-u_{i,0}^{0\prime}v_{t,0}^{0}-%
\sum_{j\in [p]}u_{i,j}^{0\prime}v_{t,j}^{0}X_{j,it} \\
& =\left(\dot{u}_{i,0}^{(1)}-O_{0}^{(1)}u_{i,0}^{0}\right) ^{\prime}\dot{v}%
_{t,0}^{(1)}+\left(O_{0}^{(1)}u_{i,0}^{0}\right) ^{\prime}\left(\dot{v}%
_{t,0}^{(1)}-O_{0}^{(1)}v_{t,0}^{0}\right) \\
& +\sum_{j\in [p]}\left(\dot{u}_{i,j}^{(1)}-O_{j}^{(1)}u_{i,j}^{0}\right)
^{\prime}\dot{v}_{t,j}^{(1)}X_{j,it}+\sum_{j\in
[p]}\left(O_{j}^{(1)}u_{i,j}^{0}\right) ^{\prime}\left(\dot{v}%
_{t,j}^{(1)}-O_{j}^{(1)}v_{t,j}^{0}\right) X_{j,it} \\
& =\dot{\Delta}_{i,u}^{\prime}\dot{\Phi}_{it}^{(1)}+\dot{\Delta}%
_{t,v}^{\prime}\Psi_{it}^{0},
\end{align*}
where $\dot{\Delta}_{i,u}=\left(\dot{\Delta}_{i,0}^{\prime},\cdots ,\dot{%
\Delta}_{i,p}^{\prime}\right) ^{\prime}$ with $\dot{\Delta}%
_{i,j}=O_{j}^{(1)\prime}\dot{u}_{i,j}^{(1)}-u_{i,j}^{0}$, $\dot{\Delta}%
_{t,v}=\left(\dot{\Delta}_{t,0}^{\prime},\cdots ,\dot{\Delta}%
_{t,p}^{\prime}\right) ^{\prime}$ with $\dot{\Delta}_{t,j}=O_{j}^{(1)\prime}%
\dot{v}_{t,j}^{(1)}-v_{t,j}^{0}$, 
\begin{align*}
& \dot{\Phi}_{it}^{(1)}=\left[ \left(O_{0}^{(1)\prime}\dot{v}%
_{t,0}^{(1)}\right) ^{\prime},\left(O_{1}^{(1)\prime}\dot{v}%
_{t,1}^{(1)}X_{1,it}\right) ^{\prime},\cdots ,\left(O_{p}^{(1)\prime}\dot{v}%
_{t,p}^{(1)}X_{p,it}\right) ^{\prime}\right] ^{\prime},\quad \text{and} \\
&
\Psi_{it}^{0}=(u_{i,0}^{0\prime},u_{i,1}^{0\prime}X_{1,it},...,u_{i,p}^{0%
\prime}X_{p,it})^{\prime}.
\end{align*}
Unlike the analysis for $\dot{\mathbb{W}}_{it}^{I}$, to handle the
dependence between $\epsilon_{it}$ and $\dot{\Delta}_{i,u}$, for any
constant $C_{14}>0$, we first define an event set $\mathscr{A}%
_{6}(C_{14})=\left\{ \max_{i\in I_{3}}\left\Vert \dot{\Delta}%
_{i,u}\right\Vert_{2}\leq C_{14}\eta_{N}\right\} $ with $\mathbb{P}(%
\mathscr{A}_{6}^{c}(C_{14}))\leq e$ for any $e>0$ by Theorem \ref{Thm2}(i),
then we have 
\begin{align}
& \mathbb{P}\left(\max_{i\in I_{3}}\left\Vert \dot{\mathbb{W}}%
_{i}^{II}\left(\{\dot{u}_{i,j}^{(1)}\}_{j\in [p]\cup
\{0\}}\right)\right\Vert_{2}>C_{13}\xi_{N}^{\frac{5+\vartheta }{2+\vartheta }%
}\left(\frac{\log (N\vee T)}{N\wedge T}\right) ^{\frac{1}{4+2\vartheta }}%
\sqrt{\frac{\log (N\vee T)}{T}}\right)  \notag \\
& \leq \mathbb{P}\left(\max_{i\in I_{3}}\left\Vert \dot{\mathbb{W}}%
_{i}^{II}\left(\{\dot{u}_{i,j}^{(1)}\}_{j\in [p]\cup
\{0\}}\right)\right\Vert_{2}>C_{13}\xi_{N}^{\frac{5+\vartheta }{2+\vartheta }%
}\left(\frac{\log (N\vee T)}{N\wedge T}\right) ^{\frac{1}{4+2\vartheta }}%
\sqrt{\frac{\log (N\vee T)}{T}},\mathscr{A}_{6}(C_{14})\right) +e  \notag \\
& \leq \mathbb{P}\left(\sup_{s\in \mathbb{S}}\max_{i\in I_{3}}\left\Vert%
\overline{\mathbb{W}}_{i}^{II}\left(s\right) \right\Vert_{2}>C_{13}\xi_{N}^{%
\frac{5+\vartheta }{2+\vartheta }}\left(\frac{\log (N\vee T)}{N\wedge T}%
\right) ^{\frac{1}{4+2\vartheta }}\sqrt{\frac{\log (N\vee T)}{T}}\right) +e
\label{Wdot_1}
\end{align}%
with $\mathbb{S}=\left\{ s\in \mathbb{R}^{\left(\sum_{j\in
[p]\cup\{0\}}K_{j}\right) \times 1}:\left\Vert s\right\Vert_{2}\leq
C_{14}\eta_{N}\right\} $ and 
\begin{equation*}
\overline{\mathbb{W}}_{i}^{II}\left(s\right) =\frac{1}{T}\sum_{t=1}^{T}\dot{%
\varpi}_{it}\left\{ \left(\mathbf{1}\left\{ \epsilon_{it}\leq s^{\prime}\dot{\Phi}_{it}^{(1)}+\dot{\Delta}_{t,v}^{\prime}\Psi_{it}^{0}\right\} -\mathbf{1}%
\left\{ \epsilon_{it}\leq 0\right\} \right) -\left[ \mathfrak{F}%
_{it}\left(s^{\prime}\dot{\Phi}_{it}^{(1)}+\dot{\Delta}_{t,v}^{\prime}%
\Psi_{it}^{0}\right) -\mathfrak{F}_{it}(0)\right] \right\} .
\end{equation*}%
Similarly as in (\ref{B.11}), we sketch the proof. Divide $\mathbb{S}$ into $%
\mathbb{S}_{m}$ with center $s_{m}$ for $m=1,\cdots ,n_{\mathbb{S}}$ if $%
s\in \mathbb{S}_{m}$, then $\left\Vert s-s_{m}\right\Vert_{2}<\frac{%
\varepsilon }{T}$ and $n_{\mathbb{S}}\lesssim T^{\sum_{j\in
[p]\cup\{0\}}K_{j}}$. Then, $\forall s\in \mathbb{S}_{m}$, we have 
\begin{equation}
\left\Vert \overline{\mathbb{W}}_{i}^{II}\left(s\right) \right\Vert_{2}\leq
\left\Vert \overline{\mathbb{W}}_{i}^{II}\left(s_{m}\right)\right\Vert_{2}+%
\left\Vert \overline{\mathbb{W}}_{i}^{II}\left(s\right) -\overline{\mathbb{W}%
}_{i}^{II}\left(s_{m}\right) \right\Vert_{2},  \label{Wdot_2}
\end{equation}
with 
\begin{align}
& \max_{i\in I_{3},m\in [n_{\mathbb{S}}]}\sup_{s\in \mathbb{S}%
_{m}}\left\Vert \overline{\mathbb{W}}_{i}^{II}\left(s\right) -\overline{%
\mathbb{W}}_{i}^{II}\left(s_{m}\right) \right\Vert_{2}  \notag
\label{Wdot_3} \\
& \leq \max_{i\in I_{3},m\in [n_{\mathbb{S}}]}\sup_{s\in \mathbb{S}%
_{m}}\left\Vert \frac{1}{T}\sum_{t=1}^{T}\dot{\varpi}_{it}\left(\mathbf{1}%
\left\{ \epsilon_{it}\leq s^{\prime}\dot{\Phi}_{it}^{(1)}+\dot{\Delta}%
_{t,v}^{\prime}\Psi_{it}^{0}\right\} -\mathbf{1}\left\{ \epsilon_{it}\leq
s_{m}^{\prime}\dot{\Phi}_{it}^{(1)}+\dot{\Delta}_{t,v}^{\prime}\Psi_{it}^{0}%
\right\} \right) \right\Vert_{2}  \notag \\
& +\max_{i\in I_{3},m\in [n_{\mathbb{S}}]}\sup_{s\in \mathbb{S}%
_{m}}\left\Vert \frac{1}{T}\sum_{t=1}^{T}\dot{\varpi}_{it}\left[ \mathfrak{F}%
_{it}\left(s^{\prime}\dot{\Phi}_{it}^{(1)}+\dot{\Delta}_{t,v}^{\prime}%
\Psi_{it}^{0}\right) -\mathfrak{F}_{it}\left(s_{m}^{\prime}\dot{\Phi}_{it}^{(1)}+\dot{\Delta}_{t,v}^{\prime}\Psi_{it}^{0}\right) \right]
\right\Vert_{2}  \notag \\
& \leq \max_{i\in I_{3},m\in [n_{\mathbb{S}}]}\frac{1}{T}\sum_{t\in[T]%
}\left\Vert \dot{\varpi}_{it}\right\Vert_{2}\mathbb{E}\left(\mathbf{1}%
\left\{ \left\vert \epsilon_{it}-\dot{\Delta}_{t,v}^{\prime}%
\Psi_{it}^{0}-s_{m}^{\prime}\dot{\Phi}_{it}^{(1)}\right\vert \leq \frac{%
\varepsilon ||\dot{\Phi}_{it}^{(1)}||_{2}}{T}\right\} \bigg |\mathscr{D}%
_{e_{i}}^{I_{1}\cup I_{2}}\right) +\max_{i\in I_{2},m\in [n_{\mathbb{S}%
}]}\left\vert \overline{\mathbb{W}}_{i}^{III}(m)\right\vert  \notag \\
& +\max_{i\in I_{3},m\in [n_{\mathbb{S}}]}\sup_{s\in \mathbb{S}_{m}}\frac{1}{%
T}\sum_{t\in[T]}\left\Vert \dot{\varpi}_{it}\right\Vert_{2}||\dot{\Phi}_{it}^{(1)}||_{2}\left\Vert s-s_{m}\right\Vert_{2}
\end{align}
such that $\overline{\mathbb{W}}_{i}^{III}(m):= \frac{1}{T}\sum_{t\in[T]}%
\overline{\mathbb{W}}_{it}^{III}(m)$ and 
\begin{align*}
\overline{\mathbb{W}}_{it}^{III}(m)& :=\left\Vert \dot{\varpi}%
_{it}\right\Vert_{2}\biggl[\left(\mathbf{1}\left\{ \left\vert \epsilon_{it}-%
\dot{\Delta}_{t,v}^{\prime}\Psi_{it}^{0}-s_{m}^{\prime}\dot{\Phi}_{it}^{(1)}\right\vert \leq \frac{\varepsilon ||\dot{\Phi}_{it}^{(1)}||_{2}}{%
T}\right\} \right) \\
& -\mathbb{E}\left(\mathbf{1}\left\{ \left\vert \epsilon_{it}-\dot{\Delta}%
_{t,v}^{\prime}\Psi_{it}^{0}-s_{m}^{\prime}\dot{\Phi}_{it}^{(1)}\right\vert
\leq \frac{\varepsilon ||\dot{\Phi}_{it}^{(1)}||_{2}}{T}\right\} \bigg |%
\mathscr{D}_{e_{i}}^{I_{1}\cup I_{2}}\right) \biggr].
\end{align*}
Like (\ref{Lem:A:A3_mean}), we can show that $\max_{i\in I_{3},m\in [n_{%
\mathbb{S}}]}\frac{1}{T}\sum_{t\in[T]}\left\Vert \dot{\varpi}%
_{it}\right\Vert_{2}\mathbb{E}\left(\mathbf{1}\left\{ \left\vert
\epsilon_{it}-\dot{\Delta}_{t,v}^{\prime}\Psi_{it}^{0}-s_{m}^{\prime}\dot{\Phi}_{it}\right\vert \leq \frac{\varepsilon ||\dot{\Phi}_{it}^{(1)}||_{2}}{T%
}\right\} \bigg |\mathscr{D}_{e_{i}}^{I_{1}\cup I_{2}}\right) $ $=O_{p}(%
\frac{\varepsilon }{T})$ because $\dot{\Delta}_{t,v}^{\prime}%
\Psi_{it}^{0}+s_{m}^{\prime}\dot{\Phi}_{it}^{(1)}$ and $||\dot{\Phi}_{it}||$
are measurable in $\mathscr{D}_{e_{i}}^{I_{1}\cup I_{2}}$ and the
conditional density of $\epsilon_{it}$ given $\mathscr{D}_{e_{i}}^{I_{1}\cup
I_{2}}$ is bounded. Also we have 
\begin{equation*}
\max_{i\in I_{3},m\in [n_{\mathbb{S}}]}\sup_{s\in \mathbb{S}_{m}}\frac{1}{T}%
\sum_{t\in[T]}\left\Vert \dot{\varpi}_{it}\right\Vert_{2}||\dot{\Phi}_{it}^{(1)}||_{2}\left\Vert s-s_{m}\right\Vert_{2}=\frac{\varepsilon }{T}%
\max_{i\in I_{3}}\frac{1}{T}\sum_{t\in[T]}\left\Vert \dot{\varpi}%
_{it}\right\Vert_{2}||\dot{\Phi}_{it}^{(1)}||_{2}=O_{p}\left(\frac{%
\varepsilon }{T}\right) .
\end{equation*}
In addition, we note that 
\begin{align*}
& \max_{i\in I_{3},m\in [n_{\mathbb{S}}],t\in[T]}\left\Vert \dot{\varpi}%
_{it}\right\Vert_{2}=O_{p}(\xi_{N}),\quad \max_{i\in I_{3},m\in [n_{\mathbb{S%
}}],t\in[T]}Var\left(\overline{\mathbb{W}}_{it}^{III}(m)\bigg|\mathscr{D}%
_{e_{i}}^{I_{1}\cup I_{2}}\right)=O_{p}\left(\frac{\xi_{N}^{3 }\varepsilon }{%
T}\right) , \\
& \max_{i\in I_{3},m\in [n_{\mathbb{S}}],t\in [T]}\sum_{s=t+1}^{T}\left\vert
Cov\left(\overline{\mathbb{W}}_{it}^{III}(m),\overline{\mathbb{W}}%
_{is}^{III}(m)\bigg|\mathscr{D}_{e_{i}}^{I_{1}\cup I_{2}}\right) \right\vert
=O_{p}\left(\xi_{N}^{8/3}\left(\frac{\varepsilon }{T}\right) ^{2/3}\right) ,
\end{align*}%
and for any positive constant $C_{15}$ and $C_{16}$, define event set 
\begin{align*}
& \mathscr{A}_{8,N} \\
& = 
\begin{pmatrix}
& \operatornamewithlimits{\max}\limits_{i\in I_{3},m\in [n_{\mathbb{S}}],t\in%
[T]}\left\{ Var\left(\overline{\mathbb{W}}_{it}^{III}(m)\bigg|\mathscr{D}%
_{e_{i}}^{I_{1}\cup I_{2}}\right)+2\sum_{s=t+1}^{T}\left\vert Cov\left(%
\overline{\mathbb{W}}_{it}^{III}(m),\overline{\mathbb{W}}_{is}^{III}(m)\bigg|%
\mathscr{D}_{e_{i}}^{I_{1}\cup I_{2}}\right) \right\vert \right\} \leq
C_{15}\xi_{N}^{8/3}\left(\frac{\varepsilon }{T}\right) ^{2/3} \\ 
& \text{and}\quad \operatornamewithlimits{\max}\limits_{i\in I_{3},t\in[T]%
}\left\Vert \dot{\varpi}_{it}\right\Vert_{2}\leq C_{16}\xi_{N}%
\end{pmatrix}%
, \\
& \mathscr{A}_{8,N,i} \\
& = 
\begin{pmatrix}
& \operatornamewithlimits{\max}\limits_{m\in [n_{\mathbb{S}}],t\in[T]%
}\left\{ Var\left(\overline{\mathbb{W}}_{it}^{III}(m)\bigg|\mathscr{D}%
_{e_{i}}^{I_{1}\cup I_{2}}\right) +2\sum_{s=t+1}^{T}\left\vert Cov\left(%
\overline{\mathbb{W}}_{it}^{III}(m),\overline{\mathbb{W}}_{is}^{III}(m)\bigg|%
\mathscr{D}_{e_{i}}^{I_{1}\cup I_{2}}\right) \right\vert\right\} \leq
C_{15}\xi_{N}^{8/3}\left(\frac{\varepsilon }{T}\right) ^{2/3} \\ 
& \text{and}\quad \operatornamewithlimits{\max}\limits_{t\in [T]}\left\Vert 
\dot{\varpi}_{it}\right\Vert_{2}\leq C_{16}\xi_{N}%
\end{pmatrix}%
,
\end{align*}
with $\mathbb{P}\left(\mathscr{A}_{8,N}^{c}\right) \leq e$ for any positive $%
e$. Then we have 
\begin{align}
& \mathbb{P}\left(\max_{i\in I_{3},m\in [n_{\mathbb{S}}]}\left\vert 
\overline{\mathbb{W}}_{i}^{III}(m)\right\vert >C_{13}\xi_{N}^{\frac{%
5+\vartheta }{2+\vartheta }}\left(\frac{\log (N\vee T)}{N\wedge T}\right) ^{%
\frac{1}{4+2\vartheta }}\sqrt{\frac{\log (N\vee T)}{T}}\right)  \notag
\label{Wdot_4} \\
& \leq \mathbb{P}\left(\max_{i\in I_{3},m\in [n_{\mathbb{S}}]}\left\vert 
\overline{\mathbb{W}}_{i}^{III}(m)\right\vert >C_{13}\xi_{N}^{\frac{%
5+\vartheta }{2+\vartheta }}\left(\frac{\log (N\vee T)}{N\wedge T}\right) ^{%
\frac{1}{4+2\vartheta }}\sqrt{\frac{\log (N\vee T)}{T}},\mathscr{A}%
_{8,N}\right) +e  \notag \\
& \leq \sum_{i\in [I_{3}],m\in [n_{\mathbb{S}}]}\mathbb{P}\left(\left\vert 
\overline{\mathbb{W}}_{i}^{III}(m)\right\vert >C_{13}\xi_{N}^{\frac{%
5+\vartheta }{2+\vartheta }}\left(\frac{\log (N\vee T)}{N\wedge T}\right) ^{%
\frac{1}{4+2\vartheta }}\sqrt{\frac{\log (N\vee T)}{T}},\mathscr{A}%
_{8,N,i}\right) +e  \notag \\
& =\sum_{i\in [I_{3}],m\in [n_{\mathbb{S}}]}\mathbb{E}\mathbb{P}%
\left(\left\vert \overline{\mathbb{W}}_{i}^{III}(m)\right\vert
>C_{13}\xi_{N}^{\frac{5+\vartheta }{2+\vartheta }}\left(\frac{\log (N\vee T)%
}{N\wedge T}\right) ^{\frac{1}{4+2\vartheta }}\sqrt{\frac{\log (N\vee T)}{T}}%
\bigg|\mathscr{D}_{e_{i}}^{I_{1}\cup I_{2}}\right) \mathbf{1}\{\mathscr{A}%
_{8,N,i}\}+e  \notag \\
& =o(1)+e,
\end{align}%
where the last line is by Bernstein's inequality similar to (\ref{Wdot_exp2}%
). As $e$ is arbitrary, we have 
\begin{equation*}
\max_{i\in I_{3},m\in [n_{\mathbb{S}}]}\left\Vert \overline{\mathbb{W}}%
_{i}^{III}(m)\right\Vert_{2}=O_{p}\left(\xi_{N}^{\frac{5+\vartheta }{%
2+\vartheta }}\left(\frac{\log (N\vee T)}{N\wedge T}\right) ^{\frac{1}{%
4+2\vartheta }}\sqrt{\frac{\log (N\vee T)}{T}}\right) ,
\end{equation*}%
which implies that $\max_{i\in I_{3},m\in [n_{\mathbb{S}}]}\sup_{s\in 
\mathbb{S}_{m}}\left\Vert \overline{\mathbb{W}}_{i}^{II}\left(s\right) -%
\overline{\mathbb{W}}_{i}^{II}\left(s_{m}\right)
\right\Vert_{2}=O_{p}\left(\xi_{N}^{\frac{5+\vartheta }{2+\vartheta }}\left(%
\frac{\log (N\vee T)}{N\wedge T}\right) ^{\frac{1}{4+2\vartheta }}\sqrt{%
\frac{\log(N\vee T)}{T}}\right) $. Following the same argument in %
\eqref{Wdot_4}, we can show that 
\begin{equation*}
\max_{i\in I_{3},m\in [n_{\mathbb{S}}]}\left\Vert \overline{\mathbb{W}}%
_{i}^{II}\left(s_{m}\right) \right\Vert_{2}=O_{p}\left(\xi_{N}^{\frac{%
5+\vartheta }{2+\vartheta }}\left(\frac{\log (N\vee T)}{N\wedge T}\right) ^{%
\frac{1}{4+2\vartheta }}\sqrt{\frac{\log (N\vee T)}{T}}\right) ,
\end{equation*}%
which, combined with \eqref{Wdot_1} and \eqref{Wdot_2}, implies that 
\begin{align}
& \max_{i\in I_{3}}\left\Vert \dot{\mathbb{W}}_{i}^{II}\left(\{\dot{u}%
_{i,j}^{(1)}\}_{j\in [p]\cup \{0\}}\right)
\right\Vert_{2}=O_{p}\left(\xi_{N}^{\frac{5+\vartheta }{2+\vartheta }}\left(%
\frac{\log (N\vee T)}{N\wedge T}\right) ^{\frac{1}{4+2\vartheta }}\sqrt{%
\frac{\log(N\vee T)}{T}}\right) =o_{p}\left(\left(N\vee T\right)
^{-1/2}\right),\quad \text{and thus,}  \notag  \label{B.29} \\
& \max_{i\in I_{3}}\left\Vert \dot{\mathbb{W}}_{i}\left(\{\dot{u}%
_{i,j}^{(1)}\}_{j\in [p]\cup \{0\}}\right)
\right\Vert_{2}=o_{p}\left(\left(N\vee T\right) ^{-1/2}\right) .
\end{align}

% Combining (\ref{B.28})-(\ref{Wdot_5}), we have
% \begin{align}
% \label{B.29}
%     \max_{i\in I_{3}}\left\Vert\frac{1}{T}\sum_{t=1}^{T} \dot{\mathbb{W}}_{it}^{II} \right\Vert_{2}=o_p(T^{-1/2}),\quad \max_{i\in I_{3}}\left\Vert\frac{1}{T}\sum_{t=1}^{T} \dot{\mathbb{W}}_{it} \right\Vert_{2}=o_p(T^{-1/2}).
% \end{align}

Next, we observe that 
\begin{align}
& \dot{\mathcal{H}}_{i}\left(\{O_{j}^{(1)}u_{i,j}^{0}\}_{j=0}^{p}\right) -%
\dot{\mathcal{H}}_{i}\left(\{\dot{u}_{i,j}^{(1)}\}_{j=0}^{p}\right)  \notag
\label{B.30} \\
& =\frac{1}{T}\sum_{t=1}^{T}\mathfrak{f}_{it}\left(\dot{\Delta}%
_{t,v}^{\prime}\Psi_{it}^{0}\right) \dot{\varpi}_{it}\dot{\varpi}%
_{it}^{\prime}\dot{\Delta}_{i,u}+O_{p}\left(\max_{i\in i_{3}}\left\Vert \dot{%
\Delta}_{i,u}\right\Vert_{2}^{2}\right)  \notag \\
& =\frac{1}{T}\sum_{t=1}^{T}\mathfrak{f}_{it}\left(\dot{\Delta}%
_{t,v}^{\prime}\Psi_{it}^{0}\right) \dot{\varpi}_{it}\dot{\varpi}%
_{it}^{\prime}\dot{\Delta}_{i,u}+o_{p}\left(\left(N\vee T\right)
^{-1/2}\right) \quad \text{uniformly over $i\in I_{3}$},
\end{align}%
where the first equality holds by Taylor expansion and Lemma \ref{Lem16} and
the second equality is by Assumption \ref{ass:1}(ix). Combining (\ref{B.25}%
), (\ref{B.26}), (\ref{B.29}) and (\ref{B.30}), we have shown that 
\begin{equation}
\dot{\Delta}_{i,u}=\left[ \dot{D}_{i}^{I}\right] ^{-1}\dot{D}%
_{i}^{II}+o_{p}\left(\left(N\vee T\right) ^{-1/2}\right) \quad \text{%
uniformly over $i\in I_{3}$},  \label{Delta1}
\end{equation}%
where $\dot{D}_{i}^{I}:=\frac{1}{T}\sum_{t=1}^{T}\mathfrak{f}_{it}\left(\dot{%
\Delta}_{t,v}^{\prime}\Psi_{it}^{0}\right) \dot{\varpi}_{it}\dot{\varpi}%
_{it}^{\prime}\quad $and$\quad \dot{D}_{i}^{II}:=\frac{1}{T}\sum_{t=1}^{T}%
\left[ \tau -\mathbf{1}\left\{ \epsilon_{it}\leq \dot{\Delta}%
_{t,v}^{\prime}\Psi_{it}^{0}\right\} \right] \dot{\varpi}_{it}.$ Recall that 
$\varpi_{it}^{0}=\left(\left(O_{0}^{(1)}v_{t,0}^{0}\right)
^{\prime},\left(O_{1}^{(1)}v_{t,1}^{0}\right) ^{\prime}X_{1,it},\cdots
,\left(O_{1}^{(1)}v_{t,p}^{0}\right) ^{\prime}X_{p,it}\right) ^{\prime}$, $%
D_{i}^{I}=\frac{1}{T}\sum_{t=1}^{T}\mathfrak{f}_{it}(0)\varpi_{it}^{0}%
\varpi_{it}^{0\prime},\quad $and$\quad D_{i}^{II}=\frac{1}{T}\sum_{t=1}^{T}%
\left[\tau -\mathbf{1}\left\{ \epsilon_{it}\leq 0\right\} \right]
\varpi_{it}^{0}.$ Noting that 
\begin{align*}
\left[ \dot{D}_{i}^{I}\right] ^{-1}\dot{D}_{i}^{II}& =\left[ D_{i}^{I}\right]%
^{-1}D_{i}^{II}+\left[ D_{i}^{I}\right] ^{-1}\left(\dot{D}%
_{i}^{II}-D_{i}^{II}\right) +\left[ \left(\dot{D}_{i}^{I}\right)^{-1}-%
\left(D_{i}^{I}\right) ^{-1}\right] D_{i}^{II} \\
& +\left[ \left(\dot{D}_{i}^{I}\right) ^{-1}-\left(D_{i}^{I}\right) ^{-1}%
\right] \left(\dot{D}_{i}^{II}-D_{i}^{II}\right) ,
\end{align*}%
we have by (\ref{Delta1}) and Lemma \ref{Lem17}, uniformly 
\begin{equation*}
\dot{\Delta}_{i,u}=\left[ D_{i}^{I}\right] ^{-1}D_{i}^{II}+\left[ D_{i}^{I}%
\right] ^{-1}\frac{1}{T}\sum_{t=1}^{T}\left[ \mathbf{1}\left\{
\epsilon_{it}\leq 0\right\} -\mathbf{1}\left\{ \epsilon_{it}\leq \dot{\Delta}%
_{t,v}^{\prime}\Psi_{it}^{0}\right\} \right] \varpi_{it}^{0}+o_{p}\left(%
\left(N\vee T\right) ^{-1/2}\right)
\end{equation*}
uniformly over $i\in I_{3},$ where we use the fact that $\sqrt{\frac{\log
(N\vee T)}{T}}\xi_{N}\eta_{N}=o\left(\left(N\vee T\right)^{-1/2}\right) $ by
Assumption \ref{ass:1}(ix). $\blacksquare $

\subsection{Proof of Theorem \protect\ref{Thm3}}

In this section, we extend the distribution theory of the least squares
framework in \cite{chernozhukov2019inference} to the quantile regression
framework and obtain the uniform error bound. We assume the model has only
one regressor in this section for notation simplicity.

\subsubsection{Proof of Statement (i)}

For $\forall i\in I_{3}$, recall from (\ref{debias_1}) that 
\begin{equation}
\left\{ \hat{u}_{i,j}^{(3,1)}\right\}_{j\in [p]}=\argmin\limits_{\left\{
u_{i,j}\right\}_{j\in [p]}}\frac{1}{T}\sum_{t=1}^{T}\rho_{\tau }\left(\tilde{%
Y}_{it}-u_{i,0}^{\prime}\dot{v}_{t,0}^{(1)}-u_{i,1}^{\prime}\dot{v}%
_{t,1}^{(1)}\hat{e}_{1,it}\right) .  \label{C.1}
\end{equation}
where $\tilde{Y}_{it}=Y_{it}-\hat{\mu}_{1,it}\dot{u}_{i,1}^{(1)\prime}\dot{v}%
_{t,1}^{(1)}$. Let $\hat{\Delta}_{i,u}=%
\begin{bmatrix}
\hat{u}_{i,0}^{(3,1)}-O_{0}^{(1)}u_{i,0}^{0} \\ 
\hat{u}_{i,1}^{(3,1)}-O_{1}^{(1)}u_{i,1}^{0}%
\end{bmatrix}%
$ and $\dot{\omega}_{it}= 
\begin{bmatrix}
\dot{v}_{t,0}^{(1)} \\ 
\dot{v}_{t,1}^{(1)}\hat{e}_{1,it}%
\end{bmatrix}%
.$ With generic $(u_{i,0},u_{i,1},\mathrm{u}_{i,1})$, define%
\begin{align}
&\hat{\mathbb{H}}_{i}\left(u_{i,0},u_{i,1},\mathrm{u}_{i,1}\right)=\frac{1}{T%
}\sum_{t=1}^{T}\left[\tau-\mathbf{1}\left\{
\epsilon_{it}\leq\iota_{it}\left(u_{i,0},u_{i,1},\mathrm{u}_{i,1}\right)
,\right\} \right] \dot{\omega}_{it}\text{ and}  \notag \\
&\iota_{it}\left(u_{i,0},u_{i,1},\mathrm{u}_{i,1}\right)=u_{i,0}^{\prime}%
\dot{v}_{t,0}^{(1)}-u_{i,0}^{0\prime}v_{t,0}^{0}+\hat{\mu}_{1,it}\mathrm{u}%
_{i,1}^{\prime}\dot{v}_{t,1}^{(1)}-\mu_{1,it}u_{i,1}^{0\prime}v_{t,1}^{0}+%
\hat{e}_{1,it}u_{i,1}^{\prime}\dot{v}_{t,1}^{(1)}-e_{1,it}u_{i,1}^{0%
\prime}v_{t,1}^{0}.  \label{C.2}
\end{align}%
We can see that 
\begin{equation*}
\hat{\mathbb{H}}_{i}\left(u_{i,0},u_{i,1},\dot{u}_{i,1}^{(1)}\right) =\frac{1%
}{T}\sum_{t=1}^{T}\left[ \tau -\mathbf{1}\left\{ \tilde{Y}%
_{it}-u_{i,0}^{\prime}\dot{v}_{t,0}^{(1)}-u_{i,1}^{\prime}\dot{v}_{t,1}^{(1)}%
\hat{e}_{1,it}\leq 0\right\} \right] \dot{\omega}_{it}
\end{equation*}%
is the first order subgradient of (\ref{C.1}). In addition, we define 
\begin{align*}
\hat{\mathcal{H}}_{i}\left(u_{i,0},u_{i,1},\mathrm{u}_{i,1}\right) & =\frac{1%
}{T}\sum_{t=1}^{T}\mathbb{E}\left\{ \left[ \tau -\mathbf{1}\left\{
\epsilon_{it}\leq \left(\iota_{it}\left(u_{i,0},u_{i,1},\mathrm{u}%
_{i,1}\right)\right) \right\} \right] \dot{\omega}_{it}\bigg|\mathscr{D}%
_{e}^{I_{1}\cup I_{2}}\right\} \\
& =\frac{1}{T}\sum_{t=1}^{T}\left[ \tau
-F_{it}\left(\iota_{it}\left(u_{i,0},u_{i,1},\mathrm{u}_{i,1}\right) \right) %
\right] \dot{\omega}_{it},
\end{align*}%
and 
\begin{align*}
\hat{\mathbb{W}}_{i}\left(\hat{u}_{i,0}^{(3,1)},\hat{u}_{i,1}^{(3,1)},\dot{u}%
_{i,1}^{(1)}\right) & =\hat{\mathbb{H}}_{i}\left(\hat{u}_{i,0}^{(3,1)},\hat{u%
}_{i,1}^{(3,1)},\dot{u}_{i,1}^{(1)}\right) -\hat{\mathbb{H}}%
_{i}\left(O_{0}^{(1)}u_{i,0}^{0},O_{1}^{(1)}u_{i,1}^{0},\dot{u}%
_{i,1}^{(1)}\right) \\
& -\left\{ \hat{\mathcal{H}}_{i}\left(\hat{u}_{i,0}^{(3,1)},\hat{u}%
_{i,1}^{(3,1)},\dot{u}_{i,1}^{(1)}\right) -\hat{\mathcal{H}}%
_{i}\left(O_{0}^{(1)}u_{i,0}^{0},O_{1}^{(1)}u_{i,1}^{0},\dot{u}%
_{i,1}^{(1)}\right) \right\} ,
\end{align*}%
where $\mathscr{D}_{e}^{I_{1}\cup I_{2}}$ is the $\sigma$-field generated by 
$\left\{ \epsilon_{it}\right\}_{i\in I_{1}\cup I_{2},t\in [T]}\bigcup
\left\{ e_{it}\right\}_{i\in [N],t\in[T]}\bigcup\left\{
V_{j}^{0}\right\}_{j\in [p]\cup \{0\}}\bigcup \left\{W_{j}^{0}\right\}_{j\in
[p]}$. Then it is clear that 
\begin{align}
\hat{\mathbb{H}}_{i}\left(\hat{u}_{i,0}^{(3,1)},\hat{u}_{i,1}^{(3,1)},\dot{u}%
_{i,1}^{(1)}\right) & =\hat{\mathbb{H}}_{i}%
\left(O_{0}^{(1)}u_{i,0}^{0},O_{1}^{(1)}u_{i,1}^{0},\dot{u}%
_{i,1}^{(1)}\right) )+\hat{\mathcal{H}}_{i}\left(\hat{u}_{i,0}^{(3,1)},\hat{u%
}_{i,1}^{(3,1)},\dot{u}_{i,1}^{(1)}\right) -\hat{\mathcal{H}}%
_{i}\left(O_{0}^{(1)}u_{i,0}^{0},O_{1}^{(1)}u_{i,1}^{0},\dot{u}%
_{i,1}^{(1)}\right)  \notag  \label{C.3} \\
& +\hat{\mathbb{W}}_{i}\left(\hat{u}_{i,0}^{(3,1)},\hat{u}_{i,1}^{(3,1)},%
\dot{u}_{i,1}^{(1)}\right) .
\end{align}%
For specific $u_{i,0}$ and $u_{i,1}$, let $u_{i}=\left(u_{i,0}^{%
\prime},u_{i,1}^{\prime}\right) ^{\prime}$. Following similar arguments as
used in the proof of Lemma \ref{Lem16}, the second order partial derivative
of the function $\hat{\mathcal{H}}_{i}(\cdot)$ with respect to $u_{i}$ at
the true value can be shown to be bounded in probability. By Taylor
expansion, it yields 
\begin{equation}
\hat{\mathcal{H}}_{i}\left(\hat{u}_{i,0}^{(3,1)},\hat{u}_{i,1}^{(3,1)},\dot{u%
}_{i,1}^{(1)}\right) -\hat{\mathcal{H}}_{i}%
\left(O_{0}^{(1)}u_{i,0}^{0},O_{1}^{(1)}u_{i,1}^{0},\dot{u}%
_{i,1}^{(1)}\right) =\frac{\partial \hat{\mathcal{H}}_{i}%
\left(O_{0}^{(1)}u_{i,0}^{0},O_{1}^{(1)}u_{i,1}^{0},\dot{u}%
_{i,1}^{(1)}\right) }{\partial u_{i}^{\prime}}\hat{\Delta}_{i,u}+R_{i},
\label{C.4}
\end{equation}
where $\max_{i\in I_{3}}|R_{i}|\lesssim \max_{i\in I_{3}}\left\Vert \hat{%
\Delta}_{i,u}\right\Vert_{2}^{2}$ and 
\begin{equation*}
\frac{\partial \hat{\mathcal{H}}_{i}%
\left(O_{0}^{(1)}u_{i,0}^{0},O_{1}^{(1)}u_{i,1}^{0},\dot{u}%
_{i,1}^{(1)}\right) }{\partial u_{i}^{\prime}}=-\frac{1}{T}%
\sum_{t=1}^{T}f_{it}\left[ \iota_{it}%
\left(O_{0}^{(1)}u_{i,0}^{0},O_{1}^{(1)}u_{i,1}^{0},\dot{u}%
_{i,1}^{(1)}\right) \right] 
\begin{bmatrix}
\dot{v}_{t,0}^{(1)}\dot{v}_{t,0}^{(1)\prime} & \hat{e}_{1,it}\dot{v}%
_{t,0}^{(1)}\dot{v}_{t,1}^{(1)\prime} \\ 
\hat{e}_{1,it}\dot{v}_{t,1}^{(1)}\dot{v}_{t,0}^{(1)\prime} & \hat{e}%
_{1,it}^{2}\dot{v}_{t,1}^{(1)}\dot{v}_{t,1}^{(1)\prime}%
\end{bmatrix}%
=-\dot{D}_{i}^{F}.
\end{equation*}

Combing (\ref{C.3}) and (\ref{C.4}), we have 
\begin{align}
\hat{\Delta}_{i,u}& =\left(\dot{D}_{i}^{F}\right) ^{-1}\Bigg\{\hat{\mathbb{H}%
}_{i}\left(O_{0}^{(1)}u_{i,0}^{0},O_{1}^{(1)}u_{i,1}^{0},\dot{u}%
_{i,1}^{(1)}\right) +\hat{\mathbb{W}}_{i}\left(\hat{u}_{i,0}^{(3,1)},\hat{u}%
_{i,1}^{(3,1)},\dot{u}_{i,1}^{(1)}\right) -\hat{\mathbb{H}}_{i}\left(\hat{u}%
_{i,0}^{(3,1)},\hat{u}_{i,1}^{(3,1)},\dot{u}_{i,1}^{(1)}\right) +R_{i}\Bigg\}
\notag  \label{C.5} \\
& =\left(\dot{D}_{i}^{F}\right) ^{-1}\left\{ \hat{\mathbb{H}}%
_{i}\left(O_{0}^{(1)}u_{i,0}^{0},O_{1}^{(1)}u_{i,1}^{0},\dot{u}%
_{i,1}^{(1)}\right) +\hat{\mathbb{W}}_{i}\left(\hat{u}_{i,0}^{(3,1)},\hat{u}%
_{i,1}^{(3,1)},\dot{u}_{i,1}^{(1)}\right) +o_{p}\left(\left(N\vee T\right)
^{-1/2}\right) \right\} ,
\end{align}%
uniformly over $i\in I_{3}$, where the second line is due to the fact that 
\begin{equation*}
\max_{i\in I_{3}}\left\Vert \hat{\mathbb{H}}_{i}\left(\hat{u}_{i,0}^{(3,1)},%
\hat{u}_{i,1}^{(3,1)},\dot{u}_{i,1}^{(1)}\right)\right\Vert_{\max
}=O_{p}\left(\frac{\xi_{N}}{T}\right) \quad \text{and}\quad \max_{i\in
I_{3}}\left\Vert \hat{\Delta}_{i,u}\right\Vert_{2}^{2}=O_{p}\left(%
\eta_{N}^{2}\right) =o_{p}\left(\left(N\vee T\right) ^{-1/2}\right) ,
\end{equation*}%
following similar arguments as in (\ref{B.25}) and the proof of Theorem \ref%
{Thm2}.

Next, we analyze the term $\hat{\mathbb{H}}_{i}%
\left(O_{0}^{(1)}u_{i,0}^{0},O_{1}^{(1)}u_{i,1}^{0},\dot{u}%
_{i,1}^{(1)}\right)$ in (\ref{C.4}), which can be written as 
\begin{align}
& \hat{\mathbb{H}}_{i}\left(O_{0}^{(1)}u_{i,0}^{0},O_{1}^{(1)}u_{i,1}^{0},%
\dot{u}_{i,1}^{(1)}\right)  \notag  \label{C.6} \\
& =\frac{1}{T}\sum_{t=1}^{T}\dot{\omega}_{it}\left(\tau -\mathbf{1}%
\left\{\epsilon_{it}\leq
\iota_{it}\left(O_{0}^{(1)}u_{i,0}^{0},O_{1}^{(1)}u_{i,1}^{0},\dot{u}%
_{i,1}^{(1)}\right)\right\} \right)  \notag \\
& =\frac{1}{T}\sum_{t=1}^{T}\dot{\omega}_{it}\left(\tau -\mathbf{1}%
\left\{\epsilon_{it}\leq 0\right\} \right) +\frac{1}{T}\sum_{t=1}^{T}\dot{%
\omega}_{it}\left(\mathbf{1}\left\{ \epsilon_{it}\leq 0\right\} -\mathbf{1}%
\left\{ \epsilon_{it}\leq
\iota_{it}\left(O_{0}^{(1)}u_{i,0}^{0},O_{1}^{(1)}u_{i,1}^{0},\dot{u}%
_{i,1}^{(1)}\right)\right\} \right)  \notag \\
& =\frac{1}{T}\sum_{t=1}^{T}\dot{\omega}_{it}\left[ \tau -\mathbf{1}%
\left\{\epsilon_{it}\leq 0\right\} \right) +\frac{1}{T}\sum_{t=1}^{T}\dot{%
\omega}_{it}\left\{ F_{it}(0)-F_{it}\left[ \iota_{it}%
\left(O_{0}^{(1)}u_{i,0}^{0},O_{1}^{(1)}u_{i,1}^{0},\dot{u}%
_{i,1}^{(1)}\right) \right] \right\}  \notag \\
& +\frac{1}{T}\sum_{t=1}^{T}\dot{\omega}_{it}\left(\mathbf{1}%
\left\{\epsilon_{it}\leq 0\right\} -F_{it}(0)\right]  \notag \\
& -\frac{1}{T}\sum_{t=1}^{T}\dot{\omega}_{it}\left\{ \mathbf{1}%
\left\{\epsilon_{it}\leq
\iota_{it}\left(O_{0}^{(1)}u_{i,0}^{0},O_{1}^{(1)}u_{i,1}^{0},\dot{u}%
_{i,1}^{(1)}\right)\right\} -F_{it}\left[ \iota_{it}%
\left(O_{0}^{(1)}u_{i,0}^{0},O_{1}^{(1)}u_{i,1}^{0},\dot{u}%
_{i,1}^{(1)}\right) \right] \right\} .
\end{align}%
For the second term after the last equality, we notice that 
\begin{align}
& \frac{1}{T}\sum_{t=1}^{T}\dot{\omega}_{it}\left\{ F_{it}(0)-F_{it}\left[%
\iota_{it}\left(O_{0}^{(1)}u_{i,0}^{0},O_{1}^{(1)}u_{i,1}^{0},\dot{u}%
_{i,1}^{(1)}\right) \right] \right\}  \notag  \label{C.7} \\
& =-\frac{1}{T}\sum_{t=1}^{T}\dot{\omega}_{it}f_{it}\left(\tilde{\iota}%
_{it}\right) 
\begin{bmatrix}
(\dot{v}_{t,0}^{(1)}-O_{0}^{(1)}v_{t,0}^{0})^{\prime} & (\hat{e}_{1,it}\dot{v%
}_{t,1}^{(1)}-e_{1,it}O_{1}^{(1)}v_{t,1}^{0})^{\prime}%
\end{bmatrix}
\begin{bmatrix}
O_{0}^{(1)}u_{i,0}^{0} \\ 
O_{1}^{(1)}u_{i,1}^{0}%
\end{bmatrix}
\notag \\
& -\frac{1}{T}\sum_{t=1}^{T}\dot{\omega}_{it}f_{it}\left(\tilde{\iota}%
_{it}\right) \left(\hat{\mu}_{1,it}\dot{u}_{i,1}^{(1)\prime}\dot{v}%
_{t,1}^{(1)}-\mu_{1,it}u_{i,1}^{0\prime}v_{t,1}^{0}\right)  \notag \\
& :=\dot{D}_{i}^{J}%
\begin{bmatrix}
O_{0}^{(1)}u_{i,0}^{0} \\ 
O_{1}^{(1)}u_{i,1}^{0}%
\end{bmatrix}%
+\frac{1}{T}\sum_{t=1}^{T}\dot{\omega}_{it}f_{it}\left(\tilde{\iota}%
_{it}\right) \left(\mu_{1,it}u_{i,1}^{0\prime}v_{t,1}^{0}-\hat{\mu}_{1,it}%
\dot{u}_{i,1}^{(1)\prime}\dot{v}_{t,1}^{(1)}\right) ,
\end{align}%
where $\left\vert \tilde{\iota}_{it}\right\vert $ lies between $0$ and $%
\left\vert \iota_{it}\left(O_{0}^{(1)}u_{i,0}^{0},O_{1}^{(1)}u_{i,1}^{0},%
\dot{u}_{i,1}^{(1)}\right) \right\vert $ and 
\begin{equation*}
\dot{D}_{i}^{J}=\frac{1}{T}\sum_{t=1}^{T}f_{it}(\tilde{\iota}_{it}) 
\begin{bmatrix}
\dot{v}_{t,0}^{(1)}\left(O_{0}^{(1)}v_{t,0}^{0}-\dot{v}_{t,0}^{(1)}\right)^{%
\prime} & \dot{v}_{t,0}^{(1)}\left(e_{1,it}O_{1}^{(1)}v_{t,1}^{0}-\hat{e}%
_{it}\dot{v}_{t,1}^{(1)}\right) ^{\prime}\notag \\ 
\hat{e}_{1,it}\dot{v}_{t,1}^{(1)}\left(O_{0}^{(1)}v_{t,0}^{0}-\dot{v}%
_{t,0}^{(1)}\right) ^{\prime} & \hat{e}_{1,it}\dot{v}_{t,1}^{(1)}%
\left(e_{1,it}O_{1}^{(1)}v_{t,1}^{0}-\hat{e}_{1,it}\dot{v}%
_{t,1}^{(1)}\right) ^{\prime}%
\end{bmatrix}%
.
\end{equation*}
The first equality above is due to the mean-value theorem and the definition
for $\iota_{it}\left(O_{0}^{(1)}u_{i,0}^{0},O_{1}^{(1)}u_{i,1}^{0},\dot{u}%
_{i,1}^{(1)}\right) $ in (\ref{C.2}). Inserting (\ref{C.7}) into (\ref{C.6}%
), we obtain that 
\begin{align}
\hat{\mathbb{H}}_{i}\left(O_{0}^{(1)}u_{i,0}^{0},O_{1}^{(1)}u_{i,1}^{0},\dot{%
u}_{i,1}^{(1)}\right) & =\dot{D}_{i}^{J}%
\begin{bmatrix}
O_{0}^{(1)}u_{i,0}^{0} \\ 
O_{1}^{(1)}u_{i,1}^{0}%
\end{bmatrix}%
+\frac{1}{T}\sum_{t=1}^{T}\dot{\omega}_{it}\left[ \tau -\mathbf{1}%
\left\{\epsilon_{it}\leq 0\right\} \right)  \notag  \label{C.8} \\
& +\frac{1}{T}\sum_{t=1}^{T}\dot{\omega}_{it}f_{it}\left(\tilde{\iota}%
_{it}\right) \left(\mu_{1,it}u_{i,1}^{0\prime}v_{t,1}^{0}-\hat{\mu}_{1,it}%
\dot{u}_{i,1}^{(1)\prime}\dot{v}_{t,1}^{(1)}\right)  \notag \\
& +\frac{1}{T}\sum_{t=1}^{T}\dot{\omega}_{it}\Bigg\{\left[ \mathbf{1}%
\left\{\epsilon_{it}\leq 0\right\} -\mathbf{1}\left\{ \epsilon_{it}\leq
\iota_{it}\left(O_{0}^{(1)}u_{i,0}^{0},O_{1}^{(1)}u_{i,1}^{0},\dot{u}%
_{i,1}^{(1)}\right) \right\} \right]  \notag \\
& -\left(F_{it}(0)-F_{it}\left[ \iota_{it}%
\left(O_{0}^{(1)}u_{i,0}^{0},O_{1}^{(1)}u_{i,1}^{0},\dot{u}%
_{i,1}^{(1)}\right) \right] \right) \Bigg\}  \notag \\
& :=\dot{D}_{i}^{J} 
\begin{bmatrix}
O_{0}^{(1)}u_{i,0}^{0} \\ 
O_{1}^{(1)}u_{i,1}^{0}%
\end{bmatrix}%
+\hat{\mathbb{I}}_{1,i}+\hat{\mathbb{I}}_{2,i}+\hat{\mathbb{I}}_{3,i}.
\end{align}

Combining (\ref{C.5}) and (\ref{C.8}), we obtain that 
\begin{equation}
\hat{\Delta}_{i,u}=\left(\dot{D}_{i}^{F}\right) ^{-1}\left\{ \dot{D}_{i}^{J} 
\begin{bmatrix}
O_{0}^{(1)}u_{i,0}^{0} \\ 
O_{1}^{(1)}u_{i,1}^{0}%
\end{bmatrix}%
+\hat{\mathbb{I}}_{1,i}+\hat{\mathbb{I}}_{2,i}+\hat{\mathbb{I}}_{3,i}+\hat{%
\mathbb{W}}_{i}\left(\hat{u}_{i,0}^{(3,1)},\hat{u}_{i,1}^{(3,1)},\dot{u}%
_{i,1}^{(1)}\right) +o_{p}\left(\left(N\vee T\right) ^{-1/2}\right)\right\} ,
\label{C.9}
\end{equation}%
where the $o_{p}\left(\left(N\vee T\right) ^{-1/2}\right) $ term holds
uniformly over $i\in I_{3}$. To prove Theorem \ref{Thm3}(i), we analyze each
term in (\ref{C.9}) step by step.\medskip

\noindent\textbf{Step 1: Uniform Convergence for $\dot{D}_i^F$ and $\dot{D}%
_i^J$}. \newline
Define 
\begin{align*}
&D_{i}^{F}=\frac{1}{T}\sum_{t=1}^{T}f_{it}(0) 
\begin{bmatrix}
O_{0}^{(1)}v_{t,0}^{0}v_{t,0}^{0\prime}O_{0}^{(1)\prime} & 0 \\ 
0 & e_{1,it}^{2}O_{1}^{(1)}v_{t,1}^{0}v_{t,1}^{0\prime}O_{1}^{(1)\prime}%
\end{bmatrix}%
, \\
&D_{i}^{J}=\frac{1}{T}\sum_{t=1}^{T}f_{it}(0) 
\begin{bmatrix}
\dot{v}_{t,0}^{(1)}\left(O_{0}^{(1)}v_{t,0}^{0}-\dot{v}_{t,0}^{(1)}\right)^{%
\prime} & 0 \\ 
0 & e_{1,it}^{2}O_{1}^{(1)}v_{t,1}^{0}\left(O_{1}^{(1)}v_{t,1}^{0}-\dot{v}%
_{t,1}^{(1)}\right)^{\prime}%
\end{bmatrix}%
.
\end{align*}
Lemmas \ref{Lem21} and \ref{Lem22} show that 
\begin{align*}
& \max_{i \in I_3}||\dot{D}_{i}^{F}-D_{i}^{F}||_F=O_{p}(\eta_{N}) \quad 
\text{and} \quad \max_{i \in I_3}||\dot{D}_{i}^{J}-D_{i}^{J}||_F=\left\Vert%
\begin{bmatrix}
O_{p}(\eta_{N}^{2}) & O_{p}(\eta_{N}) \\ 
O_{p}(\eta_{N}^{2}) & O_{p}(\eta_{N}^{2})%
\end{bmatrix}
\right\Vert_{F},
\end{align*}
with $\eta_{N}=\frac{\sqrt{\log\left(N\vee T\right)}\xi_{N}^{2}}{\sqrt{%
N\wedge T}}$.

\noindent\textbf{Step 2: Uniform Convergence for $\hat{\mathbb{I}}_{1,i}$}.%
\newline
Let $\omega_{it}^{0}= 
\begin{bmatrix}
O_{0}^{(1)}v_{t,0}^{0} \\ 
O_{1}^{(1)}v_{t,1}^{0}e_{1,it}%
\end{bmatrix}%
$. Then we can see that 
\begin{equation*}
\hat{\mathbb{I}}_{1,i}=\frac{1}{T}\sum_{t=1}^{T}\omega_{it}^{0}\left(\tau -%
\mathbf{1}\left\{ \epsilon_{it}\leq 0\right\} \right) +\frac{1}{T}%
\sum_{t=1}^{T}\left(\dot{\omega}_{it}-\omega_{it}^{0}\right) \left(\tau -%
\mathbf{1}\left\{ \epsilon_{it}\leq 0\right\} \right) .
\end{equation*}
Noting that 
\begin{align}
\dot{\omega}_{it}-\omega_{it}^{0}& = 
\begin{bmatrix}
\dot{v}_{t,0}^{(1)}-O_{0}^{(1)}v_{t,0}^{0} \\ 
\dot{v}_{t,1}^{(1)}\hat{e}_{1,it}-O_{1}^{(1)}v_{t,1}^{0}e_{1,it}^{0}%
\end{bmatrix}
\notag  \label{C.11} \\
& = 
\begin{bmatrix}
\dot{v}_{t,0}^{(1)}-O_{0}^{(1)}v_{t,0}^{0} \\ 
\left(\dot{v}_{t,1}^{(1)}-O_{0}^{(1)}v_{t,1}^{0}\right) \left(\hat{e}%
_{1,it}-e_{1,it}\right) +e_{1,it}\left(\dot{v}%
_{t,1}^{(1)}-O_{0}^{(1)}v_{t,1}^{0}\right) +O_{1}^{(1)}v_{t,1}^{(1)}\left(%
\hat{e}_{1,it}-e_{1,it}\right)%
\end{bmatrix}%
,
\end{align}%
we have 
\begin{equation}
\max_{i\in I_{3}}\frac{1}{T}\sum_{t\in[T]}\left\Vert \dot{\omega}%
_{it}-\omega_{it}^{0}\right\Vert_{2}=O_{p}(\eta_{N}).  \label{C.12}
\end{equation}%
In addition, $\max_{i\in I_{3}}\left\Vert \frac{1}{T}\sum_{t=1}^{T}\left(%
\dot{\omega}_{it}-\omega_{it}^{0}\right) \left(\tau -\mathbf{1}%
\left\{\epsilon_{it}\leq 0\right\} \right)
\right\Vert_{2}=O_{p}\left(\eta_{N}^{2}\right) $ by Lemma \ref{Lem23}. It
follows that 
\begin{equation}
\hat{\mathbb{I}}_{1,i}=\frac{1}{T}\sum_{t=1}^{T}\omega_{it}^{0}\left(\tau -%
\mathbf{1}\left\{ \epsilon_{it}\leq 0\right\} \right)
+O_{p}\left(\eta_{N}^{2}\right) ,  \label{C.13}
\end{equation}%
uniformly over $i\in I_{3}$.

\noindent\textbf{Step 3: Uniform Convergence for $\hat{\mathbb{I}}_{2,i}$}.%
\newline
Note that 
\begin{align}
\hat{\mathbb{I}}_{2,i}& =\frac{1}{T}\sum_{t=1}^{T}\dot{\omega}%
_{it}f_{it}\left(\tilde{\iota}_{it}\right)
\left(\mu_{1,it}u_{i,1}^{0\prime}v_{t,1}^{0}-\hat{\mu}_{1,it}\dot{u}%
_{i,1}^{(1)\prime}\dot{v}_{t,1}^{(1)}\right)  \notag  \label{I2} \\
& =\frac{1}{T}\sum_{t=1}^{T}\omega_{it}^{0}f_{it}(0)\left(%
\mu_{1,it}u_{i,1}^{0\prime}v_{t,1}^{0}-\hat{\mu}_{1,it}\dot{u}%
_{i,1}^{(1)\prime}\dot{v}_{t,1}^{(1)}\right)  \notag \\
& +\frac{1}{T}\sum_{t=1}^{T}\left(\dot{\omega}_{it}-\omega_{it}^{0}\right) %
\left[ f_{it}\left(\tilde{\iota}_{it}\right) -f_{it}(0)\right]
\left(\mu_{1,it}u_{i,1}^{0\prime}v_{t,1}^{0}-\hat{\mu}_{1,it}\dot{u}%
_{i,1}^{(1)\prime}\dot{v}_{t,1}^{(1)}\right)  \notag \\
& +\frac{1}{T}\sum_{t=1}^{T}\left(\dot{\omega}_{it}-\omega_{it}^{0}%
\right)f_{it}(0)\left(\mu_{1,it}u_{i,1}^{0\prime}v_{t,1}^{0}-\hat{\mu}_{1,it}%
\dot{u}_{i,1}^{(1)\prime}\dot{v}_{t,1}^{(1)}\right)  \notag \\
& +\frac{1}{T}\sum_{t=1}^{T}\omega_{it}^{0}\left[ f_{it}\left(\tilde{\iota}%
_{it}\right) -f_{it}(0)\right] \left(\mu_{1,it}u_{i,1}^{0\prime}v_{t,1}^{0}-%
\hat{\mu}_{1,it}\dot{u}_{i,1}^{(1)\prime}\dot{v}_{t,1}^{(1)}\right)
\end{align}%
where 
\begin{align*}
& \max_{i\in I_{3}}\left\Vert \frac{1}{T}\sum_{t=1}^{T}\left(\dot{\omega}%
_{it}-\omega_{it}^{0}\right) \left[ f_{it}\left(\tilde{\iota}%
_{it}\right)-f_{it}(0)\right] \left(\mu_{1,it}u_{i,1}^{0\prime}v_{t,1}^{0}-%
\hat{\mu}_{1,it}\dot{u}_{i,1}^{(1)\prime}\dot{v}_{t,1}^{(1)}\right)
\right\Vert_{2}=O_{p}(\eta_{N}^{3}), \\
& \max_{i\in I_{3}}\left\Vert \frac{1}{T}\sum_{t=1}^{T}\left(\dot{\omega}%
_{it}-\omega_{it}^{0}\right)
f_{it}(0)\left(\mu_{1,it}u_{i,1}^{0\prime}v_{t,1}^{0}-\hat{\mu}_{1,it}\dot{u}%
_{i,1}^{(1)\prime}\dot{v}_{t,1}^{(1)}\right)
\right\Vert_{2}=O_{p}(\eta_{N}^{2}), \\
& \max_{i\in I_{3}}\left\Vert \frac{1}{T}\sum_{t=1}^{T}\omega_{it}^{0}\left[%
f_{it}\left(\tilde{\iota}_{it}\right) -f_{it}(0)\right] \left(%
\mu_{1,it}u_{i,1}^{0\prime}v_{t,1}^{0}-\hat{\mu}_{1,it}\dot{u}%
_{i,1}^{(1)\prime}\dot{v}_{t,1}^{(1)}\right)
\right\Vert_{2}=O_{p}(\eta_{N}^{2}).
\end{align*}%
To see why these three equalities hold, we focus on the third one. By
Cauchy's inequality, Theorem \ref{Thm2}, and Lemma \ref{Lem20}, we have 
\begin{align*}
& \max_{i\in I_{3}}\left\Vert \frac{1}{T}\sum_{t=1}^{T}\omega_{it}^{0}\left[%
f_{it}\left(\tilde{\iota}_{it}\right) -f_{it}(0)\right] \left(%
\mu_{1,it}u_{i,1}^{0\prime}v_{t,1}^{0}-\hat{\mu}_{1,it}\dot{u}%
_{i,1}^{(1)\prime}\dot{v}_{t,1}^{(1)}\right) \right\Vert_{2} \\
& \lesssim \sqrt{\max_{i\in I_{3}}\frac{1}{T}\sum_{t\in [T]}\left\Vert
\omega_{it}^{0}\right\Vert_{2}^{2}\left\vert \tilde{\iota}_{it}\right\vert
^{2}}\sqrt{\max_{i\in I_{3}}\frac{1}{T}\sum_{t\in [T]}\left\vert
\mu_{1,it}u_{i,1}^{0\prime}v_{t,1}^{0}-\hat{\mu}_{1,it}\dot{u}%
_{i,1}^{(1)\prime}\dot{v}_{t,1}^{(1)}\right\vert ^{2}}=O_{p}\left(%
\eta_{N}^{2}\right) .
\end{align*}%
For the first term on the right hand side (RHS) of the second equality of (%
\ref{I2}), we have by Lemma \ref{Lem24} 
\begin{align*}
& \max_{i\in I_{3}}\left\Vert \frac{1}{T}%
\sum_{t=1}^{T}O_{0}^{(1)}v_{t,0}^{0}f_{it}(0)\left(\mu_{1,it}u_{i,1}^{0%
\prime}v_{t,1}^{0}-\hat{\mu}_{1,it}\dot{u}_{i,1}^{(1)\prime}\dot{v}%
_{t,1}^{(1)}\right) \right\Vert_{2}=O_{p}(\eta_{N}),\text{ } \\
& \max_{i\in I_{3}}\left\Vert \frac{1}{T}%
\sum_{t=1}^{T}e_{1,it}O_{1}^{(1)}v_{t,1}^{0}f_{it}(0)\left(%
\mu_{1,it}u_{i,1}^{0\prime}v_{t,1}^{0}-\hat{\mu}_{1,it}\dot{u}%
_{i,1}^{(1)\prime}\dot{v}_{t,1}^{(1)}\right)
\right\Vert_{2}=O_{p}\left(\eta_{N}^{2}\right) ,
\end{align*}%
and thus $\max_{i\in I_{3}}\left\Vert \hat{\mathbb{I}}_{2,i}\right\Vert_{2}=%
\left\Vert 
\begin{bmatrix}
O_{p}(\eta_{N}) \\ 
O_{p}(\eta_{N}^{2})%
\end{bmatrix}%
\right\Vert_{2}$.

\noindent\textbf{Step 4: Uniform Convergence for $\hat{\mathbb{I}}_{3,i}$}.%
\newline
Note that 
\begin{align}
\hat{\mathbb{I}}_{3,i}& =\frac{1}{T}\sum_{t=1}^{T}\dot{\omega}_{it}\Bigg\{%
\left[ \mathbf{1}\left\{ \epsilon_{it}\leq 0\right\} -\mathbf{1}%
\left\{\epsilon_{it}\leq
\iota_{it}\left(O_{0}^{(1)}u_{i,0}^{0},O_{1}^{(1)}u_{i,1}^{0},\dot{u}%
_{i,1}^{(1)}\right)\right\} \right]  \notag  \label{step4_1} \\
& -\left(F_{it}(0)-F_{it}\left[ \iota_{it}%
\left(O_{0}^{(1)}u_{i,0}^{0},O_{1}^{(1)}u_{i,1}^{0},\dot{u}%
_{i,1}^{(1)}\right) \right] \right) \Bigg\}  \notag \\
& =\frac{1}{T}\sum_{t=1}^{T}\omega_{it}^{0}\Bigg\{\left[ \mathbf{1}%
\left\{\epsilon_{it}\leq 0\right\} -\mathbf{1}\left\{ \epsilon_{it}\leq
\iota_{it}\left(O_{0}^{(1)}u_{i,0}^{0},O_{1}^{(1)}u_{i,1}^{0},\dot{u}%
_{i,1}^{(1)}\right) \right\} \right]  \notag \\
& -\left(F_{it}(0)-F_{it}\left[ \iota_{it}%
\left(O_{0}^{(1)}u_{i,0}^{0},O_{1}^{(1)}u_{i,1}^{0},\dot{u}%
_{i,1}^{(1)}\right) \right] \right) \Bigg\}  \notag \\
& +\frac{1}{T}\sum_{t=1}^{T}\left(\dot{\omega}_{it}-\omega_{it}^{0}\right) %
\Bigg\{\left[ \mathbf{1}\left\{ \epsilon_{it}\leq 0\right\} -\mathbf{1}%
\left\{ \epsilon_{it}\leq
\iota_{it}\left(O_{0}^{(1)}u_{i,0}^{0},O_{1}^{(1)}u_{i,1}^{0},\dot{u}%
_{i,1}^{(1)}\right)\right\} \right]  \notag \\
& -\left(F_{it}(0)-F_{it}\left[ \iota_{it}%
\left(O_{0}^{(1)}u_{i,0}^{0},O_{1}^{(1)}u_{i,1}^{0},\dot{u}%
_{i,1}^{(1)}\right) \right] \right) \Bigg\}.
\end{align}%
By (\ref{Lem25.3}), we can show that 
\begin{align}
&\iota_{it}\left(O_{0}^{(1)}u_{i,0}^{0},O_{1}^{(1)}u_{i,1}^{0},\dot{u}%
_{i,1}^{(1)}\right) \leq R_{\iota ,it}^{1}\left(\left\vert
\mu_{1,it}\right\vert +\left\vert e_{1,it}\right\vert \right)
+R_{\iota,it}^{2}\quad \text{with}  \notag  \label{step4_2} \\
&\max_{i\in I_{3},t\in[T]}\left\vert R_{\iota
,it}^{1}\right\vert=O_{p}(\eta_{N}),\max_{i\in I_{3},t\in[T]}\left\vert
R_{\iota,it}^{2}\right\vert =O_{p}(\eta_{N}).
\end{align}%
For the second term on the RHS of the second equality in (\ref{step4_1}), we
notice that 
\begin{align}
& \max_{i\in I_{3}}\bigg\Vert\frac{1}{T}\sum_{t=1}^{T}\left(\dot{\omega}%
_{it}-\omega_{it}^{0}\right) \Bigg\{\left[ \mathbf{1}\left\{
\epsilon_{it}\leq 0\right\} -\mathbf{1}\left\{ \epsilon_{it}\leq
\iota_{it}\left(O_{0}^{(1)}u_{i,0}^{0},O_{1}^{(1)}u_{i,1}^{0},\dot{u}%
_{i,1}^{(1)}\right)\right\} \right]  \notag  \label{step4_3} \\
& -\left(F_{it}(0)-F_{it}\left[ \iota_{it}%
\left(O_{0}^{(1)}u_{i,0}^{0},O_{1}^{(1)}u_{i,1}^{0},\dot{u}%
_{i,1}^{(1)}\right) \right] \right) \Bigg\}\Bigg\Vert_{2}  \notag \\
& \leq \max_{i\in I_{3}}\frac{1}{T}\sum_{t=1}^{T}\left\Vert \dot{\omega}%
_{it}-\omega_{it}^{0}\right\Vert_{2}\mathbf{1}\left\{ 0\leq
\left\vert\epsilon_{it}\right\vert \leq \left\vert
\iota_{it}\left(O_{0}^{(1)}u_{i,0}^{0},O_{1}^{(1)}u_{i,1}^{0},\dot{u}%
_{i,1}^{(1)}\right)\right\vert \right\}  \notag \\
& +\max_{i\in I_{3}}\frac{1}{T}\sum_{t=1}^{T}\left\Vert \dot{\omega}%
_{it}-\omega_{it}^{0}\right\Vert_{2}\left\vert F_{it}\left[
\iota_{it}\left(O_{0}^{(1)}u_{i,0}^{0},O_{1}^{(1)}u_{i,1}^{0},\dot{u}%
_{i,1}^{(1)}\right) \right] -F_{it}(0)\right\vert  \notag \\
& \leq \max_{i\in I_{3}}\frac{1}{T}\sum_{t=1}^{T}\left\Vert \dot{\omega}%
_{it}-\omega_{it}^{0}\right\Vert_{2}\mathbf{1}\left\{ 0\leq
\left\vert\epsilon_{it}\right\vert \leq \left\vert
\iota_{it}\left(O_{0}^{(1)}u_{i,0}^{0},O_{1}^{(1)}u_{i,1}^{0},\dot{u}%
_{i,1}^{(1)}\right)\right\vert \right\} +O_{p}(\eta_{N}^{2}),
\end{align}%
where the last line is by (\ref{C.11}), (\ref{step4_2}) and Assumption \ref%
{ass:1}(iv).

Define the event $\mathscr{A}_{9}(M):=\left\{ \max_{i\in I_{3},t\in
[T]}\left\vert \iota_{it}\left(O_{0}^{(1)}u_{i,0}^{0},O_{1}^{(1)}u_{i,1}^{0},%
\dot{u}_{i,1}^{(1)}\right)\right\vert \leq M\eta_{N}\left(\left\vert
\mu_{1,it}\right\vert+\left\vert e_{1,it}\right\vert +1\right) \right\} $
with $\mathbb{P}\left\{\mathscr{A}_{9}^{c}(M)\right\} \leq e$ for any $e>0$.
Then for a large enough constant $C_{17}$, we have 
\begin{align}
& \mathbb{P}\left(\max_{i\in I_{3}}\frac{1}{T}\sum_{t=1}^{T}\left\Vert \dot{%
\omega}_{it}-\omega_{it}^{0}\right\Vert_{2}\mathbf{1}\left\{ 0\leq\left\vert
\epsilon_{it}\right\vert \leq \left\vert
\iota_{it}\left(O_{0}^{(1)}u_{i,0}^{0},O_{1}^{(1)}u_{i,1}^{0},\dot{u}%
_{i,1}^{(1)}\right)\right\vert \right\} >C_{17}\eta_{N}^{2}\right)  \notag
\label{step4_4} \\
& \leq \mathbb{P}\left(\max_{i\in I_{3}}\frac{1}{T}\sum_{t=1}^{T}\left\Vert 
\dot{\omega}_{it}-\omega_{it}^{0}\right\Vert_{2}\mathbf{1}\left\{
0\leq\left\vert \epsilon_{it}\right\vert \leq \left\vert
\iota_{it}\left(O_{0}^{(1)}u_{i,0}^{0},O_{1}^{(1)}u_{i,1}^{0},\dot{u}%
_{i,1}^{(1)}\right)\right\vert \right\} >C_{17}\eta_{N}^{2}\bigg|\mathscr{A}%
_{9}(M)\right) +e  \notag \\
& \leq \mathbb{P}\left(\max_{i\in I_{3}}\frac{1}{T}\sum_{t=1}^{T}\left\Vert 
\dot{\omega}_{it}-\omega_{it}^{0}\right\Vert_{2}\mathbf{1}\left\{
0\leq\left\vert \epsilon_{it}\right\vert \leq M\eta_{N}\left(\left\vert
\mu_{1,it}\right\vert +\left\vert e_{1,it}\right\vert +1\right)
\right\}>C_{17}\eta_{N}^{2}\right) +e  \notag \\
& \leq \mathbb{E}\Bigg\{\mathbb{P}\bigg(\max_{i\in I_{3}}\frac{1}{T}%
\sum_{t=1}^{T}\left\Vert \dot{\omega}_{it}-\omega_{it}^{0}\right\Vert_{2}%
\bigg[\mathbf{1}_{it}-\bar{\mathbf{1}}_{it}\bigg]>\frac{C_{17}\eta_{N}^{2}}{2%
}\bigg|\mathscr{D}_{e}^{I_{1}\cup I_{2}}\bigg)\Bigg\}  \notag \\
& +\mathbb{E}\left\{ \mathbb{P}\left(\max_{i\in I_{3}}\frac{1}{T}%
\sum_{t=1}^{T}\mathbb{E}\left\{ \left\Vert \dot{\omega}_{it}-\omega_{it}^{0}%
\right\Vert_{2}\mathbf{1}\left\{ 0\leq \left\vert \epsilon_{it}\right\vert
\leq M\eta_{N}\left(\left\vert \mu_{1,it}\right\vert+\left\vert
e_{1,it}\right\vert +1\right) \right\} \bigg|\mathscr{D}_{e}^{I_{1}\cup
I_{2}}\right\} >\frac{C_{17}\eta_{N}^{2}}{2}\right)\right\} +e  \notag \\
& =o(1)+e
\end{align}%
where $\mathbf{1}_{it}-\bar{\mathbf{1}}_{it}:=\mathbf{1}\left\{
0\leq\left\vert \epsilon_{it}\right\vert \leq M\eta_{N}\left(\left\vert
\mu_{1,it}\right\vert +\left\vert e_{1,it}\right\vert +1\right) \right\} -%
\mathbb{E}\left(\mathbf{1}\left\{ 0\leq \left\vert \epsilon_{it}\right\vert
\leq M\eta_{N}\left(\left\vert \mu_{1,it}\right\vert+\left\vert
e_{1,it}\right\vert +1\right) \right\} \bigg|\mathscr{D}_{e}^{I_{1}\cup
I_{2}}\right) $, the last line holds by the fact that $\max_{i\in I_{3}}%
\frac{1}{T}\sum_{t=1}^{T}\mathbb{E}\left\{ \left\Vert \dot{\omega}%
_{it}-\omega_{it}^{0}\right\Vert_{2}\mathbf{1}\left\{ 0\leq\left\vert
\epsilon_{it}\right\vert \leq M\eta_{N}\left(\left\vert
\mu_{1,it}\right\vert +\left\vert e_{1,it}\right\vert +1\right) \right\} %
\bigg|\mathscr{D}_{e}^{I_{1}\cup I_{2}}\right\} $ $=O_{p}(\eta_{N}^{2})$ and 
\begin{align*}
& \mathbb{P}\left(\max_{i\in I_{3}}\frac{1}{T}\sum_{t=1}^{T}\left\Vert \dot{%
\omega}_{it}-\omega_{it}^{0}\right\Vert_{2}\left[ \mathbf{1}_{it}-\bar{%
\mathbf{1}}_{it}\right] >C_{17}\eta_{N}^{2}\bigg|\mathscr{D}_{e}^{I_{1}\cup
I_{2}}\right) \\
& \lesssim \exp \left(-\frac{T^{2}\eta_{N}^{4}}{T\xi_{N}^{2}\eta_{N}^{2}+T%
\eta_{N}^{3}\xi_{N}\log T\log \log T}\right) =o(1)
\end{align*}%
by Bernstein's inequality in Lemma \ref{Lem:Bern}(i). Combining (\ref%
{step4_3}) and (\ref{step4_4}), we have shown the second term on the RHS of
the second equality in (\ref{step4_1}) is $O_{p}(\eta_{N}^{2})$ uniformly
over $i\in I_{3}$. This result, in conjunction with Lemma \ref{Lem24}, (\ref%
{step4_1}) and Assumption \ref{ass:1}(ix), implies that $\max_{i\in
I_{3}}\left\Vert \hat{\mathbb{I}}_{3,i}\right\Vert =o_{p}\left(\left(N\vee
T\right) ^{-\frac{1}{2}}\right) .$

\noindent\textbf{Step 5: Uniform Convergence for $\hat{\mathbb{W}}_{i}\left(%
\hat{u}_{i,0}^{(3,1)},\hat{u}_{i,1}^{(3,1)},\dot{u}_{i,1}^{(1)}\right) $}.%
\newline
Note that 
\begin{align*}
& \hat{\mathbb{W}}_{i}\left(\hat{u}_{i,0}^{(3,1)},\hat{u}_{i,1}^{(3,1)},\dot{%
u}_{i,1}^{(1)}\right) \\
& =\frac{1}{T}\sum_{t=1}^{T}\dot{\omega}_{it}\left(\mathbf{1}%
\left\{\epsilon_{it}\leq
\iota_{it}\left(O_{0}^{(1)}u_{i,0}^{0},O_{1}^{(1)}u_{i,1}^{0},\dot{u}%
_{i,1}^{(1)}\right)\right\} -\mathbf{1}\left\{ \epsilon_{it}\leq
\iota_{it}\left(\hat{u}_{i,0}^{(3,1)},\hat{u}_{i,1}^{(3,1)},\dot{u}%
_{i,1}^{(1)}\right) \right\} \right) \\
& -\frac{1}{T}\sum_{t=1}^{T}\dot{\omega}_{it}\left(F_{it}\left[
\iota_{it}\left(O_{0}^{(1)}u_{i,0}^{0},O_{1}^{(1)}u_{i,1}^{0},\dot{u}%
_{i,1}^{(1)}\right) \right] -F_{it}\left[ \iota_{it}\left(\hat{u}%
_{i,0}^{(3,1)},\hat{u}_{i,1}^{(3,1)},\dot{u}_{i,1}^{(1)}\right) \right]
\right) \\
& =\frac{1}{T}\sum_{t=1}^{T}\left(\dot{\omega}_{it}-\omega_{it}^{0}\right)%
\left(\mathbf{1}\left\{ \epsilon_{it}\leq
\iota_{it}\left(O_{0}^{(1)}u_{i,0}^{0},O_{1}^{(1)}u_{i,1}^{0},\dot{u}%
_{i,1}^{(1)}\right)\right\} -\mathbf{1}\left\{ \epsilon_{it}\leq
\iota_{it}\left(\hat{u}_{i,0}^{(3,1)},\hat{u}_{i,1}^{(3,1)},\dot{u}%
_{i,1}^{(1)}\right) \right\} \right) \\
& -\frac{1}{T}\sum_{t=1}^{T}\left(\dot{\omega}_{it}-\omega_{it}^{0}\right)%
\left(F_{it}\left[ \iota_{it}%
\left(O_{0}^{(1)}u_{i,0}^{0},O_{1}^{(1)}u_{i,1}^{0},\dot{u}%
_{i,1}^{(1)}\right) \right] -F_{it}\left[ \iota_{it}\left(\hat{u}%
_{i,0}^{(3,1)},\hat{u}_{i,1}^{(3,1)},\dot{u}_{i,1}^{(1)}\right) \right]
\right) \\
& +\frac{1}{T}\sum_{t=1}^{T}\omega_{it}^{0}\Bigg\{\left(\mathbf{1}%
\left\{\epsilon_{it}\leq
\iota_{it}\left(O_{0}^{(1)}u_{i,0}^{0},O_{1}^{(1)}u_{i,1}^{0},\dot{u}%
_{i,1}^{(1)}\right)\right\} -\mathbf{1}\left\{ \epsilon_{it}\leq
\iota_{it}\left(\hat{u}_{i,0}^{(3,1)},\hat{u}_{i,1}^{(3,1)},\dot{u}%
_{i,1}^{(1)}\right) \right\} \right) \\
& -\left(F_{it}\left[ \iota_{it}%
\left(O_{0}^{(1)}u_{i,0}^{0},O_{1}^{(1)}u_{i,1}^{0},\dot{u}%
_{i,1}^{(1)}\right) \right] -F_{it}\left[ \iota_{it}\left(\hat{u}%
_{i,0}^{(3,1)},\hat{u}_{i,1}^{(3,1)},\dot{u}_{i,1}^{(1)}\right) \right]
\right) \Bigg\} \\
& :=\frac{1}{T}\sum_{t=1}^{T}\left(\dot{\omega}_{it}-\omega_{it}^{0}\right)
\left(\mathbf{1}\left\{ \epsilon_{it}\leq
\iota_{it}\left(O_{0}^{(1)}u_{i,0}^{0},O_{1}^{(1)}u_{i,1}^{0},\dot{u}%
_{i,1}^{(1)}\right) \right\} -\mathbf{1}\left\{ \epsilon_{it}\leq
\iota_{it}\left(\hat{u}_{i,0}^{(3,1)},\hat{u}_{i,1}^{(3,1)},\dot{u}%
_{i,1}^{(1)}\right) \right\} \right) \\
& -\frac{1}{T}\sum_{t=1}^{T}\left(\dot{\omega}_{it}-\omega_{it}^{0}\right)%
\left(F_{it}\left[ \iota_{it}%
\left(O_{0}^{(1)}u_{i,0}^{0},O_{1}^{(1)}u_{i,1}^{0},\dot{u}%
_{i,1}^{(1)}\right) \right] -F_{it}\left[ \iota_{it}\left(\hat{u}%
_{i,0}^{(3,1)},\hat{u}_{i,1}^{(3,1)},\dot{u}_{i,1}^{(1)}\right) \right]
\right) \\
& +\hat{\mathbb{W}}_{i}^{I}-\hat{\mathbb{W}}_{i}^{II},
\end{align*}%
where we define 
\begin{align*}
\hat{\mathbb{W}}_{i}^{I}& =\frac{1}{T}\sum_{t=1}^{T}\omega_{it}^{0}\Bigg\{%
\mathbf{1}\left\{ \epsilon_{it}\leq
\iota_{it}\left(O_{0}^{(1)}u_{i,0}^{0},O_{1}^{(1)}u_{i,1}^{0},\dot{u}%
_{i,1}^{(1)}\right)\right\} -\mathbf{1}\left\{ \epsilon_{it}\leq 0\right\} \\
& -\bigg\{F_{it}\left[ \iota_{it}%
\left(O_{0}^{(1)}u_{i,0}^{0},O_{1}^{(1)}u_{i,1}^{0},\dot{u}%
_{i,1}^{(1)}\right) \right] -F_{it}(0)\bigg\}\Bigg\}\quad \text{and} \\
\hat{\mathbb{W}}_{i}^{II}& =\frac{1}{T}\sum_{t=1}^{T}\omega_{it}^{0}\Bigg\{%
\mathbf{1}\left\{ \epsilon_{it}\leq 0\right\} -\mathbf{1}\left\{
\epsilon_{it}\leq \iota_{it}\left(\hat{u}_{i,0}^{(3,1)},\hat{u}%
_{i,1}^{(3,1)},\dot{u}_{i,1}^{(1)}\right) \right\} \\
& -\bigg\{F_{it}(0)-F_{it}\left[ \iota_{it}\left(\hat{u}_{i,0}^{(3,1)},\hat{u%
}_{i,1}^{(3,1)},\dot{u}_{i,1}^{(1)},\right) \right] \bigg\}\Bigg\}.
\end{align*}%
We first observe that 
\begin{align}
& \iota_{it}\left(O_{0}^{(1)}u_{i,0}^{0},O_{1}^{(1)}u_{i,1}^{0},\dot{u}%
_{i,1}^{(1)}\right) -\iota_{it}\left(\hat{u}_{i,0}^{(3,1)},\hat{u}%
_{i,1}^{(3,1)},\dot{u}_{i,1}^{(1)}\right)  \notag  \label{step5_1} \\
& =\left(\hat{u}_{i,0}^{(3,1)}-O_{0}^{(1)}u_{i,0}^{0}\right) ^{\prime}\dot{v}%
_{t,0}^{(1)}+\left(\hat{u}_{i,1}^{(3,1)}-O_{1}^{(1)}u_{i,1}^{0}\right)
^{\prime}\dot{v}_{t,1}^{(1)}\hat{e}_{1,it}  \notag \\
& =\left(O_{0}^{(1)\prime}\hat{u}_{i,0}^{(3,1)}-u_{i,0}^{0}\right)
^{\prime}v_{t,0}^{0}+\left(O_{1}^{(1)\prime}\hat{u}%
_{i,1}^{(3,1)}-u_{i,1}^{0}\right)^{\prime}v_{t,1}^{(1)}e_{1,it}+O_{p}(%
\eta_{N}^{2})  \notag \\
& =R_{\iota ,it}^{3}e_{1,it}+R_{\iota ,it}^{4}
\end{align}%
such that $\max_{i\in I_{3},t\in[T]}\left\vert R_{\iota,it}^{3}\right\vert
=O_{p}(\eta_{N})$ and $\max_{i\in I_{3},t\in [T]}\left\vert R_{\iota
,it}^{4}\right\vert =O_{p}(\eta_{N})$. As in Step 4, we can show that 
\begin{equation*}
\max_{i\in I_{3}}\left\Vert \frac{1}{T}\sum_{t=1}^{T}\left(\dot{\omega}%
_{it}-\omega_{it}^{0}\right) \left(\mathbf{1}\left\{
\epsilon_{it}\leq\iota_{it}%
\left(O_{0}^{(1)}u_{i,0}^{0},O_{1}^{(1)}u_{i,1}^{0},\dot{u}%
_{i,1}^{(1)}\right) \right\} -\mathbf{1}\left\{ \epsilon_{it}\leq
\iota_{it}\left(\hat{u}_{i,0}^{(3,1)},\hat{u}_{i,1}^{(3,1)},\dot{u}%
_{i,1}^{(1)}\right) \right\} \right) \right\Vert =O_{p}(\eta_{N}^{2})
\end{equation*}%
and 
\begin{equation*}
\max_{i\in I_{3}}\left\Vert \frac{1}{T}\sum_{t=1}^{T}\left(\dot{\omega}%
_{it}-\omega_{it}^{0}\right) \left(F_{it}\left[ \iota_{it}%
\left(O_{0}^{(1)}u_{i,0}^{0},O_{1}^{(1)}u_{i,1}^{0},\dot{u}%
_{i,1}^{(1)}\right) \right] -F_{it}\left[ \iota_{it}\left(\hat{u}%
_{i,0}^{(3,1)},\hat{u}_{i,1}^{(3,1)},\dot{u}_{i,1}^{(1)}\right) \right]
\right) \right\Vert =O_{p}(\eta_{N}^{2}).
\end{equation*}%
Then by Lemma \ref{Lem25}, Lemma \ref{Lem26} and Assumption \ref{ass:1}(ix),
we obtain that 
\begin{equation*}
\max_{i\in I_{3}}\left\Vert \hat{\mathbb{W}}_{i}\left(\hat{u}_{i,0}^{(3,1)},%
\hat{u}_{i,1}^{(3,1)},\dot{u}_{i,1}^{(1)}\right)
\right\Vert_{2}=o_{p}\left(\left(N\vee T\right) ^{-\frac{1}{2}}\right) .
\end{equation*}

\noindent\textbf{Step 6: Distribution Theory for $\hat{\Delta}_{i,u}$}%
\newline
Combining the above results, we have that uniformly over $i\in I_{3}$, 
\begin{align}
\begin{bmatrix}
\hat{u}_{i,0}^{(3,1)} \\ 
\hat{u}_{i,1}^{(3,1)}%
\end{bmatrix}
& = 
\begin{bmatrix}
O_{0}^{(1)}u_{i,0}^{0} \\ 
O_{1}^{(1)}u_{i,1}^{0}%
\end{bmatrix}%
+\left(\dot{D}_{i}^{F}\right) ^{-1}\left\{ \dot{D}_{i}^{J}%
\begin{bmatrix}
O_{0}^{(1)}u_{i,0}^{0} \\ 
O_{1}^{(1)}u_{i,1}^{0}%
\end{bmatrix}%
+\hat{\mathbb{I}}_{1,i}+\hat{\mathbb{I}}_{2,i}+\mathbb{I}_{3,i}+\hat{\mathbb{%
W}}_{i}\left(\hat{u}_{i,0}^{(3,1)},\hat{u}_{i,1}^{(3,1)},\dot{u}%
_{i,1}^{(1)}\right) +o_{p}\left(\left(N\vee T\right) ^{-\frac{1}{2}}\right)
\right\}  \notag  \label{step6_1} \\
& =\left[ I_{K_{0}+K_{1}}+\left(D_{i}^{F}\right) ^{-1}D_{i}^{J}\right] 
\begin{bmatrix}
O_{0}^{(1)}u_{i,0}^{0} \\ 
O_{1}^{(1)}u_{i,1}^{0}%
\end{bmatrix}%
+\left[ \left(\dot{D}_{i}^{F}\right) ^{-1}\dot{D}_{i}^{J}-\left(D_{i}^{F}%
\right) ^{-1}D_{i}^{J}\right] 
\begin{bmatrix}
O_{0}^{(1)}u_{i,0}^{0} \\ 
O_{1}^{(1)}u_{i,1}^{0}%
\end{bmatrix}
\notag \\
& +\left(D_{i}^{F}\right) ^{-1}\frac{1}{T}\sum_{t=1}^{T}\omega_{it}^{0}%
\left(\tau -\mathbf{1}\left\{ \epsilon_{it}\leq 0\right\} \right)+\left[
\left(\dot{D}_{i}^{F}\right) ^{-1}-\left(D_{i}^{F}\right) ^{-1}\right] \frac{%
1}{T}\sum_{t=1}^{T}\omega_{it}^{0}\left(\tau -\mathbf{1}\left\{
\epsilon_{it}\leq 0\right\} \right)  \notag \\
& +\left(D_{i}^{F}\right) ^{-1}\hat{\mathbb{I}}_{2,i}+o_{p}\left(\left(N\vee
T\right) ^{-\frac{1}{2}}\right) .
\end{align}%
Owing to the fact that 
\begin{equation}
\max_{i\in I_{3}}\left\Vert \frac{1}{T}\sum_{t=1}^{T}\omega_{it}^{0}\left(%
\tau -\mathbf{1}\left\{ \epsilon_{it}\leq 0\right\} \right)
\right\Vert_{2}=O_{p}\left(\sqrt{\frac{\log \left(N\vee T\right) }{T}}\right)
\label{CLT_max}
\end{equation}%
by similar arguments as (\ref{Lem:F2}) using Bernstein's inequality in Lemma %
\ref{Lem:Bern}(ii) and Lemma \ref{Lem21}, we notice that 
\begin{equation*}
\max_{i\in I_{3}}\left\Vert \left[ \left(\dot{D}_{i}^{F}\right)^{-1}-%
\left(D_{i}^{F}\right) ^{-1}\right] \frac{1}{T}\sum_{t=1}^{T}\omega_{it}^{0}%
\left(\tau -\mathbf{1}\left\{ \epsilon_{it}\leq 0\right\}
\right)\right\Vert_{2}=o_{p}\left(\left(N\vee T\right) ^{-\frac{1}{2}%
}\right) .
\end{equation*}

Next, we define 
\begin{align*}
& D^{F}=O^{(1)}%
\begin{bmatrix}
\frac{1}{T}\sum_{t=1}^{T}\mathbb{E}\left[ f_{it}(0)\bigg|\mathscr{D}\right]%
v_{t,0}^{0}v_{t,0}^{0\prime} & 0 \\ 
0 & \frac{1}{T}\sum_{t=1}^{T}\mathbb{E}\left[ e_{1,it}^{2}f_{it}(0)\bigg|%
\mathscr{D}\right] v_{t,1}^{0}v_{t,1}^{0\prime}%
\end{bmatrix}%
O^{(1)\prime}, \\
& D^{J}=%
\begin{bmatrix}
\frac{1}{T}\sum_{t=1}^{T}\mathbb{E}\left[ f_{it}(0)\bigg|\mathscr{D}\right] 
\dot{v}_{t,0}^{(1)}\left(O_{0}^{(1)}v_{t,0}^{0}-\dot{v}_{t,0}^{(1)}\right)^{%
\prime} & 0 \\ 
0 & O_{1}^{(1)}\frac{1}{T}\sum_{t=1}^{T}\mathbb{E}\left[e_{1,it}^{2}f_{it}(0)%
\bigg|\mathscr{D}\right] v_{t,1}^{0}\left(O_{1}^{(1)}v_{t,1}^{0}-\dot{v}%
_{t,1}^{(1)}\right) ^{\prime}%
\end{bmatrix}%
,
\end{align*}
where $O^{(1)}=diag\left(O_{0}^{(1)},O_{1}^{(1)}\right) .$ Here $D^{F}$ and $%
D^{J}$ do not depend on $i$ owing to stationary assumption of sequence $%
\left\{ f_{it},f_{it}(0)e_{j,it}\right\}_{j\in [p]}$ conditional on all
factors in Assumption \ref{ass:10}(iii). By Bernstein's inequality
conditional on all factors similarly as in (\ref{Lem:F2}), we can show that $%
\max_{i\in I_{3}}\left\Vert D_{i}^{F}-D^{F}\right\Vert_{F}=O_{p}\left(\sqrt{%
\frac{\log (N\vee T)}{T}}\xi_{N}\right) $. Analogously, by Bernstein's
inequality conditional on $\mathscr{D}^{I_{1}\cup I_{2}}$, we can show that $%
\max_{i\in I_{3}}\left\Vert D_{i}^{J}-D_{j}\right\Vert_{F}=O_{p}\left(\sqrt{%
\frac{\log (N\vee T)}{T}}\eta_{N}\xi_{N}\right) $. Then it follows that 
\begin{equation*}
\max_{i\in I_{3}}\left\Vert \left(D_{i}^{F}\right)
^{-1}D_{i}^{J}-\left(D^{F}\right)
^{-1}D^{J}\right\Vert_{F}=O_{p}(\eta_{N}^{2}).
\end{equation*}%
In addition, uniformly over $i\in I_{3}$, 
\begin{align*}
& \left(\dot{D}_{i}^{F}\right) ^{-1}\dot{D}_{i}^{J}-\left(D_{i}^{F}%
\right)^{-1}D_{i}^{J} \\
& =\left[ \left(\dot{D}_{i}^{F}\right) ^{-1}-\left(D_{i}^{F}\right) ^{-1}%
\right] \left[ \dot{D}_{i}^{J}-D_{i}^{J}\right] +D_{i}^{J}\left[ \left(\dot{D%
}_{i}^{F}\right) ^{-1}-\left(D_{i}^{F}\right) ^{-1}\right]
+\left(D_{i}^{F}\right) ^{-1}\left[ \dot{D}_{i}^{J}-D_{i}^{J}\right] \\
& =\left(D_{i}^{F}\right) ^{-1}\left[ \dot{D}_{i}^{J}-D_{i}^{J}\right]%
+O_{p}(\eta_{N}^{2}) \\
& =%
\begin{bmatrix}
O_{p}\left(\eta_{N}^{2}\right) & O_{p}(\eta_{N}) \\ 
O_{p}\left(\eta_{N}^{2}\right) & O_{p}\left(\eta_{N}^{2}\right)%
\end{bmatrix}%
,
\end{align*}%
where the upper right block is dominated by $\frac{1}{T}%
\sum_{t=1}^{T}f_{it}(0)O_{0}^{(1)}v_{t,0}^{0}v_{t,1}^{0\prime}O_{1}^{(1)%
\prime}\left(e_{1,it}-\hat{e}_{1,it}\right) $ in the analysis of $%
J_{1,i}^{2} $ in (\ref{Lem:J_eta}).

Let $I_{K_{0}+K_{1}}+\left(D^{F}\right) ^{-1}D^{J}=O_{u,1}^{(1)}:= 
\begin{pmatrix}
\bar{O}_{u,0} & 0 \\ 
0 & \bar{O}_{u,1}%
\end{pmatrix}%
$, $O_{u,0}^{(1)}=\bar{O}_{u,0}O_{0}^{(1)}$, and $O_{u,1}^{(1)}=\bar{O}%
_{u,1}O_{1}^{(1)}$. Combining the above arguments, we obtain that 
\begin{align}
\hat{u}_{i,1}^{(3,1)}-O_{u,1}^{(1)}u_{i,1}^{0}& =O_{1}^{(1)}\hat{V}%
_{u_{1}}^{-1}\frac{1}{T}\sum_{t=1}^{T}e_{1,it}v_{t,1}^{0}\left(\tau -\mathbf{%
1}\left\{ \epsilon_{it}\leq 0\right\} \right) +\mathcal{R}_{i,u}^{1},
\label{step6_2} \\
\hat{u}_{i,0}^{(3,1)}-O_{u,0}^{(1)}u_{i,0}^{0}& =O_{0}^{(1)}\hat{V}%
_{u_{0},i}^{-1}\frac{1}{T}\sum_{t=1}^{T}v_{t,0}^{0}\left(\tau -\mathbf{1}%
\left\{ \epsilon_{it}\leq 0\right\} \right)  \notag \\
& +O_{0}^{(1)}\hat{V}_{u_{0},i}^{-1}\frac{1}{T}%
\sum_{t=1}^{T}f_{it}(0)v_{t,0}^{0}v_{t,1}^{0\prime}u_{i,1}^{0}\left(e_{1,it}-%
\hat{e}_{1,it}\right)  \notag \\
& +O_{0}^{(1)}\hat{V}_{u_{0},i}^{-1}\frac{1}{T}%
\sum_{t=1}^{T}f_{it}(0)v_{t,0}^{0}\left(\mu_{1,it}u_{i,1}^{0%
\prime}v_{t,1}^{0}-\hat{\mu}_{1,it}\dot{u}_{i,1}^{(1)\prime}\dot{v}%
_{t,1}^{(1)}\right) +\mathcal{R}_{i,u}^{0}  \label{step6_3}
\end{align}%
such that $\max_{i\in I_{3}}\left\vert \mathcal{R}_{i,u}^{0}\right%
\vert=o_{p}\left(\left(N\vee T\right) ^{-\frac{1}{2}}\right) $, $\max_{i\in
I_{3}}\left\vert \mathcal{R}_{i,u}^{1}\right\vert =o_{p}\left(\frac{1}{\sqrt{%
T}}\right) $, $\hat{V}_{u_{0}}=\frac{1}{T}\sum_{t=1}^{T}\mathbb{E}\left[
f_{it}(0)\big|\mathscr{D}\right] v_{t,0}^{0}v_{t,0}^{0\prime}$ and $\hat{V}%
_{u_{1}}=\frac{1}{T}\sum_{t=1}^{T}\mathbb{E}\left[f_{it}(0)e_{1,it}^{2}\big|%
\mathscr{D}\right] v_{t,1}^{0}v_{t,1}^{0\prime}$. From (\ref{step6_2}),
owing to the fact that $\hat{V}_{u_{1}}$ is bounded a.s. and $\max_{i\in
I_{3}}\left\Vert \frac{1}{T}\sum_{t=1}^{T}e_{1,it}v_{t,1}^{0}\left(\tau -%
\mathbf{1}\left\{ \epsilon_{it}\leq 0\right\} \right)
\right\Vert_{2}=O_{p}\left(\sqrt{\frac{\log(N\vee T)}{T}}\right) $ by
Bernstein's inequality in Lemma \ref{Lem:Bern}(ii), we obtain that 
\begin{equation}
\max_{i\in I_{3}}\left\Vert \hat{u}_{i,1}^{(3,1)}-O_{u,1}^{(1)}u_{i,1}^{0}%
\right\Vert_{2}=O_{p}\left(\sqrt{\frac{\log (N\vee T)}{T}}\right)\quad \text{%
and }\max_{i\in I_{3}}\left\Vert \hat{u}%
_{i,0}^{(3,1)}-O_{u,0}^{(1)}u_{i,0}^{0}\right\Vert_{2}=O_{p}(\eta_{N}).
\label{max:u}
\end{equation}

Last, noting that $O_{1}^{(1)}$ is a rotation matrix and the normal
distribution is invariant to rotation, for each $i\in I_{3},$ we have 
\begin{equation*}
\sqrt{T}\left(\hat{u}_{i,1}^{(3,1)}-O_{u,1}^{(1)}u_{i,1}^{0}\right)%
\rightsquigarrow \mathcal{N}(0,\Sigma_{u_{1},i}),
\end{equation*}
where $\Sigma_{u_{1},i}=O_{1}^{(1)}V_{u_{1}}^{-1}%
\Omega_{u_{1}}V_{u_{1}}^{-1}O_{1}^{(1)\prime}$, $V_{u_{1},i}=\frac{1}{T}%
\sum_{t=1}^{T}\mathbb{E}\left(f_{it}(0)e_{1,it}^{2}v_{t,1}^{0}v_{t,1}^{0%
\prime}\right) ,$ and 
\begin{equation*}
\Omega_{u_{1},i}=Var\left[ \frac{1}{\sqrt{T}}%
\sum_{t=1}^{T}e_{1,it}v_{t,1}^{0}\left(\tau -\mathbf{1}\left\{
\epsilon_{it}\leq 0\right\} \right) \right].\quad\blacksquare
\end{equation*}

\subsubsection{Proof of Statement (ii)}

Steps for the proof for statement (ii) are the same as those in the proof of
statement (i). Hence, we only sketch the proof. Recall from (\ref{debias_2})
that $\forall t\in[T]$, 
\begin{equation*}
\{\hat{v}_{t,j}^{(3,1)}\}_{j\in [p]}=\argmin\limits_{\{v_{t,j}\}_{j\in[p]}}%
\frac{1}{N_{3}}\sum_{i\in I_{3}}\rho_{\tau }\left(\hat{Y}_{it}-v_{t,0}^{%
\prime}\hat{u}_{i,0}^{(3,1)}-v_{t,1}^{\prime}\hat{u}_{i,1}^{(3,1)}\hat{e}%
_{1,it}\right) ,
\end{equation*}%
where $\hat{Y}_{it}=Y_{it}-\hat{\mu}_{1,it}\hat{u}_{i,1}^{(3,1)\prime}\dot{v}%
_{t,1}^{(1)}$. Let 
\begin{equation*}
\hat{\Delta}_{t,v}=%
\begin{bmatrix}
\hat{v}_{t,0}^{(3,1)}-\left(O_{u,0}^{(1)}\right) ^{\prime-1}v_{t,0}^{0} \\ 
\hat{v}_{t,1}^{(3,1)}-\left(O_{u,1}^{(1)}\right) ^{\prime-1}v_{t,1}^{0}%
\end{bmatrix}%
\text{and }\hat{\omega}_{it}=%
\begin{bmatrix}
\hat{u}_{i,0}^{(3,1)} \\ 
\hat{u}_{i,1}^{(3,1)}\hat{e}_{1,it}%
\end{bmatrix}%
.
\end{equation*}
For generic $(u_{i,0},u_{i,1},v_{i,0},v_{i,1})$, define 
\begin{equation*}
\hat{\mathbb{S}}_{t}\left(u_{i,0},u_{i,1},v_{i,0},v_{i,1}\right) =\frac{1}{%
N_{3}}\sum_{i\in I_{3}}\left[ \tau -\mathbf{1}\left\{
\epsilon_{it}\leq\varrho_{it}\left(u_{i,0},u_{i,1},v_{i,0},v_{i,1}\right)
\right\} \right] \hat{\omega}_{it},
\end{equation*}%
with 
\begin{equation*}
\varrho_{it}\left(u_{i,0},u_{i,1},v_{i,0},v_{i,1}\right)
=u_{i,0}^{\prime}v_{t,0}-u_{i,0}^{0\prime}v_{t,0}^{0}+\hat{\mu}%
_{1,it}u_{i,1}^{\prime}\dot{v}_{t,1}^{(1)}-\mu_{1,it}u_{i,1}^{0%
\prime}v_{t,1}^{0}+\hat{e}_{1,it}u_{i,1}^{\prime}v_{t,1}-e_{1,it}u_{i,1}^{0%
\prime}v_{t,1}^{0}.
\end{equation*}%
We also define 
\begin{align*}
\hat{\mathcal{S}}_{t}\left(u_{i,0},u_{i,1},v_{i,0},v_{i,1}\right) & =\frac{1%
}{N_{3}}\sum_{i\in I_{3}}\mathbb{E}\left\{ \left[ \tau -\mathbf{1}%
\left\{\epsilon_{it}\leq
\left(\varrho_{it}\left(u_{i,0},u_{i,1},v_{i,0},v_{i,1}\right) \right)
\right\} \right] \dot{\omega}_{it}\bigg|\mathscr{D}_{e}^{I_{1}\cup
I_{2}}\right\} \\
& =\frac{1}{N_{3}}\sum_{i\in I_{3}}\left[ \tau
-F_{it}\left(\varrho_{it}\left(u_{i,0},u_{i,1},v_{i,0},v_{i,1}\right)
\right) \right] \dot{\omega}_{it}
\end{align*}%
and 
\begin{align*}
\hat{\mathbb{M}}_{t}\left(\hat{u}_{i,0}^{(3,1)},\hat{u}_{i,1}^{(3,1)},\hat{v}%
_{t,0}^{(3,1)},\hat{v}_{t,1}^{(3,1)}\right) & =\hat{\mathbb{S}}_{t}\left(%
\hat{u}_{i,0}^{(3,1)},\hat{u}_{i,1}^{(3,1)},\hat{v}_{t,0}^{(3,1)},\hat{v}%
_{t,1}^{(3,1)}\right) -\hat{\mathbb{S}}_{t}\left(\hat{u}_{i,0}^{(3,1)},\hat{u%
}_{i,1}^{(3,1)},\left(O_{u,0}^{(1)}\right)
^{\prime-1}v_{t,0}^{0},\left(O_{u,1}^{(1)}\right)
^{\prime-1}v_{t,1}^{0}\right) \\
& -\left\{ \hat{\mathcal{S}}_{t}\left(\hat{u}_{i,0}^{(3,1)},\hat{u}%
_{i,1}^{(3,1)},\hat{v}_{t,0}^{(3,1)},\hat{v}_{t,1}^{(3,1)}\right) -\hat{%
\mathcal{S}}_{t}\left(\hat{u}_{i,0}^{(3,1)},\hat{u}_{i,1}^{(3,1)},%
\left(O_{u,0}^{(1)}\right)
^{\prime-1}v_{t,0}^{0},\left(O_{u,1}^{(1)}\right)^{\prime-1}v_{t,1}^{0}%
\right) \right\} .
\end{align*}%
Then a similar result to that in (\ref{C.9}) holds: 
\begin{equation}
\hat{\Delta}_{t,v}=\left(\hat{D}_{t}^{F}\right) ^{-1}\left\{ \hat{D}_{t}^{J} 
\begin{bmatrix}
\left(O_{u,0}^{(1)}\right) ^{\prime-1}v_{t,0}^{0} \\ 
\left(O_{u,1}^{(1)}\right) ^{\prime-1}v_{t,1}^{0}%
\end{bmatrix}%
+\hat{\mathbb{I}}_{4,t}+\hat{\mathbb{I}}_{5,t}+\hat{\mathbb{I}}_{6,t}+\hat{%
\mathbb{M}}_{t}\left(\hat{u}_{i,0}^{(3,1)},\hat{u}_{i,1}^{(3,1)},\hat{v}%
_{t,0}^{(1)},\hat{v}_{t,1}^{(3,1)}\right) +o_{p}\left(\left(N\vee T\right)^{-%
\frac{1}{2}}\right) \right\} ,  \label{C.23}
\end{equation}
where 
\begin{align*}
\hat{D}_{t}^{F}& =\frac{1}{N_{3}}\sum_{i\in I_{3}}f_{it}\left[
\varrho_{it}\left(\hat{u}_{i,0}^{(3,1)},\hat{u}_{i,1}^{(3,1)},%
\left(O_{u,0}^{(1)}\right)
^{\prime-1}v_{t,0}^{0},\left(O_{u,1}^{(1)}\right)^{\prime-1}v_{t,1}^{0}%
\right) \right] 
\begin{bmatrix}
\hat{u}_{i,0}^{(3,1)}\hat{u}_{i,0}^{(3,1)\prime} & \hat{e}_{1,it}\hat{u}%
_{i,0}^{(3,1)}\hat{u}_{i,1}^{(3,1)\prime} \\ 
\hat{e}_{1,it}\hat{u}_{i,1}^{(3,1)}\hat{u}_{i,0}^{(3,1)\prime} & \hat{e}%
_{1,it}^{2}\hat{u}_{i,1}^{(3,1)}\hat{u}_{i,1}^{(3,1)\prime}%
\end{bmatrix}
, \\
\hat{D}_{t}^{J}& =\frac{1}{N_{3}}\sum_{i\in I_{3}}f_{it}(\tilde{\varrho}%
_{it}) 
\begin{bmatrix}
\hat{u}_{i,0}^{(3,1)}\left(O_{u,0}^{(1)}u_{i,0}^{0}-\hat{u}%
_{i,0}^{(3,1)}\right) ^{\prime} & \hat{u}_{i,0}^{(3,1)}%
\left(e_{1,it}O_{u,1}^{(1)}u_{i,1}^{0}-\hat{e}_{it}\hat{u}%
_{i,1}^{(3,1)}\right)^{\prime}\notag \\ 
\hat{e}_{1,it}\hat{u}_{i,1}^{(3,1)}\left(O_{u,0}^{(1)}u_{i,0}^{0}-\hat{u}%
_{i,0}^{(3,1)}\right) ^{\prime} & \hat{e}_{1,it}\hat{u}_{i,1}^{(3,1)}%
\left(e_{1,it}O_{u,1}^{(1)}u_{i,1}^{0}-\hat{e}_{1,it}\hat{u}%
_{i,1}^{(3,1)}\right)^{\prime}%
\end{bmatrix}%
, \\
\hat{\mathbb{I}}_{4,t}& =\frac{1}{N_{3}}\sum_{i\in I_{3}}\hat{\omega}_{it}%
\left[ \tau -\mathbf{1}\left\{ \epsilon_{it}\leq 0\right\} \right) , \\
\hat{\mathbb{I}}_{5,t}& =\frac{1}{N_{3}}\sum_{i\in I_{3}}\hat{\omega}%
_{it}f_{it}\left(\tilde{\varrho}_{it}\right)
\left(\mu_{1,it}u_{i,1}^{0\prime}v_{t,1}^{0}-\hat{\mu}_{1,it}\hat{u}%
_{i,1}^{(3,1)\prime}\dot{v}_{t,1}^{(1)}\right) , \\
\hat{\mathbb{I}}_{6,t}& =\frac{1}{N_{3}}\sum_{i\in I_{3}}\hat{\omega}_{it}%
\Bigg\{\left[ \mathbf{1}\left\{ \epsilon_{it}\leq 0\right\} -\mathbf{1}%
\left\{ \epsilon_{it}\leq \varrho_{it}\left(\hat{u}_{i,0}^{(3,1)},\hat{u}%
_{i,1}^{(3,1)},\left(O_{u,0}^{(1)}\right)
^{\prime-1}v_{t,0}^{0},\left(O_{u,1}^{(1)}\right)
^{\prime-1}v_{t,1}^{0}\right) \right\} \right] \\
& -\left(f_{it}(0)-f_{it}\left[ \varrho_{it}\left(\hat{u}_{i,0}^{(3,1)},\hat{%
u}_{i,1}^{(3,1)},\left(O_{u,0}^{(1)}\right)
^{\prime-1}v_{t,0}^{0},\left(O_{u,1}^{(1)}\right)
^{\prime-1}v_{t,1}^{0}\right) \right] \right) \Bigg\},
\end{align*}%
and $\left\vert \tilde{\varrho}_{it}\right\vert $ lies between $0$ and $%
\left\vert \varrho_{it}\left(\hat{u}_{i,0}^{(3,1)},\hat{u}%
_{i,1}^{(3,1)},\left(O_{u,0}^{(1)}\right)
^{\prime-1}v_{t,0}^{0},\left(O_{u,1}^{(1)}\right)
^{\prime-1}v_{t,1}^{0}\right) \right\vert $.

Then, we derive the linear expansion for $\hat{\Delta}_{t,v}$ by analyzing
each term in (\ref{C.23}). Define 
\begin{align*}
& D_{t}^{F}=\frac{1}{N_{3}}\sum_{i\in I_{3}}f_{it}(0)%
\begin{bmatrix}
O_{u,0}^{(1)}u_{i,0}^{0}u_{i,0}^{0\prime}O_{u,0}^{(1)\prime} & 0 \\ 
0 & e_{1,it}^{2}O_{u,1}^{(1)}u_{i,1}^{0}u_{i,1}^{0\prime}O_{u,1}^{(1)\prime}%
\end{bmatrix}%
, \\
& D_{t}^{J}=\frac{1}{N_{3}}\sum_{i\in I_{3}}f_{it}(0)%
\begin{bmatrix}
\hat{u}_{i,0}^{(3,1)}\left(O_{u,0}^{(1)}u_{i,0}^{0}-\hat{u}%
_{i,0}^{(3,1)}\right) ^{\prime} & 0 \\ 
0 & e_{1,it}^{2}O_{u,1}^{(1)}u_{i,1}^{0}\left(O_{u,1}^{(1)}u_{i,1}^{0}-\hat{u%
}_{i,1}^{(3,1)}\right) ^{\prime}%
\end{bmatrix}%
,
\end{align*}
such that 
\begin{equation*}
\max_{t\in[T]}\left\Vert \hat{D}_{t}^{F}-D_{t}^{F}\right\Vert_{F}=O_{p}(%
\eta_{N})\text{ and }\max_{t\in[T]}\left\Vert \hat{D}_{t}^{J}-D_{t}^{J}%
\right\Vert_{F}=\left\Vert 
\begin{bmatrix}
O_{p}\left(\eta_{N}^{2}\right) & O_{p}(\eta_{N}) \\ 
O_{p}\left(\eta_{N}^{2}\right) & O_{p}\left(\eta_{N}^{2}\right)%
\end{bmatrix}%
\right\Vert_{F}
\end{equation*}%
by Lemma \ref{Lem27}. Let $\omega_{it}^{\ast }=%
\begin{bmatrix}
O_{u,0}^{(1)}u_{i,0}^{0} \\ 
e_{1,it}O_{u,1}^{(1)}u_{i,1}^{0}%
\end{bmatrix}%
$. We can show that 
\begin{equation*}
\hat{\mathbb{I}}_{4,t}=\frac{1}{N_{3}}\sum_{i\in I_{3}}\omega_{it}^{\ast }%
\left[ \tau -\mathbf{1}\left\{ \epsilon_{it}\leq 0\right\} \right]%
+O_{p}\left(\eta_{N}^{2}\right)
\end{equation*}%
uniformly over $t\in[T]$ by analogous analysis as in Step 2 in the previous
subsection. With Lemma \ref{Lem28} and by similar arguments as in Step 3 in
the previous subsection, we obtain 
\begin{equation*}
\max_{t\in[T]}\left\Vert \hat{\mathbb{I}}_{5,t}\right\Vert_{2}=\left\Vert 
\begin{bmatrix}
O_{p}(\eta_{N}) \\ 
o_{p}\left(\left(N\vee T\right) ^{-\frac{1}{2}}\right)%
\end{bmatrix}%
\right\Vert_{2}
\end{equation*}
uniformly over $t\in[T]$.

Next, for $\hat{\mathbb{I}}_{6,t}$, From (\ref{I6_2}), we first note that 
\begin{align}
& \varrho_{it}\left(\hat{u}_{i,0}^{(3,1)},\hat{u}_{i,1}^{(3,1)},%
\left(O_{u,0}^{(1)}\right)
^{\prime-1}v_{t,0}^{0},\left(O_{u,1}^{(1)}\right)^{\prime-1}v_{t,1}^{0}%
\right) \leq R_{\varrho ,it}^{1}\left(\left\vert \mu_{1,it}\right\vert
+\left\vert e_{1,it}\right\vert \right) +R_{\varrho,it}^{2}\quad \text{with}
\notag  \label{rho} \\
& \max_{i\in I_{3},t\in[T]}\left\vert R_{\varrho
,it}^{1}\right\vert=O_{p}(\eta_{N}),\max_{i\in I_{3},t\in[T]}\left\vert
R_{\varrho,it}^{2}\right\vert =O_{p}(\eta_{N}).
\end{align}
We then observe that, uniformly over $t\in[T]$, 
\begin{align*}
\hat{\mathbb{I}}_{6,t}& =\frac{1}{N_{3}}\sum_{i\in I_{3}}\hat{\omega}_{it}%
\Bigg\{\left[ \mathbf{1}\left\{ \epsilon_{it}\leq 0\right\} -\mathbf{1}%
\left\{ \epsilon_{it}\leq \varrho_{it}\left(\hat{u}_{i,0}^{(3,1)},\hat{u}%
_{i,1}^{(3,1)},\left(O_{u,0}^{(1)}\right)
^{\prime-1}v_{t,0}^{0},\left(O_{u,1}^{(1)}\right)
^{\prime-1}v_{t,1}^{0}\right) \right\} \right] \\
& -\left(F_{it}(0)-F_{it}\left[ \varrho_{it}\left(\hat{u}_{i,0}^{(3,1)},\hat{%
u}_{i,1}^{(3,1)},\left(O_{u,0}^{(1)}\right)
^{\prime-1}v_{t,0}^{0},\left(O_{u,1}^{(1)}\right)
^{\prime-1}v_{t,1}^{0}\right) \right] \right) \Bigg\} \\
& =\frac{1}{N_{3}}\sum_{i\in I_{3}}\omega_{it}^{\ast }\Bigg\{\left[ \mathbf{1%
}\left\{ \epsilon_{it}\leq 0\right\} -\mathbf{1}\left\{ \epsilon_{it}\leq
\varrho_{it}\left(\hat{u}_{i,0}^{(3,1)},\hat{u}_{i,1}^{(3,1)},%
\left(O_{u,0}^{(1)}\right)
^{\prime-1}v_{t,0}^{0},\left(O_{u,1}^{(1)}\right)^{\prime-1}v_{t,1}^{0}%
\right) \right\} \right] \\
& -\left(F_{it}(0)-F_{it}\left[ \varrho_{it}\left(\hat{u}_{i,0}^{(3,1)},\hat{%
u}_{i,1}^{(3,1)},\left(O_{u,0}^{(1)}\right)
^{\prime-1}v_{t,0}^{0},\left(O_{u,1}^{(1)}\right)
^{\prime-1}v_{t,1}^{0}\right) \right] \right) \Bigg\} \\
& +\frac{1}{N_{3}}\sum_{i\in I_{3}}\left(\hat{\omega}_{it}-\omega_{it}^{\ast
}\right) \Bigg\{\left[ \mathbf{1}\left\{ \epsilon_{it}\leq 0\right\} -%
\mathbf{1}\left\{ \epsilon_{it}\leq \varrho_{it}\left(\hat{u}_{i,0}^{(3,1)},%
\hat{u}_{i,1}^{(3,1)},\left(O_{u,0}^{(1)}\right)
^{\prime-1}v_{t,0}^{0},\left(O_{u,1}^{(1)}\right)
^{\prime-1}v_{t,1}^{0}\right)\right\} \right] \\
& -\left(F_{it}(0)-F_{it}\left[ \varrho_{it}\left(\hat{u}_{i,0}^{(3,1)},\hat{%
u}_{i,1}^{(3,1)},\left(O_{u,0}^{(1)}\right)
^{\prime-1}v_{t,0}^{0},\left(O_{u,1}^{(1)}\right)
^{\prime-1}v_{t,1}^{0}\right) \right] \right) \Bigg\} \\
& =\frac{1}{N_{3}}\sum_{i\in I_{3}}\omega_{it}^{\ast }\Bigg\{\left[ \mathbf{1%
}\left\{ \epsilon_{it}\leq 0\right\} -\mathbf{1}\left\{ \epsilon_{it}\leq
\varrho_{it}\left(\hat{u}_{i,0}^{(3,1)},\hat{u}_{i,1}^{(3,1)},%
\left(O_{u,0}^{(1)}\right)
^{\prime-1}v_{t,0}^{0},\left(O_{u,1}^{(1)}\right)^{\prime-1}v_{t,1}^{0}%
\right) \right\} \right] \\
& -\left(F_{it}(0)-F_{it}\left[\varrho_{it}\left(\hat{u}_{i,0}^{(3,1)},\hat{u%
}_{i,1}^{(3,1)},\left(O_{u,0}^{(1)}\right)
^{\prime-1}v_{t,0}^{0},\left(O_{u,1}^{(1)}\right)
^{\prime-1}v_{t,1}^{0}\right) \right] \right) \Bigg\}+o_{p}\left(\left(N\vee
T\right) ^{-\frac{1}{2}}\right) \\
& =%
\begin{bmatrix}
O_{p}(\eta_{N}) \\ 
o_{p}\left(\left(N\vee T\right) ^{-\frac{1}{2}}\right)%
\end{bmatrix}%
,
\end{align*}%
where the third equality holds by (\ref{rho}) and similar arguments as used
in (\ref{step4_3}) and (\ref{step4_4}), and the last equality holds by Lemma %
\ref{Lem29}.

Similarly, combining Lemma \ref{Lem29} and Lemma \ref{Lem30}, it yields 
\begin{equation*}
\max_{t\in[T]}\left\Vert \hat{\mathbb{M}}_{t}\left(\hat{u}_{i,0}^{(3,1)},%
\hat{u}_{i,1}^{(3,1)},\hat{v}_{t,0}^{(3,1)},\hat{v}_{t,1}^{(3,1)}\right)
\right\Vert_{2}=\left\Vert 
\begin{bmatrix}
O_{p}(\eta_{N}) \\ 
o_{p}\left(\left(N\vee T\right) ^{-\frac{1}{2}}\right)%
\end{bmatrix}%
\right\Vert_{2}.
\end{equation*}
Let $\hat{V}_{v_{1},t}^{3}=\frac{1}{N_{3}}\sum_{i\in
I_{3}}f_{it}(0)e_{1,it}^{2}u_{i,1}^{0}u_{i,1}^{0\prime},$ and 
\begin{equation*}
O_{v_{1},t}^{(1)}=\left\{ I_{K_{1}}+\left(O_{u,1}^{(1)\prime}\right) ^{-1} 
\left[\hat{V}_{v_{1},t}^{3}\right] ^{-1}\left[ \frac{1}{N_{3}}\sum_{i\in
I_{3}}f_{it}(0)e_{1,it}^{2}u_{i,1}^{0}\left(O_{u,1}^{(1)}u_{i,1}^{0}-\hat{u}%
_{i,1}^{(3,1)}\right) ^{\prime}\right] \right\}
\left(O_{u,1}^{(1)\prime}\right) ^{-1}.
\end{equation*}
Combining arguments above from (\ref{C.23}), we have 
\begin{equation*}
\hat{v}_{t,1}^{(3,1)}-O_{v_{1},t}^{(1)}v_{t,1}^{0}=\left(O_{u,1}^{(1)\prime}%
\right)^{-1}\left(\hat{V}_{v_{1},t}^{3}\right) ^{-1}\frac{1}{N_{3}}%
\sum_{i\in I_{3}}e_{1,it}u_{i,1}^{0}\left(\tau -\mathbf{1}\left\{
\epsilon_{it}\leq 0\right\} \right) +\mathcal{R}_{t,v}^{1},
\end{equation*}
such that $\max_{t\in[T]}\left\vert \mathcal{R}_{t,v}^{1}\right\vert=o_{p}%
\left(\left(N\vee T\right) ^{-\frac{1}{2}}\right) $. This, in conjunction
with the result in Lemma \ref{Lem31}, i.e., $\max_{t\in [T]}\left\Vert
O_{v_{1},t}^{(1)}-\left(O_{u,1}^{(1)\prime}\right)
^{-1}\right\Vert_{F}=o_{p}\left(\left(N\vee T\right) ^{-\frac{1}{2}}\right)$%
, implies that 
\begin{align}
\hat{v}_{t,1}^{(3,1)}-\left(O_{u,1}^{(1)}\right)
^{\prime-1}v_{t,1}^{0}&=\left(O_{u,1}^{(1)\prime}\right) ^{-1}\left(\hat{V}%
_{v_{1},t}^{3}\right)^{-1}\frac{1}{N_{3}}\sum_{i\in
I_{3}}e_{1,it}u_{i,1}^{0}\left(\tau -\mathbf{1}\left\{ \epsilon_{it}\leq
0\right\} \right) +\mathcal{R}_{t,v}^{1},  \notag  \label{C.25} \\
& =O_{1}^{(1)}\left(\hat{V}_{v_{1},t}^{3}\right) ^{-1}\frac{1}{N_{3}}%
\sum_{i\in I_{3}}e_{1,it}u_{i,1}^{0}\left(\tau -\mathbf{1}\left\{
\epsilon_{it}\leq 0\right\} \right) +\mathcal{R}_{t,v}^{1}.
\end{align}%
where the second line holds by the fact that $\left\Vert
O_{u,1}^{(1)}-O_{1}^{(1)}\right\Vert_{F}=O_{p}(\eta_{N})$, $\hat{V}%
_{v_{1},t}^{I}$ is uniformly bounded and that 
\begin{equation*}
\max_{t\in[T]}\left\Vert \frac{1}{N_{3}}\sum_{i\in
I_{3}}e_{1,it}u_{i,1}^{0}\left(\tau -\mathbf{1}\left\{ \epsilon_{it}\leq
0\right\} \right) \right\Vert_{2}=O_{p}\left(\sqrt{\frac{\log (N\vee T)}{N}}%
\right)
\end{equation*}
by similar arguments as in (\ref{CLT_max}). Further define 
\begin{align*}
&
\Sigma_{v_{1}}=O_{1}^{(1)}V_{v_{1}}^{-1}%
\Omega_{u_{1}}V_{u_{1}}^{-1}O_{1}^{(1)\prime},\quad V_{v_{1}}=\frac{1}{N_{3}}%
\sum_{i\in I_{3}}\mathbb{E}\left(f_{it}(0)e_{1,it}^{2}u_{i,1}^{0}u_{i,1}^{0%
\prime}\right) ,\text{ and} \\
& \Omega_{v_{1}}=\tau \left(1-\tau \right) \frac{1}{N_{3}}\sum_{i\in I_{3}}%
\mathbb{E}\left(e_{1,it}^{2}u_{i,1}^{0}u_{i,1}^{0\prime}\right) .
\end{align*}%
Then we have 
\begin{align}
& \sqrt{N_{3}}\left(\hat{v}_{t,1}^{(3,1)}-\left(O_{u,1}^{(1)}\right)
^{\prime-1}v_{t,1}^{0}\right) \rightsquigarrow \mathcal{N}(0,\Sigma_{v_{1}}),
\notag  \label{max:v} \\
& \max_{t\in[T]}\left\Vert \hat{v}_{t,1}^{(3,1)}-O_{u,1}^{(1)}v_{t,1}^{0}%
\right\Vert_{2}=O_{p}\left(\sqrt{\frac{\log (N\vee T)}{N}}\right) ,\quad 
\text{and }\max_{t\in[T]}\left\Vert \hat{v}%
_{t,0}^{(3,1)}-O_{u,0}^{(1)}v_{t,0}^{0}\right\Vert_{2}=O_{p}(\eta_{N}),
\end{align}%
where the second line holds by Bernstein's inequality with independent data
and is similar to (\ref{max:u}). $\blacksquare $

\subsection{Proof of Proposition \protect\ref{Pro4}}

\subsubsection{Proof of Statement (i)}

Focusing on the slope estimators for $i\in I_{3}$, we notice that $\hat{%
\Theta}_{j,it}=\frac{1}{2}\left\{ \hat{u}_{t,j}^{(3,1)\prime}\hat{v}%
_{t,j}^{(3,1)}+\hat{u}_{t,j}^{(3,2)\prime}\hat{v}_{t,j}^{(3,2)}\right\} .$
It follows that 
\begin{align*}
& \hat{\Theta}_{j,it}-\Theta_{j,it}^{0} \\
& =\frac{1}{2}\left\{ \left(\hat{u}_{i,j}^{(3,1)}-O_{u,j}^{(1)}u_{i,j}^{0}%
\right) ^{\prime}\left(\hat{v}_{t,j}^{(3,1)}-O_{u,j}^{(1)}v_{t,j}^{0}%
\right)+\left(O_{u,j}^{(1)}u_{i,j}^{0}\right) ^{\prime}\left(\hat{v}%
_{t,j}^{(3,1)}-O_{u,j}^{(1)}v_{t,j}^{0}\right) +\left(\hat{u}%
_{i,j}^{(3,1)}-O_{u,j}^{(1)}u_{i,j}^{0}\right)
^{\prime}O_{u,j}^{(1)}v_{t,j}^{0}\right\} \\
& +\frac{1}{2}\left\{ \left(\hat{u}_{i,j}^{(3,2)}-O_{u,j}^{(2)}u_{i,j}^{0}%
\right) ^{\prime}\left(\hat{v}_{t,j}^{(3,2)}-O_{u,j}^{(2)}v_{t,j}^{0}%
\right)+\left(O_{u,j}^{(2)}u_{i,j}^{0}\right) ^{\prime}\left(\hat{v}%
_{t,j}^{(3,2)}-O_{u,j}^{(2)}v_{t,j}^{0}\right) +\left(\hat{u}%
_{i,j}^{(3,2)}-O_{u,j}^{(2)}u_{i,j}^{0}\right)
^{\prime}O_{u,j}^{(1)}v_{t,j}^{0}\right\} \\
& =u_{i,j}^{0\prime}\left(\hat{V}_{v_{j},t}^{(3)}\right) ^{-1}\frac{1}{N_{3}}%
\sum_{i\in I_{3}}e_{j,it}u_{i,j}^{0}\left(\tau -\mathbf{1}\left\{
\epsilon_{it}\leq 0\right\} \right) +v_{t,j}^{0\prime}\hat{V}_{u_{j}}^{-1}%
\frac{1}{T}\sum_{t=1}^{T}e_{j,it}v_{t,j}^{0}\left(\tau -\mathbf{1}\left\{
\epsilon_{it}\leq 0\right\} \right) +\mathcal{R}_{it}^{j} \\
&=u_{i,j}^{0\prime}\left(\hat{V}_{v_{j},t}^{(3)}\right) ^{-1}\frac{1}{N_{3}}%
\sum_{i\in I_{3}}\xi_{j,it}^{0} +v_{t,j}^{0\prime}\hat{V}_{u_{j}}^{-1}\frac{1%
}{T}\sum_{t=1}^{T}b_{j,it}^{0}+\mathcal{R}_{it}^{j},
\end{align*}
such that $\max_{i\in I_{3},t\in[T]}\left\vert \mathcal{R}%
_{it}^{j}\right\vert =o_{p}\left(\left(N\vee T\right) ^{-1/2}\right) $ by
Theorem \ref{Thm3} and the second equality above combines Theorem \ref{Thm3}
and the fact that $\left\Vert
O_{u,1}^{(1)}-O_{1}^{(1)}\right\Vert_{F}=O_{p}(\eta_{N})$. With similar
results hold for slope estimators for subsamples $I_{1}$ and $I_{2}$, and
then we obtain the statement (i).

\subsubsection{Proof of Statement (ii)}

Combining (\ref{max:u}), (\ref{max:v}) and Lemma \ref{Lem:bounded u&v_tilde}%
(i), it's clear that 
\begin{equation}
\max_{i\in I_{3},t\in[T]}\left\vert \hat{\Theta}_{j,it}-\Theta_{j,it}^{0}%
\right\vert =O_{p}\left(\sqrt{\frac{\log (N\vee T)}{N\wedge T}}\right) \text{
}\forall j\in [p]\quad \text{and}\quad\max_{i\in I_{3},t\in[T]}\left\vert 
\hat{\Theta}_{0,it}-\Theta_{0,it}^{0}\right\vert =O_{p}(\eta_{N}).\text{ }
\label{max:theta}
\end{equation}

\subsubsection{Proof of Statement (iii)}

For $i\in I_{a}$ and $a\in[3]$, with the distribution theory defined in
Theorem \ref{Thm3}, we notice that 
\begin{align*}
\left(\frac{1}{T}v_{t,j}^{0\prime}\Xi_{u_{j},i}v_{t,j}^{0}+\frac{1}{N_{a}}%
u_{i,j}^{0\prime}\Xi_{v_{j}}^{a}u_{i,j}^{0}\right)^{-1/2}\left(\hat{\Theta}%
_{j,it}-\Theta_{j,it}^{0}\right)\rightsquigarrow \mathcal{N}(0,1),
\end{align*}
% Assumption \ref{ass:11}(v), Theorem \ref{Thm3} for the uniformly consistency of $\hat{u}_{i,j}^{(a,b)}$ and $\hat{v}_{t,j}^{(a,b)}$ and Lemma \ref{Lem:covhat} for the consistent estimator of variance $\Sigma_{u_{j}}$ and $\Sigma_{v_{j}}$, 
which leads to the proof. $\blacksquare$

\subsection{Proof of Theorem \protect\ref{Thm5}}

\subsubsection{Proof of Statement (i)}

The proof is analogous to that in \cite{castagnetti2015inference} and \cite%
{lu2021uniform}. Recall that $S_{u_{j}}=\max (S_{u_{j}}^{(1,2)},$ $%
S_{u_{j}}^{(2,3)},$ $S_{u_{j}}^{(3,1)})$. For $a\in \lbrack 3]$ and $b\in
\lbrack 3]\setminus \{a\}$, Theorem \ref{Thm3} shows that 
\begin{equation*}
\hat{u}_{i,j}^{(a,b)}-O_{u,j}^{(b)}u_{i,j}^{0}=O_{j}^{(b)}\hat{V}%
_{u_{j}}^{-1}\frac{1}{T}\sum_{t=1}^{T}b_{j,it}^{0}+\mathcal{R}_{i,u}^{j}%
\text{ }\forall i\in I_{a},
\end{equation*}%
where $\max_{i\in I_{a}}\left\vert \mathcal{R}_{i,u}^{j}\right\vert
=o_{p}\left( \left( N\vee T\right) ^{-1/2}\right) $. Recall that $\hat{\bar{u%
}}_{j}^{(a)}=\frac{1}{N_{a}}\sum_{i\in I_{a}}\hat{u}_{i,j}^{(a,b)}$. Under $%
H_{0}^{I}:u_{i,j}^{0}=u_{j},\forall i\in \lbrack N]$, we have 
\begin{equation}
\hat{\bar{u}}_{j}^{(a,b)}-O_{u,j}^{(b)}u_{j}=O_{j}^{(b)}\hat{V}_{u_{j}}^{-1}%
\frac{1}{N_{a}T}\sum_{i\in I_{a}}\sum_{t=1}^{T}b_{j,it}^{0}+\frac{1}{N_{a}}%
\sum_{i\in I_{a}}\mathcal{R}_{i,u}^{j}=o_{p}\left( \left( N\vee T\right)
^{-1/2}\right) ,  \label{E.1}
\end{equation}%
where the last equality holds by a simple application of Bernstein's
inequality. Note that 
\begin{align}
& T\left( \hat{u}_{i,j}^{(a,b)}-\hat{\bar{u}}_{j}^{(a,b)}\right) ^{\prime
}\left( \hat{\Sigma}_{u_{j}}\right) ^{-1}\left( \hat{u}_{i,j}^{(a,b)}-\hat{%
\bar{u}}_{j}^{(a,b)}\right)  \notag  \label{E.2} \\
& =T\left( \hat{u}_{i,j}^{(a,b)}-O_{u,j}^{(b)}u_{j}\right) ^{\prime }\left( 
\hat{\Sigma}_{u_{j}}\right) ^{-1}\left( \hat{u}%
_{i,j}^{(a,b)}-O_{u,j}^{(b)}u_{j}\right) +T\left( \hat{\bar{u}}%
_{j}^{(a,b)}-O_{u,j}^{(b)}u_{j}\right) ^{\prime }\left( \hat{\Sigma}%
_{u_{j}}\right) ^{-1}\left( \hat{\bar{u}}_{j}^{(a,b)}-O_{u,j}^{(b)}u_{j}%
\right)  \notag \\
& -2T\left( \hat{u}_{i,j}^{(a,b)}-O_{u,j}^{(b)}u_{j}\right) ^{\prime }\left( 
\hat{\Sigma}_{u_{j}}\right) ^{-1}\left( \hat{\bar{u}}%
_{j}^{(a,b)}-O_{u,j}^{(b)}u_{j}\right) :=I_{1ij}+I_{2ij}-2I_{3ij}.
\end{align}%
For $I_{2ij},$ we have 
\begin{equation}
\max_{i\in I_{a}}\left\vert I_{2ij}\right\vert \leq \max_{i\in
I_{a}}T\left\Vert \hat{\bar{u}}_{j}^{(a,b)}-O_{u,j}^{(b)}u_{j}\right\Vert
_{2}^{2}\left\{ \lambda _{\min }(\Sigma _{u_{j}})+o_{p}(1)\right\}
^{-1}=o_{p}(1)  \label{E.3}
\end{equation}%
by Lemma \ref{Lem:covhat}, Assumption \ref{ass:12} and (\ref{E.1}). For $%
I_{3ij},$ we have 
\begin{align}
\max_{i\in I_{a}}\left\vert I_{3ij}\right\vert & \leq T\left( \max_{i\in
I_{a}}\left\Vert \hat{u}_{i,j}^{(a,b)}-O_{u,j}^{(b)}u_{j}\right\Vert
_{2}\right) \left( \left\Vert \hat{\bar{u}}_{j}^{(a,b)}-O_{u,j}^{(b)}u_{j}%
\right\Vert _{2}\right) \left[ \lambda _{\min }(\Sigma _{u_{j}})+o_{p}(1)%
\right] ^{-1}  \notag \\
& =TO_{p}\left( \sqrt{\frac{\log (N\vee T)}{T}}\right) o_{p}\left( \left(
N\vee T\right) ^{-1/2}\right) =o_{p}(1)  \label{E.4}
\end{align}%
by (\ref{max:u}) and (\ref{E.1}). It suffices to study $I_{1ij}$ below.

Now, let $\mathbb{Z}_{\mathfrak{B}}^{(1)}=\left( \mathbb{Z}_{\mathfrak{B}%
,1}^{(1)\prime },\cdots ,\mathbb{Z}_{\mathfrak{B},N}^{(1)\prime }\right)
^{\prime }$, where $\mathbb{Z}_{\mathfrak{B},i}^{(1)}\sim \mathcal{N}\left(
0,O_{j}^{(1)\prime }\Sigma _{u_{j}}O_{j}^{(1)}\right) $ for $i\in I_{3}$, $%
\mathbb{Z}_{\mathfrak{B},i}^{(1)}\sim \mathcal{N}\left( 0,O_{j}^{(3)\prime
}\Sigma _{u_{j}}O_{j}^{(3)}\right) $ for $i\in I_{2}$ and $\mathbb{Z}_{%
\mathfrak{B},i}^{(1)}\sim \mathcal{N}\left( 0,O_{j}^{(2)\prime }\Sigma
_{u_{j}}O_{j}^{(2)}\right) $ for $i\in I_{1}$. Note that 
\begin{align}
I_{1ij}& =\left( O_{j}^{(b)}\mathbb{Z}_{\mathfrak{B},i}^{(1)}+o_{p(1)}+%
\mathcal{R}_{i,u}^{j}\right) ^{\prime }\Sigma _{u_{j}}^{-1}\left( O_{j}^{(b)}%
\mathbb{Z}_{\mathfrak{B},i}^{(1)}+o_{p}(1)+\mathcal{R}_{i,u}^{j}\right) 
\notag \\
& +T\left( \hat{u}_{i,j}^{(a,b)}-O_{u,j}^{(b)}u_{j}\right) ^{\prime }\left[ 
\hat{\Sigma}_{u_{j}}^{-1}-\Sigma _{u_{j}}^{-1}\right] \left( \hat{u}%
_{i,j}^{(a,b)}-O_{u,j}^{(b)}u_{j}\right)  \notag \\
& =\mathbb{Z}_{\mathfrak{B},i}^{(1)\prime }\left( O_{j}^{(b)\prime }\Sigma
_{u_{j}}O_{j}^{(b)}\right) ^{-1}\mathbb{Z}_{\mathfrak{B},i}^{(1)}+o_{p}(1)%
\quad \text{uniformly over $i\in \lbrack N]$.}  \label{E.5}
\end{align}%
%
% Similarly, we have 
% \begin{align*}
% & T\left(\hat{u}_{i,j}^{(3)}-O_{u,j}^{(3)}u_{j}\right) ^{\prime}\left(\hat{\Sigma}_{u_{j}}^{(3)}\right) ^{-1}\left(\hat{u}_{i,j}^{(3)}-O_{u,j}^{(3)}u_{j}\right) =\mathbb{Z}_{\mathfrak{B},i}^{(1)\prime}\left(O_{j}^{(3)\top }\Sigma_{u_{j}}O_{j}^{(3)}\right) ^{-1}\mathbb{Z}_{\mathfrak{B},i}^{(1)}+o_{p}(1),\quad \forall i\in I_{2}, \\
% & T\left(\hat{u}_{i,j}^{(3,1)}-O_{u,j}^{(1)}u_{j}\right) ^{\prime}\hat{\Sigma}_{u_{j}}^{-1}\left(\hat{u}_{i,j}^{(1)}-O_{u,j}^{(1)}u_{j}\right) =\mathbb{Z}_{\mathfrak{B},i}^{(1)\prime}\left(O_{j}^{(1)\prime}\Sigma_{u_{j}}O_{j}^{(1)}\right) ^{-1}\mathbb{Z}_{\mathfrak{B},i}^{(1)}+o_{p}(1),\quad \forall i\in I_{1}.
% \end{align*}%
It follows that 
\begin{equation}
S_{u_{j}}=\max_{i\in \lbrack N]}\left\{ \begin{aligned}
\mathbb{Z}_{\mathfrak{B},i}^{(1)\prime}\left(O_{j}^{(2)\prime}%
\Sigma_{u_{j}}O_{j}^{(2)}\right)^{-1}\mathbb{Z}_{%
\mathfrak{B},i}^{(1)}+o_{p}(1),\quad\forall i\in I_{1},\\
\mathbb{Z}_{\mathfrak{B},i}^{(1)\prime}\left(O_{j}^{(3)\prime}%
\Sigma_{u_{j}}O_{j}^{(3)}\right)^{-1}\mathbb{Z}_{%
\mathfrak{B},i}^{(1)}+o_{p}(1),\quad\forall i\in I_{2},\\
\mathbb{Z}_{\mathfrak{B},i}^{(1)\prime}\left(O_{j}^{(1)\prime}%
\Sigma_{u_{j}}O_{j}^{(1)}\right)^{-1}\mathbb{Z}_{%
\mathfrak{B},i}^{(1)}+o_{p}(1),\quad\forall i\in I_{3}, \end{aligned}\right.
\label{E.6}
\end{equation}%
with $\mathbb{Z}_{\mathfrak{B},i}^{(1)\prime }\left( O_{j}^{(2)\prime
}\Sigma _{u_{j}}O_{j}^{(2)}\right) ^{-1}\mathbb{Z}_{\mathfrak{B}%
,i}^{(1)}\rightarrow \chi ^{2}(1)$ for each $i\in I_{1}$, $\mathbb{Z}_{%
\mathfrak{B},i}^{(1)\prime }\left( O_{j}^{(3)\prime }\Sigma
_{u_{j}}O_{j}^{(3)}\right) ^{-1}\mathbb{Z}_{\mathfrak{B},i}^{(1)}\rightarrow
\chi ^{2}(1)$ for each $i\in I_{2}$, and $\mathbb{Z}_{\mathfrak{B}%
,i}^{(1)\prime }\left( O_{j}^{(1)\prime }\Sigma _{u_{j}}O_{j}^{(1)}\right)
^{-1}\mathbb{Z}_{\mathfrak{B},i}^{(1)}\rightarrow \chi ^{2}(1)$ for each $%
i\in I_{3}$. As in \cite{castagnetti2015inference}, we can conclude that 
\begin{equation}
\mathbb{P}\left( \frac{1}{2}S_{u_{j}}\leq x+\mathsf{b}(N)\right) \rightarrow
e^{-e^{-x}}\ \text{as }\left( N,T\right) \rightarrow \infty .  \label{E.7}
\end{equation}

For the test statistic for $H_{0}^{II}$, the proof is similar. Let $\mathbb{Z%
}_{\mathfrak{B}}^{(2)}=\left( \mathbb{Z}_{\mathfrak{B},1}^{(2)\prime
},\cdots ,\mathbb{Z}_{\mathfrak{B},T}^{(2)\prime }\right) ^{\prime }$, where 
$\mathbb{Z}_{\mathfrak{B},t}^{(2)}\sim N\left( 0,O_{j}^{(1)\prime }\Sigma
_{v_{j}}O_{j}^{(1)}\right) $. As in (\ref{E.6}), we can show that 
\begin{equation*}
S_{v_{j}}^{(3,1)}=\max_{t\in \lbrack T]}\mathbb{Z}_{\mathfrak{B}%
,t}^{(2)\prime }\left( O_{j}^{(1)\prime }\Sigma _{v_{j}}O_{j}^{(1)}\right)
^{-1}\mathbb{Z}_{\mathfrak{B},t}^{(2)}+o_{p}(1)\text{.}
\end{equation*}%
By the strong mixing condition in Assumption \ref{ass:1}(iii), we have $%
\max_{j\in \lbrack p],i\in \lbrack N]}\left\Vert Cov\left( \mathfrak{b}%
_{j,it}^{(2)},\mathfrak{b}_{j,is}^{(2)}\right) \right\Vert _{\infty }\log
\left( t-s\right) =o(1)$ as $t-s\rightarrow \infty $ by Davydov's
inequality. Then by Theorem 3.5.1 in \cite{leadbetter1988extremal}, we have
that 
\begin{equation*}
\mathbb{P}\left( \frac{1}{2}\left( \max_{t\in \lbrack T]}\mathbb{Z}_{%
\mathfrak{B},t}^{(2)\prime }\left( O_{j}^{(1)\prime }\Sigma
_{v_{j}}O_{j}^{(1)}\right) ^{-1}\mathbb{Z}_{\mathfrak{B},t}^{(2)}\right)
\leq x+\mathsf{b}(T)\right) \rightarrow e^{-e^{-x}},
\end{equation*}%
which implies that 
\begin{equation*}
\mathbb{P}\left( \frac{1}{2}S_{v_{j}}^{(3,1)}\leq x+\mathsf{b}(T)\right)
\rightarrow e^{-e^{-x}}\text{ as }\left( N,T\right) \rightarrow \infty .\quad
\end{equation*}%
Recall that $\tilde{S}_{v_{j}}^{(a,b)}=\frac{1}{2}S_{v_{j}}^{(a,b)}-\mathsf{b%
}(T)$ and $S_{v_{j}}=\max (\tilde{S}_{v_{j}}^{(1,2)},\tilde{S}%
_{v_{j}}^{(2,3)},\tilde{S}_{v_{j}}^{(3,1)}).$ Noting that $S_{v_{j}}$ is
asymptotically distributed as the maximum of three independent Gumbel random
variables under $H_{0}^{II}$, we have $\mathbb{P(}S_{v_{j}}\leq
x)\rightarrow e^{-3e^{-x}}$ as $\left( N,T\right) \rightarrow \infty .$

\subsubsection{Proof of Statement (ii)}

Under $H_{1}^{I}$, we have that $\forall i\in I_{a}$, 
\begin{align*}
\hat{u}_{i,j}^{(a,b)}-\hat{\bar{u}}_{j}^{(a,b)}& =\left( \hat{u}%
_{i,j}^{(a,b)}-O_{u,j}^{(b)}u_{i,j}^{0}\right) +O_{u,j}^{(b)}\left(
u_{i,j}^{0}-u_{j}\right) -\left( \hat{\bar{u}}%
_{j}^{(a,b)}-O_{u,j}^{(b)}u_{j}\right) \\
& =\left( \hat{u}_{i,j}^{(a,b)}-O_{u,j}^{(b)}u_{i,j}^{0}\right)
+O_{u,j}^{(b)}c_{i,j}^{u}-\left( \hat{\bar{u}}%
_{j}^{(a,b)}-O_{u,j}^{(b)}u_{j}\right) .
\end{align*}%
Then 
\begin{align*}
& T\left( \hat{u}_{i,j}^{(a,b)}-\hat{\bar{u}}_{j}^{(a,b)}\right) ^{\prime
}\left( \hat{\Sigma}_{u_{j}}\right) ^{-1}\left( \hat{u}_{i,j}^{(a,b)}-\hat{%
\bar{u}}_{j}^{(a,b)}\right) \\
& =T\left( \hat{u}_{i,j}^{(a,b)}-O_{u,j}^{(b)}u_{j}\right) ^{\prime }\hat{%
\Sigma}_{u_{j}}^{-1}\left( \hat{u}_{i,j}^{(1)}-O_{u,j}^{(b)}u_{j}\right)
+T\left( \hat{\bar{u}}_{j}^{(1)}-O_{u,j}^{(b)}u_{j}\right) ^{\prime }\hat{%
\Sigma}_{u_{j}}^{-1}\left( \hat{\bar{u}}_{j}^{(a,b)}-O_{u,j}^{(b)}u_{j}%
\right) \\
& -2T\left( \hat{u}_{i,j}^{(a,b)}-O_{u,j}^{(b)}u_{j}\right) ^{\prime }\hat{%
\Sigma}_{u_{j}}^{-1}\left( \hat{\bar{u}}_{j}^{(1)}-O_{u,j}^{(b)}u_{j}\right)
+2T\left( O_{u,j}^{(b)}c_{i,j}^{u}\right) ^{\prime }\hat{\Sigma}%
_{u_{j}}^{-1}\left( \hat{u}_{i,j}^{(a,b)}-O_{u,j}^{(b)}u_{j}\right) \\
& +T\left( O_{u,j}^{(b)}c_{i,j}^{u}\right) ^{\prime }\hat{\Sigma}%
_{u_{j}}^{-1}O_{u,j}^{(b)}c_{i,j}^{u}-2T\left(
O_{u,j}^{(b)}c_{i,j}^{u}\right) ^{\prime }\hat{\Sigma}_{u_{j}}^{-1}\left( 
\hat{\bar{u}}_{j}^{(a,b)}-O_{u,j}^{(b)}u_{j}\right) \\
:&
=S_{u_{j},i,1}^{(b)}+S_{u_{j},2}^{(b)}+S_{u_{j},i,3}^{(b)}+S_{u_{j},i,4}^{(b)}+S_{u_{j},i,5}^{(b)}+S_{u_{j},i,6}^{(b)},
\end{align*}%
where $\max_{i\in I_{a}}\left\vert S_{u_{j},i,1}^{(b)}\right\vert
=O_{p}\left( \log N\right) $ by (\ref{E.5}) and (\ref{E.7}), $\left\vert
S_{u_{j},2}^{(b)}\right\vert =o_{p}\left( 1\right) $ by (\ref{E.3}), $%
\max_{i\in I_{3}}\left\vert S_{u_{j},i,3}^{(b)}\right\vert =o_{p}\left(
1\right) $ by (\ref{E.4}). Next, 
\begin{equation*}
\max_{i\in I_{a}}\left\vert S_{u_{j},i,5}^{(b)}\right\vert =\max_{i\in
I_{a}}T\left( O_{u,j}^{(b)}c_{i,j}^{u}\right) ^{\prime }\Sigma
_{u_{j}}^{-1}O_{u,j}^{(b)}c_{i,j}^{u}\left\{ 1+o_{p}(1)\right\} \gtrsim 
\left[ \max_{i\in I_{a}}\lambda _{\max }\left( \Sigma _{u,j}\right) \right]
^{-1}T\max_{i\in I_{a}}\left\Vert c_{i,j}^{u}\right\Vert _{2}^{2}\asymp
T\max_{i\in I_{a}}\left\Vert c_{i,j}^{u}\right\Vert _{2}^{2},
\end{equation*}%
which diverges to infinity at the rate faster than $\log N$ by condition in
statement (ii). By Cauchy-Schwarz inequality, we obtain that 
\begin{align*}
& \max_{i\in I_{a}}\left\vert S_{u_{j},i,4}^{(b)}\right\vert \lesssim
\max_{i\in I_{a}}\left( S_{u_{j},i,1}^{(b)}\right) ^{1/2}\left(
S_{u_{j},i,5}^{(b)}\right) ^{1/2}=o_{p}\left( \max_{i\in I_{a}}\left\vert
S_{u_{j},i,5}^{(b)}\right\vert \right) ,\text{ and} \\
& \max_{i\in I_{a}}\left\vert S_{u_{j},i,6}^{(b)}\right\vert \lesssim
\max_{i\in I_{a}}\left( S_{u_{j},i,2}^{a}\right) ^{1/2}\left(
S_{u_{j},i,5}^{(b)}\right) ^{1/2}=o_{p}\left( \max_{i\in I_{a}}\left\vert
S_{u_{j},i,5}^{(b)}\right\vert \right) .
\end{align*}%
It follows that $\mathbb{P}\left\{ S_{u_{j}}>c_{\alpha ,N}\right\}
\rightarrow 1$ as $c_{\alpha ,N}\asymp \log N$, and the final result follows.

The power of the test statistic $S_{v_{j}}$ can be analyzed analogously. $%
\blacksquare$

\subsection{Proof of Theorem \protect\ref{Thm6}}

We first derive the linear expansion of $\hat{\Theta}_{j,it}^{\ast
}-\Theta_{j,it}^{\ast }$ for $i\in I_{3}$, and similar results hold for $%
i\in I_{1}\cup I_{2}$. Let $\bar{v}_{j}^{0}:=\frac{1}{T}\sum_{t\in[T]%
}v_{t,j}^{0}\ $and $\bar{u}_{j}^{0,I_{a}}:=\frac{1}{N_{a}}\sum_{i\in
I_{a}}u_{i,j}^{0}$. By Proposition \ref{Pro4}, we obtain that uniformly in $%
i\in I_{a},$ 
\begin{align}
&\hat{\bar{\Theta}}_{j,i\cdot }-\bar{\Theta}_{j,i\cdot } =\frac{1}{T}\sum_{t%
\in[T]}\left(\hat{\Theta}_{j,it}-\Theta_{j,it}^{0}\right)  \notag \\
&=\frac{1}{T}\sum_{t\in[T]}u_{i,j}^{0\prime}\left(\hat{V}_{v_{j},t}^{(a)}%
\right) ^{-1}\frac{1}{N_{a}}\sum_{i^{*}\in I_{a}}\xi_{j,i^{*}t}^{0}+\frac{1}{%
T}\sum_{t\in[T]}v_{t,j}^{0\prime}\hat{V}_{u_{j}}^{-1}\frac{1}{T}%
\sum_{t=1}^{T}b_{j,it}^{0} +\frac{1}{T}\sum_{t\in[T]}\mathcal{R}_{it}^{j} 
\notag \\
&=\bar{v}_{j}^{0\prime}\hat{V}_{u_{j}}^{-1}\frac{1}{T}%
\sum_{t=1}^{T}b_{j,it}^{0}+o_{p}\left(\left(N\vee T\right)^{-1/2}\right) ,
\label{thetabar_i}
\end{align}%
where the third equality holds by the fact that 
\begin{align*}
& \max_{i\in I_{a}}\left\vert \frac{1}{T}\sum_{t\in
[T]}u_{i,j}^{0\prime}\left(\hat{V}_{v_{j},t}^{(a)}\right) ^{-1}\frac{1}{N_{a}%
}\sum_{i^{*}\in I_{a}}\xi_{j,i^{*}t}^{0} \right\vert=\max_{i\in
I_{a}}\left\vert \frac{1}{N_{a}T}\sum_{i^{*}\in I_{a}}\sum_{t\in[T]%
}u_{i,j}^{0\prime}\left(\hat{V}_{v_{j},t}^{(a)}\right)
^{-1}\xi_{j,i^{*}t}^{0} \right\vert \\
&=O_{p}\left(\sqrt{\frac{\log N}{NT}}\xi_{N}\right) =o_{p}\left(\left(N\vee
T\right) ^{-1/2}\right),
\end{align*}
by conditional Bernstein's inequality given $\mathscr{D}_{e}$ in Lemma \ref%
{Lem:Bern}(i), Assumption \ref{ass:1}(i)-(ii) and Assumption \ref{ass:1}%
(ix). Similarly, uniformly in $t\in \left[T\right]$, we have 
\begin{align}
&\hat{\bar{\Theta}}_{j,\cdot t}^{I_{a}}-\bar{\Theta}_{j,\cdot t}^{I_{a}} =%
\frac{1}{N_{a}}\sum_{i\in I_{3}}\left(\hat{\Theta}_{j,it}-\Theta_{j,it}^{0}%
\right)  \notag \\
&=\frac{1}{N_{a}}\sum_{i\in I_{a}}u_{i,j}^{0\prime}\left(\hat{V}%
_{v_{j},t}^{(a)}\right) ^{-1}\frac{1}{N_{a}}\sum_{i\in I_{a}}\xi_{j,it}^{0}+%
\frac{1}{N_{a}}\sum_{i\in I_{a}}v_{t,j}^{0\prime}\hat{V}_{u_{j}}^{-1}\frac{1%
}{T}\sum_{t^{*}=1}^{T}b_{j,it^{*}}^{0}+\frac{1}{N_{a}}\sum_{i\in I_{a}}%
\mathcal{R}_{it}^{j}  \notag \\
&=\bar{u}_{j}^{0,I_{a}\prime }\left(\hat{V}_{v_{j},t}^{(a)}\right) ^{-1}%
\frac{1}{N_{a}}\sum_{i\in I_{a}}\xi_{j,it}^{0} +o_{p}\left(\left(N\vee
T\right)^{-1/2}\right) ,  \label{thetabar_t}
\end{align}%
and 
\begin{align}
&\hat{\bar{\Theta}}_{j}^{I_{a}}-\bar{\Theta}_{j}^{I_{a}} =\frac{1}{N_{a}T}%
\sum_{i\in I_{a}}\sum_{t\in[T]}\left(\hat{\Theta}_{j,it}-\Theta_{j,it}^{0}%
\right)  \notag \\
&=\frac{1}{N_{a}T}\sum_{i\in I_{a}}\sum_{t\in[T]}u_{i,j}^{0\prime}\left(\hat{%
V}_{v_{j},t}^{(a)}\right) ^{-1}\frac{1}{N_{a}}\sum_{i\in
I_{a}}\xi_{j,it}^{0}+\frac{1}{N_{a}T}\sum_{i\in I_{a}}\sum_{t\in[T]%
}v_{t,j}^{0\prime}\hat{V}_{u_{j}}^{-1}\frac{1}{T}\sum_{t=1}^{T}b_{j,it}^{0} +%
\frac{1}{N_{a}T}\sum_{i\in I_{a}}\sum_{t\in[T]}\mathcal{R}_{it}^{j}  \notag
\\
&=o_{p}\left(\left(N\vee T\right) ^{-1/2}\right) .  \label{thetabar}
\end{align}%
Combining (\ref{thetabar_i})-(\ref{thetabar}), we obtain that $\forall i\in
I_{a}$ and $t\in \left[T\right]$, 
\begin{align*}
&\hat{\Theta}_{j,it}^{\ast }-\Theta_{j,it}^{\ast } =\hat{\Theta}%
_{j,it}-\Theta_{j,it}^{0}-\left(\hat{\bar{\Theta}}_{j,i\cdot }-\bar{\Theta}%
_{j,i\cdot }\right) -\left(\hat{\bar{\Theta}}_{j,\cdot t}^{I_{a}}-\bar{\Theta%
}_{j,\cdot t}^{I_{a}}\right) +\left(\hat{\bar{\Theta}}_{j}^{I_{a}}-\bar{%
\Theta}_{j}^{I_{a}}\right) \\
& =\left(u_{i,j}^{0}-\bar{u}_{j}^{0,I_{a}}\right) ^{\prime}\left(\hat{V}%
_{v_{j},t}^{(a)}\right) ^{-1}\frac{1}{N_{a}}\sum_{i\in
I_{a}}\xi_{j,it}^{0}+\left(v_{t,j}^{0}-\bar{v}_{j}^{0}\right) ^{\prime}\hat{V%
}_{u_{j}}^{-1}\frac{1}{T}\sum_{t=1}^{T}b_{j,it}^{0}+\bar{\mathcal{R}}%
_{it}^{j},
\end{align*}
such that $\max_{i\in I_{3},t\in[T]}\left\vert \bar{\mathcal{R}}%
_{it}^{j}\right\vert =o_{p}\left(\left(N\vee T\right) ^{-1/2}\right) $. It
follows that $\hat{\Theta}_{j,it}^{\ast
}-\Theta_{j,it}^{\ast}\rightsquigarrow \mathcal{N}(0,\Sigma_{j,it}^{*})$
with 
\begin{equation*}
\Sigma_{j,it}^{*}=\sum_{a\in[3]}\frac{1}{N_{a}}%
\left(O_{j}^{(b)}u_{i,j}^{0}-O_{j}^{(b)}\bar{u}_{j}^{0}\right)
^{\prime}\Sigma_{v_{j}}\left(O_{j}^{(b)}u_{i,j}^{0}-O_{j}^{(b)}\bar{u}%
_{j}^{0}\right)\mathbf{1}_{ia} +\frac{1}{T}%
\left(O_{j}^{(b)}v_{t,j}^{0}-O_{j}^{(b)}\bar{v}_{j}^{0}\right)
^{\prime}\Sigma_{u_{j}}\left(O_{j}^{(b)}v_{t,j}^{0}-O_{j}^{(b)}\bar{v}%
_{j}^{0}\right) ,
\end{equation*}%
and $\Sigma_{u_{j}}$ and $\Sigma_{v_{j}}$ are as defined in Theorem \ref%
{Thm3}. The reason why $\Sigma_{j,it}^{*}$ is not indexed with $b$ is owing
to the fact that $O_{j}^{(b)}$ shown in the right side of the equality can
be absorbed by $O_{j}^{(b)}$ not shown in $\Sigma_{u_{j}}$ and $%
\Sigma_{v_{j}}$.

Define $\hat{\Sigma}_{j,it}^{*}=\frac{1}{2}\operatornamewithlimits{\sum}_{b}%
\hat{\Sigma}_{j,it}^{(b)*}$ with 
\begin{equation*}
\hat{\Sigma}_{j,it}^{(b)*}=\sum_{a\in[3]}\left[\frac{1}{N_{a}}\left(\hat{u}%
_{i,j}^{(a,b)}-\hat{\bar{u}}_{j}^{(a,b)}\right) ^{\prime}\hat{\Sigma}%
_{v_{j}}\left(\hat{u}_{i,j}^{(a,b)}-\bar{u}_{j}^{(a,b)}\right)\mathbf{1}%
_{ia} +\frac{1}{T}\left(\hat{v}_{t,j}^{(a,b)}-\hat{\bar{v}}%
_{j}^{(a,b)}\right) ^{\prime}\hat{\Sigma}_{u_{j}}\left(\hat{v}_{t,j}-\hat{%
\bar{v}}_{j}^{(a,b)}\right)\right],
\end{equation*}%
where $\hat{\bar{u}}_{j}^{(a,b)}=\frac{1}{N_{a}}\sum_{i\in I_{a}}\hat{u}%
_{i,j}^{(a,b)}$ and $\hat{\bar{v}}_{j}^{(a,b)}=\frac{1}{T}\sum_{t\in[T]}\hat{%
v}_{t,j}^{(a,b)}$. By Theorem \ref{Thm3} and Lemma \ref{Lem:covhat}, we have 
\begin{equation*}
\max_{j\in[p],i\in [N],t\in[T]}\left\vert \hat{\Sigma}_{j,it}^{*}-%
\Sigma_{j,it}^{*}\right\vert =o_{p}(1).
\end{equation*}%
By arguments as used in the proof of Theorem \ref{Thm3}, we have that as $%
\left(N,T\right) \rightarrow \infty ,$ $\mathbb{P}\left(\frac{1}{2}%
S_{NT}\leq x+\mathsf{b}(NT)\right) \rightarrow e^{-e^{-x}}\ $under $%
H_{0}^{III},$ and $\mathbb{P}\left(S_{NT}>c_{\alpha ,3\cdot
NT}\right)\rightarrow 1$ under $H_{1}^{III}$ provided $\frac{N\wedge T}{\log
NT}\max_{i\in [N],t\in[T]}\left\vert \Theta_{j,it}^{\ast}\right\vert
^{2}\rightarrow \infty $. $\blacksquare$

%%%%%%%%%%%%%%%%%%%%%%%%%%%%%Supplement%%%%%%%%%%%%%%%%%%%%%%%%%%%%%%%%%%%%%%%%
%%%%%%%%%%%%%%%%%%%%%%%%%%%%%Supplement%%%%%%%%%%%%%%%%%%%%%%%%%%%%%%%%%%%%%%%%
%%%%%%%%%%%%%%%%%%%%%%%%%%%%%Supplement%%%%%%%%%%%%%%%%%%%%%%%%%%%%%%%%%%%%%%%%
%%%%%%%%%%%%%%%%%%%%%%%%%%%%%Supplement%%%%%%%%%%%%%%%%%%%%%%%%%%%%%%%%%%%%%%%%
%%%%%%%%%%%%%%%%%%%%%%%%%%%%%Supplement%%%%%%%%%%%%%%%%%%%%%%%%%%%%%%%%%%%%%%%%
%%%%%%%%%%%%%%%%%%%%%%%%%%%%%Supplement%%%%%%%%%%%%%%%%%%%%%%%%%%%%%%%%%%%%%%%%
%%%%%%%%%%%%%%%%%%%%%%%%%%%%%Supplement%%%%%%%%%%%%%%%%%%%%%%%%%%%%%%%%%%%%%%%%
%%%%%%%%%%%%%%%%%%%%%%%%%%%%%Supplement%%%%%%%%%%%%%%%%%%%%%%%%%%%%%%%%%%%%%%%%

\section{Some Technical Lemmas}

In this section we state and prove some technical lemmas that are used in
the proofs of the main results in the paper.

\subsection{Lemmas for the Proof of Theorem \protect\ref{Thm1}}

\begin{lemma}
{\small \label{Lem:matrix Bern} } Consider a matrix sequence $%
\left\{A_{i},i=1,\cdots,N\right\}$ whose values are symmetric matrices with
dimension $d$,

\begin{itemize}
\item[(i)] suppose $\left\{ A_{i},i=1,\cdots ,N\right\} $ is independent
with $\mathbb{E}\left(A_{i}\right) =0$ and $\left\Vert
A_{i}\right\Vert_{op}\leq M$ a.s. Let $\sigma ^{2}=\left\Vert \sum_{i\in [N]}%
\mathbb{E}\left(A_{i}^{2}\right) \right\Vert_{op}$. Then for all $t>0$, we
have 
\begin{equation*}
\mathbb{P}\left(\left\Vert \sum_{i\in [N]}A_{i}\right\Vert_{op}>t\right)
\leq d\cdot \exp \left\{ -\frac{t^{2}/2}{\sigma ^{2}+Mt/3}\right\} .
\end{equation*}

\item[(ii)] suppose $\left\{ A_{i},i=1,\cdots ,N\right\} $ is sequence of
martingale difference matrices with $\mathbb{E}_{i-1}\left(A_{i}\right) =0$
and $\left\Vert A_{i}\right\Vert_{op}\leq M$ a.s., where $\mathbb{E}_{i-1}$
denotes $\mathbb{E}\left(\cdot |\mathscr{F}_{i-1}\right) $, where $\left\{%
\mathscr{F}_{i}:i\leq N\right\} $ denotes the filtration that is clear from
the context. Let $\left\Vert \sum_{i\in [N]}\mathbb{E}_{i-1}\left(A_{i}^{2}%
\right) \right\Vert_{op}\leq \sigma ^{2}$. Then for all $t>0$, we have 
\begin{equation*}
\mathbb{P}\left(\left\Vert \sum_{i\in [N]}A_{i}\right\Vert_{op}>t\right)
\leq d\cdot \exp \left\{ -\frac{t^{2}/2}{\sigma ^{2}+Mt/3}\right\} .
\end{equation*}
\end{itemize}
\end{lemma}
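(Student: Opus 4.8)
The plan is to establish both parts by the matrix Laplace transform method. For part (i), I would begin from the standard reduction
\[
\mathbb{P}\left(\lambda_{\max}\left(\sum_{i\in[N]}A_{i}\right)>t\right)\leq \inf_{\theta>0}e^{-\theta t}\,\mathbb{E}\,\mathrm{tr}\,\exp\left(\theta\sum_{i\in[N]}A_{i}\right),
\]
and then invoke the subadditivity of matrix cumulant generating functions (a consequence of Lieb's concavity theorem) to obtain $\mathbb{E}\,\mathrm{tr}\,\exp(\theta\sum_{i}A_{i})\leq \mathrm{tr}\,\exp(\sum_{i}\log\mathbb{E}\,e^{\theta A_{i}})$. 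The next step is the per-summand estimate: using $\mathbb{E}A_{i}=0$ and $\|A_{i}\|_{op}\leq M$, a Taylor expansion of $e^{\theta x}$ on $[-M,M]$ together with operator monotonicity of $\log$ gives $\log\mathbb{E}\,e^{\theta A_{i}}\preceq g(\theta)\,\mathbb{E}A_{i}^{2}$ with $g(\theta)=(\theta^{2}/2)/(1-M\theta/3)$ for $0<\theta<3/M$. Summing, using $\|\sum_{i}\mathbb{E}A_{i}^{2}\|_{op}=\sigma^{2}$ and the monotonicity of the trace exponential, yields $\mathbb{E}\,\mathrm{tr}\,\exp(\theta\sum_{i}A_{i})\leq d\,e^{g(\theta)\sigma^{2}}$. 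Taking $\theta=t/(\sigma^{2}+Mt/3)$ produces the asserted bound for $\lambda_{\max}(\sum_{i}A_{i})$; applying the same argument to $-\sum_{i}A_{i}$, whose summands have the same second-moment envelope, and taking a union bound then controls $\|\sum_{i}A_{i}\|_{op}=\max(\lambda_{\max}(\sum_{i}A_{i}),\lambda_{\max}(-\sum_{i}A_{i}))$.

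For part (ii), the independence-based tensorization step is replaced by a supermartingale argument. Fix $\theta\in(0,3/M)$. The key conditional estimate is $\mathbb{E}_{i-1}\,e^{\theta A_{i}}\preceq\exp(g(\theta)\,\mathbb{E}_{i-1}A_{i}^{2})$, which follows exactly as in part (i) from $\mathbb{E}_{i-1}A_{i}=0$ and $\|A_{i}\|_{op}\leq M$. I would then set $W_{k}=\sum_{i\leq k}\mathbb{E}_{i-1}A_{i}^{2}$ (which is $\mathscr{F}_{k-1}$-measurable) and $Y_{k}=\mathrm{tr}\,\exp(\theta\sum_{i\leq k}A_{i}-g(\theta)W_{k})$, and show that $\{Y_{k}\}$ is a supermartingale, $\mathbb{E}_{k-1}Y_{k}\leq Y_{k-1}$. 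This uses the trace-exponential Jensen-type inequality $\mathbb{E}_{k-1}\,\mathrm{tr}\,\exp(H+\theta A_{k})\leq\mathrm{tr}\,\exp(H+\log\mathbb{E}_{k-1}e^{\theta A_{k}})$, valid for any $\mathscr{F}_{k-1}$-measurable Hermitian $H$ (again a consequence of Lieb's theorem), applied with $H=\theta\sum_{i<k}A_{i}-g(\theta)W_{k}$, followed by the conditional estimate and monotonicity of $\mathrm{tr}\,\exp(H+\cdot)$. Hence $\mathbb{E}Y_{N}\leq\mathbb{E}Y_{0}=d$. Since $W_{N}\preceq\sigma^{2}I$ a.s., the monotonicity of the trace exponential gives $e^{-g(\theta)\sigma^{2}}\,\mathrm{tr}\,\exp(\theta\sum_{i}A_{i})\leq Y_{N}$, so $\mathbb{E}\,\mathrm{tr}\,\exp(\theta\sum_{i}A_{i})\leq d\,e^{g(\theta)\sigma^{2}}$, and the rest of the argument (Markov, optimization over $\theta$, symmetrization) is identical to part (i).

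The routine parts are the scalar Bernstein-type estimate behind $g(\theta)$ and the final optimization in $\theta$. The main obstacle is the martingale step in (ii): the supermartingale must be set up so that the random predictable variation $W_{k}$ sits inside the matrix exponential in a form compatible with the conditional Jensen inequality, and the a.s. bound $\|W_{N}\|_{op}\leq\sigma^{2}$ must then be used to extract the factor $e^{g(\theta)\sigma^{2}}$ — the order of conditioning and the Loewner ordering require care since matrix exponentials do not respect sums or products. Alternatively, one may cite the matrix Freedman inequality of Tropp directly: the stated bound in (ii) is its special case in which the predictable quadratic variation is dominated by $\sigma^{2}I$ almost surely, while (i) is the further special case of independent summands.
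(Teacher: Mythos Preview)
Your sketch is correct and is precisely the matrix Laplace transform / Lieb concavity argument that underlies Tropp's results; the paper, however, does not reproduce any of this and simply cites Theorem~1.3 (Matrix Bernstein) and Corollary~4.2 (Matrix Freedman) of \cite{tropp2011user} for (i) and (ii) respectively. So the substantive content is the same---you have unpacked what the paper black-boxes. One small point: your union bound over $\pm\sum_i A_i$ yields a leading constant $2d$ rather than $d$ for the operator norm, but this discrepancy is already present in how the paper states the cited result and is immaterial for every downstream application.
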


\begin{proof}
Lemma \ref{Lem:matrix Bern}(i) and (ii) are Matrix Bernstein inequality and
Matrix Freedman inequality, which are\ respectively stated in Theorem 1.3
and Corollary 4.2 in \cite{tropp2011user}.
\end{proof}

\begin{lemma}
{\small \label{Lem:matrix op} } Consider a specific matrix $A\in \mathbb{R}%
^{N\times T}$ whose rows (denoted as $A_{i}^{\prime}$ where $A_{i}\in 
\mathbb{R}^{T}$) are independent random vectors in $\mathbb{R}^{T}$ with $%
\mathbb{E}A_{i}=0$ and $\Sigma_{i}=\mathbb{E}\left(A_{i}A_{i}^{\prime}%
\right) $. Suppose $\max_{i\in [N]}\left\Vert A_{i}\right\Vert_{2}\leq \sqrt{%
m}$ a.s. and $\max_{i\in [N]}\left\Vert \Sigma_{i}\right\Vert_{op}\leq M $
for some positive constant $M$. Then for every $t>0$, with probability $%
1-2T\exp \left(-c_{1}t^{2}\right) $, we have 
\begin{equation*}
\left\Vert A\right\Vert_{op}\leq \sqrt{NM}+t\sqrt{m+M},
\end{equation*}
where $c_{1}$ is an absolute constant.
\end{lemma}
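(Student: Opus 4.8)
The plan is to reduce the operator-norm bound on $A$ to a concentration bound for the Hermitian random matrix $A'A=\sum_{i\in[N]}A_iA_i'$ and then invoke the Matrix Bernstein inequality in Lemma \ref{Lem:matrix Bern}(i), which is exactly what that lemma was stated for. First I would write the $i$-th row of $A$ as $A_i'$, so that $\|A\|_{op}^2=\lambda_{\max}(A'A)=\|\sum_{i\in[N]}A_iA_i'\|_{op}$, and decompose $\sum_iA_iA_i'=\sum_i\Sigma_i+\sum_iB_i$ with $B_i:=A_iA_i'-\Sigma_i$. Since each $\Sigma_i$ is positive semidefinite with $\|\Sigma_i\|_{op}\le M$, the triangle inequality gives $\|\sum_i\Sigma_i\|_{op}\le NM$, so by subadditivity of $\lambda_{\max}$ it remains to control $\|\sum_iB_i\|_{op}$.

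Next I would check the two inputs needed for Matrix Bernstein: the $B_i$ are independent, symmetric, mean-zero $T\times T$ matrices, and one needs an almost-sure operator-norm bound and a variance proxy. For the former, $A_iA_i'$ and $\Sigma_i$ are both positive semidefinite with $\|A_iA_i'\|_{op}=\|A_i\|_2^2\le m$ and $\|\Sigma_i\|_{op}\le M$, so the eigenvalues of $B_i$ lie in $[-M,m]$ and $\|B_i\|_{op}\le m+M=:R$ a.s. For the latter, a direct expansion gives $\mathbb{E}(B_i^2)=\mathbb{E}(\|A_i\|_2^2A_iA_i')-\Sigma_i^2\preceq\mathbb{E}(\|A_i\|_2^2A_iA_i')\preceq m\,\Sigma_i$, using $\|A_i\|_2^2\le m$ a.s.; summing and using that all matrices involved are positive semidefinite yields $\sigma^2:=\|\sum_i\mathbb{E}(B_i^2)\|_{op}\le m\|\sum_i\Sigma_i\|_{op}\le mNM\le(m+M)NM$.

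Then I would apply Lemma \ref{Lem:matrix Bern}(i) with $d=T$ at the threshold $s:=2t\sqrt{(m+M)NM}+(m+M)t^2$, chosen precisely so that $NM+s=(\sqrt{NM}+t\sqrt{m+M})^2$. Bounding $\sigma^2+Rs/3\le(m+M)(NM+s/3)$ and comparing $s^2=(u+v)^2$ with $2t^2(m+M)(NM+s/3)$ piece by piece, where $u=2t\sqrt{(m+M)NM}$ and $v=(m+M)t^2$ (each of $u^2$, $2uv$, $v^2$ dominates the corresponding term on the right), one obtains $\tfrac{s^2/2}{\sigma^2+Rs/3}\ge t^2$, hence $\mathbb{P}(\|\sum_iB_i\|_{op}>s)\le Te^{-t^2}\le 2Te^{-c_1t^2}$ with $c_1=1$. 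On the complementary event $\|A\|_{op}^2\le NM+\|\sum_iB_i\|_{op}\le NM+s=(\sqrt{NM}+t\sqrt{m+M})^2$, which is the claim. I do not expect a serious obstacle: the only mildly delicate points are the semidefinite-ordering step that produces the variance bound $\mathbb{E}(B_i^2)\preceq m\Sigma_i$ and the calibration of $s$ so that the Bernstein exponent is absorbed cleanly into $c_1t^2$ while the resulting additive bound collapses to the stated square form; an $\varepsilon$-net argument over the sphere combined with scalar Bernstein would be an alternative but is messier and would not match the stated constants as directly.
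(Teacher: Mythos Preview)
Your proposal is correct and follows essentially the same route as the paper: decompose $A'A=\sum_i\Sigma_i+\sum_i(A_iA_i'-\Sigma_i)$, bound the deterministic part by $NM$, and control the fluctuation via the Matrix Bernstein inequality of Lemma \ref{Lem:matrix Bern}(i) using the same variance bound $\mathbb{E}[(A_iA_i')^2]\preceq m\Sigma_i$. The only cosmetic differences are that the paper normalizes by $1/N$ and chooses the threshold $\varepsilon=\max(\sqrt{M}\delta,\delta^2)$ with $\delta=t\sqrt{(m+M)/N}$, whereas you calibrate $s$ directly so that $NM+s$ is a perfect square; both calibrations collapse to the same final bound.
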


\begin{proof}
The proof follows similar arguments as used in the proof of Theorem 5.41 in 
\cite{vershynin2010introduction}. Define $Z_{i}:=\frac{1}{N}%
\left(A_{i}A_{i}^{\prime}-\Sigma_{i}\right) \in \mathbb{R}^{T\times T}$. We
notice that $\left\{ Z_{i}\right\} $ is an independent sequence with $%
\mathbb{E}\left(Z_{i}\right) =0$. To use the matrix Bernstein inequality, we
analyze $\left\Vert Z_{i}\right\Vert_{op}$ and $\left\Vert \sum_{i\in [N]}%
\mathbb{E}\left(X_{i}^{2}\right) \right\Vert_{op}$ as follows. First, note
that uniformly over $i,$ 
\begin{equation}
\left\Vert Z_{i}\right\Vert_{op}\leq \frac{1}{N}\left(\left\Vert
A_{i}A_{i}^{\prime}\right\Vert_{op}+\left\Vert
\Sigma_{i}\right\Vert_{op}\right) \leq \frac{1}{N}\left(\left\Vert
A_{i}\right\Vert_{2}^{2}+\left\Vert \Sigma_{i}\right\Vert_{op}\right) \leq 
\frac{m+M}{N},\quad \text{a.s.}  \label{Lem2.1}
\end{equation}%
Next, noting that $\mathbb{E}\left[ \left(A_{i}A_{i}^{\prime}\right) ^{2}%
\right] =\mathbb{E}\left[ \left\Vert A_{i}\right\Vert_{2}A_{i}A_{i}^{\prime}%
\right] \leq m\Sigma_{i}$ and $Z_{i}^{2}=\frac{1}{N^{2}}[\left(A_{i}A_{i}^{%
\prime}\right) ^{2}-A_{i}A_{i}^{\prime}\Sigma_{i}$ $-\Sigma_{i}A_{i}A_{i}^{%
\prime}+\Sigma_{i}^{2}],$ we have 
\begin{align*}
\left\Vert \mathbb{E}\left(Z_{i}^{2}\right) \right\Vert_{op}& =\frac{1}{N^{2}%
}\left\Vert \mathbb{E}\left[ \left(A_{i}A_{i}^{\prime}\right)^{2}-%
\Sigma_{i}^{2}\right] \right\Vert_{op}\leq \frac{1}{N^{2}}\left\{\left\Vert 
\mathbb{E}\left[ \left(A_{i}A_{i}^{\prime}\right) ^{2}\right]%
\right\Vert_{op}+\left\Vert \Sigma_{i}\right\Vert_{op}^{2}\right\} \\
& \leq \frac{1}{N^{2}}\left(m\left\Vert
\Sigma_{i}\right\Vert_{op}+\left\Vert \Sigma_{i}\right\Vert_{op}^{2}\right)
\leq \frac{mM+M^{2}}{N^{2}}\text{ uniformly in }i.
\end{align*}
It follows that 
\begin{equation}
\left\Vert \sum_{i\in [N]}\mathbb{E}\left(Z_{i}^{2}\right)\right\Vert_{op}%
\leq N\max_{i\in [N]}\left\Vert \mathbb{E}\left(Z_{i}^{2}\right)
\right\Vert_{op}\leq \frac{mM+M^{2}}{N}.  \label{Lem2.2}
\end{equation}
Let $\varepsilon=\max \left(\sqrt{M}\delta ,\delta ^{2}\right) $ with $%
\delta =t\sqrt{\frac{m+M}{N}}$. By (\ref{Lem2.1})-(\ref{Lem2.2}) and the
matrix Bernstein inequality in Lemma \ref{Lem:matrix Bern}(i), we have 
\begin{align*}
\mathbb{P}\left\{ \left\Vert \frac{1}{N}\left(A^{\prime}A-\sum_{i\in
[N]}\Sigma_{i}\right) \right\Vert_{op}\geq \varepsilon \right\} & =\mathbb{P}%
\left(\left\Vert \sum_{i\in [N]}Z_{i}\right\Vert_{op}\geq\varepsilon \right)
\\
& \leq 2T\exp \left\{ -c_{1}\min \left(\frac{\varepsilon ^{2}}{\frac{mM+M^{2}%
}{N}},\frac{\varepsilon }{\frac{m+M}{N}}\right) \right\} \\
& \leq 2T\exp \left\{ -c_{1}\min \left(\frac{\varepsilon ^{2}}{M}%
,\varepsilon \right) \frac{N}{m+M}\right\} \\
& \leq 2T\exp \left\{ -\frac{c_{1}\delta ^{2}N}{m+M}\right\} =2T\exp
\left\{-c_{1}t^{2}\right\} ,
\end{align*}%
for some positive constant $c_{1}$, where the third inequality is due to the
fact that 
\begin{align*}
\min \left(\frac{\varepsilon ^{2}}{M},\varepsilon \right) & =\min \left(\max
\left(\delta ^{2},\delta ^{4}/M\right) ,\max \left(\sqrt{M}\delta,\delta
\right) \right) \\
& =\left\{ \begin{aligned}
&\min\left(\delta^{2},\sqrt{M}\delta\right)=\delta^{2}\quad\text{if}\quad%
\delta^{2}\geq \frac{\delta^{4}}{M},\\
&\min\left(\delta^{4}/M,\delta^{2}\right)=\delta^{2},\quad
\text{if}\quad\delta^{2}< \frac{\delta^{4}}{M}. \end{aligned}\right. .
\end{align*}%
That is, 
\begin{equation}
\left\Vert \frac{1}{N}A^{\prime}A-\frac{1}{N}\sum_{i\in
[N]}\Sigma_{i}\right\Vert_{op}\leq \max \left(\sqrt{M}\delta ,\delta
^{2}\right)  \label{Lem2.3}
\end{equation}%
with probability $1-\exp \left(-c_{1}t^{2}\right) $. Combining the fact that 
$\left\Vert \Sigma_{i}\right\Vert \leq M$ uniformly over $i$ and (\ref%
{Lem2.3}), we show that 
\begin{align*}
\frac{1}{N}\left\Vert A\right\Vert_{op}^{2}& =\left\Vert \frac{1}{N}%
A^{\prime}A\right\Vert_{op}\leq \left\Vert \frac{1}{N}\sum_{i\in
[N]}\Sigma_{i}\right\Vert_{op}+\left\Vert \frac{1}{N}A^{\prime}A-\frac{1}{N}%
\sum_{i\in[N]}\Sigma_{i}\right\Vert_{op} \\
& \leq \max_{i\in [N]}\left\Vert \Sigma_{i}\right\Vert_{op}+\sqrt{M}\delta
+\delta ^{2} \\
& \leq M+\sqrt{M}t\sqrt{\frac{m+M}{N}}+t^{2}\frac{m+M}{N}\leq \left(\sqrt{M}%
+t\sqrt{\frac{m+M}{N}}\right) ^{2}.
\end{align*}
It follows that $\left\Vert A\right\Vert_{op}\leq \sqrt{NM}+t\sqrt{m+M}.$
\end{proof}

\begin{lemma}
{\small \label{Lem:score op} } Recall $a_{it}=\tau -\mathbf{1}%
\left\{\epsilon_{it}\leq 0\right\} $ and $a=\{a_{it}\}\in \mathbb{R}%
^{N\times T}.$ Under Assumption \ref{ass:1}, we have $\left\Vert X_{j}\odot
a\right\Vert_{op}=O_{p}\left(\sqrt{N}+\sqrt{T\log T}\right) $ $\forall j\in
[p]$ and $\left\Vert a\right\Vert_{op}=O_{p}\left(\sqrt{N}+\sqrt{T\log T}%
\right) $.
\end{lemma}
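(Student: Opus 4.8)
The plan is to view $a$ and $X_{j}\odot a$ as matrices with independent, conditionally mean-zero rows and to apply the operator-norm bound of Lemma~\ref{Lem:matrix op}. The cross-sectional direction supplies the independence, while the temporal dependence inside each row is absorbed entirely into the operator norm of the conditional row-covariance, which is where the strong-mixing assumption does its work. Both bounds are proved by the same three moves: (1) condition on the right $\sigma$-field to get independent, mean-zero rows; (2) bound the a.s.\ row $\ell_2$-norm and the conditional row-covariance operator norm; (3) plug into Lemma~\ref{Lem:matrix op} with the deviation parameter of order $\sqrt{\log T}$ and integrate out the conditioning.

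\textbf{Conditioning and mean-zero rows.} Work conditionally on $\mathscr{D}_{e}$. By Assumption~\ref{ass:1}(i) the families $\{\epsilon_{it}\}_{t\in[T]}$ are independent across $i$ given $\mathscr{D}$; since, given $\mathscr{D}$, the block $(\{\epsilon_{it},e_{it}\}_{t})$ of unit $i$ is independent of $\{e_{ls}\}_{l\neq i}$, both this cross-sectional independence and the conditional law of $\{\epsilon_{it}\}_{t}$ (hence the conditional strong-mixing coefficients of Assumption~\ref{ass:1}(iii)) are unchanged when we enlarge the conditioning from $\mathscr{D}_{e_{i}}$ to $\mathscr{D}_{e}$. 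Because every $X_{j,it}$ is $\mathscr{D}_{e}$-measurable, Assumption~\ref{ass:1}(ii) gives $\mathbb{E}[a_{it}\mid\mathscr{D}_{e}]=0$ and $\mathbb{E}[X_{j,it}a_{it}\mid\mathscr{D}_{e}]=0$. Thus, conditional on $\mathscr{D}_{e}$, both $a$ and $X_{j}\odot a$ have independent rows with zero mean.

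\textbf{Row norms and conditional covariances.} For $a$, $|a_{it}|\le 1$ gives a row $\ell_{2}$-norm at most $\sqrt T$, so one may take $m=T$ in Lemma~\ref{Lem:matrix op}; for $X_{j}\odot a$, Assumption~\ref{ass:1}(iv) with Jensen's inequality gives $\tfrac1T\sum_{t}X_{j,it}^{2}\le(\tfrac1T\sum_{t}\|X_{it}\|_{2}^{3})^{2/3}\le M^{2/3}$ a.s.\ uniformly in $i$, so again the row $\ell_{2}$-norm is $O(\sqrt T)$ a.s.\ and $m\asymp T$. For the conditional row covariance write $\Sigma_{i}=D_{i}\Gamma_{i}D_{i}$, where $D_{i}=\mathrm{diag}(X_{j,i1},\dots,X_{j,iT})$ (take $D_{i}=I$ for $a$) and $\Gamma_{i}$ is the symmetric matrix with entries $\gamma_{i,ts}=\mathbb{E}[a_{it}a_{is}\mid\mathscr{D}_{e}]=\mathrm{Cov}(a_{it},a_{is}\mid\mathscr{D}_{e})$. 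A conditional (Davydov-type) covariance inequality for the bounded, conditionally strong-mixing sequence $\{a_{it}\}_{t}$ of Assumption~\ref{ass:1}(iii) gives $|\gamma_{i,ts}|\lesssim\alpha^{|t-s|}$, hence $\|\Gamma_{i}\|_{op}\le\|\Gamma_{i}\|_{\infty}\lesssim\sum_{k\ge0}\alpha^{k}<\infty$ uniformly in $i$; together with the bound on $\|D_{i}\|_{op}=\max_{t}|X_{j,it}|$ from Assumption~\ref{ass:1}(v) this produces a uniform a.s.\ bound $M$ on $\max_{i}\|\Sigma_{i}\|_{op}$ (a pure constant in the case of $a$).

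\textbf{Assembling the bound, and the main obstacle.} Apply Lemma~\ref{Lem:matrix op} conditionally on $\mathscr{D}_{e}$ with these deterministic $m$ and $M$ and with $t=\sqrt{c\log T}$ for a sufficiently large constant $c$: the inequality there holds with conditional probability at least $1-2T\exp(-c_{1}c\log T)=1-2T^{1-c_{1}c}\to0$, and taking expectations over $\mathscr{D}_{e}$ removes the conditioning. This gives $\|a\|_{op}\le\sqrt{NM}+\sqrt{c\log T}\,\sqrt{T+M}=O_{p}(\sqrt N+\sqrt{T\log T})$, and the identical argument—with the conditional row-covariance constant controlled through Assumptions~\ref{ass:1}(iv)--(v)—gives $\|X_{j}\odot a\|_{op}=O_{p}(\sqrt N+\sqrt{T\log T})$ for every $j\in[p]$. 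The only nontrivial step is the uniform-in-$i$ control of $\|\Sigma_{i}\|_{op}$: a direct variance computation is unavailable owing to the within-row temporal dependence, so one must first transfer the conditional strong-mixing coefficients of Assumption~\ref{ass:1}(iii) from $\mathscr{D}_{e_{i}}$ to the working field $\mathscr{D}_{e}$ (using the cross-sectional conditional independence of Assumption~\ref{ass:1}(i)) and then invoke a conditional covariance inequality for bounded strong-mixing arrays to obtain the geometric decay of $\gamma_{i,ts}$ that makes $\|\Gamma_{i}\|_{\infty}$ bounded.
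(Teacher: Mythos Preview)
Your argument for $\|a\|_{op}$ is correct and matches the paper's. The gap is in the $X_{j}\odot a$ case, and it comes from your choice of conditioning field.

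By conditioning on $\mathscr{D}_{e}$ you make every $X_{j,it}$ deterministic, so your factorization $\Sigma_{i}=D_{i}\Gamma_{i}D_{i}$ forces $\|\Sigma_{i}\|_{op}\le \|D_{i}\|_{op}^{2}\|\Gamma_{i}\|_{op}$ with $\|D_{i}\|_{op}=\max_{t}|X_{j,it}|$. Assumption~\ref{ass:1}(v) only gives $\max_{t}|X_{j,it}|\le \xi_{N}$, and $\xi_{N}$ is a (possibly diverging) sequence, not a fixed constant --- note that Assumption~\ref{ass:1}(ix) explicitly allows $\xi_{N}\to\infty$. So the ``uniform a.s.\ bound $M$'' you obtain for $\max_{i}\|\Sigma_{i}\|_{op}$ is really $O(\xi_{N}^{2})$, and plugging into Lemma~\ref{Lem:matrix op} yields $\sqrt{NM}=\xi_{N}\sqrt{N}$, i.e.\ $\|X_{j}\odot a\|_{op}=O_{p}(\xi_{N}\sqrt{N}+\sqrt{T\log T})$ rather than the stated $O_{p}(\sqrt{N}+\sqrt{T\log T})$. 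Once $X$ is frozen, no amount of Assumption~\ref{ass:1}(iv) can recover the missing averaging: the row-covariance bound via $\|\Sigma_{i}\|_{\infty}$ still contains a factor $\max_{t}|X_{j,it}|$.

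The fix is to condition only on $\mathscr{D}$, as the paper does. Given $\mathscr{D}$ the rows $\{X_{j,it}a_{it}\}_{t}$ are still independent across $i$ (Assumption~\ref{ass:1}(i)) with mean zero by iterating $\mathbb{E}[X_{j,it}a_{it}\mid\mathscr{D}]=\mathbb{E}\{X_{j,it}\mathbb{E}[a_{it}\mid\mathscr{D}_{e}]\mid\mathscr{D}\}=0$. Crucially, by the second half of Assumption~\ref{ass:1}(iii), the process $\{(\epsilon_{it},e_{it})\}_{t}$ is strong mixing given $\mathscr{D}$, hence so is $\{X_{j,it}a_{it}\}_{t}$; a conditional Davydov inequality then bounds $|\mathrm{Cov}(X_{j,it}a_{it},X_{j,is}a_{is}\mid\mathscr{D})|$ by a geometric factor times $\{\mathbb{E}[|X_{j,it}|^{q}\mid\mathscr{D}]\}^{2/q}$, and the latter is a.s.\ bounded by a \emph{fixed} constant via Assumption~\ref{ass:1}(iv). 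This gives $\max_{i}\|\Sigma_{i}\|_{op}\le c$ a.s.\ with $c$ not depending on $\xi_{N}$, and Lemma~\ref{Lem:matrix op} delivers the correct $O_{p}(\sqrt{N}+\sqrt{T\log T})$.
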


\begin{proof}
We focus on $\left\Vert X_{j}\odot a\right\Vert _{op}$ as the result for $%
||a||_{op}$ can be derived in the same manner. We first note that,
conditional $\mathscr{D}$, the $i$-th row of $X_{j}\odot a$ only depends on $%
\{e_{it},\epsilon _{it}\}_{t\in \lbrack T]}$, which are independent across $%
i $. Therefore $X_{j}\odot a$ has independent rows, denoted as $%
A_{i}=X_{j,i}\odot a_{i}$, given $\mathscr{D}$, where $X_{j,i}$ and $a_{i}$
being the $i$-th row of matrix $X_{j}$ and $a$, respectively. In addition,
for the $t$-th element of $A_{i}$, we have 
\begin{equation*}
\mathbb{E}\left[ X_{j,it}a_{it}\bigg|\mathscr{D}\right] =\mathbb{E}\left\{
X_{j,it}\mathbb{E}\left[ a_{it}\bigg|\mathscr{D}_{e}\right] \bigg|\mathscr{D}%
\right\} =0,
\end{equation*}%
where the second equality holds by Assumption \ref{ass:1}(ii) and the fact
that given $\mathscr{D}_{e}$, $X_{j,it}$ is known. Therefore, In order to
apply Lemma \ref{Lem:matrix Bern}, conditionally on $\mathscr{D}$, we only
need to upper bound $\left\Vert A_{i}\right\Vert _{2}$ and $\mathbb{E}\left[
A_{i}A_{i}^{\prime }\bigg|\mathscr{D}\right] $.

First, under Assumption \ref{ass:1}(iv), we have $\frac{1}{T}\sum_{t\in[T]%
}\left(X_{j,it}a_{it}\right) ^{2}\leq \frac{1}{T}\sum_{t\in[T]%
}X_{j,it}^{2}\leq c_{2}\quad $a.s. for some positive constant $c_{2}$, which
implies 
\begin{equation}
||A_{i}||_{2}=\left\Vert X_{j,i}\odot a_{i}\right\Vert_{2}\leq c_{2}\sqrt{T}%
\quad \text{a.s.}  \label{Lem3.1}
\end{equation}

Second, let $\Sigma_{i}=\mathbb{E}\left\{ \left[ \left(X_{j,i}\odot
a_{i}\right) \left(X_{j,i}\odot a_{i}\right) ^{\prime}\right] \bigg|%
\mathscr{D}\right\} $ with $\left(t,s\right)^{th}$ element being $\mathbb{E}%
\left(X_{j,it}X_{j,is}a_{it}a_{is}\bigg|\mathscr{D}\right) .$ Recall that $%
\left\Vert \cdot \right\Vert_{1}$ and $\left\Vert \cdot \right\Vert_{\infty
} $ are matrix norms induced by 1- and $\infty -$norms, i.e., 
\begin{equation*}
\left\Vert \Sigma_{i}\right\Vert_{1}=\max_{s\in [T]}\sum_{t\in[T]}\left\vert 
\mathbb{E}\left(X_{j,it}X_{j,is}a_{it}a_{is}\bigg|\mathscr{D}\right)
\right\vert ,\quad \left\Vert \Sigma_{i}\right\Vert_{\infty }=\max_{t\in[T]%
}\sum_{s\in [T]}\left\vert \mathbb{E}\left(X_{j,it}X_{j,is}a_{it}a_{is}\bigg|%
\mathscr{D}\right) \right\vert .
\end{equation*}%
By Davydov's inequality for conditional strong mixing sequence, we can show
that 
\begin{align*}
& \max_{s\in [T]}\sum_{t\in[T]}\left\vert \mathbb{E}%
\left(X_{j,it}X_{j,is}a_{it}a_{is}\bigg|\mathscr{D}\right) \right\vert \\
& =\max_{s\in [T]}\sum_{t\in[T]}\left\vert
Cov\left(X_{j,it}a_{it},X_{j,is}a_{is}\bigg|\mathscr{D}\right) \right\vert \\
& \leq \max_{s\in [T]}\sum_{t\in[T]}\left\{ \mathbb{E}\left[\left\vert
X_{j,it}a_{it}\right\vert ^{q}\bigg|\mathscr{D}\right] \right\}^{1/q}\left\{ 
\mathbb{E}\left[ \left\vert X_{j,is}a_{is}\}\right\vert ^{q}\bigg|\mathscr{D}%
\right] \right\} ^{1/q}\times \alpha \left(t-s\right)^{\left(q-2\right) /q}
\\
& \leq \max_{i\in [N],t\in[T]}\left\{ \mathbb{E}\left[\left\vert
X_{j,it}^{q}\right\vert \bigg|\mathscr{D}\right] \right\}^{2/q}\max_{s\in
[T]}\sum_{t\in[T]}\alpha \left(t-s\right)^{\left(q-2\right) /q} \\
& \leq c_{3}~a.s.,
\end{align*}%
where 
%$\mathcal{C}\left(\cdot,\cdot\bigg|\left\{V_{j}^{0}\right\}_{j \in [p] \cup \{0\}},\left\{W_{j}^{0}\right\}_{j\in[p]}\right)$ denotes the conditional covariance and 
$c_{3}$ is a positive constant which does not depend on $i$ and the last
line is by Assumption \ref{ass:1}(iii) and \ref{ass:1}(iv). Similarly, we
have $\max_{t\in[T]}\sum_{s\in [T]}\left\vert \mathbb{E}%
\left(X_{j,it}X_{j,is}a_{it}a_{is}\bigg|\mathscr{D}\right) \right\vert \leq
c_{3}\quad $a.s. Then by Corollary 2.3.2 in \cite{golub1996matrix}, we have 
\begin{equation}
\max_{i\in [N]}\left\Vert \Sigma_{i}\right\Vert_{op}\leq \sqrt{\left\Vert
\Sigma_{i}\right\Vert_{1}\left\Vert \Sigma_{i}\right\Vert_{\infty }}\leq
c_{3}\quad \text{a.s.}  \label{Lem3.2}
\end{equation}%
Combining (\ref{Lem2.1}), (\ref{Lem2.2}), and Lemma \ref{Lem:matrix Bern}
with $t=\sqrt{\log T}$, we obtain the desired result.
\end{proof}

\bigskip

Recall that $\Delta_{\Theta_{j}}=\Theta_{j}-\Theta_{j}^{0}$ for any $%
\Theta_{j}$ and define 
\begin{equation*}
\mathcal{R}(C_{1}):= \left\{
\{\Delta_{\Theta_{j}}\}_{j=0}^{p}:\sum_{j=0}^{p}\left\Vert\mathcal{P}%
_{j}^{\bot}(\Delta_{\Theta_{j}})\right\Vert_{\ast}\leq
C_{1}\sum_{j=0}^{p}\left\Vert\mathcal{P}_{j}(\Delta_{\Theta_{j}})\right%
\Vert_{\ast}\right\}.
\end{equation*}

\begin{lemma}
{\small \label{Lem:RS} } Suppose Assumptions \ref{ass:1}-\ref{ass:3} hold.
Then $\{\tilde{\Delta}_{\Theta_{j}}\}_{j=0}^{p}\in \mathcal{R}(3)$ w.p.a.1.
\end{lemma}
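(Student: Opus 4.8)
The plan is to run the standard ``basic inequality plus decomposability'' argument for nuclear-norm penalized estimators, with Knight's identity substituting for the subgradient step that is unavailable because the check loss is non-smooth. Write $a=\{a_{it}\}\in\mathbb{R}^{N\times T}$ with $a_{it}=\tau-\mathbf{1}\{\epsilon_{it}\le 0\}$, and recall $\tilde{\Delta}_{\Theta_j}=\tilde{\Theta}_j-\Theta_j^0$ with $\{\tilde{\Theta}_j\}_{j=0}^p$ the full-sample NNR estimator in \eqref{pre rank} (the subsample case is treated identically). First I would use optimality of $\{\tilde{\Theta}_j\}$ for $\mathbb{Q}_{\tau}(\{\Theta_j\})+\sum_j\nu_j\|\Theta_j\|_\ast$:
\begin{equation*}
\mathbb{Q}_{\tau}(\{\tilde{\Theta}_j\})-\mathbb{Q}_{\tau}(\{\Theta_j^0\})\le\sum_{j=0}^p\nu_j\big(\|\Theta_j^0\|_\ast-\|\tilde{\Theta}_j\|_\ast\big).
\end{equation*}
On the other hand, applying Knight's identity $\rho_\tau(u-v)-\rho_\tau(u)=v(\tau-\mathbf{1}\{u\le 0\})+\int_0^v(\mathbf{1}\{u\le s\}-\mathbf{1}\{u\le 0\})\,ds$ entrywise with $u=\epsilon_{it}$ and $v=\tilde{\Delta}_{\Theta_0,it}+\sum_{j=1}^pX_{j,it}\tilde{\Delta}_{\Theta_j,it}$, and discarding the nonnegative integral terms, gives
\begin{equation*}
\mathbb{Q}_{\tau}(\{\tilde{\Theta}_j\})-\mathbb{Q}_{\tau}(\{\Theta_j^0\})\ge\frac1{NT}\langle a,\tilde{\Delta}_{\Theta_0}\rangle+\frac1{NT}\sum_{j=1}^p\langle X_j\odot a,\tilde{\Delta}_{\Theta_j}\rangle.
\end{equation*}
Combining the two displays with the trace-duality bound $|\langle B,C\rangle|\le\|B\|_{op}\|C\|_\ast$ yields
\begin{equation*}
\sum_{j=0}^p\nu_j\big(\|\tilde{\Theta}_j\|_\ast-\|\Theta_j^0\|_\ast\big)\le\frac{\|a\|_{op}}{NT}\|\tilde{\Delta}_{\Theta_0}\|_\ast+\sum_{j=1}^p\frac{\|X_j\odot a\|_{op}}{NT}\|\tilde{\Delta}_{\Theta_j}\|_\ast.
\end{equation*}

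Next I would invoke Lemma~\ref{Lem:score op}, which under Assumption~\ref{ass:1} gives $\|a\|_{op}=O_p(\sqrt N+\sqrt{T\log T})$ and $\max_{j\in[p]}\|X_j\odot a\|_{op}=O_p(\sqrt N+\sqrt{T\log T})$. Since the tuning parameters are taken of common order $\nu_j\asymp(\sqrt N+\sqrt{T\log T})/(NT)$ with a sufficiently large constant, the event on which $\nu_0\ge\tfrac{2}{NT}\|a\|_{op}$ and $\nu_j\ge\tfrac{2}{NT}\|X_j\odot a\|_{op}$ for all $j\in[p]$ has probability approaching one; on that event the right-hand side above is at most $\tfrac12\sum_{j=0}^p\nu_j\|\tilde{\Delta}_{\Theta_j}\|_\ast$.

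Finally I would apply decomposability of the nuclear norm: because $\mathcal{P}_j^{\bot}(W)=R_{j,0}R_{j,0}^{\prime}WS_{j,0}S_{j,0}^{\prime}$ has column and row spaces orthogonal to those of $\Theta_j^0$, one has $\|\Theta_j^0+\mathcal{P}_j^{\bot}(W)\|_\ast=\|\Theta_j^0\|_\ast+\|\mathcal{P}_j^{\bot}(W)\|_\ast$, whence $\|\Theta_j^0+W\|_\ast\ge\|\Theta_j^0\|_\ast+\|\mathcal{P}_j^{\bot}(W)\|_\ast-\|\mathcal{P}_j(W)\|_\ast$ by the triangle inequality. Taking $W=\tilde{\Delta}_{\Theta_j}$, summing over $j$, and using $\|\tilde{\Delta}_{\Theta_j}\|_\ast\le\|\mathcal{P}_j(\tilde{\Delta}_{\Theta_j})\|_\ast+\|\mathcal{P}_j^{\bot}(\tilde{\Delta}_{\Theta_j})\|_\ast$ gives, on the good event,
\begin{equation*}
\sum_{j=0}^p\nu_j\Big(\|\mathcal{P}_j^{\bot}(\tilde{\Delta}_{\Theta_j})\|_\ast-\|\mathcal{P}_j(\tilde{\Delta}_{\Theta_j})\|_\ast\Big)\le\frac12\sum_{j=0}^p\nu_j\Big(\|\mathcal{P}_j^{\bot}(\tilde{\Delta}_{\Theta_j})\|_\ast+\|\mathcal{P}_j(\tilde{\Delta}_{\Theta_j})\|_\ast\Big),
\end{equation*}
i.e.\ $\sum_{j}\nu_j\|\mathcal{P}_j^{\bot}(\tilde{\Delta}_{\Theta_j})\|_\ast\le 3\sum_{j}\nu_j\|\mathcal{P}_j(\tilde{\Delta}_{\Theta_j})\|_\ast$; since the $\nu_j$ share a common order they cancel, delivering $\{\tilde{\Delta}_{\Theta_j}\}_{j=0}^p\in\mathcal{R}(3)$ w.p.a.1. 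The subsample statement follows verbatim with $\tilde{\Theta}_j^{(1)}$, $\Theta_j^{0,(1)}$, $N_1\asymp N/3$, and Lemma~\ref{Lem:score op} applied to the $N_1\times T$ score submatrices. This argument is routine; the only care required is the constant bookkeeping—the multiplicative factor $2$ in the calibration of $\nu_j$ must be chosen precisely so that the deterministic constant emerging from the rearrangement is exactly $3$—and the substantive input is Lemma~\ref{Lem:score op}'s operator-norm control, which rests on matrix Bernstein/Freedman inequalities together with a Davydov bound on the conditional covariances under the strong-mixing Assumption~\ref{ass:1}(iii); conceptually the sole departure from the least-squares template is the use of Knight's identity instead of an explicit subgradient.
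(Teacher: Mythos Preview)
Your proof is correct and essentially identical to the paper's: both combine the basic inequality from optimality with Knight's identity (discarding the nonnegative integral), bound the linear ``score'' term via trace duality and Lemma~\ref{Lem:score op}, apply nuclear-norm decomposability, and calibrate $\nu_j=2c_4(\sqrt{N}\vee\sqrt{T\log T})/(NT)$ so the constant comes out to $3$. One tiny wording point: ``since the $\nu_j$ share a common order they cancel'' is slightly loose---sharing order is not enough in general---but since the paper (and your calibration) in fact sets all $\nu_j$ equal, the cancellation is exact and there is no gap.
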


\begin{proof}
Define $\mathbb{Q}_{\tau }\left(\left\{ \Theta_{j}\right\}_{j=0}^{p}\right) =%
\frac{1}{NT}\sum_{i=1}^{N}\sum_{t=1}^{T}\rho_{\tau}\left(Y_{it}-%
\Theta_{0,it}-\sum_{j=1}^{p}X_{j,it}\Theta_{j,it}\right)$ for generic $%
\left\{ \Theta_{j}\right\}_{j=0}^{p}$. By the definition of the nuclear norm
estimator in (\ref{pre rank}), we have 
\begin{equation}
\mathbb{Q}_{\tau }\left(\left\{ \Theta_{j}^{0}\right\}_{j=0}^{p}\right) -%
\mathbb{Q}_{\tau }\left(\left\{ \tilde{\Theta}_{j}\right\}_{j=0}^{p}\right)
+\sum_{j=0}^{p}\nu_{j}\left[ \left\Vert
\Theta_{j}^{0}(\tau)\right\Vert_{\ast }-\left\Vert \tilde{\Theta}%
_{j}(\tau)\right\Vert_{\ast }\right] \geq 0.  \label{Lem4.1}
\end{equation}
In addition, we have 
\begin{align}
& \mathbb{Q}_{\tau }\left(\left\{ \Theta_{j}^{0}\right\}_{j=0}^{p}\right)-%
\mathbb{Q}_{\tau }\left(\left\{ \tilde{\Theta}_{j}\right\}_{j=0}^{p}\right) 
\notag \\
& =-\left\{ \frac{1}{NT}\sum_{i=1}^{N}\sum_{t=1}^{T}a_{it}\left(\tilde{\Delta%
}_{\Theta_{0,it}}+\sum_{j=1}^{p}X_{j,it}\tilde{\Delta}_{\Theta_{j,it}}%
\right) +\int_{0}^{\tilde{\Delta}_{\Theta_{0,it}}+\sum_{j=1}^{p}X_{j,it}%
\tilde{\Delta}_{\Theta_{j,it}}}\mathbf{1}\left\{ \epsilon_{it}\leq s\right\}
-\mathbf{1}\left\{ \epsilon_{it}\leq 0\right\} ds\right\}  \notag \\
& \leq \left\vert \frac{1}{NT}\sum_{i=1}^{N}\sum_{t=1}^{T}a_{it}\left(\tilde{%
\Delta}_{\Theta_{0,it}}+\sum_{j=1}^{p}X_{j,it}\tilde{\Delta}%
_{\Theta_{j,it}}\right) \right\vert  \notag \\
& \leq \sum_{j=1}^{p}\left\vert \frac{1}{NT}\sum_{i=1}^{N}%
\sum_{t=1}^{T}a_{it}X_{j,it}\tilde{\Delta}_{\Theta_{j,it}}\right\vert+\left%
\vert \frac{1}{NT}\sum_{i=1}^{N}\sum_{t=1}^{T}a_{it}\tilde{\Delta}%
_{\Theta_{0,it}}\right\vert  \notag \\
& =\sum_{j=1}^{p}\frac{1}{NT}\left\vert tr\left[ \tilde{\Delta}%
_{\Theta_{j}}^{\prime}\left(X_{j}\odot a\right) \right] \right\vert +\frac{1%
}{NT}\left\vert tr\left(\tilde{\Delta}_{\Theta_{0}}^{\prime}a\right)
\right\vert  \notag \\
& \leq \sum_{j=1}^{p}\frac{1}{NT}\left\Vert \tilde{\Delta}%
_{\Theta_{j}}\right\Vert_{\ast }\left\Vert X_{j}\odot a\right\Vert_{op}+%
\frac{1}{NT}\left\Vert \tilde{\Delta}_{\Theta_{0}}\right\Vert_{\ast
}\left\Vert a\right\Vert_{op}  \notag \\
& \leq c_{4}\sum_{j=0}^{p}\frac{(\sqrt{N}\vee \sqrt{T\log T})}{NT}\left\Vert%
\tilde{\Delta}_{\Theta_{j}}\right\Vert_{\ast },\quad \text{w.p.a.1},
\label{Lem4.2}
\end{align}%
where the first equality holds by Knight's identity in \cite%
{knight1998limiting} which states that 
\begin{equation}
\rho_{\tau }(u-v)-\rho_{\tau }(u)=v(\tau -\mathbf{1}\left\{ u\leq 0\right\}
)+\int_{0}^{v}\left(\mathbf{1}\left\{ u\leq s\right\} -\mathbf{1}\{u\leq
0\}\right) ds,  \label{Lem4.3}
\end{equation}%
the first inequality is due to the fact that the second term in the bracket
of the second line is non-negative, the second inequality holds by triangle
inequality, the third inequality is by the fact that $tr(AB)\leq
||A||_{op}||B||_{\ast }$, and the last inequality holds by Lemma \ref%
{Lem:score op}.

Combining \eqref{Lem4.1} and \eqref{Lem4.2}, w.p.a.1, we have 
\begin{equation}
0\leq c_{4}\sum_{j=0}^{p}\left\{ \frac{(\sqrt{N}\vee \sqrt{T\log T})}{NT}||%
\tilde{\Delta}_{\Theta_{j}}||_{\ast }+\nu_{j}\left(\left\Vert
\Theta_{j}^{0}\right\Vert_{\ast }-\left\Vert \tilde{\Theta}%
_{j}\right\Vert_{\ast}\right) \right\} .  \label{Lem4.4}
\end{equation}%
Besides, we can show that 
\begin{align}
\left\Vert \tilde{\Theta}_{j}\right\Vert_{\ast }& =\left\Vert \tilde{\Delta}%
_{\Theta_{j}}+\Theta_{j}^{0}\right\Vert_{\ast }=\left\Vert \Theta_{j}^{0}+%
\mathcal{P}_{j}^{\bot }(\tilde{\Delta}_{\Theta_{j}})+\mathcal{P}_{j}(\tilde{%
\Delta}_{\Theta_{j}})\right\Vert_{\ast }  \notag \\
& \geq \left\Vert \Theta_{j}^{0}+\mathcal{P}_{j}^{\bot }(\tilde{\Delta}%
_{\Theta_{j}})\right\Vert_{\ast }-\left\Vert \mathcal{P}_{j}(\tilde{\Delta}%
_{\Theta_{j}})\right\Vert_{\ast }=\left\Vert \Theta_{j}^{0}\right\Vert_{\ast
}+\left\Vert \mathcal{P}_{j}^{\bot }(\tilde{\Delta}_{\Theta_{j}})\right%
\Vert_{\ast }-\left\Vert \mathcal{P}_{j}(\tilde{\Delta}_{\Theta_{j}})\right%
\Vert_{\ast }  \label{Lem4.5}
\end{align}%
where the second equality holds by Lemma D.2(i) in \cite%
{chernozhukov2019inference}, the first inequality holds by triangle
inequality, and the last equality is by the construction of the linear space 
$\mathcal{P}_{j}^{\bot }$ and $\mathcal{P}_{j}$. Then combining (\ref{Lem4.4}%
) and (\ref{Lem4.5}), we obtain % \begin{align*}
% \sum_{j=0}^{p}\nu_{j}\left\Vert\tilde{\Theta}_{j}\right\Vert_{\ast}\leq\sum_{j=0}^{p}\left\{\nu_{j} \left\Vert\Theta_{j}^{0}\right\Vert_{\ast}+c_{4}\sum_{j=0}^{p}\frac{(\sqrt{N}\vee \sqrt{T\log T})}{NT}\left\Vert\tilde{\Delta}_{\Theta_{j}}\right\Vert_{\ast}  \right\},\,\,\,\,w.p.a.1,
% \end{align*}
% which is equivalent to 
\begin{align*}
& \sum_{j=0}^{p}\nu_{j}\left\{ \left\Vert \mathcal{P}_{j}^{\bot }\left(%
\tilde{\Delta}_{\Theta_{j}}\right) \right\Vert_{\ast }-\left\Vert \mathcal{P}%
_{j}\left(\tilde{\Delta}_{\Theta_{j}}\right) \right\Vert_{\ast}\right\} \leq
c_{4}\sum_{j=0}^{p}\frac{(\sqrt{N}\vee \sqrt{T\log T})}{NT}\left\Vert \tilde{%
\Delta}_{\Theta_{j}}\right\Vert_{\ast } \\
& =c_{4}\sum_{j=0}^{p}\frac{(\sqrt{N}\vee \sqrt{T\log T})}{NT}%
\left\{\left\Vert \mathcal{P}_{j}\left(\tilde{\Delta}_{\Theta_{j}}\right)%
\right\Vert_{\ast }+\left\Vert \mathcal{P}_{j}^{\bot }\left(\tilde{\Delta}%
_{\Theta_{j}}\right) \right\Vert_{\ast }\right\} ,~\text{w.p.a.1}
\end{align*}%
By setting $\nu_{j}=\frac{2c_{4}(\sqrt{N}\vee \sqrt{T\log T})}{NT}$, we
obtain $\sum_{j=0}^{p}\left\Vert \mathcal{P}_{j}^{\bot }\left(\tilde{\Delta}%
_{\Theta_{j}}\right) \right\Vert_{\ast }\leq 3\sum_{j=0}^{p}\left\Vert%
\mathcal{P}_{j}\left(\tilde{\Delta}_{\Theta_{j}}\right) \right\Vert_{\ast}~$%
w.p.a.1$.$
\end{proof}

Recall $\mathscr{G}_{i,t-1}$ is the $\sigma$-field generated by $%
\{V_{j}^{0}\}_{j\in [p]\cup \{0\}}$, $\{W_{j}^{0}\}_{j\in [p]}$, $%
\{e_{ls}\}_{l\leq i-1,s\in [T]}$, $\{e_{is}\}_{s\leq t}$, $%
\{\epsilon_{ls}\}_{l\leq i-1,s\in [T]}$, and $\{\epsilon_{is}\}_{s\leq t-1}$
and $\mathsf{F}_{it}(\cdot)$ and $\mathsf{f}_{it}(\cdot)$ are the
conditional CDF and PDF of $\epsilon_{it}$ given $\mathscr{G}_{i,t-1}$,
respectively. Specifically, we note that $(\{X_{j,it}\}_{j\in
[p]},\{\Theta_{j,it}^{0}\}_{j\in [p]\cup\{0\}})$ are measurable w.r.t. $%
\mathscr{G}_{i,t-1}$.

\begin{lemma}
{\small \label{Lem:ind diff} } For all $u_{1},u_{2}\in \mathbb{R}$ and all $%
c_{5}\in (0,1]$, we have

\begin{itemize}
\item[(i)] $\int_{0}^{u_{2}}\left(\mathbf{1}\left\{u_{1}\leq z\right\}- 
\mathbf{1}\left\{u_{1}\leq 0\right\} \right)dz\geq\int_{0}^{c_{5}u_{2}}\left(%
\mathbf{1}\left\{u_{1}\leq z\right\}- \mathbf{1}\left\{u_{1}\leq 0\right\}
\right)dz\geq 0$,

\item[(ii)] $\int_{0}^{u_{2}}\left\{ \mathsf{F}_{it}\left(u_{1}+z\right)-%
\mathsf{F}_{it}\left(u_{1}\right) \right\}dz\geq\int_{0}^{c_{5}u_{2}}\left\{ 
\mathsf{F}_{it}\left(u_{1}+z\right)-\mathsf{F}_{it}\left(u_{1}\right)
\right\}dz\geq 0$.
\end{itemize}
\end{lemma}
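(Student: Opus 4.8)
The plan is to reduce both parts to an elementary monotonicity property of the integral of a non-decreasing integrand. For part (i) set $G(z)=\mathbf{1}\{u_{1}\leq z\}$, and for part (ii) set $G(z)=\mathsf{F}_{it}(u_{1}+z)$; in either case $G$ is non-decreasing on $\mathbb{R}$ (an indicator of a half-line in the first case, a conditional CDF in the second). Define $g(z)=G(z)-G(0)$, so that $g$ is non-decreasing with $g(0)=0$, whence $g(z)\geq 0$ for $z\geq 0$ and $g(z)\leq 0$ for $z\leq 0$. With this notation the three quantities appearing in the lemma are $H(u_{2})$, $H(c_{5}u_{2})$, and $0$, where $H(v):=\int_{0}^{v}g(z)\,dz$; part (i) uses $g(z)=\mathbf{1}\{u_{1}\leq z\}-\mathbf{1}\{u_{1}\leq 0\}$ and part (ii) uses $g(z)=\mathsf{F}_{it}(u_{1}+z)-\mathsf{F}_{it}(u_{1})$.

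The key step is to record two facts about $H$, each of which uses only the sign pattern of $g$. First, $H(v)\geq 0$ for every $v$: if $v\geq 0$ this is $\int_{0}^{v}g\geq 0$ because $g\geq 0$ on $[0,\infty)$, and if $v<0$ then $H(v)=-\int_{v}^{0}g\geq 0$ because $g\leq 0$ on $(-\infty,0]$. Second, $H$ is non-decreasing on $[0,\infty)$ and non-increasing on $(-\infty,0]$: for $0\leq w\leq v$ we have $H(v)-H(w)=\int_{w}^{v}g\geq 0$, and for $v\leq w\leq 0$ we have $H(w)-H(v)=\int_{v}^{w}g\leq 0$.

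To conclude, observe that for $c_{5}\in(0,1]$ the point $c_{5}u_{2}$ always lies between $0$ and $u_{2}$. If $u_{2}\geq 0$ then $0\leq c_{5}u_{2}\leq u_{2}$, so monotonicity of $H$ on $[0,\infty)$ gives $0\leq H(c_{5}u_{2})\leq H(u_{2})$; if $u_{2}<0$ then $u_{2}\leq c_{5}u_{2}\leq 0$, so monotonicity of $H$ on $(-\infty,0]$ gives $0\leq H(c_{5}u_{2})\leq H(u_{2})$. Translating back, this is precisely the chain $\int_{0}^{u_{2}}(\mathbf{1}\{u_{1}\leq z\}-\mathbf{1}\{u_{1}\leq 0\})\,dz\geq\int_{0}^{c_{5}u_{2}}(\mathbf{1}\{u_{1}\leq z\}-\mathbf{1}\{u_{1}\leq 0\})\,dz\geq 0$ for part (i), and the corresponding statement with $\mathsf{F}_{it}$ in place of the indicators for part (ii). There is no substantive obstacle here; the only point requiring a moment of care is the bookkeeping for negative $u_{2}$, where the integral runs from $0$ to a negative endpoint, and this is handled uniformly by phrasing everything through the monotone function $H$ rather than manipulating the two integrals directly.
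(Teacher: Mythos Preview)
Your proof is correct and follows essentially the same approach as the paper's: both rely on the integrand $g(z)$ being non-negative for $z\geq 0$ and non-positive for $z\leq 0$, then split into the cases $u_{2}\geq 0$ and $u_{2}<0$ to compare the two integrals. Your phrasing through the monotonicity of $H(v)=\int_{0}^{v}g(z)\,dz$ on each half-line is a tidy unification that handles parts (i) and (ii) simultaneously, whereas the paper cites \cite{feng2019regularized} for (i) and writes out the case analysis explicitly for (ii), but the underlying argument is the same.
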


\begin{proof}
Statement (i) is just \citet[Lemma A2]{feng2019regularized}. To prove
statement (ii), notice that if $u_{2}\geq 0$, then $z\geq 0$ and $\mathsf{F}%
_{it}\left(u_{1}+z\right) -\mathsf{F}_{it}\left(u_{1}\right) \geq 0$ for all 
$z\in [0,u_{2}]$, which leads to the existence of the second inequality
naturally: 
\begin{align*}
& \int_{0}^{u_{2}}\left\{ \mathsf{F}_{it}\left(u_{1}+z\right) -\mathsf{F}%
_{it}\left(u_{1}\right) \right\} dz-\int_{0}^{c_{5}u_{2}}\left\{ \mathsf{F}%
_{it}\left(u_{1}+z\right) -\mathsf{F}_{it}\left(u_{1}\right) \right\} dz \\
& =\int_{c_{5}u_{2}}^{u_{2}}\left\{ \mathsf{F}_{it}\left(u_{1}+z\right) -%
\mathsf{F}_{it}\left(u_{1}\right) \right\} dz\geq 0.
\end{align*}%
On the other hand, if $u_{2}<0$, we have 
\begin{align*}
& \int_{0}^{u_{2}}\left\{ \mathsf{F}_{it}\left(u_{1}+z\right) -\mathsf{F}%
_{it}\left(u_{1}\right) \right\} dz-\int_{0}^{c_{5}u_{2}}\left\{ \mathsf{F}%
_{it}\left(u_{1}+z\right) -\mathsf{F}_{it}\left(u_{1}\right) \right\} dz \\
& =\int_{u_{2}}^{0}\left\{ \mathsf{F}_{it}\left(u_{1}\right) -\mathsf{F}%
_{it}\left(u_{1}+z\right) \right\} dz-\int_{c_{5}u_{2}}^{0}\left\{ \mathsf{F}%
_{it}\left(u_{1}\right) -\mathsf{F}_{it}\left(u_{1}+z\right) \right\} dz \\
& =\int_{u_{2}}^{c_{5}u_{2}}\left\{ \mathsf{F}_{it}\left(u_{1}\right) 5%
\mathsf{F}_{it}\left(u_{1}+z\right) \right\} dz\geq 0,
\end{align*}%
where the last inequality holds as the same reason for $u_{2}\geq 0$ case.
\end{proof}

\begin{lemma}
{\small \label{Lem:exp lower} } Under Assumptions \ref{ass:1}-\ref{ass:4},
for any $\left\{ \Delta_{\Theta_{j}}\right\}_{j=0}^{p}\in \mathcal{R}%
(3,C_{2})$ such that $||\Delta_{\Theta_{j}}||_{\max }\leq M$ for some
constant $M>0$, we have 
\begin{equation*}
Q_{\tau }\left(\left\{
\Theta_{j}^{0}+\Delta_{\Theta_{j}}\right\}_{j=0}^{p}\right) -Q_{\tau
}\left(\left\{ \Theta_{j}^{0}\right\}_{j=0}^{p}\right) \geq \frac{c_{7}C_{3}%
}{NT\xi_{N}^{2}}\sum_{j=0}^{p}\left\Vert
\Delta_{\Theta_{j}}\right\Vert_{F}^{2}-\frac{c_{7}C_{4}}{NT\xi_{N}^{2}}%
\left(N+T\right) ~w.p.a.1,
\end{equation*}%
where $Q_{\tau }(\left\{ \Theta_{j}\right\}_{j=0}^{p})=\frac{1}{NT}%
\sum_{i=1}^{N}\sum_{t=1}^{T}\mathbb{E}\left[ \rho_{\tau
}\left(Y_{it}-\Theta_{0,it}-\sum_{j=1}^{p}X_{j,it}\Theta_{j,it}\right) 
\bigg
|\mathscr{G}_{i,t-1}\right]$, $c_{7}=\frac{\underline{\mathsf{f}}c_{6}^{2}}{4%
}$ with $c_{6}$ being a positive constant between 0 and 1.
\end{lemma}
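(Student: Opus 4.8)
The plan is to establish a quadratic lower bound for the population quantile objective by working with the Knight decomposition and isolating the ``curvature'' term that comes from the integral of the CDF differences. First I would write, using Knight's identity \eqref{Lem4.3} conditional on $\mathscr{G}_{i,t-1}$,
\begin{align*}
& Q_{\tau }\left(\left\{ \Theta_{j}^{0}+\Delta_{\Theta_{j}}\right\}_{j=0}^{p}\right) -Q_{\tau }\left(\left\{ \Theta_{j}^{0}\right\}_{j=0}^{p}\right) \\
& =\frac{1}{NT}\sum_{i=1}^{N}\sum_{t=1}^{T}\mathbb{E}\left[\int_{0}^{d_{it}}\left(\mathsf{F}_{it}(z)-\mathsf{F}_{it}(0)\right)dz\,\bigg|\,\mathscr{G}_{i,t-1}\right],
\end{align*}
where $d_{it}=\Delta_{\Theta_{0,it}}+\sum_{j=1}^{p}X_{j,it}\Delta_{\Theta_{j,it}}$, using that the linear term has zero conditional mean by Assumption \ref{ass:1}(ii) (after the usual argument that $\mathsf{f}_{it}(0)\geq \underline{\mathsf{f}}$ forces the quantile restriction to be compatible with $\mathscr{G}_{i,t-1}$) and that $\mathsf{F}_{it}(0)=\tau$ is the conditional quantile. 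Then by Lemma \ref{Lem:ind diff}(ii) with the constant $c_{6}\in(0,1]$, I can shrink the range of integration to $[0,c_{6}d_{it}]$ without increasing the integral, i.e. the right side is bounded below by $\frac{1}{NT}\sum_{i,t}\mathbb{E}\left[\int_{0}^{c_{6}d_{it}}(\mathsf{F}_{it}(z)-\mathsf{F}_{it}(0))\,dz\,|\,\mathscr{G}_{i,t-1}\right]$.

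Next I would produce a pointwise quadratic lower bound for $\int_{0}^{c_{6}d_{it}}(\mathsf{F}_{it}(z)-\mathsf{F}_{it}(0))\,dz$ on the event where $|c_{6}d_{it}|$ is not too large. By a first-order Taylor expansion of $\mathsf{F}_{it}$ around $0$, $\mathsf{F}_{it}(z)-\mathsf{F}_{it}(0)=\mathsf{f}_{it}(\tilde z)z$ for some $\tilde z$ between $0$ and $z$, and using $\mathsf{f}_{it}(0)\geq \underline{\mathsf{f}}$ together with the Lipschitz bound $\sup_{\epsilon}|\mathsf{f}_{it}'(\epsilon)|\leq \bar{\mathsf{f}}'$ from Assumption \ref{ass:1}(vi), one gets $\mathsf{f}_{it}(\tilde z)\geq \underline{\mathsf{f}}-\bar{\mathsf{f}}'|z|\geq \underline{\mathsf{f}}/2$ whenever $|z|\leq \underline{\mathsf{f}}/(2\bar{\mathsf{f}}')$. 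Restricting to a truncation region (choosing $c_{6}$ small enough, and absorbing $\|\Delta_{\Theta_j}\|_{\max}\leq M$ and $|X_{j,it}|\leq \xi_N$ into the constant) this yields $\int_{0}^{c_{6}d_{it}}(\mathsf{F}_{it}(z)-\mathsf{F}_{it}(0))\,dz\geq \frac{\underline{\mathsf{f}}c_{6}^{2}}{4}d_{it}^{2}=c_{7}d_{it}^{2}$, at least for those $(i,t)$ with $|d_{it}|$ bounded; for the remaining $(i,t)$ one argues the integrand is still nonnegative (again by Lemma \ref{Lem:ind diff}) and handles the truncation error by a count that is $O(N+T)$ — this is where the $-C_4(N+T)$ slack in Assumption \ref{ass:4} gets used. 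Summing, $Q_\tau(\cdot)-Q_\tau(\cdot)\geq \frac{c_7}{NT}\left(\sum_{i,t}\mathbb{E}[d_{it}^2|\mathscr{G}_{i,t-1}] - (\text{truncation remainder})\right)$.

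The final step is to replace the conditional second moment $\sum_{i,t}\mathbb{E}[d_{it}^2|\mathscr{G}_{i,t-1}]$ by the Frobenius-norm quantity $\|\Delta_{\Theta_0}+\sum_{j=1}^p \Delta_{\Theta_j}\odot X_j\|_F^2=\sum_{i,t}d_{it}^2$ up to a negligible error, and then invoke the restricted strong convexity condition, Assumption \ref{ass:4}, which states precisely that uniformly over $\mathcal{R}(3,C_2)$ one has $\|\Delta_{\Theta_0}+\sum_{j=1}^p\Delta_{\Theta_j}\odot X_j\|_F^2\geq C_3\sum_{j=0}^p\|\Delta_{\Theta_j}\|_F^2-C_4(N+T)$ w.p.a.1. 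Combining with the $\xi_N^{-2}$ normalization (the factor of $1/\xi_N^2$ in the statement is exactly the price of the truncation that keeps $|d_{it}|$ controlled, since $|d_{it}|\lesssim \xi_N$), I would conclude $Q_\tau(\cdot)-Q_\tau(\cdot)\geq \frac{c_7 C_3}{NT\xi_N^2}\sum_{j=0}^p\|\Delta_{\Theta_j}\|_F^2-\frac{c_7 C_4}{NT\xi_N^2}(N+T)$ w.p.a.1. The main obstacle, and the step requiring the most care, is the bookkeeping of the truncation: one must choose $c_{6}$ (hence $c_{7}$) uniformly in $(i,t)$ so that on a high-probability event the quadratic bound holds for ``most'' entries, while the entries violating $|c_6 d_{it}|\leq \underline{\mathsf{f}}/(2\bar{\mathsf{f}}')$ contribute only an $O(N+T)$ error that is already accommodated by the RSC slack — this is a delicate matching of the $\xi_N$ scaling in the pointwise curvature bound against the $(N+T)$-type remainder, and it is where Assumptions \ref{ass:1}(v), \ref{ass:1}(vi), \ref{ass:2} and \ref{ass:4} must be dovetailed.
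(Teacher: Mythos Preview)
Your overall architecture is right---Knight's identity, drop the linear term, lower-bound the integral by a quadratic, then invoke the RSC condition (Assumption~\ref{ass:4}). But the mechanism you propose for obtaining the $\xi_N^{-2}$ factor is not the one that works, and your truncation argument has a genuine gap.

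The paper does \emph{not} shrink the integration range by the constant $c_6$ and then truncate indices. Instead it applies Lemma~\ref{Lem:ind diff}(ii) with the shrinkage factor $c_6\xi_N^{-1}\in(0,1]$, replacing $\int_0^{d_{it}}$ by $\int_0^{c_6\xi_N^{-1}d_{it}}$. After the second-order Taylor expansion $\mathsf{F}_{it}(s)-\mathsf{F}_{it}(0)=s\mathsf{f}_{it}(0)+\tfrac{s^2}{2}\mathsf{f}'_{it}(\tilde s)$ and integration, this produces a quadratic piece $\tfrac{\underline{\mathsf f}c_6^2}{2\xi_N^2}d_{it}^2$ and a cubic remainder $\tfrac{\bar{\mathsf f}'c_6^3}{6\xi_N^3}|d_{it}|^3$. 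Because $|d_{it}|\leq (1+p)M\xi_N$ deterministically (from $\|\Delta_{\Theta_j}\|_{\max}\leq M$ and $|X_{j,it}|\leq\xi_N$), the cubic is dominated by half of the quadratic for \emph{every} $(i,t)$ once $c_6=\tfrac{3\underline{\mathsf f}}{2\bar{\mathsf f}'M(1+p)}$. No index-by-index truncation is performed, and the $\xi_N^{-2}$ in the final bound is exactly the square of the shrinkage factor.

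Your route---shrink by $c_6$, then count ``bad'' entries---breaks down because you cannot absorb $\xi_N$ into a fixed constant: if $\xi_N\to\infty$, then $|c_6 d_{it}|$ can exceed $\underline{\mathsf f}/(2\bar{\mathsf f}')$ for \emph{every} $(i,t)$, so the quadratic bound fails globally and there is no reason the number of violating entries is $O(N+T)$. Separately, the $-C_4(N+T)$ slack in Assumption~\ref{ass:4} is a feature of the RSC inequality on the design, not a budget for entries where the curvature bound fails; the paper uses it only after the pointwise bound $\geq \tfrac{c_7}{\xi_N^2}d_{it}^2$ has been established for all $(i,t)$. Finally, a small simplification: $X_{j,it}$ (hence $d_{it}$) is $\mathscr{G}_{i,t-1}$-measurable, so $\mathbb{E}[d_{it}^2\mid\mathscr{G}_{i,t-1}]=d_{it}^2$ exactly---there is nothing to approximate in that step.
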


\begin{proof}
We can choose a sufficiently large constant $M$ such that $c_{6}:=\frac{3%
\underline{\mathsf{f}}}{2\bar{\mathsf{f}}^{\prime }M(1+p)}\in \left(0,1%
\right]$. Then we have 
\begin{align}
& Q_{\tau }\left(\left\{
\Theta_{j}^{0}+\Delta_{\Theta_{j}}\right\}_{j=0}^{p}\right) -Q_{\tau
}\left(\left\{ \Theta_{j}^{0}\right\}_{j=0}^{p}\right)  \notag
\label{Lem6.1} \\
& =\frac{1}{NT}\sum_{i=1}^{N}\sum_{t=1}^{T}\mathbb{E}\left\{
\rho_{\tau}\left(Y_{it}-\Theta_{0,it}-\sum_{j=1}^{p}X_{j,it}\Theta_{j,it}%
\right)-\rho_{\tau
}\left(Y_{it}-\Theta_{0,it}^{0}-\sum_{j=1}^{p}X_{j,it}\Theta_{j,it}^{0}%
\right) \bigg |\mathscr{G}_{i,t-1}\right\}  \notag \\
& =\frac{1}{NT}\sum_{i=1}^{N}\sum_{t=1}^{T}\mathbb{E}\left\{
\left(\Delta_{\Theta_{0,it}}+\sum_{j=1}^{p}X_{j,it}\Delta_{\Theta_{j,it}}%
\right)\left(\tau -\mathbf{1}\left\{ \epsilon_{it}\leq 0\right\} \right) %
\bigg |\mathscr{G}_{i,t-1}\right\}  \notag \\
& +\frac{1}{NT}\sum_{i=1}^{N}\sum_{t=1}^{T}\mathbb{E}\left\{\int_{0}^{%
\Delta_{\Theta_{0,it}}+\sum_{j=1}^{p}X_{j,it}\Delta_{\Theta_{j,it}}}\left(%
\mathbf{1}\left\{ \epsilon_{it}\leq s\right\} -\mathbf{1}\left\{
\epsilon_{it}\leq 0\right\} \right) ds\bigg |\mathscr{G}_{i,t-1}\right\} 
\notag \\
& =\frac{1}{NT}\sum_{i=1}^{N}\sum_{t=1}^{T}\mathbb{E}\left\{\int_{0}^{%
\Delta_{\Theta_{0,it}}+\sum_{j=1}^{p}X_{j,it}\Delta_{\Theta_{j,it}}}\left(%
\mathbf{1}_{\epsilon_{it}\leq s}-\mathbf{1}_{\epsilon_{it}\leq 0}\right) ds%
\bigg |\mathscr{G}_{i,t-1}\right\}  \notag \\
& \geq \frac{1}{NT}\sum_{i=1}^{N}\sum_{t=1}^{T}\mathbb{E}\left\{%
\int_{0}^{c_{6}\xi_{N}^{-1}\left(\Delta_{\Theta_{0,it}}+%
\sum_{j=1}^{p}X_{j,it}\Delta_{\Theta_{j,it}}\right) }\left(\mathbf{1}%
_{\epsilon_{it}\leq s}-\mathbf{1}_{\epsilon_{it}\leq 0}\right) ds\bigg |%
\mathscr{G}_{i,t-1}\right\}  \notag \\
& =\frac{1}{NT}\sum_{i=1}^{N}\sum_{t=1}^{T}\int_{0}^{c_{6}\xi_{N}^{-1}\left(%
\Delta_{\Theta_{0,it}}+\sum_{j=1}^{p}X_{j,it}\Delta_{\Theta_{j,it}}\right) }%
\left[ \mathsf{F}_{it}\left(s\right) -\mathsf{F}_{it}\left(0\right) \right]
ds  \notag \\
& =\frac{1}{NT}\sum_{i=1}^{N}\sum_{t=1}^{T}\int_{0}^{c_{6}\xi_{N}^{-1}\left(%
\Delta_{\Theta_{0,it}}+\sum_{j=1}^{0}X_{j,it}\Delta_{\Theta_{j,it}}\right) }%
\left[ s\mathsf{f}_{it}\left(0\right) +\frac{s^{2}}{2}\mathsf{f}%
_{it}^{\prime }\left(\tilde{s}\right) \right] ds  \notag \\
& \geq \frac{\underline{\mathsf{f}}c_{6}^{2}}{2NT\xi_{N}^{2}}%
\sum_{i=1}^{N}\sum_{t=1}^{T}\left(\Delta_{\Theta
_{0,it}}+\sum_{j=1}^{p}X_{j,it}\Delta_{\Theta_{j,it}}\right) ^{2}-\frac{\bar{%
\mathsf{f}}^{\prime }c_{6}^{3}}{6NT\xi_{N}^{3 }}\sum_{i=1}^{N}\sum_{t=1}^{T}%
\left\vert
\Delta_{\Theta_{0,it}}+\sum_{j=1}^{p}X_{j,it}\Delta_{\Theta_{j,it}}\right%
\vert ^{3}  \notag \\
& =\frac{\underline{\mathsf{f}}c_{6}^{2}}{4NT\xi_{N}^{2}}\sum_{i=1}^{N}%
\sum_{t=1}^{T}\left(\Delta_{\Theta_{0,it}}+\sum_{j=1}^{p}X_{j,it}\Delta_{%
\Theta_{j,it}}\right) ^{2}  \notag \\
& +\frac{1}{NT\xi_{N}^{2}}\sum_{i=1}^{N}\sum_{t=1}^{T}\left\{ \frac{%
\underline{\mathsf{f}}c_{6}^{2}}{4}\left(\Delta_{\Theta_{0,it}}+%
\sum_{j=1}^{p}X_{j,it}\Delta_{\Theta_{j,it}}\right) ^{2}\left(1-\frac{2c_{6}%
\bar{\mathsf{f}}^{\prime }}{3\underline{\mathsf{f}}\xi_{N}}\left\vert
\Delta_{\Theta_{0,it}}+\sum_{j=1}^{p}X_{j,it}\Delta_{\Theta_{j,it}}\right%
\vert \right) \right\}  \notag \\
& \geq \frac{\underline{\mathsf{f}}c_{6}^{2}}{4NT\xi_{N}^{2}}%
\sum_{i=1}^{N}\sum_{t=1}^{T}\left\{
\Delta_{\Theta_{0,it}}+\sum_{j=1}^{p}X_{j,it}\Delta_{\Theta_{j,it}}\right\}
^{2}  \notag \\
& =\frac{\underline{\mathsf{f}}c_{6}^{2}}{4NT\xi_{N}^{2}}\left\Vert
\Delta_{\Theta_{0}}+\sum_{j=1}^{p}X_{j}\odot
\Delta_{\Theta_{j}}\right\Vert_{F}^{2}  \notag \\
& \geq \frac{\underline{\mathsf{f}}c_{6}^{2}}{4NT\xi_{N}^{2}}%
\left\{C_{3}\sum_{j=0}^{p}\left\Vert
\Delta_{\Theta_{j}}\right\Vert_{F}^{2}-C_{4}\left(N+T\right) \right\} \quad 
\text{w.p.a.1}  \notag \\
& =\frac{c_{7}C_{3}}{NT\xi_{N}^{2}}\sum_{j=0}^{p}\left\Vert
\Delta_{\Theta_{j}}\right\Vert_{F}^{2}-\frac{c_{7}C_{4}}{NT\xi_{N}^{2}}%
\left(N+T\right)
\end{align}%
where the second equality is by (\ref{Lem4.3}), the first inequality is by
Lemma \ref{Lem:ind diff} and the fact that $c_{6}/\xi_{N}\leq 1$, the fifth
equality is by the mean-value theorem, the third inequality is by the fact
that 
\begin{equation*}
1-\frac{2c_{6}\bar{\mathsf{f}}^{\prime }}{3\underline{\mathsf{f}}\xi_{N}}%
\left\vert
\Delta_{\Theta_{0,it}}+\sum_{j=1}^{p}X_{j,it}\Delta_{\Theta_{j,it}}\right%
\vert \geq 1-\frac{2c_{6}\bar{\mathsf{f}}^{\prime }}{3\underline{\mathsf{f}}%
\xi_{N}}M(1+p)\xi_{N}\geq 0,
\end{equation*}%
and the fourth inequality holds under Assumption \ref{ass:4}. This concludes
the proof.
\end{proof}

\begin{lemma}
{\small \label{Lem:delta nuclear} } Under Assumptions \ref{ass:1}--\ref%
{ass:4}, for any $\left\{ \Delta_{\Theta_{j}}\right\}_{j=0}^{p}\in\mathcal{R}%
(3,C_{2})$, we have $\left\Vert \Delta_{\Theta_{j}}\right\Vert_{\ast }\leq
c_{8}\sum_{j=0}^{p}\left\Vert
\Delta_{\Theta_{j}}\right\Vert_{F}\,\,\,\forall j\in \left\{ 0,\cdots
,p\right\} $ where $c_{8}=4\sqrt{2\bar{K}}$.
\end{lemma}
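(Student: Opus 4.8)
The plan is to bound each $\left\Vert \Delta_{\Theta_{j}}\right\Vert_{\ast}$ by the \emph{sum} $\sum_{j=0}^{p}\left\Vert \Delta_{\Theta_{j}}\right\Vert_{F}$, exploiting only the cone part of the definition of $\mathcal{R}(3,C_{2})$ (the Frobenius lower bound $\sum_{j}\left\Vert \Delta_{\Theta_{j}}\right\Vert_{F}^{2}\geq C_{2}\sqrt{NT}$ plays no role here). First I would apply the triangle inequality to the decomposition $\Delta_{\Theta_{j}}=\mathcal{P}_{j}(\Delta_{\Theta_{j}})+\mathcal{P}_{j}^{\bot}(\Delta_{\Theta_{j}})$, sum over $j\in\{0,\dots,p\}$, and invoke the cone condition $\sum_{j=0}^{p}\left\Vert \mathcal{P}_{j}^{\bot}(\Delta_{\Theta_{j}})\right\Vert_{\ast}\leq 3\sum_{j=0}^{p}\left\Vert \mathcal{P}_{j}(\Delta_{\Theta_{j}})\right\Vert_{\ast}$ to get
\begin{equation*}
\sum_{j=0}^{p}\left\Vert \Delta_{\Theta_{j}}\right\Vert_{\ast}\leq \sum_{j=0}^{p}\left\Vert \mathcal{P}_{j}(\Delta_{\Theta_{j}})\right\Vert_{\ast}+\sum_{j=0}^{p}\left\Vert \mathcal{P}_{j}^{\bot}(\Delta_{\Theta_{j}})\right\Vert_{\ast}\leq 4\sum_{j=0}^{p}\left\Vert \mathcal{P}_{j}(\Delta_{\Theta_{j}})\right\Vert_{\ast}.
\end{equation*}

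Next I would control each $\left\Vert \mathcal{P}_{j}(\Delta_{\Theta_{j}})\right\Vert_{\ast}$ via a rank bound. Writing $R_{j,0}R_{j,0}^{\prime}=I-R_{j,r}R_{j,r}^{\prime}$ and $S_{j,0}S_{j,0}^{\prime}=I-S_{j,r}S_{j,r}^{\prime}$, one computes
\begin{equation*}
\mathcal{P}_{j}(W)=W-(I-R_{j,r}R_{j,r}^{\prime})W(I-S_{j,r}S_{j,r}^{\prime})=R_{j,r}R_{j,r}^{\prime}W+(I-R_{j,r}R_{j,r}^{\prime})WS_{j,r}S_{j,r}^{\prime},
\end{equation*}
so $\mathrm{rank}(\mathcal{P}_{j}(W))\leq 2K_{j}\leq 2\bar{K}$. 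Since $\left\Vert M\right\Vert_{\ast}\leq \sqrt{\mathrm{rank}(M)}\,\left\Vert M\right\Vert_{F}$ and $\mathcal{P}_{j}$ is an orthogonal projection in the Frobenius inner product (its complement $\mathcal{P}_{j}^{\bot}(W)=R_{j,0}R_{j,0}^{\prime}WS_{j,0}S_{j,0}^{\prime}$ is clearly orthogonal), we have $\left\Vert \mathcal{P}_{j}(\Delta_{\Theta_{j}})\right\Vert_{\ast}\leq \sqrt{2\bar{K}}\,\left\Vert \mathcal{P}_{j}(\Delta_{\Theta_{j}})\right\Vert_{F}\leq \sqrt{2\bar{K}}\,\left\Vert \Delta_{\Theta_{j}}\right\Vert_{F}$. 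Chaining this with the previous display gives $\sum_{j=0}^{p}\left\Vert \Delta_{\Theta_{j}}\right\Vert_{\ast}\leq 4\sqrt{2\bar{K}}\sum_{j=0}^{p}\left\Vert \Delta_{\Theta_{j}}\right\Vert_{F}$, and since $\left\Vert \Delta_{\Theta_{j}}\right\Vert_{\ast}$ is dominated by the sum on the left for each fixed $j$, the claim follows with $c_{8}=4\sqrt{2\bar{K}}$.

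There is no substantive obstacle in this lemma; it is a routine consequence of the cone geometry together with the rank structure of the projection $\mathcal{P}_{j}$. The only point requiring mild care is the rank bound $\mathrm{rank}(\mathcal{P}_{j}(W))\leq 2K_{j}$ and the observation that $\mathcal{P}_{j}$ is an orthogonal (hence Frobenius-norm-contracting) projection; both can be taken from the decomposition identities already used in the proof of Lemma~\ref{Lem:RS} (cf. Lemma D.2 in \cite{chernozhukov2019inference}). I would also note explicitly at the outset that the bound is stated for each individual index $j$ but follows immediately from the summed inequality since every summand is nonnegative.
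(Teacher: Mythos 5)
Your proof is correct and essentially matches the paper's: both use the cone condition from $\mathcal{R}(3,C_{2})$, the rank bound $\mathrm{rank}(\mathcal{P}_{j}(W))\leq 2K_{j}$, and the inequality $\|M\|_{\ast}\leq\sqrt{\mathrm{rank}(M)}\,\|M\|_{F}$ together with the Frobenius contraction property of $\mathcal{P}_{j}$. The only (cosmetic) difference is that you invoke the triangle inequality $\|\Delta_{\Theta_{j}}\|_{\ast}\leq\|\mathcal{P}_{j}(\Delta_{\Theta_{j}})\|_{\ast}+\|\mathcal{P}_{j}^{\bot}(\Delta_{\Theta_{j}})\|_{\ast}$, whereas the paper uses the decomposability equality $\|\Delta_{\Theta_{j}}\|_{\ast}=\|\mathcal{P}_{j}(\Delta_{\Theta_{j}})\|_{\ast}+\|\mathcal{P}_{j}^{\bot}(\Delta_{\Theta_{j}})\|_{\ast}$ from Lemma D.2(i) of \cite{chernozhukov2019inference}; both yield the needed upper bound, and your version is marginally more elementary since it does not rely on the decomposability structure of the nuclear norm.
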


\begin{proof}
For $\forall j\in \left\{ 0,\cdots ,p\right\} $, we obtain that 
\begin{align*}
\left\Vert \Delta_{\Theta_{j}}\right\Vert_{\ast }& =\left\Vert \mathcal{P}%
_{j}(\Delta_{\Theta_{j}})\right\Vert_{\ast }+\left\Vert \mathcal{P}%
_{j}^{\bot }(\Delta_{\Theta_{j}})\right\Vert_{\ast }\leq \left\Vert \mathcal{%
P}_{j}(\Delta_{\Theta_{j}})\right\Vert_{\ast}+3\sum_{j=0}^{p}\left\Vert 
\mathcal{P}_{j}(\Delta_{\Theta_{j}})\right\Vert_{\ast } \\
& \leq 4\sum_{j=0}^{p}\left\Vert \mathcal{P}_{j}(\Delta_{\Theta_{j}})\right%
\Vert_{\ast }\leq 4\sum_{j=0}^{p}\sqrt{2K_{j}}\left\Vert\mathcal{P}%
_{j}\left(\Delta_{\Theta_{j}}\right) \right\Vert_{F} \\
& \leq 4\sqrt{2\bar{K}}\sum_{j=0}^{p}\left\Vert
\Delta_{\Theta_{j}}\right\Vert_{F}:=c_{8}\sum_{j=0}^{p}\left\Vert
\Delta_{\Theta_{j}}\right\Vert_{F},
\end{align*}%
where the first equality is by 
\citet[Lemma
D.2(i)]{chernozhukov2019inference}, the first inequality is by the
definition of $\mathcal{R}(C_{1},C_{2})$, and the last two inequalities
follow the facts that $\left\Vert A\right\Vert_{\ast }\leq \sqrt{\text{rank}%
(A)}\left\Vert A\right\Vert_{F}$ for any matrix $A\in \mathbb{R}^{N\times T}$
and $\text{rank}\left(\mathcal{P}_{j}\left(\Delta_{\Theta_{j}}\right)\right)
\leq 2K_{j}$, which hold by 
\citet[ Lemma
D.2.(iii)]{chernozhukov2019inference}.
\end{proof}

\bigskip

Let $\mathcal{Z}$ be a separable metric space, $\left\{
Z_{1},\cdots,Z_{n}\right\} $ be a sequence of random variables in $\mathcal{Z%
}$ adapted to the filtration $\{\mathscr{F}_{t}\}_{t\in [n]}$, $\mathcal{F}%
=\left\{ f:\mathcal{Z}\rightarrow \mathbb{R}\right\} $ be a set of bounded
real valued functions on $\mathbb{R}$, and $u_{1},\cdots ,u_{n}$ be i.i.d.
Rademacher random variables. Here we allow the dependence of sequence $%
\left\{ Z_{1},\cdots ,Z_{n}\right\} $. Similarly as in \cite%
{rakhlin2015sequential}, we define a $\mathcal{Z}$-valued tree $\mathbf{z}$
of depth $n$ with the sequence $\left(\mathbf{z}_{1},\cdots ,\mathbf{z}%
_{n}\right) $ such that $\mathbf{z}_{t}:\left\{ u_{1},\cdots,u_{t-1}\right\}
\rightarrow \mathcal{Z}$. For simplicity, we denote as $\mathbf{z}%
_{t}\left(u\right) =\mathbf{z}_{t}\left(u_{1},\cdots,u_{t-1}\right) $ and $%
\mathbb{E}_{t}\left[ \cdot \right] =\mathbb{E}\left[\cdot \big|\mathscr{F}%
_{t}\right]$ for short. Also, denote $u_{1:t}:=\left(u_{1},\cdots
,u_{t}\right) $ and similarly for $Z_{1:t}$.

\begin{lemma}
{\small \label{Lem:symmetrization} } Let $\mathcal{F}$ be a class of
functions. For any $\alpha >0$, it holds that 
\begin{equation*}
\beta_{n}\mathbb{P}\left\{ \sup_{f\in \mathcal{F}}\left\vert \frac{1}{n}%
\sum_{t=1}^{n}\left(f\left(Z_{t}\right) -\mathbb{E}_{t-1}\left[
f\left(Z_{t}\right) \right] \right) \right\vert >\alpha \right\} \leq 2\sup_{%
\mathbf{z}}\mathbb{P}\left\{ \sup_{f\in \mathcal{F}}\left\vert \frac{1}{n}%
\sum_{t=1}^{n}u_{t}f\left(\mathbf{z}_{t}\left(u\right) \right) \right\vert>%
\frac{\alpha }{4}\right\} .
\end{equation*}%
where $\beta_{n}\geq 1-\sup_{f\in \mathcal{F}}\frac{4\sum_{t=1}^{n}Var%
\left(f\left(Z_{t}\right) \big |\mathscr{F}_{t-1}\right) }{n^{2}\alpha ^{2}}$
and the outer supremum is taken over all $\mathcal{Z}$-valued tree of depth $%
n$.\footnote{%
This means the supremum is taken over all $\mathbf{z}=\{\mathbf{z}%
_{t}(\cdot)\}_{t\in [n]}$.}
\end{lemma}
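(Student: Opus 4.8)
The plan is to follow the sequential symmetrization argument of \cite{rakhlin2015sequential}, carried out in three stages: a decoupling plus second-moment step that extracts the factor $\beta_n$, a sequential symmetrization step that introduces the $\mathcal{Z}$-valued tree and the Rademacher signs, and a concluding split-and-union-bound step that produces the factor $2$ and the threshold $\alpha/4$. Throughout, write $A:=\{\sup_{f\in\mathcal F}|\tfrac1n\sum_{t=1}^n(f(Z_t)-\mathbb{E}_{t-1}[f(Z_t)])|>\alpha\}$, note that $A$ is $\sigma(Z_{1:n})$-measurable, and set $\beta_n:=1-\sup_{f\in\mathcal F}\tfrac{4\sum_{t=1}^n\mathrm{Var}(f(Z_t)\mid\mathscr F_{t-1})}{n^2\alpha^2}$ (if this is nonpositive the asserted bound is trivial, so assume $\beta_n\in(0,1]$). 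First I would introduce a decoupled tangent sequence $\{Z_t'\}_{t\in[n]}$: writing $Z_t=\phi_t(U_t;Z_{1:t-1})$ with i.i.d.\ uniform $U_t$ and measurable $\phi_t$, put $Z_t':=\phi_t(U_t';Z_{1:t-1})$ with $U_t'$ i.i.d.\ uniform and independent of all else, so that conditionally on $Z_{1:n}$ the $Z_t'$ are independent with $\mathbb{E}[f(Z_t')\mid Z_{1:n}]=\mathbb{E}_{t-1}[f(Z_t)]$ and $\mathrm{Var}(f(Z_t')\mid Z_{1:n})=\mathrm{Var}(f(Z_t)\mid\mathscr F_{t-1})$ for each fixed $f$.

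On $A$ select (measurably, using separability of $\mathcal Z$ and boundedness of $\mathcal F$) some $\hat f$ with $|\tfrac1n\sum_t(\hat f(Z_t)-\mathbb{E}_{t-1}[\hat f(Z_t)])|>\alpha$; Chebyshev's inequality then gives, conditionally on $Z_{1:n}$,
\begin{equation*}
\mathbb{P}\Big(\Big|\tfrac1n\textstyle\sum_{t}\hat f(Z_t')-\tfrac1n\textstyle\sum_{t}\mathbb{E}_{t-1}[\hat f(Z_t)]\Big|\le\tfrac\alpha2\ \Big|\ Z_{1:n}\Big)\ \ge\ 1-\frac{4\sum_t\mathrm{Var}(\hat f(Z_t)\mid\mathscr F_{t-1})}{n^2\alpha^2}\ \ge\ \beta_n,
\end{equation*}
and on the intersection of this event with $A$ the triangle inequality forces $|\tfrac1n\sum_t(\hat f(Z_t)-\hat f(Z_t'))|>\alpha/2$. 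Since $A$ is $\sigma(Z_{1:n})$-measurable, integrating over $Z_{1:n}$ yields $\beta_n\mathbb{P}(A)\le\mathbb{P}\{\sup_{f\in\mathcal F}|\tfrac1n\sum_t(f(Z_t)-f(Z_t'))|>\alpha/2\}$.

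Next I would symmetrize the difference sequence. Since $Z_t$ and $Z_t'$ are, conditionally on $Z_{1:t-1}$, i.i.d., swapping them according to an independent Rademacher bit $u_t$ leaves the conditional joint law unchanged; because every later coordinate is generated from the revealed prefix, the swapped process is realized along the sign path $u_{1:n}$ as a pair of (random) $\mathcal Z$-valued trees $\mathbf z,\mathbf z'$ of depth $n$. The key identity, established by induction on $t$ exactly as in \cite{rakhlin2015sequential}, is that for each realization of $(\mathbf z,\mathbf z')$,
\begin{equation*}
\Big(\tfrac1n\textstyle\sum_{t=1}^n\big(f(Z_t)-f(Z_t')\big)\Big)_{f\in\mathcal F}\ \overset{d}{=}\ \Big(\tfrac1n\textstyle\sum_{t=1}^n u_t\big(f(\mathbf z_t(u))-f(\mathbf z_t'(u))\big)\Big)_{f\in\mathcal F},
\end{equation*}
with $u_1,\dots,u_n$ i.i.d.\ Rademacher independent of $(\mathbf z,\mathbf z')$. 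This is the step I expect to be the main obstacle: it requires careful bookkeeping of how the injected randomness propagates through the adapted generating maps $\phi_t$, and it uses the joint (not merely pointwise) conditional exchangeability of the two sequences.

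Finally, by the triangle inequality $\sup_f|\tfrac1n\sum_t u_t(f(\mathbf z_t(u))-f(\mathbf z_t'(u)))|\le\sup_f|\tfrac1n\sum_t u_tf(\mathbf z_t(u))|+\sup_f|\tfrac1n\sum_t u_tf(\mathbf z_t'(u))|$, so the left side exceeds $\alpha/2$ only if one of the two summands exceeds $\alpha/4$; a union bound contributes the factor $2$, and bounding the probability over the random trees $\mathbf z,\mathbf z'$ by the supremum over all deterministic $\mathcal Z$-valued trees of depth $n$ gives
\begin{equation*}
\beta_n\,\mathbb{P}(A)\ \le\ 2\sup_{\mathbf z}\mathbb{P}\Big\{\sup_{f\in\mathcal F}\Big|\tfrac1n\textstyle\sum_{t=1}^n u_tf(\mathbf z_t(u))\Big|>\tfrac\alpha4\Big\},
\end{equation*}
which is the claim. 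Measurability of the suprema throughout is handled by replacing $\mathcal F$ with a countable dense subset, which is legitimate since $\mathcal Z$ is separable and the elements of $\mathcal F$ are bounded.
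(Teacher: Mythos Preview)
Your proposal is correct and follows essentially the same three-stage route as the paper: a Chebyshev/decoupling step on a tangent sequence to extract $\beta_n$, then sequential symmetrization into the tree formulation, then the triangle inequality and union bound to obtain the factor $2$ and the threshold $\alpha/4$. The only cosmetic difference is that the paper invokes Lemma~18 of \cite{rakhlin2015sequential} directly for the tree-symmetrization step (writing it as the iterated $\sup_{z_t,z_t'}\mathbb{E}_{u_t}$ chain), whereas you describe the underlying conditional-exchangeability induction in words; both land at the same bound.
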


\begin{proof}
Let $Z_{1:n}^{\prime }$ be a decoupled sequence tangent to $Z_{1:n}$. For
the sequence of random variables $\left\{ Z_{t}:t\in [n]\right\} $ adapted
to the filtration $\left\{ \mathscr{F}_{t}:t\in [n]\right\} $, the sequence $%
Z_{1:n}^{\prime }=\left\{ Z_{t}^{*},t\in [n]\right\} $ is said to be a
decoupled sequence tangent to $\left\{Z_{t}:t\in [n]\right\} $ if for each $%
t\in [n]$, $Z_{t}^{*}$ is generated from the conditional distribution of $%
Z_{t}$ given $\mathscr{F}_{t-1}$ and independent of everything else. This
means the sequence $Z_{1:n}^{\prime }$ is conditionally independent given $%
\mathscr{F}_{n}$ and for any measurable function $f$ of $Z_{t}^{*}$, 
\begin{equation}
\mathbb{E}(f(Z_{t}^{*})|\mathscr{F}_{n})=\mathbb{E}(f(Z_{t}^{*})|\mathscr{F}%
_{t-1})~a.s.  \label{eq:z'}
\end{equation}

For $\forall f\in \mathcal{F}$, with Chebyshev's inequality, we have 
\begin{align*}
& \mathbb{P}\left( \frac{1}{n}\left\vert \sum_{t=1}^{n}f\left( Z_{t}^{\ast
}\right) -\mathbb{E}_{t-1}\left[ f\left( Z_{t}^{\ast }\right) \right]
\right\vert >\alpha /2\bigg |\mathscr{F}_{n}\right) \\
& \leq \frac{4\mathbb{E}\left\{ \left( \sum_{t=1}^{n}f\left( Z_{t}^{\ast
}\right) -\mathbb{E}_{t-1}\left[ f\left( Z_{t}^{\ast }\right) \right]
\right) ^{2}\bigg|\mathscr{F}_{n}\right\} }{n^{2}\alpha ^{2}} \\
& =\frac{4\sum_{t=1}^{n}\mathbb{E}\left\{ \left( f\left( Z_{t}^{\ast
}\right) -\mathbb{E}_{t-1}\left[ f\left( Z_{t}^{\ast }\right) \right]
\right) ^{2}\bigg|\mathscr{F}_{n}\right\} }{n^{2}\alpha ^{2}} \\
& =\frac{4\sum_{t=1}^{n}Var\left( f\left( Z_{t}\right) \big |\mathscr{F}%
_{t-1}\right) }{n^{2}\alpha ^{2}},
\end{align*}%
where the first equality holds by the fact that, given $\mathscr{F}_{n}$, $%
\{Z_{t}^{\ast }\}_{t\in \lbrack n]}$ are independent and the last equality
holds by \eqref{eq:z'}. This implies 
\begin{align*}
\beta _{n}& :=\inf_{f\in \mathcal{F}}\mathbb{P}\left( \frac{1}{n}\left\vert
\sum_{t=1}^{n}f\left( Z_{t}^{\ast }\right) -\mathbb{E}_{t-1}\left[ f\left(
Z_{t}^{\ast }\right) \right] \right\vert \leq \alpha /2\bigg |\mathscr{F}%
_{n}\right) \\
& =1-\sup_{f\in \mathcal{F}}\mathbb{P}\left( \frac{1}{n}\left\vert
\sum_{t=1}^{n}f\left( Z_{t}^{\ast }\right) -\mathbb{E}_{t-1}\left[ f\left(
Z_{t}^{\ast }\right) \right] \right\vert >\alpha /2\bigg|\mathscr{F}%
_{n}\right) \\
& \geq 1-\sup_{f\in \mathcal{F}}\frac{4\sum_{t=1}^{n}Var\left( f\left(
Z_{t}\right) \big |\mathscr{F}_{t-1}\right) }{n^{2}\alpha ^{2}}.
\end{align*}%
Let function $f^{\ast }$ be the function that maximizes $\frac{1}{n}%
\left\vert \sum_{t=1}^{n}\left[ f\left( Z_{t}^{\ast }\right) -\mathbb{E}%
_{t-1}\left( f\left( Z_{t}^{\ast }\right) \right) \right] \right\vert $
condition on $\mathscr{F}_{n}$, and define the event $A_{1}=\left\{
\sup_{f\in \mathcal{F}}\frac{1}{n}\left\vert \sum_{t=1}^{n}\left[ f\left(
Z_{t}\right) -\mathbb{E}_{t-1}\left( f\left( Z_{t}\right) \right) \right]
\right\vert >\alpha \right\} $. Then we obtain that 
\begin{equation*}
\beta _{n}\leq \mathbb{P}\left( \frac{1}{n}\left\vert \sum_{t=1}^{n}f^{\ast
}\left( Z_{t}^{\ast }\right) -\mathbb{E}_{t-1}\left[ f^{\ast }\left(
Z_{t}\right) \right] \right\vert \leq \alpha /2\bigg |\mathscr{F}_{n}\right)
,
\end{equation*}%
where the inequality follows by the definition of $\beta _{n}$ and the fact
that $\mathbb{E}_{t-1}\left[ f^{\ast }\left( Z_{t}^{\ast }\right) \right] =%
\mathbb{E}_{t-1}\left[ f^{\ast }\left( Z_{t}\right) \right] $. As $A_{1}\in %
\mathscr{F}_{n}$, we have 
\begin{equation*}
\beta _{n}\leq \mathbb{P}\left( \frac{1}{n}\left\vert \sum_{t=1}^{n}f^{\ast
}\left( Z_{t}^{\ast }\right) -\mathbb{E}_{t-1}\left[ f^{\ast }\left(
Z_{t}^{\ast }\right) \right] \right\vert \leq \alpha /2\bigg |A_{1}\right) .
\end{equation*}%
It follows that 
\begin{align}
& \beta _{n}\mathbb{P}\left\{ \sup_{f\in \mathcal{F}}\left\vert \frac{1}{n}%
\sum_{t=1}^{n}\left( f\left( Z_{t}\right) -\mathbb{E}_{t-1}\left[ f\left(
Z_{t}\right) \right] \right) \right\vert >\alpha \right\}  \notag
\label{Lem8.1} \\
& \leq \mathbb{P}\left( \frac{1}{n}\left\vert \sum_{t=1}^{n}f^{\ast }\left(
Z_{t}^{\ast }\right) -\mathbb{E}_{t-1}\left[ f^{\ast }\left( Z_{t}^{\ast
}\right) \right] \right\vert \leq \alpha /2\bigg |A_{1}\right) \mathbb{P}%
\left( A_{1}\right)  \notag \\
& =\mathbb{P}\left( \left\{ \frac{1}{n}\left\vert \sum_{t=1}^{n}f^{\ast
}\left( Z_{t}^{\ast }\right) -\mathbb{E}_{t-1}\left[ f^{\ast }\left(
Z_{t}^{\ast }\right) \right] \right\vert \leq \alpha /2\right\} \bigcup
A_{1}\right)  \notag \\
& \leq \mathbb{P}\left( \frac{1}{n}\left\vert \sum_{t=1}^{n}f^{\ast }\left(
Z_{t}\right) -f^{\ast }\left( Z_{t}^{\ast }\right) \right\vert \geq \alpha
/2\right)  \notag \\
& \leq \mathbb{P}\left( \frac{1}{n}\sup_{f\in \mathcal{F}}\left\vert
\sum_{t=1}^{n}f\left( Z_{t}\right) -f\left( Z_{t}^{\ast }\right) \right\vert
\geq \alpha /2\right) ,
\end{align}%
where the second inequality holds by the implication rule. Let $\phi \left(
\cdot \right) =\mathbf{1}\left\{ \cdot >n\alpha /2\right\} .$ By Lemma 18 in 
\cite{rakhlin2015sequential}, we have 
\begin{align}
& \mathbb{P}\left( \frac{1}{n}\sup_{f\in \mathcal{F}}\left\vert
\sum_{t=1}^{n}f\left( Z_{t}\right) -f\left( Z_{t}^{\ast }\right) \right\vert
\geq \alpha /2\right)  \notag \\
& =\mathbb{E}\mathbf{1}\left\{ \frac{1}{n}\sup_{f\in \mathcal{F}}\left\vert
\sum_{t=1}^{n}f\left( Z_{t}\right) -f\left( Z_{t}^{\ast }\right) \right\vert
\geq \alpha /2\right\}  \notag \\
& \leq \sup_{z_{1},z_{1}^{\prime }}\mathbb{E}_{u_{1}}\cdots
\sup_{z_{n},z_{n}^{\prime }}\mathbb{E}_{u_{n}}\mathbf{1}\left\{ \sup_{f\in 
\mathcal{F}}\left\vert \sum_{t=1}^{n}u_{t}\left[ f\left( z_{t}\right)
-f\left( Z_{t}^{\ast }\right) \right] \right\vert \geq n\alpha /2\right\} 
\notag \\
& \leq \sup_{z_{1},z_{1}^{\prime }}\mathbb{E}_{u_{1}}\cdots
\sup_{z_{n},z_{n}^{\prime }}\mathbb{E}_{u_{n}}\mathbf{1}\left\{ \sup_{f\in 
\mathcal{F}}\left\vert \sum_{t=1}^{n}u_{t}f\left( z_{t}\right) \right\vert
\geq n\alpha /4\right\} +\sup_{z_{1},z_{1}^{\prime }}\mathbb{E}%
_{u_{1}}\cdots \sup_{z_{n},z_{n}^{\prime }}\mathbb{E}_{u_{n}}\mathbf{1}%
\left\{ \sup_{f\in \mathcal{F}}\left\vert \sum_{t=1}^{n}u_{t}f\left(
Z_{t}^{\ast }\right) \right\vert \geq n\alpha /4\right\}  \notag \\
& =2\sup_{z_{1},z_{1}^{\prime }}\mathbb{E}_{u_{1}}\cdots
\sup_{z_{n},z_{n}^{\prime }}\mathbb{E}_{u_{n}}\mathbf{1}\left\{ \sup_{f\in 
\mathcal{F}}\left\vert \sum_{t=1}^{n}u_{t}f\left( z_{t}\right) \right\vert
\geq n\alpha /4\right\}  \notag \\
& =2\sup_{\mathbf{z}}\mathbb{P}\left\{ \sup_{f\in \mathcal{F}}\left\vert 
\frac{1}{n}\sum_{t=1}^{n}u_{t}f\left( \mathbf{z}_{t}\left( u\right) \right)
\right\vert >\frac{\alpha }{4}\right\} ,  \label{Lem8.2}
\end{align}%
where the standard text $z_{t}$ and $Z_{t}^{\ast }$ are $\mathcal{Z}$%
-valued, the bold text $\mathbf{z}_{t}=\mathbf{z}_{t}(u)$ is the $t$-th root
of a tree $\mathbf{z}(\cdot )$ (i.e., a function of $u_{1:t-1}$), and the
outer supremum in the last line is taken over all $\mathcal{Z}$-valued tree
of depth $n$. Combining (\ref{Lem8.1}) and (\ref{Lem8.2}), we can conclude
that 
\begin{equation*}
\beta _{n}\mathbb{P}\left\{ \sup_{f\in \mathcal{F}}\left\vert \frac{1}{n}%
\sum_{t=1}^{n}\left( f\left( Z_{t}\right) -\mathbb{E}_{t-1}\left[ f\left(
Z_{t}\right) \right] \right) \right\vert >\alpha \right\} \leq 2\sup_{%
\mathbf{z}}\mathbb{P}\left\{ \sup_{f\in \mathcal{F}}\left\vert \frac{1}{n}%
\sum_{t=1}^{n}u_{t}f\left( \mathbf{z}_{t}\left( u\right) \right) \right\vert
>\frac{\alpha }{4}\right\} .
\end{equation*}
\end{proof}

The next lemma is an extension of the contraction principle, i.e., %
\citet[Theorem 4.12]{ledoux1991probability}, to the case with sequential
symmetrization.

\begin{lemma}
{\small \label{Lem:contraction} } Let function $F:\mathbb{R}_{+}\rightarrow 
\mathbb{R}_{+}$ be convex and increasing and $\phi _{t}:\mathbb{R}%
\rightarrow \mathbb{R}$ be contractions such that $\phi _{t}\left( 0\right)
=0$, $\mathbf{z}_{t}$ is the $t$-th root of a tree ($\mathbf{z}$) which
depends on $\left\{ u_{1},\cdots ,u_{t-1}\right\} $. Then we have 
\begin{equation*}
\mathbb{E}F\left\{ \frac{1}{2}\sup_{f\in \mathcal{F}}\left\vert
\sum_{t=1}^{n}u_{t}\phi _{t}\left( f\left( \mathbf{z}_{t}(u)\right) \right)
\right\vert \right\} \leq \mathbb{E}F\left\{ \sup_{f\in \mathcal{F}%
}\left\vert \sum_{t=1}^{n}u_{t}f\left( \mathbf{z}_{t}(u)\right) \right\vert
\right\} .
\end{equation*}
\end{lemma}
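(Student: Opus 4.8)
The plan is to follow the classical proof of the contraction principle for Rademacher processes (Theorem~4.12 of \cite{ledoux1991probability}), tracking at every step the modification forced by the tree-indexed randomization, as in \cite{rakhlin2015sequential}. There are two sources of difficulty relative to the i.i.d.\ case. The first is cosmetic: the statement is two-sided and carries the factor $1/2$, which I would dispose of at the end by writing $\sup_f|\cdot|=\max(\sup_f(\cdot),\sup_f(-(\cdot)))$, bounding $F$ of this maximum by a sum of two $F$'s via convexity and $F\ge 0$, and observing that for the claimed statement (where there is no additive shift) the ``$+$'' and ``$-$'' contributions agree up to the trivial symmetry of $|\cdot|$; this is also where I would pass to a convex increasing extension of $F$ to all of $\mathbb{R}$ so that the one-sided quantities below make sense. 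The second, genuine, difficulty is that $\mathbf{z}_t(u)$ depends on the whole prefix $u_{1:t-1}$, so the Rademacher signs are coupled through the tree and one cannot ``freeze everything but $u_t$'' the way one peels coordinates in the i.i.d.\ argument.

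The atomic tool is a single-coordinate contraction comparison: for a Rademacher $\varepsilon$ and $\alpha,\beta:\mathcal{F}\to\mathbb{R}$ not depending on $\varepsilon$, and $\phi$ a contraction with $\phi(0)=0$,
\[
\tfrac12 F\!\left(\sup_f[\alpha(f)+\phi(\beta(f))]\right)+\tfrac12 F\!\left(\sup_f[\alpha(f)-\phi(\beta(f))]\right)\le \tfrac12 F\!\left(\sup_f[\alpha(f)+\beta(f)]\right)+\tfrac12 F\!\left(\sup_f[\alpha(f)-\beta(f)]\right).
\]
I would prove this by extracting (near-)maximizers $f_+,f_-$ of the two suprema on the left, using $|\phi(\beta(f_+))-\phi(\beta(f_-))|\le|\beta(f_+)-\beta(f_-)|$ to show the pair of left suprema has no larger \emph{sum} than the pair of right suprema, and using $|\phi(b)|\le|b|$ to show it has no larger \emph{maximum}; the inequality for $F$ of the left pair then follows from Karamata's (majorization) inequality since $x\mapsto F(x)$ is convex and increasing. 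Separability of $\mathcal{Z}$ and boundedness of $\mathcal{F}$ on the relevant nodes license the extraction of near-maximizers.

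With this tool, I would establish the one-sided comparison
\[
\mathbb{E}_u F\!\left(\sup_f\Big[c(f)+\sum_{t=1}^n u_t\phi_t(f(\mathbf{z}_t(u)))\Big]\right)\le \mathbb{E}_u F\!\left(\sup_f\Big[c(f)+\sum_{t=1}^n u_t f(\mathbf{z}_t(u))\Big]\right)
\]
by induction on the depth $n$, carrying the \emph{arbitrary $f$-dependent shift} $c(\cdot)$ through the induction so that the recursion closes. The base case $n=0$ is immediate from $F$ increasing. For the step, conditioning on $u_1$ splits the depth-$n$ tree into the two depth-$(n-1)$ subtrees rooted at $\pm1$, with the root contribution $u_1\phi_1(f(\mathbf{z}_1))$ absorbed into the shift of the corresponding subproblem; the induction hypothesis on each subtree linearizes $\phi_2,\dots,\phi_n$, leaving only the root contraction $\phi_1$ to be removed, and finally one sets $c\equiv 0$.

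The hard part will be exactly this last removal of $\phi_1$: after conditioning on $u_1$, the ``side function'' next to $u_1\phi_1(f(\mathbf{z}_1))$ is $c(f)+\sum_{t\ge 2}u_t f(\mathbf{z}^{+}_{t-1}(u))$ on one branch and $c(f)+\sum_{t\ge 2}u_t f(\mathbf{z}^{-}_{t-1}(u))$ on the other, and these two sides genuinely differ because the subtrees differ — so the single-coordinate comparison of the second paragraph, which crucially needs the side to be the \emph{same} random variable on both branches, does not apply verbatim (a naive application can in fact be made to fail). Overcoming this is the crux of the proof: one must run the induction at the level of the recursive value functions of the two subtrees, and exploit convexity of $F$ together with the contraction bounds $|\phi_1(b)|\le|b|$ and $|\phi_1(a)-\phi_1(b)|\le|a-b|$ to control, in expectation over $u_{2:n}$, the discrepancy between the two branches; this is where the $1/2$ does real work. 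Everything else — the base case, the two-sided-to-one-sided reduction, and the measure-theoretic bookkeeping — is routine, and I would follow the corresponding steps in \cite{rakhlin2015sequential} for the remaining details.
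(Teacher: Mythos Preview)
You correctly identify both the overall architecture (one-sided comparison plus two-sided reduction) and the genuine difficulty: when peeling one Rademacher at a time, the tree structure forces the ``side'' of the single-coordinate comparison to differ on the two branches, so the Ledoux--Talagrand step in the form you state it (same $\alpha$ on both signs) does not apply. But your resolution of this crux is not a proof: ``run the induction at the level of the recursive value functions'' and ``control the discrepancy in expectation'' do not describe an actual argument, and the remark that the factor $1/2$ does real work here is misplaced --- the $1/2$ enters only in the two-sided reduction at the very end, not in the one-sided step.

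The paper closes the gap with a simple algebraic device you are missing. First, it peels from the \emph{last} coordinate rather than the first: to remove $\phi_t$ it conditions on $(u_{1:t-1},u_{t+1:n})$, so that $\mathbf{z}_1,\dots,\mathbf{z}_t$ are fixed and only the already-linearized tail $k_3(u_t)=\sum_{s>t}u_s f(\mathbf{z}_s(u))$ depends on $u_t$. Second --- and this is the key idea --- since $u_t\in\{-1,1\}$ one can write
\[
k_3(u_t)=\frac{k_3(1)+k_3(-1)}{2}+u_t\,\frac{k_3(1)-k_3(-1)}{2},
\]
absorbing the first half into a \emph{common} side and the second half into the coefficient of $u_t$. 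After this decomposition the problem becomes $\sup_{(h_1,h_2,h_3)\in\mathcal{H}}\bigl[h_1+u_t(\phi_t(h_2)+h_3)\bigr]$ with the \emph{same} index set $\mathcal{H}$ for $u_t=\pm1$, and the Ledoux--Talagrand pointwise inequality (now in a three-variable form, with the extra linear term $h_3$) applies directly to the two maximizers. The same averaging trick would also rescue your first-coordinate induction --- set $\gamma(f)=\tfrac12(\alpha^+(f)-\alpha^-(f))$ and $\beta(f)=\tfrac12(\alpha^+(f)+\alpha^-(f))$ so that both branches share the side $c(f)+\beta(f)$ --- but without it the argument does not close.
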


% Let $\mathcal{F}$ be a subset of $\mathbb{R}^{n}$ defined by $\mathcal{F}:=\left\{k\left(u\right)=\left\{f_{t}\left[\textbf{z}_{t}\left(u\right)\right]\right\}_{t\in[n]}\right\}$ with $k\left(u\right)=\left(k_{1}\left(u\right),\cdots,k_{n}\left(u\right)\right)$. 

\begin{proof}
We first consider the statement without the absolute value. Let function $G:%
\mathbb{R}\rightarrow \mathbb{R}$ be convex and increasing. We observe that 
\begin{equation*}
\mathbb{E}G\left\{ \sup_{f\in \mathcal{F}}\sum_{t=1}^{n}u_{t}\phi_{t}\left(f%
\left(\mathbf{z}_{t}(u)\right) \right) \right\} =\mathbb{E}\left\{ \mathbb{E}%
\left[ G\left\{ \sup_{f\in \mathcal{F}}\sum_{t=1}^{n}u_{t}\phi_{t}\left(f%
\left(\mathbf{z}_{t}(u)\right) \right) \right\} \bigg|u_{1:n-1}\right]
\right\}
\end{equation*}%
and 
\begin{align*}
\mathbb{E}\left\{ G\left(\sup_{f\in \mathcal{F}}\sum_{t=1}^{n}u_{t}\phi_{t}%
\left(f\left(\mathbf{z}_{t}(u)\right) \right) \bigg|u_{1:n-1}\right)\right\}
& =\mathbb{E}\left\{ G\left(\sup_{f\in \mathcal{F}}\sum_{t=1}^{n-1}u_{t}%
\phi_{t}\left(f\left(\mathbf{z}_{t}(u)\right)\right)
+u_{n}\phi_{n}\left(f\left(\mathbf{z}_{n}(u)\right) \right) \bigg|%
u_{1:n-1}\right) \right\} \\
& =\mathbb{E}\left\{ G\left(\sup_{k_{1},k_{2}\in \mathcal{K}%
}(k_{1}+u_{n}\phi_{n}\left(k_{2}\right) )\right) \bigg |u_{1:n-1}\right\} ,
\end{align*}%
where $k_{1}=\sum_{t=1}^{n-1}u_{t}\phi_{t}\left(f\left(\mathbf{z}%
_{t}(u)\right) \right) $, $k_{2}=f\left(\mathbf{z}_{n}(u)\right) $, and $%
\mathcal{K}=\{(k_{1},k_{2}):f\in \mathcal{F}\}\subset \mathbb{R}^{2}$. We
also note that $k_{1}$ and $k_{2}$ only depend on $u_{1:n-1}$ and is
independent of $u_{n}$. The proof in 
\citet[Theorem
4.12]{ledoux1989comparison} shows 
\begin{equation*}
\mathbb{E}G\left\{ \sup_{k_{1},k_{2}\in \mathcal{K}}k_{1}+u_{2}\phi_{2}%
\left(k_{2}\right) \right\} \leq \mathbb{E}G\left\{\sup_{k_{1},k_{2}\in 
\mathcal{K}}k_{1}+u_{2}k_{2}\right\} ,
\end{equation*}%
which implies 
\begin{equation*}
\mathbb{E}\left\{ G\left(\sup_{f\in \mathcal{F}}\sum_{t=1}^{n}u_{t}\phi_{t}%
\left(f\left(\mathbf{z}_{t}(u)\right) \right) \bigg|u_{1:n-1}\right)\right\}
\leq \mathbb{E}\left\{ G\left(\sup_{f\in \mathcal{F}}\sum_{t=1}^{n-1}u_{t}%
\phi_{t}\left(f\left(\mathbf{z}_{t}(u)\right)\right) +u_{n}f\left(\mathbf{z}%
_{n}(u)\right) \bigg|u_{1:n-1}\right)\right\} .
\end{equation*}%
Taking expectation on both sides, we have 
\begin{equation*}
\mathbb{E}\left\{ G\left(\sup_{f\in \mathcal{F}}\sum_{t=1}^{n}u_{t}\phi_{t}%
\left(f\left(\mathbf{z}_{t}(u)\right) \right) \right) \right\} \leq\mathbb{E}%
\left\{ G\left(\sup_{f\in \mathcal{F}}\sum_{t=1}^{n-1}u_{t}\phi_{t}\left(f%
\left(\mathbf{z}_{t}(u)\right) \right) +u_{n}f\left(\mathbf{z}_{n}(u)\right)
\right) \right\} .
\end{equation*}

Next, let $k_{1}=\sum_{t=1}^{n-2}u_{t}\phi_{t}\left(f\left(\mathbf{z}%
_{t}(u)\right) \right) $, $k_{2}=f(\mathbf{z}_{n-1}(u))$ and $%
k_{3}(u_{n-1})=u_{n}f(\mathbf{z}_{n}(u))$. We emphasize that $k_{1}$ and $%
k_{2}$ only depend on $u_{1:n-2}$ and $u_{n}$ while $k_{3}$ also depends on $%
u_{n-1} $. The dependence of $(k_{1},k_{2},k_{3}(\cdot))$ on $f$ is made
implicit for notation simplicity. Furthermore, given the fact that $u_{n-1}$
only takes values \{-1,1\}, we have 
\begin{equation*}
k_{3}(u_{n-1})=\frac{u_{n-1}+1}{2}k_{3}(1)+\frac{1-u_{n-1}}{2}k_{3}(-1)=%
\frac{k_{3}(1)+k_{3}(-1)}{2}+u_{n-1}\frac{k_{3}(1)-k_{3}(-1)}{2}.
\end{equation*}%
Given these notation and conditioning on $(u_{1:n-2},u_{n})$, we have 
\begin{align*}
& \mathbb{E}_{u_{n-1}}\left\{ G\left(\sup_{f\in \mathcal{F}%
}\sum_{t=1}^{n-1}u_{t}\phi_{t}\left(f\left(\mathbf{z}_{t}(u)\right)\right)
+u_{n}f\left(\mathbf{z}_{n}(u)\right) \right) \right\} \\
& =\mathbb{E}_{u_{n-1}}\left[ G\left(\sup_{f\in \mathcal{F}%
}k_{1}+u_{n-1}\phi_{n-1}(k_{2})+k_{3}(u_{n-1})\right) \right] \\
& =\mathbb{E}_{u_{n-1}}\left[ G\left(\sup_{f\in \mathcal{F}}\left(k_{1}+%
\frac{k_{3}(1)-k_{3}(-1)}{2}\right) +u_{n-1}\left(\phi_{n-1}(k_{2})+\frac{%
k_{3}(1)-k_{3}(-1)}{2}\right) \right) \right] \\
& =\mathbb{E}_{u_{n-1}}\left[ G\left(\sup_{(h_{1},h_{2},h_{3})\in \mathcal{H}%
}h_{1}+u_{n-1}\left(\phi_{n-1}(h_{2})+h_{3}\right) \right) \right] ,
\end{align*}%
where $\mathbb{E}_{u_{n-1}}$ means the expectation is taken conditionally on 
$(u_{1:n-2},u_{n})$, $h_{1}=\sum_{t=1}^{n-2}u_{t}\phi_{t}\left(f\left(%
\mathbf{z}_{t}(u)\right) \right) +\frac{u_{n}(f(\mathbf{z}%
_{n}(u_{1:n-2},1))+f(\mathbf{z}_{n}(u_{1:n-2},-1)))}{2}$, $h_{2}=k_{2}$, $%
h_{3}=\frac{u_{n}(f(\mathbf{z}_{n}(u_{1:n-2},1))-f(\mathbf{z}%
_{n}(u_{1:n-2},-1)))}{2}$, and $\mathcal{H}=((h_{1},h_{2},h_{3}):f\in 
\mathcal{F})\in \mathbb{R}^{3}$. Suppose $(h_{1}^{\ast },h_{2}^{\ast
},h_{3}^{\ast })\in \mathcal{H}$ and $(h_{1}^{\dagger },h_{2}^{\dagger
},h_{3}^{\dagger }) \in \mathcal{H}$ achieve the supremum of 
\begin{equation*}
h_{1}+\left(\phi_{n-1}(h_{2})+h_{3}\right) \quad \text{and}\quad
h_{1}-\left(\phi_{n-1}(h_{2})+h_{3}\right) ,\quad \text{respectively}.
\end{equation*}%
Then, we have 
\begin{align*}
& \mathbb{E}_{u_{n-1}}\left[ G\left(\sup_{(h_{1},h_{2},h_{3})\in \mathcal{H}%
}h_{1}+u_{n-1}\left(\phi_{n-1}(h_{2})+h_{3}\right) \right) \right] \\
& =\frac{1}{2}G((h_{1}^{\ast }+h_{3}^{\ast })+\phi_{n-1}(h_{2}^{\ast }))+%
\frac{1}{2}G((h_{1}^{\dagger }-h_{3}^{\dagger })-\phi_{n-1}(h_{2}^{\dagger}))
\\
& \leq \frac{1}{2}G((h_{1}^{\ast }+h_{3}^{\ast })+h_{2}^{\ast })+\frac{1}{2}%
G((h_{1}^{\dagger }-h_{3}^{\dagger })-h_{2}^{\dagger }) \\
& \leq \mathbb{E}_{u_{n-1}}\left[ G\left(\sup_{(h_{1},h_{2},h_{3})\in%
\mathcal{H}}h_{1}+u_{n-1}\left(h_{2}+h_{3}\right) \right) \right] ,
\end{align*}%
where the first inequality is by the fact proved in the proof of %
\citet[Theorem 4.12]{ledoux1991probability} that for any $%
t_{1},s_{1},t_{2},s_{2}$, 
\begin{equation*}
\frac{1}{2}G(t_{1}+\phi_{n-1}(t_{2}))+\frac{1}{2}G(s_{1}-\phi_{n-1}(s_{2}))%
\leq \frac{1}{2}G(t_{1}+t_{2})+\frac{1}{2}G(s_{1}-s_{2}).
\end{equation*}

Plugging back the definition of $h_1,h_2,h_3$, we have 
\begin{align*}
&\mathbb{E}_{u_{n-1}}\left\{G\left(\sup_{f\in\mathcal{F}%
}\sum_{t=1}^{n-1}u_{t}\phi_{t}\left(f\left(\mathbf{z}_t(u)\right)
\right)+u_{n}f\left(\mathbf{z}_n(u)\right)\right)\right\} \leq \mathbb{E}%
_{u_{n-1}}\left\{G\left(\sup_{f\in\mathcal{F}}\sum_{t=1}^{n-2}
u_{t}\phi_{t}\left(f\left(\mathbf{z}_t(u)\right) \right) +\sum_{t=n-1}^n
u_{t}f\left(\mathbf{z}_{t}(u)\right)\right)\right\}.
\end{align*}
Taking expectation on both sides, we have 
\begin{align*}
&\mathbb{E}\left\{G\left(\sup_{f\in\mathcal{F}}\sum_{t=1}^{n-1}u_{t}\phi_{t}%
\left(f\left(\mathbf{z}_{t}(u)\right) \right)+u_{n}f\left(\mathbf{z}%
_{n}(u)\right)\right)\right\} \leq \mathbb{E}\left\{G\left(\sup_{f\in%
\mathcal{F}}\sum_{t=1}^{n-2} u_{t}\phi_{t}\left(f\left(\mathbf{z}%
_{t}(u)\right) \right) +\sum_{t=n-1}^{n} u_{t}f\left(\mathbf{z}%
_{t}(u)\right)\right)\right\}.
\end{align*}

We can repeat a similar argument by taking conditional expectations given $%
(u_{1:t-1},u_{t+1:n})$ and removing $\phi_t$ for all $t=n-2,\cdots,2$. This
leads to the result that 
\begin{align}
&\mathbb{E}\left\{G\left(\sup_{f\in\mathcal{F}}\sum_{t=1}^{n}u_{t}\phi_{t}%
\left(f\left(\mathbf{z}_t(u)\right) \right)\right)\right\} \leq\mathbb{E}%
\left\{G\left(\sup_{f\in\mathcal{F}}\sum_{t=1}^n u_{t}f\left(\mathbf{z}%
_{t}(u)\right)\right)\right\}.  \label{eq:no_abs}
\end{align}

Next, we come back to the case with the absolute value. Note that 
\begin{align}  \label{Lem9.6}
&\mathbb{E}F\left[\frac{1}{2}\sup_{f\in\mathcal{F}}\left\vert
\sum_{t=1}^{n}u_{t}\phi_{t}\left(f\left(\mathbf{z}_t(u)\right) \right)
\right\vert\right]  \notag \\
&\leq\frac{1}{2}\left\{\mathbb{E}F\left[\sup_{f\in\mathcal{F}%
}\left(\sum_{t=1}^{n} u_{t}\phi_{t}\left(f\left(\mathbf{z}_t(u)\right)
\right)\right)^{+}\right]+\mathbb{E}F\left[\sup_{f\in\mathcal{F}%
}\left(\sum_{t=1}^{n} u_{t}\phi_{t}\left(f\left(\mathbf{z}_t(u)\right)
\right)\right)^{-}\right]\right\}  \notag \\
&=\frac{1}{2}\left\{\mathbb{E}F\left[\sup_{f\in\mathcal{F}%
}\left(\sum_{t=1}^{n} u_{t}\phi_{t}\left(f\left(\mathbf{z}_t(u)\right)
\right)\right)^{+}\right]+\mathbb{E}F\left[\sup_{f\in\mathcal{F}%
}\left(\sum_{t=1}^{n} u_{t}^{\ast}\phi_{t}\left(f\left(\mathbf{z}%
^{\ast}_t(u^{\ast})\right) \right) \right)^{+}\right]\right\}  \notag \\
&=\frac{1}{2}\left\{\mathbb{E}F\left[\left(\sup_{f\in\mathcal{F}%
}\sum_{t=1}^{n} u_{t}\phi_{t}\left(f\left(\mathbf{z}_t(u)\right)
\right)\right)^{+}\right]+\mathbb{E}F\left[\left(\sup_{f\in\mathcal{F}%
}\sum_{t=1}^{n} u_{t}^{\ast}\phi_{t}\left(f\left(\mathbf{z}%
^{\ast}_t(u^{\ast})\right) \right) \right)^{+}\right]\right\}  \notag \\
&\leq \frac{1}{2}\left\{\mathbb{E}F\left[\left(\sup_{f\in\mathcal{F}%
}\sum_{t=1}^{n} u_{t}\left(f\left(\mathbf{z}_t(u)\right) \right) \right)^{+}%
\right]+\mathbb{E}F\left[\left(\sup_{f\in\mathcal{F}}\sum_{t=1}^{n}u_{t}^{%
\ast}f\left(\mathbf{z}^{\ast}_t(u^{\ast})\right) \right)^{+}\right]\right\} 
\notag \\
&= \frac{1}{2}\left\{\mathbb{E}F\left[\sup_{f\in\mathcal{F}%
}\left(\sum_{t=1}^{n} u_{t}\left(f\left(\mathbf{z}_t(u)\right)
\right)\right)^{+}\right]+\mathbb{E}F\left[\sup_{f\in\mathcal{F}%
}\left(\sum_{t=1}^{n} u_{t}^{\ast}f\left(\mathbf{z}^{\ast}_t(u^{\ast})%
\right)\right)^{+}\right]\right\}  \notag \\
& = \frac{1}{2}\left\{\mathbb{E}F\left[\sup_{f\in\mathcal{F}%
}\left(\sum_{t=1}^{n} u_{t}\left(f\left(\mathbf{z}_t(u)\right) \right)
\right)^{+}\right]+\mathbb{E}F\left[\sup_{f\in\mathcal{F}}\left(%
\sum_{t=1}^{n}u_{t}f\left(\mathbf{z}_t(u)\right) \right)^{-}\right]\right\} 
\notag \\
& \leq \mathbb{E}F\left[\sup_{f\in\mathcal{F}}\left\vert
\sum_{t=1}^{n}u_{t}f\left(\mathbf{z}_t(u)\right) \right\vert\right],
\end{align}
where $u_{t}^{\ast}=-u_{t}$, $\mathbf{z}_{t}^{\ast}(u) = \mathbf{z}_t(-u)$,
the first inequality is by the convexity of $F$, the first and second
equalities are by the fact that $(v)^{-} =(-v)^{+}$ for any $v$, and the
second inequality is by \eqref{eq:no_abs} and the fact that $F((\cdot)^+)$
is convex and increasing. This leads to the desired result.
\end{proof}

\bigskip

Define Rademacher sequence $u=\left(u_{11},\cdots
,u_{1T},\cdots,u_{N1},\cdots ,u_{NT}\right) =\left(u_{\left(1\right)
},\cdots ,u_{\left(NT\right) }\right) \in \mathbb{R}^{NT\times 1}$. In the
matrix notation, let $U=\left\{ u_{it}\right\} \in \mathbb{R}^{N\times T}$.
By vectorization, for a sequence of independent variables $X_{j,it}$, we
define 
\begin{align*}
& \left(X_{j,11},\cdots ,X_{j,1T},\cdots ,X_{j,N1},\cdots
,X_{j,NT}\right)=\left(X_{j,\left(1\right) },\cdots ,X_{j,\left(NT\right)
}\right) , \\
& \left(\epsilon_{11},\cdots ,\epsilon_{1T},\cdots
,\epsilon_{N1},\cdots,\epsilon_{NT}\right) =\left(\epsilon_{\left(1\right)
},\cdots ,\epsilon_{\left(NT\right) }\right) ,
\end{align*}%
$\forall j\in [p]$. Using the binary tree representation, let $%
x_{j,\left(l\right) }^{\ast }$ be $(l)^{th}$ root of the tree which takes
values in the support of $X_{j,\left(l\right) }$, i.e., $x_{j,\left(l\right)
}^{\ast }:u^{l-1}\mapsto [-\xi_{N},\xi_{N}]$ for $u^{l-1}:=\left(u_{\left(1%
\right)},\cdots,u_{\left(l-1\right) }\right) $ such that $\max_{i\in
[N]}\sum_{t\in[T]}\left(x_{j,it}^{\ast}\right) ^{2}\leq MT$ and $\max_{t\in[T%
]}\sum_{i\in [N]}\left(x_{j,it}^{\ast }\right) ^{2\ell }\leq MN$ for some
fixed constant $M<\infty $ and $\ell =1,2$. Similar notation follows for $%
\epsilon_{\left(l\right) }^{\ast }=\epsilon_{\left(l\right) }^{\ast
}\left(u^{l-1}\right)$. In the matrix notation, let $x_{j}^{\ast }=\left\{
x_{j,it}^{\ast}\right\} \in \mathbb{R}^{N\times T}$ such that $%
x_{j,it}^{\ast}=x_{j,\left(l\right) }$ with $i=\lceil \frac{l}{T}\rceil $
and $t=l-\left(i-1\right) T$.

\begin{lemma}
{\small \label{Lem:exp tree} } Under Assumption \ref{ass:1}, for $j\in[p]$,
there exists an absolute constant $C$ that is independent of the trees $%
(x^{\ast },\epsilon ^{\ast })$ such that when $\log (N\vee T)\geq 2$, 
% 	we have $\left\Vert U\odot x_{j}^{\ast} \right\Vert_{op}=O_{p}\left(\sqrt{\left(N\vee T\right)\log(N \vee T)}\xi_N\right)$, and
\begin{equation*}
\mathbb{E}\exp \left(\frac{\left\Vert U\right\Vert_{op}}{\sqrt{(N\vee T)\log
(N\vee T)}}\right) \leq C\quad \text{and}\quad \mathbb{E}\exp \left(\frac{%
\left\Vert U\odot x_{j}^{\ast }\right\Vert_{op}}{\sqrt{(N\vee T)\log(N\vee T)%
}}\right) \leq C.
\end{equation*}
\end{lemma}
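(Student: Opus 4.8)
The statement to prove is Lemma \ref{Lem:exp tree}: we need to bound $\mathbb{E}\exp\left(\|U\|_{op}/\sqrt{(N\vee T)\log(N\vee T)}\right)$ and $\mathbb{E}\exp\left(\|U\odot x_j^*\|_{op}/\sqrt{(N\vee T)\log(N\vee T)}\right)$ by an absolute constant, where $U$ is an $N\times T$ Rademacher matrix and $x_j^*$ is a tree-indexed analog of the regressor matrix.

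\textbf{Overall approach.} The plan is to control the operator norm of a random matrix with independent bounded entries (and of its Hadamard product with a fixed bounded matrix) via a standard operator-norm tail bound, then integrate the tail. I would first handle $U\odot x_j^*$ since it subsumes $U$ (take $x_{j,it}^*\equiv 1$, noting that the tree structure is irrelevant once we fix the realization $u^{l-1}$, and that all the moment bounds $\max_i\sum_t (x_{j,it}^*)^2\leq MT$, $\max_t\sum_i(x_{j,it}^*)^{2\ell}\leq MN$ hold trivially with $M=1$). The matrix $A:=U\odot x_j^*$ has independent mean-zero rows $A_i=U_i\odot x_{j,i}^*$ with $\|A_i\|_2^2=\sum_t u_{it}^2(x_{j,it}^*)^2=\sum_t(x_{j,it}^*)^2\leq MT$, so $\|A_i\|_2\leq \sqrt{MT}$ a.s.; and $\Sigma_i=\mathbb{E}(A_iA_i'|\text{tree})=\operatorname{diag}((x_{j,i1}^*)^2,\dots,(x_{j,iT}^*)^2)$, whence $\|\Sigma_i\|_{op}=\max_t(x_{j,it}^*)^2\leq M$. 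This is exactly the setup of Lemma \ref{Lem:matrix op} with $m=MT$ and the bound $M$, which gives that with probability $1-2T\exp(-c_1 t^2)$, $\|A\|_{op}\leq \sqrt{NM}+t\sqrt{MT+M}$. Equivalently, writing $a_{NT}:=\sqrt{(N\vee T)\log(N\vee T)}$, for a constant $K$ large enough, $\mathbb{P}(\|A\|_{op}>K\sqrt{N}+s\sqrt{T})\leq 2T\exp(-c_1 s^2/M')$ for all $s>0$.

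\textbf{Integrating the tail.} With this bound in hand I would write $\mathbb{E}\exp(\|A\|_{op}/a_{NT})=\int_0^\infty \mathbb{P}(\|A\|_{op}>a_{NT}\log y)\,dy$ (after the substitution $y=e^{\|A\|_{op}/a_{NT}}$, i.e. integrating $1+\int_1^\infty \mathbb{P}(\cdot)$), split the integral at a threshold $y_0$ (chosen so that $a_{NT}\log y_0 \gtrsim \sqrt{N}$, e.g.\ $\log y_0$ a suitable absolute constant since $a_{NT}\geq\sqrt{N}$), bound the contribution below $y_0$ trivially by $y_0$, and on $\{y>y_0\}$ use the tail bound with $s\sqrt{T}=a_{NT}\log y - K\sqrt{N}\gtrsim a_{NT}\log y/2$, so $s^2\gtrsim (N\vee T)\log(N\vee T)(\log y)^2/T\geq \log(N\vee T)(\log y)^2$ (using $(N\vee T)/T\geq 1$). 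Hence $\mathbb{P}(\|A\|_{op}>a_{NT}\log y)\leq 2T\exp(-c'\log(N\vee T)(\log y)^2)\leq 2T\cdot (N\vee T)^{-c'\log y}=2T\,y^{-c'\log(N\vee T)}\leq 2(N\vee T)\,y^{-2c'}$ once $\log(N\vee T)\geq 2$ and we absorb constants; choosing the large constant $K$ (and using $c'\log(N\vee T)\geq 2c'\geq$ any fixed power by enlarging $K$) makes $\int_{y_0}^\infty 2(N\vee T) y^{-c'\log(N\vee T)}dy$ bounded by a constant, because the polynomial decay exponent $c'\log(N\vee T)$ can be taken $\geq 3$ and the prefactor $N\vee T$ is killed by the $y_0^{1-c'\log(N\vee T)}$ factor together with, say, a preliminary cruder bound $\|A\|_{op}\leq \|A\|_F\leq \sqrt{NT}$ truncating the integral. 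I would in fact do the cleanest version: scale so the decay is $y^{-(N\vee T)^{\text{const}}}$ and note $\int_{y_0}^\infty (N\vee T) y^{-(N\vee T)^c}dy\to 0$.

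\textbf{Main obstacle.} The only delicate point is bookkeeping the constants so that the exponential-moment integral is genuinely bounded by an \emph{absolute} constant uniform over $N,T$ and over all trees $(x^*,\epsilon^*)$ — in particular, making sure the $\sqrt{N}$ term in Lemma \ref{Lem:matrix op} (which is not multiplied by a free deviation parameter $t$) is handled by the $\sqrt{(N\vee T)\log(N\vee T)}$ normalization, which it is since $\sqrt{(N\vee T)\log(N\vee T)}\geq \sqrt{N}\sqrt{\log(N\vee T)}\to\infty$ relative to $\sqrt N$; so $\|A\|_{op}/a_{NT}$ concentrates near a vanishing mean and the exponential moment is $1+o(1)$. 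The $\epsilon^*$ tree plays no role here (it is listed only for uniformity of the statement across the places this lemma is invoked), and the dependence structure encoded by the tree is irrelevant because conditional on $u^{l-1}$ the entries are frozen, so Lemma \ref{Lem:matrix op}'s conditional-independence-of-rows hypothesis holds verbatim. I would then conclude both displays, the first being the special case $x_j^*\equiv 1$.
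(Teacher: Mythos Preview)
Your argument has a genuine gap at the very first step. You claim that the rows $A_i = U_i \odot x_{j,i}^*$ of $A = U\odot x_j^*$ are independent, so that Lemma \ref{Lem:matrix op} applies. But the tree structure means $x_{j,it}^* = x_{j,(l)}^*(u_{(1)},\ldots,u_{(l-1)})$ with $l=(i-1)T+t$: the entries in row $i$ depend, through the tree, on \emph{all} Rademacher variables from rows $1,\ldots,i-1$. Hence $A_1,\ldots,A_N$ are not independent, and Lemma \ref{Lem:matrix op}'s hypothesis fails. Your remark that ``conditional on $u^{l-1}$ the entries are frozen'' does not help: $u^{l-1}$ varies with $(i,t)$, so there is no single conditioning that makes the rows independent while leaving $U$ random. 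The paper handles exactly this point by observing that $Z_i = N^{-1}(A_iA_i' - \mathbb{E}_{i-1}A_iA_i')$ is a matrix martingale difference sequence with respect to $\mathscr{F}_i=\sigma(\{u_{i't}\}_{i'\leq i})$, and invokes the matrix Freedman inequality (Lemma \ref{Lem:matrix Bern}(ii)) in place of the matrix Bernstein used inside Lemma \ref{Lem:matrix op}. The off-diagonal entries of $\Sigma_i=\mathbb{E}_{i-1}(A_iA_i')$ vanish because $u_{is}$ is mean-zero given $u_{i,1},\ldots,u_{i,s-1}$; the key variance control then comes from the tree constraint $\max_t\sum_i (x_{j,it}^*)^{2\ell}\leq MN$, which bounds $\|\sum_i\Sigma_i^\ell\|_{op}$ directly rather than via $N\max_i\|\Sigma_i\|_{op}$.

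A related issue: your bound $\|\Sigma_i\|_{op}\leq M$ would need $|x_{j,it}^*|^2\leq M$ pointwise, but the only entrywise bound is $|x_{j,it}^*|\leq \xi_N$, with $\xi_N$ possibly diverging. Plugging $M=\xi_N^2$ into Lemma \ref{Lem:matrix op} would introduce an unwanted $\xi_N$ factor in the $\sqrt{NM}$ term. The paper avoids this by working with $\|\sum_i\Sigma_i\|_{op}\leq MN$ from the tree constraints, which is why the $\xi_N^2$ only appears additively in the Freedman denominator and is harmless. Once the correct martingale tail bound is in place, your tail-integration strategy is fine in spirit and essentially matches the paper's, though the paper's bookkeeping (splitting $\int_0^\infty e^u\,\mathbb{P}(\|A\|_{op}/a_{NT}\geq u)\,du$ at $u=2$ and using $\log(N\vee T)\geq 2$) is cleaner than your version.
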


\begin{proof}
The proof here follows similarly as Lemma \ref{Lem:matrix op} except that we
have martingale difference matrices rather than independent matrices. For a
specific $j$, let $A=U\odot x_{j}^{\ast }=\left(A_{1},\cdots
,A_{N}\right)^{\prime}\in \mathbb{R}^{N\times T}$, $\mathscr{F}_{i}$ be the $%
\sigma$-field generated by $\{u_{i^{*}t}\}_{i^{*}\leq i,t\in[T]}$, $\mathbb{E%
}_{i}(\cdot)=\mathbb{E}(\cdot |\mathscr{F}_{i})$, $\Sigma_{i}=\mathbb{E}%
_{i-1}\left(A_{i}A_{i}^{\prime}\right) $ and $Z_{i}=\frac{1}{N}%
\left(A_{i}A_{i}^{\prime}-\Sigma_{i}\right) $, with $\frac{1}{N}%
\left(A^{\prime}A-\sum_{i\in [N]}\Sigma_{i}\right) =\sum_{i\in [N]}Z_{i}$.
Note that $\mathbb{E}_{i-1}\left(Z_{i}\right) =0,$ 
\begin{equation}
\max_{i\in [N]}\left\Vert A_{i}\right\Vert_{2}=\max_{i\in [N]}\sqrt{\sum_{t%
\in[T]}\left(x_{j,it}^{\ast }u_{it}\right) ^{2}}=\max_{i\in [N]}\sqrt{\sum_{t%
\in[T]}\left(x_{j,it}^{\ast}\right) ^{2}}\leq \sqrt{MT}~a.s.,
\label{Lem10.1}
\end{equation}
\begin{equation}
\max_{i\in [N]}\left\Vert \Sigma_{i}\right\Vert_{op}=\max_{i\in[N]%
}\left\Vert \text{diag}\left(\left(x_{j,i1}^{\ast }\right)^{2},\cdots
,\left(x_{j,iT}^{\ast }\right) ^{2}\right) \right\Vert_{op}\leq
\xi_{N}^{2}~a.s.,  \label{Lem10.2}
\end{equation}
and for $\ell =1,2,$ 
\begin{equation}
\left\Vert \sum_{i\in [N]}\Sigma_{i}^{\ell }\right\Vert_{op}=\left\Vert 
\text{diag}\left(\sum_{i\in [N]}\left(x_{j,i1}^{\ast }\right) ^{2\ell
},\cdots ,\sum_{i\in [N]}\left(x_{j,iT}^{\ast }\right) ^{2\ell }\right)
\right\Vert_{op}\leq MN~a.s..  \label{Lem10.2a}
\end{equation}

Combining (\ref{Lem10.1}) and (\ref{Lem10.2}) yields 
\begin{eqnarray}
\max_{i\in [N]}\left\Vert Z_{i}\right\Vert_{op} &\leq &\max_{i\in[N]}\frac{1%
}{N}\left(\left\Vert A_{i}A_{i}^{\prime}\right\Vert_{op}+\left\Vert
\Sigma_{i}\right\Vert_{op}\right)  \notag \\
&\leq &\frac{1}{N}\left(\max_{i\in [N]}\left\Vert
A_{i}\right\Vert_{2}^{2}+\max_{i\in [N]}\left\Vert
\Sigma_{i}\right\Vert_{op}\right) \leq \frac{MT+\xi_{N}^{2}}{N}\,\,a.s.
\label{Lem10.3}
\end{eqnarray}%
In addition,%
\begin{align}
\left\Vert \sum_{i\in [N]}\mathbb{E}_{i-1}\left(Z_{i}^{2}\right)\right%
\Vert_{op}& =\left\Vert \sum_{i\in [N]}\mathbb{E}_{i-1}\left\{ \frac{1}{N^{2}%
}\left[ \left(A_{i}A_{i}^{\prime}\right) ^{2}-\Sigma_{i}^{2}\right] \right\}
\right\Vert_{op}  \notag \\
& \leq \frac{1}{N^{2}}\left[ \sum_{i\in [N]}\left\Vert \mathbb{E}%
_{i-1}\left(\left\Vert
A_{i}\right\Vert_{2}^{2}A_{i}A_{i}^{\prime}\right)\right\Vert_{op}+\left%
\Vert \sum_{i\in [N]}\Sigma_{i}^{2}\right\Vert_{op}\right]  \notag \\
& \leq \frac{1}{N^{2}}\left[ MT\left\Vert \sum_{i\in
[N]}\Sigma_{i}\right\Vert_{op}+\left\Vert \sum_{i\in
[N]}\Sigma_{i}^{2}\right\Vert_{op}\right]  \notag \\
& \leq \frac{\left(MT+1\right) M}{N},  \label{Lem10.4}
\end{align}%
where the last inequality holds by (\ref{Lem10.1}) and (\ref{Lem10.2a}).

Combining (\ref{Lem10.3}) and (\ref{Lem10.4}), by matrix Bernstein's
inequality in Lemma \ref{Lem:matrix Bern}(ii), for some sufficiently large
constant $\bar{c}$ that depends on $M$, we have 
\begin{align*}
& \mathbb{P}\left(\left\Vert \sum_{i\in [N]}Z_{i}\right\Vert_{op}>\bar{c}%
\frac{(N\vee T)}{N}\log (N\vee T)\right) \\
& \leq T\exp \left\{ -\frac{\frac{1}{2}\bar{c}^{2}\left(\frac{(N\vee T)}{N}%
\right) ^{2}\left(\log (N\vee T)\right) ^{2}}{\frac{\left(MT+1\right) M}{N}+%
\frac{\frac{MT+\xi_{N}^{2}}{N}\bar{c}\frac{(N\vee T)}{N}\log (N\vee T)}{3}}%
\right\} =\exp \left(-\left(\frac{\bar{c}}{2}-1\right) \log ((N\vee
T))\right) ,
\end{align*}
which implies that with probability greater than $1-\exp (-(\bar{c}/2-1)\log
(N\vee T))$, 
\begin{align*}
& \left\Vert \frac{1}{N}\left(A^{\prime}A-\sum_{i\in [N]}\Sigma_{i}\right)
\right\Vert_{op}\leq \bar{c}\left(\frac{(N\vee T)}{N}\log(N\vee T)\right) ,
\\
& \left\Vert \frac{1}{N}A^{\prime}A\right\Vert_{op}\leq \left\Vert \frac{1}{N%
}\sum_{i\in [N]}\Sigma_{i}\right\Vert_{op}+\left\Vert \frac{1}{N}%
\left(A^{\prime}A-\sum_{i\in [N]}\Sigma_{i}\right) \right\Vert_{op}\leq M+%
\bar{c}\left(\frac{(N\vee T)}{N}\log (N\vee T)\right) ,\quad \text{and} \\
& \left\Vert A\right\Vert_{op}=\left\Vert U\odot x_{j}^{\ast
}\right\Vert_{op}\leq \sqrt{(1+\bar{c})(N\vee T)\log (N\vee T)}.
\end{align*}%
%
%
%
%
%
%
%
%
%
%
% we have
% \begin{align*}
% \left\Vert U\odot x_{j}^{\ast} \right\Vert_{op} \leq O_p\left(\sqrt{(N \vee T)\xi_N^2\log (N\vee T)}\right)
% \end{align*}

Consequently, when $\log (N\vee T)\geq 2$, we have 
\begin{align*}
& \mathbb{E}\exp \left(\frac{\left\Vert U\odot x_{j}^{\ast }\right\Vert_{op}%
}{\sqrt{(N\vee T)\log (N\vee T)}}\right) \\
& =\int_{0}^{\infty }\exp (u)\mathbb{P}\left(\frac{\left\Vert U\odot
x_{j}^{\ast }\right\Vert_{op}}{\sqrt{(N\vee T)\log (N\vee T)}}\geq u\right)du
\\
& =\left(\int_{0}^{2}+\int_{2}^{\infty }\right) \exp (u)\mathbb{P}%
\left(\left\Vert U\odot x_{j}^{\ast }\right\Vert_{op}\geq u\sqrt{(N\vee
T)\log(N\vee T)}\right) du \\
& \leq \int_{0}^{2}\exp (u)du+\int_{2}^{\infty }\exp (u)\exp \left(-\frac{%
(u^{2}-3)}{2}\log (N\vee T)\right) du \\
& \leq \int_{0}^{2}\exp (u)du+\int_{2}^{\infty }\exp \left(-\left(u-\frac{1}{%
2}\right) ^{2}+\frac{13}{4}\right) du \\
& \leq \exp (2)-1+\sqrt{2\pi }\exp \left(\frac{13}{4}\right) :=C,
\end{align*}%
where the first inequality is by the fact that 
\begin{equation*}
\mathbb{P}\left(\frac{\left\Vert U\odot x_{j}^{\ast }\right\Vert_{op}}{\sqrt{%
(N\vee T)\log (N\vee T)}}\geq \sqrt{1+\bar{c}}\right) \leq \exp \left(-\left(%
\frac{\bar{c}}{2}-1\right) \log ((N\vee T))\right)
\end{equation*}%
and by letting $u=\sqrt{1+\bar{c}}\geq 2$ for large $\bar{c}$. Similarly, we
can show that when $\log (N\vee T)\geq 2$, $\mathbb{E}\exp \left(\frac{%
\left\Vert U\right\Vert_{op}}{\sqrt{(N\vee T)\log (N\vee T)}}\right) \leq C$
for some absolute constant $C$.
\end{proof}

Recall that $\tilde{\rho}_{it}\left(\left\{\Delta_{\Theta_{j},it},X_{j,it}%
\right\}_{j=0}^{p},\epsilon_{it}\right)$ is defined in \eqref{eq:rhotilde}
and $\mathscr{G}_{i,t}$ is defined in Assumption \ref{ass:1}.

\begin{lemma}
{\small \label{Lem:empirical process} } If Assumptions \ref{ass:1}-\ref%
{ass:4} hold, then we have 
\begin{equation*}
\sup_{\left\{ \Delta_{\Theta_{j}}\right\}_{j=0}^{p}\in \mathcal{R}%
\left(3,C_{2}\right) }\frac{\left\vert \frac{1}{NT}\sum_{i=1}^{N}%
\sum_{t=1}^{T}\tilde{\rho}_{it}\left(\left\{
\Delta_{\Theta_{j},it},X_{j,it}\right\}_{j=0}^{p},\epsilon_{it}\right)
\right\vert }{\sum_{j=0}^{p}\left\Vert\Delta_{\Theta_{j}}\right\Vert_{F}}%
=O_{p}\left(a_{NT}\right) ,
\end{equation*}%
where $a_{NT}=\frac{\sqrt{\left(N\vee T\right) \log \left(N\vee T\right) }}{%
NT}$.
\end{lemma}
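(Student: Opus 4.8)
The plan is to combine a peeling argument over the Frobenius radius with the sequential symmetrization and contraction principles (Lemmas~\ref{Lem:symmetrization}--\ref{Lem:contraction}) and the operator-norm tail bound of Lemma~\ref{Lem:exp tree}. Write $r(\Delta):=\sum_{j=0}^{p}\|\Delta_{\Theta_{j}}\|_{F}$ and $a_{it}(\Delta):=\Delta_{0,it}+\sum_{j=1}^{p}X_{j,it}\Delta_{\Theta_{j},it}$, so that $\rho_{it}(\Delta)=\rho_{\tau}(\epsilon_{it}-a_{it}(\Delta))-\rho_{\tau}(\epsilon_{it})$ and, by Knight's identity, $\rho_{it}(\Delta)=a_{it}(\Delta)(\tau-\mathbf{1}\{\epsilon_{it}\le0\})+\psi_{it}(a_{it}(\Delta))$ with $\psi_{it}(v)=\int_{0}^{v}(\mathbf{1}\{\epsilon_{it}\le s\}-\mathbf{1}\{\epsilon_{it}\le0\})\,ds$. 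On $\mathcal{R}(3,C_{2})$ one has $r(\Delta)\ge r_{0}:=\sqrt{C_{2}}(NT)^{1/4}$, and since the NNR optimality inequality forces a crude a priori bound $r(\Delta)\lesssim\mathrm{poly}(NT)$ in the intended application, it suffices to control the supremum over the $O(\log(NT))$ dyadic shells $S_{k}:=\mathcal{R}(3,C_{2})\cap\{2^{k-1}r_{0}\le r(\Delta)\le 2^{k}r_{0}\}$ and to take a union bound in $k$; on $S_{k}$ the target event is contained in $\{\sup_{\Delta\in S_{k}}|\tfrac{1}{NT}\sum_{i,t}\tilde{\rho}_{it}(\Delta)|>C_{5}a_{NT}2^{k-1}r_{0}\}$.

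Ordering the index set lexicographically in $(i,t)$, the array $\{\tilde{\rho}_{it}(\Delta)\}$ is a martingale-difference array with respect to $\{\mathscr{G}_{i,t-1}\}$. The variance factor entering Lemma~\ref{Lem:symmetrization} is controlled via $|\rho_{\tau}(\epsilon-a)-\rho_{\tau}(\epsilon)|\le|a|$ and the $\mathscr{G}_{i,t-1}$-measurability of $a_{it}(\Delta)$: $\sum_{i,t}\mathrm{Var}(\rho_{it}(\Delta)\mid\mathscr{G}_{i,t-1})\le\|\Delta_{\Theta_{0}}+\sum_{j}X_{j}\odot\Delta_{\Theta_{j}}\|_{F}^{2}\lesssim(1\vee\xi_{N})^{2}r(\Delta)^{2}$, so that with shell threshold $\alpha\asymp a_{NT}2^{k-1}r_{0}$ the ratio $\sum\mathrm{Var}/((NT)^{2}\alpha^{2})\lesssim\xi_{N}^{2}/((N\vee T)\log(N\vee T))=o(1)$ uniformly over $k$ by Assumption~\ref{ass:1}(ix), whence $\beta_{NT}\ge1/2$. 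Lemma~\ref{Lem:symmetrization} then reduces the problem to bounding $\sup_{\mathbf{z}}\mathbb{P}(\sup_{\Delta\in S_{k}}|\tfrac{1}{NT}\sum_{i,t}u_{it}\rho_{it}(\Delta;\mathbf{z})|>\alpha/4)$, where the regressors and errors are replaced by admissible trees $x_{j}^{*},\epsilon^{*}$ obeying the moment/tail normalizations stated before Lemma~\ref{Lem:exp tree}.

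Since $a\mapsto\rho_{\tau}(\epsilon_{it}^{*}-a)-\rho_{\tau}(\epsilon_{it}^{*})$ is $1$-Lipschitz and vanishes at $a=0$, I would apply Lemma~\ref{Lem:contraction} with $F(\cdot)=e^{\lambda\cdot}$ to obtain a Chernoff-type tail and reduce to the linear Rademacher functional $\tfrac{1}{NT}\sum_{i,t}u_{it}a_{it}^{*}(\Delta)=\tfrac{1}{NT}(\langle U,\Delta_{\Theta_{0}}\rangle_{F}+\sum_{j=1}^{p}\langle U\odot x_{j}^{*},\Delta_{\Theta_{j}}\rangle_{F})$, which is bounded by $\tfrac{1}{NT}M_{*}\sum_{j=0}^{p}\|\Delta_{\Theta_{j}}\|_{*}$ with $M_{*}:=\max(\|U\|_{op},\max_{j}\|U\odot x_{j}^{*}\|_{op})$. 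On $S_{k}$, Lemma~\ref{Lem:delta nuclear} gives $\sum_{j}\|\Delta_{\Theta_{j}}\|_{*}\le(p+1)c_{8}r(\Delta)\le(p+1)c_{8}2^{k}r_{0}$, while Lemma~\ref{Lem:exp tree}, together with $e^{\theta x}\le\theta e^{x}+1-\theta$ for $\theta\in(0,1]$ and a union bound over $j$, yields $\mathbb{E}e^{\theta M_{*}/\sqrt{(N\vee T)\log(N\vee T)}}\lesssim p$ uniformly over all admissible trees. Choosing $\lambda\asymp\{(p+1)c_{8}2^{k}r_{0}\sqrt{(N\vee T)\log(N\vee T)}\}^{-1}$ makes the Chernoff exponent a multiple of $C_{5}$ independent of $k$ (because $NTa_{NT}=\sqrt{(N\vee T)\log(N\vee T)}$ and the factor $2^{k}r_{0}$ cancels), so $\mathbb{P}(E_{k})\lesssim p\,e^{-c\,C_{5}}$; summing over the relevant shells, dividing by $r(\Delta)\ge2^{k-1}r_{0}$, and taking $C_{5}$ large delivers $\sup_{\mathcal{R}(3,C_{2})}|\tfrac{1}{NT}\sum\tilde{\rho}_{it}|/r(\Delta)=O_{p}(a_{NT})$.

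The main obstacle is the rigorous transfer from expectations to tail probabilities through Lemmas~\ref{Lem:symmetrization} and \ref{Lem:contraction} while keeping the symmetrization variance factor $\beta_{NT}$ bounded away from zero uniformly over all shells — this is exactly where the rate restrictions of Assumption~\ref{ass:1}(ix) enter — and ensuring the contraction step remains legitimate even though the Lipschitz maps $\phi_{it}(\cdot)=\rho_{\tau}(\epsilon_{it}^{*}-\cdot)-\rho_{\tau}(\epsilon_{it}^{*})$ themselves depend on the tree node value $\epsilon_{it}^{*}$; this is handled by contracting along the combined $(\epsilon^{*},x^{*})$-tree so that each $\phi_{it}$ is a fixed contraction depending only on $u_{1:l-1}$. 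A secondary technical point is the careful bookkeeping of the $\xi_{N}$ factors coming from $|X_{j,it}|\le\xi_{N}$ through the conditional-variance bound and the operator-norm scaling, so that the final rate is exactly $a_{NT}$ rather than an $\xi_{N}$-inflated version of it.
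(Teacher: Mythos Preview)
Your proof uses the same toolbox as the paper—sequential symmetrization (Lemma~\ref{Lem:symmetrization}), the contraction principle (Lemma~\ref{Lem:contraction}), the duality $|\langle A,\Delta\rangle|\le\|A\|_{op}\|\Delta\|_*$ together with Lemma~\ref{Lem:delta nuclear}, and the exponential moment bound of Lemma~\ref{Lem:exp tree}—but the dyadic peeling you introduce is both unnecessary and the source of a genuine gap.

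The gap: after contraction and the operator-norm step, your Chernoff bound on each shell $S_k$ is $\lesssim p\,e^{-cC_5}$, \emph{independent of $k$}. The union bound over shells therefore gives $K\cdot p\,e^{-cC_5}$, and your claim $K=O(\log NT)$ rests on an a-priori upper bound for $r(\Delta)$ justified by ``the NNR optimality inequality''. This is circular: the lemma is a uniform statement over all of $\mathcal{R}(3,C_2)$, not merely at the NNR solution, so there is no upper bound on $r(\Delta)$ and $K=\infty$. Even accepting a polynomial a-priori bound, $K\asymp\log(NT)\to\infty$ and $\log(NT)\,e^{-cC_5}$ diverges for any fixed $C_5$, which defeats the $O_p(a_{NT})$ conclusion.

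The paper sidesteps this entirely by applying Lemma~\ref{Lem:symmetrization} once, directly to the \emph{normalized} class $\{\tilde\rho_{it}(\Delta)/r(\Delta):\Delta\in\mathcal{R}(3,C_2)\}$ with a single threshold $C_5 a_{NT}$. The key observation—which your own computation rediscovers when you note that ``the factor $2^k r_0$ cancels''—is that after contraction and Lemma~\ref{Lem:delta nuclear}, the symmetrized process obeys
\[
\sup_{\Delta\in\mathcal{R}(3,C_2)}\frac{1}{NT}\frac{\bigl|\sum_{i,t} u_{it}\,a_{it}^*(\Delta)\bigr|}{r(\Delta)}
\;\le\; \frac{c_8}{NT}\Bigl(\|U\|_{op}+\sum_{j=1}^{p}\|U\odot x_j^*\|_{op}\Bigr),
\]
a bound that is \emph{uniform in $\Delta$}. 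One exponential-Markov application of Lemma~\ref{Lem:exp tree} then yields a single tail bound $\lesssim e^{-cC_5}$ with no sum over shells. Your variance calculation for $\beta_{NT}$ transfers verbatim to the normalized class (the paper gets $\beta_{NT}\ge 1-O(\xi_N^2/((N\vee T)\log(N\vee T)))\to1$). Your one-step use of contraction on the full map $v\mapsto\rho_\tau(\epsilon^*-v)-\rho_\tau(\epsilon^*)$ is in fact slightly cleaner than the paper's two-piece split $\rho_{it}=\tau\,a_{it}+\phi_{it}(a_{it})$ with $\phi_{it}(u)=(\epsilon^*_{it}-u)^{-}-(\epsilon^*_{it})^{-}$; both are legitimate since the map is a contraction vanishing at $0$ and depends only on the tree node $\epsilon^*_{it}(u_{1:l-1})$, exactly as Lemma~\ref{Lem:contraction} allows.

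In short: drop the peeling and apply symmetrization once to the normalized process; the rest of your outline then goes through as written.
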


\begin{proof}
Let $n=NT$ and for $l\in [n]$, $Z_{l}=(\{X_{j,it}\}_{j\in
[p]},\epsilon_{it}) $ with $i=\lceil \frac{l}{T}\rceil $ and $%
t=l-\left(i-1\right) T$, and $\mathscr{F}_{l}=\mathscr{G}_{it}$. Then, $%
Z_{l} $ is adapted to the filtration $\{\mathscr{F}_{l}\}_{l\in [n]}$. Lemma %
\ref{Lem:symmetrization} implies 
\begin{align}
& \beta_{NT}\mathbb{P}\left\{ \sup_{\left\{
\Delta_{\Theta_{j}}\right\}_{j=0}^{p}\in \mathcal{R}\left(3,C_{2}\right)
}\left\vert \frac{1}{NT}\sum_{i=1}^{N}\sum_{t=1}^{T}\frac{\tilde{\rho}%
_{it}\left(\left\{
\Delta_{\Theta_{j},it},X_{j,it}\right\}_{j=0}^{p},\epsilon_{it}\right) }{%
\sum_{j=0}^{p}\left\Vert \Delta_{\Theta_{j}}\right\Vert_{F}}%
\right\vert>C_{5}a_{NT}\right\}  \notag  \label{Lem11.1} \\
\leq & 2\sup_{x^{\ast },\epsilon ^{\ast }}\mathbb{P}\left\{
\sup_{\left\{\Delta_{\Theta_{j}}\right\}_{j=0}^{p}\in \mathcal{R}%
\left(3,C_{2}\right)}\left\vert \frac{1}{NT}\sum_{i=1}^{N}\sum_{t=1}^{T}%
\frac{u_{it}\rho_{it}\left(\left\{ \Delta_{\Theta_{j},it},x_{j,it}^{\ast
}\right\}_{j=0}^{p},\epsilon_{it}^{\ast }\right) }{\sum_{j=0}^{p}\left\Vert
\Delta_{\Theta_{j}}\right\Vert_{F}}\right\vert >\frac{C_{5}a_{NT}}{4}%
\right\} ,
\end{align}%
for some positive constant $C_{5}$, where the outer supremum on the RHS of
the above display is taken over $[-\xi_{N},\xi_{N}]^{p}\times \mathbb{R}$%
-valued trees with depth $n$ and 
\begin{equation*}
\beta_{NT}=1-\sup_{\left\{ \Delta_{\Theta_{j}}\right\}_{j=0}^{p}\in\mathcal{R%
}\left(3,C_{2}\right) }\frac{4}{C_{5}^{2}\left(NT\right)^{2}a_{NT}^{2}}%
\sum_{(i,t)\in [N]\times [T]}Var\left(\frac{\tilde{\rho}_{it}\left(\left\{
\Delta_{\Theta_{j},it},X_{j,it}\right\}_{j=0}^{p},\epsilon_{it}\right) }{%
\sum_{j=0}^{p}\left\Vert \Delta_{\Theta_{j}}\right\Vert_{F}}\bigg |%
\mathscr{G}_{i,t-1}\right) .
\end{equation*}%
We first note 
\begin{align*}
&\sum_{(i,t)\in [N]\times [T]}Var\left(\frac{\tilde{\rho}_{it}\left(\left\{
\Delta_{\Theta_{j},it},X_{j,it}\right\}_{j=0}^{p},\epsilon_{it}\right) }{%
\sum_{j=0}^{p}\left\Vert \Delta_{\Theta_{j}}\right\Vert_{F}}\bigg |%
\mathscr{G}_{i,t-1}\right) \\
& \leq\sum_{(i,t)\in [N]\times [T]}\mathbb{E}\left[ \left(\frac{%
\rho_{it}\left(\left\{
\Delta_{\Theta_{j},it},X_{j,it}\right\}_{j=0}^{p},\epsilon_{it}\right) }{%
\sum_{j=0}^{p}\left\Vert \Delta_{\Theta_{j}}\right\Vert_{F}}\right) ^{2}%
\bigg |\mathscr{G}_{i,t-1}\right] \\
& \leq 2\sum_{(i,t)\in [N]\times [T]}\left(\frac{\Delta_{\Theta_{0},it}+%
\sum_{j=1}^{p}X_{j,it}\Theta_{j,it}}{\sum_{j=0}^{p}\left\Vert
\Delta_{\Theta_{j}}\right\Vert_{F}}\right) ^{2} \\
& \lesssim \sum_{i=1}^{N}\sum_{t=1}^{T}\frac{\sum_{j=1}^{p}\left(X_{j,it}%
\Delta_{\Theta_{j},it}\right) ^{2}+\Delta_{\Theta_{0},it}^{2}}{%
\sum_{j=0}^{p}\left\Vert \Delta_{\Theta_{j}}\right\Vert_{F}^{2}}\leq
c_{10}\xi_{N}^{2}
\end{align*}%
with some positive constant $c_{10}$, where the first inequality holds by
Jensen inequality, the second inequality is by $\left\vert
\rho_{\tau}(u)-\rho_{\tau }(v)\right\vert \leq 2\left\vert u-v\right\vert $,
and the last line holds by Assumption \ref{ass:1}(iv) and the fact that $%
\left(\sum_{j=0}^{p}\left\Vert
\Delta_{\Theta_{j}}\right\Vert_{F}\right)^{2}=O\left(\sum_{j=0}^{p}\left%
\Vert \Delta_{\Theta_{j}}\right\Vert_{F}^{2}\right) $. Therefore, we have 
\begin{align}
\beta_{NT}& \geq 1-\sup_{\left\{ \Delta_{\Theta_{j}}\right\}_{j=0}^{p}\in 
\mathcal{R}\left(3,C_{2}\right) }\frac{4}{C_{5}^{2}\left(NT\right)
^{2}a_{NT}^{2}}\sum_{(i,t)\in [N]\times [T]}Var\left(\frac{\tilde{\rho}%
_{it}\left(\left\{
\Delta_{\Theta_{j},it},X_{j,it}\right\}_{j=0}^{p},\epsilon_{it}\right) }{%
\sum_{j=0}^{p}\left\Vert \Delta_{\Theta_{j}}\right\Vert_{F}}\right)  \notag
\\
& \geq 1-\sup_{\left\{ \Delta_{\Theta_{j}}\right\}_{j=0}^{p}\in \mathcal{R}%
\left(3,C_{2}\right) }\frac{4c_{10}\xi_{N}^{2}}{C_{5}^{2}\left(N\vee
T\right) \log \left(N\vee T\right) }  \notag \\
& \geq 1-O\left(\frac{\xi_{N}^{2}}{\left(N\vee T\right) \log (N\vee T)}%
\right) \rightarrow 1,
\end{align}
where the last line is by Assumption \ref{ass:1}(ix).

Define 
\begin{align*}
& \mathcal{A}_{0}=\sup_{\left\{ \Delta_{\Theta_{j}}\right\}_{j=0}^{p}\in%
\mathcal{R}\left(3,C_{2}\right) }\left\vert \frac{1}{NT}\sum_{i=1}^{N}%
\sum_{t=1}^{T}\frac{u_{it}\Delta_{\Theta_{0},it}}{\sum_{j=0}^{p}\left\Vert%
\Delta_{\Theta_{j}}\right\Vert_{F}}\right\vert , \\
& \mathcal{A}_{j}=\sup_{\left\{ \Delta_{\Theta_{j}}\right\}_{j=0}^{p}\in%
\mathcal{R}\left(3,C_{2}\right) }\left\vert \frac{1}{NT}\sum_{i=1}^{N}%
\sum_{t=1}^{T}\frac{u_{it}x_{j,it}^{\ast }\Delta_{\Theta_{j},it}}{%
\sum_{j=0}^{p}\left\Vert \Delta_{\Theta_{j}}\right\Vert_{F}}\right\vert
,\quad\forall j\in [p], \\
& \mathcal{A}_{p+1}=\sup_{\left\{ \Delta_{\Theta_{j}}\right\}_{j=0}^{p}\in 
\mathcal{R}\left(3,C_{2}\right) }\left\vert \frac{\frac{1}{NT}%
\sum_{i=1}^{N}\sum_{t=1}^{T}u_{it}\phi_{it}\left(\Delta_{\Theta_{0},it}+%
\sum_{j=1}^{p}x_{j,it}^{\ast }\Delta_{\Theta_{j},it}\right) }{%
\sum_{j=0}^{p}\left\Vert \Delta_{\Theta_{j}}\right\Vert_{F}}\right\vert ,
\end{align*}%
where $\phi_{it}\left(u\right) =\left(\epsilon_{it}^{\ast
}-u\right)^{-}-\left(\epsilon_{it}^{\ast }\right) ^{-}$. Notice that 
\begin{equation*}
\rho_{it}\left(\left\{ \Delta_{\Theta_{j},it},x_{j,it}^{\ast
}\right\}_{j=0}^{p},\epsilon_{it}^{\ast }\right) =\tau
\left(\Delta_{\Theta_{0},it}+\sum_{j=1}^{p}x_{j,it}^{\ast
}\Delta_{\Theta_{j},it}\right)
+\phi_{it}\left(\Delta_{\Theta_{0},it}+\sum_{j=1}^{p}x_{j,it}^{\ast
}\Delta_{\Theta_{j},it}\right) ,
\end{equation*}%
we obtain that 
\begin{align}
& \sup_{x^{\ast },\epsilon ^{\ast }}\mathbb{P}\left\{
\sup_{\left\{\Delta_{\Theta_{j}}\right\}_{j=0}^{p}\in \mathcal{R}%
\left(3,C_{2}\right)}\left\vert \frac{1}{NT}\sum_{i=1}^{N}\sum_{t=1}^{T}%
\frac{u_{it}\rho_{it}\left(\left\{ \Delta_{\Theta_{j},it},x_{j,it}^{\ast
}\right\}_{j=0}^{p},\epsilon_{it}^{\ast }\right) }{\sum_{j=0}^{p}\left\Vert
\Delta_{\Theta_{j}}\right\Vert_{F}}\right\vert >\frac{C_{5}a_{NT}}{4}\right\}
\notag  \label{Lem11.2} \\
\leq & \sum_{j=0}^{p}\sup_{x^{\ast }}\mathbb{P}\left\{ \tau \mathcal{A}_{j}>%
\frac{C_{5}a_{NT}}{4\left(p+2\right) }\right\} +\sup_{x^{\ast
},\epsilon^{\ast }}\mathbb{P}\left\{ \mathcal{A}_{p+1}>\frac{C_{5}a_{NT}}{%
4\left(p+2\right) }\right\} .
\end{align}

We first bound $\mathcal{A}_{j}$ for $j\in [p]$. We have 
\begin{align}
\mathcal{A}_{j}& =\frac{1}{NT}\sup_{\left\{
\Delta_{\Theta_{j}}\right\}_{j=0}^{p}\in \mathcal{R}\left(3,C_{2}\right)
}\left\vert\sum_{i=1}^{N}\sum_{t=1}^{T}\frac{u_{it}x_{j,it}^{\ast
}\Delta_{\Theta_{j},it}}{\sum_{j=0}^{p}\left\Vert
\Delta_{\Theta_{j}}\right\Vert_{F}}\right\vert \leq \frac{1}{NT}%
\sup_{\left\{ \Delta_{\Theta_{j}}\right\}_{j=0}^{p}\in \mathcal{R}%
\left(3,C_{2}\right) }\frac{\left\vert tr\left[\Delta_{\Theta_{j}}^{\prime}%
\left(U\odot x_{j}^{\ast }\right) \right]\right\vert }{\sum_{j=0}^{p}\left%
\Vert \Delta_{\Theta_{j}}\right\Vert_{F}}  \notag \\
& \leq \frac{1}{NT}\left\Vert U\odot x_{j}^{\ast
}\right\Vert_{op}\sup_{\left\{ \Delta_{\Theta_{j}}\right\}_{j=0}^{p}\in 
\mathcal{R}\left(3,C_{2}\right) }\frac{\left\Vert
\Delta_{\Theta_{j}}\right\Vert_{\ast }}{\sum_{j=0}^{p}\left\Vert
\Delta_{\Theta_{j}}\right\Vert_{F}}\leq \frac{c_{8}}{NT}\left\Vert U\odot
x_{j}^{\ast }\right\Vert_{op},  \label{Lem11.4}
\end{align}%
where the first inequality holds by $tr(AB)\leq \left\Vert
A\right\Vert_{op}\left\Vert B\right\Vert_{\ast }$ and the second inequality
holds by Lemma \ref{Lem:delta nuclear}. Then 
\begin{align}
\sup_{x^{\ast },\epsilon ^{\ast }}\mathbb{P}\left\{ \tau \mathcal{A}_{j}>%
\frac{C_{5}a_{NT}}{4\left(p+2\right) }\right\} & \leq \sup_{x^{\ast }}%
\mathbb{P}\left\{ \frac{\left\Vert U\odot x_{j}^{\ast }\right\Vert }{\sqrt{%
\left(N\vee T\right) \log \left(N\vee T\right) }}>\frac{C_{5}}{%
4c_{8}\tau\left(p+2\right) }\right\}  \notag \\
& \leq \sup_{x^{\ast }}\left\{ \exp \left(-\frac{C_{5}}{4c_{8}\tau
\left(p+2\right) }\right) \mathbb{E}\left[ \exp \left(\frac{\left\Vert
U\odot x_{j}^{\ast }\right\Vert }{\sqrt{\left(N\vee T\right) \log
\left(N\vee T\right) }}\right) \right] \right\}  \notag \\
& \leq C\exp \left(-\frac{C_{5}}{4c_{8}\tau \left(p+2\right) }\right)
\label{eq:Aj}
\end{align}%
for some absolute constant $C$ that independent of $(x^{\ast
},\epsilon^{\ast })$, where the last inequality holds by Lemma \ref{Lem:exp
tree}. Similarly, we can establish 
\begin{equation}
\sup_{x^{\ast },\epsilon ^{\ast }}\mathbb{P}\left\{ \tau \mathcal{A}_{0}>%
\frac{C_{5}a_{NT}}{4\left(p+2\right) }\right\} \leq C\exp \left(-\frac{C_{5}%
}{4c_{8}\tau \left(p+2\right) }\right) .  \label{eq:A0}
\end{equation}

% For $\mathcal{A}_{0}$, it is clear that
% \begin{align}
% 	\label{Lem11.3}
% 	\mathcal{A}_{0}&=\frac{1}{NT}\sup_{\left\{\Delta_{\Theta_{j}}\right\}_{j=0}^{p}\in \mathcal{R}\left(C_{1},C_{2}\right)}\frac{\left\vert tr\left(U^{\prime}\Delta_{\Theta_{0}}\right)\right\vert}{\sum_{j=0}^{p}\left\Vert\Delta_{\Theta_{j}} \right\Vert_{F}}\notag\\
% 	&\leq \frac{1}{NT}\left\Vert U \right\Vert_{op}\sup_{\left\{\Delta_{\Theta_{j}}\right\}_{j=0}^{p}\in \mathcal{R}\left(C_{1},C_{2}\right)} \frac{\left\Vert \Delta_{\Theta_{0}} \right\Vert_{\ast}}{\sum_{j=0}^{p}\left\Vert\Delta_{\Theta_{j}} \right\Vert_{F}}\notag\\
% 	&\leq \frac{c_{8}}{NT}\left\Vert U \right\Vert_{op},
% \end{align}

% and result is by the fact that the operator norm of $u$ is bounded by a positive constant times  $\sqrt{N\vee T}$ with high probability in Corollary 2.3.5, \cite{tao2012topics}. For $\mathcal{A}_{j}$ with $j\in[p]$, Lemma \ref{Lem:exp tree} implies 

% with positive constants $c_{\mathcal{A}_{j}}$ for $\forall j\in[p]$.

Next, we turn to $\mathcal{A}_{p+1}$. We have 
\begin{equation}
\sup_{x^{\ast },\epsilon ^{\ast }}\mathbb{P}\left\{ \mathcal{A}_{p+1}>\frac{%
C_{5}a_{NT}}{4\left(p+2\right) }\right\} \leq \exp \left(-\frac{C_{5}}{%
4c_{8}(p+1)\left(p+2\right) }\right) \sup_{x^{\ast },\epsilon ^{\ast }}%
\mathbb{E}\left\{ \exp \left(\frac{NT\mathcal{A}_{p+1}}{c_{8}(p+1)\sqrt{%
\left(N\vee T\right) \log \left(N\vee T\right) }}\right) \right\} .
\label{Lem11.5}
\end{equation}%
Because $\phi_{it}(\cdot)$ is a contraction, Lemma \ref{Lem:contraction}
implies 
\begin{align}
& \mathbb{E}\left\{ \exp \left(\frac{NT\mathcal{A}_{p+1}}{c_{8}(p+1)\sqrt{%
\left(N\vee T\right) \left(\log (N\vee T)\right) }}\right) \right\}  \notag
\label{Lem11.6} \\
& =\mathbb{E}\left\{ \exp \left[ \frac{1}{c_{8}(p+1)\sqrt{\left(N\vee
T\right) \left(\log (N\vee T)\right) }}\sup_{\left\{
\Delta_{\Theta_{j}}\right\}_{j=0}^{p}\in \mathcal{R}\left(3,C_{2}\right)
}\left\vert\frac{\sum_{i=1}^{N}\sum_{t=1}^{T}u_{it}\phi_{it}\left(\Delta_{%
\Theta_{0},it}+\sum_{j=1}^{p}x_{j,it}^{\ast }\Delta_{\Theta_{j},it}\right) }{%
\sum_{j=0}^{p}\left\Vert \Delta_{\Theta_{j}}\right\Vert_{F}}\right\vert %
\right] \right\}  \notag \\
& \leq \mathbb{E}\left\{ \exp \left[ \frac{1}{c_{8}(p+1)\sqrt{\left(N\vee
T\right) \left(\log (N\vee T)\right) }}\sup_{\left\{
\Delta_{\Theta_{j}}\right\}_{j=0}^{p}\in \mathcal{R}\left(3,C_{2}\right)
}\left\vert\sum_{i=1}^{N}\sum_{t=1}^{T}\frac{u_{it}(\Delta_{\Theta_{0},it}+%
\sum_{j=1}^{p}x_{j,it}^{\ast }\Delta_{\Theta_{j},it})}{\sum_{j=0}^{p}\left%
\Vert \Delta_{\Theta_{j}}\right\Vert_{F}}\right\vert \right] \right\}  \notag
\\
& \leq \mathbb{E}\left\{ \exp \left[ \frac{\left(\left\Vert
U\right\Vert_{op}+\sum_{j\in [p]}\left\Vert U\odot x_{j}^{\ast
}\right\Vert_{op}\right) }{(p+1)\sqrt{\left(N\vee T\right) \left(\log (N\vee
T)\right) }}\right] \right\}  \notag \\
& \leq \mathbb{E}\left\{ \exp \left[ \frac{\left(\left\Vert
U\right\Vert_{op}\right) }{\sqrt{\left(N\vee T\right) \left(\log (N\vee
T)\right) }}\right] \right\} ^{1/(1+p)}\Pi_{j\in [p]}\left[ \mathbb{E}%
\left\{\exp \left[ \frac{\left(\left\Vert U\odot x_{j}^{\ast
}\right\Vert_{op}\right) }{\sqrt{\left(N\vee T\right) \left(\log (N\vee
T)\right) }}\right] \right\} ^{1/(1+p)}\right] \leq C
\end{align}%
for some absolute constant $C$, where the first inequality is by Lemma \ref%
{Lem:contraction}, the second inequality is by (\ref{Lem11.4}), the third
inequality is due to the fact that, for random variables $\{A_{i}\}_{i\in[p+1%
]}$, 
\begin{equation*}
\mathbb{E}(\Pi_{i\in [p+1]}|A_{i}|)\leq \Pi_{i\in [p+1]}[\mathbb{E}%
|A_{i}^{1+p}|]^{1/(1+p)},
\end{equation*}%
and the final inequality is by Lemma \ref{Lem:exp tree} with an absolute
constant $C$ that is independent of $(x^{\ast },\epsilon ^{\ast })$.

Combining (\ref{Lem11.5}) and (\ref{Lem11.6}), we have 
\begin{equation*}
\sup_{x^{\ast },\epsilon ^{\ast }}\mathbb{P}\left\{ \mathcal{A}_{p+1}>\frac{%
C_{5}a_{NT}}{4\left(p+2\right) }\right\} \leq C\exp \left(-\frac{C_{5}}{%
4c_{8}\left(p+1\right) (P+2)}\right) ,
\end{equation*}%
which, combined with \eqref{Lem11.1}, \eqref{Lem11.2}, \eqref{eq:Aj}, and %
\eqref{eq:A0}, further implies that 
\begin{align*}
& \mathbb{P}\left\{ \sup_{\left\{ \Delta_{\Theta_{j}}\right\}_{j=0}^{p}\in 
\mathcal{R}\left(3,C_{2}\right) }\left\vert \frac{1}{NT}\sum_{i=1}^{N}%
\sum_{t=1}^{T}\frac{\tilde{\rho}_{it}\left(\left\{
\Delta_{\Theta_{j},it},X_{j,it}\right\}_{j=0}^{p},\epsilon_{it}\right) }{%
\sum_{j=0}^{p}\left\Vert \Delta_{\Theta_{j}}\right\Vert_{F}}%
\right\vert>C_{5}a_{NT}\right\} \\
& \leq 2\beta_{NT}^{-1}\sup_{x^{\ast },\epsilon ^{\ast }}\mathbb{P}%
\left\{\sup_{\left\{ \Delta_{\Theta_{j}}\right\}_{j=0}^{p}\in \mathcal{R}%
\left(3,C_{2}\right) }\left\vert \frac{1}{NT}\sum_{i=1}^{N}\sum_{t=1}^{T}%
\frac{u_{it}\rho_{it}\left(\left\{
\Delta_{\Theta_{j},it},x_{j,it}^{\ast}\right\}_{j=0}^{p},\epsilon_{it}^{\ast
}\right) }{\sum_{j=0}^{p}\left\Vert \Delta_{\Theta_{j}}\right\Vert_{F}}%
\right\vert >\frac{C_{5}a_{NT}}{4}\right\} \\
& \leq C(1+o(1))\left[ (p+1)\exp \left(-\frac{C_{5}}{4c_{8}\tau
\left(p+2\right) }\right) +\exp \left(-\frac{C_{5}}{4c_{8}(p+1)\left(p+2%
\right) }\right) \right] .
\end{align*}%
The RHS of the last inequality converges to zero as $C_{5}\rightarrow \infty$%
, which implies that 
\begin{equation*}
\sup_{\left\{ \Delta_{\Theta_{j}}\right\}_{j=0}^{p}\in \mathcal{R}%
\left(3,C_{2}\right) }\frac{\left\vert \frac{1}{NT}\sum_{i=1}^{N}%
\sum_{t=1}^{T}\tilde{\rho}_{it}\left(\left\{
\Delta_{\Theta_{j},it},X_{j,it}\right\}_{j=0}^{p},\epsilon_{it}\right)
\right\vert }{\sum_{j=0}^{p}\left\Vert\Delta_{\Theta_{j}}\right\Vert_{F}}%
=O_{p}\left(a_{NT}\right) .
\end{equation*}
\end{proof}

\subsection{Lemmas for the Proof of Theorem \protect\ref{Thm2}}

\begin{lemma}
{\small \label{Lem:Bern} } Let $\left\{\Upsilon_{t}, t=1,\cdots,T\right\}$
be a zero-mean strong mixing process, not necessarily stationary, with the
mixing coefficients satisfying $\alpha(z)\leq c_{\alpha}\rho^{z}$ for some $%
c_{\alpha}>0$ and $\rho\in(0,1)$. If $\sup_{1\leq t\leq T}\left\vert
\Upsilon_{t}\right\vert\leq M_{T}$, then there exist a constant $c_{9}$
depending on $c_{\alpha}$ and $\rho$ such that for any $T\geq 2$ and $d>0$,

\begin{itemize}
\item[(i)] $\mathbb{P}\left\{ \left\vert
\sum_{t=1}^{T}\Upsilon_{t}\right\vert > d \right\}\leq \exp\left\{ -\frac{%
c_{9}d^{2}}{M_{T}^{2}T+dM_{T}\left(\log T\right)\left(\log\log T\right)}%
\right\}$.

\item[(ii)] $\mathbb{P}\left\{ \left\vert
\sum_{t=1}^{T}\Upsilon_{t}\right\vert > d \right\}\leq \exp\left\{ -\frac{%
c_{9}d^{2}}{\upsilon_{0}^{2}T+M_{T}^{2}+d M_{T}\left(\log T\right)^{2}}%
\right\}$,
\end{itemize}

with $\upsilon_{0}^{2}=\sup_{t\in[T]}\left[Var(\Upsilon_{t})+2\sum_{s>t}%
\left\vert Cov(\Upsilon_{t},\Upsilon_{s}) \right\vert \right]$.
\end{lemma}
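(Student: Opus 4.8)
The plan is to derive both inequalities as Bernstein-type bounds for bounded, geometrically strong-mixing sequences, via the classical big-block/small-block decoupling argument (in the spirit of Bosq (1998) and Merlevède, Peligrad and Rio (2009, 2011)), specialized to the geometric rate $\alpha(z)\le c_{\alpha}\rho^{z}$. The skeleton is: (a) partition $\{1,\dots,T\}$ into alternating ``big'' blocks of length $p$ and ``small'' blocks of length $q$, giving on the order of $T/(p+q)$ blocks of each type plus at most one incomplete block; (b) replace the sequence of big-block partial sums by an independent sequence with identical one-dimensional marginals, using a coupling/approximation device for strong mixing (Bradley's coupling lemma, or a Volkonskii--Rozanov-type bound applied to the moment generating function), at the cost of a total-variation/approximation error of order $c_{\alpha}T\rho^{q}$; (c) do the same for the small-block partial sums; and (d) apply the ordinary Bernstein inequality for bounded independent summands to each of the two independent sequences and recombine.

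For step (d) I would use two facts. First, each big-block sum $\sum_{t\in B}\Upsilon_{t}$ has range at most $pM_{T}$. Second, by Davydov's covariance inequality for bounded variables, $|\mathrm{Cov}(\Upsilon_{t},\Upsilon_{s})|\le CM_{T}^{2}\alpha(|t-s|)$, so that $\mathrm{Var}\big(\sum_{t\in B}\Upsilon_{t}\big)=\sum_{t,s\in B}\mathrm{Cov}(\Upsilon_{t},\Upsilon_{s})$ is at most $CpM_{T}^{2}$ (enough for part (i)) and, more sharply, at most $p\sup_{t}\big[\mathrm{Var}(\Upsilon_{t})+2\sum_{s>t}|\mathrm{Cov}(\Upsilon_{t},\Upsilon_{s})|\big]=p\,\upsilon_{0}^{2}$ (needed for part (ii)). Summing over the $\asymp T/(p+q)$ big blocks gives an aggregate variance proxy of order $M_{T}^{2}T$ (resp. $\upsilon_{0}^{2}T$), together with a single $O(M_{T}^{2})$ boundary/coupling term in the refined case, and an effective per-summand bound of order $pM_{T}$; Bernstein then yields an exponent of the form $-C d^{2}/(\sigma^{2}+pM_{T}d)$. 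Choosing $q$ to be a sufficiently large multiple of $\log T$ (so that $c_{\alpha}T\rho^{q}$ is polynomially small and can be absorbed) and then optimizing $p$ over the admissible range produces the $(\log T)(\log\log T)$ factor in part (i) and the $(\log T)^{2}$ factor in part (ii). Alternatively, part (ii) follows directly from the recursive (Cantor-type) Bernstein inequality of Merlevède, Peligrad and Rio (2011), which I would cite as a ready-made statement, while part (i) is the coarser and more classical version (e.g. Bosq (1998, Theorem 1.3)).

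The main obstacle is the bookkeeping that keeps the bound valid \emph{uniformly in $d>0$}: one must verify that the decoupling errors, the small-block contribution, and the incomplete boundary blocks are all absorbed without inflating the exponent, and that the constants coming from the two independent pieces and from Davydov's inequality collapse into a single constant $c_{9}$ depending only on $c_{\alpha}$ and $\rho$. For part (ii) the delicate point is specifically to route the block-variance estimate through $\upsilon_{0}^{2}$ rather than through the crude $M_{T}^{2}T$, which is what forces the slightly larger $(\log T)^{2}$ mixing factor and the extra $M_{T}^{2}$ term; obtaining the cleanest constants at this step is exactly what the recursive Merlevède--Peligrad--Rio argument is designed for, so invoking that result is the route I would ultimately take in the write-up, with the blocking argument above serving as the self-contained fallback.
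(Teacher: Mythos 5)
Your ultimate route --- citing Merlev\`ede, Peligrad and Rio as a ready-made Bernstein bound for bounded, geometrically $\alpha$-mixing sequences --- coincides with the paper's, which simply invokes Theorems~1 and~2 of \cite{merlevede2009bernstein} after rewriting $\alpha(z)\le c_{\alpha}\rho^{z}$ as $\alpha(z)\le e^{-2cz}$ with $c_{\alpha}=1$, $\rho=e^{-2c}$. One small caution on your references: both parts appear verbatim in that 2009 paper (Theorem~1 gives part (i), Theorem~2 gives part (ii)), whereas the classical big/small-block scheme you sketch as a fallback, and Bosq (1998, Theorem~1.3) which you cite for part (i), produce a coarser logarithmic factor --- the $(\log T)(\log\log T)$ term in part (i) is exactly the refinement furnished by Merlev\`ede--Peligrad--Rio's recursive (Cantor-type) construction, so it is cleaner to cite their Theorem~1 directly rather than try to extract it from the alternating-block argument.
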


\begin{proof}
The proof is the same as that of Theorems 1 and 2 in \cite%
{merlevede2009bernstein}) with the condition assumed $\alpha (a)\leq
\exp\left\{ -2ca\right\} $ for some $c>0$ changed with $c_{\alpha }=1$ and $%
\rho=\exp \left\{ -2c\right\} $ in our lemma instead.
\end{proof}

% Recall event $\mathscr{A}_{2}=\left\{\frac{1}{\sqrt{NT}}\left\Vert\tilde{\Theta}_{j}-\Theta_{j}^{0}\right\Vert_{F}=O\left(\eta_{N}\right),\quad\text{for}\quad \forall j\in[p]\cup\{0\} \right\}$ where $\eta_{N}=\frac{\sqrt{\log N\vee T}\xi_{N}^{2 }}{\sqrt{N\wedge T}}$.

\begin{lemma}
{\small \label{Lem:bounded u&v_tilde} } Suppose Assumptions \ref{ass:1}-\ref%
{ass:4} hold. Then, for $j\in\left\{0,\cdots,p\right\}$, we have

\begin{itemize}
\item[(i)] $\max_{i\in [N]}\left\Vert u_{i,j}^{0} \right\Vert_{2}\leq M$ and 
$\max_{t\in[T]}\left\Vert v_{t,j}^{0} \right\Vert_{2}\leq \frac{M}{%
\sigma_{K_{j},j}}\leq \frac{M}{c_{\sigma}}$.

\item[(ii)] $\max_{t\in[T]}\left\Vert O_{j}^{\prime}\tilde{v}%
_{t,j}\right\Vert_{2}\leq \frac{2M}{\sigma_{K_{j},j}}\leq \frac{2M}{%
c_{\sigma}}$ and $\max_{t\in[T]}\left\Vert O_{j}^{(1)\prime}\tilde{v}%
_{t,j}^{(1)}\right\Vert_{2}\leq \frac{2M}{\sigma_{K_{j},j}}\leq \frac{2M}{%
c_{\sigma}}$ w.p.a.1.

\item[(iii)] $\max_{i\in I_{2}}\frac{1}{T}\sum_{t\in[T]}\left\Vert \tilde{%
\phi}_{it}^{(1)}\right\Vert_{2}^{2}\leq \frac{4M^{2}}{c_{\sigma }}+\frac{%
4M^{2}pC}{c_{\sigma }}$ w.p.a.1.
\end{itemize}
\end{lemma}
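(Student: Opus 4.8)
The plan is to prove the three parts in order, deriving (iii) from (ii) and (ii) from an exact SVD identity combined with an operator‑norm bound on the nuclear‑norm regularized estimator; I expect (ii) to be the only substantive step.

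For part (i), I will use only the SVD normalizations and the uniform boundedness of $\Theta_j^0$. From $\Theta_j^0=\sqrt{NT}\,\mathcal U_j^0\Sigma_j^0\mathcal V_j^{0\prime}=U_j^0V_j^{0\prime}$ with $\mathcal U_j^{0\prime}\mathcal U_j^0=\mathcal V_j^{0\prime}\mathcal V_j^0=I_{K_j}$ one gets $\tfrac1T\sum_{t}v_{t,j}^0v_{t,j}^{0\prime}=I_{K_j}$, $\tfrac1N\sum_i u_{i,j}^0u_{i,j}^{0\prime}=(\Sigma_j^0)^2$, and $\Theta_{j,it}^0=u_{i,j}^{0\prime}v_{t,j}^0$. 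Multiplying the last identity by $v_{t,j}^0$ and averaging over $t$ gives $u_{i,j}^0=\tfrac1T\sum_t\Theta_{j,it}^0v_{t,j}^0$, so for any unit vector $a$, $a^\prime u_{i,j}^0\le\big(\tfrac1T\sum_t(\Theta_{j,it}^0)^2\big)^{1/2}\big(\tfrac1T\sum_t(a^\prime v_{t,j}^0)^2\big)^{1/2}\le\|\Theta_j^0\|_{\max}$ by Cauchy--Schwarz and $\tfrac1T\sum_t(a^\prime v_{t,j}^0)^2=\|a\|_2^2=1$; Assumption \ref{ass:2} then yields $\max_i\|u_{i,j}^0\|_2\le M$. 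Symmetrically, $v_{t,j}^0=(\Sigma_j^0)^{-2}\tfrac1N\sum_i\Theta_{j,it}^0u_{i,j}^0$, and the same step with $\tfrac1N\sum_i(a^\prime u_{i,j}^0)^2\le\sigma_{1,j}^2$ gives $\|v_{t,j}^0\|_2\le M\sigma_{1,j}/\sigma_{K_j,j}^2\le M/\sigma_{K_j,j}\le M/c_\sigma$, where $\sigma_{1,j}\le C_\sigma$ and $\sigma_{K_j,j}\ge c_\sigma$ (Assumption \ref{ass:3}) are absorbed into the generic $M$.

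For part (ii), note first that $\|O_j^\prime\tilde v_{t,j}\|_2=\|\tilde v_{t,j}\|_2$ and $\|O_j^{(1)\prime}\tilde v_{t,j}^{(1)}\|_2=\|\tilde v_{t,j}^{(1)}\|_2$ since the rotation matrices are orthogonal, so it suffices to bound $\max_t\|\tilde v_{t,j}^{(1)}\|_2$ (the full‑sample case being identical with $N$ in place of $N_1$). Let $\tilde{\mathcal U}_j^{(1)}$ and $\tilde\Sigma_j^{(1)}=\mathrm{diag}(\tilde\sigma_{1,j}^{(1)},\dots,\tilde\sigma_{K_j,j}^{(1)})$ denote the first $K_j$ columns of $\hat{\tilde{\mathcal U}}_j^{(1)}$ and the leading $K_j\times K_j$ block of $\hat{\tilde\Sigma}_j^{(1)}$. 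Premultiplying $\tfrac1{\sqrt{N_1T}}\tilde\Theta_j^{(1)}=\hat{\tilde{\mathcal U}}_j^{(1)}\hat{\tilde\Sigma}_j^{(1)}\hat{\tilde{\mathcal V}}_j^{(1)\prime}$ by $\tilde{\mathcal U}_j^{(1)\prime}$ and using orthonormality of the columns yields the exact identity $\tilde{\mathcal V}_j^{(1)\prime}=(\tilde\Sigma_j^{(1)})^{-1}\tilde{\mathcal U}_j^{(1)\prime}\tilde\Theta_j^{(1)}/\sqrt{N_1T}$, hence $\tilde v_{t,j}^{(1)}=\tfrac1{\sqrt{N_1}}(\tilde\Sigma_j^{(1)})^{-1}\tilde{\mathcal U}_j^{(1)\prime}\tilde\Theta_{j,\cdot t}^{(1)}$ for the $t$‑th column $\tilde\Theta_{j,\cdot t}^{(1)}$, and therefore $\|\tilde v_{t,j}^{(1)}\|_2\le\tfrac1{\sqrt{N_1}\,\tilde\sigma_{K_j,j}^{(1)}}\|\tilde\Theta_{j,\cdot t}^{(1)}\|_2$. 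Writing $\|\tilde\Theta_{j,\cdot t}^{(1)}\|_2\le\|\Theta_{j,\cdot t}^{0,(1)}\|_2+\|\tilde\Theta_j^{(1)}-\Theta_j^{0,(1)}\|_{op}$ (a column norm is bounded by the operator norm) and $\|\Theta_{j,\cdot t}^{0,(1)}\|_2\le\sqrt{N_1}\,\|\Theta_j^0\|_{\max}\le\sqrt{N_1}\,M$, the ``signal'' part contributes at most $M/\tilde\sigma_{K_j,j}^{(1)}$, which by Theorem \ref{Thm1}(ii) equals $M/\sigma_{K_j,j}+o_p(1)\le M/c_\sigma+o_p(1)$ w.p.a.1. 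The main obstacle is controlling $\tfrac1{\sqrt{N_1}}\|\tilde\Theta_j^{(1)}-\Theta_j^{0,(1)}\|_{op}$: the Frobenius bound of Theorem \ref{Thm1}(i) is too crude for this, so I would instead revisit the nuclear‑norm KKT/basic inequality behind Theorem \ref{Thm1} --- with the tuning parameter $\nu_j\asymp(\sqrt{N_1}\vee\sqrt{T\log T})/(N_1T)$ as in Lemma \ref{Lem:RS} and the operator‑norm bounds on the score matrices $X_j\odot a$ and $a$ from Lemma \ref{Lem:score op} --- to obtain $\|\tilde\Theta_j^{(1)}-\Theta_j^{0,(1)}\|_{op}=O_p(N_1T\nu_j)=O_p(\sqrt{N_1}\vee\sqrt{T\log T})$, so that $\tfrac1{\sqrt{N_1}}\|\tilde\Theta_j^{(1)}-\Theta_j^{0,(1)}\|_{op}=O_p(1\vee\sqrt{T\log T/N_1})$ is, under Assumption \ref{ass:1}(ix), a bounded quantity absorbed into the generic constant; the factor $2$ in the statement then accommodates this contribution together with the $o_p(1)$ slack in $\tilde\sigma_{K_j,j}^{(1)}$, giving $\max_t\|\tilde v_{t,j}^{(1)}\|_2\le 2M/\sigma_{K_j,j}\le 2M/c_\sigma$ w.p.a.1.

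Part (iii) then follows directly from (ii): by the definition of $\tilde\Phi_{it}^{(1)}$ (the quantity written $\tilde\phi_{it}^{(1)}$ in the statement), $\|\tilde\Phi_{it}^{(1)}\|_2^2=\|O_0^{(1)\prime}\tilde v_{t,0}^{(1)}\|_2^2+\sum_{j=1}^p X_{j,it}^2\|O_j^{(1)\prime}\tilde v_{t,j}^{(1)}\|_2^2\le\tfrac{4M^2}{c_\sigma^2}\big(1+\sum_{j=1}^p X_{j,it}^2\big)$ w.p.a.1 by part (ii); averaging over $t$ and using Assumption \ref{ass:1}(iv), which via Jensen's inequality gives $\max_{i\in[N]}\tfrac1T\sum_t X_{j,it}^2\le\big(\max_i\tfrac1T\sum_t\|X_{it}\|_2^3\big)^{2/3}\le M^{2/3}=:C$, yields $\max_{i\in I_2}\tfrac1T\sum_t\|\tilde\Phi_{it}^{(1)}\|_2^2\le\tfrac{4M^2}{c_\sigma^2}(1+pC)$, which is the stated bound after absorbing constants into the generic $M$.
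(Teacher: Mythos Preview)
Parts (i) and (iii) are essentially the same as the paper's proof; your Cauchy--Schwarz derivation of (i) is a slightly different packaging of the same SVD identities, and (iii) follows from (ii) plus Assumption~\ref{ass:1}(iv) exactly as in the paper.

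The genuine gap is in part (ii). The paper's argument is much simpler than the route you take: it observes that the nuclear-norm estimator $\tilde\Theta_j$ (and $\tilde\Theta_j^{(1)}$) lies in the \emph{bounded parameter space} $\{\|\Theta_j\|_{\max}\le M\}$ --- this is the content of ``the bounded parameter space where $\tilde\Theta_j$ lies in by Assumption~\ref{ass:2} and ADMM algorithm'' in the paper's proof. With $\|\tilde\Theta_j\|_{\max}\le M$ in hand, one has $\|[\tilde\Theta_j]_{\cdot t}\|_2\le\sqrt{N}\,M$ directly, and the SVD identity $\tilde\Sigma_j\tilde V_j^\prime=\tfrac{1}{\sqrt{N}}\tilde{\mathcal U}_j^\prime\tilde\Theta_j$ plus $\tilde\sigma_{K_j,j}^{-1}\le 2\sigma_{K_j,j}^{-1}$ w.p.a.1 (Theorem~\ref{Thm1}(ii)) immediately give the bound $2M/c_\sigma$. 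No error decomposition is needed; part (ii) is structurally identical to part (i).

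Your proposed alternative has two problems. First, the operator-norm bound $\|\tilde\Theta_j^{(1)}-\Theta_j^{0,(1)}\|_{op}=O_p(N_1T\nu_j)$ is not a consequence of the nuclear-norm KKT conditions or of Lemmas~\ref{Lem:score op}--\ref{Lem:RS}: the subgradient condition bounds the operator norm of the \emph{score} at the optimum, not of the error matrix $\tilde\Theta_j-\Theta_j^0$, and the paper only establishes a Frobenius-norm rate for the latter. Second, even if your claimed bound held, the normalized quantity $\tfrac{1}{\sqrt{N_1}}\|\tilde\Theta_j^{(1)}-\Theta_j^{0,(1)}\|_{op}=O_p\big(1\vee\sqrt{T\log T/N_1}\big)$ is \emph{not} bounded under Assumption~\ref{ass:1}(ix): take $N\asymp T$, which is permitted, and it becomes $\asymp\sqrt{\log T}\to\infty$. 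So the ``absorbed into the generic constant'' step fails, and the factor $2$ in the statement cannot accommodate a diverging contribution. The fix is simply to use the entrywise bound on $\tilde\Theta_j$ itself, as the paper does.
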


\begin{proof}
(i) Recall that $\frac{1}{\sqrt{NT}}\Theta_{j}^{0}=\mathcal{U}%
_{j}^{0}\Sigma_{j}^{0}\mathcal{V}_{j}^{0\prime},$ $U_{j}^{0}=\sqrt{N}%
\mathcal{U}_{j}^{0}\Sigma_{j}^{0}$ and $V_{j}=\sqrt{T}\mathcal{V}_{j}$. Then
we have 
\begin{equation}
\frac{1}{\sqrt{T}}\Theta_{j}^{0}\mathcal{V}_{j}^{0}=\sqrt{N}\mathcal{U}%
_{j}^{0}\Sigma_{j}^{0}=U_{j}^{0}\quad \text{and}\quad\frac{1}{\sqrt{N}}%
\mathcal{U}_{j}^{0\prime}\Theta_{j}^{0}=\sqrt{T}\Sigma_{j}^{0}\mathcal{V}%
_{j}^{0\prime}=\Sigma_{j}^{0}V_{j}^{0\prime}.  \label{Lem12.1}
\end{equation}%
Hence, it's natural to see that 
\begin{equation*}
\left\Vert u_{i,j}^{0}\right\Vert_{2}=\frac{1}{\sqrt{T}}\left\Vert \left[%
\Theta_{j}^{0}\mathcal{V}_{j}^{0}\right]_{i.}\right\Vert_{2}\leq \frac{1}{%
\sqrt{T}}\left\Vert \left[ \Theta_{j}\right]_{i.}\right\Vert_{2}\leq M,
\end{equation*}%
where the first inequality is due to the fact that $\mathcal{V}_{j}$ is the
unitary matrix and the last inequality holds by Assumption \ref{ass:2}.
Since the upper bound $M$ is not dependent on $i$, this result holds
uniformly in $i$. Analogously, 
\begin{equation*}
\left\Vert v_{t,j}^{0}\right\Vert_{2}\leq \frac{1}{\sqrt{N}}%
c_{\sigma}^{-1}\left\Vert \left[ \mathcal{U}_{j}^{0\prime}\Theta_{j}^{0}%
\right]_{.t}\right\Vert_{2}\leq \frac{1}{\sqrt{N}}c_{\sigma }^{-1}\left\Vert %
\left[\Theta_{j}^{0}\right]_{.t}\right\Vert_{2}\leq \frac{M}{c_{\sigma }}.
\end{equation*}

(ii) As in (\ref{Lem12.1}), we have 
\begin{equation*}
\frac{1}{\sqrt{N}}\tilde{\mathcal{U}}_{j}^{\prime}\tilde{\Theta}_{j}=\sqrt{T}%
\tilde{\Sigma}_{j}\tilde{\mathcal{V}}_{j}^{\prime}=\tilde{\Sigma}_{j}\tilde{V%
}_{j}^{\prime}.
\end{equation*}%
It follows that 
\begin{equation*}
\left\Vert O_{j}^{\prime}\tilde{v}_{t,j}\right\Vert_{2}\leq \frac{1}{\sqrt{N}%
}\tilde{\sigma}_{K_{j},j}^{-1}\left\Vert \left[ \tilde{\mathcal{U}}%
_{j}^{\prime}\tilde{\Theta}_{j}\right]_{.t}\right\Vert_{2}\leq \frac{1}{%
\sqrt{N}}\tilde{\sigma}_{K_{j},j}^{-1}\left\Vert \left[ \tilde{\Theta}_{j}%
\right]_{.t}\right\Vert_{2}\leq \frac{2M}{c_{\sigma }},
\end{equation*}%
where the last inequality holds due to the fact that $\max_{k\in
[K_{j}]}\left\vert \tilde{\sigma}_{k,j}^{-1}-\Sigma_{k,j}^{-1}\right\vert
\leq \Sigma_{K_{j},j}^{-1}$ w.p.a.1. and the bounded parameter space where $%
\tilde{\Theta}_{j}$ lies in by Assumption \ref{ass:2} and ADMM algorithm
proposed in the last section. The upper bound of $\max_{t\in[T]}\left\Vert
O_{j}^{(1)\prime}\tilde{v}_{t,j}^{(1)}\right\Vert_{2}$ follows the same
argument as above.

(iii) We observe that 
\begin{equation*}
\max_{i\in I_{2}}\frac{1}{T}\sum_{t\in[T]}\left\Vert \tilde{\Phi}%
_{it}^{(1)}\right\Vert_{2}^{2}\leq \frac{1}{T}\sum_{t\in [T]}\left\Vert
O_{0}^{\prime}\tilde{v}_{t,0}^{(1)}\right\Vert_{2}^{2}+\max_{i\in
I_{2}}\sum_{j\in [p]}\frac{1}{T}\sum_{t\in[T]}\left\Vert O_{j}^{\prime}%
\tilde{v}_{t,j}^{(1)}\right\Vert_{2}^{2}\left\vert X_{j,it}\right\vert
^{2}\leq \frac{4M^{2}}{c_{\sigma }^{2}}+\frac{4M^{2}pC}{c_{\sigma }^{2}}%
\quad \text{w.p.a.1},
\end{equation*}%
where the last inequality holds by Lemma \ref{Lem:bounded u&v_tilde}(ii) and
Assumption \ref{ass:1}(iv).
\end{proof}

\begin{lemma}
{\small \label{Lem:phi eig} } Under Assumptions \ref{ass:1}--\ref{ass:5}, we
have

\begin{itemize}
\item[(i)] $\operatornamewithlimits{\min}\limits_{i\in
I_{2}}\lambda_{\min}\left(\tilde{\Phi}_{i}^{(1)}\right)\geq \frac{c_{\phi}}{2%
}$, $\operatornamewithlimits{\max}\limits_{i\in I_{2}}\lambda_{\max}\left(%
\tilde{\Phi}_{i}^{(1)}\right)\leq 2C_{\phi}$ w.p.a.1,

\item[(ii)] For $\forall j\in[p]$, $\operatornamewithlimits{\max}%
\limits_{i\in I_{2}}\frac{1}{T}\sum_{t\in[T]}\left[X_{j,it}^{2}-\mathbb{E}%
\left(X_{j,it}^{2}\bigg|\mathscr{D}_{\left\{e_{is}\right\}_{s<t}}^{I_{1}}
\right)\right]\left\Vert \tilde{O}_{j}^{(1)\prime}\tilde{v}%
_{t,j}^{(1)}-v_{t,j}^{0}\right\Vert_{2}^{2}=O_{p}(\eta_{N}^{2})$,

\item[(iii)] $\operatornamewithlimits{\max}\limits_{i\in I_{2}}\frac{1}{T}%
\sum_{t\in[T]}\left\Vert\tilde{\Phi}_{it}^{(1)}-\Phi_{it}^{0}\right%
\Vert_{2}^{2}=O_p(\eta_{N}^{2})$.
\end{itemize}
\end{lemma}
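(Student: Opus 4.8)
The plan is to prove the three parts in reverse order of difficulty: part (ii) is the heart of the matter, part (iii) is a routine consequence of (ii), and part (i) follows from (iii). Here $\tilde O_j^{(1)}$ in the statement of (ii) should be read as the rotation matrix $O_j^{(1)}$ from Theorem \ref{Thm1}(iii).

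\textbf{Reductions.} Since the $j$-th block of $\tilde\Phi_{it}^{(1)}-\Phi_{it}^0$ equals $(O_j^{(1)\prime}\tilde v_{t,j}^{(1)}-v_{t,j}^0)X_{j,it}$ for $j\in[p]$ and $O_0^{(1)\prime}\tilde v_{t,0}^{(1)}-v_{t,0}^0$ for $j=0$, we have
\begin{equation*}
\frac1T\sum_{t\in[T]}\bigl\|\tilde\Phi_{it}^{(1)}-\Phi_{it}^0\bigr\|_2^2=\frac1T\sum_{t\in[T]}\bigl\|O_0^{(1)\prime}\tilde v_{t,0}^{(1)}-v_{t,0}^0\bigr\|_2^2+\sum_{j\in[p]}\frac1T\sum_{t\in[T]}\bigl\|O_j^{(1)\prime}\tilde v_{t,j}^{(1)}-v_{t,j}^0\bigr\|_2^2 X_{j,it}^2 .
\end{equation*}
The first term is $\frac1T\|V_0^0-\tilde V_0^{(1)}O_0^{(1)}\|_F^2=O_p(\eta_N^2)$ by Theorem \ref{Thm1}(iii) and does not depend on $i$. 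For each $j\in[p]$ write $X_{j,it}^2=\mathbb{E}(X_{j,it}^2\mid\mathscr{D}_{\{e_{is}\}_{s<t}}^{I_1})+g_{j,it}$; the contribution of the conditional-mean part is bounded uniformly in $i\in I_2$ by $M\,\frac1T\sum_t\|O_j^{(1)\prime}\tilde v_{t,j}^{(1)}-v_{t,j}^0\|_2^2=O_p(\eta_N^2)$, using Theorem \ref{Thm1}(iii) and the fact that, by cross-sectional conditional independence (Assumption \ref{ass:1}(i)), $\mathbb{E}(X_{j,it}^2\mid\mathscr{D}_{\{e_{is}\}_{s<t}}^{I_1})=\mathbb{E}(X_{j,it}^2\mid\mathscr{D}_{\{e_{is}\}_{s<t}})\le M$ a.s.\ for $i\in I_2$ by Assumption \ref{ass:1}(iv); the contribution of $g_{j,it}$ is precisely the quantity in part (ii). Hence (iii) follows once (ii) is proved. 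For (i), decomposing $\tilde\Phi_i^{(1)}-\Phi_i=\frac1T\sum_t[(\tilde\Phi_{it}^{(1)}-\Phi_{it}^0)(\tilde\Phi_{it}^{(1)}-\Phi_{it}^0)'+(\tilde\Phi_{it}^{(1)}-\Phi_{it}^0)\Phi_{it}^{0\prime}+\Phi_{it}^0(\tilde\Phi_{it}^{(1)}-\Phi_{it}^0)']$ and applying $\|ab'\|_F=\|a\|_2\|b\|_2$, Cauchy--Schwarz, part (iii), and $\frac1T\sum_t\|\Phi_{it}^0\|_2^2=\mathrm{tr}(\Phi_i)\le K C_\phi$, yields $\max_{i\in I_2}\|\tilde\Phi_i^{(1)}-\Phi_i\|_{op}\le\max_{i\in I_2}\|\tilde\Phi_i^{(1)}-\Phi_i\|_F=O_p(\eta_N)=o_p(1)$; part (i) then follows from Weyl's inequality together with Assumption \ref{ass:5}.

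\textbf{Proof of (ii).} Put $w_t:=\|O_j^{(1)\prime}\tilde v_{t,j}^{(1)}-v_{t,j}^0\|_2^2$ and $g_{it}:=X_{j,it}^2-\mathbb{E}(X_{j,it}^2\mid\mathscr{D}_{\{e_{is}\}_{s<t}}^{I_1})$, so the goal becomes $\max_{i\in I_2}|\frac1T\sum_t w_t g_{it}|=O_p(\eta_N^2)$. Two structural facts drive the argument: (a) $w_t$ is measurable with respect to $\mathscr{D}^{I_1}:=\sigma(\mathscr{D},\{\epsilon_{i^*t^*},e_{i^*t^*}\}_{i^*\in I_1,\,t^*\in[T]})$, because $\tilde v_{t,j}^{(1)}$ and $O_j^{(1)}$ are functions of the $I_1$-subsample and of $\mathscr{D}$; and (b) since $\mathscr{D}^{I_1}\subseteq\mathscr{D}_{\{e_{is}\}_{s<t}}^{I_1}$ for every $i\in I_2$ and $t$, the sequence $\{w_t g_{it}\}_{t\in[T]}$ is, for each $i\in I_2$, a martingale difference sequence with respect to the filtration $\{\mathscr{D}_{\{e_{is}\}_{s\le t}}^{I_1}\}_t$, and remains so after conditioning on $\mathscr{D}^{I_1}$. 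I would then introduce the good event $\mathscr{A}=\{\max_t w_t\le C,\ \frac1T\sum_t w_t\le C'\eta_N^2\}$, which is $\mathscr{D}^{I_1}$-measurable and satisfies $\mathbb{P}(\mathscr{A}^c)\to0$ by Lemma \ref{Lem:bounded u&v_tilde}(i)--(ii) and Theorem \ref{Thm1}(iii). On $\mathscr{A}$ and conditionally on $\mathscr{D}^{I_1}$, one has $|w_t g_{it}|\le C\xi_N^2$ a.s.\ (Assumption \ref{ass:1}(v)) and predictable quadratic variation $\sum_t w_t^2\,\mathbb{E}(g_{it}^2\mid\mathscr{D}_{\{e_{is}\}_{s<t}}^{I_1})\le \xi_N^2 M(\max_t w_t)(\sum_t w_t)\le CC'M\,\xi_N^2 T\eta_N^2$, using $g_{it}^2\le X_{j,it}^4\le\xi_N^2 X_{j,it}^2$ and Assumption \ref{ass:1}(iv). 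Applying the scalar case ($d=1$) of Freedman's inequality, i.e.\ Lemma \ref{Lem:matrix Bern}(ii), conditionally on $\mathscr{D}^{I_1}$ on $\mathscr{A}$, with deviation level $C''T\eta_N^2$, and a union bound over $i\in I_2$ --- exactly the conditioning device used in the proof of Theorem \ref{Thm2}(iii) --- the claim follows provided $\exp\{-c\,C''T\eta_N^2/\xi_N^2\}=o(N^{-1})$ for $C''$ large, which holds because $T\eta_N^2/\xi_N^2=(T/(N\wedge T))\log(N\vee T)\,\xi_N^2\gtrsim\log(N\vee T)$. Since $\mathbb{P}(\mathscr{A}^c)$ is negligible, (ii) follows.

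\textbf{The main obstacle.} The delicate point is the dependence between the random weights $w_t$ --- built from the first-stage nuclear-norm estimator on the $I_1$ subsample and therefore involving all time periods --- and the $I_2$-regressors $X_{j,it}$ entering $g_{it}$, so that no conditional exponential inequality is directly available. The resolution is to condition on $\mathscr{D}^{I_1}$: this freezes the weights, while by cross-sectional conditional independence the conditional law of $\{X_{j,it}\}_t$ for $i\in I_2$ is unchanged, so the martingale-difference structure along $t$ is preserved and Freedman's inequality applies. The only additional care needed is that the almost-sure bounds used in Freedman hold only on the $\mathscr{D}^{I_1}$-measurable good event $\mathscr{A}$, which is handled by the standard ``restrict to $\mathscr{A}$ and use $\mathbb{P}(\mathscr{A}^c)\to0$'' argument employed repeatedly elsewhere in the paper.
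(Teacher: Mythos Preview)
Your proposal is correct and follows essentially the same route as the paper: both exploit that the weights $w_t$ are $\mathscr{D}^{I_1}$-measurable so that $\{w_t g_{it}\}_t$ is a martingale-difference sequence for $i\in I_2$, apply a conditional martingale concentration inequality with a union bound over $i$, and then deduce (iii) and (i) by the same Cauchy--Schwarz/Weyl reductions. The only cosmetic differences are that the paper invokes Azuma--Hoeffding rather than Freedman (your Lemma \ref{Lem:matrix Bern}(ii)) and proves (i) directly from Theorem \ref{Thm1}(iii) without passing through (iii), whereas you make the conditioning on $\mathscr{D}^{I_1}$ and the good-event restriction explicit---which the paper leaves implicit in its bound $\sum_t(2\xi_N^2 w_t)^2$.
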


\begin{proof}
(i) Recall that 
\begin{align*}
& \Phi_{i}=\frac{1}{T}\sum_{t=1}^{T}\Phi_{it}^{0}\Phi_{it}^{0\prime}\quad 
\text{with}\quad
\Phi_{it}^{0}=(v_{t,0}^{0\prime},v_{t,1}^{0\prime}X_{1,it},...,v_{t,p}^{0%
\prime}X_{p,it})^{\prime},\quad \text{and} \\
& \tilde{\Phi}_{i}^{(1)}=\frac{1}{T}\sum_{t=1}^{T}\tilde{\Phi}_{it}^{(1)}%
\tilde{\Phi}_{it}^{(1)\prime}\quad \text{with}\quad \tilde{\Phi}_{it}^{(1)}=%
\left[ \left(O_{0}^{(1)\prime}\tilde{v}_{t,0}^{(1)}\right)
^{\prime},\left(O_{1}^{(1)\prime}\tilde{v}_{t,1}^{(1)}X_{1,it}\right)
^{\prime},\cdots,\left(O_{p}^{(1)\prime}\tilde{v}_{t,p}^{(1)}X_{p,it}\right)
^{\prime}\right] ^{\prime }.
\end{align*}%
Uniformly over $i\in I_{2}$, it is clear that 
\begin{align*}
& \left\Vert \tilde{\Phi}_{i}^{(1)}-\Phi_{i}\right\Vert_{F} \\
& \lesssim \frac{4M}{c_{\sigma }T}\sum_{t=1}^{T}\left\Vert O_{0}^{(1)\prime}%
\tilde{v}_{t,0}^{(1)}-v_{t,0}^{0}\right\Vert_{2}+\frac{4M}{c_{\sigma }T}%
\sum_{j=1}^{p}\sum_{t=1}^{T}\left\Vert O_{j}^{(1)\prime}\tilde{v}%
_{t,j}^{(1)}-v_{t,j}^{0}\right\Vert_{2}\left\vert X_{j,it}\right\vert \\
& \leq \frac{4M}{c_{\sigma }}\frac{1}{\sqrt{T}}\left\Vert O_{0}^{(1)\prime}%
\tilde{V}_{0}^{(1)}-V_{0}^{0}\right\Vert_{F}+\frac{4M^{2}}{c_{\sigma }}%
\sum_{j=1}^{p}\frac{1}{\sqrt{T}}\left\Vert O_{j}^{(1)\prime}\tilde{V}%
_{j}^{(1)}-V_{j}^{0}\right\Vert_{F}\left(\frac{1}{T}\sum_{t\in
[T]}\left\vert X_{j,it}\right\vert ^{2}\right)
^{1/2}=O_{p}\left(\eta_{N}\right) ,
\end{align*}%
where the third line holds by Lemma \ref{Lem:bounded u&v_tilde}(i) and
Assumption \ref{ass:1}(iv). It follows that 
\begin{equation*}
\min_{i\in I_{2}}\lambda_{\min }\left[ \tilde{\Phi}_{i}^{(1)}\right]
\geq\min_{i\in I_{2}}\lambda_{\min }\left[ \Phi_{i}\right]
-O\left(\eta_{N}\right) \geq \frac{c_{\phi }}{2},\quad \text{w.p.a.1}
\end{equation*}
and 
\begin{equation*}
\max_{i\in I_{2}}\lambda_{\max }\left[ \tilde{\Phi}_{i}^{(1)}\right]
\leq\max_{i\in I_{2}}\lambda_{\max }\left[ \Phi_{i}\right]
+O\left(\eta_{N}\right) \leq 2C_{\phi },\quad \text{w.p.a.1}.
\end{equation*}

(ii) Let $I_{j,i}:=\frac{1}{T}\sum_{t\in[T]}I_{j,it}$ such that $I_{j,it}=%
\left[X_{j,it}^{2}-\mathbb{E}\left(X_{j,it}^{2}\bigg|\mathscr{D}%
_{\left\{e_{is}\right\}_{s<t}}^{I_{1}} \right)\right]\left\Vert \tilde{O}%
_{j}^{(1)\prime}\tilde{v}_{t,j}^{(1)}-v_{t,j}^{0}\right\Vert_{2}^{2}$. Then
for a constant $c$, we have 
\begin{align*}
&\mathbb{P}\left(\max_{i\in I_{2}}\left\vert\sum_{t\in[T]}
I_{j,it}\right\vert>cT\eta_{N}^{2} \right)\leq\sum_{i\in I_{2}}\mathbb{P}%
\left(\left\vert\sum_{t\in[T]} I_{j,it}\right\vert>cT\eta_{N}^{2} \right) \\
&=\sum_{i\in I_{2}}\mathbb{E}\mathbb{P}\left(\left\vert\sum_{t\in[T]}
I_{j,it}\right\vert>cT\eta_{N}^{2}\bigg|\mathscr{D}_{\left\{e_{is}\right%
\}_{s<T}}^{I_{1}}\right) \\
&\leq 2\sum_{i\in I_{2}}\exp\left\{-\frac{2(cT\eta_{N}^{2})^{2}}{\sum_{t\in[T%
]}\left[2\xi_{N}^{2}\left\Vert \tilde{O}_{j}^{(1)\prime}\tilde{v}%
_{t,j}^{(1)}-v_{t,j}^{0}\right\Vert_{2}^{2} \right]^{2}}\right\} \\
&\leq 2\exp\left\{-\frac{2(cT\eta_{N}^{2})^{2}}{4\xi_{N}^{4}\left[\frac{M^{2}%
}{c_{\sigma}^{2}}+\frac{4M^{2}}{c_{\sigma}^{2}} \right]\sum_{t\in[T]%
}\left\Vert \tilde{O}_{j}^{(1)\prime}\tilde{v}_{t,j}^{(1)}-v_{t,j}^{0}\right%
\Vert_{2}^{2} }+\log N\right\}=o(1),
\end{align*}
where the first inequality combines the fact that $I_{j,it}$ is the
martingale difference sequence, Assumption \ref{ass:1}(v) and the
Azuma-Hoeffding inequality in \citet[Corollary 2.20]{wainwright2019high}.
The last inequality is by Lemma \ref{Lem:bounded u&v_tilde}(i) and \ref%
{Lem:bounded u&v_tilde}(ii), and the finial result is by the definition of $%
\eta_{N}$.

(iii) Note that 
\begin{align*}
&\max_{i\in I_{2}}\frac{1}{T}\sum_{t\in[T]}\left\Vert \tilde{\Phi}%
_{it}^{(1)}-\Phi_{it}^{0}\right\Vert_{2} ^{2}\leq\frac{1}{T}\sum_{t\in[T]%
}\left\Vert \tilde{O}_{0}^{(1)\prime}\tilde{v}_{t,0}^{(1)}-v_{t,0}^{0}\right%
\Vert_{2}^{2}+p\max_{i\in I_{2},j\in \left[p\right] }\frac{1}{T}\sum_{t\in[T]%
}\left\vert X_{j,it}\right\vert^{2}\left\Vert \tilde{O}_{j}^{(1)\prime}%
\tilde{v}_{t,j}^{(1)}-v_{t,j}^{0}\right\Vert_{2}^{2} \\
&=\frac{1}{T}\sum_{t\in[T]}\left\Vert \tilde{O}_{0}^{(1)\prime}\tilde{v}%
_{t,0}^{(1)}-v_{t,0}^{0}\right\Vert_{2}^{2}+p\max_{i\in I_{2},j\in \left[p%
\right] }\frac{1}{T}\sum_{t\in[T]}\left[X_{j,it}^{2}-\mathbb{E}%
\left(X_{j,it}^{2}\bigg|\mathscr{D}_{\left\{e_{is}\right\}_{s<t}}^{I_{1}}
\right)\right]\left\Vert \tilde{O}_{j}^{(1)\prime}\tilde{v}%
_{t,j}^{(1)}-v_{t,j}^{0}\right\Vert_{2}^{2} \\
&+\max_{i\in I_{2},j\in \left[p\right] }\frac{1}{T}\sum_{t\in[T]}\mathbb{E}%
\left(X_{j,it}^{2}\bigg|\mathscr{D}_{\left\{e_{is}\right\}_{s<t}}^{I_{1}}
\right)\left\Vert \tilde{O}_{j}^{(1)\prime}\tilde{v}_{t,j}^{(1)}-v_{t,j}^{0}%
\right\Vert_{2}^{2} \\
& \leq \frac{1}{T}\sum_{t\in[T]}\left\Vert \tilde{O}_{0}^{(1)\prime}\tilde{v}%
_{t,0}^{(1)}-v_{t,0}^{0}\right\Vert_{2}^{2}+O_{p}(\eta_{N}^{2})+pM\max_{j\in %
\left[ p\right]}\frac{1}{T}\sum_{t\in[T]}\left\Vert \tilde{O}_{j}^{(1)\prime}%
\tilde{v}_{t,j}^{(1)}-v_{t,j}^{0}\right\Vert_{2}^{2}\quad \\
& =\frac{1}{T}\left\Vert O_{0}^{(1)}\tilde{V}_{0}^{(1)}-V_{0}^{0}\right%
\Vert_{F}^{2}+p\max_{j\in \left[ p\right] }\frac{1}{T}\left\Vert O_{j}^{(1)}%
\tilde{V}_{j}^{(1)}-V_{j}^{0}\right\Vert_{F}^{2}+O_{p}(\eta_{N}^{2}) \\
& =O_{p}(\eta_{N}^{2}) ,
\end{align*}
where the the second inequality holds by Assumption \ref{ass:1}(iv) and
Lemma \ref{Lem:phi eig}(ii), and the last equality holds by the Theorem \ref%
{Thm1}(ii).
\end{proof}

\begin{lemma}
\label{Lem:A} Recall $\{A_{1,i},\cdots,A_{7,i}\}_{i\in I_{2}}$ and $%
q_{i}^{I} $ defined in \eqref{A:obj diff} and \eqref{eq:qiI}, respectively.
Suppose Assumptions \ref{ass:1}--\ref{ass:5} hold. Then for any constant $%
c_{11}<\min (\frac{3\underline{\mathfrak{f}}}{\bar{\mathfrak{f}}^{\prime }}%
,1)$, we have 
\begin{align*}
& \max_{i\in I_{2}}(|A_{m,i}|/\left\Vert \dot{\Delta}_{i,u}\right%
\Vert_{2})=O_{p}\left(\eta_{N}\right) ,\forall m\in\{1,2,3,5,6,7\}\quad 
\text{and} \\
& |A_{4,i}|\geq \min \left(\frac{\left(3c_{11}^{2}\underline{\mathfrak{f}}%
-c_{11}^{3}\bar{\mathfrak{f}}^{\prime }\right) c_{\phi }\left\Vert \dot{%
\Delta}_{i,u}\right\Vert_{2}^{2}}{12},\frac{\left(3c_{11}^{2}\underline{%
\mathfrak{f}}-c_{11}^{3}\bar{\mathfrak{f}}^{\prime }\right) \sqrt{c_{\phi }}%
q_{i}^{I}\left\Vert \dot{\Delta}_{i,u}\right\Vert_{2}}{6\sqrt{2}}%
\right),\quad \forall i\in I_{2},\quad w.p.a.1.
\end{align*}
\end{lemma}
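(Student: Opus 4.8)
The plan is to work from the seven-term decomposition in \eqref{A:obj diff}, where $A_{4,i}=\frac1T\sum_t\int_0^{v_{it}}\big(\mathfrak{F}_{it}(s)-\mathfrak{F}_{it}(0)\big)\,ds$ with $v_{it}:=\tilde{\Phi}_{it}^{(1)\prime}\dot{\Delta}_{i,u}$ is the curvature term and the remaining $A_{m,i}$ are score- and occupation-type remainders built from the discrepancy $w_{1,it}-\epsilon_{it}=-\delta_{it}$, $\delta_{it}:=u_i^{0\prime}(\tilde{\Phi}_{it}^{(1)}-\Phi_{it}^0)$. First I would record a few uniform-in-$i\in I_2$ ingredients that feed everything: by conditional independence across $i$ (Assumption \ref{ass:1}(i)) the conditional law of $\epsilon_{it}$ given $\mathscr{D}_{e_i}^{I_1}$ agrees with that given $\mathscr{D}_{e_i}$, so $\mathfrak{F}_{it},\mathfrak{f}_{it}$ are the relevant CDF/PDF and $\mathfrak{F}_{it}(0)=\tau$ by Assumption \ref{ass:1}(ii) and the tower property (also $\mathfrak{f}_{it}$ is bounded above, a consequence of Assumption \ref{ass:1}(vii)); moreover $\frac1T\sum_t\|\tilde{\Phi}_{it}^{(1)}-\Phi_{it}^0\|_2^2=O_p(\eta_N^2)$ by Lemma \ref{Lem:phi eig}(iii), hence $\frac1T\sum_t\delta_{it}^2=O_p(\eta_N^2)$ by Lemma \ref{Lem:bounded u&v_tilde}(i), while $\frac1T\sum_t\|\tilde{\Phi}_{it}^{(1)}\|_2^2=O_p(1)$ by Lemma \ref{Lem:bounded u&v_tilde}(iii) and $c_\phi/2\le\lambda_{\min}(\tilde{\Phi}_i^{(1)})\le\lambda_{\max}(\tilde{\Phi}_i^{(1)})\le 2C_\phi$ w.p.a.1 by Lemma \ref{Lem:phi eig}(i).

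For the lower bound on $A_{4,i}$ I would fix the realized direction $b:=\dot{\Delta}_{i,u}/\|\dot{\Delta}_{i,u}\|_2$ and view $g_i(r):=A_{4,i}$ as a convex function of $r=\|\dot{\Delta}_{i,u}\|_2\ge 0$ with $g_i(0)=g_i'(0)=0$. Lemma \ref{Lem:ind diff}, applied with the CDF $\mathfrak{F}_{it}$, shrinks the integration domain by a factor $c_{11}\in(0,1]$ without losing nonnegativity, and a second-order Taylor expansion of $\mathfrak{F}_{it}$ around $0$ (Assumption \ref{ass:1}(vii), $\mathfrak{F}_{it}(0)=\tau$), after the sign bookkeeping inside the signed integral, yields $g_i(r)\ge\frac{c_{11}^2\underline{\mathfrak{f}}}{2}P_i r^2-\frac{c_{11}^3\bar{\mathfrak{f}}'}{6}Q_i r^3$ with $P_i:=\frac1T\sum_t(\tilde{\Phi}_{it}^{(1)\prime}b)^2\ge c_\phi/2$ and $Q_i:=\frac1T\sum_t|\tilde{\Phi}_{it}^{(1)\prime}b|^3$. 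For $r$ below a threshold $r_{0,i}\asymp\underline{\mathfrak{f}}P_i/(c_{11}\bar{\mathfrak{f}}'Q_i)$ the quadratic term dominates, giving $g_i(r)\gtrsim c_{11}^2(3\underline{\mathfrak{f}}-c_{11}\bar{\mathfrak{f}}')c_\phi r^2$; for $r>r_{0,i}$, convexity along the ray gives $g_i(r)\ge(r/r_{0,i})g_i(r_{0,i})\gtrsim c_{11}^2(3\underline{\mathfrak{f}}-c_{11}\bar{\mathfrak{f}}')(P_i^2/Q_i)r$, and $P_i^2/Q_i=P_i^{1/2}\cdot(P_i^{3/2}/Q_i)\ge\sqrt{c_\phi/2}\,q_i^I$ by the scale-invariant definition of $q_i^I$ in \eqref{eq:qiI} evaluated at $\Delta=b$. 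Taking the worse of the two regimes produces exactly the stated $\min(\cdot,\cdot)$ bound, uniformly in $i\in I_2$ since $P_i,Q_i$ are controlled by the a priori ingredients above.

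For the six remainder terms I would separate the conditional-mean pieces $A_{1,i},A_{6,i}$ (and the mean part of $A_{7,i}$) from the centered pieces $A_{2,i},A_{3,i},A_{5,i}$ (and the deviation part of $A_{7,i}$). The conditional-mean pieces are handled deterministically given $\mathscr{D}_{e_i}^{I_1}$: a mean-value expansion of $\mathfrak{F}_{it}$ (Assumption \ref{ass:1}(vii)) turns each into $\frac1T\sum_t$ of a product of $|v_{it}|$ (or of $|\tilde{\Phi}_{it}^{(1)\prime}\dot{\Delta}_{i,u}|$ inside an integral) with $|\delta_{it}|$, and Cauchy--Schwarz against $\frac1T\sum_t v_{it}^2\le 2C_\phi\|\dot{\Delta}_{i,u}\|_2^2$, $\frac1T\sum_t\|\tilde{\Phi}_{it}^{(1)}\|_2^2=O_p(1)$ and $\frac1T\sum_t\delta_{it}^2=O_p(\eta_N^2)$ delivers the $O_p(\eta_N)\|\dot{\Delta}_{i,u}\|_2$ rate; for $A_{7,i}$ one first bounds the indicator mismatch by $\mathbf 1\{|\epsilon_{it}|\le|\delta_{it}|\}$ and replaces it by its $\mathscr{D}_{e_i}^{I_1}$-conditional mean, which is $\lesssim|\delta_{it}|$, reducing it to the same template. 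The centered pieces are mean-zero given $\mathscr{D}_{e_i}^{I_1}$ with $\{\epsilon_{it}\}_t$ strong mixing conditionally on that field (Assumption \ref{ass:1}(iii)) and conditional-variance proxies controlled by $v_{it}^2$, $\delta_{it}^2$ or $v_{it}^2|\delta_{it}|$; applying the conditional Bernstein-type inequality of Lemma \ref{Lem:Bern}, together with a covering-number argument over $\dot{\Delta}_{i,u}$ in a fixed ball (needed because $\dot{\Delta}_{i,u}$ is not $\mathscr{D}_{e_i}^{I_1}$-measurable, whereas $\tilde{\Phi}_{it}^{(1)}$ is) and a union bound over $i\in I_2$, gives deviations of order $\sqrt{\log N/T}\,\|\dot{\Delta}_{i,u}\|_2$ up to lower-order terms absorbed by Assumption \ref{ass:1}(ix); since $\xi_N\gtrsim 1$ and $N\wedge T\le T$ we have $\sqrt{\log N/T}\le\eta_N$, so this is $O_p(\eta_N)\|\dot{\Delta}_{i,u}\|_2$.

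The main obstacle is the $A_{4,i}$ lower bound: establishing the two-regime (quadratic versus $q_i^I$-linear) structure cleanly through the ray-convexity argument, carefully tracking signs in the Taylor expansion of $\mathfrak{F}_{it}$ inside the signed integral $\int_0^{c_{11}v_{it}}$, and making every bound uniform over $i\in I_2$ despite $\dot{\Delta}_{i,u}$ failing to be measurable with respect to the conditioning $\sigma$-field (which is also the reason the centered remainders require the covering argument rather than a plain conditional exponential inequality).
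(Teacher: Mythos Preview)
Your proposal is correct and its architecture mirrors the paper's proof. The $A_{4,i}$ lower bound is identical: Lemma~\ref{Lem:ind diff}(ii) shrinks the integration domain by $c_{11}$, a second-order Taylor expansion of $\mathfrak{F}_{it}$ at $0$ yields the quadratic-minus-cubic bound, and the two-regime conclusion follows from convexity of $\Delta\mapsto\frac1T\sum_t\int_0^{\tilde{\Phi}_{it}^{(1)\prime}\Delta}[\mathfrak{F}_{it}(s)-\mathfrak{F}_{it}(0)]\,ds$ along rays---the paper phrases the split as $q_i^I\gtrless q_i^{II}:=(\frac1T\sum_t v_{it}^2)^{1/2}$, which is exactly your threshold-$r_{0,i}$ formulation. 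The conditional-mean pieces $A_{1,i},A_{6,i}$ and the mean part of $A_{7,i}$ are handled the same way, via the mean-value theorem on $\mathfrak{F}_{it}$ and Cauchy--Schwarz against $\frac1T\sum_t\delta_{it}^2=O_p(\eta_N^2)$ from Lemma~\ref{Lem:phi eig}(iii).

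The one substantive difference is in the centered pieces. First, for $A_{2,i}$ you slightly overcomplicate: $\dot{\Delta}_{i,u}$ enters only as a post-multiplier of the centered score $\frac1T\sum_t\tilde{\Phi}_{it}^{(1)}(\tau-\mathbf{1}\{w_{1,it}\le 0\})-\mathbb{E}[\cdot\mid\mathscr{D}_{e_i}^{I_1}]$, and that score is $\mathscr{D}_{e_i}^{I_1}$-measurable, so a single conditional Bernstein bound on the vector (no covering) already gives $|A_{2,i}|/\|\dot{\Delta}_{i,u}\|_2=O_p(\eta_N)$. Second, for $A_{3,i},A_{5,i}$ (and the deviation of $A_{7,i}$) the paper takes a different route from yours: it changes variables $s=\tilde{\Phi}_{it}^{(1)\prime}\dot{\Delta}_{i,u}\,s^*$ to factor $\dot{\Delta}_{i,u}$ out linearly, then bounds the residual by $\sup_{s\in\mathbb{R}}\|A_{3,i}^I(s)\|_2$ with $A_{3,i}^I(s)=\frac1T\sum_t\tilde{\Phi}_{it}^{(1)}\bigl[(\mathbf{1}\{\epsilon_{it}\le s\}-\mathbf{1}\{\epsilon_{it}\le 0\})-(\mathfrak{F}_{it}(s)-\mathfrak{F}_{it}(0))\bigr]$, reducing the empirical-process step to a one-dimensional covering of the scalar threshold on $[-T^{1/4},T^{1/4}]$ plus a tail argument for $|s|>T^{1/4}$. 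Your direct covering over the $K$-dimensional $\dot{\Delta}_{i,u}$ is a legitimate alternative (the variance and sup-norm in Lemma~\ref{Lem:Bern} scale with $\|\Delta\|_2$, so the ratio bound is uniform over a compact $\Delta$-set), but you should make explicit where the ``fixed ball'' comes from: at this point $\|\dot{\Delta}_{i,u}\|_2=O_p(\eta_N)$ has not yet been established, so you need an a-priori $O(1)$ bound---for instance from Lemma~\ref{Lem:bounded u&v_tilde}(i) together with a compact search region for the row-wise QR---to avoid circularity. The paper's scalar-$s$ reduction trades that compactness assumption for a one-dimensional covering whose indexing set no longer involves $\dot{\Delta}_{i,u}$.
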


\begin{proof}
Recall that $w_{1,it}=\epsilon_{it}-u_{i}^{0\prime}\left(\tilde{\Phi}%
_{it}^{(1)}-\Phi_{it}^{0}\right) .$ Let $w_{2,it}=\tilde{\Phi}%
_{it}^{(1)\prime}\dot{\Delta}_{i,u}$. For some positive constant $%
c_{11}\in(0,1]$, we observe that 
\begin{align}
A_{4,i}& =\frac{1}{T}\sum_{t=1}^{T}\int_{0}^{w_{2,it}}\mathbb{E}\left(%
\mathbf{1}\left\{ \epsilon_{it}\leq s\right\} -\mathbf{1}\left\{
\epsilon_{it}\leq 0\right\} \bigg |\mathscr{D}_{e_{i}}^{I_{1}}\right) ds 
\notag  \label{B.2} \\
& =\frac{1}{T}\sum_{t=1}^{T}\int_{0}^{w_{2,it}}\left[ \mathfrak{F}_{it}(s)-%
\mathfrak{F}_{it}(0)\right] ds\geq \frac{1}{T}\sum_{t=1}^{T}%
\int_{0}^{c_{11}w_{2,it}}\left[ \mathfrak{F}_{it}(s)-\mathfrak{F}_{it}(0)%
\right] ds  \notag \\
& =\frac{1}{T}\sum_{t=1}^{T}\int_{0}^{c_{11}w_{2,it}}\left[ s\mathfrak{f}%
_{it}(0)+\frac{s^{2}}{2}\mathfrak{f}_{it}^{\prime }(\tilde{s})\right] ds 
\notag \\
& \geq \frac{1}{T}\sum_{t=1}^{T}\left[ \frac{c_{11}^{2}\underline{\mathfrak{f%
}}\left(\tilde{\Phi}_{it}^{(1)\prime}\dot{\Delta}_{i,u}\right) ^{2}}{2}-%
\frac{c_{11}^{3}\bar{\mathfrak{f}}^{\prime }\left\vert \tilde{\Phi}%
_{it}^{(1)\prime}\dot{\Delta}_{i,u}\right\vert ^{3}}{6}\right]
\end{align}%
where $\tilde{s}\in (0,s)$. Here, due to Assumption \ref{ass:1}(i), the
conditional CDF of $\epsilon_{it}$ given $\mathscr{D}_{e_{i}}^{I_{1}}$ is
the same as that given the $\sigma$-field generated by $\left\{
\left\{e_{it}\right\}_{t\in[T]}\bigcup \left\{ V_{j}^{0}\right\}_{j\in[p]%
\cup \{0\}}\bigcup \left\{ W_{j}^{0}\right\}_{j\in [p]}\right\} $, which
leads to the second equality of the above display by Assumption \ref{ass:1}%
(vii); the first inequality holds by Lemma \ref{Lem:ind diff}(ii) for any $%
c_{11}\in \left(0,1\right]$. Here, we choose $c_{11}$ such that $c_{11}<%
\frac{3\underline{\mathfrak{f}}}{\bar{\mathfrak{f}}^{\prime }}$; and the
last inequality holds by Assumption \ref{ass:1}(vii).

Let $q_{i}^{II}=\left[ \frac{1}{T}\sum_{t\in[T]}\left(\tilde{\Phi}%
_{it}^{(1)\prime}\dot{\Delta}_{i,u}\right) ^{2}\right] ^{\frac{1}{2}}$ and
recall that $q_{i}^{I}=\operatornamewithlimits{\inf}\limits_{\Delta}\frac{%
\left[ \frac{1}{T}\sum_{t\in[T]}\left(\tilde{\Phi}_{it}^{(1)\prime}\Delta
\right) ^{2}\right] ^{\frac{3}{2}}}{\frac{1}{T}\sum_{t\in[T]}\left\vert 
\tilde{\Phi}_{it}^{(1)\prime}\Delta \right\vert ^{3}}$. If $q_{i}^{I}\geq
q_{i}^{II}$, we notice that $\frac{1}{T}\sum_{t\in[T]}\left\vert \tilde{\Phi}%
_{it}^{(1)\prime}\Delta \right\vert ^{3}<\left(q_{i}^{II}\right) ^{2}$ and $%
A_{4,i}\geq \frac{c_{11}^{2}\underline{\mathfrak{f}}\left(q_{i}^{II}%
\right)^{2}}{2}-\frac{c_{11}^{3}\bar{\mathfrak{f}}^{\prime
}\left(q_{i}^{II}\right) ^{2}}{6}=\frac{3c_{11}^{2}\underline{\mathfrak{f}}%
-c_{11}^{3}\bar{\mathfrak{f}}^{\prime }}{6}\left(q_{i}^{II}\right) ^{2}$. If 
$q_{i}^{I}<q_{i}^{II}$, we have $\left[ \frac{1}{T}\sum_{t\in [T]}\left(%
\tilde{\Phi}_{it}^{(1)\prime}\Delta_{i,u}^{\ast }\right) ^{2}\right] ^{\frac{%
1}{2}}=q_{i}^{I}$ with $\Delta_{i,u}^{\ast }=\frac{q_{i}^{II}\dot{\Delta}%
_{i,u}}{q_{i}^{I}}$. Define the function 
\begin{equation*}
F\left(\Delta \right) =\frac{1}{T}\sum_{t=1}^{T}\int_{0}^{\tilde{\Phi}%
_{it}^{(1)\prime}\Delta }\left[ \mathfrak{F}_{it}(s)-\mathfrak{F}_{it}(0)%
\right] ds.
\end{equation*}
Note that the second-order derivative of function $F\left(\Delta \right)$ is
no less than zero, which implies $F\left(\Delta \right) $ is convex.
Therefore, we have 
\begin{align*}
F\left(\dot{\Delta}_{i,u}\right) & =F\left(\frac{q_{i}^{I}\Delta_{i,u}^{\ast
}}{q_{i}^{II}}\right) \geq \frac{q_{i}^{I}}{q_{i}^{II}}F\left(\Delta_{i,u}^{%
\ast }\right) \geq \frac{q_{i}^{II}}{q_{i}^{I}}\frac{3c_{11}^{2}\underline{%
\mathfrak{f}}-c_{11}^{3}\bar{\mathfrak{f}}^{\prime }}{6}\frac{1}{T}\sum_{t\in%
[T]}\left(\tilde{\Phi}_{it}^{(1)\prime}\Delta_{i,u}^{\ast }\right) ^{2}=%
\frac{\left(3c_{11}^{2}\underline{\mathfrak{f}}-c_{11}^{3}\bar{\mathfrak{f}}%
^{\prime }\right) q_{i}^{I}q_{i}^{II}}{6}.
\end{align*}
Combining these two cases, we have 
\begin{align}
A_{4,i}& \geq \min \left(\frac{3c_{11}^{2}\underline{\mathfrak{f}}-c_{11}^{3}%
\bar{\mathfrak{f}}^{\prime }}{6}\left(q_{i}^{II}\right) ^{2},\frac{%
\left(3c_{11}^{2}\underline{\mathfrak{f}}-c_{11}^{3}\bar{\mathfrak{f}}%
^{\prime }\right) q_{i}^{I}q_{i}^{II}}{6}\right)  \notag  \label{B.4} \\
& \geq \min \left(\frac{\left(3c_{11}^{2}\underline{\mathfrak{f}}-c_{11}^{3}%
\bar{\mathfrak{f}}^{\prime }\right) c_{\phi }\left\Vert \dot{\Delta}%
_{i,u}\right\Vert_{2}^{2}}{12},\frac{\left(3c_{11}^{2}\underline{\mathfrak{f}%
}-c_{11}^{3}\bar{\mathfrak{f}}^{\prime }\right) \sqrt{c_{\phi }}%
q_{i}^{I}\left\Vert \dot{\Delta}_{i,u}\right\Vert_{2}}{6\sqrt{2}}\right) ,
\end{align}%
where the second inequality holds by Lemma \ref{Lem:phi eig}(i).

As for $\left\vert A_{1,i}\right\vert $, we notice that 
\begin{align}
\max_{i\in I_{2}}\left(\left\vert A_{1,i}\right\vert /\left\Vert \dot{\Delta}%
_{i,u}\right\Vert_{2}\right) & =\max_{i\in I_{2}}\frac{\left\vert \frac{1}{T}%
\sum_{t=1}^{T}\mathbb{E}\left\{ \tilde{\Phi}_{it}^{(1)\prime}\left(\tau -%
\mathbf{1}\left\{ \epsilon_{it}\leq u_{i}^{0\prime}\left(\tilde{\Phi}%
_{it}^{(1)}-\Phi_{it}^{0}\right) \right\} \right) \bigg |\mathscr{D}%
_{e_{i}}^{I_{1}}\right\} \dot{\Delta}_{i,u}\right\vert }{\left\Vert \dot{%
\Delta}_{i,u}\right\Vert_{2}}  \notag  \label{Lem:A:A1} \\
& =\max_{i\in I_{2}}\frac{\left\vert \frac{1}{T}\sum_{t=1}^{T}\tilde{\Phi}%
_{it}^{(1)\prime}\dot{\Delta}_{i,u}\left(\tau -\mathfrak{F}_{it}\left[%
u_{i}^{0\prime}\left(\tilde{\Phi}_{it}^{(1)}-\Phi_{it}^{0}\right) \right]%
\right) \right\vert }{\left\Vert \dot{\Delta}_{i,u}\right\Vert_{2}}  \notag
\\
& =\max_{i\in I_{2}}\frac{\left\vert \frac{1}{T}\sum_{t=1}^{T}\tilde{\Phi}%
_{it}^{(1)\prime}\dot{\Delta}_{i,u}\left[ \mathfrak{f}_{it}(s_{it})u_{i}^{0%
\prime}\left(\tilde{\Phi}_{it}^{(1)}-\Phi_{it}^{0}\right) \right]\right\vert 
}{\left\Vert \dot{\Delta}_{i,u}\right\Vert_{2}}  \notag \\
& \leq \max_{i\in I_{2}}\frac{\bar{\mathfrak{f}}}{T}\sum_{t\in
[T]}\left\Vert \tilde{\Phi}_{it}^{(1)}\right\Vert_{2}\left\Vert \tilde{\Phi}%
_{it}^{(1)}-\Phi_{it}^{0}\right\Vert_{2}\left\Vert u_{i}^{0}\right\Vert_{2} 
\notag \\
& \leq \max_{i\in I_{2}}\sqrt{\frac{1}{T}\sum_{t\in[T]}\left\Vert\tilde{\Phi}%
_{it}^{(1)}\right\Vert ^{2}}\sqrt{\frac{1}{T}\sum_{t\in [T]}\left\Vert 
\tilde{\Phi}_{it}^{(1)}-\Phi_{it}^{0}\right\Vert ^{2}}\left\Vert
u_{i}^{0}\right\Vert_{2}  \notag \\
& \leq O_{p}(\eta_{N}),
\end{align}
where the second and third equalities hold by Assumption \ref{ass:1}(vii)
and mean-value theorem with some $\left\vert s_{it}\right\vert \in
\left(0,\left\vert u_{i}^{0\prime}(\tilde{\Phi}_{it}^{(1)}-\Phi_{it}^{0})%
\right\vert \right) ,$ the second inequality holds by Cauchy-Schwarz
inequality, and the third inequality holds by Lemmas \ref{Lem:bounded
u&v_tilde}(i), \ref{Lem:bounded u&v_tilde}(iii) and \ref{Lem:phi eig}(ii).

For $A_{3,i}$, note that 
\begin{align}
A_{3,i}& =\frac{1}{T}\sum_{t\in[T]}\int_{0}^{\tilde{\Phi}_{it}^{(1)\prime}%
\dot{\Delta}_{i,u}}\left(\mathbf{1}\left\{ \epsilon_{it}\leq s\right\} -%
\mathbf{1}\left\{ \epsilon_{it}\leq 0\right\} \right)-\left\{ \mathbb{E}%
\left(\mathbf{1}\left\{ \epsilon_{it}\leq s\right\} -\mathbf{1}\left\{
\epsilon_{it}\leq 0\right\} \bigg |\mathscr{D}_{e_{i}}^{I_{1}}\right)
\right\} ds  \notag  \label{A3} \\
& =\int_{0}^{1}\biggl\{\frac{1}{T}\sum_{t\in[T]}\tilde{\Phi}_{it}^{(1)}\left(%
\mathbf{1}\left\{ \epsilon_{it}\leq \tilde{\Phi}_{it}^{(1)\prime}\dot{\Delta}%
_{i,u}s^{\ast }\right\} -\mathbf{1}\left\{\epsilon_{it}\leq 0\right\} \right)
\notag \\
& -\left[ \mathbb{E}\left(\mathbf{1}\left\{ \epsilon_{it}\leq \tilde{\Phi}%
_{it}^{(1)\prime}\dot{\Delta}_{i,u}s^{\ast }\right\} -\mathbf{1}%
\left\{\epsilon_{it}\leq 0\right\} \bigg |\mathscr{D}_{e_{i}}^{I_{1}}\right) %
\right] \biggr\}^{\prime}\dot{\Delta}_{i,u}ds^{\ast }  \notag \\
& \leq \sup_{s\in \mathbb{R}}\left\Vert
A_{3,i}^{I}(s)\right\Vert_{2}\left\Vert \dot{\Delta}_{i,u}\right\Vert_{2}
\end{align}%
by change of variables with $s^{\ast }=\frac{s}{\tilde{\Phi}_{it}^{(1)\prime}%
\dot{\Delta}_{i,u}}$, $A_{3,i}^{I}(s)=\frac{1}{T}\sum_{t\in
[T]}A_{3,it}^{I}(s)$ and 
\begin{equation*}
A_{3,it}^{I}(s)=\tilde{\Phi}_{it}^{(1)}\left[ \left(\mathbf{1}%
\left\{\epsilon_{it}\leq s\right\} -\mathbf{1}\left\{ \epsilon_{it}\leq
0\right\}\right) -\mathbb{E}\left(\mathbf{1}\left\{ \epsilon_{it}\leq
s\right\} -\mathbf{1}\left\{ \epsilon_{it}\leq 0\right\} \bigg |\mathscr{D}%
_{e_{i}}^{I_{1}}\right) \right] .
\end{equation*}
Below, we aim to show $\operatornamewithlimits{\sup}_{s\in (-\infty
,+\infty)}\max_{i\in I_{2}}\left\Vert
A_{3,i}^{I}(s)\right\Vert_{2}=O_{p}\left(\eta_{N}\right) .$

When $\left\vert s\right\vert >T^{1/4}$, we notice that 
\begin{align*}
\sup_{\left\vert s\right\vert >T^{1/4}}\max_{i\in I_{2}}\left\Vert
A_{3,i}^{I}(s)\right\Vert_{2}& \leq \sup_{\left\vert
s\right\vert>T^{1/4}}\max_{i\in I_{2}}\left\Vert \frac{1}{T}\sum_{t\in[T]}%
\tilde{\Phi}_{it}^{(1)}\left[ \mathbf{1}\left\{ \epsilon_{it}\leq s\right\} -%
\mathbb{E}\left(\mathbf{1}\left\{ \epsilon_{it}\leq s\right\} \bigg |%
\mathscr{D}_{e_{i}}^{I_{1}}\right) \right] \right\Vert_{2} \\
& +\max_{i\in I_{2}}\left\Vert \frac{1}{T}\sum_{t\in[T]}\tilde{\Phi}%
_{it}^{(1)}\left[ \mathbf{1}\left\{ \epsilon_{it}\leq 0\right\} -\mathbb{E}%
\left(\mathbf{1}\left\{ \epsilon_{it}\leq 0\right\} \bigg |\mathscr{D}%
_{e_{i}}^{I_{1}}\right) \right] \right\Vert_{2} \\
& \leq \max_{i\in I_{2}}\frac{1}{T}\sum_{t\in[T]}\left\Vert \tilde{\Phi}%
_{it}^{(1)}\right\Vert_{2}\mathbf{1}\left\{ \epsilon_{it}>T^{1/4}\right\}
+\max_{i\in I_{2}}\frac{1}{T}\sum_{t\in [T]}\left\Vert \tilde{\Phi}%
_{it}^{(1)}\right\Vert_{2}\mathbb{E}\left(\mathbf{1}\left\{
\epsilon_{it}>T^{1/4}\right\} \bigg |\mathscr{D}_{e_{i}}^{I_{1}}\right) \\
& +\max_{i\in I_{2}}\left\Vert \frac{1}{T}\sum_{t\in[T]}\tilde{\Phi}%
_{it}^{(1)}\left[ \mathbf{1}\left\{ \epsilon_{it}\leq 0\right\} -\mathbb{E}%
\left(\mathbf{1}\left\{ \epsilon_{it}\leq 0\right\} \bigg |\mathscr{D}%
_{e_{i}}^{I_{1}}\right) \right] \right\Vert_{2} \\
& \leq 2\max_{i\in I_{2}}\frac{1}{T}\sum_{t\in[T]}\left\Vert \tilde{\Phi}%
_{it}^{(1)}\right\Vert_{2}\mathbb{E}\left(\mathbf{1}\left\{
\epsilon_{it}>T^{1/4}\right\} \bigg |\mathscr{D}_{e_{i}}^{I_{1}}\right) \\
& +\max_{i\in I_{2}}\left\vert \frac{1}{T}\sum_{t\in[T]}\left\Vert \tilde{\Phi}_{it}^{(1)}\right\Vert_{2}\left[ \mathbf{1}\left\{
\epsilon_{it}>T^{1/4}\right\} -\mathbb{E}\left(\mathbf{1}\left\{
\epsilon_{it}>T^{1/4}\right\} \bigg |\mathscr{D}_{e_{i}}^{I_{1}}\right) %
\right]\right\vert \\
& +\max_{i\in I_{2}}\left\vert \frac{1}{T}\sum_{t\in[T]}\left\Vert\tilde{\Phi}_{it}^{(1)}\right\Vert_{2}\left[ \mathbf{1}\left\{ \epsilon_{it}\leq
0\right\} -\mathbb{E}\left(\mathbf{1}\left\{ \epsilon_{it}\leq 0\right\} %
\bigg |\mathscr{D}_{e_{i}}^{I_{1}}\right) \right] \right\vert ,
\end{align*}
where the first and third inequalities hold by triangle inequality. Besides,
we observe that 
\begin{align}
& \max_{i\in I_{2}}\frac{1}{T}\sum_{t\in[T]}\left\Vert \tilde{\Phi}_{it}^{(1)}\right\Vert_{2}\mathbb{E}\left(\mathbf{1}\left\{
\epsilon_{it}>T^{1/4}\right\} \bigg |\mathscr{D}_{e_{i}}^{I_{1}}\right) 
\notag \\
& =\max_{i\in I_{2}}\frac{1}{T}\sum_{t\in[T]}\left\Vert \tilde{\Phi}_{it}^{(1)}\right\Vert_{2}\mathbb{P}\left(\epsilon_{it}>T^{1/4}\bigg|%
\mathscr{D}_{e_{i}}^{I_{1}}\right) \leq \max_{i\in I_{2}}\frac{1}{T}\sum_{t%
\in[T]}\left\Vert \tilde{\Phi}_{it}^{(1)}\right\Vert_{2}\frac{\mathbb{E}%
\left(\epsilon_{it}^{2}\bigg|\mathscr{D}_{e_{i}}^{I_{1}}\right) }{\sqrt{T}} 
\notag \\
& \leq T^{-1/2}\max_{i\in I_{2}}\sqrt{\frac{1}{T}\sum_{t\in [T]}\left\Vert 
\tilde{\Phi}_{it}^{(1)}\right\Vert_{2}^{2}}\max_{i\in I_{2}}\sqrt{\frac{1}{T}%
\sum_{t\in[T]}\left[ \mathbb{E}\left(\epsilon_{it}^{2}\bigg|\mathscr{D}%
_{e_{i}}\right) \right] ^{2}}=O_{p}\left(T^{-1/2}\right) ,  \label{B.9}
\end{align}%
where the last equality holds by Lemma \ref{Lem:bounded u&v_tilde}(iii) and
Assumption \ref{ass:1}(iv). For a positive constant $c_{12}$, 
\begin{align*}
& \mathbb{P}\left(\max_{i\in I_{2}}\left\vert \frac{1}{T}\sum_{t\in
[T]}\left\Vert \tilde{\Phi}_{it}^{(1)}\right\Vert_{2}\left[ \mathbf{1}%
\left\{ \epsilon_{it}>T^{1/4}\right\} -\mathbb{E}\left(\mathbf{1}%
\left\{\epsilon_{it}>T^{1/4}\right\} \bigg |\mathscr{D}_{e_{i}}^{I_{1}}%
\right) \right] \right\vert >c_{12}\frac{\xi_{N}\sqrt{\log (N\vee T)}}{\sqrt{%
T}}\right) \\
& \leq \sum_{i\in I_{2}}\mathbb{E}\mathbb{P}\left(\left\vert \frac{1}{T}%
\sum_{t\in[T]}\left\Vert \tilde{\Phi}_{it}^{(1)}\right\Vert_{2}\left[ 
\mathbf{1}\left\{ \epsilon_{it}>T^{1/4}\right\} -\mathbb{E}\left(\mathbf{1}%
\left\{ \epsilon_{it}>T^{1/4}\right\} \bigg |\mathscr{D}_{e_{i}}^{I_{1}}%
\right) \right] \right\vert >c_{12}\frac{\xi_{N}\sqrt{\log(N\vee T)}}{\sqrt{T%
}}\bigg|\mathscr{D}_{e_{i}}^{I_{1}}\right) \\
& \leq \sum_{i\in I_{2}}\exp \left\{ -\frac{c_{9}c_{12}^{2}T\xi_{N}^{2}%
\log(N\vee T)}{\frac{16TM^{2}}{c_{\sigma }^{2}}\left(1+p\xi_{N}^{2}\right) +%
\frac{4Mc_{12}}{c_{\sigma }}\sqrt{1+p\xi_{N}}\sqrt{T}\xi_{N}\sqrt{\log(N\vee
T)}\log T\left(\log \log T\right) }\right\} =o(1),
\end{align*}%
where the first inequality holds by the union bound and Assumption \ref%
{ass:1}(i), and the second inequality holds by Assumption \ref{ass:1}(iii)
and the conditional Bernstein's inequality in Lemma \ref{Lem:Bern}(i) with
the fact that $\max_{i\in I_{2},t\in[T]}$ $\left\Vert \tilde{\Phi}%
_{it}^{(1)}\right\Vert_{2}\leq \frac{2M}{c_{\sigma }}\sqrt{1+p\xi_{N}^{2}}$
w.p.a.1. Here, we can apply the conditional Bernstein's inequality because $%
\tilde{\Phi}_{it}^{(1)}$ and $\mathbb{E}\left(\mathbf{1}\left\{
\epsilon_{it}>T^{1/4}\right\} \bigg |\mathscr{D}_{e_{i}}^{I_{1}}\right) $
are deterministic given $\left\{ \mathscr{D}_{e_{i}}^{I_{1}}\right\}_{i\in
I_{2},t\in[T]}$ so that the only randomness comes from $\{\epsilon_{it}\}_{t%
\in[T]}$. Furthermore, the joint distribution of $\{\epsilon_{it}\}_{t\in[T]%
} $ given $\mathscr{D}_{e_{i}}^{I_{1}}$ is the same as that given the $%
\sigma $-field generated by $\mathscr{D}_{e_{i}}$ due to the independence
structure assumed in Assumption \ref{ass:1}(i). Last, given the $\sigma$%
-field generated by $\mathscr{D}_{e_{i}}$, $\{\epsilon_{it}\}_{t\in[T]}$ is
strong mixing with mixing coefficient $\alpha_{i}(\cdot)$ as assumed by
Assumption \ref{ass:1}(iii). Similarly, we obtain that 
\begin{equation*}
\max_{i\in I_{2}}\left\vert \frac{1}{T}\sum_{t\in[T]}\left\Vert\tilde{\Phi}%
_{it}^{(1)}\right\Vert_{2}\left[ \mathbf{1}\left\{ \epsilon_{it}\leq
0\right\} -\mathbb{E}\left(\mathbf{1}\left\{ \epsilon_{it}\leq 0\right\} %
\bigg |\mathscr{D}_{e_{i}}^{I_{1}}\right) \right] \right\vert=O_{p}\left(%
\frac{\sqrt{\log (N\vee T)}\xi_{N}}{\sqrt{T}}\right) ,
\end{equation*}%
which implies 
\begin{equation}
\sup_{\left\vert s\right\vert >T^{1/4}}\max_{i\in I_{2}}\left\vert
A_{3,i}^{I}(s)\right\vert =O_{p}(\eta_{N}).  \label{B.10}
\end{equation}

For $|s|\leq T^{1/4}$, let $\mathbb{S}=[-T^{1/4},T^{1/4}]$ and divide $%
\mathbb{S}$ into $\mathbb{S}_{m}$ for $m=1,\cdots,n_{\mathbb{S}}$ such that $%
\left\vert s-\bar{s}\right\vert<\frac{\varepsilon}{T}$ for $s$ and $\bar{s}%
\in\mathbb{S}_{m}$ and $n_{\mathbb{S}}\asymp T^{5/4}$. Let $s_{m}\in\mathbb{S%
}_{m}$. For any $s\in\mathbb{S}_{m}$, we have 
\begin{align}  \label{B.11}
\left\Vert \frac{1}{T}\sum_{t\in[T]} A_{3,it}^{I}(s)
\right\Vert_{2}\leq\left\Vert \frac{1}{T}\sum_{t\in[T]}
A_{3,it}^{I}\left(s_{m}\right)\right\Vert_{2}+\left\Vert \frac{1}{T}\sum_{t%
\in[T]} \left[A_{3,it}^{I}\left(s\right) -A_{3,it}^{I}\left(s_{m}\right) %
\right]\right\Vert_{2},
\end{align}
such that % \begin{align*}
% 	&\left\Vert \frac{1}{T}\sum_{t\in[T]} \left[A_{3,it}^{I}\left(s\right) -A_{3,it}^{I}\left(s_{m}\right) \right]\right\Vert_{2}\\
% 	&=\left\Vert  \frac{1}{T}\sum_{t\in[T]}\tilde{\Phi}_{it}^{(1)}\left(\textbf{1}\left\{\epsilon_{it}\leq s\right\}-\textbf{1}\left\{\epsilon_{it}\leq s_{m}\right\}\right)- \tilde{\Phi}_{it}^{(1)}\left[ \mathbb{E}\left(\textbf{1}\left\{\epsilon_{it}\leq s\right\}-\textbf{1}\left\{\epsilon_{it}\leq s_{m}\right\}\bigg | \mathscr{D}^{I_{1}}_{e_{i}}\right)\right] \right\Vert_{2}\\
% 	& \leq  2\left\Vert\frac{1}{T}\sum_{t\in[T]}\tilde{\Phi}_{it}^{(1)} \mathbb{E}\left(\textbf{1}\left\{\epsilon_{it}\leq s\right\}-\textbf{1}\left\{\epsilon_{it}\leq s_{m}\right\}\bigg | \mathscr{D}^{I_{1}}_{e_{i}}\right)\right\Vert_{2}\quad\text{a.s.}\\
% 	&=2\left\Vert \frac{1}{T}\sum_{t\in[T]}\tilde{\Phi}_{it}^{(1)}\left[\mathfrak{F}_{it}\left(s\right) -\mathfrak{F}_{it}\left(s_{m}\right)\right]\right\Vert_{2}\\
% 	&\leq 2 \frac{1}{T}\sum_{t\in[T]}\mathfrak{f}_{it}\left(\bar{s}\right)\left\vert s-s_{m}\right\vert\left\Vert\tilde{\Phi}_{it}^{(1)}  \right\Vert_{2}\\
% 	& \leq  2\left\vert s-s_{m}\right\vert\sqrt{\frac{4M^{2}}{c_{\sigma}}+\frac{4M^{2}pC}{c_{\sigma}}}\\
% 	&\leq  \frac{2\varepsilon}{T}\sqrt{\frac{4M^{2}}{c_{\sigma}}+\frac{4M^{2}pC}{c_{\sigma}}} \quad\text{a.s.},
% 	\end{align*}
\begin{align}  \label{Lem:A:A3:diff}
&\max_{i\in I_{2},m \in [n_{\mathbb{S}}]}\sup_{s \in \mathbb{S}%
_{m}}\left\Vert \frac{1}{T}\sum_{t\in[T]} \left[A_{3,it}^{I}\left(s%
\right)-A_{3,it}^{I}\left(s_{m}\right) \right]\right\Vert_{2}  \notag \\
&=\max_{i\in I_{2},m \in [n_{\mathbb{S}}]}\sup_{s \in \mathbb{S}%
_{m}}\left\Vert \frac{1}{T}\sum_{t\in[T]}\tilde{\Phi}_{it}^{(1)}\left(%
\mathbf{1}\left\{\epsilon_{it}\leq s\right\}-\mathbf{1}\left\{\epsilon_{it}%
\leq s_{m}\right\}\right)- \tilde{\Phi}_{it}^{(1)}\left[ \mathbb{E}\left(%
\mathbf{1}\left\{\epsilon_{it}\leq s\right\}-\mathbf{1}\left\{\epsilon_{it}%
\leq s_{m}\right\}\bigg | \mathscr{D}^{I_{1}}_{e_{i}}\right)\right]%
\right\Vert_{2}  \notag \\
& \leq \max_{i\in I_{2},m \in [n_{\mathbb{S}}]}\frac{2}{T}\sum_{t\in[T]%
}\left\Vert \tilde{\Phi}_{it}^{(1)}\right\Vert_2 \mathbb{E}\left(\mathbf{1}%
\left\{\epsilon_{it}\in \left[s_m - \frac{\varepsilon}{T}, s_m + \frac{%
\varepsilon}{T}\right] \right\}\bigg | \mathscr{D}^{I_{1}}_{e_{i}}\right) 
\notag \\
& + \max_{i\in I_{2},m \in [n_{\mathbb{S}}]} \left\vert \frac{1}{T}%
\sum_{t\in [T]}\left\Vert \tilde{\Phi}_{it}^{(1)} \right\Vert_2 \left[\left(%
\mathbf{1}\left\{\epsilon_{it}\in \left[s_m - \frac{\varepsilon}{T}, s_m + 
\frac{\varepsilon}{T}\right] \right\}\right)- \mathbb{E}\left(\mathbf{1}%
\left\{\epsilon_{it}\in \left[s_m - \frac{\varepsilon}{T}, s_m + \frac{%
\varepsilon}{T}\right] \right\}\bigg | \mathscr{D}^{I_{1}}_{e_{i}}\right) %
\right] \right\vert  \notag \\
&:=\max_{i\in I_{2},m\in[n_{\mathbb{S}}]}\frac{2}{T}\sum_{t\in[T]}\left\Vert 
\tilde{\Phi}_{it}^{(1)}\right\Vert_2 \mathbb{E}\left(\mathbf{1}%
\left\{\epsilon_{it}\in \left[s_m - \frac{\varepsilon}{T}, s_m + \frac{%
\varepsilon}{T}\right] \right\}\bigg | \mathscr{D}^{I_{1}}_{e_{i}}\right)+%
\max_{i\in I_{2},m \in [n_{\mathbb{S}}]}\left\vert
A_{3,i}^{II}(m)\right\vert.
\end{align}
For the first term in (\ref{Lem:A:A3:diff}), we notice that 
\begin{align}  \label{Lem:A:A3_mean}
&\max_{i\in I_{2},m \in [n_{\mathbb{S}}]}\frac{2}{T}\sum_{t\in[T]}\left\Vert 
\tilde{\Phi}_{it}^{(1)}\right\Vert_2 \mathbb{E}\left(\mathbf{1}%
\left\{\epsilon_{it}\in \left[s_m - \frac{\varepsilon}{T}, s_m + \frac{%
\varepsilon}{T}\right] \right\}\bigg | \mathscr{D}^{I_{1}}_{e_{i}}\right) 
\notag \\
&=\max_{i\in I_{2},m \in [n_{\mathbb{S}}]}\frac{2}{T}\sum_{t\in[T]%
}\left\Vert \tilde{\Phi}_{it}^{(1)}\right\Vert_2\left(\mathfrak{F}_{it}(s_m+%
\frac{\varepsilon}{T})-\mathfrak{F}_{it}(s_m-\frac{\varepsilon}{T}) \right) 
\notag \\
&\leq \max_{i\in I_{2}}\frac{2}{T}\sum_{t\in[T]}\left\Vert \tilde{\Phi}%
_{it}^{(1)}\right\Vert_2\frac{2\varepsilon}{T}  \notag \\
&\leq\frac{6M\varepsilon}{c_{\sigma}T}\left(1+\sqrt{p}C\right)\quad\text{%
w.p.a.1},
\end{align}
where the first inequality is by mean-value theorem, and the second
inequality is due to the fact that 
\begin{align*}
\max_{i\in I_{2}}\frac{1}{T}\sum_{t\in[T]}\left\Vert \tilde{\Phi}%
_{it}^{(1)}\right\Vert_2&\leq\max_{i\in I_{2}}\frac{1}{T}\sum_{t\in[T]}\sqrt{%
\frac{4M^{2}}{c_{\sigma}}\left(1+p\max_{j\in[p]}\left\vert
X_{j,it}\right\vert^{2} \right)} \\
&\leq\max_{i\in I_{2}}\frac{2M}{c_{\sigma}}\frac{1}{T}\sum_{t\in[T]}\left(1+%
\sqrt{p}\max_{j\in[p]}\left\vert X_{j,it}\right\vert \right) \\
&\leq\frac{2M\left(1+\sqrt{p}C\right)}{c_{\sigma}}\quad\text{w.p.a.1}.
\end{align*}
For $A_{3,i}^{II}(m)$, let $A_{3,i}^{II}(m)=\frac{1}{T}\sum_{t\in[T]%
}A_{3,it}^{II}(m)$ with 
\begin{align*}
A_{3,it}^{II}(m)=\left\Vert\tilde{\Phi}_{it}^{(1)} \right\Vert_2 \left[\left(%
\mathbf{1}\left\{\epsilon_{it}\in \left[s_m - \frac{\varepsilon}{T}, s_m +%
\frac{\varepsilon}{T}\right] \right\}\right)- \mathbb{E}\left(\mathbf{1}%
\left\{\epsilon_{it}\in \left[s_m - \frac{\varepsilon}{T}, s_m + \frac{%
\varepsilon}{T}\right] \right\}\bigg | \mathscr{D}^{I_{1}}_{e_{i}}\right) %
\right].
\end{align*}
We first observe that 
\begin{align*}
\max_{i\in I_{2},m \in [n_{\mathbb{S}}],t\in[T]}Var\left(A_{3,it}^{II}(m)%
\bigg|\mathscr{D}^{I_{1}}_{e_{i}} \right)&\leq\max_{i \in I_2, t\in[T]%
}\left\Vert\tilde{\Phi}_{it}^{(1)} \right\Vert_2^2\mathbb{E}\left(\mathbf{1}%
\left\{\epsilon_{it}\in \left[s_m - \frac{\varepsilon}{T}, s_m + \frac{%
\varepsilon}{T}\right] \right\}\bigg | \mathscr{D}^{I_{1}}_{e_{i}}\right) \\
&\leq\max_{i \in I_2, t\in[T]}\left\Vert\tilde{\Phi}_{it}^{(1)}\right%
\Vert_2^2 \frac{2\varepsilon}{T}\lesssim\frac{\xi_{N}^{2}\varepsilon}{T},
\quad w.p.a.1,
\end{align*}
where the first inequality is by $Var(x)\leq \mathbb{E}(x^{2})$ for any
random variable $x$ and the second inequality is by the mean-value theorem
and Assumption \ref{ass:1}(v). Similarly, we have 
\begin{align*}
& \max_{i\in I_{2},m \in [n_{\mathbb{S}}], t\in[T]}\sum_{s=t+1}^{T}\left%
\vert Cov\left(A_{3,it}^{II}(m),A_{3,is}^{II}(m)\bigg|\mathscr{D}%
^{I_{1}}_{e_{i}} \right)\right\vert \\
&\leq\max_{i\in I_{2},m \in [n_{\mathbb{S}}],t\in[T]}\sum_{s=t+1}^{T}\mathbb{%
E}\left(\left\vert A_{3,it}^{II}(m)\right\vert^{3}\bigg|\mathscr{D}%
^{I_{1}}_{e_{i}} \right)^{1/3}\mathbb{E}\left(\left\vert
A_{3,is}^{II}(m)\right\vert^{3}\bigg|\mathscr{D}^{I_{1}}_{e_{i}}\right)^{1/3}%
\left[\alpha(t-s) \right]^{1/3} \\
&\lesssim \max_{i\in I_{2},m \in [n_{\mathbb{S}}],t\in[T]}\mathbb{E}%
\left(\left\vert A_{3,it}^{II}(m)\right\vert^{3}\bigg|\mathscr{D}%
^{I_{1}}_{e_{i}} \right)^{2/3}\lesssim\xi_{N}^{2}\left(\frac{\varepsilon}{T}%
\right)^{2/3}, \quad w.p.a.1.
\end{align*}
It follows that 
\begin{align*}
\max_{i\in I_{2},m \in [n_{\mathbb{S}}],t\in[T]}\left\{Var%
\left(A_{3,it}^{II}(m)\bigg|\mathscr{D}^{I_{1}}_{e_{i}}\right)+2%
\sum_{s=t+1}^{T}\left\vert Cov\left(A_{3,it}^{II}(m),A_{3,is}^{II}(m)\bigg|%
\mathscr{D}^{I_{1}}_{e_{i}}\right)\right\vert \right\}\leq
c_{13}\xi_{N}^{2}\left(\frac{\varepsilon}{T}\right)^{2/3}, \quad w.p.a.1.
\end{align*}
and $\max_{i\in I_{2},m \in [n_{\mathbb{S}}],t\in[T]}\left\vert
A_{3,it}^{II}(m)\right\vert\leq C_{13}\xi_{N}$ w.p.a.1 for positive
constants $c_{13}$ and $c_{14}$. Denote the events 
\begin{align*}
& \mathscr{A}_{3,N} = 
\begin{pmatrix}
& \max_{i\in I_{2},m \in [n_{\mathbb{S}}],t\in[T]}\left\{Var%
\left(A_{3,it}^{II}(m)\bigg|\mathscr{D}^{I_{1}}_{e_{i}}\right)+2%
\sum_{s=t+1}^{T}\left\vert Cov\left(A_{3,it}^{II}(m),A_{3,is}^{II}(m)\bigg|%
\mathscr{D}^{I_{1}}_{e_{i}}\right)\right\vert \right\}\leq
c_{13}\xi_{N}^{2}\left(\frac{\varepsilon}{T}\right)^{2/3} \\ 
& \text{and} \quad \max_{i\in I_{2},t\in[T]}\left\Vert \tilde{\Phi}%
_{it}^{(1)}\right\Vert_2\leq c_{14}\xi_{N}%
\end{pmatrix}
\\
& \mathscr{A}_{3,N,i} = 
\begin{pmatrix}
& \max_{m \in [n_{\mathbb{S}}],t\in[T]}\left\{Var\left(A_{3,it}^{II}(m)\bigg|%
\mathscr{D}^{I_{1}}_{e_{i}} \right)+2\sum_{s=t+1}^{T}\left\vert
Cov\left(A_{3,it}^{II}(m),A_{3,is}^{II}(m)\bigg|\mathscr{D}%
^{I_{1}}_{e_{i}}\right)\right\vert \right\}\leq c_{13}\xi_{N}^{2}\left(\frac{%
\varepsilon}{T}\right)^{2/3} \\ 
& \text{and} \quad \max_{t\in[T]}\left\Vert \tilde{\Phi}_{it}^{(1)}\right%
\Vert_2\leq c_{14}\xi_{N}%
\end{pmatrix}%
.
\end{align*}
Then, we have $\mathbb{P}(\mathscr{A}_{3,N}^c) \rightarrow 0$ and 
\begin{align*}
&\mathbb{P}\left(\max_{i\in I_{2},m \in [n_{\mathbb{S}}]}\left\vert
A_{3,i}^{II}(m)\right\vert >c_{12}\eta_{N}\right) \\
& \leq \mathbb{P}\left(\max_{i\in I_{2},m \in [n_{\mathbb{S}}]}\left\vert
A_{3,i}^{II}(m)\right\vert >c_{12}\eta_{N}, \mathscr{A}_{3,N}\right) + 
\mathbb{P}(\mathscr{A}_{3,N}^c) \\
& \leq \sum_{i \in [I_2]}\mathbb{P}\left(\max_{m \in [n_{\mathbb{S}%
}]}\left\vert A_{3,i}^{II}(m)\right\vert >c_{12}\eta_{N}, \mathscr{A}%
_{3,N}\right) + \mathbb{P}(\mathscr{A}_{3,N}^c) \\
& \leq \sum_{i \in [I_2]}\mathbb{P}\left(\max_{m \in [n_{\mathbb{S}%
}]}\left\vert A_{3,i}^{II}(m)\right\vert >c_{12}\eta_{N}, \mathscr{A}%
_{3,N,i}\right) + \mathbb{P}(\mathscr{A}_{3,N}^c) \\
& = \sum_{i \in [I_2]}\mathbb{E}\mathbb{P}\left(\max_{m \in [n_{\mathbb{S}%
}]}\left\vert A_{3,i}^{II}(m)\right\vert >c_{12}\eta_{N}\bigg| \mathscr{D}%
^{I_{1}}_{e_{i}}\right)\mathbf{1}\{\mathscr{A}_{3,N,i}\} + \mathbb{P}(%
\mathscr{A}_{3,N}^c) \\
&\leq\sum_{i\in I_{2},m \in [n_{\mathbb{S}}]}\exp\left(-\frac{%
c_{9}c_{12}^{2}T^{2}\eta_{N}^{2}}{c_{13}T\xi_{N}^{2}\left(\frac{\varepsilon}{%
T}\right)^{2/3}+c_{14}^{2}\xi_{N}^{2}+c_{12}c_{14}T\eta_{N}\xi_{N,1}\left(%
\log T\right)^{2}} \right)+o(1)=o(1),
\end{align*}
where second inequality is by the union bound, the first inequality is by $%
\mathscr{A}_{3,N} \subset \mathscr{A}_{3,N,i}$, the equality is by the fact
that $\mathscr{A}_{3,N,i}$ is $\mathscr{D}^{I_{1}}_{e_{i}}$ measurable, and
the last inequality is by Lemma \ref{Lem:Bern}(ii), the definition of $%
\mathscr{A}_{3,N,i}$, and the fact that $\{\epsilon_{it}\}_{t \in [T]}$ is
strong mixing given $\mathscr{D}^{I_{1}}_{e_{i}}$. This implies 
\begin{align*}
\max_{i\in I_{2},m \in [n_{\mathbb{S}}]}\sup_{s \in \mathbb{S}%
_{m}}\left\Vert \frac{1}{T}\sum_{t\in[T]} \left[A_{3,it}^{I}\left(s%
\right)-A_{3,it}^{I}\left(s_{m}\right) \right]\right\Vert_{2} =
O_p(\eta_{N}).
\end{align*}

Last, we turn to the first term of \eqref{B.11}. Denote $A_{3,i}^{I,k}(s_m)$
as the $k^{th}$ element of $A_{3,i}^{I,k}(s_m)$ and the event set $%
\mathscr{A}_{4,N,i} = \{\max_{t\in[T]}\left\Vert \tilde{\Phi}%
_{it}^{(1)}\right\Vert_2\leq c_{14}\xi_{N}\}$. Similarly, we have $\mathbb{P}%
(\cap_{i \in I_2}\mathscr{A}_{4,N,i}^c) = \mathbb{P}(\max_{i \in I_2, t\in[T]%
}\left\Vert \tilde{\Phi}_{it}^{(1)}\right\Vert_2 > c_{14}\xi_{N} )= o(1)$.
Following the same argument as above, we have 
\begin{align}  \label{B.12}
&\mathbb{P}\left(\max_{i\in I_{2},m \in [n_{\mathbb{S}}]} \left\vert
A_{3,i}^{I,k}(s_m)\right\vert>c_{12}\eta_{N}\right)  \notag \\
&\leq\sum_{m\in [n_{\mathbb{S}}],i\in I_2} \mathbb{E}\mathbb{P}%
\left(\left\vert A_{3,i}^{I,k}(s_{m})\right\vert>\frac{c_{12}}{2}\eta_{N}%
\bigg | \mathscr{D}^{I_{1}}_{e_{i}}\right)1\{\mathscr{A}_{4,N,i}\}+o(1) 
\notag \\
&\leq\sum_{m\in [n_{\mathbb{S}}],i\in I_2}\exp\left(-\frac{%
c_{9}c_{12}^{2}T^{2}\eta_{N}^{2}/4}{c_{14}^{2}T\xi_{N}^{2}+c_{12}c_{14}T%
\eta_{N}\xi_{N}\log T\left(\log\log T\right)/2 }\right)+o(1)=o(1),
\end{align}
where the second inequality combines Lemma \ref{Lem:Bern}(i). Combining (\ref%
{A3}), (\ref{B.10}) and (\ref{B.12}), we have 
\begin{align*}
\max_{i\in I_{2}} \frac{\left\vert A_{3,i}\right\vert}{\left\Vert\dot{\Delta}%
_{i,u} \right\Vert_{2}}=O_{p}(\eta_{N}).
\end{align*}

We now turn to $A_{2,i}$. Let $A_{2,it}^{I}=\left(\tau -\mathbf{1}%
\left\{w_{1,it}\leq 0\right\} \right) \tilde{\Phi}_{it}^{(1)}$ $-\mathbb{E}%
\left(\left(\tau -\mathbf{1}\left\{ w_{1,it}\leq 0\right\} \right) \tilde{\Phi}_{it}^{(1)}\bigg |\mathscr{D}_{e_{i}}^{I_{1}}\right) .$ Then $A_{2,i}=%
\frac{1}{T}\sum_{t=1}^{T}\left(A_{2,it}^{I}\right) ^{\prime}\dot{\Delta}%
_{i,u}.$ By conditional Bernstein's inequality and similarly as (\ref{B.12}%
), we can show that 
\begin{equation*}
\mathbb{P}\left\{ \max_{i\in I_{2}}\left\Vert \frac{1}{T}%
\sum_{t=1}^{T}A_{2,it}^{I}\right\Vert_{2}\geq c_{12}\eta_{N}\right\}=o(1),
\end{equation*}%
which implies $\max_{i\in I_{2}}\frac{\left\vert A_{2,i}\right\vert }{%
\left\Vert \dot{\Delta}_{i,u}\right\Vert_{2}}=O_{p}\left(\eta_{N}\right) .$
By similar arguments for $A_{3,i}$, we can also show that $\max_{i\in I_{2}}%
\frac{\left\vert A_{5,i}\right\vert }{\left\Vert \dot{\Delta}%
_{i,u}\right\Vert_{2}}=O_{p}\left(\eta_{N}\right) .$ For $A_{6,i}$, we note
that 
\begin{align*}
\max_{i\in I_{2}}\frac{\left\vert A_{6,i}\right\vert }{\left\Vert \dot{\Delta%
}_{i,u}\right\Vert_{2}}& =\max_{i\in I_{2}}\frac{\left\vert \frac{1}{T}%
\sum_{t=1}^{T}\int_{0}^{\tilde{\Phi}_{it}^{(1)\prime}\dot{\Delta}_{i,u}}%
\mathbb{E}\left(\mathbf{1}\left\{ w_{1,it}\leq s\right\} -\mathbf{1}%
\left\{\epsilon_{it}\leq s\right\} \bigg |\mathscr{D}_{e_{i}}^{I_{1}}%
\right)ds\right\vert }{\left\Vert \dot{\Delta}_{i,u}\right\Vert_{2}} \\
& =\max_{i\in I_{2}}\frac{\left\vert \frac{1}{T}\sum_{t=1}^{T}\int_{0}^{%
\tilde{\Phi}_{it}^{(1)\prime}\dot{\Delta}_{i,u}}\left[ \mathfrak{F}%
_{it}\left(u_{i}^{0\prime}\left(\tilde{\Phi}_{it}^{(1)}-\Phi_{it}^{0}\right)
+s\right) -\mathfrak{F}_{it}(s)\right] ds\right\vert }{\left\Vert \dot{\Delta%
}_{i,u}\right\Vert_{2}} \\
& \leq \max_{i\in I_{2}}\frac{\left\vert \frac{1}{T}\sum_{t=1}^{T}\int_{0}^{%
\tilde{\Phi}_{it}^{(1)\prime}\dot{\Delta}_{i,u}}u_{i}^{0\prime}\left(\tilde{\Phi}_{it}^{(1)}-\Phi_{it}^{0}\right) \mathfrak{f}_{it}(s)ds\right\vert }{%
\left\Vert \dot{\Delta}_{i,u}\right\Vert_{2}} \\
& \leq \max_{i\in I_{2}}\frac{\frac{\bar{\mathfrak{f}}}{T}%
\sum_{t=1}^{T}\left\vert \tilde{\Phi}_{it}^{(1)\prime}\dot{\Delta}%
_{i,u}\right\vert \left\vert u_{i}^{0\prime}\left(\tilde{\Phi}_{it}^{(1)}-\Phi_{it}^{0}\right) \right\vert }{\left\Vert \dot{\Delta}%
_{i,u}\right\Vert_{2}} \\
& \leq \bar{\mathfrak{f}}\max_{i\in I_{2}}\sqrt{\frac{1}{T}\sum_{t\in[T]%
}\left\Vert \tilde{\Phi}_{it}^{(1)}\right\Vert ^{2}}\sqrt{\frac{1}{T}\sum_{t%
\in[T]}\left\Vert \tilde{\Phi}_{it}^{(1)}-\Phi_{it}^{0}\right\Vert ^{2}}%
\left\Vert u_{i}^{0}\right\Vert_{2} \\
& =O_{p}\left(\eta_{N}\right) ,
\end{align*}%
where the first inequality is by mean-value theorem and the other
inequalities holds by similar reasons as those in (\ref{Lem:A:A1}).

Last, for $A_{7,i}$, we have 
\begin{align*}
&\frac{1}{T}\sum_{t\in[T]}\left[\tilde{\Phi}_{it}^{(1)\prime}\dot{\Delta}%
_{i,u}\left(\mathbf{1}\left\{\epsilon_{it}\leq 0 \right\}-\mathbf{1}%
\left\{w_{1,it}\leq 0 \right\} \right) \right] \\
&=\frac{1}{T}\sum_{t\in[T]}\mathbb{E}\left[\tilde{\Phi}_{it}^{(1)\prime}\dot{%
\Delta}_{i,u}\left(\mathbf{1}\left\{\epsilon_{it}\leq 0 \right\}-\mathbf{1}%
\left\{w_{1,it}\leq 0 \right\} \right)\bigg|\mathscr{D}^{I_{1}}_{e_{i}} %
\right] \\
&+\frac{1}{T}\sum_{t\in[T]}\left\{\left[\tilde{\Phi}_{it}^{(1)\prime}\dot{%
\Delta}_{i,u}\left(\mathbf{1}\left\{\epsilon_{it}\leq 0 \right\}-\mathbf{1}%
\left\{w_{1,it}\leq 0 \right\} \right) \right]-\mathbb{E}\left[\tilde{\Phi}%
_{it}^{(1)\prime}\dot{\Delta}_{i,u}\left(\mathbf{1}\left\{\epsilon_{it}\leq
0\right\}-\mathbf{1}\left\{w_{1,it}\leq 0 \right\} \right)\bigg|\mathscr{D}%
^{I_{1}}_{e_{i}} \right]\right\} \\
&:=A_{7,i}^{I}+A_{7,i}^{II},
\end{align*}
Then, following the same analyses of $A_{1,i}$ and $A_{3,i}$, we have 
\begin{align*}
\max_{i\in I_{2}}\frac{\left\vert A_{7,i}^{I}\right\vert}{ \left\Vert \dot{%
\Delta}_{i,u}\right\Vert_2 } =O_{p}(\eta_{N}) \quad \text{and}\quad
\max_{i\in I_{2}}\frac{\left\vert A_{7,i}^{II} \right\vert}{\left\Vert \dot{%
\Delta}_{i,u}\right\Vert_2 }=O_{p}(\eta_{N}),
\end{align*}
which implies $\max_{i\in I_{2}} \frac{\left\vert A_{7,i} \right\vert}{%
\left\Vert \dot{\Delta}_{i,u}\right\Vert_2 }=O_{p}(\eta_{N})$.
\end{proof}

Recall that event $\mathscr{A}_{4}=\left\{\max_{i\in I_{2}}\left\Vert
O_{j}^{(1)\prime}\dot{u}_{i,j}^{(1)}-u_{i,j}^{0}\right\Vert_{2}=O\left(%
\eta_{N}\right),\forall j\in[p]\cup\{0\}\right\}$ where we define $%
q_{i}^{III}=\inf_{\Delta}\frac{\left[\frac{1}{N_{2}}\sum_{i\in I_{2}}\left(%
\tilde{\Psi}_{it}^{(1)\prime}\Delta\right)^{2}\right]^{\frac{3}{2}}}{\frac{1%
}{N_{2}}\sum_{i\in I_{2}}\left\vert\tilde{\Psi}_{it}^{(1)\prime}\Delta
\right\vert^{3}}$.

\begin{lemma}
\label{Lem:B}. Suppose Assumptions \ref{ass:1}--\ref{ass:5} hold. Then for $%
\{B_{1,t},\cdots ,B_{6,t}\}_{t\in[T]}$ defined in \eqref{B:obj diff}, for
any constant $0<c_{11}<\min (\frac{3\underline{\mathfrak{f}}}{\bar{\mathfrak{%
f}}^{\prime }},1)$, we have 
\begin{align*}
& \max_{t\in[T]}\frac{|B_{m,t}|}{\left\Vert \dot{\Delta}_{t,v}\right\Vert_{2}%
}=O_{p}\left(\eta_{N}\right) \quad \forall m\in\left\{ 1,2,3,5,6\right\} , \\
& |B_{4,t}|\geq \min \left(\frac{\left(3c_{11}^{2}\underline{\mathfrak{f}}%
-c_{11}^{3}\bar{\mathfrak{f}}^{\prime }\right) c_{\psi }\left\Vert \dot{%
\Delta}_{t,v}\right\Vert_{2}^{2}}{12},\frac{\left(3c_{11}^{2}\underline{%
\mathfrak{f}}-c_{11}^{3}\bar{\mathfrak{f}}^{\prime }\right) \sqrt{c_{\psi }}%
q_{i}^{III}\left\Vert \dot{\Delta}_{t,v}\right\Vert_{2}}{6\sqrt{2}}%
\right),\quad \forall t\in[T].
\end{align*}
\end{lemma}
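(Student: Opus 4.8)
The plan is to establish Lemma~\ref{Lem:B} by transcribing the proof of Lemma~\ref{Lem:A} with the roles of the cross-sectional index $i$ and the time index $t$ interchanged: the average $\frac1{N_2}\sum_{i\in I_2}$ takes the place of $\frac1T\sum_{t\in[T]}$, the conditional independence across $i$ given $\mathscr{D}$ of Assumption~\ref{ass:1}(i) takes the place of the conditional strong mixing in $t$, and the $\Psi_t$-part of Assumption~\ref{ass:5} (i.e.\ $\liminf_T\min_{t\in[T]}\lambda_{\min}(\Psi_t)\ge c_\psi$) takes the place of the $\Phi_i$-bound used in Lemma~\ref{Lem:phi eig}. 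As preliminary ingredients I would first record the column-wise analogues of Lemmas~\ref{Lem:bounded u&v_tilde}(iii) and~\ref{Lem:phi eig}: combining Theorem~\ref{Thm2}(i) ($\max_{i\in I_2}\|\dot u_{i,j}^{(1)}-O_j^{(1)}u_{i,j}^0\|_2=O_p(\eta_N)$), Lemma~\ref{Lem:bounded u&v_tilde}(i), and Assumption~\ref{ass:1}(iv) gives $\max_{t\in[T]}\frac1{N_2}\sum_{i\in I_2}\|\dot\Psi_{it}^{(1)}\|_2^2=O_p(1)$, $\max_{t\in[T]}\frac1{N_2}\sum_{i\in I_2}\|\dot\Psi_{it}^{(1)}-\Psi_{it}^0\|_2^2=O_p(\eta_N^2)$, and hence $\min_{t\in[T]}\lambda_{\min}(\dot\Psi_t^{(1)})\ge c_\psi/2$ w.p.a.1.

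The lower bound on $B_{4,t}$ is the analogue of the argument for $A_{4,i}$. Knight's identity reduces $B_{4,t}$ to $\frac1{N_2}\sum_{i\in I_2}\int_0^{\dot\Psi_{it}^{(1)\prime}\dot\Delta_{t,v}}\{\mathfrak F_{it}(s)-\mathfrak F_{it}(0)\}\,ds$ (with $\mathfrak F_{it}$ the relevant conditional c.d.f.\ of $\epsilon_{it}$); Lemma~\ref{Lem:ind diff}(ii) shrinks the upper limit by a factor $c_{11}\in(0,\min(3\underline{\mathfrak f}/\bar{\mathfrak f}^{\prime},1))$, a second-order Taylor expansion isolates $s\,\mathfrak f_{it}(0)+\tfrac{s^2}2\mathfrak f_{it}^{\prime}(\tilde s)$, and the density lower bound $\underline{\mathfrak f}$ and derivative bound $\bar{\mathfrak f}^{\prime}$ (from the density-regularity conditions, Assumptions~\ref{ass:1}(vi)--(viii) and~\ref{ass:15}) give a quadratic-minus-cubic lower bound. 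The same two-case split as for $A_{4,i}$ — comparing $q_i^{III}$ with $[\frac1{N_2}\sum_{i\in I_2}(\dot\Psi_{it}^{(1)\prime}\dot\Delta_{t,v})^2]^{1/2}$ and exploiting convexity of $\Delta\mapsto\frac1{N_2}\sum_{i\in I_2}\int_0^{\dot\Psi_{it}^{(1)\prime}\Delta}\{\mathfrak F_{it}(s)-\mathfrak F_{it}(0)\}\,ds$ — together with $[\frac1{N_2}\sum_{i\in I_2}(\dot\Psi_{it}^{(1)\prime}\dot\Delta_{t,v})^2]^{1/2}\ge\sqrt{\lambda_{\min}(\dot\Psi_t^{(1)})}\,\|\dot\Delta_{t,v}\|_2\ge\sqrt{c_\psi/2}\,\|\dot\Delta_{t,v}\|_2$ (w.p.a.1) yields exactly the claimed $\min(\cdot,\cdot)$ bound.

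Several of the upper bounds are routine. For $B_{1,t}=\frac1{N_2}\sum_{i\in I_2}\Psi_{it}^{0\prime}\dot\Delta_{t,v}(\tau-\mathbf 1\{\epsilon_{it}\le0\})$ the summands are, given $\mathscr{D}_e$, mean-zero (Assumption~\ref{ass:1}(ii)) and independent across $i$ (Assumption~\ref{ass:1}(i)) with norm $O(\xi_N)$ (by $\max_i\|u_{i,j}^0\|_2\le M$ and $|X_{j,it}|\le\xi_N$), so the independent-data case of Lemma~\ref{Lem:Bern} plus a union bound over $t$ gives $\max_{t\in[T]}\|B_{1,t}\|_2/\|\dot\Delta_{t,v}\|_2=O_p(\sqrt{\log(N\vee T)/N}\,\xi_N)=O_p(\eta_N)$. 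For $B_{2,t}$, Cauchy--Schwarz and the bound $\frac1{N_2}\sum_{i\in I_2}\|\dot\Psi_{it}^{(1)}-\Psi_{it}^0\|_2^2=O_p(\eta_N^2)$ give $O_p(\eta_N)$. The conditional-mean parts of $B_{3,t}$ and $B_{6,t}$ (obtained after splitting $\mathbf 1\{\epsilon_{it}\le0\}-\mathbf 1\{w_{3,it}\le0\}$, resp.\ $\mathbf 1\{w_{3,it}\le s\}-\mathbf 1\{\epsilon_{it}\le s\}$, into its conditional mean and a centered remainder) are handled by a mean-value expansion of $\mathfrak F_{it}$, the identity $w_{3,it}-\epsilon_{it}=-v_t^{0\prime}(\dot\Psi_{it}^{(1)}-\Psi_{it}^0)$, and Cauchy--Schwarz, exactly as for $A_{1,i}$ and $A_{6,i}$; no extra $\xi_N$ appears because the $\dot\Psi^{(1)}$ factor enters only through $\sqrt{\frac1{N_2}\sum_{i\in I_2}\|\dot\Psi_{it}^{(1)}\|_2^2}=O_p(1)$.

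The main obstacle is the centered remainders in $B_{3,t},B_{5,t},B_{6,t}$. Here the regressors $\dot u_{i,j}^{(1)}$ entering $\dot\Psi_{it}^{(1)}$ are \emph{not} independent of $\epsilon_{it}$ — in contrast to Lemma~\ref{Lem:A}, where the Step-1 regressors $\tilde v_{t,j}^{(1)}$ come from the disjoint subsample $I_1$ — so a conditional Bernstein inequality over $i$ cannot be applied directly. I would resolve this exactly as in the proof of Theorem~\ref{Thm2}(iii) (see~\eqref{Wdot_1}): restrict to the event $\{\max_{i\in I_2}\|\dot u_{i,j}^{(1)}-O_j^{(1)}u_{i,j}^0\|_2\le C\eta_N\}$, which holds w.p.a.1 by Theorem~\ref{Thm2}(i); replace $\dot u_{i,j}^{(1)}$ by the centre of the cell containing it in an $\varepsilon/N$-net of the $C\eta_N$-ball around $O_j^{(1)}u_{i,j}^0$; on each cell the resulting frozen summand is measurable with respect to a $\sigma$-field that leaves only $\{\epsilon_{it}\}_{i\in I_2}$ random, and after subtracting its conditional mean it is a centered, independent-across-$i$ array to which Lemma~\ref{Lem:Bern} applies. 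One then union-bounds over the $\asymp N^{\,\mathrm{const}}$ cells, over $t\in[T]$, and — for $B_{5,t}$, after the change of variables used for $A_{3,i}$ reduces it to a supremum over the integration variable $s$ of the centered process $\frac1{N_2}\sum_{i\in I_2}\dot\Psi_{it}^{(1)}[(\mathbf 1\{\epsilon_{it}\le s\}-\mathbf 1\{\epsilon_{it}\le0\})-\mathbb E(\cdot\mid\mathscr{D}_{e_{it}})]$ — over a discretization of $s$, controlling the large-$|s|$ tail by a Markov bound on the conditional second moment of $\epsilon_{it}$. The freezing error is absorbed using the smallness of $\|\dot\Psi_{it}^{(1)}-(\text{frozen})\|_2$ on the net together with the boundedness of the conditional density. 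Each centered remainder comes out $O_p(\eta_N)$ (the curvature-type ones even smaller), which with the $B_{4,t}$ lower bound completes the proof; the only hard part is the bookkeeping of these nested union bounds and remainder terms, which is an adaptation rather than a new argument.
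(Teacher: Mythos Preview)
Your overall plan and the treatment of $B_{1,t}$, $B_{2,t}$, $B_{4,t}$ match the paper. You also correctly identify the central obstacle for $B_{3,t}$, $B_{5,t}$, $B_{6,t}$: the regressors $\dot u_{i,j}^{(1)}$ entering $\dot\Psi_{it}^{(1)}$ are correlated with $\epsilon_{it}$, unlike the row-wise setup.

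Your proposed resolution, however---an $\varepsilon/N$-net on $\dot u_{i,j}^{(1)}$ in the style of the $\dot{\mathbb W}_i^{II}$ argument---does not transfer. In Theorem~\ref{Thm2}(iii) the quantity being netted, $\dot\Delta_{i,u}$, is a \emph{single} vector held fixed while the sum runs over $t$, so the net has polynomial size. Here the sum runs over $i\in I_2$ and each $i$ carries its own $\dot u_{i,j}^{(1)}$; freezing all of them simultaneously requires a product net of size $(N^{\text{const}})^{N_2}$, and the ``$\asymp N^{\text{const}}$ cells'' union bound you invoke collapses.

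The paper sidesteps the issue by a simpler device that avoids any net on $\dot u_{i,j}^{(1)}$. On the event $\mathscr B_{N,1}(M)=\{\max_{i\in I_2,j}\|\dot u_{i,j}^{(1)}-O_j^{(1)}u_{i,j}^0\|_2\le M\eta_N\}$, the indicator difference driving $B_{3,t}$ is dominated by $\|\Psi_{it}^0\|_2\,\mathbf 1\{|\epsilon_{it}|\le M\eta_N\sum_j\|v_{t,j}^0\|_2\}$, whose threshold no longer depends on $\dot u_{i,j}^{(1)}$ at all. One then splits into a conditional mean ($O_p(\eta_N)$ by the density bound) and a centered remainder which, given $\{\mathscr D_{e_{it}}\}_{i\in I_2}$, is a genuinely independent-across-$i$ array; Hoeffding plus a union bound over $t$ finishes $B_{3,t}$. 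The same trick handles $B_{6,t}$: first replace the upper integration limit $\dot\Psi_{it}^{(1)\prime}\dot\Delta_{t,v}$ by $\Psi_{it}^{0\prime}\dot\Delta_{t,v}$ at cost $O_p(\eta_N)$, then bound the shifted indicator by $\mathbf 1\{|\epsilon_{it}-s|\le M\eta_N\sum_j\|v_{t,j}^0\|_2\}$ and deal with the sup over $s$ by monotonicity for large $|s|$ and a scalar net for $|s|\le T^{1/4}$. For $B_{5,t}$ the paper first splits the integral at $\Psi_{it}^{0\prime}\dot\Delta_{t,v}$; the remainder is $O_p(\eta_N)$ by the crude bound on $\|\dot\Psi_{it}^{(1)}-\Psi_{it}^0\|_2$, and the main piece involves only $\Psi_{it}^0$, so the $A_{3,i}$-style scalar-$s$ discretization applies with Hoeffding replacing Bernstein.
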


\begin{proof}
We first deal with $B_{4,t}$. Let $w_{4,it}=\dot{\Psi}_{it}^{(1)\prime}\dot{%
\Delta}_{t,v}$. Following the same arguments as used to derive the lower
bound for $A_{4,i}$ in the proof of Lemma \ref{Lem:A} by replacing $w_{2,it}$
with $w_{4,it}$, we can show that, for $t\in[T]$ and any constants $%
c_{11}<\min (\frac{3\underline{\mathfrak{f}}}{\bar{\mathfrak{f}}^{\prime }}%
,1)$, 
\begin{equation*}
\left\vert B_{4,t}\right\vert \geq \min \left(\frac{\left(3c_{11}^{2}%
\underline{\mathfrak{f}}-c_{11}^{3}\bar{\mathfrak{f}}^{\prime
}\right)c_{\psi }\left\Vert \dot{\Delta}_{t,v}\right\Vert_{2}^{2}}{12},\frac{%
\left(3c_{11}^{2}\underline{\mathfrak{f}}-c_{11}^{3}\bar{\mathfrak{f}}%
^{\prime}\right) \sqrt{c_{\psi }}q_{i}^{III}\left\Vert \dot{\Delta}%
_{t,v}\right\Vert_{2}}{6\sqrt{2}}\right) .
\end{equation*}

For $B_{1,t}$, we have 
\begin{equation*}
B_{1,t}=\left(\frac{1}{N_{2}}\sum_{i\in I_{2}}\Psi_{it}^{0}\left(\tau -%
\mathbf{1}\left\{ \epsilon_{it}\leq 0\right\} \right) \right) ^{\prime}\dot{%
\Delta}_{t,v}:=\left(\frac{1}{N_{2}}\sum_{i\in
I_{2}}B_{1,it}^{I}\right)^{\prime}\dot{\Delta}_{t,v}.
\end{equation*}%
Conditional on the fixed effects $\left\{ V_{j}^{0}\right\}_{j\in [p]\cup
\{0\}}$ and $\left\{ W_{j}^{0}\right\}_{j\in [p]}$, the randomness in $%
B_{1,t}$ is from $\left\{ \epsilon_{it}\right\}_{i\in I_{2},t\in[T]}$ and $%
\left\{ e_{j,it}\right\}_{j\in [p],i\in I_{2},t\in[T]}$, which are
independent across $i$. Owing to this, by conditional Hoeffding's
inequality, we can show that 
\begin{align}
& \mathbb{P}\left\{ \max_{t\in[T]}\frac{1}{N_{2}}\left\Vert \sum_{i\in
I_{2}}B_{1,it}^{I,k}\right\Vert_{2}>c_{15}\eta_{N}\bigg|\mathscr{D}\right\}
\leq \sum_{t\in[T]}\mathbb{P}\left\{ \left\Vert\sum_{i\in
I_{2}}B_{1,it}^{I,k}\right\Vert_{2}>c_{15}N_{2}\eta_{N}\bigg|\mathscr{D}%
\right\}  \notag  \label{Lem:B:B1} \\
& \leq 2\sum_{t\in[T]}\exp \left(-\frac{c_{15}^{2}N_{2}^{2}\eta_{N}^{2}}{%
4M^{2}\xi_{N}^{2}N_{2}}\right) =o(1),
\end{align}%
with $B_{1,it}^{I,k}$ being the $k^{th}$ element in $B_{1,it}^{I}$, $c_{15}$
is a positive constant, where the second inequality is by Hoeffding's
inequality with the fact that $\max_{i\in I_{2},t\in[T]}\left\vert
B_{1,it}^{I,k}\right\vert \leq M\xi_{N}$ a.s. by Assumption \ref{ass:1}(v)
and Lemma \ref{Lem:bounded u&v_tilde}(i). It follows that $\max_{t\in[T]}%
\frac{\left\vert B_{1,t}\right\vert }{\left\Vert \dot{\Delta}%
_{t,v}\right\Vert_{2}}=O_{p}\left(\eta_{N}\right).$ If we use the
conditional Bernstein's inequality for the independent sequence rather than
the Hoeffding's inequality above, we can show that $\max_{t\in[T]}\frac{1}{%
N_{2}}\left\Vert \sum_{i\in I_{2}}B_{1,it}^{I,k}\right\Vert_{2}=O_{p}\left(%
\sqrt{\frac{\log N\vee T}{N}} \right)$, but here we only need to show the
uniform convergence rate to be $\eta_{N}$.

Let $X_{0,it}=1.$ As for $B_{2,t}$, note that 
\begin{align}
\max_{t\in[T]}\frac{\left\vert B_{2,t}\right\vert }{\left\Vert \dot{\Delta}%
_{t,v}\right\Vert_{2}}& =\max_{t\in[T]}\frac{\left\vert\left[ \frac{1}{N_{2}}%
\sum_{i\in I_{2}}\left(\dot{\Psi}_{it}^{(1)}-\Psi_{it}^{0}\right) \left(\tau
-\mathbf{1}\left\{ \epsilon_{it}\leq 0\right\}\right) \right] ^{\prime}\dot{%
\Delta}_{t,v}\right\vert }{\left\Vert \dot{\Delta}_{t,v}\right\Vert_{2}} 
\notag \\
& \leq \max_{t\in[T]}\left\Vert \frac{1}{N_{2}}\sum_{i\in I_{2}}\left(\dot{%
\psi}_{it}^{(1)}-\Psi_{it}^{0}\right) \left(\tau -\mathbf{1}\left\{
\epsilon_{it}\leq 0\right\} \right) \right\Vert_{2}  \notag \\
& \leq \max_{t\in[T]}\frac{1}{N_{2}}\sum_{i\in I_{2}}\left\Vert \dot{\Psi}%
_{it}^{(1)}-\Psi_{it}^{0}\right\Vert_{2}=\max_{t\in[T]}\frac{1}{N_{2}}%
\sum_{i\in I_{2}}\sqrt{\sum_{j=0}^{p}\left\Vert O_{0}^{(1)\prime}\dot{u}%
_{i,j}^{(1)}-u_{i,j}^{0}\right\Vert_{2}^{2}X_{j,it}^{2}}  \notag \\
& \leq \max_{i\in I_{2}}\max_{j\in [p]\cup \{0\}}\left(||O_{0}^{(1)\prime}%
\dot{u}_{i,j}^{(1)}-u_{i,j}^{0}||_{2}\right) \left[ \frac{1}{N_{2}}%
\sum_{i\in I_{2}}\sum_{j=0}^{p}X_{j,it}^{2}\right] ^{1/2}=O_{p}(\eta_{N}),
\label{Lem:B:B2}
\end{align}%
where the first inequality is by Cauchy's inequality, the second inequality
is by Jensen's inequality, and the last equality is by Theorem \ref{Thm2}(i)
and Assumption \ref{ass:1}(iv).

Next, we deal with $B_{3,t}$. 
% Recall $\mathscr{D}^{I_{1}}_{e_{i}}$ be the $\sigma$-field generated by 
% \begin{align*}
% \left\{\epsilon_{ls},e_{j,ls}\right\}_{j\in[p],l\in I_{1},s\in[T]}\bigcup\left\{e_{j,it}\right\}_{j\in[p]}\bigcup\left\{V_{j}^{0}\right\}_{j\in[p]\cup\{0\}}\bigcup \left\{W_{j}^{0}\right\}_{j\in[p]}.    
% \end{align*}
Following similar arguments as used in (\ref{Lem:B:B2}), we obtain that 
\begin{equation*}
\max_{t\in[T]}\frac{\left\vert \frac{1}{N_{2}}\sum_{i\in I_{2}}\left(\dot{\Psi}_{it}^{(1)}-\Psi_{it}^{0}\right) ^{\prime}\dot{\Delta}_{t,v}\left[ 
\mathbf{1}\left\{ \epsilon_{it}\leq 0\right\} -\mathbf{1}\left\{
w_{3,it}\leq 0\right\} \right] \right\vert }{\left\Vert \dot{\Delta}%
_{t,v}\right\Vert_{2}}=O_{p}\left(\eta_{N}\right) ,
\end{equation*}%
which implies 
\begin{align*}
\max_{t\in[T]}\frac{\left\vert B_{3,t}\right\vert }{2\left\Vert \dot{\Delta}%
_{t,v}\right\Vert_{2}}& =\max_{t\in[T]}\frac{\left\vert\left\{ \frac{1}{N_{2}%
}\sum_{i\in I_{2}}\Psi_{it}^{0}\left[ \mathbf{1}\left\{ \epsilon_{it}\leq
0\right\} -\mathbf{1}\left\{ w_{3,it}\leq 0\right\} \right] \right\}
^{\prime}\dot{\Delta}_{t,v}\right\vert }{\left\Vert \dot{\Delta}%
_{t,v}\right\Vert_{2}}+O_{p}\left(\eta_{N}\right) \\
& \leq \max_{t\in[T]}\left\Vert \frac{1}{N_{2}}\sum_{i\in I_{2}}\Psi_{it}^{0}%
\left[ \mathbf{1}\left\{ \epsilon_{it}\leq 0\right\} -\mathbf{1}\left\{
w_{3,it}\leq 0\right\} \right] \right\Vert_{2}+O_{p}\left(\eta_{N}\right) \\
& \leq \max_{t\in[T]}\frac{1}{N_{2}}\sum_{i\in I_{2}}\left\Vert
\Psi_{it}^{0}\right\Vert_{2}\mathbf{1}\left\{ |\epsilon_{it}|\leq \sum_{j\in[%
p]\cup \{0\}}\left\Vert v_{t,j}^{0}\right\Vert_{2}\left\Vert \dot{u}%
_{i,j}^{(1)}-O_{j}^{(1)}u_{i,j}^{0}\right\Vert_{2}\right\}
+O_{p}\left(\eta_{N}\right) \\
& \equiv \Xi_{NT}+O_{p}\left(\eta_{N}\right) .
\end{align*}%
Now define the event set $\mathscr{B}_{N,1}(M)=\{\max_{i\in I_{2},j\in[p]%
\cup \{0\}}||\dot{u}_{i,j}^{(1)}-O_{j}^{(1)}u_{i,j}^{0}||\leq M\eta_{N}\}$.
Then, Theorem \ref{Thm2}(i) implies, for any $e>0$, there is a sufficiently
large $M$ such that $\mathbb{P}(\mathscr{B}_{N,1}^{c}(M))\leq e$. Recall
that $\mathscr{D}_{e_{it}}$ is the $\sigma$-field generated by $e_{it}\cup
\left\{ V_{j}^{0}\right\}_{j\in [p]\cup \{0\}}\cup \left\{
W_{j}^{0}\right\}_{j\in [p]}$. Then, we have 
\begin{align}
\mathbb{P}\left(\Xi_{NT}\geq C\eta_{N}\right) & \leq \mathbb{P}%
\left(\Xi_{NT}\geq C\eta_{N},\mathscr{B}_{N,1}\right) +e  \notag \\
& \leq \mathbb{P}\left(\frac{1}{N_{2}}\sum_{i\in I_{2}}\left\Vert
\Psi_{it}^{0}\right\Vert_{2}\mathbf{1}\left\{ |\epsilon_{it}|\leq
M\eta_{N}\sum_{j\in [p]\cup \{0\}}\left\Vert
v_{t,j}^{0}\right\Vert_{2}\right\} \geq C\eta_{N}\right) +e  \notag \\
& \leq \mathbb{P}\left(\max_{t\in[T]}B_{3,t}^{I}\geq C\eta_{N}+\max_{t\in
T}B_{3,t}^{II}\right) +e,  \label{Lem:B:B3_2}
\end{align}%
where 
\begin{align*}
B_{3,t}^{I}& =\frac{1}{N_{2}}\sum_{i\in I_{2}}\left[ \left\Vert
\Psi_{it}^{0}\right\Vert_{2}\mathbf{1}\left\{ |\epsilon_{it}|\leq
M\eta_{N}\sum_{j\in [p]\cup \{0\}}\left\Vert
v_{t,j}^{0}\right\Vert_{2}\right\} -B_{3,t}^{II}\right] \quad \text{and} \\
B_{3,t}^{II}& =\frac{1}{N_{2}}\sum_{i\in I_{2}}\mathbb{E}\left(\left\Vert%
\Psi_{it}^{0}\right\Vert_{2}\mathbf{1}\left\{ |\epsilon_{it}|\leq
M\eta_{N}\sum_{j\in [p]\cup \{0\}}\left\Vert
v_{t,j}^{0}\right\Vert_{2}\right\} \bigg|\{\mathscr{D}_{e_{it}}\}_{i\in
I_{2}}\right) \\
& =\frac{1}{N_{2}}\sum_{i\in I_{2}}\mathbb{E}\left(\left\Vert
\Psi_{it}^{0}\right\Vert_{2}\mathbf{1}\left\{ |\epsilon_{it}|\leq
M\eta_{N}\sum_{j\in [p]\cup \{0\}}\left\Vert
v_{t,j}^{0}\right\Vert_{2}\right\} \bigg|\mathscr{D}_{e_{it}}\right) ,
\end{align*}%
and the second equality for $B_{3,t}^{II}$ holds by Assumption \ref{ass:1}%
(i). Following this, we show that 
\begin{align}
\max_{t\in[T]}B_{3,t}^{II}& =\max_{t\in[T]}\frac{1}{N_{2}}\sum_{i\in
I_{2}}\left\Vert \Psi_{it}^{0}\right\Vert_{2}\left[ \mathfrak{F}%
_{it}\left(M\eta_{N}\sum_{j\in [p]\cup \{0\}}\left\Vert
v_{t,j}^{0}\right\Vert_{2}\right) -\mathfrak{F}_{it}\left(-M\eta_{N}\sum_{j%
\in [p]\cup \{0\}}\left\Vert v_{t,j}^{0}\right\Vert_{2}\right) \right] 
\notag  \label{Lem:B:B3_3} \\
& \leq \max_{t\in[T]}\frac{1}{N_{2}}\sum_{i\in I_{2}}\sqrt{1+p\max_{j\in
[p]}\left\vert X_{j,it}\right\vert ^{2}}\frac{2(1+p)M^{2}}{c_{\sigma }}%
\eta_{N}  \notag \\
& \leq \frac{2(1+p)^{2}M^{2}C}{c_{\sigma }}\eta_{N}\quad \text{a.s.},
\end{align}%
where $\mathfrak{F}_{it}(\cdot)$ is the conditional CDF of $\epsilon_{it}$
given $\mathscr{D}_{e_{i}}$ who also has bounded PDF by Assumption \ref%
{ass:1}(vii), the first inequality is by mean-value theorem and facts that $%
\left\Vert \Psi_{it}^{0}\right\Vert_{2}^{2}\leq M^{2}\left(1+p\max_{j\in[p]%
}\left\vert X_{j,it}\right\vert ^{2}\right) $ together with Lemma \ref%
{Lem:bounded u&v_tilde}(i), and the second inequality is due to Assumption %
\ref{ass:1}(iv). In addition, given $\{\mathscr{D}_{e_{it}}\}_{i\in I_{2}}$, 
$\{\epsilon_{it}\}_{i\in I_{2}}$ are still independent across $i$. 
Therefore, by Hoeffding's inequality and similar arguments for term $B_{1,t}$
in (\ref{Lem:B:B1}), we can show that 
\begin{align}
& \mathbb{P}\left(\max_{t\in[T]}\left\Vert
B_{3,t}^{I}\right\Vert_{2}>c_{12}\eta_{N}\right) \leq \sum_{t\in[T]}\mathbb{P%
}\left(\left\Vert B_{3,t}^{I}\right\Vert_{2}>c_{12}\eta_{N}\right)  \notag
\label{Lem:B:B3_4} \\
& =\sum_{t\in[T]}\mathbb{E}\mathbb{P}\left(\left\Vert
B_{3,t}^{I}\right\Vert_{2}>c_{12}\eta_{N}\bigg|\{\mathscr{D}%
_{e_{it}}\}_{i\in I_{2}}\right) =o(1).
\end{align}%
Combining (\ref{Lem:B:B3_2})-(\ref{Lem:B:B3_4}), we obtain that $\max_{t\in[T%
]}\frac{\left\vert B_{3,t}\right\vert }{\left\Vert \dot{\Delta}%
_{t,v}\right\Vert_{2}}=O_{p}(\eta_{N}).$

For $B_{5,t}$, we observe that 
\begin{align*}
B_{5,t}& =\frac{1}{N_{2}}\sum_{i\in I_{2}}\left\{
\int_{0}^{\Psi_{it}^{0\prime}\dot{\Delta}_{t,v}}\left[ \left(\mathbf{1}%
\left\{ \epsilon_{it}\leq s\right\} -\mathbf{1}\left\{ \epsilon_{it}\leq
0\right\} \right) -\mathbb{E}\left(\left(\mathbf{1}\left\{ \epsilon_{it}\leq
s\right\} -\mathbf{1}\left\{ \epsilon_{it}\leq 0\right\} \right) \bigg |%
\mathscr{D}_{e_{it}}\right) \right] ds\right\} \\
& +\frac{1}{N_{2}}\sum_{i\in I_{2}}\left\{ \int_{\Psi_{it}^{0\prime}\dot{%
\Delta}_{t,v}}^{\dot{\Psi}_{it}^{(1)\prime}\dot{\Delta}_{t,v}}\left[ \left(%
\mathbf{1}\left\{ \epsilon_{it}\leq s\right\} -\mathbf{1}\left\{
\epsilon_{it}\leq 0\right\} \right) -\mathbb{E}\left(\left(\mathbf{1}%
\left\{\epsilon_{it}\leq s\right\} -\mathbf{1}\left\{ \epsilon_{it}\leq
0\right\}\right) \bigg |\mathscr{D}_{e_{it}}\right) \right] ds\right\} \\
& :=B_{5,t}^{I}+B_{5,t}^{II}.
\end{align*}%
Following similar arguments for $A_{3,i}$ with Bernstein's inequality
replacing by Hoeffding's inequality, we can show that $\max_{t\in[T]}\frac{%
\left\vert B_{5,t}^{I}\right\vert }{\left\Vert \dot{\Delta}%
_{t,v}\right\Vert_{2}}=O_{p}(\eta_{N})$. Besides, we obtain that 
\begin{equation*}
\max_{t\in[T]}\frac{\left\vert B_{5,t}^{II}\right\vert }{\left\Vert \dot{%
\Delta}_{t,v}\right\Vert_{2}}\leq \max_{t\in[T]}\frac{4}{N_{2}}\sum_{i\in
I_{2}}\left\Vert \dot{\Psi}_{it}^{(1)}-\Psi_{it}^{0}\right\Vert_{2}=O_{p}(%
\eta_{N}),
\end{equation*}
which implies $\max_{t\in[T]}\frac{\left\vert B_{5,t}\right\vert }{%
\left\Vert \dot{\Delta}_{t,v}\right\Vert_{2}}=O_{p}\left(\eta_{N}\right).$

For $B_{6,t}$, we first note that 
\begin{equation*}
\max_{t\in[T]}\frac{\left\vert \frac{1}{N_{2}}\sum_{i\in
I_{2}}\int_{\Psi_{it}^{0\prime}\dot{\Delta}_{t,v}}^{\dot{\Psi}%
_{it}^{(1)\prime}\dot{\Delta}_{t,v}}\left[ \mathbf{1}\left\{
\epsilon_{it}\leq s\right\} -\mathbf{1}\left\{ w_{3,it}\leq s\right\} \right]
\right\vert }{\left\Vert\dot{\Delta}_{t,v}\right\Vert_{2}}\leq \max_{t\in[T]}%
\frac{2}{N_{2}}\sum_{i\in I_{2}}\left\Vert \dot{\Psi}_{it}^{(1)}-%
\Psi_{it}^{0}\right\Vert_{2}=O_{p}(\eta_{N}),
\end{equation*}%
and this implies 
\begin{align*}
\max_{t\in[T]}\frac{\left\vert B_{6,t}\right\vert }{\left\Vert \dot{\Delta}%
_{t,v}\right\Vert_{2}}& =\max_{t\in[T]}\frac{\left\vert \frac{1}{N_{2}}%
\sum_{i\in I_{2}}\int_{0}^{\dot{\Psi}_{it}^{(1)\prime}\dot{\Delta}%
_{t,v}}\left(\mathbf{1}\left\{ w_{3,it}\leq s\right\} -\mathbf{1}\left\{
\epsilon_{it}\leq s\right\} \right) ds\right\vert }{\left\Vert \dot{\Delta}%
_{t,v}\right\Vert_{2}} \\
& =\max_{t\in[T]}\frac{\left\vert \frac{1}{N_{2}}\sum_{i\in
I_{2}}\int_{0}^{\Psi_{it}^{0\prime}\dot{\Delta}_{t,v}}\left(\mathbf{1}%
\left\{ w_{3,it}\leq s\right\} -\mathbf{1}\left\{ \epsilon_{it}\leq
s\right\} \right) ds\right\vert }{\left\Vert \dot{\Delta}_{t,v}\right%
\Vert_{2}}+O_{p}\left(\eta_{N}\right) .
\end{align*}%
In addition, for the first term on the RHS of the last equality, we have 
\begin{align*}
& \max_{t\in[T]}\frac{\left\vert \frac{1}{N_{2}}\sum_{i\in
I_{2}}\int_{0}^{\Psi_{it}^{0\prime}\dot{\Delta}_{t,v}}\left(\mathbf{1}%
\left\{ w_{3,it}\leq s\right\} -\mathbf{1}\left\{ \epsilon_{it}\leq
s\right\} \right) ds\right\vert }{\left\Vert \dot{\Delta}_{t,v}\right%
\Vert_{2}} \\
& \leq \max_{t\in[T]}\frac{\frac{1}{N_{2}}\sum_{i\in
I_{2}}\int_{0}^{\left\Vert \Psi_{it}^{0}\right\Vert_{2}\left\Vert \dot{\Delta%
}_{t,v}\right\Vert_{2}}\left(\mathbf{1}\left\{ |\epsilon_{it}-s|\leq
\sum_{j\in [p]\cup \{0\}}\left\Vert v_{t,j}^{0}\right\Vert_{2}\left\Vert 
\dot{u}_{i,j}^{(1)}-O_{j}^{(1)}u_{i,j}^{0}\right\Vert_{2}\right\} \right) ds%
}{\left\Vert \dot{\Delta}_{t,v}\right\Vert_{2}} \\
& =\max_{t\in[T]}\frac{1}{N_{2}}\sum_{i\in I_{2}}\int_{0}^{1}\left\Vert
\Psi_{it}^{0}\right\Vert_{2}\mathbf{1}\left\{\left\vert
\epsilon_{it}-\left\Vert \Psi_{it}^{0}\right\Vert_{2}\left\Vert \dot{\Delta}%
_{t,v}\right\Vert_{2}s\right\vert \leq\sum_{j\in [p]\cup \{0\}}\left\Vert
v_{t,j}^{0}\right\Vert_{2}\left\Vert \dot{u}%
_{i,j}^{(1)}-O_{j}^{(1)}u_{i,j}^{0}\right\Vert_{2}\right\} ds \\
& \leq \sup_{s\geq 0,t\in[T]}\frac{1}{N_{2}}\sum_{i\in I_{2}}\left\Vert
\Psi_{it}^{0}\right\Vert_{2}\mathbf{1}\left\{
\left\vert\epsilon_{it}-s\right\vert \leq \sum_{j\in [p]\cup
\{0\}}\left\Vert v_{t,j}^{0}\right\Vert_{2}\left\Vert \dot{u}%
_{i,j}^{(1)}-O_{j}^{(1)}u_{i,j}^{0}\right\Vert_{2}\right\} .
\end{align*}%
Following the same argument for $B_{3,t}$, we only need to upper bound the
RHS of the last display by $\sup_{s\geq 0,t\in[T]}B_{6,t}^{I}(s)$ on $%
\mathscr{B}_{N,1}(M)$ for some sufficiently large but fixed constant $M$,
where 
\begin{equation*}
B_{6,t}^{I}(s)=\frac{1}{N_{2}}\sum_{i\in I_{2}}\left\Vert
\Psi_{it}^{0}\right\Vert_{2}\mathbf{1}\left\{ \left\vert
\epsilon_{it}-s\right\vert \leq M\eta_{N}\sum_{j\in [p]\cup \{0\}}\left\Vert
v_{t,j}^{0}\right\Vert_{2}\right\} .
\end{equation*}%
Let 
\begin{equation*}
B_{6,t}^{II}(s)=\mathbb{E}(B_{6,t}^{I}(s)|\left\{ \mathscr{D}%
_{e_{it}}\right\}_{i\in I_{2}})=\frac{1}{N_{2}}\sum_{i\in I_{2}}\mathbb{E}%
\left(\left\Vert \Psi_{it}^{0}\right\Vert_{2}\mathbf{1}\left\{
\left\vert\epsilon_{it}-s\right\vert \leq M\eta_{N}\sum_{j\in
[p]\cup\{0\}}\left\Vert v_{t,j}^{0}\right\Vert_{2}\right\} \bigg|\mathscr{D}%
_{e_{it}}\right) .
\end{equation*}
We note that $\sup_{s\geq 0,t\in[T]}B_{6,t}^{II}(s)=O_{p}(\eta_{N}).$
Similar to the arguments in (\ref{Lem:B:B3_3}), to show $\max_{t\in[T]}\frac{%
\left\vert B_{6,t}\right\vert }{\left\Vert \dot{\Delta}_{t,v}\right\Vert_{2}}%
=O_{p}(\eta_{N})$, it suffices to show 
\begin{equation}
\mathbb{P}\left(\sup_{s\geq 0,t\in[T]}\left\vert
B_{6,t}^{I}(s)-B_{6,t}^{II}(s)\right\vert >c_{12}\eta_{N}\right) =o(1).
\label{Lem:B:B6_1}
\end{equation}

Further denote 
\begin{align*}
& B_{6,t}^{III}(s)=\frac{1}{N_{2}}\sum_{i\in I_{2}}\left\Vert
\Psi_{it}^{0}\right\Vert_{2}\mathbf{1}\left\{
\epsilon_{it}>s-M\eta_{N}\sum_{j\in [p]\cup \{0\}}\left\Vert
v_{t,j}^{0}\right\Vert_{2}\right\} , \\
& B_{6,t}^{IV}(s)=\frac{1}{N_{2}}\sum_{i\in I_{2}}\left\Vert
\Psi_{it}^{0}\right\Vert_{2}\mathbf{1}\left\{ \epsilon_{it}\geq
s+M\eta_{N}\sum_{j\in [p]\cup \{0\}}\left\Vert
v_{t,j}^{0}\right\Vert_{2}\right\} .
\end{align*}%
Then, we have $B_{6,t}^{I}(s)=B_{6,t}^{III}(s)-B_{6,t}^{IV}(s)$ and thus, 
\begin{align*}
& \sup_{s>T^{1/4},t\in[T]}\left\vert
B_{6,t}^{I}(s)-B_{6,t}^{II}(s)\right\vert \\
& \leq \sup_{s>T^{1/4},t\in[T]}\left\vert B_{6,t}^{III}(s)-\mathbb{E}%
\left(B_{6,t}^{III}(s)\big|\{\mathscr{D}_{e_{it}}\}_{i\in
I_{2}}\right)\right\vert +\sup_{s>T^{1/4},t\in[T]}\left\vert B_{6,t}^{IV}(s)-%
\mathbb{E}\left(B_{6,t}^{IV}(s)\big|\{\mathscr{D}_{e_{it}}\}_{i\in
I_{2}}\right) \right\vert \\
& \leq \max_{t\in[T]}B_{6,t}^{III}(T^{1/4})+\max_{t\in[T]}\mathbb{E}%
\left(B_{6,t}^{III}(T^{1/4})\big|\{\mathscr{D}_{e_{it}}\}_{i\in
I_{2}}\right) +\max_{t\in[T]}B_{6,t}^{IV}(T^{1/4})+\max_{t\in[T]}\mathbb{E}%
\left(B_{6,t}^{IV}(T^{1/4})\big|\{\mathscr{D}_{e_{it}}\}_{i\in I_{2}}\right)
\\
& \leq \max_{t\in[T]}\left\vert B_{6,t}^{III}(T^{1/4})-\mathbb{E}%
\left(B_{6,t}^{III}(T^{1/4})\big|\{\mathscr{D}_{e_{it}}\}_{i\in
I_{2}}\right) \right\vert +2\max_{t\in[T]}\mathbb{E}\left(
B_{6,t}^{III}(T^{1/4})\big|\{\mathscr{D}_{e_{it}}\}_{i\in I_{2}}\right) \\
& +\max_{t\in[T]}\left\vert B_{6,t}^{IV}(T^{1/4})-\mathbb{E}%
\left(B_{6,t}^{IV}(T^{1/4})\big|\{\mathscr{D}_{e_{it}}\}_{i\in
I_{2}}\right)\right\vert +2\max_{t\in[T]}\mathbb{E}%
\left(B_{6,t}^{IV}(T^{1/4})\big|\{\mathscr{D}_{e_{it}}\}_{i\in I_{2}}\right)
,
\end{align*}%
where the second inequality holds because both $B_{6,t}^{III}(s)$ and $%
B_{6,t}^{IV}(s)$ are non-decreasing in $s$. Further note that 
\begin{equation*}
\left\Vert \Psi_{it}^{0}\right\Vert_{2}\mathbf{1}\left\{
\epsilon_{it}>s-M\eta_{N}\sum_{j\in [p]\cup \{0\}}\left\Vert
v_{t,j}^{0}\right\Vert_{2}\right\} \quad \text{and}\quad \left\Vert
\Psi_{it}^{0}\right\Vert_{2}\mathbf{1}\left\{
\epsilon_{it}>s+M\eta_{N}\sum_{j\in [p]\cup \{0\}}\left\Vert
v_{t,j}^{0}\right\Vert_{2}\right\}
\end{equation*}%
are independent across $i\in I_{2}$ given $\{\mathscr{D}_{e_{it}}\}_{i\in
I_{2}}$. Therefore, following the same argument in the analysis of $A_{6,t}$
with the Bernstein's inequality replaced by the Hoeffding's inequality, we
have 
\begin{align*}
& \max_{t\in[T]}\left\vert B_{6,t}^{III}(T^{1/4})-\mathbb{E}%
\left(B_{6,t}^{III}(T^{1/4})\big|\{\mathscr{D}_{e_{it}}\}_{i\in
I_{2}}\right)\right\vert =O_{p}(\eta_{N}),\quad \max_{t\in[T]}\mathbb{E}%
\left(B_{6,t}^{III}(T^{1/4})\big|\{\mathscr{D}_{e_{it}}\}_{i\in
I_{2}}\right)=O_{p}(\eta_{N}) \\
& \max_{t\in[T]}\left\vert B_{6,t}^{IV}(T^{1/4})-\mathbb{E}%
\left(B_{6,t}^{IV}(T^{1/4})\big|\{\mathscr{D}_{e_{it}}\}_{i\in
I_{2}}\right)\right\vert =O_{p}(\eta_{N}),\quad \max_{t\in[T]}\mathbb{E}%
\left(B_{6,t}^{IV}(T^{1/4})\big|\{\mathscr{D}_{e_{it}}\}_{i\in
I_{2}}\right)=O_{p}(\eta_{N}),
\end{align*}
which implies 
\begin{equation*}
\sup_{s>T^{1/4},t\in[T]}\left\vert B_{6,t}^{I}(s)-B_{6,t}^{II}(s)\right\vert
=O_{p}(\eta_{N}).
\end{equation*}

In addition, following the same analysis in (\ref{B.11}) and (\ref%
{Lem:A:A3:diff}), we have 
\begin{equation*}
\sup_{s\in [0,T^{1/4}],t\in[T]}\left\vert
B_{6,t}^{I}(s)-B_{6,t}^{II}(s)\right\vert =O_{p}(\eta_{N}),
\end{equation*}
which leads to the desired result that $\max_{t\in[T]}\frac{\left\vert
B_{6,t}\right\vert }{\left\Vert \dot{\Delta}_{t,v}\right\Vert_{2}}%
=O_{p}\left(\eta_{N}\right) .$
\end{proof}

Recall for $i\in I_{3}$, $\dot{\mathcal{H}}_{i}\left(\left\{u_{i,j}\right%
\}_{j\in [p]\cup \{0\}}\right) =\frac{1}{T}\sum_{t=1}^{T}\left\{ \left[ \tau
-\mathfrak{F}_{it}\left(g_{it}(\{u_{i,j}\}_{j\in [p]\cup \{0\}})\right) %
\right] \dot{\varpi}_{it}\right\} ,$ where 
\begin{equation*}
g_{it}(\{u_{i,j}\}_{j\in [p]\cup \{0\}})=u_{i,0}^{\prime}\dot{v}%
_{t,0}^{(1)}+\sum_{j\in [p]}u_{i,j}^{\prime}\dot{v}%
_{t,j}^{(1)}X_{j,it}-u_{i,0}^{0\prime}v_{t,0}^{0}-\sum_{j\in
[p]}u_{i,j}^{0\prime}v_{t,j}^{0}X_{j,it},
\end{equation*}%
and $\dot{\varpi}_{it}=\left(\dot{v}_{t,0}^{(1)\prime},X_{1,it}\dot{v}%
_{t,1}^{(1)\prime},\cdots ,X_{p,it}\dot{v}_{t,p}^{(1)\prime}\right)
^{\prime} $.

\begin{lemma}
{\small \label{Lem16} } Under Assumptions \ref{ass:1}--\ref{ass:5}, the
second-order derivative of $\dot{\mathcal{H}}_{i}\left(\left\{u_{i,j}\right%
\}_{j=0}^{p}\right)$ is bounded in probability.
\end{lemma}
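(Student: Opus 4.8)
The plan is to differentiate $\dot{\mathcal{H}}_{i}$ twice in closed form and to bound the resulting third-order array by a quantity that does not depend on the argument $\{u_{i,j}\}_{j=0}^{p}$ and that is $O_{p}(1)$ uniformly over $i\in I_{3}$.

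First I would observe that $g_{it}(\{u_{i,j}\}_{j\in[p]\cup\{0\}})$ is affine in $u_{i}:=(u_{i,0}^{\prime},\dots,u_{i,p}^{\prime})^{\prime}$ with constant gradient $\partial g_{it}/\partial u_{i}=\dot{\varpi}_{it}$. Consequently, writing $\dot{\mathcal{H}}_{i}(u_{i})=\frac{1}{T}\sum_{t=1}^{T}\big[\tau-\mathfrak{F}_{it}(g_{it}(u_{i}))\big]\dot{\varpi}_{it}$ and using that $\mathfrak{F}_{it}\in C^{2}$ by Assumption \ref{ass:1}(vii), the first derivative is $\partial\dot{\mathcal{H}}_{i}/\partial u_{i}^{\prime}=-\frac{1}{T}\sum_{t=1}^{T}\mathfrak{f}_{it}(g_{it}(u_{i}))\,\dot{\varpi}_{it}\dot{\varpi}_{it}^{\prime}$, and for any components $a,b,c$,
\[
\frac{\partial^{2}[\dot{\mathcal{H}}_{i}]_{a}}{\partial u_{i,b}\,\partial u_{i,c}}=-\frac{1}{T}\sum_{t=1}^{T}\mathfrak{f}_{it}^{\prime}\big(g_{it}(u_{i})\big)\,(\dot{\varpi}_{it})_{a}(\dot{\varpi}_{it})_{b}(\dot{\varpi}_{it})_{c}.
\]

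Next I would bound this uniformly. Since $\sup_{\epsilon}|\mathfrak{f}_{it}^{\prime}(\epsilon)|\le\bar{\mathfrak{f}}^{\prime}$ for all $(i,t)$ by Assumption \ref{ass:1}(vii), the right-hand side is bounded in absolute value by $\bar{\mathfrak{f}}^{\prime}\,\frac{1}{T}\sum_{t=1}^{T}\|\dot{\varpi}_{it}\|_{2}^{3}$, a bound that is free of $u_{i}$. It then suffices to show $\max_{i\in I_{3}}\frac{1}{T}\sum_{t=1}^{T}\|\dot{\varpi}_{it}\|_{2}^{3}=O_{p}(1)$. For this, Theorem \ref{Thm2}(ii) together with Lemma \ref{Lem:bounded u&v_tilde}(i) gives $\max_{t\in[T]}\|\dot{v}_{t,j}^{(1)}\|_{2}\le 2M/c_{\sigma}$ w.p.a.1 for each $j$, because $\|\dot{v}_{t,j}^{(1)}\|_{2}=\|O_{j}^{(1)\prime}\dot{v}_{t,j}^{(1)}\|_{2}\le\|v_{t,j}^{0}\|_{2}+\|O_{j}^{(1)\prime}\dot{v}_{t,j}^{(1)}-v_{t,j}^{0}\|_{2}\le M/c_{\sigma}+O_{p}(\eta_{N})$; hence $\|\dot{\varpi}_{it}\|_{2}^{2}\le(2M/c_{\sigma})^{2}(1+\|X_{it}\|_{2}^{2})$ and $\frac{1}{T}\sum_{t=1}^{T}\|\dot{\varpi}_{it}\|_{2}^{3}\lesssim\frac{1}{T}\sum_{t=1}^{T}(1+\|X_{it}\|_{2}^{3})\le 1+M$ a.s. by Assumption \ref{ass:1}(iv); this is exactly the constant $C_{9}$ already used in the proof of Theorem \ref{Thm2}(i). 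Combining the displays yields $\sup_{u_{i}}\big\|\partial^{2}\dot{\mathcal{H}}_{i}/\partial u_{i}\partial u_{i}^{\prime}\big\|_{\max}=O_{p}(1)$ uniformly over $i\in I_{3}$, which is the assertion (and, restricting to $u_{i}$ in a neighborhood of the rotated true value, is precisely what the Taylor expansions in the proofs of Theorems \ref{Thm2} and \ref{Thm3} require).

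I do not anticipate a substantive obstacle: the computation is elementary, and the only points needing care are fixing a norm on the third-order array — controlling the entrywise maximum suffices since $K=\sum_{j=0}^{p}K_{j}$ is fixed, and the usual matrix/tensor norms then differ only by fixed constants — and verifying that the bound is uniform over the argument $u_{i}\in\mathbb{R}^{K}$, which holds automatically because the uniform bound $\bar{\mathfrak{f}}^{\prime}$ on $|\mathfrak{f}_{it}^{\prime}|$ removes all dependence on $u_{i}$.
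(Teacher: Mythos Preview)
Your proposal is correct and follows essentially the same approach as the paper: both compute the second derivative explicitly as $-\frac{1}{T}\sum_{t}\mathfrak{f}_{it}^{\prime}(g_{it})\,\dot{\varpi}_{it}\otimes\dot{\varpi}_{it}\otimes\dot{\varpi}_{it}$, invoke the uniform bound $\bar{\mathfrak{f}}^{\prime}$ from Assumption~\ref{ass:1}(vii) to remove dependence on $u_{i}$, and then control $\frac{1}{T}\sum_{t}\|\dot{\varpi}_{it}\|_{2}^{3}$ via the bounds on $\dot{v}_{t,j}^{(1)}$ and the third-moment condition in Assumption~\ref{ass:1}(iv). Your presentation is in fact slightly more streamlined than the paper's, which splits the computation by coordinate block $k\in[K_{0}]$ versus $k\in[K_{0}+1,\dots,K_{0}+K_{1}]$ (for $p=1$), but the substance is identical.
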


\begin{proof}
Noted that 
\begin{align*}
\frac{\dot{\mathcal{H}}_{i}\left(\left\{u_{i,j}\right\}_{j=0}^{p}\right)}{%
\partial u_{i}^{\prime}}=\frac{1}{T}\sum_{t=1}^{T}-\mathfrak{f}%
_{it}\left(u_{i,0}^{\prime}\dot{v}_{t,0}^{(1)}-u_{i,0}^{0%
\prime}v_{t,0}^{0}+X_{1,it}u_{i,1}^{\prime}\dot{v}%
_{t,1}^{(1)}-X_{1,it}u_{i,1}^{0\prime}v_{t,1}^{0}\right)\dot{\varpi}_{it} 
\dot{\varpi}_{it}^{\prime}.
\end{align*}
For notation simplicity, we focus on the case with $p=1$ and denote $%
u_{i}=\left(u_{i,0}^{\prime},u_{i,1}^{\prime}\right)^{\prime}$. Further
denote $\dot{\mathcal{H}}_{i,k}\left(\left\{u_{i,j}\right\}_{j=0}^{p}\right)$
as the $k^{th}$ element in $\dot{\mathcal{H}}_{i}\left(\left\{u_{i,j}\right%
\}_{j=0}^{p}\right)$ and $\dot{v}_{t,0,k}^{(1)}$ as the $k^{th}$ element in $%
\dot{v}_{t,0}^{(1)}$. For $k\in \left[K_{0} \right]$, we have 
\begin{align*}
&\frac{\partial \dot{\mathcal{H}}_{i,k}\left(\left\{u_{i,j}\right%
\}_{j=0}^{p}\right)}{\partial u_{i}}=\frac{1}{T}\sum_{t=1}^{T}-\mathfrak{f}%
_{it}\left(u_{i,0}^{\prime}\dot{v}_{t,0}^{(1)}-u_{i,0}^{0%
\prime}v_{t,0}^{0}+X_{1,it}u_{i,1}^{\prime}\dot{v}%
_{t,1}^{(1)}-X_{1,it}u_{i,1}^{0\prime}v_{t,1}^{0}\right)\dot{v}_{t,0,k}^{(1)}%
\dot{\varpi}_{it}\quad \text{and} \\
& \frac{\partial^{2} \dot{\mathcal{H}}_{i,k}\left(\left\{u_{i,j}\right%
\}_{j=0}^{p}\right)}{\partial u_{i}\partial u_{i}^{\prime}}=\frac{1}{T}%
\sum_{t=1}^{T}-\mathfrak{f}_{it}^{\prime}\left(u_{i,0}^{\prime}\dot{v}%
_{t,0}^{(1)}-u_{i,0}^{0\prime}v_{t,0}^{0}+X_{1,it}u_{i,1}^{\prime}\dot{v}%
_{t,1}^{(1)}-X_{1,it}u_{i,1}^{0\prime}v_{t,1}^{0}\right)\dot{v}_{t,0,k}^{(1)}%
\dot{\varpi}_{it} \dot{\varpi}_{it}^{\prime}.
\end{align*}

Therefore, we have 
\begin{align*}
\left\Vert \frac{\partial ^{2}\dot{\mathcal{H}}_{i,k}\left(\left\{u_{i,j}%
\right\}_{j=0}^{p}\right) }{\partial u_{i}\partial u_{i}^{\prime}}%
\right\Vert_{F}& \leq \frac{\bar{\mathfrak{f}}^{\prime }}{T}%
\sum_{t=1}^{T}\left\Vert \dot{v}_{t,0,k}^{(1)}\dot{\varpi}_{it}\dot{\varpi}%
_{it}^{\prime}\right\Vert_{F} \\
& \leq c\bar{\mathfrak{f}}^{\prime }\left[ \max_{t\in[T]}\left\Vert\dot{v}%
_{t,0}^{(1)}\right\Vert_{2}\right] \frac{1}{T}\sum_{t=1}^{T}\left\Vert \dot{%
\varpi}_{it}\right\Vert_{2}^{2} \\
& \leq c\bar{\mathfrak{f}}^{\prime }\left[ \max_{t\in[T]}\left\Vert\dot{v}%
_{t,0}^{(1)}\right\Vert_{2}\right] \left[ \max_{t\in [T]}(\left\Vert \dot{v}%
_{t,0}^{(1)}\right\Vert_{2}^{2}+\left\Vert \dot{v}_{t,1}^{(1)}\right%
\Vert_{2}^{2})\right] \left(1+\frac{1}{T}\sum_{t\in[T]}X_{1,it}^{2}\right)
=O_{p}(1),
\end{align*}%
where we use the fact that $\max_{t\in[T]}\left\Vert \dot{v}%
_{t,0}^{(1)}\right\Vert_{2}=O_{p}(1)$ by Theorem \ref{Thm2}(ii) and Lemma %
\ref{Lem:bounded u&v_tilde}(i). % and 
% \begin{align*}
%  \left\Vert \dot{\varpi}_{it} \dot{\varpi}_{it}^{\prime} \right\Vert_F =  \left\Vert \begin{bmatrix}
% 		\dot{v}_{t,0}^{(1)}\dot{v}_{t,0}^{(1)\prime}   & \dot{v}_{t,0}^{(1)}\dot{v}_{t,1}^{(1)\prime}X_{1,it}\\
% 		\dot{v}_{t,1}^{(1)}\dot{v}_{t,0}^{(1)\prime}X_{1,it}  & \dot{v}_{t,1}^{(1)}\dot{v}_{t,1}^{(1)\prime}X_{1,it}^{2}
% 		\end{bmatrix} \right\Vert_F \leq c\max_{t \in [T]} (\left\Vert\dot{v}_{t,0}^{(1)}\right\Vert_{2}^{2} + \left\Vert\dot{v}_{t,1}^{(1)}\right\Vert_{2}^{2})\left(1 + \frac{1}{T}\sum_{t\in T} X_{1,it}^4\right)^{1/2}
% \end{align*}

For $k\in \left[K_{0}+1,\cdots,K_{0}+K_{1} \right]$, we have 
\begin{align*}
& \frac{\partial \dot{\mathcal{H}}_{i,k}\left(\left\{u_{i,j}\right%
\}_{j=0}^{p}\right)}{\partial u_{i}}=\frac{1}{T}\sum_{t=1}^{T}-\mathfrak{f}%
_{it}\left(u_{i,0}^{\prime}\dot{v}_{t,0}^{(1)}-u_{i,0}^{0%
\prime}v_{t,0}^{0}+X_{1,it}u_{i,1}^{\prime}\dot{v}%
_{t,1}^{(1)}-X_{1,it}u_{i,1}^{0\prime}v_{t,1}^{0}\right)\dot{v}%
_{t,1,k-K_{0}}^{(1)}X_{1,it}\dot{\varpi}_{it} \quad \text{and} \\
& \frac{\partial^{2} \dot{\mathcal{H}}_{i,k}\left(\left\{u_{i,j}\right%
\}_{j=0}^{p}\right)}{\partial u_{i}\partial u_{i}^{\prime}}=\frac{1}{T}%
\sum_{t=1}^{T}-\mathfrak{f}_{it}^{\prime}\left(u_{i,0}^{\prime}\dot{v}%
_{t,0}^{(1)}-u_{i,0}^{0\prime}v_{t,0}^{0}+X_{1,it}u_{i,1}^{\prime}\dot{v}%
_{t,1}^{(1)}-X_{1,it}u_{i,1}^{0\prime}v_{t,1}^{0}\right)\dot{v}%
_{t,1,k-K_{0}}^{(1)}X_{1,it}\dot{\varpi}_{it} \dot{\varpi}_{it}^{\prime}.
\end{align*}
Therefore, we have 
\begin{align*}
\left\Vert \frac{\partial^{2} \dot{\mathcal{H}}_{i,k}\left(\left\{u_{i,j}%
\right\}_{j=0}^{p}\right)}{\partial u_{i}\partial u_{i}^{\prime}}%
\right\Vert_F & \leq c\bar{\mathfrak{f}}^{\prime }\left[\max_{t \in
[T]}\left\Vert\dot{v}_{t,0}^{(1)}\right\Vert_{2}\right] \frac{1}{T}%
\sum_{t=1}^T|X_{1,it}| \left\Vert\dot{\varpi}_{it}\right\Vert_{2}^{2} \\
& \leq c\bar{\mathfrak{f}}^{\prime }\left[\max_{t \in [T]} \left\Vert\dot{v}%
_{t,0}^{(1)}\right\Vert_{2}\right]\left[\max_{t \in [T]} (\left\Vert\dot{v}%
_{t,0}^{(1)}\right\Vert_{2}^{2} + \left\Vert\dot{v}_{t,1}^{(1)}\right%
\Vert_{2}^{2})\right]\left(1 + \frac{1}{T}\sum_{t\in T} X_{1,it}^3\right)
=O_p(1).
\end{align*}
% Notice that
% \begin{align*}
% 	\dot{\varpi}_{it} \dot{\varpi}_{it}^{\prime}=\begin{bmatrix}
% 		\dot{v}_{t,0}^{(1)}\dot{v}_{t,0}^{(1)\prime}   & \dot{v}_{t,0}^{(1)}\dot{v}_{t,1}^{(1)\prime}X_{1,it}\\
% 		\dot{v}_{t,1}^{(1)}\dot{v}_{t,0}^{(1)\prime}X_{1,it}  & \dot{v}_{t,1}^{(1)}\dot{v}_{t,1}^{(1)\prime}X_{1,it}^{2}
% 		\end{bmatrix},
% 	\end{align*}
% the left upper block of $\frac{\partial^{2} \dot{\mathcal{H}}_{i,k}\left(\left\{u_{i,j}\right\}_{j=0}^{p}\right)}{\partial u_{i}\partial u_{i}^{\prime}}$ is bounded in probability following Theorem \ref{Thm2}$(ii)$, we can see that $\max_{t\in[T]}\left\Vert\dot{v}_{t,0}^{(1)}\right\Vert_{2}=O_{p}(1)$, $\max_{t\in[T]}\left\Vert\dot{v}_{t,1}^{(1)}\right\Vert_{2}=O_{p}(1)$. By ergodic theorem, right block matrix of $\frac{\partial^{2} \dot{\mathcal{H}}_{i,k}\left(\left\{u_{i,j}\right\}_{j=0}^{p}\right)}{\partial u_{i}\partial u_{i}^{\prime}}$ converges to its first moment, which is bounded with Assumption \ref{ass:1}(iv). With similar arguments for the right lower block, we can show that $\frac{\partial^{2} \dot{\mathcal{H}}_{i,k}\left(\left\{u_{i,j}\right\}_{j=0}^{p}\right)}{\partial u_{i}\partial u_{i}^{\prime}}$ is bounded in probability. Following similar proof, we can also show that $\frac{\partial^{2} \hat{H}_{i,k}\left(u_{i,0},u_{i,1}\right)}{\partial u_{i}\partial u_{i}^{\prime}}$ bounded in probability.
\end{proof}

Recall that 
\begin{align*}
& \dot{\varpi}_{it}=\left(\dot{v}_{t,0}^{(1)\prime},\dot{v}%
_{t,1}^{(1)\prime}X_{1,it},\cdots ,\dot{v}_{t,p}^{(1)\prime}X_{p,it}\right)
^{\prime}, \\
& \varpi_{it}^{0}=\left(\left(O_{0}^{(1)}v_{t,0}^{0}\right)
^{\prime},\left(O_{1}^{(1)}v_{t,1}^{0}\right) ^{\prime}X_{1,it},\cdots
,\left(O_{p}^{(1)}v_{t,p}^{0}\right) ^{\prime}X_{p,it}\right) ^{\prime}, \\
& \dot{D}_{i}^{I}:=\frac{1}{T}\sum_{t=1}^{T}\mathfrak{f}_{it}\left(\dot{%
\Delta}_{t,v}^{\prime}\Psi_{it}^{0}\right) \dot{\varpi}_{it}\dot{\varpi}%
_{it}^{\prime},\quad \dot{D}_{i}^{II}:=\frac{1}{T}\sum_{t=1}^{T}\left[ \tau -%
\mathbf{1}\left\{ \epsilon_{it}\leq \dot{\Delta}_{t,v}^{\prime}\Psi_{it}^{0}%
\right\} \right] \dot{\varpi}_{it}, \\
& D_{i}^{I}=\frac{1}{T}\sum_{t=1}^{T}\mathfrak{f}_{it}(0)\varpi_{it}^{0}%
\varpi_{it}^{0\prime},\quad D_{i}^{II}=\frac{1}{T}\sum_{t=1}^{T}\left[ \tau -%
\mathbf{1}\left\{ \epsilon_{it}\leq 0\right\} \right] \varpi_{it}^{0}.
\end{align*}

\begin{lemma}
{\small \label{Lem17} } Under Assumptions \ref{ass:1}--\ref{ass:5}, we have

\begin{itemize}
\item[(i)] $\max_{i\in I_{3}}\left\Vert D_{i}^{II}\right\Vert_{F}=O_{p}\left(%
\sqrt{\frac{\log\left(N\vee T\right)}{T}}\xi_{N}\right)$,

\item[(ii)] $\max_{i\in \in I_{3}}\left\Vert\dot{D}_{i}^{I}-D_{i}^{I}\right%
\Vert_{F}=O_{p}(\eta_{N})$,

\item[(iii)] $\max_{i\in \in I_{3}}\left\Vert\dot{D}_{i}^{II}-D_{i}^{II}-%
\frac{1}{T}\sum_{t=1}^{T}\left[\mathbf{1}\left\{\epsilon_{it}\leq 0\right\}-%
\mathbf{1}\left\{\epsilon_{it}\leq \dot{\Delta}_{t,v}^{\prime}\Psi_{it}^{0}%
\right\}\right]\varpi_{it}^{0}\right\Vert_{2} = o_{p}\left(\left(N\vee
T\right)^{-\frac{1}{2}}\right)$.
\end{itemize}
\end{lemma}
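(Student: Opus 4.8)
All three parts will be handled by conditioning on the $\sigma$-field $\mathscr{D}_e^{I_1\cup I_2}$. Observe that, for every $i\in I_3$, the first–stage objects $\dot v_{t,j}^{(1)}$, $O_j^{(1)}$, $\dot\Delta_{t,v}$ (which depend only on the data in $I_1\cup I_2$ and on the factors/loadings) and the regressors $\{X_{j,it}\}$ are $\mathscr{D}_e^{I_1\cup I_2}$-measurable, so that $\varpi_{it}^0$, $\dot\varpi_{it}$ and $\dot\Delta_{t,v}^{\prime}\Psi_{it}^0$ are measurable as well; meanwhile $\{\epsilon_{it}\}_{t\in[T]}$ is, conditionally on $\mathscr{D}_e^{I_1\cup I_2}$, strong mixing with geometrically decaying coefficients (Assumption \ref{ass:1}(i),(iii)), and by Assumption \ref{ass:1}(ii) together with conditional independence across $i$ one has $\mathbb{E}[a_{it}\mid\mathscr{D}_e^{I_1\cup I_2}]=0$ for $i\in I_3$. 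By Assumption \ref{ass:1}(v) and Lemma \ref{Lem:bounded u&v_tilde}(i), $\max_{i,t}\|\varpi_{it}^0\|_2\leq C\xi_N$ a.s.

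For part (i), the terms $a_{it}\varpi_{it}^0$, $t\in[T]$, form a conditionally mean–zero strong mixing sequence bounded by $C\xi_N$ a.s. with conditional variance proxy $\lesssim\xi_N^2$ (Davydov's inequality for the conditional covariances). Applying the conditional Bernstein inequality of Lemma \ref{Lem:Bern}(ii) with $d\asymp\xi_N\sqrt{T\log(N\vee T)}$, taking expectations, and using a union bound over the at most $N$ indices $i\in I_3$ (together with the usual device of intersecting with a high–probability event on which the conditional variance bound holds) yields $\max_{i\in I_3}\|D_i^{II}\|_F=O_p(\sqrt{\log(N\vee T)/T}\,\xi_N)$.

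For part (ii), write
\begin{equation*}
\dot D_i^{I}-D_i^{I}=\frac1T\sum_{t=1}^T\bigl[\mathfrak{f}_{it}(\dot\Delta_{t,v}^{\prime}\Psi_{it}^0)-\mathfrak{f}_{it}(0)\bigr]\dot\varpi_{it}\dot\varpi_{it}^{\prime}+\frac1T\sum_{t=1}^T\mathfrak{f}_{it}(0)\bigl(\dot\varpi_{it}\dot\varpi_{it}^{\prime}-\varpi_{it}^0\varpi_{it}^{0\prime}\bigr).
\end{equation*}
A one–term Taylor expansion and Assumption \ref{ass:1}(vii) give $|\mathfrak{f}_{it}(\dot\Delta_{t,v}^{\prime}\Psi_{it}^0)-\mathfrak{f}_{it}(0)|\leq\bar{\mathfrak{f}}^{\prime}\|\dot\Delta_{t,v}\|_2\|\Psi_{it}^0\|_2$, and Theorem \ref{Thm2}(ii) yields $\max_t\|\dot\Delta_{t,v}\|_2=O_p(\eta_N)$; combined with $\|\dot\varpi_{it}\|_2^2\lesssim 1+\|X_{it}\|_2^2$ and $\frac1T\sum_t\|X_{it}\|_2^3\leq M$ a.s. (Assumption \ref{ass:1}(iv)) the first term is $O_p(\eta_N)$ uniformly in $i\in I_3$. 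For the second term, using $\dot\varpi_{it}\dot\varpi_{it}^{\prime}-\varpi_{it}^0\varpi_{it}^{0\prime}=(\dot\varpi_{it}-\varpi_{it}^0)\dot\varpi_{it}^{\prime}+\varpi_{it}^0(\dot\varpi_{it}-\varpi_{it}^0)^{\prime}$ and $\|\dot\varpi_{it}-\varpi_{it}^0\|_2\leq\max_j\|\dot\Delta_{t,j}\|_2\sqrt{1+\|X_{it}\|_2^2}=O_p(\eta_N)\sqrt{1+\|X_{it}\|_2^2}$ (Theorem \ref{Thm2}(ii), with $X_{0,it}=1$), the same moment bound gives $O_p(\eta_N)$ uniformly in $i$. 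Hence $\max_{i\in I_3}\|\dot D_i^{I}-D_i^{I}\|_F=O_p(\eta_N)$.

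Part (iii) is the main step. Adding and subtracting $\frac1T\sum_t[\tau-\mathbf{1}\{\epsilon_{it}\leq\dot\Delta_{t,v}^{\prime}\Psi_{it}^0\}]\varpi_{it}^0$ gives the identity
\begin{equation*}
\dot D_i^{II}-D_i^{II}-\frac1T\sum_{t=1}^T\bigl[\mathbf{1}\{\epsilon_{it}\leq 0\}-\mathbf{1}\{\epsilon_{it}\leq\dot\Delta_{t,v}^{\prime}\Psi_{it}^0\}\bigr]\varpi_{it}^0=\frac1T\sum_{t=1}^T\bigl[\tau-\mathbf{1}\{\epsilon_{it}\leq\dot\Delta_{t,v}^{\prime}\Psi_{it}^0\}\bigr]\bigl(\dot\varpi_{it}-\varpi_{it}^0\bigr),
\end{equation*}
so it suffices to bound the right–hand side. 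Split $\tau-\mathbf{1}\{\epsilon_{it}\leq\dot\Delta_{t,v}^{\prime}\Psi_{it}^0\}=a_{it}+[\mathbf{1}\{\epsilon_{it}\leq 0\}-\mathbf{1}\{\epsilon_{it}\leq\dot\Delta_{t,v}^{\prime}\Psi_{it}^0\}]$. The second piece gives a term whose norm is at most $\frac1T\sum_t\mathbf{1}\{|\epsilon_{it}|\leq|\dot\Delta_{t,v}^{\prime}\Psi_{it}^0|\}\|\dot\varpi_{it}-\varpi_{it}^0\|_2=O_p(\eta_N)\,\frac1T\sum_t\mathbf{1}\{|\epsilon_{it}|\leq C\eta_N\sqrt{1+\|X_{it}\|_2^2}\}\sqrt{1+\|X_{it}\|_2^2}$; centering the indicator by its conditional probability given $\mathscr{D}_e^{I_1\cup I_2}$ (which is $\lesssim\eta_N\sqrt{1+\|X_{it}\|_2^2}$ by the bounded conditional density) and applying a conditional Bernstein bound over $i\in I_3$, exactly as in Step 4 of the proof of Theorem \ref{Thm3}, shows this is $O_p(\eta_N^2)=o_p((N\vee T)^{-1/2})$ by Assumption \ref{ass:1}(ix). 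For the first piece $\frac1T\sum_t a_{it}(\dot\varpi_{it}-\varpi_{it}^0)$ the crude bound $O_p(\eta_N\xi_N)$ is not small enough, so I would again condition on $\mathscr{D}_e^{I_1\cup I_2}$: there $\dot\varpi_{it}-\varpi_{it}^0$ is measurable and bounded by $C\eta_N\xi_N$ a.s. (Theorem \ref{Thm2}(ii) and Assumption \ref{ass:1}(v)), while $\{a_{it}\}_{t}$ remains conditionally mean zero and strong mixing for each $i\in I_3$; a conditional Bernstein bound with $d\asymp\eta_N\xi_N\sqrt{T\log(N\vee T)}$ followed by a union bound over $i\in I_3$ gives $\max_{i\in I_3}\|\frac1T\sum_t a_{it}(\dot\varpi_{it}-\varpi_{it}^0)\|_2=O_p(\eta_N\xi_N\sqrt{\log(N\vee T)/T})=o_p((N\vee T)^{-1/2})$, again by Assumption \ref{ass:1}(ix). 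The main obstacle throughout is bookkeeping the $\sigma$-fields so that the first–stage quantities are measurable while $\{\epsilon_{it}\}_t$ stays conditionally strong mixing with zero conditional mean for $i\in I_3$ — this is exactly what makes the conditional exponential inequalities applicable and, combined with Assumption \ref{ass:1}(ix), upgrades the crude $O_p$ rates of part (iii) to $o_p((N\vee T)^{-1/2})$.
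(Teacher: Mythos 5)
Your proposal is correct and follows essentially the same route as the paper's proof: condition on a $\sigma$-field under which the first-stage objects and regressors are measurable while $\{\epsilon_{it}\}_t$ remains conditionally mean-zero and strong mixing for $i\in I_3$, use the conditional Bernstein inequality (Lemma~\ref{Lem:Bern}) plus a union bound over $i$ for parts (i) and (iii), and a Taylor expansion of $\mathfrak{f}_{it}$ combined with the uniform rate for $\dot\Delta_{t,v}$ for part (ii). Your single identity in part (iii) and the subsequent split $\tau-\mathbf{1}\{\epsilon_{it}\le\dot\Delta_{t,v}^{\prime}\Psi_{it}^0\}=a_{it}+[\mathbf{1}\{\epsilon_{it}\le 0\}-\mathbf{1}\{\epsilon_{it}\le\dot\Delta_{t,v}^{\prime}\Psi_{it}^0\}]$ reproduce the paper's decomposition into $S_{1,i},S_{2,i},S_{3,i}$ (the paper conditions on $\mathscr{D}_{e_i}^{I_1\cup I_2}$ rather than $\mathscr{D}_e^{I_1\cup I_2}$, but by Assumption~\ref{ass:1}(i) the two give the same conditional law of $\{\epsilon_{it}\}_t$, so this is immaterial).
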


\begin{proof}
Throughout the proof, we assume there is only one regressor $p=1$ for
notation simplicity.

(i) We notice that $\mathbb{E}\left(D_{i}^{II}\bigg|\mathscr{D}\right) =0$.
By conditional Bernstein's inequality, for a positive constant $c_{16}$, we
have 
\begin{align*}
& \mathbb{P}\left(\max_{i\in I_{3}}\left\Vert \sum_{t\in[T]}\left[\tau -%
\mathbf{1}\left\{ \epsilon_{it}\leq 0\right\} \right]%
O_{1}^{(1)}v_{t,1}^{0}X_{1,it}\right\Vert_{2}>c_{16}\sqrt{T\log (N\vee T)}%
\xi_{N}\bigg|\mathscr{D}\right) \\
& \leq \sum_{i\in I_{3}}\exp \left(-\frac{c_{9}c_{16}^{2}T\xi_{N}^{2}\log(N%
\vee T)}{\frac{4M^{2}}{c_{\sigma }^{2}}T\xi_{N}^{2}+\frac{2Mc_{16}}{%
c_{\sigma }}\xi_{N}^{2}\sqrt{T\log (N\vee T)}\log T\log \log T}\right)=o(1),
\end{align*}%
where the inequality follows from Lemma \ref{Lem:Bern}(i), Assumption \ref%
{ass:1}(ii), Assumption \ref{ass:1}(v), and the fact that $\max_{i\in I_{3},t%
\in[T]}\left\Vert \left[ \tau -\mathbf{1}\left\{ \epsilon_{it}\leq 0\right\} %
\right] O_{1}^{(1)}v_{t,1}^{0}X_{1,it}\right\Vert_{2}\leq \frac{2M}{%
c_{\sigma }}\xi_{N}$ a.s. Similar arguments hold for the upper block of $%
D_{i}^{II}$. This concludes the proof of (i).

(ii) Notice that 
\begin{equation*}
\dot{D}_{i}^{I}-D_{i}^{I}=\frac{1}{T}\sum_{t\in[T]}%
\begin{bmatrix}
\mathfrak{f}_{it}\left(\dot{\Delta}_{t,v}^{\prime}\Psi_{it}^{0}\right) \dot{v%
}_{t,0}^{(1)}\dot{v}_{t,0}^{(1)}-\mathfrak{f}_{it}(0)v_{t,0}^{0}v_{t,0}^{0%
\prime} & \mathfrak{f}_{it}\left(\dot{\Delta}_{t,v}^{\prime}\Psi_{it}^{0}%
\right) \dot{v}_{t,0}^{(1)}\dot{v}_{t,1}^{(1)}X_{1,it}-\mathfrak{f}%
_{it}(0)v_{t,0}^{0}v_{t,1}^{0\prime}X_{1,it} \\ 
\mathfrak{f}_{it}\left(\dot{\Delta}_{t,v}^{\prime}\Psi_{it}^{0}\right) \dot{v%
}_{t,1}^{(1)}\dot{v}_{t,0}^{(1)}X_{1,it}-\mathfrak{f}%
_{it}(0)v_{t,1}^{0}v_{t,0}^{0\prime}X_{1,it} & \mathfrak{f}_{it}\left(\dot{%
\Delta}_{t,v}^{\prime}\Psi_{it}^{0}\right) \dot{v}_{t,1}^{(1)}\dot{v}%
_{t,1}^{(1)}X_{1,it}^{2}-\mathfrak{f}_{it}(0)v_{t,1}^{0}v_{t,1}^{0%
\prime}X_{1,it}^{2}%
\end{bmatrix}%
.
\end{equation*}
To show the upper bound of $\dot{D}_{i}^{I}-D_{i}^{I}$, we take the lower
block for instance and all other three blocks follow the same pattern. Noted
that 
\begin{align}
& \max_{i\in I_{3}}\left\Vert \frac{1}{T}\sum_{t\in[T]}\left[\mathfrak{f}%
_{it}\left(\dot{\Delta}_{t,v}^{\prime}\Psi_{it}^{0}\right) \dot{v}%
_{t,1}^{(1)}\dot{v}_{t,1}^{(1)\prime}X_{1,it}^{2}-\mathfrak{f}%
_{it}(0)v_{t,1}^{0}v_{t,1}^{0\prime}X_{1,it}^{2}\right] \right\Vert_{F} 
\notag  \label{Lem17.1} \\
& \leq \max_{i\in I_{3}}\left\Vert \frac{1}{T}\sum_{t\in[T]}\left[\mathfrak{f%
}_{it}\left(\dot{\Delta}_{t,v}^{\prime}\Psi_{it}^{0}\right) -\mathfrak{f}%
_{it}(0)\right] \left[ \dot{v}_{t,1}^{(1)}\dot{v}_{t,1}^{(1)%
\prime}-v_{t,1}^{0}v_{t,1}^{0\prime}\right] X_{1,it}^{2}\right\Vert_{F} 
\notag \\
& +\max_{i\in I_{3}}\left\Vert \frac{1}{T}\sum_{t\in[T]}\left[\mathfrak{f}%
_{it}\left(\dot{\Delta}_{t,v}^{\prime}\Psi_{it}^{0}\right) -\mathfrak{f}%
_{it}(0)\right] v_{t,1}^{0}v_{t,1}^{0\prime}X_{1,it}^{2}\right\Vert_{F} 
\notag \\
& +\max_{i\in I_{3}}\left\Vert \frac{1}{T}\sum_{t\in[T]}\mathfrak{f}_{it}(0)%
\left[ \dot{v}_{t,1}^{(1)}\dot{v}_{t,1}^{(1)\prime}-v_{t,1}^{0}v_{t,1}^{0%
\prime}\right] X_{1,it}^{2}\right\Vert_{F}  \notag \\
& =O_{p}(\eta_{N}),
\end{align}%
where the equality is by the fact that 
\begin{align*}
& \max_{i\in I_{3},t\in[T]}\left\vert \mathfrak{f}_{it}\left(\dot{\Delta}%
_{t,v}^{\prime}\Psi_{it}^{0}\right) -\mathfrak{f}_{it}(0)\right\vert\lesssim
\max_{i\in I_{3},t\in[T]}\left\Vert
\Psi_{it}^{0}\right\Vert_{2}^{2}\left\Vert \dot{\Delta}_{t,v}\right%
\Vert_{2},\quad \max_{i\in I_{3}}\left\Vert \dot{v}_{t,1}^{(1)}\right%
\Vert_{2}+\left\Vert v_{t,1}^{0}\right\Vert_{2}=O_{p}(1), \\
& \max_{i\in I_{3}}\frac{1}{T}\sum_{t\in [T]}X_{1,it}^{2}\leq C,\quad \max_{t%
\in[T]}\left\Vert \dot{v}_{t,1}^{(1)}\dot{v}_{t,1}^{(1)%
\prime}-v_{t,1}^{0}v_{t,1}^{0\prime}\right\Vert_{F}\leq \max_{i\in
I_{3}}\left(\left\Vert \dot{v}_{t,1}^{(1)}\right\Vert_{2}+\left\Vert
v_{t,1}^{0}\right\Vert_{2}\right) \left\Vert \dot{v}_{t,1}^{(1)}-v_{t,1}^{0}%
\right\Vert_{2},
\end{align*}%
and 
\begin{align*}
& \max_{i\in I_{3}}\left\Vert \frac{1}{T}\sum_{t\in[T]}\left[\mathfrak{f}%
_{it}\left(\dot{\Delta}_{t,v}^{\prime}\Psi_{it}^{0}\right) -\mathfrak{f}%
_{it}(0)\right] \left[ \dot{v}_{t,1}^{(1)}\dot{v}_{t,1}^{(1)%
\prime}-v_{t,1}^{0}v_{t,1}^{0\prime}\right] X_{1,it}^{2}\right\Vert_{F} \\
& \leq \max_{i\in I_{3},t\in[T]}\bar{\mathfrak{f}}^{\prime}\left\Vert \dot{%
\Delta}_{t,v}\right\Vert_{2}\left\Vert \dot{v}_{t,1}^{(1)}\dot{v}%
_{t,1}^{(1)\prime}-v_{t,1}^{0}v_{t,1}^{0\prime}\right\Vert_{F}\max_{i\in
I_{3}}\frac{1}{T}\sum_{t\in[T]}\left\Vert
\Psi_{it}^{0}\right\Vert_{2}\left\vert X_{1,it}\right\vert
^{2}=O_{p}\left(\eta_{N}^{2}\right) .
\end{align*}
% where the third line is due to mean-value theorem, Lemma \ref{Lem:bounded u&v_tilde}(i), Theorem \ref{Thm2} $(ii)$ and Assumption \ref{ass:1}(iv). Similarly, we can show the second and third term in (\ref{Lem17.1}) are $O_{p}(\eta_{N})$.

(iii) Note that 
\begin{align*}
& \dot{D}_{i}^{II}-D_{i}^{II}-\frac{1}{T}\sum_{t=1}^{T}\left[ \mathbf{1}%
\left\{ \epsilon_{it}\leq 0\right\} -\mathbf{1}\left\{ \epsilon_{it}\leq\dot{%
\Delta}_{t,v}^{\prime}\Psi_{it}^{0}\right\} \right] \varpi_{it}^{0} \\
& =\frac{1}{T}\sum_{t=1}^{T}\left[ \mathbf{1}\left\{ \epsilon_{it}\leq
0\right\} -\mathbf{1}\left\{ \epsilon_{it}\leq \dot{\Delta}%
_{t,v}^{\prime}\Psi_{it}^{0}\right\} \right] \left(\dot{\varpi}%
_{it}-\varpi_{it}^{0}\right)+\frac{1}{T}\sum_{t=1}^{T}\left[ \tau -\mathbf{1}%
\left\{ \epsilon_{it}\leq 0\right\} \right] \left(\dot{\varpi}%
_{it}-\varpi_{it}^{0}\right) \\
& =\frac{1}{T}\sum_{t=1}^{T}\mathbb{E}\left\{ \left[ \mathbf{1}%
\left\{\epsilon_{it}\leq 0\right\} -\mathbf{1}\left\{ \epsilon_{it}\leq \dot{%
\Delta}_{t,v}^{\prime}\Psi_{it}^{0}\right\} \right] \left(\dot{\varpi}%
_{it}-\varpi_{it}^{0}\right) \bigg|\mathscr{D}_{e_{i}}^{I_{1}\cup
I_{2}}\right\} \\
& +\frac{1}{T}\sum_{t=1}^{T}\left\{ \left[ \mathbf{1}\left\{
\epsilon_{it}\leq 0\right\} -\mathbf{1}\left\{ \epsilon_{it}\leq \dot{\Delta}%
_{t,v}^{\prime}\Psi_{it}^{0}\right\} \right] \left(\dot{\varpi}%
_{it}-\varpi_{it}^{0}\right) -\mathbb{E}\left[ \left[ \mathbf{1}\left\{
\epsilon_{it}\leq 0\right\} -\mathbf{1}\left\{ \epsilon_{it}\leq \dot{\Delta}%
_{t,v}^{\prime}\Psi_{it}^{0}\right\} \right] \left(\dot{\varpi}%
_{it}-\varpi_{it}^{0}\right) \bigg|\mathscr{D}_{e_{i}}^{I_{1}\cup I_{2}}%
\right] \right\} \\
& +\frac{1}{T}\sum_{t=1}^{T}\left[ \tau -\mathbf{1}\left\{ \epsilon_{it}\leq
0\right\} \right] \left(\dot{\varpi}_{it}-\varpi_{it}^{0}\right) \\
& :=S_{1,i}+S_{2,i}+S_{3,i},
\end{align*}
where 
\begin{equation*}
\max_{i\in I_{3}}\left\Vert S_{1,i}\right\Vert_{2}\leq \max_{i\in I_{3}}%
\frac{1}{T}\sum_{t\in[T]}\left\Vert \dot{\varpi}_{it}-\varpi_{it}^{0}\right%
\Vert_{2}\left\vert \mathfrak{F}_{it}\left(\dot{\Delta}_{t,v}^{\prime}%
\Psi_{it}^{0}\right) -\mathfrak{F}_{it}(0)\right\vert \lesssim\max_{i\in
I_{3}}\frac{1}{T}\sum_{t\in[T],j\in [p]}\left\vert X_{j,it}^{2}\right\vert
\max_{t\in[T]}\left\Vert \dot{\Delta}_{t,v}\right\Vert_{2}^{2}=O_{p}\left(%
\eta_{N}^{2}\right) .
\end{equation*}
As for $S_{2,i}$ and $S_{3,i}$, we first recall, for any $e>0$, there exists
a sufficiently large constant $M$ such that for 
\begin{equation*}
\mathscr{A}_{7}(M)=\left\{ \max_{i\in I_{3}}\left\Vert O_{j}^{(1)\prime}\dot{%
u}_{i,j}^{(1)}-u_{i,j}^{0}\right\Vert_{2}\leq M\eta_{N},\max_{t\in[T]%
}\left\Vert O_{j}^{(1)\prime}\dot{v}_{t,j}^{(1)}-v_{t,j}^{0}\right\Vert_{2}%
\leq M\eta_{N},\forall j\in [p]\cup \{0\}\right\}
\end{equation*}%
we have $\mathbb{P}(\mathscr{A}_{7}^{c}(M))\leq e$. In addition, let 
\begin{equation*}
\mathscr{A}_{7,i}(M)=\left\{ \left\Vert O_{j}^{(1)\prime}\dot{u}%
_{i,j}^{(1)}-u_{i,j}^{0}\right\Vert_{2}\leq M\eta_{N},\max_{t\in
[T]}\left\Vert O_{j}^{(1)\prime}\dot{v}_{t,j}^{(1)}-v_{t,j}^{0}\right%
\Vert_{2}\leq M\eta_{N},\forall j\in [p]\cup \{0\}\right\} .
\end{equation*}%
Then, we have 
\begin{align*}
& \mathbb{P}\left(\max_{i\in I_{3}}\left\Vert
S_{2,i}\right\Vert_{2}>c_{17}\eta_{N}^{2}\right) \leq \mathbb{P}%
\left(\max_{i\in I_{3}}\left\Vert S_{2,i}\right\Vert_{2}>c_{17}\eta_{N}^{2},%
\mathscr{A}_{7}(M)\right) +e \\
& \leq \sum_{i\in I_{3}}\mathbb{P}\left(\left\Vert
S_{2,i}\right\Vert_{2}>c_{17}\eta_{N}^{2},\mathscr{A}_{7}(M)\right) +e \\
& \leq \sum_{i\in I_{3}}\mathbb{P}\left(\left\Vert
S_{2,i}\right\Vert_{2}>c_{17}\eta_{N}^{2},\mathscr{A}_{7,i}(M)\right) +e \\
& =\sum_{i\in I_{3}}\mathbb{E}\mathbb{P}\left(\left\Vert
S_{2,i}\right\Vert_{2}>c_{17}\eta_{N}^{2}\bigg|\mathscr{D}%
_{e_{i}}^{I_{1}\cup I_{2}}\right)1\{\mathscr{A}_{7,i}(m)\}+e \\
& \leq \sum_{i\in I_{2}}\exp\left(-\frac{c_{9}c_{17}^{2}T^{2}\eta_{N}^{4}}{%
c_{18}T\eta_{N}^{2}\xi_{N}^{2}+c_{17}c_{18}^{1/2}T\eta_{N}^{3}\xi_{N}\log
T\log \log T}\right) +e=o(1)+e
\end{align*}
with a positive constant $c_{17}$ and the inequality above is by Lemma \ref%
{Lem:Bern}(i) with the fact that, under $\mathscr{A}_{6,i}(M)$, 
\begin{align*}
& \max_{i\in I_{3},t\in[T]}\left\Vert \left[ \mathbf{1}\left\{\epsilon_{it}%
\leq 0\right\} -\mathbf{1}\left\{ \epsilon_{it}\leq \dot{\Delta}%
_{t,v}^{\prime}\Psi_{it}^{0}\right\} \right] \left(\dot{\varpi}%
_{it}-\varpi_{it}^{0}\right) -\mathbb{E}\left[ \left[ \mathbf{1}%
\left\{\epsilon_{it}\leq 0\right\} -\mathbf{1}\left\{ \epsilon_{it}\leq \dot{%
\Delta}_{t,v}^{\prime}\Psi_{it}^{0}\right\} \right] \left(\dot{\varpi}%
_{it}-\varpi_{it}^{0}\right) \bigg|\mathscr{D}_{e_{i}}^{I_{1}\cup I_{2}}%
\right] \right\Vert_{2}^{2} \\
& \leq c_{18}\eta_{N}\xi_{N}.
\end{align*}%
As $e$ is arbitrary, we have $\max_{i\in I_{3}}\left\Vert
S_{2,i}\right\Vert_{2}=O_{p}(\eta_{N}^{2})$. Following a similar argument,
we have $\max_{i\in I_{3}}\left\Vert
S_{3,i}\right\Vert_{2}=O_{p}(\eta_{N}^{2})$. By Assumption \ref{ass:1}(ix),
we note that $O_{p}(\eta_{N}^{2})=o_{p}\left(\left(N\vee T\right)
^{-1/2}\right) $. % \begin{align*}
%     \mathbb{P}\left(\max_{i\in I_{3}}\left\Vert S_{3,i} \right\Vert_{2}>c_{17}\eta_{N}^{2}\bigg|\left\{\mathscr{D}^{I_{1}\cup I_{2}}_{e_{i}}\right\}_{i\in I_{3},t\in[T]} \right)=o(1).
% \end{align*}
% Given that $\mathbb{P}\left(\mathscr{A}_{4}^{c}\right)\to 0$ and Assumption \ref{ass:7}, we are done with the proof.
\end{proof}

\subsection{Lemmas for the Proof Theorem \protect\ref{Thm3}}

\begin{lemma}
{\small \label{Lem18} } Define $H_{x,j}^{l}=\left(\frac{\hat{L}%
_{j}^{\prime}L_{j}^{0}}{N}\right)^{-1}$ and $H_{x,j}^{w}=W_{j}^{0\prime}\hat{%
W}_{j}\left(\hat{W}_{j}^{\prime}\hat{W}_{j}\right)^{-1}$. Under Assumptions %
\ref{ass:7}-\ref{ass:9}, we have

\begin{itemize}
\item[(i)] $H_{x,j}^{l}=H_{x,j}^{w}+O_{p}\left(\frac{1}{N\wedge T}\right)$,

\item[(ii)] $\frac{1}{N}\left\Vert \hat{L}_{j}-L_{j}^{0}H_{x,j}^{w}\right%
\Vert_{F}^{2}=O_{p}\left(\frac{1}{N\wedge T}\right)$,

\item[(iii)] $\frac{1}{T}\left\Vert\hat{W}_{j}-W_{j}\left(H_{x,j}^{l\prime
}\right)^{-1}\right\Vert_{F}^{2}=O_{p}\left(\frac{1}{N\wedge T}\right)$.
\end{itemize}
\end{lemma}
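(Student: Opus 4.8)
The plan is to adapt the standard PCA consistency argument for approximate factor models (as developed in \cite{bai2020simpler}) to the present notation, exploiting the exact identities furnished by the normalizations $\hat{L}_j'\hat{L}_j/N=I_{r_j}$ and $\hat{W}_j'\hat{W}_j/T$ diagonal. Write $\mu_j=L_j^0W_j^{0\prime}$ so that $X_j=\mu_j+E_j$. First I would record that the first‑order condition of the PCA problem (\ref{debias_0}) is the eigenvalue equation $\tfrac{1}{NT}X_jX_j'\hat{L}_j=\hat{L}_j\hat{V}_{NT}$, where $\hat{V}_{NT}$ is the diagonal matrix of the $r_j$ largest eigenvalues of $\tfrac{1}{NT}X_jX_j'$, and that concentrating out $L_j$ (resp.\ $W_j$) yields the closed forms $\hat{L}_j=X_j\hat{W}_j(\hat{W}_j'\hat{W}_j)^{-1}$ and $\hat{W}_j=\tfrac{1}{N}X_j'\hat{L}_j$, together with $\hat{W}_j'\hat{W}_j=T\hat{V}_{NT}$. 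Substituting $X_j=L_j^0W_j^{0\prime}+E_j$ into these closed forms and using the definitions of $H_{x,j}^w$ and $H_{x,j}^l$ (note $(H_{x,j}^{l\prime})^{-1}=\tfrac{1}{N}L_j^{0\prime}\hat{L}_j$) gives the two exact decompositions
\[
\hat{L}_j-L_j^0 H_{x,j}^w = E_j\hat{W}_j(\hat{W}_j'\hat{W}_j)^{-1},\qquad \hat{W}_j - W_j^0(H_{x,j}^{l\prime})^{-1}=\tfrac{1}{N}E_j'\hat{L}_j.
\]

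Next I would show that $\hat{V}_{NT}$ is bounded away from $0$ and $\infty$ w.p.a.1. Decomposing $\tfrac{1}{NT}X_jX_j'$ into the signal term $\tfrac{1}{NT}L_j^0W_j^{0\prime}W_j^0L_j^{0\prime}$ plus the two cross terms and the noise term $\tfrac{1}{NT}E_jE_j'$: Assumptions \ref{ass:7}(ii)--(iii) give $\|E_j\|_{op}^2=O_p(N\vee T)$, so the noise term is $o_p(1)$ in operator norm and, with Assumption \ref{ass:9}, the cross terms are $o_p(1)$ as well; the signal term shares its nonzero eigenvalues with $\tfrac{W_j^{0\prime}W_j^0}{T}\tfrac{L_j^{0\prime}L_j^0}{N}$, which converges to $\Sigma_{W_j}\Sigma_{L_j}$ with $r_j$ positive distinct eigenvalues by Assumption \ref{ass:8}; Weyl's inequality then pins down $\hat{V}_{NT}$. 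Hence $\|(\hat{W}_j'\hat{W}_j)^{-1}\|_{op}=O_p(T^{-1})$ and $\|\hat{W}_j\|_F^2=O_p(T)$, while $\|\hat{L}_j\|_F^2=Nr_j$. Statement (ii) then follows from the first decomposition: $\tfrac{1}{N}\|\hat{L}_j-L_j^0H_{x,j}^w\|_F^2\le \tfrac{1}{N}\|E_j\|_{op}^2\|\hat{W}_j\|_F^2\|(\hat{W}_j'\hat{W}_j)^{-1}\|_{op}^2=\tfrac{1}{N}O_p(N\vee T)O_p(T)O_p(T^{-2})=O_p((N\wedge T)^{-1})$, and statement (iii) follows from the second decomposition in the same way, $\tfrac{1}{T}\|\hat{W}_j-W_j^0(H_{x,j}^{l\prime})^{-1}\|_F^2=\tfrac{1}{TN^2}\|E_j'\hat{L}_j\|_F^2\le\tfrac{1}{TN^2}\|E_j\|_{op}^2\|\hat{L}_j\|_F^2=O_p((N\wedge T)^{-1})$.

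For statement (i) I would combine the normalization $\hat{L}_j'\hat{L}_j/N=I_{r_j}$ with (ii): expanding $I_{r_j}=\tfrac{1}{N}\hat{L}_j'\hat{L}_j$ around $L_j^0H_{x,j}^w$ and absorbing the cross and quadratic residual terms (each $O_p((N\wedge T)^{-1})$, using (ii), Assumption \ref{ass:9}, and $\|E_j\|_{op}^2=O_p(N\vee T)$) gives $(\tfrac{1}{N}L_j^{0\prime}L_j^0)^{-1}=H_{x,j}^wH_{x,j}^{w\prime}+O_p((N\wedge T)^{-1})$; in particular $H_{x,j}^w$ is bounded and invertible with bounded inverse w.p.a.1. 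On the other hand, plugging $\hat{L}_j=X_j\hat{W}_j(\hat{W}_j'\hat{W}_j)^{-1}$ into $\tfrac{1}{N}\hat{L}_j'L_j^0$ and using $(\hat{W}_j'\hat{W}_j)^{-1}\hat{W}_j'W_j^0=H_{x,j}^{w\prime}$ together with Assumption \ref{ass:9} gives $\tfrac{1}{N}\hat{L}_j'L_j^0=H_{x,j}^{w\prime}\big(\tfrac{1}{N}L_j^{0\prime}L_j^0\big)+O_p((N\wedge T)^{-1})$, so that
\[
H_{x,j}^l=\Big(\tfrac{1}{N}\hat{L}_j'L_j^0\Big)^{-1}=\Big(\tfrac{1}{N}L_j^{0\prime}L_j^0\Big)^{-1}(H_{x,j}^{w\prime})^{-1}+O_p((N\wedge T)^{-1})=H_{x,j}^wH_{x,j}^{w\prime}(H_{x,j}^{w\prime})^{-1}+O_p((N\wedge T)^{-1}),
\]
which is $H_{x,j}^w+O_p((N\wedge T)^{-1})$.

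The main obstacle is not the matrix algebra but verifying that every term coupling the noise $E_j$ with the true loadings and factors --- in particular $\tfrac{1}{NT}L_j^{0\prime}E_jW_j^0$, $\tfrac{1}{N}\hat{W}_j'E_j'L_j^0$, and the quadratic--noise term $\tfrac{1}{N^2T}\hat{L}_j'E_jE_j'L_j^0$ --- is genuinely $O_p((N\wedge T)^{-1})$ rather than merely $O_p((N\wedge T)^{-1/2})$. This is exactly where the strengthened (uniform) versions of Assumptions \ref{ass:7} and \ref{ass:9} are needed, and it is handled by first splitting $\hat{W}_j$ (resp.\ $\hat{L}_j$) into its leading rotated part plus the residual from the decompositions above and iterating the bound used for (ii) once. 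A secondary point is that Assumption \ref{ass:8}(i) is only a convergence statement with no rate, so (i) must be routed through the identity $(\tfrac{1}{N}L_j^{0\prime}L_j^0)^{-1}=H_{x,j}^wH_{x,j}^{w\prime}+O_p((N\wedge T)^{-1})$, which carries the sharp rate from the normalization, rather than through $\Sigma_{L_j}^{-1}$.
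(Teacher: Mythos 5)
The paper does not prove Lemma \ref{Lem18} at all; its proof is a one-line citation to \citet[Lemma 3 and Proposition 1]{bai2020simpler}. Your self-contained argument starts from the same exact PCA identities $\hat{L}_j=X_j\hat{W}_j(\hat{W}_j'\hat{W}_j)^{-1}$, $\hat{W}_j=\frac{1}{N}X_j'\hat{L}_j$, $\hat{L}_j-L_j^0H_{x,j}^w=E_j\hat{W}_j(\hat{W}_j'\hat{W}_j)^{-1}$ and $\hat{W}_j-W_j^0(H_{x,j}^{l\prime})^{-1}=\frac{1}{N}E_j'\hat{L}_j$ that Bai also uses, and the algebraic route from the normalization to statement (i) is correct in outline. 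The difference is in how the residual is controlled: you collapse everything onto the single claim $\|E_j\|_{op}^2=O_p(N\vee T)$ and then push it through submultiplicative norm bounds, whereas Bai's proof never touches $\|E_j\|_{op}$ and instead splits $E_j\hat{W}_j$ (resp.\ $E_j'\hat{L}_j$) term by term and bounds each piece with the moment conditions in Assumptions \ref{ass:7}(ii)--(iii), \ref{ass:9}.

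That collapse is where the gap is. Assumptions \ref{ass:7}(ii)--(iii) are variance and summability conditions on cross-sectional averages of products $e_{j,it}e_{j,is}$; they are Bai--Ng-type ``trace'' conditions and do not control the spectral norm of $E_j$. They guarantee $\frac{1}{NT}\|E_j\|_F^2=O_p(1)$ and bounds on specific bilinear forms (e.g.\ $\|\frac{1}{N}\sum_i e_{j,it}e_{j,is}-\gamma_j(s,t)\|=O_p(N^{-1/2})$), but none of that pins down $\lambda_{\max}(E_jE_j')$ at the rate $N\vee T$. Under the dependence structure the paper actually imposes (conditional independence across $i$, strong mixing in $t$, unbounded entries), the available spectral bounds of the kind the paper proves elsewhere (e.g.\ Lemma \ref{Lem:score op}) come with extra $\log T$ factors and require bounded entries. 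If you settle for $\|E_j\|_{op}^2=O_p((N\vee T)\log T)$, your bound for (ii) degrades to $O_p\!\left(\tfrac{\log T}{N\wedge T}\right)$, which is not what the lemma claims. The same issue propagates into the $\frac{1}{N^2}L_j^{0\prime}E_jE_j'\hat{L}_j(\hat{W}_j'\hat{W}_j)^{-1}$ term you introduce in the ``iterate once'' step for (i). So the proposal as written does not establish the stated rate from the stated assumptions. To fix it you would need either (a) to add $\|E_j\|_{op}=O_p(\sqrt{N}+\sqrt{T})$ as an explicit assumption, or (b) to abandon the operator-norm shortcut and reproduce the term-by-term decomposition of Bai (2020): expand $E_j\hat{W}_j=E_jW_j^0(H_{x,j}^{l\prime})^{-1}+\tfrac{1}{N}E_jE_j'\hat{L}_j$, write $\tfrac{1}{N}E_jE_j'\hat{L}_j$ via $\tfrac{1}{N}\sum_i e_{j,is}e_{j,it}$, split off its mean, and use Assumptions \ref{ass:7}(ii)--(iii) and \ref{ass:9} to control each of the four resulting averages directly. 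Your final paragraph correctly diagnoses that the $O_p((N\wedge T)^{-1})$ (vs.\ $O_p((N\wedge T)^{-1/2})$) rate for the cross-products is the crux, but the remedy you offer still feeds through $\|E_j\|_{op}^2$ and therefore does not close the gap.
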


\begin{proof}
The proof can be found in \citet[Lemma 3 and Proposition
1]{bai2020simpler}.
\end{proof}

\begin{lemma}
{\small \label{Lem19} } Under Assumptions \ref{ass:7}-\ref{ass:9}, we have

\begin{itemize}
\item[(i)] $\frac{1}{N}L_{j}^{0\prime}\left(\hat{L}_{j}-L_{j}^{0}H_{x,j}^{w}%
\right)=O_{p}\left(\frac{1}{N\wedge T}\right)$,

\item[(ii)] $\frac{1}{T}W_{j}^{0\prime}\left(\hat{W}_{j}-W_{j}^{0}%
\left(H_{x,j}^{l \prime }\right)^{-1}\right)=O_{p}\left(\frac{1}{N\wedge T}%
\right)$,

\item[(iii)] $\max_{t\in[T]}\frac{1}{N}\left(\hat{L}%
_{j}-L_{j}^{0}H_{x,j}^{w}\right) ^{\prime}e_{j,t}=O_{p}\left(\frac{\log
(N\vee T)}{N\wedge T}\right) $,

\item[(iv)] $\max_{i\in [N]}\frac{1}{T}\left(\hat{W}_{j}-W_{j}^{0}%
\left(H_{x,j}^{l\prime }\right) ^{-1}\right) ^{\prime }e_{j,i}=O_{p}\left(%
\frac{\log (N\vee T)}{N\wedge T}\right) $.
\end{itemize}
\end{lemma}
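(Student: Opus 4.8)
The plan is to follow the ``simpler analysis'' of PCA in \cite{bai2020simpler}, whose core identity already underlies Lemma \ref{Lem18}, and then to upgrade its pointwise bounds to the projected and uniform bounds claimed here. Write $X_j = L_j^0 W_j^{0\prime} + E_j$. The normalization $\frac1N \hat L_j'\hat L_j = I_{r_j}$ makes $\hat L_j/\sqrt N$ the matrix of leading eigenvectors of $\frac1{NT}X_jX_j'$, so $\frac1{NT}X_jX_j'\hat L_j = \hat L_j D_{j,NT}$ for a diagonal matrix $D_{j,NT}$ of the $r_j$ largest eigenvalues. Expanding $X_jX_j'$ and rearranging gives the three-term decomposition
\begin{equation*}
\hat L_j - L_j^0 \bar H_j = \frac1{NT}\Big[L_j^0 W_j^{0\prime}E_j' \hat L_j + E_j W_j^0 L_j^{0\prime}\hat L_j + E_jE_j'\hat L_j\Big]D_{j,NT}^{-1},
\end{equation*}
where $\bar H_j := (W_j^{0\prime}W_j^0/T)(L_j^{0\prime}\hat L_j/N)D_{j,NT}^{-1}$; by Lemma \ref{Lem18} and Assumptions \ref{ass:8}--\ref{ass:9}, $\bar H_j = H_{x,j}^w + O_p((N\wedge T)^{-1})$ and $D_{j,NT}^{-1}$, $\hat L_j'L_j^0/N$, $W_j^{0\prime}W_j^0/T$ are all $O_p(1)$. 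The analogous identity for $\hat W_j$ (using $\hat W_j = N^{-1}X_j'\hat L_j$) produces $\hat W_j - W_j^0 (H_{x,j}^{l\prime})^{-1}$ as the sum of the symmetric terms with the roles of $(N,L_j^0)$ and $(T,W_j^0)$ interchanged.

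For (i) and (ii), I would premultiply the display above by $\frac1N L_j^{0\prime}$ (resp.\ the $\hat W_j$-identity by $\frac1T W_j^{0\prime}$). The resulting terms are controlled by Assumption \ref{ass:9}(i) for the pieces $\frac1{NT}L_j^{0\prime}E_j W_j^0$ and $\frac1{NT}L_j^{0\prime}E_j'L_j^0$; by Assumption \ref{ass:7}(ii)--(iii) for the pure-error piece $\frac1{N^2T}L_j^{0\prime}E_jE_j'\hat L_j$, splitting $E_jE_j'$ into its conditional mean (bounded via \ref{ass:7}(iii)) and its deviation (bounded in $L^2$ via \ref{ass:7}(ii)); and by Lemma \ref{Lem18}(ii) to replace $\hat L_j$ by $L_j^0 H_{x,j}^w$ up to a remainder of size $\frac1{\sqrt N}\|\hat L_j - L_j^0 H_{x,j}^w\|_F = O_p((N\wedge T)^{-1/2})$. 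Combining the $O_p((N\wedge T)^{-1/2})$ size of the error factors with this replacement gives the claimed $O_p((N\wedge T)^{-1})$; this is precisely the extra $(N\wedge T)^{-1/2}$ gained from averaging against the deterministic, well-behaved $L_j^0$ instead of an arbitrary unit vector.

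Statements (iii) and (iv) are where the real work lies, since we need $\frac1N(\hat L_j - L_j^0 H_{x,j}^w)'e_{j,t}$ to hold \emph{uniformly} over $t\in[T]$. I would again start from the three-term decomposition, replace $\hat L_j$ by $L_j^0 H_{x,j}^w$ inside each term (the remainder being $o_p$ of the target after Cauchy--Schwarz and Lemma \ref{Lem18}(ii)), and then dot each term with $e_{j,t}/N$. This reduces everything to quantities such as $\frac1{NT}\sum_{s\in[T]} w_{j,s}^0\big(\tfrac1N\sum_{i\in[N]} e_{j,is}e_{j,it}\big)$, $\frac1{NT}\sum_{s,i}(\text{bdd})\,e_{j,is}e_{j,it}$, and the cubic piece $\frac1{N^2T}\sum_{i,k,s}e_{j,it}e_{j,is}e_{j,ks}(\cdots)$, all with $t$ free. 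Taking $\max_{t\in[T]}$ and applying Bernstein/Hoeffding-type exponential inequalities (Lemma \ref{Lem:Bern}, and the matrix Bernstein inequality of Lemma \ref{Lem:matrix Bern} for the vector-valued pieces) conditional on $\mathscr{D}$, together with the uniform moment bounds of Assumptions \ref{ass:7}(iv) and \ref{ass:9}, yields the $\log(N\vee T)$ inflation and the rate $O_p\big(\tfrac{\log(N\vee T)}{N\wedge T}\big)$. The main obstacle is controlling the cubic error term uniformly in $t$: one has to peel off conditional means in two stages, using \ref{ass:7}(ii)--(iii) for the inner average and the strengthened \emph{uniform} version of \ref{ass:7}(iv) for the outer union bound over $t$. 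This is exactly why the present assumptions are stated more strongly than their counterparts in \cite{bai2020simpler}. Finally, (ii) and (iv) follow from (i) and (iii) by the same argument after interchanging $(N,L_j^0)\leftrightarrow(T,W_j^0)$ and invoking Lemma \ref{Lem18}(i), $H_{x,j}^l = H_{x,j}^w + O_p((N\wedge T)^{-1})$, to pass between the $H^w$- and $(H^{l\prime})^{-1}$-normalizations; I would note this symmetry rather than rewrite the argument.
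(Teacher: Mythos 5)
Your proposal is correct in outline but takes the classic Bai (2003) eigenvector-equation route, whereas the paper follows the ``simpler analysis'' of Bai--Ng and the difference is substantial. You expand $\frac{1}{NT}X_jX_j'\hat L_j=\hat L_jD_{j,NT}$ to obtain the familiar three-term decomposition with rotation $\bar H_j$, and then you must (a) justify that $\bar H_j=H_{x,j}^w+O_p((N\wedge T)^{-1})$, and (b) control the cubic piece $\frac{1}{N^2T}E_jE_j'\hat L_j$ uniformly in $t$, which is genuinely the hard step and is why you find yourself reaching for Bernstein-type inequalities. The paper bypasses all of this: from $\hat W_j'=\frac1N\hat L_j'X_j$ one immediately gets the \emph{one-term} identity $\hat W_j'-(H_{x,j}^l)^{-1}W_j^{0\prime}=\frac1N\hat L_j'E_j$, and symmetrically $\hat L_j-L_j^0H_{x,j}^w=E_j\hat W_j(\hat W_j'\hat W_j)^{-1}$. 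Writing, say, $\frac1T(\hat W_j-W_j^0(H_{x,j}^{l\prime})^{-1})'e_{j,i}=\frac1{NT}\hat L_j'E_je_{j,i}$ and splitting $\hat L_j=L_j^0H_{x,j}^w+(\hat L_j-L_j^0H_{x,j}^w)$, the first piece is controlled \emph{directly} by the uniform bound in Assumption \ref{ass:9} and the second by Cauchy--Schwarz together with Assumption \ref{ass:7}(iv) and Lemma \ref{Lem18}(ii)---no cubic term, no separate verification of the rotation, and no fresh exponential-inequality work, since the uniformity is already built into the assumptions you correctly identify as strengthened relative to Bai--Ng. Your route would eventually succeed, but it re-derives probabilistic bounds the assumptions hand you for free and carries extra algebraic debt ($\bar H_j$ versus $H_{x,j}^w$, the $E_jE_j'$ term) that the simpler identities eliminate at the outset.
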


\begin{proof}
Statements (i) and (ii) are the same as Lemma 4(i) and (ii) in \cite%
{bai2020simpler}. Statements (iii) and (iv) are the uniform version of Lemma
4(iii) and (iv) in \cite{bai2020simpler}. Below we focus on part (iv) as the
proof of part (iii) follows analogously.

Noting that $\hat{W}_{j}^{\prime}-\left(H_{x,j}^{l}\right)
^{-1}W_{j}^{0\prime}=\frac{1}{N}\hat{L}_{j}^{\prime}X_{j}-\frac{\hat{L}%
_{j}^{\prime}L_{j}^{0}}{N}W_{j}^{0\prime}=\frac{1}{N}\hat{L}%
_{j}^{\prime}E_{j},$ we have 
\begin{equation*}
\frac{1}{T}\left(\hat{W}_{j}-W_{j}^{0}\left(H_{x,j}^{l\prime
}\right)^{-1}\right) ^{\prime}e_{j,i}=\frac{1}{NT}e_{j,i}^{\prime}E_{j}^{%
\prime}\hat{L}_{j}=\frac{1}{NT}e_{j,i}^{\prime}E_{j}^{%
\prime}L_{j}^{0}H_{x,j}^{w}+\frac{1}{NT}e_{j,i}^{\prime}E_{j}^{\prime}\left(%
\hat{L}_{j}-L_{j}^{0}H_{x,j}^{w}\right) .
\end{equation*}%
For the first term on the right side, 
\begin{equation*}
\max_{i\in [N]}\frac{1}{NT}e_{j,i}^{\prime}E_{j}^{%
\prime}L_{j}^{0}H_{x,j}^{w}\lesssim \max_{i\in [N]}\frac{1}{NT}\left\Vert
e_{j,i}^{\prime}E_{j}^{\prime}L_{j}^{0}\right\Vert_{2}=O_{p}\left(\frac{%
\log(N\vee T)}{N\wedge T}\right) ,
\end{equation*}%
by Assumption \ref{ass:9}(i). For the second term on the right side, we have 
\begin{align*}
\max_{i\in [N]}\frac{1}{NT}\left\Vert e_{j,i}^{\prime}E_{j}^{\prime}\left(%
\hat{L}_{j}-L_{j}^{0}H_{x,j}^{w}\right) \right\Vert_{2}& \leq\max_{i\in [N]}%
\frac{1}{\sqrt{N}T}\left\Vert e_{j,i}^{\prime}E_{j}^{\prime}\right\Vert_{2}%
\frac{\left\Vert \hat{L}_{j}-L_{j}^{0}H_{x,j}^{w}\right\Vert_{F}}{\sqrt{N}}
\\
& =O_{p}\left(\frac{\log (N\vee T)}{\sqrt{N\wedge T}}\right) O_{p}\left(%
\frac{1}{\sqrt{N\wedge T}}\right) ,
\end{align*}
by Assumption \ref{ass:7}(iv) and Lemma \ref{Lem18}(ii). Combining the above
results completes the proof of part (iv).
\end{proof}

\begin{lemma}
{\small \label{Lem20} } Under Assumption \ref{ass:1} and Assumptions \ref%
{ass:7}-\ref{ass:9}, we have $\forall j\in [p]$,

\begin{itemize}
\item[(i)] $\hat{w}_{j,t}-\left(H_{x,j}^{l}%
\right)^{-1}w_{j,t}^{0}=H_{x,j}^{l\prime}\frac{1}{N}\sum_{i\in
[N]}l_{j,i}^{0}e_{j,it}+\mathcal{R}_{w,t}$,

\item[(ii)] $\hat{l}_{j,i}-H_{x,j}^{w\prime}l_{j,i}^{0}=\left(\frac{%
W_{j}^{0\prime}W_{j}^{0}}{T}\right)^{-1}\left(H_{x,j}^{l}\right)^{-1}\frac{1%
}{T}\sum_{t=1}^{T}w_{j,t}^{0}e_{j,it}+\mathcal{R}_{l,i}$,

\item[(iii)] $\hat{\mu}_{j,it}-\mu_{j,it}=e_{j,it}-\hat{e}%
_{j,it}=w_{j,t}^{0\prime}\left(\frac{W_{j}^{0\prime}W_{j}^{0}}{T}\right)^{-1}%
\frac{1}{T}\sum_{t=1}^{T}w_{j,t}^{0}e_{j,it}+l_{j,i}^{0\prime}\frac{1}{N}%
\sum_{i\in [N]}l_{j,i}^{0}e_{j,it}+\mathcal{R}_{j,it}$,
\end{itemize}

such that 
\begin{align*}
& \max_{t\in[T]}\left\vert \mathcal{R}_{w,t}\right\vert =O_{p}\left(\frac{%
\log (N\vee T)}{N\wedge T}\right) ,\quad \max_{i\in [N]}\left\vert \mathcal{R%
}_{l,i}\right\vert =O_{p}\left(\frac{\log (N\vee T)}{N\wedge T}\right)
,\quad \max_{i\in [N],t\in[T],j\in[p]}\left\vert \mathcal{R}%
_{j,it}\right\vert =O_{p}\left(\frac{\log(N\vee T)}{N\wedge T}\right) , \\
& \max_{t\in[T]}\left\Vert \hat{w}_{j,t}-\left(H_{x,j}^{l}%
\right)^{-1}w_{j,t}^{0}\right\Vert_{F}=O_{p}(\eta_{N}) ,\quad \max_{i\in
[N]}\left\Vert \hat{l}_{j,i}-H_{x,j}^{w\prime
}l_{j,i}^{0}\right\Vert_{F}=O_{p}(\eta_{N}) , \\
& \max_{j\in [p],i\in [N],t\in[T]}\left\vert e_{j,it}-\hat{e}%
_{j,it}\right\vert =\max_{j\in [p],i\in [N],t\in[T]}\left\vert \hat{\mu}%
_{j,it}-\mu_{j,it}\right\vert =O_{p}(\eta_{N}),
\end{align*}%
with $\eta_{N}=\frac{\sqrt{\log \left(N\vee T\right) }}{\sqrt{N\wedge T}}%
\xi_{N}^{2}$.
\end{lemma}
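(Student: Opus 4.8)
The plan is to treat Lemma~\ref{Lem20} as a uniform refinement of the pointwise expansions in \cite{bai2020simpler}, built on top of the convergence rates already recorded in Lemmas~\ref{Lem18}--\ref{Lem19}. The starting point is the pair of exact projection identities coming from the first-order conditions of the PCA problem in \eqref{debias_0} together with the normalization $\frac{1}{N}\hat{L}_{j}^{\prime}\hat{L}_{j}=I_{r_{j}}$: concentrating out the factors gives $\hat{W}_{j}=\frac{1}{N}X_{j}^{\prime}\hat{L}_{j}$, and concentrating out the loadings gives $\hat{L}_{j}=X_{j}\hat{W}_{j}(\hat{W}_{j}^{\prime}\hat{W}_{j})^{-1}$. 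Substituting $X_{j}=L_{j}^{0}W_{j}^{0\prime}+E_{j}$ and using the definitions $H_{x,j}^{l}=(\frac{\hat{L}_{j}^{\prime}L_{j}^{0}}{N})^{-1}$ and $H_{x,j}^{w}=W_{j}^{0\prime}\hat{W}_{j}(\hat{W}_{j}^{\prime}\hat{W}_{j})^{-1}$ yields $\hat{W}_{j}=W_{j}^{0}(H_{x,j}^{l\prime})^{-1}+\frac{1}{N}E_{j}^{\prime}\hat{L}_{j}$ and $\hat{L}_{j}=L_{j}^{0}H_{x,j}^{w}+\frac{1}{T}E_{j}\hat{W}_{j}(\frac{1}{T}\hat{W}_{j}^{\prime}\hat{W}_{j})^{-1}$, which are the raw material for parts (i) and (ii).

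For part (i) I would read off from the first identity that $\hat{w}_{j,t}-(H_{x,j}^{l})^{-1}w_{j,t}^{0}=\frac{1}{N}\sum_{i}e_{j,it}\hat{l}_{j,i}$, then write $\hat{L}_{j}=L_{j}^{0}H_{x,j}^{w}+(\hat{L}_{j}-L_{j}^{0}H_{x,j}^{w})$ to peel off the leading term $H_{x,j}^{w\prime}\frac{1}{N}\sum_{i}l_{j,i}^{0}e_{j,it}$ and a remainder $\frac{1}{N}(\hat{L}_{j}-L_{j}^{0}H_{x,j}^{w})^{\prime}e_{j,t}$ that is $O_{p}(\log(N\vee T)/(N\wedge T))$ uniformly in $t$ by Lemma~\ref{Lem19}(iii). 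I would then swap $H_{x,j}^{w}$ for $H_{x,j}^{l}$ via $H_{x,j}^{w}-H_{x,j}^{l}=O_{p}((N\wedge T)^{-1})$ from Lemma~\ref{Lem18}(i), absorbing the difference into $\mathcal{R}_{w,t}$ after bounding $\max_{t}\|\frac{1}{N}\sum_{i}l_{j,i}^{0}e_{j,it}\|$; this last bound is obtained from the conditional Bernstein inequality of Lemma~\ref{Lem:Bern} applied conditionally on the factor $\sigma$-field $\mathscr{D}$, using Assumption~\ref{ass:7}(i) (so the summands are mean zero), Assumption~\ref{ass:1}(v) (so they are bounded by a multiple of $\xi_{N}$), the conditional independence across $i$ of Assumption~\ref{ass:1}(i), and a union bound over $t$. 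Collecting the pieces gives the stated rate $O_{p}(\log(N\vee T)/(N\wedge T))$ for $\mathcal{R}_{w,t}$ and $O_{p}(\eta_{N})$ for the left-hand side, uniformly in $t$. Part (ii) is symmetric: from the second identity $\hat{l}_{j,i}-H_{x,j}^{w\prime}l_{j,i}^{0}=(\frac{1}{T}\hat{W}_{j}^{\prime}\hat{W}_{j})^{-1}\frac{1}{T}\sum_{t}\hat{w}_{j,t}e_{j,it}$; split $\hat{W}_{j}$ the same way, control the cross term by Lemma~\ref{Lem19}(iv), and replace $\frac{1}{T}\hat{W}_{j}^{\prime}\hat{W}_{j}$ by $\frac{1}{T}W_{j}^{0\prime}W_{j}^{0}$ using Lemma~\ref{Lem18}(iii) and Assumption~\ref{ass:8}(i), together with a uniform conditional-Bernstein bound on $\max_{i}\|\frac{1}{T}\sum_{t}w_{j,t}^{0}e_{j,it}\|$, now conditioning on $\mathscr{D}$ and using the conditional strong mixing in $t$ of Assumption~\ref{ass:1}(iii).

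Part (iii) then follows by multiplying the expansions from (i) and (ii). Writing $\hat{l}_{j,i}=H_{x,j}^{w\prime}l_{j,i}^{0}+\delta_{l,i}$ and $\hat{w}_{j,t}=(H_{x,j}^{l})^{-1}w_{j,t}^{0}+\delta_{w,t}$, the difference $\hat{\mu}_{j,it}-\mu_{j,it}=\hat{l}_{j,i}^{\prime}\hat{w}_{j,t}-l_{j,i}^{0\prime}w_{j,t}^{0}$ decomposes into: a rotation-bilinear term $l_{j,i}^{0\prime}(H_{x,j}^{w}(H_{x,j}^{l})^{-1}-I)w_{j,t}^{0}$, negligible because $H_{x,j}^{w}(H_{x,j}^{l})^{-1}=I+O_{p}((N\wedge T)^{-1})$ by Lemma~\ref{Lem18}(i); the two leading linear-in-$e$ pieces $l_{j,i}^{0\prime}H_{x,j}^{w}\delta_{w,t}$ and $\delta_{l,i}^{\prime}(H_{x,j}^{l})^{-1}w_{j,t}^{0}$, whose leading parts reduce, after identifying the rotation products with the matrices appearing in the statement using the relations among $H_{x,j}^{w}$, $H_{x,j}^{l}$, $\Sigma_{L_{j}}$, $\Sigma_{W_{j}}$ implied by Lemma~\ref{Lem18}, to $l_{j,i}^{0\prime}\frac{1}{N}\sum_{i'}l_{j,i'}^{0}e_{j,i't}$ and $w_{j,t}^{0\prime}(\frac{W_{j}^{0\prime}W_{j}^{0}}{T})^{-1}\frac{1}{T}\sum_{s}w_{j,s}^{0}e_{j,is}$; a product-of-errors term $\delta_{l,i}^{\prime}\delta_{w,t}$ that is $O_{p}(\eta_{N}^{2})$ uniformly; and cross terms pairing $\delta_{l,i}$, $\delta_{w,t}$ with $\mathcal{R}_{w,t}$, $\mathcal{R}_{l,i}$. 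Everything except the two leading sums is gathered into $\mathcal{R}_{j,it}$, and one checks, using Assumption~\ref{ass:1}(ix), that $\max_{i,t}|\mathcal{R}_{j,it}|=O_{p}(\log(N\vee T)/(N\wedge T))$; the supremum rates $\max_{t}\|\hat{w}_{j,t}-(H_{x,j}^{l})^{-1}w_{j,t}^{0}\|$, $\max_{i}\|\hat{l}_{j,i}-H_{x,j}^{w\prime}l_{j,i}^{0}\|$ and $\max_{i,t}|\hat{\mu}_{j,it}-\mu_{j,it}|$, all $O_{p}(\eta_{N})$, drop out along the way.

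The main obstacle is the uniformity: relative to \cite{bai2020simpler}, whose corresponding statements are pointwise and rest on $L^{2}$ bounds, here every averaged error term of the form $\frac{1}{N}\sum_{i}l_{j,i}^{0}e_{j,it}$ and $\frac{1}{T}\sum_{t}w_{j,t}^{0}e_{j,it}$ (and the cross terms feeding Lemmas~\ref{Lem18}--\ref{Lem19}) must be controlled uniformly over $t\in[T]$ or $i\in[N]$. This is exactly why Assumptions~\ref{ass:7}--\ref{ass:9} strengthen the corresponding conditions of \cite{bai2020simpler} to hold uniformly, and the uniform control is obtained by a union bound combined with the conditional Bernstein inequality of Lemma~\ref{Lem:Bern} given $\mathscr{D}$, exploiting the conditional independence across $i$ (Assumption~\ref{ass:1}(i)) and conditional strong mixing across $t$ (Assumption~\ref{ass:1}(iii)). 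A secondary, purely bookkeeping, difficulty is propagating the $\xi_{N}$ factors introduced by the a.s. bound $|X_{j,it}|\le\xi_{N}$ of Assumption~\ref{ass:1}(v) through all the rates and verifying that they collapse to $\eta_{N}=\sqrt{\log(N\vee T)/(N\wedge T)}\,\xi_{N}^{2}$ for the leading terms and to one order smaller for the remainders, which is where Assumption~\ref{ass:1}(ix) is used.
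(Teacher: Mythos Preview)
Your proposal is correct and follows essentially the same route as the paper: both start from the PCA first-order identities $\hat{W}_{j}=\frac{1}{N}X_{j}^{\prime}\hat{L}_{j}$ and $\hat{L}_{j}=X_{j}\hat{W}_{j}(\hat{W}_{j}^{\prime}\hat{W}_{j})^{-1}$, substitute $X_{j}=L_{j}^{0}W_{j}^{0\prime}+E_{j}$, peel off the leading term via the rotation matrices, and bound the remainder using Lemmas~\ref{Lem18}--\ref{Lem19} together with an exponential inequality plus union bound for the uniform control of $\frac{1}{N}\sum_{i}l_{j,i}^{0}e_{j,it}$ and $\frac{1}{T}\sum_{t}w_{j,t}^{0}e_{j,it}$. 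Two cosmetic differences: (a) the paper decomposes directly with $H_{x,j}^{l}$ throughout, so in part~(iii) the rotation-bilinear term $l_{j,i}^{0\prime}(H_{x,j}^{w}(H_{x,j}^{l})^{-1}-I)w_{j,t}^{0}$ never appears---your extra swap via Lemma~\ref{Lem18}(i) is harmless but avoidable; (b) for the cross-sectional average $\frac{1}{N}\sum_{i}l_{j,i}^{0}e_{j,it}$ the paper invokes Hoeffding's inequality for conditionally independent summands rather than Lemma~\ref{Lem:Bern}, which is stated for mixing sequences---your intent is clear, but Lemma~\ref{Lem:Bern} is not the right citation for an i.i.d.-across-$i$ sum.
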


\begin{proof}
Recall that $X_{j,it}=\mu_{j,it}+e_{j,it}=l_{j,i}^{0%
\prime}w_{j,t}^{0}+e_{j,it}$ and $X_{j}=L_{j}^{0}W_{j}^{0\prime}+E_{j}$ in
matrix form, where $L_{j}^{0}\in \mathbb{R}^{N\times r_{j}}$ is the factor
loading and $W_{j}^{0}\in \mathbb{R}^{T\times r_{j}}$ is the factor matrix.
Following \cite{bai2002determining}, \cite{bai2003inferential} and \cite%
{bai2020simpler}, if we impose the normalization restrictions that 
\begin{equation*}
\frac{L_{j}^{0\prime}L_{j}^{0}}{N}=I_{r_{j}}\quad\text{and}\quad\frac{%
W_{j}^{0\prime}W_{j}^{0}}{T}\quad\text{is a diagonal matrix with descending
diagonal elements,}
\end{equation*}%
we have the principal components estimators: 
\begin{equation}
\hat{L}_{j}=X_{j}\hat{W}_{j}\left(\hat{W}_{j}^{\prime}\hat{W}_{j}\right)^{-1}%
\text{ and }\hat{W}_{j}^{\prime}=\left(\hat{L}_{j}^{\prime}\hat{L}%
_{j}\right) ^{-1}\hat{L}_{j}^{\prime}X_{j}=\frac{1}{N}\hat{L}_{j}^{\prime
}X_{j}.  \label{Lem20.1}
\end{equation}%
Let $H_{x,j}^{l}=\left(\frac{1}{N}\hat{L}_{j}^{\prime}L_{j}^{0}\right)
^{-1}. $ Premultiplying $\frac{1}{N}\hat{L}_{j}^{\prime}$ on both sides of $%
X_{j}=L_{j}^{0}W_{j}^{0\prime}+E_{j}$ yields 
\begin{equation*}
\frac{1}{N}\hat{L}_{j}^{\prime}X_{j}=\frac{1}{N}\hat{L}_{j}^{%
\prime}L_{j}^{0}W_{j}^{0\prime}+\frac{1}{N}\hat{L}_{j}^{\prime}E_{j}.
\end{equation*}
It follows that 
\begin{align*}
\hat{W}_{j}^{\prime}& =\left(H_{x,j}^{l}\right) ^{-1}W_{j}^{0\prime}+\frac{1%
}{N}\hat{L}_{j}^{0\prime}E_{j} \\
& =\left(H_{x,j}^{l}\right) ^{-1}W_{j}^{0\prime}+\frac{1}{N}%
\left(H_{x,j}^{l}\right) ^{\prime}L_{j}^{0\prime}E_{j}+\frac{1}{N}\left[ 
\hat{L}-L_{j}^{0}H_{x,j}^{l}\right] ^{\prime}E_{j}.
\end{align*}%
We then show the expansion for each factor, i.e., 
\begin{equation}
\hat{w}_{j,t}-\left(H_{x,j}^{l}\right) ^{-1}w_{j,t}^{0}=H_{x,j}^{l\prime }%
\frac{1}{N}\sum_{i\in [N]}l_{j,i}^{0}e_{j,it}+\frac{1}{N}\left[ \hat{L}%
_{j}-L_{j}^{0}H_{x,j}^{l}\right] ^{\prime}e_{j,t}.  \label{Lem20.2}
\end{equation}

For equation (\ref{Lem20.2}), we have the uniform bound for the second term,
i.e., 
\begin{equation*}
\max_{t\in[T]}\frac{1}{N}\left[ \hat{L}_{j}-L_{j}^{0}H_{x,j}^{l}\right]
^{\prime}e_{j,t}=O_{p}\left(\frac{\log (N\vee T)}{N\wedge T}\right)
\end{equation*}%
by Lemma \ref{Lem19}(iii). With $c_{19}$ being a positive constant and $%
\max_{i\in [N],t\in[T]}\left\vert l_{j,i}^{0}e_{j,it}\right\Vert_{2}\leq
c_{19}\xi_{N}$ a.s., we show that 
\begin{equation}
\mathbb{P}\left(\max_{t\in[T]}\left\Vert \frac{1}{N}\sum_{i\in[N]%
}l_{j,i}^{0}e_{j,it}\right\Vert_{2}>c_{20}\sqrt{\frac{\log (N\vee T)}{N}}%
\xi_{N}\right) \leq \max_{t\in[T]}2\exp \left(-\frac{c_{20}^{2}N\xi_{N}^{2}%
\log (N\vee T)}{N4c_{19}^{2}\xi_{N}^{2}}\right) =o(1)  \label{Lem:le}
\end{equation}%
for a positive constant $c_{20}$ by Hoeffding's inequality. It follows that $%
\hat{w}_{j,t}-\left(H_{x,j}^{l}\right) ^{-1}w_{j,t}^{0}=H_{x,j}^{l\prime }%
\frac{1}{N}\sum_{i\in [N]}l_{j,i}^{0}e_{j,it}+\mathcal{R}_{w,t},$ such that 
\begin{equation}
\max_{t\in[T]}\left\Vert \hat{w}_{j,t}-\left(H_{x,j}^{l}%
\right)^{-1}w_{j,t}^{0}\right\Vert_{F}=O_{p}\left(\sqrt{\frac{\log (N\vee T)%
}{T}}\xi_{N}\right) ,  \label{Lem20.3}
\end{equation}%
and $\max_{t\in[T]}\left\vert \mathcal{R}_{w,t}\right\vert =O_{p}\left(\frac{%
\log (N\vee T)}{N\wedge T}\right) $.

Similarly, if we premultiply $\hat{W}_{j}\left(\hat{W}_{j}^{\prime}\hat{W}%
_{j}\right) ^{-1}$ to both sides of $X_{j}=L_{j}^{0}W_{j}^{0\prime}+E_{j}$,
it yields 
\begin{equation*}
X_{j}\hat{W}_{j}\left(\hat{W}_{j}^{\prime}\hat{W}_{j}%
\right)^{-1}=L_{j}^{0}W_{j}^{0\prime}\hat{W}_{j}\left(\hat{W}_{j}^{\prime}%
\hat{W}_{j}\right) ^{-1}+E_{j}\hat{W}_{j}\left(\hat{W}_{j}^{\prime}\hat{W}%
_{j}\right) ^{-1}.
\end{equation*}%
It follows that 
\begin{align*}
\hat{L}_{j}& =L_{j}^{0}H_{x,j}^{w}+E_{j}\hat{W}_{j}\left(\hat{W}_{j}^{\prime}%
\hat{W}_{j}\right) ^{-1} \\
& =L_{j}^{0}H_{x,j}^{w}+E_{j}W_{j}^{0}\left(H_{x,j}^{l\prime
}\right)^{-1}\left(\hat{W}_{j}^{\prime}\hat{W}_{j}\right) ^{-1}+E_{j}\left(%
\hat{W}_{j}-W_{j}^{0}\left(H_{x,j}^{l\prime }\right) ^{-1}\right) \left(\hat{%
W}_{j}^{\prime}\hat{W}_{j}\right) ^{-1},
\end{align*}%
where the fist line is due to (\ref{Lem20.1}) and the definition that $%
H_{x,l}=W_{j}^{0\prime}\hat{W}_{j}\left(\hat{W}_{j}^{\prime}\hat{W}%
_{j}\right) ^{-1}$. Then we obtain the expansion for the factor loading 
\begin{equation*}
\hat{l}_{j,i}-H_{x,j}^{w\prime }l_{j,i}^{0}=\left(\frac{\hat{W}_{j}^{\prime}%
\hat{W}_{j}}{T}\right) ^{-1}\left(H_{x,j}^{l}\right) ^{-1}\frac{1}{T}%
\sum_{t=1}^{T}w_{j,t}^{0}e_{j,it}+\left(\frac{\hat{W}_{j}^{\prime}\hat{W}_{j}%
}{T}\right) ^{-1}\frac{1}{T}\left[ \left(\hat{W}-W\left(H_{x,j}^{l\prime}%
\right) ^{-1}\right) ^{\prime}e_{j,i}\right] .
\end{equation*}%
Note that 
\begin{align}
\frac{\hat{W}_{j}^{\prime}\hat{W}_{j}}{T}& =\frac{\left(\hat{W}%
_{j}-W_{j}^{0}\left(H_{x,j}^{l\prime }\right) ^{-1}\right) ^{\prime}\hat{W}%
_{j}}{T}+\frac{\left(H_{x,j}^{l}\right) ^{-1}W_{j}^{0\prime}\hat{W}_{j}}{T} 
\notag  \label{Lem20.5} \\
& =\frac{\left(\hat{W}_{j}-W_{j}^{0}\left(H_{x,j}^{l\prime
}\right)^{-1}\right) ^{\prime}\left(\hat{W}_{j}-W_{j}^{0}\left(H_{x,j}^{l%
\prime}\right) ^{-1}\right) }{T}+\frac{\left(\hat{W}_{j}-W_{j}^{0}%
\left(H_{x,j}^{l\prime }\right) ^{-1}\right)
^{\prime}W_{j}^{0}\left(H_{x,j}^{l\prime}\right) ^{-1}}{T}  \notag \\
& +\frac{\left(H_{x,j}^{l}\right) ^{-1}W_{j}^{0\prime}\left(\hat{W}%
_{j}-W_{j}\left(H_{x,j}^{l\prime }\right) ^{-1}\right) }{T}+\frac{%
\left(H_{x,j}^{l}\right)
^{\prime}W_{j}^{0\prime}W_{j}^{0}\left(H_{x,j}^{l\prime}\right) ^{-1}}{T} 
\notag \\
& =\frac{\left(H_{x,j}^{l}\right)
^{\prime}W_{j}^{0\prime}W_{j}\left(H_{x,j}^{l\prime }\right) ^{-1}}{T}%
+O_{p}\left(\frac{1}{N\wedge T}\right) ,
\end{align}%
where the last equality holds by Lemma \ref{Lem18}(iii). Note that $%
\max_{i\in [N]}\frac{1}{T}\left(\hat{W}-W\left(H_{x,j}^{l\prime}\right)
^{-1}\right) ^{\prime}e_{i}=O_{p}\left(\frac{\log (N\vee T)}{N\wedge T}%
\right) $ by Lemma \ref{Lem19}(iv), and we can show that 
\begin{equation}
\max_{i\in [N]}\left\Vert \frac{1}{T}\sum_{t=1}^{T}w_{j,t}^{0}e_{j,it}\right%
\Vert_{2}=O_{p}\left(\sqrt{\frac{\log (N\vee T)}{T}}\xi_{N}\right)
\label{Lem:we}
\end{equation}
as in (\ref{Lem:le}) by conditional Bernstein's inequality in Lemma \ref%
{Lem:Bern}(i) given $\left\{ W_{j}^{0}\right\}_{j\in [p]}$. It follows that $%
\hat{l}_{j,i}-H_{x,j}^{w\prime }l_{j,i}^{0}=\left(\frac{W_{j}^{0%
\prime}W_{j}^{0}}{T}\right) ^{-1}\left(H_{x,j}^{l}\right) ^{-1}\frac{1}{T}%
\sum_{t=1}^{T}w_{j,t}^{0}e_{j,it}+\mathcal{R}_{l,i}\ $such that 
\begin{equation*}
\max_{i\in [N]}\left\Vert \hat{l}_{j,i}-H_{x,j}^{w\prime}l_{j,i}^{0}\right%
\Vert_{F}=O_{p}(\eta_{N})\ \text{and }\max_{i\in[N]}\left\vert \mathcal{R}%
_{l,i}\right\vert =O_{p}\left(\frac{\log(N\vee T)}{N\wedge T}\right) .
\end{equation*}
Then, it's natural to obtain that 
\begin{align*}
\hat{\mu}_{j,it}-\mu_{j,it}& =\hat{l}_{j,i}^{\prime}\hat{w}%
_{t}-l_{j,i}^{0\prime}w_{j,t}^{0} \\
& =\left(\hat{l}_{j,i}-\left(H_{x,j}^{l}\right)
^{\prime}l_{j,i}^{0}\right)^{\prime}\left(\hat{w}_{j,t}-\left(H_{x,j}^{l}%
\right)^{-1}w_{j,t}^{0}\right) +\left(\hat{l}_{j,i}-\left(H_{x,j}^{l}%
\right)^{\prime}l_{j,i}^{0}\right) ^{\prime}\left(H_{x,j}^{l}\right)
^{-1}w_{j,t}^{0} \\
& +l_{j,i}^{0\prime}H_{x,j}^{l}\left(\hat{w}_{j,t}-\left(H_{x,j}^{l}%
\right)^{-1}w_{j,t}^{0}\right) \\
& =\left(\hat{l}_{j,i}-\left(H_{x,j}^{l}\right)
^{\prime}l_{j,i}^{0}\right)^{\prime}\left(H_{x,j}^{l}\right)
^{-1}w_{j,t}^{0}+l_{j,i}^{0\prime}H_{x,j}^{l}\left(\hat{w}%
_{j,t}-\left(H_{x,j}^{l}\right)^{-1}w_{j,t}^{0}\right) +O_{p}\left(\frac{%
\log (N\vee T)}{N\wedge T}\right) \\
& =w_{j,t}^{0}\left(\frac{W_{j}^{0\prime}W_{j}^{0}}{T}\right) ^{-1}\frac{1}{T%
}\sum_{t=1}^{T}w_{j,t}^{0}e_{j,it}+l_{j,i}^{0\prime}\frac{1}{N}\sum_{i\in[N]%
}l_{j,i}^{0}e_{j,it}+\mathcal{R}_{j,it}
\end{align*}%
where the second equality holds by statements (i) and (ii), and the last
equality holds with $\max_{i\in [N],t\in[T]}\left\vert \mathcal{R}%
_{j,it}\right\vert =O_{p}\left(\frac{\log (N\vee T)}{N\wedge T}\right) $. By
Assumption \ref{ass:1}(iv), $\max_{t\in[T]}\left\Vert
w_{j,t}^{0}\right\Vert_{2}\leq M$ a.s. and $\max_{i\in [N]}\left\Vert
l_{j,i}^{0}\right\Vert_{2}\leq M$ a.s., which leads to $\max_{j\in [p],i\in
[N],t\in[T]}\left\vert\hat{\mu}_{j,it}-\mu_{j,it}\right\vert =O_{p}\left(%
\sqrt{\frac{\log \left(N\vee T\right) }{N\wedge T}}\xi_{N}\right)
=O_{p}(\eta_{N})$.
\end{proof}

\begin{lemma}
{\small \label{Lem21} } Under Assumptions \ref{ass:1}-\ref{ass:10}, for
matrices $\dot{D}_{i}^{F}$ and $D_{i}^{F}$ defined in the proof of Theorem %
\ref{Thm3}, we have $\max_{i\in I_{3}}\left\Vert \dot{D}_{i}^{F}-D_{i}^{F}%
\right\Vert_{F}=O_{p}(\eta_{N})$ with $\eta_{N}=\frac{\sqrt{\log\left(N\vee
T\right) }\xi_{N}^{2}}{\sqrt{N\wedge T}}$.
\end{lemma}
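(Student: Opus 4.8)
\textbf{Proof plan for Lemma \ref{Lem21}.}
The plan is to decompose the difference $\dot{D}_{i}^{F}-D_{i}^{F}$ into blocks and bound each block uniformly over $i \in I_{3}$. Recall that $\dot{D}_{i}^{F}$ is built from the estimated regressors $(\dot{v}_{t,0}^{(1)}, \hat{e}_{1,it}\dot{v}_{t,1}^{(1)})$ and the density $f_{it}$ evaluated at the argument $\iota_{it}(O_{0}^{(1)}u_{i,0}^{0}, O_{1}^{(1)}u_{i,1}^{0}, \dot{u}_{i,1}^{(1)})$, while $D_{i}^{F}$ is its ``population'' counterpart built from $(O_{0}^{(1)}v_{t,0}^{0}, e_{1,it}O_{1}^{(1)}v_{t,1}^{0})$ and $f_{it}(0)$. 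Writing $\dot{D}_{i}^{F}-D_{i}^{F}$ as a sum of three types of terms --- (a) the error from replacing the density argument $\iota_{it}(\cdot)$ by $0$, (b) the error from replacing $\dot{v}_{t,j}^{(1)}$ by $O_{j}^{(1)}v_{t,j}^{0}$, and (c) the error from replacing $\hat{e}_{1,it}$ by $e_{1,it}$ --- the strategy is to control each type by the already-established convergence rates.

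First I would use the mean-value theorem on $f_{it}$ together with Assumption \ref{ass:1}(viii) (bounded density and bounded derivative of $f_{it}$) to write $|f_{it}(\iota_{it}(\cdot)) - f_{it}(0)| \lesssim |\iota_{it}(\cdot)|$, and then invoke the bound $\max_{i \in I_{3}, t \in [T]} |\iota_{it}(O_{0}^{(1)}u_{i,0}^{0}, O_{1}^{(1)}u_{i,1}^{0}, \dot{u}_{i,1}^{(1)})| = O_{p}(\xi_{N}\eta_{N})$, which follows from its definition in \eqref{C.2}, from Theorem \ref{Thm2}(i)--(ii) for $\dot{u}_{i,1}^{(1)} - O_{1}^{(1)}u_{i,1}^{0}$ and $\dot{v}_{t,j}^{(1)} - O_{j}^{(1)}v_{t,j}^{0}$, from Lemma \ref{Lem20} for $\hat{\mu}_{1,it} - \mu_{1,it}$, and from Assumptions \ref{ass:1}(v) and \ref{ass:2}. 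Second, for type-(b) errors I would use $\max_{t \in [T]} \|\dot{v}_{t,j}^{(1)} - O_{j}^{(1)}v_{t,j}^{0}\|_{2} = O_{p}(\eta_{N})$ (Theorem \ref{Thm2}(ii)) together with the a.s.\ boundedness $\max_{t}\|v_{t,j}^{0}\|_{2} \le M/c_{\sigma}$ and $\max_{t}\|O_{j}^{(1)\prime}\dot{v}_{t,j}^{(1)}\|_{2} \le 2M/c_{\sigma}$ w.p.a.1 (Lemma \ref{Lem:bounded u&v_tilde}(i)--(ii)), plus $\max_{i \in I_{3}} \frac{1}{T}\sum_{t} X_{j,it}^{2} \le M$ a.s.\ (Assumption \ref{ass:1}(iv)) to absorb the regressor factors by Cauchy--Schwarz. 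Third, for type-(c) errors I would use $\max_{j,i,t}|\hat{e}_{j,it} - e_{j,it}| = O_{p}(\eta_{N})$ from Lemma \ref{Lem20}(iii), again combined with the moment bounds on $X_{j,it}$ and the boundedness of the factor and loading norms. Each block contributes at worst $O_{p}(\eta_{N})$ (the density-argument block actually yields the smaller rate $O_{p}(\xi_{N}\eta_{N})$, but since $\xi_{N} \ge 1$ this is still $O_{p}(\eta_{N} \cdot \xi_{N})$; carefully tracking powers one sees the dominant block is the one where a single factor of $\dot{v}$ or $\hat{e}$ is replaced, giving exactly $O_{p}(\eta_{N})$ after the $X_{j,it}$ moments are accounted for).

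The main obstacle I anticipate is the bookkeeping of the cross terms: the lower-right block involves $\hat{e}_{1,it}^{2}\dot{v}_{t,1}^{(1)}\dot{v}_{t,1}^{(1)\prime}X_{1,it}^{2}$ versus $e_{1,it}^{2}v_{t,1}^{0}v_{t,1}^{0\prime}X_{1,it}^{2}$, so that telescoping produces products of two small differences (which are negligible, of order $\eta_{N}^{2}$) and single-difference terms that need the $\frac{1}{T}\sum_{t}\|\tilde{\Phi}_{it}^{(1)}\|_{2}^{2}$-type bounds from Lemma \ref{Lem:phi eig}(iii) or Lemma \ref{Lem:bounded u&v_tilde}(iii) to be handled uniformly over $i$. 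I would organize this by first establishing that every ``single difference'' factor appearing is $O_{p}(\eta_{N})$ uniformly and every multiplicative companion ($\|v_{t,j}^{0}\|_{2}$, $\|\dot{v}_{t,j}^{(1)}\|_{2}$, $\frac{1}{T}\sum_{t}X_{j,it}^{2}$, $\frac{1}{T}\sum_{t}|X_{j,it}|^{3}$) is $O_{p}(1)$ uniformly, so that the product telescoping sum is $O_{p}(\eta_{N})$; then the $\eta_{N}^{2}$ remainders are absorbed since $\eta_{N} = o(1)$. This is essentially the same argument pattern used in the proof of Lemma \ref{Lem17}(ii) for $\dot{D}_{i}^{I} - D_{i}^{I}$, so I would mirror that proof closely, the only new ingredient being the treatment of the PCA-estimated $\hat{e}_{1,it}$ via Lemma \ref{Lem20}. $\blacksquare$
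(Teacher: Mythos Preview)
Your decomposition into (a)--(c) misses a crucial piece. After you replace $f_{it}(\iota_{it}(\cdot))$ by $f_{it}(0)$, $\dot{v}_{t,j}^{(1)}$ by $O_{j}^{(1)}v_{t,j}^{0}$, and $\hat{e}_{1,it}$ by $e_{1,it}$, the off-diagonal block of $\dot{D}_{i}^{F}$ does \emph{not} become zero: you are left with
\[
\frac{1}{T}\sum_{t=1}^{T} f_{it}(0)\,e_{1,it}\,O_{0}^{(1)}v_{t,0}^{0}v_{t,1}^{0\prime}O_{1}^{(1)\prime},
\]
whereas the off-diagonal block of $D_{i}^{F}$ is defined to be zero. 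This leftover is not a ``single difference factor times bounded companion''; each summand is $O(1)$, and a purely deterministic telescoping bound gives only $O_{p}(1)$, not $O_{p}(\eta_{N})$. The paper handles this separately as $F_{2,i}$: it uses Assumption~\ref{ass:10}(i), namely $\mathbb{E}[f_{it}(0)e_{1,it}\mid\mathscr{D}]=0$, together with the conditional mixing structure (Assumption~\ref{ass:10}(ii)) and a conditional Bernstein inequality (Lemma~\ref{Lem:Bern}(i)) to obtain the rate $O_{p}\bigl(\sqrt{\log(N\vee T)/T}\,\xi_{N}\bigr)=O_{p}(\eta_{N})$ uniformly over $i\in I_{3}$.

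Your appeal to the pattern of Lemma~\ref{Lem17}(ii) for $\dot{D}_{i}^{I}-D_{i}^{I}$ is exactly where the analogy breaks: $D_{i}^{I}$ is \emph{not} block-diagonal (its off-diagonals contain $v_{t,0}^{0}v_{t,1}^{0\prime}X_{1,it}$), so there is no leftover there. Here the block-diagonal form of $D_{i}^{F}$ is what creates the extra term, and Assumption~\ref{ass:10}(i) is precisely the moment condition that makes it vanish at the right rate. The rest of your plan for the diagonal blocks and the $F_{1,i}$, $F_{3,i}$ pieces is fine and matches the paper.
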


\begin{proof}
Recall that 
\begin{eqnarray*}
\dot{D}_{i}^{F} &=&\frac{1}{T}\sum_{t=1}^{T}f_{it}\left[ \iota_{it}%
\left(O_{0}^{(1)}u_{i,0}^{0},O_{1}^{(1)}u_{i,1}^{0},\dot{u}%
_{i,1}^{(1)}\right) \right] 
\begin{bmatrix}
\dot{v}_{t,0}^{(1)}\dot{v}_{t,0}^{(1)\prime} & \hat{e}_{1,it}\dot{v}%
_{t,0}^{(1)}\dot{v}_{t,1}^{(1)\prime} \\ 
\hat{e}_{1,it}\dot{v}_{t,1}^{(1)}\dot{v}_{t,0}^{(1)\prime} & \hat{e}%
_{1,it}^{2}\dot{v}_{t,1}^{(1)}\dot{v}_{t,1}^{(1)\prime}%
\end{bmatrix}%
~ \text{and} \\
D_{i}^{F} &=&\frac{1}{T}\sum_{t=1}^{T}f_{it}(0)%
\begin{bmatrix}
O_{0}^{(1)}v_{t,0}^{0}v_{t,0}^{0\prime}O_{0}^{(1)\prime} & 0 \\ 
0 & e_{1,it}^{2}O_{1}^{(1)}v_{t,1}^{0}v_{t,1}^{0\prime}O_{1}^{(1)\prime}%
\end{bmatrix}%
,
\end{eqnarray*}
with $\max_{i\in I_{2}}\left\Vert D_{i}^{F}\right\Vert_{F}=O(1)$ a.s..

Let $\dot{\iota}_{it}$ denote $\iota_{it}%
\left(O_{0}^{(1)}u_{i,0}^{0},O_{1}^{(1)}u_{i,1}^{0},\dot{u}%
_{i,1}^{(1)}\right) $ for short. We have 
\begin{align*}
& \dot{D}_{i}^{F}-D_{i}^{F} \\
& =\frac{1}{T}\sum_{t=1}^{T}f_{it}(0)%
\begin{bmatrix}
\dot{v}_{t,0}^{(1)}\dot{v}_{t,0}^{(1)%
\prime}-O_{0}^{(1)}v_{t,0}^{0}v_{t,0}^{0\prime}O_{0}^{(1)\prime} & \dot{v}%
_{t,0}^{(1)}\dot{v}_{t,1}^{(1)\prime}\hat{e}%
_{1,it}-O_{0}^{(1)}v_{t,0}^{0}v_{t,1}^{0\prime}O_{1}^{(1)\prime}e_{1,it} \\ 
\dot{v}_{t,1}^{(1)}\dot{v}_{t,0}^{(1)\prime}\hat{e}%
_{1,it}-O_{1}^{(1)}v_{t,1}^{0}v_{t,0}^{0\prime}O_{0}^{(1)\prime}e_{1,it} & 
\hat{e}_{1,it}^{2}\dot{v}_{t,1}^{(1)}\dot{v}_{t,1}^{(1)%
\prime}-e_{1,it}^{2}O_{1}^{(1)}v_{t,1}^{0}v_{t,1}^{0\prime}O_{1}^{(1)\prime}%
\end{bmatrix}
\\
& +\frac{1}{T}\sum_{t=1}^{T}f_{it}(0)%
\begin{bmatrix}
0 & O_{0}^{(1)}v_{t,0}^{0}v_{t,1}^{0\prime}O_{1}^{(1)\prime}e_{1,it} \\ 
O_{1}^{(1)}v_{t,1}^{0}v_{t,0}^{0\prime}O_{0}^{(1)\prime}e_{1,it} & 0%
\end{bmatrix}
\\
& +\frac{1}{T}\sum_{t=1}^{T}\left(f_{it}\left(\dot{\iota}_{it}%
\right)-f_{it}(0)\right) 
\begin{bmatrix}
\dot{v}_{t,0}^{(1)}\dot{v}_{t,0}^{(1)\prime} & \dot{v}_{t,0}^{(1)}\dot{v}%
_{t,1}^{(1)\prime}\hat{e}_{1,it} \\ 
\dot{v}_{t,1}^{(1)}\dot{v}_{t,0}^{(1)\prime}\hat{e}_{1,it} & \hat{e}%
_{1,it}^{2}\dot{v}_{t,1}^{(1)}\dot{v}_{t,1}^{(1)\prime}%
\end{bmatrix}
\\
:& =F_{1,i}+F_{2,i}+F_{3,i}.
\end{align*}%
We define $F_{1,i}^{m}$ for $m=\left\{ 1,2,3,4\right\} $ as fourth clockwise
blocks in $F_{1,i}:=%
\begin{bmatrix}
F_{1,i}^{1} & F_{1,i}^{2} \\ 
F_{1,i}^{4} & F_{1,i}^{3}%
\end{bmatrix}%
.$ Define $F_{2,i}^{m}$ and $F_{3,i}^{m}$ similarly. We aim to show the
uniform bound for each block.

First, we observe that 
\begin{align}
& F_{1,i}^{1}=\frac{1}{T}\sum_{t=1}^{T}f_{it}(0)\left(\dot{v}_{t,0}^{(1)}%
\dot{v}_{t,0}^{(1)\prime}-O_{0}^{(1)}v_{t,0}^{0}v_{t,0}^{0\prime}O_{0}^{(1)%
\prime}\right)  \notag  \label{Lem21.1} \\
& =\frac{1}{T}\sum_{t=1}^{T}f_{it}(0)\left[ \left(\dot{v}%
_{t,0}^{(1)}-O_{0}^{(1)}v_{t,0}^{0}\right) \left(\dot{v}%
_{t,0}^{(1)}-O_{0}^{(1)}v_{t,0}^{0}\right)
^{\prime}+O_{0}^{(1)}v_{t,0}^{0}\left(\dot{v}%
_{t,0}^{(1)}-O_{0}^{(1)}v_{t,0}^{0}\right) ^{\prime}+\left(\dot{v}%
_{t,0}^{(1)}-O_{0}^{(1)}v_{t,0}^{0}\right)\left(O_{0}^{(1)}v_{t,0}^{0}%
\right) ^{\prime}\right]  \notag \\
& =O_{p}\left(\eta_{N}^{2}\right) +O_{p}(\eta_{N})=O_{p}(\eta_{N})
\end{align}%
uniformly over $i\in I_{3}$. For $F_{1,i}^{2}$, we have by Theorem \ref{Thm2}
and Lemma \ref{Lem20}, 
\begin{align}
F_{1,i}^{2}& =\frac{1}{T}\sum_{t=1}^{T}f_{it}(0)\left(\dot{v}_{t,0}^{(1)}%
\dot{v}_{t,1}^{(1)\prime}\hat{e}_{1,it}-O_{0}^{(1)}v_{t,0}^{0}v_{t,1}^{0%
\prime}O_{1}^{(1)\prime}e_{1,it}\right)  \notag  \label{Lem21.2} \\
& =\frac{1}{T}\sum_{t=1}^{T}f_{it}(0)\Bigg\{\dot{v}_{t,0}^{(1)}\dot{v}%
_{t,1}^{(1)\prime}\left(\hat{e}_{1,it}-e_{1,it}\right) +\left(\dot{v}%
_{t,0}^{(1)}\dot{v}_{t,1}^{(1)\prime}-O_{0}^{(1)}v_{t,0}^{0}v_{t,1}^{0%
\prime}O_{1}^{(1)\prime}\right) e_{1,it}\Bigg \}  \notag \\
& =O_{p}(\eta_{N})\text{ uniformly in }i\in I_{3}.
\end{align}%
The same order holds for $\max_{i\in I_{3}}\left\Vert
F_{1,i}^{3}\right\Vert_{F}$ as $F_{1,i}^{3}=F_{1,i}^{2\prime }$. Next, we
study $F_{1,i}^{4}$. Noting that 
\begin{equation*}
|\hat{e}_{1,it}^{2}-e_{1,it}^{2}|=\left\vert \left(\hat{e}%
_{1,it}+e_{1,it}\right) \left(\hat{e}_{1,it}-e_{1,it}\right)\right\vert \leq
\left(2|e_{1,it}|+\max_{i\in I_{3},t\in[T]}|\hat{e}_{1,it}-e_{1,it}|\right)
\left(\max_{i\in I_{3},t\in[T]}|\hat{e}_{1,it}-e_{1,it}|\right) ,
\end{equation*}%
we have 
\begin{align}
\max_{i\in I_{3}}||F_{1,i}^{4}||_{F}& =\left\Vert \frac{1}{T}%
\sum_{t=1}^{T}f_{it}(0)\left(\dot{v}_{t,1}^{(1)}\dot{v}_{t,1}^{(1)\prime}%
\hat{e}_{1,it}^{2}-O_{1}^{(1)}v_{t,1}^{0}v_{t,1}^{0\prime}O_{1}^{(1)%
\prime}e_{1,it}^{2}\right) \right\Vert_{F}  \notag  \label{Lem21.3} \\
& \leq \max_{i\in I_{3}}\frac{1}{T}\sum_{t=1}^{T}f_{it}(0)\left\Vert \dot{v}%
_{t,1}^{(1)}\right\Vert_{2}^{2}\left(|\hat{e}_{1,it}^{2}-e_{1,it}^{2}|%
\right) +\max_{i\in I_{3}}\frac{1}{T}\sum_{t=1}^{T}f_{it}(0)\left\Vert \dot{v%
}_{t,1}^{(1)}\dot{v}_{t,1}^{(1)\prime}-O_{1}^{(1)}v_{t,1}^{0}v_{t,1}^{0%
\prime}O_{1}^{(1)\prime}\right\Vert_{F}e_{1,it}^{2}  \notag \\
& =\max_{t\in[T]}\left\Vert \dot{v}_{t,1}^{(1)}\right\Vert_{2}^{2}\left(%
\max_{i\in I_{3},t\in[T]}|\hat{e}_{1,it}-e_{1,it}|\right) \max_{i\in
I_{3}}\left(\frac{2}{T}\sum_{t=1}^{T}f_{it}(0)|e_{1,it}|\right)  \notag \\
& +\max_{t\in[T]}\left\Vert \dot{v}_{t,1}^{(1)}\right\Vert_{2}^{2}\left(%
\max_{i\in I_{3},t\in[T]}|\hat{e}_{1,it}-e_{1,it}|^{2}\right) \max_{i\in
I_{3}}\left(\frac{2}{T}\sum_{t=1}^{T}f_{it}(0)\right)  \notag \\
& +\max_{t\in[T]}\left\Vert \dot{v}_{t,1}^{(1)}\dot{v}_{t,1}^{(1)%
\prime}-O_{1}^{(1)}v_{t,1}^{0}v_{t,1}^{0\prime}O_{1}^{(1)\prime}\right%
\Vert_{F}\max_{i\in I_{3}}\left(\frac{1}{T}%
\sum_{t=1}^{T}f_{it}(0)e_{1,it}^{2}\right) =O_{p}(\eta_{N}).
\end{align}%
Combining (\ref{Lem21.1})-(\ref{Lem21.3}), we conclude $\max_{i\in
I_{3}}\left\Vert F_{1,i}\right\Vert_{2}=O_{p}(\eta_{N}).$

For $F_{2,i}^{2}$, we note that $\mathbb{E}%
\left(f_{it}(0)v_{t,0}^{0}v_{t,1}^{0\prime}e_{1,it}\bigg|\mathscr{D}\right)
=0$ by Assumption \ref{ass:10} and $\max_{i\in I_{3},t\in[T]}\left\Vert
f_{it}(0)v_{t,0}^{0}v_{t,1}^{0\prime}e_{1,it}\right\Vert_{F}\leq
c_{22}\xi_{N}~a.s.$ by Assumption \ref{ass:1}(iv) and Lemma \ref{Lem:bounded
u&v_tilde}(i). Then, by conditional Bernstein's inequality in Lemma \ref%
{Lem:Bern}(i) and Assumption \ref{ass:10}(iii), we can show that, with
positive constants $c_{21}$ and $c_{22}$, 
\begin{align}
& \mathbb{P}\left(\max_{i\in I_{3}}\left\Vert \frac{1}{T}\sum_{t\in
[T]}f_{it}(0)v_{t,0}^{0}v_{t,1}^{0\prime}e_{1,it}\right\Vert_{F}>c_{21}\sqrt{%
\frac{\log (N\vee T)}{T}}\xi_{N}\bigg|\mathscr{D}\right)  \notag
\label{Lem:F2} \\
& \leq \sum_{i\in I_{3}}\exp \left(-\frac{c_{12}c_{21}^{2}T\xi_{N}^{2}\log(N%
\vee T)}{c_{22}^{2}T\xi_{N}^{2}+c_{21}c_{22}\sqrt{T\log (N\vee T)}%
\xi_{N}^{2}\log T\log \log T}\right) =o(1),
\end{align}%
which yields $\max_{i\in I_{3}}\left\Vert
F_{2,i}\right\Vert_{F}=O_{p}(\eta_{N})$.

As for $F_{3,i}$, we show the bound for the first block and all other three
blocks follow the same argument. Recall that 
\begin{align}
\dot{\iota}_{it}& =u_{i,0}^{0\prime}O_{0}^{(1)}\dot{v}%
_{t,0}^{(1)}-u_{i,0}^{0\prime}v_{t,0}^{0}+\hat{\mu}_{1,it}\dot{u}%
_{i,1}^{(1)\prime}\dot{v}_{t,1}^{(1)}-\mu_{1,it}u_{i,1}^{0\prime}v_{t,1}^{0}+%
\hat{e}_{1,it}u_{i,1}^{0\prime}O_{1}^{(1)}\dot{v}%
_{t,1}^{(1)}-e_{1,it}u_{i,1}^{0\prime}v_{t,1}^{0}  \notag  \label{Lem21.4} \\
& =\left[ \left(O_{0}^{(1)}u_{i,0}^{0}\right) ^{\prime}\left(\dot{v}%
_{t,0}^{(1)}-O_{0}^{(1)}v_{t,0}^{0}\right) \right] +\left(\hat{\mu}_{1,it}%
\dot{u}_{i,1}^{(1)\prime}\dot{v}_{t,1}^{(1)}-\mu_{1,it}u_{i,1}^{0%
\prime}v_{t,1}^{0}\right) +\left(\hat{e}_{1,it}u_{i,1}^{0\prime}O_{1}^{(1)}%
\dot{v}_{t,1}^{(1)}-e_{1,it}u_{i,1}^{0\prime}v_{t,1}^{0}\right)  \notag \\
& :=\dot{\iota}_{it}^{I}+\dot{\iota}_{it}^{II}+\dot{\iota}_{it}^{III},
\end{align}
with the fact that $\left\vert \dot{\iota}_{it}^{I}\right\vert
\leq\left\Vert O_{0}^{(1)}u_{i,0}^{0}\right\Vert_{2}\left\Vert \dot{v}%
_{t,0}^{(1)}-O_{0}^{(1)}v_{t,0}^{0}\right\Vert_{2}:=R_{1,it}^{I}$ such that $%
\max_{i\in I_{3},t\in[T]}R_{1,it}^{I}=O_{p}(\eta_{N})$. In addition, we have 
\begin{align}
\max_{i\in I_{3},t\in[T]}\left\vert \dot{\iota}_{it}^{II}\right\vert &
=\max_{i\in I_{3},t\in[T]}\left\vert \left(\hat{\mu}_{1,it}-\mu_{1,it}%
\right) \dot{u}_{i,1}^{(1)\prime}\dot{v}_{t,1}^{(1)}+\mu_{1,it}\left(\dot{u}%
_{i,1}^{(1)\prime}\dot{v}_{t,1}^{(1)}-u_{i,1}^{0\prime}v_{t,1}^{0}\right)
\right\vert  \notag  \label{Lem21.5} \\
& \leq R_{1,it}^{II}+R_{2,it}^{II}|\mu_{1,it}|,
\end{align}%
where $\max_{i\in I_{3},t\in[T]}|R_{1,it}^{II}|=O_{p}(\eta_{N})$ and $%
\max_{i\in I_{3},t\in[T]}|R_{2,it}^{II}|=O_{p}(\eta_{N})$. % 	\begin{align}
Similarly, we have 
\begin{align}
\left\vert \dot{\iota}_{it}^{III}\right\vert & =\left\vert \hat{e}%
_{1,it}u_{i,1}^{0\prime}O_{1}^{(1)}\dot{v}_{t,1}^{(1)}-e_{1,it}u_{i,1}^{0%
\prime}v_{t,1}^{0}\right\vert  \notag  \label{Lem21.6} \\
& \leq \left\vert (\hat{e}_{1,it}-e_{1,it})\left(O_{1}^{(1)}u_{i,1}^{0}%
\right) ^{\prime}\left(\dot{v}_{t,1}^{(1)}-O_{1}^{(1)}v_{t,1}^{0}\right) +(%
\hat{e}_{1,it}-e_{1,it})\left(O_{1}^{(1)}u_{i,1}^{0}\right)
^{\prime}O_{1}^{(1)}v_{t,1}^{0}\right\vert  \notag \\
& +\left\vert e_{1,it}\left(O_{1}^{(1)}u_{i,1}^{0}\right) ^{\prime}\left(%
\dot{v}_{t,1}^{(1)}-O_{1}^{(1)}v_{t,1}^{0}\right) \right\vert  \notag \\
& =R_{1,it}^{III}+R_{2,it}^{III}|e_{1,it}|,
\end{align}%
where $\max_{i\in I_{3},t\in[T]}|R_{1,it}^{III}|=O_{p}(\eta_{N})$ and $%
\max_{i\in I_{3},t\in[T]}|R_{2,it}^{III}|=O_{p}(\eta_{N})$. 
% 	where the last line is by the same argument as $\dot{\iota}_{it}^{II}$, and this result holds uniformly.
% Besides, we show that
% 	\begin{align}
% 		\label{Lem21.6}
% 		&\left\vert \dot{\iota}_{it}^{3}\right\vert=\left\vert
% 		\hat{e}_{1,it}u_{i,1}^{0\prime}O_{1}^{(1)}\dot{v}_{t,1}^{(1)}-e_{1,it}u_{i,1}^{0\prime}v_{t,1}^{0}  \right\vert\notag\\
% 		&=\bigg\vert  (\hat{e}_{1,it}-e_{1,it})\left(O_{1}^{(1)}u_{i,1}^{0}  \right)^{\prime}\left(\dot{v}_{t,1}^{(1)}-O_{1}^{(1)}v_{t,1}^{0}   \right)+ (\hat{e}_{1,it}-e_{1,it}) \left(O_{1}^{(1)}u_{i,1}^{0}  \right)^{\prime}O_{1}^{(1)}v_{t,1}^{0}\notag\\
% 		&+e_{1,it}\left(O_{1}^{(1)}u_{i,1}^{0}  \right)^{\prime}\left(\dot{v}_{t,1}^{(1)}-O_{1}^{(1)}v_{t,1}^{0}   \right)    \bigg\vert\notag\\
% 		&=O_{p}\left(\eta_{N}^{2}\right)+O_{p}(\eta_{N})+\left\vert e_{1,it}  \right\vert O_{p}(\eta_{N}),
% 	\end{align}
% 	where the last line is by the same argument as $\dot{\iota}_{it}^{II}$, and this result holds uniformly. Hence, we can conclude that 
% 	\begin{align}
% 		\label{Lem21.7}
% 		\max_{i\in I_{3}}\frac{1}{T}\sum_{t\in[T]}\left\vert \dot{\iota}_{it}\right\vert=O_{p}(\eta_{N}).
% 	\end{align}
Therefore, we have 
\begin{align}
\max_{i\in I_{3}}||F_{3,i}^{1}||_{F}& \leq \frac{1}{T}\sum_{t=1}^{T}\left%
\vert f_{it}\left(\dot{\iota}_{it}\right) -f_{it}(0)\right\vert \left\Vert 
\dot{v}_{t,0}^{(1)}\right\Vert_{2}^{2}  \notag  \label{Lem21.8} \\
& \lesssim \max_{t\in[T]}\left\Vert \dot{v}_{t,0}^{(1)}\right\Vert_{2}%
\max_{i\in I_{3},t\in
[T]}(R_{1,it}^{I}+R_{1,it}^{II}+R_{2,it}^{II}+R_{1,it}^{III}+R_{2,it}^{III})%
\frac{1}{T}\sum_{t\in[T]}(1+|e_{1,it}|+|\mu_{1,it}|)  \notag \\
& =O_{p}(\eta_{N}),
\end{align}%
%
%
%
%
% 	It follows that
% 	\begin{align}
% 		\label{Lem21.8}
% 		F_{3,i}^{1}&=\frac{1}{T}\sum_{t=1}^{T}\left[\mathfrak{f}_{it}\left(\dot{\iota}_{it}\right)-\mathfrak{f}_{it}(0)\right]\dot{v}_{t,0}^{0}\dot{v}_{t,0}^{0\prime}\lesssim \max_{t\in[T]}\left\Vert \dot{v}_{t,0}^{0}\dot{v}_{t,0}^{0\prime} \right\Vert_{F}\frac{1}{T}\sum_{t\in[T]}\left\vert \dot{\iota}_{it}\right\vert =O_{p}(\eta_{N}),
% 	\end{align}
% 	uniformly over $i\in I_{3}$, where the first inequality is due to the Lipschitz continuity of density function. 
Combining all results above, we obtain that $\max_{i\in I_{3}}||\hat{D}%
_{i}^{F}-D_{i}^{F}||_{F}=O_{p}(\eta_{N})$.
\end{proof}

\begin{lemma}
{\small \label{Lem22} } Under Assumptions \ref{ass:1}-\ref{ass:10}, for
matrices $\dot{D}_{i}^{J}$ and $D_{i}^{J}$ defined in the proof of Theorem %
\ref{Thm3}, we have 
\begin{align*}
\max_{i \in I_3}\left\Vert\dot{D}_{i}^{J}-D_{i}^{J}\right\Vert_{F}=\left%
\Vert 
\begin{bmatrix}
O_{p}\left(\eta_{N}^{2}\right) & O_{p}(\eta_{N}) \\ 
O_{p}\left(\eta_{N}^{2}\right) & O_{p}\left(\eta_{N}^{2}\right)%
\end{bmatrix}
\right\Vert_{F}.
\end{align*}
\end{lemma}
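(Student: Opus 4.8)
The plan is to follow, block by block, the computation used for Lemma~\ref{Lem21}, but keeping careful track of the $2\times 2$ block structure of $\dot D_i^J-D_i^J$, since the off-diagonal $(1,2)$ block carries rate $\eta_N$ while the other three blocks carry rate $\eta_N^2$, where $\eta_N=\sqrt{\log(N\vee T)/(N\wedge T)}\,\xi_N^2$. Write $\tilde\iota_{it}$ for the mean-value point appearing in $\dot D_i^J$ and $\dot\iota_{it}:=\iota_{it}(O_0^{(1)}u_{i,0}^0,O_1^{(1)}u_{i,1}^0,\dot u_{i,1}^{(1)})$, so $|\tilde\iota_{it}|\le|\dot\iota_{it}|$. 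First I would split off the density substitution: letting $\dot D_i^{J,0}$ be $\dot D_i^J$ with every $f_{it}(\tilde\iota_{it})$ replaced by $f_{it}(0)$, the error $\dot D_i^J-\dot D_i^{J,0}=\frac1T\sum_t(f_{it}(\tilde\iota_{it})-f_{it}(0))[\,\cdot\,]_t$ has each entry $O_p(\eta_N^2)$ uniformly over $i\in I_3$, because $|f_{it}(\tilde\iota_{it})-f_{it}(0)|\le\bar{f}^{\prime}|\dot\iota_{it}|$ (Assumption~\ref{ass:1}(viii)), $\max_{i\in I_3,t}|\dot\iota_{it}|\lesssim\eta_N(|\mu_{1,it}|+|e_{1,it}|+1)$ (Step~4 of the proof of Theorem~\ref{Thm3}), every entry of $[\,\cdot\,]_t$ has the form $\dot v_{t,\ell}^{(1)}(\cdot)_t^{\prime}$ whose second factor is either $O_p(\eta_N)$ (the $O_\ell^{(1)}v_{t,\ell}^0-\dot v_{t,\ell}^{(1)}$ pieces, by Theorem~\ref{Thm2}(ii)) or of size $\eta_N(1+|e_{1,it}|)$ (the $e_{1,it}O_1^{(1)}v_{t,1}^0-\hat e_{1,it}\dot v_{t,1}^{(1)}$ piece, by Theorem~\ref{Thm2}(ii) and Lemma~\ref{Lem20}), and $\frac1T\sum_t(1+|e_{1,it}|+|\mu_{1,it}|)^3=O_p(1)$ by Assumption~\ref{ass:1}(iv). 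So it remains to bound $\dot D_i^{J,0}-D_i^J$.

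For this, using $\hat e_{1,it}\dot v_{t,1}^{(1)}=e_{1,it}O_1^{(1)}v_{t,1}^0+r_{it}$ with $\|r_{it}\|\lesssim\eta_N(1+|e_{1,it}|)$ uniformly (by Lemma~\ref{Lem20} and Theorem~\ref{Thm2}(ii)) and $O_\ell^{(1)}v_{t,\ell}^0-\dot v_{t,\ell}^{(1)}=-(\dot v_{t,\ell}^{(1)}-O_\ell^{(1)}v_{t,\ell}^0)$, I would telescope each block. The $(1,1)$ blocks of $\dot D_i^{J,0}$ and $D_i^J$ have identical summands, so that block of the difference is $0$. In the remaining blocks, every term that is a product of two perturbations of size $\eta_N$ — e.g. $\frac1T\sum_t f_{it}(0)r_{it}r_{it}^{\prime}$ or $\frac1T\sum_t f_{it}(0)r_{it}(\dot v_{t,0}^{(1)}-O_0^{(1)}v_{t,0}^0)^{\prime}$ — is $O_p(\eta_N^2)$ by Cauchy--Schwarz and Assumption~\ref{ass:1}(iv). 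The only single-perturbation term in the $(1,2)$ block, $\frac1T\sum_t f_{it}(0)O_0^{(1)}v_{t,0}^0(e_{1,it}O_1^{(1)}v_{t,1}^0-\hat e_{1,it}\dot v_{t,1}^{(1)})^{\prime}$, needs only $O_p(\eta_N)$: it splits into $-\frac1T\sum_t f_{it}(0)e_{1,it}O_0^{(1)}v_{t,0}^0(\dot v_{t,1}^{(1)}-O_1^{(1)}v_{t,1}^0)^{\prime}$, bounded crudely by $(\max_t\|\dot v_{t,1}^{(1)}-O_1^{(1)}v_{t,1}^0\|)\,\frac1T\sum_t|e_{1,it}|\,\|v_{t,0}^0\|=O_p(\eta_N)$, and $-\frac1T\sum_t f_{it}(0)O_0^{(1)}v_{t,0}^0v_{t,1}^{0\prime}O_1^{(1)\prime}(\hat e_{1,it}-e_{1,it})$, which is the dominating term of the $(1,2)$ block already treated in the proof of Theorem~\ref{Thm3} and is $O_p(\eta_N)$ by Lemma~\ref{Lem20}(iii) together with \eqref{Lem:le} and \eqref{Lem:we}.

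The main obstacle is the single-perturbation terms surviving in the $(2,1)$ and $(2,2)$ blocks, all of the form $\frac1T\sum_t f_{it}(0)e_{1,it}(P_t)^{\prime}$ with $P_t\in\{\dot v_{t,0}^{(1)}-O_0^{(1)}v_{t,0}^0,\ (\hat e_{1,it}-e_{1,it})v_{t,1}^0v_{t,1}^{0\prime}\}$ data-dependent and of size only $\eta_N$: the naive Cauchy--Schwarz bound gives merely $O_p(\eta_N)$, so the extra factor of $\eta_N$ has to be extracted. For these I would substitute the PCA expansion $\hat e_{1,it}-e_{1,it}=-w_{1,t}^{0\prime}(\frac{W_1^{0\prime}W_1^0}{T})^{-1}\frac1T\sum_s w_{1,s}^0e_{1,is}-l_{1,i}^{0\prime}\frac1N\sum_{i^{\prime}}l_{1,i^{\prime}}^0e_{1,i^{\prime}t}-\mathcal{R}_{1,it}$ of Lemma~\ref{Lem20}(iii), resp. the expansion behind Theorem~\ref{Thm2}(ii)/Lemma~\ref{Lem:B} for $\dot v_{t,0}^{(1)}-O_0^{(1)}v_{t,0}^0$. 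The remainder $\mathcal{R}_{1,it}$ contributes $O_p(\log(N\vee T)/(N\wedge T))=o_p(\eta_N^2)$; the ``diagonal'' pieces (summation index of the score average equal to $t$, or cross-sectional index equal to $i$) are $O_p(1/(N\wedge T))=o_p(\eta_N^2)$; and the remaining ``cross'' pieces are products of an $O_p(\eta_N)$-size weight with a score-type average $\frac1T\sum_s w_{1,s}^0e_{1,is}$ or $\frac1N\sum_{i^{\prime}}l_{1,i^{\prime}}^0e_{1,i^{\prime}t}$, for which the dependence with $e_{1,it}$ is handled — exactly as in the proof of Theorem~\ref{Thm3} — by conditioning on $\mathscr{D}$ (resp. $\mathscr{D}^{I_1\cup I_2}$, resp. $\mathscr{D}$ enlarged by $\{e_{1,i^{\prime}t}\}_{i^{\prime}\ne i,t}$), using $\mathbb{E}[f_{it}(0)e_{1,it}\mid\mathscr{D}]=0$ and the conditional strong mixing from Assumption~\ref{ass:10}, the conditional Bernstein inequality of Lemma~\ref{Lem:Bern}, and a covering argument over the high-probability event $\{\max_t\|\dot\Delta_{t,v}\|\lesssim\eta_N\}$ from Theorem~\ref{Thm2}. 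Collecting the four blocks then yields the claimed rates, uniformly over $i\in I_3$.
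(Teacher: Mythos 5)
Your proposal follows essentially the same route as the paper's proof: you isolate the density-substitution error (the paper's $J_{3,i}$), reduce the $f_{it}(0)$-part of the difference block-by-block, dispatch products of two $O_p(\eta_N)$ perturbations by Cauchy--Schwarz and Assumption~\ref{ass:1}(iv), and for the surviving single-perturbation terms in the $(2,1)$ and $(2,2)$ blocks you recognize that the extra factor of $\eta_N$ must be extracted — via the PCA expansion of $\hat e_{1,it}-e_{1,it}$ from Lemma~\ref{Lem20}(iii) for the $(2,2)$ term, and via conditioning plus a Bernstein inequality for the $(2,1)$ term. This matches the paper's $J_{1,i}+J_{2,i}+J_{3,i}$ decomposition in substance if not in bookkeeping.

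One small inaccuracy worth cleaning up: there is no closed-form linear expansion for $\dot v_{t,0}^{(1)}-O_0^{(1)}v_{t,0}^0$ (only $\dot u_{i,j}^{(1)}$ admits an asymptotic linear representation, Theorem~\ref{Thm2}(iii); $\dot v_{t,j}^{(1)}$ has only a uniform rate, Theorem~\ref{Thm2}(ii)), so the plan to \emph{substitute} such an expansion into the $J_{2,i}^3$-type term does not parse literally. But the conditional argument you state next — that $\dot v_{t,\ell}^{(1)}$ is $\mathscr{D}^{I_1\cup I_2}$-measurable, $\mathbb{E}[f_{it}(0)e_{1,it}\mid\mathscr{D}]=0$ by Assumption~\ref{ass:10}(i), and a conditional Bernstein inequality under the strong mixing of Assumption~\ref{ass:10}(ii) squeezes the rate to $O_p(\eta_N^2)$ — is exactly the paper's argument and needs no expansion of $\dot v$; the covering argument over $\{\max_t\|\dot\Delta_{t,v}\|\lesssim\eta_N\}$ is also unnecessary for those terms (it would matter only for quantities depending on $I_3$ data), though it is harmless.
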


\begin{proof}
Recall that 
\begin{equation*}
D_{i}^{J}=\frac{1}{T}\sum_{t=1}^{T}f_{it}(0)%
\begin{bmatrix}
\dot{v}_{t,0}^{(1)}\left(O_{0}^{(1)}v_{t,0}^{0}-\dot{v}_{t,0}^{(1)}\right)^{%
\prime} & 0 \\ 
0 & e_{1,it}^{2}O_{1}^{(1)}v_{t,1}^{0}\left(O_{1}^{(1)}v_{t,1}^{0}-\dot{v}%
_{t,1}^{(1)}\right) ^{\prime}%
\end{bmatrix}%
.
\end{equation*}%
%
%
%
%
% such that $D_{i}^{J}=O_{p}
% \begin{pmatrix}
% 	\eta &0\\
% 	0&\eta 
% \end{pmatrix}$ uniformly over $i\in I_{3}$ by noticing that
% \begin{align*}
% 	&\frac{1}{T}\sum_{t=1}^{T}\mathfrak{f}_{it}(0)\dot{v}_{t,0}^{(1)}(O_{0}^{(1)}v_{t,0}^{0}-\dot{v}_{t,0}^{(1)})^{\prime}\\
% 	&\lesssim \frac{1}{T}\sum_{t=1}^{T}\left\Vert\left(\dot{v}_{t,0}^{(1)}-O_{0}^{(1)}v_{t,0}^{0}\right)\left(O_{0}^{(1)}v_{t,0}^{0}-\dot{v}_{t,0}^{(1)}\right)^{\prime}\right\Vert_{F}+\frac{1}{T}\sum_{t=1}^{T}\left\Vert\left(O_{0}^{(1)}v_{t,0}^{0}\right)\left(O_{0}^{(1)}v_{t,0}^{0}-\dot{v}_{t,0}^{(1)}\right)^{\prime}\right\Vert_{F}\\
% 	&=O_{p}(\eta_{N}),\,\,\,\,uniformly over i\in I_{3}
% \end{align*}
% and 
% \begin{align*}
% 	&\frac{1}{T}\sum_{t=1}^{T}\mathfrak{f}_{it}(0)e_{1,it}^{2}O_{1}^{(1)}v_{t,1}^{0}\left(O_{1}^{(1)}v_{t,1}^{0}-\dot{v}_{t,1}^{(1)}\right)^{\prime}\\
% 	&\lesssim  \frac{1}{T}\sum_{t=1}^{T}\left\Vert e_{1,it}^{2}O_{1}^{(1)\prime}v_{t,1}^{0}\left(O_{1}^{(1)}v_{t,1}^{0}-\dot{v}_{t,1}^{(1)}\right)^{\prime} \right\Vert_{F}\\
% 	&=O_{p}(\eta_{N})\,\,\,\,\,\,\,uniformly \,\,over\,\,i\in I_{3}.
% \end{align*}
We have 
\begin{align*}
& \hat{D}_{i}^{J}-D_{i}^{J} \\
& =\frac{1}{T}\sum_{t=1}^{T}f_{it}(0)%
\begin{bmatrix}
0 & \dot{v}_{t,0}^{(1)}v_{t,1}^{0\prime}O_{1}^{(1)\prime}\left(e_{1,it}-\hat{%
e}_{1,it}\right) +\dot{v}_{t,0}^{(1)}\left(O_{1}^{(1)}v_{t,1}^{0}-\dot{v}%
_{t,1}^{(1)}\right) ^{\prime}\left(\hat{e}_{1,it}-e_{1,it}\right) \\ 
\left(\hat{e}_{1,it}-e_{1,it}\right) \dot{v}_{t,1}^{(1)}%
\left(O_{0}^{(1)}v_{t,0}^{0}-\dot{v}_{t,0}^{(1)}\right) ^{\prime} & \dot{v}%
_{t,1}^{(1)}v_{t,1}^{0\prime}O_{1}^{(1)\prime}e_{1,it}\left(\hat{e}%
_{1,it}-e_{1,it}\right) -\dot{v}_{t,1}^{(1)}\dot{v}_{t,1}^{(1)\prime}\left(%
\hat{e}_{1,it}^{2}-e_{1,it}^{2}\right)%
\end{bmatrix}
\\
& +\frac{1}{T}\sum_{t=1}^{T}f_{it}(0)%
\begin{bmatrix}
0 & \dot{v}_{t,0}^{(1)}\left(O_{1}^{(1)}v_{t,1}^{0}-\dot{v}%
_{t,1}^{(1)}\right) ^{\prime}e_{1,it} \\ 
\dot{v}_{t,1}^{(1)}\left(O_{0}^{(1)}v_{t,0}^{0}-\dot{v}_{t,0}^{(1)}\right)^{%
\prime}e_{1,it} & \left(\dot{v}_{t,1}^{(1)}-O_{1}^{(1)}v_{t,1}^{0}\right)%
\left(O_{1}^{(1)}v_{t,1}^{0}-\dot{v}_{t,1}^{(1)}\right) ^{\prime}e_{1,it}^{2}%
\end{bmatrix}
\\
& +\frac{1}{T}\sum_{t=1}^{T}\left[ f_{it}(\tilde{\iota}_{it})-f_{it}(0)%
\right] 
\begin{bmatrix}
\dot{v}_{t,0}^{(1)}\left(O_{0}^{(1)}v_{t,0}^{0}-\dot{v}_{t,0}^{(1)}\right)^{%
\prime} & \dot{v}_{t,0}^{(1)}\left(e_{1,it}O_{1}^{(1)}v_{t,1}^{0}-\hat{e}%
_{1,it}\dot{v}_{t,1}^{(1)}\right) ^{\prime} \\ 
\hat{e}_{1,it}\dot{v}_{t,1}^{(1)}\left(O_{0}^{(1)}v_{t,0}^{0}-\dot{v}%
_{t,0}^{(1)}\right) ^{\prime} & \hat{e}_{1,it}\dot{v}_{t,1}^{(1)}%
\left(e_{1,it}O_{1}^{(1)}v_{t,1}^{0}-\hat{e}_{1,it}\dot{v}%
_{t,1}^{(1)}\right) ^{\prime}%
\end{bmatrix}
\\
:& =J_{1,i}+J_{2,i}+J_{3,i}.
\end{align*}%
As in the proof of the last lemma, we define $J_{1,i}^{m}$, $J_{2,i}^{m}$
and $J_{3,i}^{m}$ for $m=1,2,3,4$ as four clockwise blocks in $J_{1,i}$, $%
J_{2,i}$ and $J_{3,i}$, respectively.

First, we study $J_{1,i}.$ For $J_{1,i}^{2}$, we notice that 
\begin{align}
J_{1,i}^{2}& =\frac{1}{T}\sum_{t=1}^{T}f_{it}(0)\dot{v}%
_{t,0}^{(1)}v_{t,1}^{0\prime}O_{1}^{(1)\prime}\left(e_{1,it}-\hat{e}%
_{1,it}\right) +\frac{1}{T}\sum_{t=1}^{T}f_{it}(0)\dot{v}_{t,0}^{(1)}%
\left(O_{1}^{(1)}v_{t,1}^{0}-\dot{v}_{t,1}^{(1)}\right) ^{\prime}\left(\hat{e%
}_{1,it}-e_{1,it}\right)  \notag  \label{Lem:J_eta} \\
& =\frac{1}{T}\sum_{t=1}^{T}f_{it}(0)O_{0}^{(1)}v_{t,0}^{0}v_{t,1}^{0%
\prime}O_{1}^{(1)\prime}\left(e_{1,it}-\hat{e}_{1,it}\right)  \notag \\
& +\frac{1}{T}\sum_{t=1}^{T}f_{it}(0)\left(\dot{v}%
_{t,0}^{(1)}-O_{0}^{(1)}v_{t,0}^{0}\right)
v_{t,1}^{0\prime}O_{1}^{(1)\prime}\left(e_{1,it}-\hat{e}_{1,it}\right)
+O_{p}\left(\eta_{N}^{2}\right)  \notag \\
& =\frac{1}{T}\sum_{t=1}^{T}f_{it}(0)O_{0}^{(1)}v_{t,0}^{0}v_{t,1}^{0%
\prime}O_{1}^{(1)\prime}\left(e_{1,it}-\hat{e}_{1,it}\right)
+O_{p}\left(\eta_{N}^{2}\right) +O_{p}\left(\eta_{N}^{2}\right)  \notag \\
& =O_{p}(\eta_{N})\quad \text{uniformly over }i\in I_{3}.
\end{align}%
Noted that the leading term in $J_{1,i}^{2}$ is $\frac{1}{T}%
\sum_{t=1}^{T}f_{it}(0)O_{0}^{(1)}v_{t,0}^{0}v_{t,1}^{0\prime}O_{1}^{(1)%
\prime}\left(e_{1,it}-\hat{e}_{1,it}\right)$, which will remain as the bias
term of $\hat{u}_{i,0}^{(3,1)}$.

Furthermore, it is clear that 
\begin{equation*}
J_{1,i}^{3}=\frac{1}{T}\sum_{t=1}^{T}f_{it}(0)\left(\hat{e}%
_{1,it}-e_{1,it}\right) \dot{v}_{t,1}^{(1)}\left(O_{0}^{(1)}v_{t,0}^{0}-\dot{%
v}_{t,0}^{(1)}\right) ^{\prime}=O_{p}\left(\eta_{N}^{2}\right) \quad \text{%
uniformly over }i\in I_{3}.
\end{equation*}
Next, we obtain that 
\begin{align}
J_{1,i}^{4}& =\frac{1}{T}\sum_{t=1}^{T}f_{it}(0)\dot{v}%
_{t,1}^{(1)}v_{t,1}^{0\prime}O_{1}^{(1)\prime}e_{1,it}\left(\hat{e}%
_{1,it}-e_{1,it}\right) -\frac{1}{T}\sum_{t=1}^{T}f_{it}(0)\dot{v}%
_{t,1}^{(1)}\dot{v}_{t,1}^{(1)\prime}\left(\hat{e}_{1,it}^{2}-e_{1,it}^{2}%
\right)  \notag  \label{Lem22.4} \\
& =O_{1}^{(1)}\left[ \frac{1}{T}\sum_{t=1}^{T}f_{it}(0)v_{t,1}^{0}v_{t,1}^{0%
\prime}e_{1,it}\left(\hat{e}_{1,it}-e_{1,it}\right) \right]
O_{1}^{(1)\prime}-\frac{1}{T}\sum_{t=1}^{T}f_{it}(0)\dot{v}_{t,1}^{(1)}\dot{v%
}_{t,1}^{(1)\prime}\left(\hat{e}_{1,it}-e_{1,it}\right) ^{2}  \notag \\
& -\frac{2}{T}\sum_{t=1}^{T}f_{it}(0)\dot{v}_{t,1}^{(1)}\dot{v}%
_{t,1}^{(1)\prime}e_{1,it}\left(\hat{e}_{1,it}-e_{1,it}\right)
+O_{p}(\eta_{N}^{2})  \notag \\
& =-O_{1}^{(1)}\left[ \frac{1}{T}%
\sum_{t=1}^{T}f_{it}(0)v_{t,1}^{0}v_{t,1}^{0\prime}e_{1,it}\left(\hat{e}%
_{1,it}-e_{1,it}\right) \right] O_{1}^{(1)\prime}+O_{p}(\eta_{N}^{2}),
\end{align}
where the first and second equalities hold by Theorem \ref{Thm2}(ii) and
Lemma \ref{Lem20}. We deal with the first term in the second equality of (%
\ref{Lem22.4}) by inserting the linear expansion of $\hat{e}_{1,it}-e_{1,it}$
in Lemma \ref{Lem20}(iii), i.e., 
\begin{align}
& \frac{1}{T}\sum_{t=1}^{T}f_{it}(0)v_{t,1}^{0}v_{t,1}^{0\prime}e_{1,it}%
\left(\hat{e}_{1,it}-e_{1,it}\right)  \notag  \label{Lem22.5} \\
& =-\frac{1}{T}\sum_{t=1}^{T}\left\{
f_{it}(0)v_{t,1}^{0}v_{t,1}^{0\prime}e_{1,it}\left[ w_{1,t}^{0\prime}\left(%
\frac{W_{1}^{0\prime}W_{1}^{0}}{T}\right) ^{-1}\frac{1}{T}%
\sum_{t=1}^{T}w_{1,t}^{0}e_{1,it}\right] \right\}  \notag \\
& -\frac{1}{T}\sum_{t=1}^{T}\left\{ f_{it}(0)v_{t,1}^{0}v_{t,1}^{0\prime}%
\left[ e_{1,it}l_{1,i}^{0\prime}\frac{1}{N}\sum_{i^{*}\in[N]%
}l_{1,i^{*}}^{0}e_{1,i^{*}t}\right] \right\} +O_{p}\left(\frac{\log (N\vee T)%
}{N\wedge T}\right)
\end{align}
uniformly over $i\in I_{3}$. For the first term in the right side of (\ref%
{Lem22.5}), we notice that 
\begin{align*}
& \max_{i\in I_{3}}\left\Vert \frac{1}{T}\sum_{t=1}^{T}\left%
\{f_{it}(0)v_{t,1}^{0}v_{t,1}^{0\prime}e_{1,it}\left[ w_{1,t}^{0\prime}\left(%
\frac{W_{1}^{0\prime}W_{1}^{0}}{T}\right) ^{-1}\frac{1}{T}%
\sum_{t=1}^{T}w_{1,t}^{0}e_{1,it}\right] \right\} \right\Vert_{F} \\
& =\max_{i\in I_{3}}\left\Vert \frac{1}{T}\sum_{t=1}^{T}\left%
\{f_{it}(0)v_{t,1}^{0}v_{t,1}^{0\prime}e_{1,it}\left[ \sum_{k\in
[r_{1}]}w_{1,t,k}^{0}\left[ \left(\frac{W_{1}^{0\prime}W_{1}^{0}}{T}%
\right)^{-1}\frac{1}{T}\sum_{t=1}^{T}w_{1,t}^{0}e_{1,it}\right]_{k}\right]%
\right\} \right\Vert_{F} \\
& \leq \sum_{k\in [r_{1}]}\max_{i\in I_{3}}\left\Vert \left[ \frac{1}{T}%
\sum_{t=1}^{T}f_{it}(0)v_{t,1}^{0}v_{t,1}^{0\prime}w_{1,t,k}^{0}e_{1,it}%
\right] \right\Vert_{F}\max_{k\in [r_{1}],i\in I_{3}}\left\vert \left[ \left(%
\frac{W_{1}^{0\prime}W_{1}^{0}}{T}\right) ^{-1}\frac{1}{T}%
\sum_{t=1}^{T}w_{1,t}^{0}e_{1,it}\right]_{k}\right\vert \\
& =\sum_{k\in [r_{1}]}\max_{i\in I_{3}}\left\Vert \left[ \frac{1}{T}%
\sum_{t=1}^{T}f_{it}(0)v_{t,1}^{0}v_{t,1}^{0\prime}w_{1,t,k}^{0}e_{1,it}%
\right] \right\Vert_{F}O_{p}(\eta_{N}) \\
& =O_{p}(\eta_{N}^{2}),
\end{align*}%
where $w_{1,t}$ is the fact of $X_{1,it}$, $r_{1}$ is the dimension of $%
w_{1,t}$, and $w_{1,t,k}$ is the $k$th element of $w_{1,it}$, the second
equality holds by the results in (\ref{Lem20.5}) and (\ref{Lem:we}), and the
last equality is by the fact that 
\begin{equation*}
\mathbb{P}\left(\max_{i\in I_{3}}\left\Vert \left[ \frac{1}{T}%
\sum_{t=1}^{T}f_{it}(0)v_{t,1}^{0}v_{t,1}^{0\prime}w_{1,t,k}^{0}e_{1,it}%
\right] \right\Vert_{F}>c_{23}\sqrt{\frac{\log (N\vee T)}{T}}\xi_{N}\bigg|%
\mathscr{D}\right) =o(1)
\end{equation*}%
following analogous analysis as (\ref{Lem:F2}) with a positive constant $%
c_{23}$.

For the second term on the RHS of equation (\ref{Lem22.5}), we notice that 
\begin{align}
& \frac{1}{T}\sum_{t=1}^{T}\left\{ f_{it}(0)v_{t,1}^{0}v_{t,1}^{0\prime}%
\left[e_{1,it}l_{1,i}^{0\prime}\frac{1}{N}\sum_{i^{*}\in
N}l_{1,i^{*}}^{0}e_{1,i^{*}t}\right] \right\}  \notag  \label{Lem:J1.4} \\
& =\frac{1}{NT}\sum_{i^{*}\in [N]}\sum_{t\in
[T]}f_{it}(0)v_{t,1}^{0}v_{t,1}^{0\prime}l_{1,i}^{0%
\prime}l_{1,i^{*}}^{0}e_{1,it}e_{1,i^{*}t}  \notag \\
& =\frac{1}{NT}\sum_{t\in[T]}f_{it}(0)v_{t,1}^{0}v_{t,1}^{0\prime}l_{1,i}^{0%
\prime}l_{1,i}^{0}e_{1,it}^{2}+\frac{1}{NT}\sum_{i^{*}\neq i}\sum_{t\in[T]%
}f_{it}(0)v_{t,1}^{0}v_{t,1}^{0\prime}l_{1,i}^{0%
\prime}l_{1,i^{*}}^{0}e_{1,it}e_{1,i^{*}t}
\end{align}%
with 
\begin{align*}
\max_{i\in I_{3}}\left\Vert \frac{1}{NT}\sum_{t\in
[T]}f_{it}(0)v_{t,1}^{0}v_{t,1}^{0\prime}l_{1,i}^{0%
\prime}l_{1,i}^{0}e_{1,it}^{2}\right\Vert & \leq \frac{1}{N}\max_{i\in
I_{3}}\left\Vert l_{1,i}^{0}\right\Vert_{2}^{2}\max_{t\in[T]%
}\left(\left\Vert v_{t,1}^{0}\right\Vert_{2}\left\Vert
v_{t,1}^{0}\right\Vert_{2}\right) \frac{1}{T}\sum_{t\in[T]}e_{1,it}^{2} \\
& =O_{p}\left(\frac{\xi_{N}^{2}}{N}\right) ,
\end{align*}%
where the equality holds by Assumption \ref{ass:1}(iv), Lemma \ref%
{Lem:bounded u&v_tilde}(i) and Theorem \ref{Thm2}(ii). For the second term
in (\ref{Lem:J1.4}), by Assumption \ref{ass:1}(iii), $e_{j,it}$ is strong
mixing across $t$ and independent given fixed effect. We define $%
E_{i}^{vec}=(\tilde{e}_{1,1}^{\prime},\cdots ,\tilde{e}_{1,i-1}^{\prime},%
\tilde{e}_{1,i+1}^{\prime},\cdots ,\tilde{e}_{1,N}^{\prime})^{\prime}$ with $%
\tilde{e}_{1,i^{*}}=\left(e_{1,i^{*}1}e_{1,i1},\cdots,e_{1,i^{*}T}e_{1,iT}%
\right) $ for $i^{*}\neq i$. We can see $E_{i}^{vec}$ will also be a strong
mixing sequence, conditional on $\mathscr{D}$, which implies 
\begin{align*}
& \mathbb{P}\left(\max_{i\in I_{3}}\left\Vert \frac{1}{NT}\sum_{i^{*}\neq
i}\sum_{t\in[T]}f_{it}(0)\dot{v}_{t,1}^{(1)}v_{t,1}^{0\prime}O_{1}^{(1)%
\prime}l_{1,i}^{0\prime}l_{1,i^{*}}^{0}e_{1,it}e_{1,i^{*}t}\right%
\Vert_{F}>c_{23}\eta_{N}^{2}\right) \\
& \leq \sum_{i\in I_{3}}\mathbb{E}\mathbb{P}\left(\left\Vert \frac{1}{NT}%
\sum_{i^{*}\neq i}\sum_{t\in[T]}f_{it}(0)\dot{v}_{t,1}^{(1)}v_{t,1}^{0%
\prime}O_{1}^{(1)\prime}l_{1,i}^{0\prime}l_{1,i^{*}}^{0}e_{1,it}e_{1,i^{*}t}%
\right\Vert_{F}>c_{23}\eta_{N}^{2}\bigg|\mathscr{D}\right) =o(1),
\end{align*}%
where the equality holds by Lemma \ref{Lem:Bern}(i). % \begin{align*}
This implies 
\begin{align*}
\max_{i\in I_{3}}\left\Vert O_{1}^{(1)}\left[ \frac{1}{T}%
\sum_{t=1}^{T}f_{it}(0)v_{t,1}^{0}v_{t,1}^{0\prime}e_{1,it}\left(\hat{e}%
_{1,it}-e_{1,it}\right) \right] O_{1}^{(1)\prime}\right\Vert_{F}&=\max_{i\in
I_{3}}\left\Vert \frac{1}{T}\sum_{t=1}^{T}f_{it}(0)v_{t,1}^{0}v_{t,1}^{0%
\prime}e_{1,it}\left(\hat{e}_{1,it}-e_{1,it}\right) \right\Vert_{F} \\
& =O_{p}(\eta_{N}^{2}).
\end{align*}
Therefore, we conclude that $\max_{i\in I_{3}}\left\Vert
J_{1,i}^{4}\right\Vert_{F}=O_{p}\left(\eta_{N}^{2}\right)$. Then 
\begin{equation*}
\max_{i\in I_{3}}\left\Vert J_{1,i}\right\Vert_{F}=\left\Vert 
\begin{bmatrix}
0 & O_{p}(\eta_{N}) \\ 
O_{p}\left(\eta_{N}^{2}\right) & O_{p}\left(\eta_{N}^{2}\right)%
\end{bmatrix}%
\right\Vert_{F}.
\end{equation*}

Next, for $J_{2,i}^{2}$ and $J_{2,i}^{3}$, conditioning on $\mathscr{D}%
^{I_{1}\cup I_{2}}$ and following Bernstein's inequality in Lemma \ref%
{Lem:Bern}(i), we have 
\begin{align*}
\max_{i\in I_{3}}\left\Vert J_{2,i}^{2}\right\Vert_{F}& =\max_{i\in
I_{3}}\left\Vert \frac{1}{T}\sum_{t=1}^{T}f_{it}(0)\dot{v}%
_{t,0}^{(1)}\left(O_{1}^{(1)}v_{t,1}^{0}-\dot{v}_{t,1}^{(1)}\right)
^{\prime}e_{1,it}\right\Vert_{F}=O_{p}\left(\eta_{N}^{2}\right) , \\
\max_{i\in I_{3}}\left\Vert J_{2,i}^{3}\right\Vert_{F}& =\max_{i\in
I_{3}}\left\Vert \frac{1}{T}\sum_{t=1}^{T}f_{it}(0)\dot{v}%
_{t,1}^{(1)}\left(O_{0}^{(1)}v_{t,0}^{0}-\dot{v}_{t,0}^{(1)}\right)
^{\prime}e_{1,it}\right\Vert_{F}=O_{p}\left(\eta_{N}^{2}\right) ,
\end{align*}%
which can be obtained by the similar arguments as Lemma \ref{Lem23}(i).
These results, in conjunction with the fact that 
\begin{align*}
\max_{i\in I_{3}}\left\Vert J_{2,i}^{4}\right\Vert_{F}& =\max_{i\in
I_{3}}\left\Vert \frac{1}{T}\sum_{t=1}^{T}f_{it}(0)\left(\dot{v}%
_{t,1}^{(1)}-O_{1}^{(1)}v_{t,1}^{0}\right) \left(O_{1}^{(1)}v_{t,1}^{0}-\dot{%
v}_{t,1}^{(1)}\right) ^{\prime}e_{1,it}^{2}\right\Vert_{F} \\
& \leq \max_{i\in I_{3},t\in[T]}\left(\frac{1}{T}%
\sum_{t=1}^{T}f_{it}(0)e_{1,it}^{2}\right) \max_{t\in[T]}\left\Vert \dot{v}%
_{t,1}^{(1)}-O_{1}^{(1)}v_{t,1}^{0}\right\Vert_{2}^{2}=O_{p}\left(
\eta_{N}^{2}\right) ,
\end{align*}
imply that $\max_{i\in I_{3}}\left\Vert
J_{2,i}\right\Vert_{F}=O_{p}(\eta_{N}^{2}).$

For $J_{3,i}$, we have 
\begin{align}
\max_{i\in I_{3}}\left\Vert J_{3,i}^{1}\right\Vert_{F}& =\max_{i\in
I_{3}}\left\Vert \frac{1}{T}\sum_{t=1}^{T}\left[ f_{it}\left(\tilde{\iota}%
_{it}\right) -f_{it}(0)\right] \dot{v}_{t,0}^{(1)}\left(O_{0}^{%
\prime}v_{t,0}^{0}-\dot{v}_{t,0}^{(1)}\right) ^{\prime}\right\Vert_{F} 
\notag  \label{Lem22.8} \\
& \leq \max_{t\in[T]}\left\Vert \dot{v}_{t,0}^{(1)}\right\Vert_{2}\left\Vert 
\dot{v}_{t,0}^{(1)}\left(O_{0}^{\prime}v_{t,0}^{0}-\dot{v}%
_{t,0}^{(1)}\right) ^{\prime}\right\Vert_{2}\max_{i\in I_{3}}\frac{1}{T}%
\sum_{t=1}^{T}\left\vert f_{it}\left(\tilde{\iota}_{it}\right)-f_{it}(0)%
\right\vert  \notag \\
& \lesssim O_{p}(\eta_{N})\max_{i\in I_{3}}\frac{1}{T}\sum_{t\in
[T]}\left\vert \tilde{\iota}_{it}\right\vert  \notag \\
& =O_{p}\left(\eta_{N}^{2}\right) ,
\end{align}%
where the third line is by Lipschitz continuity of the density function,
Theorem \ref{Thm2}(ii), Assumption \ref{ass:2}, and the fact that $%
\left\vert \tilde{\iota}_{it}\right\vert $ lies between $0$ and $\left\vert%
\dot{\iota}_{it}\right\vert $, and the last line is by the fact that $%
\max_{i\in I_{3}}\frac{1}{T}\sum_{t\in[T]}\left\vert \tilde{\iota}%
_{it}\right\vert =O_{p}\left(\eta_{N}\right) $ by (\ref{Lem21.5}) and (\ref%
{Lem21.6}). The bounds for other three blocks in $J_{3,i}$ can be
established in the same manner. Hence, we have $\max_{i\in
I_{3}}||J_{3,i}||_{F}=O_{p}(\eta_{N}^{2}).$

Combining all results above yields the desired result.
\end{proof}

\begin{lemma}
{\small \label{Lem23} } Under Assumptions \ref{ass:1}-\ref{ass:10}, we have

\begin{itemize}
\item[(i)] $\max_{i\in I_{3}}\left\Vert\frac{1}{T}\sum_{t=1}^{T}\left(\dot{v}%
_{t,0}^{(1)}-O_{0}^{(1)}v_{t,0}^{0}\right)\left(\tau-\mathbf{1}%
\left\{\epsilon_{it}\leq
0\right\}\right)\right\Vert_{F}=O_{p}\left(\eta_{N}^{2}\right)$,

\item[(ii)] $\max_{i\in I_{3}}\left\Vert\frac{1}{T}\sum_{t=1}^{T}\left(\dot{v%
}_{t,1}^{(1)}-O_{0}^{(1)}v_{t,1}^{0}\right)\left(\hat{e}_{1,it}-e_{1,it}%
\right)\left(\tau-\mathbf{1}\left\{\epsilon_{it}\leq
0\right\}\right)\right\Vert_{F}=O_{p}\left(\eta_{N}^{2}\right)$,

\item[(iii)] $\max_{i\in I_{3}}\left\Vert\frac{1}{T}\sum_{t=1}^{T}e_{1,it}%
\left(\dot{v}_{t,1}^{(1)}-O_{0}^{(1)}v_{t,1}^{0}\right)\left(\tau-\mathbf{1}%
\left\{\epsilon_{it}\leq
0\right\}\right)\right\Vert_{F}=O_{p}\left(\eta_{N}^{2}\right)$,

\item[(iv)] $\max_{i\in I_{3}}\left\Vert\frac{1}{T}%
\sum_{t=1}^{T}O_{1}^{(1)}v_{t,1}^{0}\left(\hat{e}_{1,it}-e_{1,it}\right)%
\left(\tau-\mathbf{1}\left\{\epsilon_{it}\leq
0\right\}\right)\right\Vert_{F}=O_{p}\left(\eta_{N}^{2}\right)$.
\end{itemize}
\end{lemma}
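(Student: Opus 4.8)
The four statements share the common form $\frac{1}{T}\sum_{t=1}^{T}w_{it}a_{it}=O_{p}(\eta_{N}^{2})$ uniformly over $i\in I_{3}$, where $a_{it}=\tau-\mathbf{1}\{\epsilon_{it}\leq 0\}$ and the multiplicative weight $w_{it}$ is, respectively, $\dot{v}_{t,0}^{(1)}-O_{0}^{(1)}v_{t,0}^{0}$; $(\dot{v}_{t,1}^{(1)}-O_{1}^{(1)}v_{t,1}^{0})(\hat{e}_{1,it}-e_{1,it})$; $e_{1,it}(\dot{v}_{t,1}^{(1)}-O_{1}^{(1)}v_{t,1}^{0})$; and $O_{1}^{(1)}v_{t,1}^{0}(\hat{e}_{1,it}-e_{1,it})$. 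The plan is to dispatch (ii) by a trivial bound and to prove (i), (iii) and (iv) by the conditioning-plus-Bernstein argument already used in this paper for $A_{2,i}$ and $A_{7,i}$ in Lemma \ref{Lem:A}, for $B_{1,t}$ in Lemma \ref{Lem:B}, and for $S_{2,i}$, $S_{3,i}$ in Lemma \ref{Lem17}.

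For (ii), since $|a_{it}|\leq 1$, the crude bound $\|\frac{1}{T}\sum_{t}w_{it}a_{it}\|_{F}\leq\max_{t}\|\dot{v}_{t,1}^{(1)}-O_{1}^{(1)}v_{t,1}^{0}\|_{2}\cdot\max_{i,t}|\hat{e}_{1,it}-e_{1,it}|=O_{p}(\eta_{N})\cdot O_{p}(\eta_{N})=O_{p}(\eta_{N}^{2})$ already gives the result, using Theorem \ref{Thm2}(ii) and Lemma \ref{Lem20}(iii); no cancellation in $a_{it}$ is needed. For (i), (iii), (iv) the crude bound only gives $O_{p}(\eta_{N})$, $O_{p}(\xi_{N}\eta_{N})$, $O_{p}(\eta_{N})$, so the conditional-mean-zero/mixing cancellation of $\{a_{it}\}$ must be exploited. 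The key point is that conditional on $\mathscr{D}_{e}^{I_{1}\cup I_{2}}$ every multiplicative weight is measurable: $\dot{v}_{t,j}^{(1)}$ is built only from the $I_{1}\cup I_{2}$ data and the factors, and $\hat{e}_{1,it}$ is a PCA output from the full panel $X_{1}$, which involves only $\{e_{1,it}\}$ and the factors and hence no $\epsilon$'s; meanwhile $\{a_{it}\}_{t\in[T]}$ for $i\in I_{3}$ is, by Assumption \ref{ass:1}(i)--(iii), zero-mean given $\mathscr{D}_{e}^{I_{1}\cup I_{2}}$ (since $\mathbb{E}(a_{it}|\mathscr{D}_{e}^{I_{1}\cup I_{2}})=\mathbb{E}(a_{it}|\mathscr{D}_{e})=0$ by Assumption \ref{ass:1}(ii)), uniformly bounded, and strong mixing. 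I would then restrict to a good event $\mathscr{A}$ on which $\max_{t}\|\dot{v}_{t,j}^{(1)}-O_{j}^{(1)}v_{t,j}^{0}\|_{2}\leq M\eta_{N}$, $\max_{i,t}|\hat{e}_{1,it}-e_{1,it}|\leq M\eta_{N}$ and $\max_{i}\frac{1}{T}\sum_{t}e_{1,it}^{2}\leq M$, with $\mathbb{P}(\mathscr{A}^{c})\leq e$ for any fixed $e>0$ by Theorem \ref{Thm2}(ii), Lemma \ref{Lem20}(iii) and Assumption \ref{ass:1}(iv). On $\mathscr{A}$, Davydov's inequality and the geometric mixing rate of Assumption \ref{ass:1}(iii) bound the conditional variance proxy $\upsilon_{0}^{2}=\sup_{t}[\mathrm{Var}(w_{it}a_{it}|\cdot)+2\sum_{s>t}|\mathrm{Cov}(w_{it}a_{it},w_{is}a_{is}|\cdot)|]$ by $O(\eta_{N}^{2})$, with per-term bound $M_{T}=O(\xi_{N}\eta_{N})$ (it is $O(\eta_{N})$ for (i) and (iv)).

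The conclusion then follows by applying the conditional Bernstein inequality of Lemma \ref{Lem:Bern}(ii) to $\sum_{t}w_{it}a_{it}$ given $\mathscr{D}_{e}^{I_{1}\cup I_{2}}$, taking expectations, and union-bounding over $i\in I_{3}$ exactly as in \eqref{Lem:F2} or the display establishing $S_{2,i}=O_{p}(\eta_{N}^{2})$ in the proof of Lemma \ref{Lem17}; this yields $\max_{i\in I_{3}}\|\frac{1}{T}\sum_{t}w_{it}a_{it}\|_{F}=O_{p}\big(\xi_{N}\eta_{N}\sqrt{\log(N\vee T)/T}\big)=o(\eta_{N}^{2})$ under Assumption \ref{ass:1}(ix) together with $\xi_{N}\gtrsim 1$, and letting $e\to 0$ upgrades this to the unconditional $O_{p}$ statement. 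Statements (iii) and (iv) are handled identically; (iv) additionally relies on the $\mathscr{D}_{e}^{I_{1}\cup I_{2}}$-measurability of $\hat{e}_{1,it}-e_{1,it}$.

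The main obstacle is bookkeeping rather than a new idea: one must verify the measurability claims carefully so the conditioning is legitimate (in particular that the PCA residual $\hat{e}_{1,it}$ depends on no $\epsilon$'s, hence is $\mathscr{D}_{e}$-measurable) and that the conditional mean-zero and conditional strong-mixing properties of $\{a_{it}\}_{i\in I_{3}}$ survive the enlargement of the conditioning $\sigma$-field, and one must track the rate arithmetic so that the Bernstein tail beats the $|I_{3}|\asymp N$ union bound and the resulting rate collapses into $\eta_{N}^{2}$ — both are governed by Assumption \ref{ass:1}(ix). Because the weights are frozen by the conditioning, there is no continuum to cover and no discretization step is required, which keeps the argument shorter than the analogous ones in Lemmas \ref{Lem:A}--\ref{Lem:B}.
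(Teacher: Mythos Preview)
Your proposal is correct and, for parts (i)--(iii), essentially matches the paper's proof: condition on $\mathscr{D}^{I_{1}\cup I_{2}}$ (or $\mathscr{D}_{e}^{I_{1}\cup I_{2}}$) so the weight is measurable, verify conditional mean zero via Assumption~\ref{ass:1}(ii) and conditional mixing via Assumption~\ref{ass:1}(iii), restrict to a high-probability event bounding the weight, and apply Lemma~\ref{Lem:Bern} with a union bound over $i\in I_{3}$. One minor slip: for (iii) the variance proxy is $O(\xi_{N}^{2}\eta_{N}^{2})$, not $O(\eta_{N}^{2})$, since $|e_{1,it}|$ can be as large as $\xi_{N}$; your stated final rate $O_{p}(\xi_{N}\eta_{N}\sqrt{\log(N\vee T)/T})$ already absorbs this and is $O_{p}(\eta_{N}^{2}/\xi_{N})=O_{p}(\eta_{N}^{2})$, so the conclusion stands. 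Also, your ``$=o(\eta_{N}^{2})$'' should be $O_{p}(\eta_{N}^{2})$ when $\xi_{N}$ is merely bounded below, which is all the lemma requires.

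For (iv) your route genuinely differs from the paper's. The paper inserts the explicit PCA expansion of $\hat{e}_{1,it}-e_{1,it}$ from Lemma~\ref{Lem20}(iii), obtaining two main terms plus a remainder, and bounds each piece separately (the first by a product-of-two-averages argument, the second by a double-index Bernstein over $(i^{*},t)$). You instead observe that $\hat{e}_{1,it}$ is built from $X_{1}$ alone --- only $\{e_{1,it}\}$ and factors, no $\epsilon$'s --- hence is $\mathscr{D}_{e}$-measurable and therefore frozen under $\mathscr{D}_{e}^{I_{1}\cup I_{2}}$; then the same single-sum conditional Bernstein in $t$ used for (i) applies directly. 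This is shorter and avoids the expansion entirely. The paper's expansion-based argument is the right tool in Lemmas~\ref{Lem22} and~\ref{Lem24}, where the individual PCA pieces must be tracked for cancellations; for (iv) it is not needed, and your direct conditioning is cleaner.
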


\begin{proof}
(i) We notice that $\mathbb{E}\left[ \left(\dot{v}%
_{t,0}^{(1)}-O_{0}^{(1)}v_{t,0}^{0}\right) \left(\tau -\mathbf{1}%
\left\{\epsilon_{it}\leq 0\right\} \right) \bigg|\mathscr{D}^{I_{1}\cup
I_{2}}\right] =0$ by Assumption \ref{ass:1}(ii). Define event $\mathscr{A}%
_{10}(M)=\left\{ \max_{t\in[T]}\left\Vert \dot{v}%
_{t,0}^{(1)}-O_{0}^{(1)}v_{t,0}^{0}\right\Vert_{2}\leq M\xi_{N}\right\} $
with $\mathbb{P}\left(\mathscr{A}_{10}(M)^{c}\right) \leq e$ for any $e>0$
by Theorem \ref{Thm2}(ii). With some positive constant $c_{24}$, it follows
that 
\begin{align}
& \mathbb{P}\left(\max_{i\in I_{3}}\left\Vert \frac{1}{T}\sum_{t=1}^{T}\left(%
\dot{v}_{t,0}^{(1)}-O_{0}^{(1)}v_{t,0}^{0}\right)\left(\tau -\mathbf{1}%
\left\{ \epsilon_{it}\leq 0\right\}
\right)\right\Vert_{F}>c_{24}\eta_{N}^{2}\right)  \notag  \label{Lem23.1} \\
& \leq \mathbb{P}\left(\max_{i\in I_{3}}\left\Vert \frac{1}{T}%
\sum_{t=1}^{T}\left(\dot{v}_{t,0}^{(1)}-O_{0}^{(1)}v_{t,0}^{0}\right)\left(%
\tau -\mathbf{1}\left\{ \epsilon_{it}\leq 0\right\}
\right)\right\Vert_{F}>c_{24}\eta_{N}^{2},\mathscr{A}_{10}(M)\right) +e 
\notag \\
& \leq \sum_{i\in I_{3}}\mathbb{P}\left(\left\Vert \frac{1}{T}%
\sum_{t=1}^{T}\left(\dot{v}_{t,0}^{(1)}-O_{0}^{(1)}v_{t,0}^{0}\right)\left(%
\tau -\mathbf{1}\left\{ \epsilon_{it}\leq 0\right\}
\right)\right\Vert_{F}>c_{24}\eta_{N}^{2},\mathscr{A}_{10}(M)\right) +e 
\notag \\
& \leq \sum_{i\in I_{3}}\mathbb{E}\mathbb{P}\left(\left\Vert \frac{1}{T}%
\sum_{t=1}^{T}\left(\dot{v}_{t,0}^{(1)}-O_{0}^{(1)}v_{t,0}^{0}\right)\left(%
\tau -\mathbf{1}\left\{ \epsilon_{it}\leq 0\right\}
\right)\right\Vert_{F}>c_{24}\eta_{N}^{2}\bigg|\mathscr{D}^{I_{1}\cup
I_{2}}\right) \mathbf{1}\left\{ \mathscr{A}_{10}(M)\right\} +e  \notag \\
& \leq \sum_{i\in I_{3}}\exp \left\{ -\frac{c_{12}c_{24}^{2}T^{2}\eta_{N}^{4}%
}{M^{2}T\eta_{N}^{2}+c_{24}MT\eta_{N}^{2}\log T\log \log T}\right\} +e 
\notag \\
& =o(1)+e.
\end{align}%
Since $e$ can be made arbitrarily small, this completes the proof of
statement (i).

(ii) It's clear that 
\begin{align*}
& \max_{i\in I_{3}}\left\Vert \frac{1}{T}\sum_{t=1}^{T}\left(\dot{v}%
_{t,1}^{(1)}-O_{0}^{(1)}v_{t,1}^{0}\right) \left(\hat{e}_{1,it}-e_{1,it}%
\right) \left(\tau -\mathbf{1}\left\{ \epsilon_{it}\leq 0\right\} \right)
\right\Vert_{F} \\
& \leq \max_{i\in I_{3},t\in[T]}\left\vert \hat{e}_{1,it}-e_{1,it}\right%
\vert \max_{t\in[T]}\left\Vert \dot{v}_{t,1}^{(1)}-O_{0}^{(1)}v_{t,1}^{0}%
\right\Vert_{2}=O_{p}(\eta_{N}^{2}),
\end{align*}%
where the equality holds by Lemma \ref{Lem20}(iii) and Theorem \ref{Thm2}%
(ii).

(iii) Noting that 
\begin{equation*}
\mathbb{E}\left[ e_{1,it}\left(\dot{v}_{t,1}^{(1)}-O_{0}^{(1)}v_{t,1}^{0}%
\right) \left(\tau -\mathbf{1}\left\{ \epsilon_{it}\leq 0\right\} \right) %
\bigg|\mathscr{D}^{I_{1}\cup I_{2}}\right] =0
\end{equation*}%
by law of iterated expectation and Assumption \ref{ass:1}(ii), we can obtain
the desired result as in (\ref{Lem23.1}).

(iv) Note that 
\begin{align}
& \frac{1}{T}\sum_{t\in[T]}v_{t,1}^{0}\left(\hat{e}_{1,it}-e_{1,it}\right)
\left(\tau -\mathbf{1}\left\{ \epsilon_{it}\leq 0\right\} \right)  \notag \\
& =\frac{1}{T^{2}}\sum_{s\in [T]}\sum_{t\in [T]}v_{t,1}^{0}\left(\tau -%
\mathbf{1}\left\{ \epsilon_{it}\leq 0\right\}\right) w_{j,t}^{0\prime}\left(%
\frac{W_{j}^{0\prime}W_{j}^{0}}{T}\right)^{-1}w_{j,s}^{0}e_{j,is}  \notag \\
& +\frac{1}{NT}\sum_{m\in [N]}\sum_{t\in[T]}v_{t,1}^{0}\left(\tau -\mathbf{1}%
\left\{ \epsilon_{it}\leq 0\right\} \right)
l_{j,i}^{0\prime}l_{j,m}^{0}e_{j,ms}+\mathcal{R}_{j,it}.  \label{eq:vehata}
\end{align}%
For the first term on the RHS of \eqref{eq:vehata}, we have 
\begin{align*}
& \max_{i\in I_{3}}\left\Vert \frac{1}{T^{2}}\sum_{s\in [T]}\sum_{t\in[T]%
}v_{t,1}^{0}\left(\tau -\mathbf{1}\left\{ \epsilon_{it}\leq 0\right\}
\right) w_{j,t}^{0\prime}\left(\frac{W_{j}^{0\prime}W_{j}^{0}}{T}\right)
^{-1}w_{j,s}^{0}e_{j,is}\right\Vert_{2} \\
& \leq \max_{i\in I_{3}}\left\Vert \frac{1}{T}\sum_{t\in
[T]}v_{t,1}^{0}\left(\tau -\mathbf{1}\left\{ \epsilon_{it}\leq
0\right\}\right) w_{j,t}^{0\prime}\right\Vert_{F}\left\Vert \frac{1}{T}%
\sum_{s\in[T]}\left(\frac{W_{j}^{0\prime}W_{j}^{0}}{T}%
\right)^{-1}w_{j,s}^{0}e_{j,is}\right\Vert_{2} \\
& =O_{p}\left(\eta_{N}^{2}\right) .
\end{align*}%
For the second term on RHS of \eqref{eq:vehata}, by Assumption \ref{ass:10}%
(ii), Bernstein's inequality in Lemma \ref{Lem:Bern}(i) conditional on
factors, we have 
\begin{equation*}
\max_{i\in I_{3}}\left\Vert \frac{1}{NT}\sum_{m\in [N]}\sum_{t\in[T]%
}v_{t,1}^{0}\left(\tau -\mathbf{1}\left\{ \epsilon_{it}\leq 0\right\}
\right)
l_{j,i}^{0\prime}l_{j,m}^{0}e_{j,ms}\right\Vert_{2}=O_{p}\left(\eta_{N}^{2}%
\right) .
\end{equation*}%
Then statement (iv) follows.
\end{proof}

\begin{lemma}
{\small \label{Lem24} } Under Assumptions \ref{ass:1}-\ref{ass:10}, we have

\begin{itemize}
\item[(i)] $\max_{i\in I_{3}}\left\Vert \frac{1}{T}%
\sum_{t=1}^{T}e_{1,it}O_{1}^{(1)}v_{t,1}^{0}f_{it}(0)\left(%
\mu_{1,it}u_{i,1}^{0\prime}v_{t,1}^{0}-\hat{\mu}_{1,it}\dot{u}%
_{i,1}^{(1)\prime}\dot{v}_{t,1}^{(1)}\right)
\right\Vert_{2}=O_{p}\left(\eta_{N}^{2}\right)$,

\item[(ii)] $\max_{i\in I_{3}}\left\Vert \frac{1}{T}%
\sum_{t=1}^{T}O_{0}^{(1)}v_{t,0}^{0}f_{it}(0)\left(\mu_{1,it}u_{i,1}^{0%
\prime}v_{t,1}^{0}-\hat{\mu}_{1,it}\dot{u}_{i,1}^{(1)\prime}\dot{v}%
_{t,1}^{(1)}\right) \right\Vert_{2}=O_{p}(\eta_{N})$.
\end{itemize}
\end{lemma}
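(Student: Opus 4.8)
\textbf{Proof proposal for Lemma \ref{Lem24}.}
The plan is to expand the cross term $\mu_{1,it}u_{i,1}^{0\prime}v_{t,1}^{0}-\hat{\mu}_{1,it}\dot u_{i,1}^{(1)\prime}\dot v_{t,1}^{(1)}$ by adding and subtracting intermediate quantities, so that it decomposes into one ``signal-level'' piece driven by the PCA error $\hat{\mu}_{1,it}-\mu_{1,it}$ and several ``estimation-error'' pieces driven by $\dot u_{i,1}^{(1)}-O_1^{(1)}u_{i,1}^0$ and $\dot v_{t,1}^{(1)}-O_1^{(1)}v_{t,1}^0$. Concretely, writing $\hat{\mu}_{1,it}\dot u_{i,1}^{(1)\prime}\dot v_{t,1}^{(1)}=\mu_{1,it}u_{i,1}^{0\prime}v_{t,1}^{0}+(\hat{\mu}_{1,it}-\mu_{1,it})\dot u_{i,1}^{(1)\prime}\dot v_{t,1}^{(1)}+\mu_{1,it}\big[\dot u_{i,1}^{(1)\prime}\dot v_{t,1}^{(1)}-u_{i,1}^{0\prime}v_{t,1}^{0}\big]$, and further expanding $\dot u_{i,1}^{(1)\prime}\dot v_{t,1}^{(1)}-u_{i,1}^{0\prime}v_{t,1}^{0}$ into products involving $\dot\Delta_{i,1}=O_1^{(1)\prime}\dot u_{i,1}^{(1)}-u_{i,1}^0$ and $\dot\Delta_{t,1}=O_1^{(1)\prime}\dot v_{t,1}^{(1)}-v_{t,1}^0$, reduces the problem to bounding, for each summand, a normalized sum $\frac1T\sum_t (\text{weight}_{it})\,a_{it}\,b_{it}$ where one factor (say $a_{it}$ is $e_{1,it}v_{t,1}^0$ or $v_{t,0}^0$, carrying the appropriate mean-zero structure) and the other is the small factor.

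For part (i), the weight $e_{1,it}O_1^{(1)}v_{t,1}^0 f_{it}(0)$ is mean-zero given $\mathscr{D}$ by Assumption \ref{ass:10}(i), so the term where the ``small factor'' equals $\mu_{1,it}-\hat{\mu}_{1,it}$ must be treated via the linear expansion of $\hat{\mu}_{1,it}-\mu_{1,it}$ from Lemma \ref{Lem20}(iii), exactly as in the analysis of $J_{1,i}^4$ in Lemma \ref{Lem22}: substituting the two leading pieces $w_{1,t}^{0\prime}(\tfrac1TW_1^{0\prime}W_1^0)^{-1}\tfrac1T\sum_s w_{1,s}^0 e_{1,is}$ and $l_{1,i}^{0\prime}\tfrac1N\sum_m l_{1,m}^0 e_{1,mt}$ yields, after one more conditional Bernstein argument (Lemma \ref{Lem:Bern}(i), conditioning on $\mathscr{D}$) plus the rates $\max_i\|\tfrac1T\sum_t w_{1,t}^0 e_{1,it}\|_2=O_p(\sqrt{\log(N\vee T)/T}\,\xi_N)$ from \eqref{Lem:we} and $\max_i\|\tfrac1T\sum_t v_{t,1}^0 v_{t,1}^{0\prime}w_{1,t,k}^0 e_{1,it}\|_F=O_p(\sqrt{\log(N\vee T)/T}\,\xi_N)$ from an analogue of \eqref{Lem:F2}, the order $O_p(\eta_N^2)$. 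The remaining summands have the form $\frac1T\sum_t e_{1,it}v_{t,1}^0 f_{it}(0)(\text{product of two }\|\cdot\|_2=O_p(\eta_N)\text{ terms})$ and are bounded by $O_p(\xi_N)\cdot O_p(\eta_N)\cdot O_p(\eta_N)=O_p(\eta_N^2)$ using $\max_{i,t}\|v_{t,1}^0\|_2\le M/c_\sigma$, Assumption \ref{ass:1}(iv)-(v), Theorem \ref{Thm2}(i)-(ii), and the uniform PCA bounds in Lemma \ref{Lem20}.

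For part (ii), the weight is $O_0^{(1)}v_{t,0}^0 f_{it}(0)$, which is \emph{not} mean-zero times $e_{1,it}$ — indeed $\mathbb{E}[f_{it}(0)v_{t,0}^0\mid\mathscr{D}]\ne 0$ in general — so the piece driven directly by $\hat{\mu}_{1,it}-\mu_{1,it}$ no longer enjoys the extra $\sqrt{\log/T}$ cancellation and only the crude bound $\max_{i,t}|\hat{\mu}_{1,it}-\mu_{1,it}|=O_p(\eta_N)$ from Lemma \ref{Lem20} applies, giving $O_p(\eta_N)$; this is exactly why (ii) has the slower rate $O_p(\eta_N)$ rather than $O_p(\eta_N^2)$. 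The other summands are still $O_p(\eta_N^2)$ by the same Cauchy–Schwarz plus Theorem \ref{Thm2} argument, so the overall order is $O_p(\eta_N)$. I expect the main obstacle to be the bookkeeping in part (i): one must carefully identify which of the several product terms genuinely needs the second-order linear expansion of the PCA error (and a fresh conditional Bernstein inequality for a double sum over $(i^*,t)$ or $(m,t)$ of weakly dependent terms) versus which can be dispatched by a blunt product-of-two-$\eta_N$-terms bound, and to verify that in every double-sum term the relevant sequence (e.g.\ $e_{1,it}e_{1,i^*t}$ across $t$ given $\mathscr{D}$) is strong mixing so that Lemma \ref{Lem:Bern}(i) applies — this is the step most prone to hidden error, and it mirrors, essentially verbatim, the treatment of $J_{1,i}^4$ and \eqref{Lem:J1.4} in the proof of Lemma \ref{Lem22}.
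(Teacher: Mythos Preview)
Your overall strategy matches the paper's: decompose $\hat{\mu}_{1,it}\dot u_{i,1}^{(1)\prime}\dot v_{t,1}^{(1)}-\mu_{1,it}u_{i,1}^{0\prime}v_{t,1}^{0}$, treat the $(\hat\mu_{1,it}-\mu_{1,it})u_{i,1}^{0\prime}v_{t,1}^0$ piece via the PCA expansion from Lemma~\ref{Lem20}(iii) exactly as in the $J_{1,i}^4$ analysis of Lemma~\ref{Lem22}, and use Assumption~\ref{ass:10}(i) for the mean-zero structure of the weight $e_{1,it}f_{it}(0)$. Your explanation of why part~(ii) only achieves $O_p(\eta_N)$ is also correct.

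There is, however, a gap in your treatment of part~(i). After the decomposition you write down, the ``remaining summands'' are \emph{not} all products of two $O_p(\eta_N)$ factors. Two of them---namely
\[
\mu_{1,it}\bigl(\dot u_{i,1}^{(1)}-O_1^{(1)}u_{i,1}^0\bigr)^{\prime}O_1^{(1)}v_{t,1}^0
\quad\text{and}\quad
\mu_{1,it}\bigl(O_1^{(1)}u_{i,1}^0\bigr)^{\prime}\bigl(\dot v_{t,1}^{(1)}-O_1^{(1)}v_{t,1}^0\bigr)
\]
carry only a \emph{single} $\eta_N$ factor, so your blunt product bound would give only $O_p(\eta_N)$, not $O_p(\eta_N^2)$. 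The paper handles these separately. For the first, one pulls out $\dot\Delta_{i,1}=O_1^{(1)\prime}\dot u_{i,1}^{(1)}-u_{i,1}^0$ (which does not depend on $t$) and applies conditional Bernstein (Lemma~\ref{Lem:Bern}(i), conditioning on $\mathscr{D}$) to $\frac{1}{T}\sum_{t} f_{it}(0)e_{1,it}\mu_{1,it}v_{t,1}^0 v_{t,1}^{0\prime}$, which is mean zero by Assumption~\ref{ass:10}(i) since $\mu_{1,it}$ is $\mathscr{D}$-measurable; this supplies the second $O_p(\sqrt{\log(N\vee T)/T}\,\xi_N)$ factor. For the second, one conditions on $\mathscr{D}^{I_1\cup I_2}$ so that $\dot\Delta_{t,1}$ becomes fixed, and again uses the mean-zero of $e_{1,it}f_{it}(0)$ plus Bernstein. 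Both steps use exactly the machinery you already invoke, so the repair is routine---but as written your proposal skips over these two terms, and without them part~(i) would only deliver $O_p(\eta_N)$.
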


\begin{proof}
(i) Recall from (\ref{Lem21.5}), we have 
\begin{align*}
\hat{\mu}_{1,it}\dot{u}_{i,1}^{(1)\prime}\dot{v}_{t,1}^{(1)}-%
\mu_{1,it}u_{i,1}^{0\prime}v_{t,1}^{0}& =\left(\hat{\mu}_{1,it}-\mu_{1,it}%
\right) u_{i,j}^{0\prime}v_{t,j}^{0}+\mu_{1,it}\left(\dot{u}%
_{i,1}^{(1)}-O_{1}^{(1)}u_{i,1}^{0}\right) O_{1}^{(1)}v_{t,1}^{0} \\
& +\mu_{1,it}\left(O_{1}^{(1)}u_{i,1}^{0}\right) ^{\prime}\left(\dot{v}%
_{t,1}^{(1)}-O_{1}^{(1)}v_{t,1}^{0}\right) +O_{p}\left(\eta_{N}^{2}\right)
\quad \text{uniformly over $i\in I_{3},t\in[T]$}.
\end{align*}%
Then 
\begin{align}
& \frac{1}{T}\sum_{t=1}^{T}e_{1,it}O_{1}^{(1)}v_{t,1}^{0}f_{it}(0)\left(%
\mu_{1,it}u_{i,1}^{0\prime}v_{t,1}^{0}-\hat{\mu}_{1,it}\dot{u}%
_{i,1}^{(1)\prime}\dot{v}_{t,1}^{(1)}\right)  \notag  \label{Lem24.1} \\
& =\frac{1}{T}%
\sum_{t=1}^{T}e_{1,it}O_{1}^{(1)}v_{t,1}^{0}f_{it}(0)u_{i,j}^{0%
\prime}v_{t,j}^{0}\left(\hat{\mu}_{1,it}-\mu_{1,it}\right)  \notag \\
& +\frac{1}{T}\sum_{t=1}^{T}e_{1,it}O_{1}^{(1)}v_{t,1}^{0}f_{it}(0)%
\mu_{1,it}\left(\dot{u}_{i,1}^{(1)}-O_{1}^{(1)}u_{i,1}^{0}\right)
^{\prime}O_{1}^{(1)}v_{t,1}^{0}  \notag \\
& +\frac{1}{T}\sum_{t=1}^{T}e_{1,it}O_{1}^{(1)}v_{t,1}^{0}f_{it}(0)%
\mu_{1,it}\left(O_{1}^{(1)}u_{i,1}^{0}\right) ^{\prime}\left(\dot{v}%
_{t,1}^{(1)}-O_{1}^{(1)}v_{t,1}^{0}\right) +O_{p}\left(\eta_{N}^{2}\right) .
\end{align}%
First, note that $\mathbb{E}\left(e_{1,it}v_{t,1}^{0}f_{it}(0)\mu_{1,it}%
\left(O_{1}^{(1)}u_{i,1}^{0}\right) ^{\prime}\left(\dot{v}%
_{t,1}^{(1)}-O_{1}^{(1)}v_{t,1}^{0}\right) \bigg|\mathscr{D}^{I_{1}\cup
I_{2}}\right) =0$ by Assumption \ref{ass:10}(i). Following similar arguments
used in the proof of Lemma \ref{Lem20}(i), we can show that the third term
on the RHS of (\ref{Lem24.1}) is $O_{p}\left(\eta_{N}^{2}\right) $. By
analogous arguments in (\ref{Lem22.5}) with the fact that $\hat{\mu}%
_{1,it}-\mu_{1,it}=e_{1,it}-\hat{e}_{1,it}$, we obtain that the first term
on the RHS of (\ref{Lem24.1}) is $O_{p}\left(\eta_{N}^{2}\right) $. In
addition 
\begin{align}
& \max_{i\in I_{3}}\left\Vert \frac{1}{T}%
\sum_{t=1}^{T}e_{1,it}v_{t,1}^{0}f_{it}(0)\mu_{1,it}\left(\dot{u}%
_{i,1}^{(1)}-O_{1}^{(1)}u_{i,1}^{0}\right)
^{\prime}O_{1}^{(1)}v_{t,1}^{0}\right\Vert_{2}  \notag  \label{Lem24.2} \\
& \leq \max_{i\in I_{3}}||\dot{u}_{i,1}^{(1)}-O_{1}^{(1)}u_{i,1}^{0}||_{2}%
\max_{i\in I_{3}}\left\Vert \frac{1}{T}\sum_{t=1}^{T}f_{it}(0)e_{1,it}%
\mu_{1,it}v_{t,1}^{0}v_{t,1}^{0\prime}\right\Vert_{F}=O_{p}\left(%
\eta_{N}^{2}\right) ,
\end{align}%
where the equality holds by the fact that 
\begin{equation*}
\mathbb{P}\left(\max_{i\in I_{3}}\left\Vert \frac{1}{T}%
\sum_{t=1}^{T}f_{it}(0)e_{1,it}\mu_{1,it}v_{t,1}^{0}v_{t,1}^{0\prime}\right%
\Vert_{F}>c_{25}\sqrt{\frac{\log (N\vee T)}{T}}\xi_{N}\bigg|\mathscr{D}%
\right) =o(1)
\end{equation*}
as in (\ref{Lem:F2}). Noted that Assumption \ref{ass:1}(iv) implies factor
and factor loading of $X_{j,it}$ is uniformly bounded for $\forall j\in[p]$,
which indicates that $\mu_{j,it}$ is also uniformly bounded a.s.. This
completes the proof statement (i).

(ii). Note that uniformly over $i\in I_{3}$, we have%
\begin{align}
& \frac{1}{T}\sum_{t=1}^{T}O_{0}^{(1)}v_{t,0}^{0}f_{it}(0)\left(%
\mu_{1,it}u_{i,1}^{0\prime}v_{t,1}^{0}-\hat{\mu}_{1,it}\dot{u}%
_{i,1}^{(1)\prime}\dot{v}_{t,1}^{(1)}\right)  \notag  \label{Lem24.3} \\
& =\frac{1}{T}\sum_{t=1}^{T}O_{0}^{(1)}v_{t,0}^{0}f_{it}(0)\left(\hat{\mu}%
_{1,it}-\mu_{1,it}\right) u_{i,j}^{0\prime}v_{t,j}^{0}+\frac{1}{T}%
\sum_{t=1}^{T}O_{0}^{(1)}v_{t,0}^{0}f_{it}\mu_{1,it}\left(\dot{u}%
_{i,1}^{(1)}-O_{1}^{(1)}u_{i,1}^{0}\right) O_{1}^{(1)}v_{t,1}^{0}  \notag \\
& +\frac{1}{T}\sum_{t=1}^{T}O_{0}^{(1)}v_{t,0}^{0}f_{it}\mu_{1,it}%
\left(O_{1}^{(1)}u_{i,1}^{0}\right) ^{\prime}\left(\dot{v}%
_{t,1}^{(1)}-O_{1}^{(1)}v_{t,1}^{0}\right) +O_{p}\left(\eta_{N}^{2}\right) 
\notag \\
& =O_{p}(\eta_{N}).
\end{align}
This term will remain as the bias term in the linear expansion of $\hat{u}%
_{i,0}^{(3,1)}$.
\end{proof}

\begin{lemma}
{\small \label{Lem25} } Under Assumptions \ref{ass:1}-\ref{ass:10}, we have

\begin{itemize}
\item[(i)] $\max_{i\in I_{3}}\Bigg\Vert\frac{1}{T}%
\sum_{t=1}^{T}e_{1,it}O_{1}^{(1)}v_{t,1}^{0}\Bigg\{\left[\mathbf{1}%
\left\{\epsilon_{it}\leq 0\right\}-\mathbf{1}\left\{\epsilon_{it}\leq%
\iota_{it}\left(O_{0}^{(1)}u_{i,0}^{0},O_{1}^{(1)}u_{i,1}^{0},\dot{u}%
_{i,1}^{(1)}\right)\right\}\right]\newline
-\left(F_{it}(0)-F_{it}\left[ \iota_{it}%
\left(O_{0}^{(1)}u_{i,0}^{0},O_{1}^{(1)}u_{i,1}^{0},\dot{u}%
_{i,1}^{(1)}\right) \right]\right)\Bigg\}\Bigg\Vert_{2}=o_{p}\left(\left(N%
\vee T\right)^{-\frac{1}{2}}\right)$,

\item[(ii)] $\max_{i\in I_{3}}\Bigg\Vert\frac{1}{T}%
\sum_{t=1}^{T}O_{0}^{(1)}v_{t,0}^{0}\Bigg\{\left[\mathbf{1}%
\left\{\epsilon_{it}\leq 0\right\}-\mathbf{1}\left\{\epsilon_{it}\leq%
\iota_{it}\left(O_{0}^{(1)}u_{i,0}^{0},O_{1}^{(1)}u_{i,1}^{0},\dot{u}%
_{i,1}^{(1)}\right)\right\}\right]\newline
-\left(F_{it}(0)-F_{it}\left[ \iota_{it}%
\left(O_{0}^{(1)}u_{i,0}^{0},O_{1}^{(1)}u_{i,1}^{0},\dot{u}%
_{i,1}^{(1)}\right) \right]\right)\Bigg\}\Bigg\Vert_{2}=o_{p}\left(\left(N%
\vee T\right)^{-\frac{1}{2}}\right)$.
\end{itemize}
\end{lemma}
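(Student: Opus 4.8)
\textbf{Proof proposal for Lemma \ref{Lem25}.} The plan is to treat both statements (i) and (ii) simultaneously, since the only difference is whether the ``weight vector'' is $e_{1,it}O_{1}^{(1)}v_{t,1}^{0}$ or $O_{0}^{(1)}v_{t,0}^{0}$; both are bounded a.s.\ by $C\xi_N$ in $\ell_2$-norm (using Lemma \ref{Lem:bounded u&v_tilde}(i) and Assumption \ref{ass:1}(v)). Denote the generic weight by $\omega_{it}^{0}$ and abbreviate $\dot{\iota}_{it}:=\iota_{it}\left(O_{0}^{(1)}u_{i,0}^{0},O_{1}^{(1)}u_{i,1}^{0},\dot{u}_{i,1}^{(1)}\right)$. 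The object to bound is the empirical-process remainder
\begin{equation*}
\mathbb{W}_{i}:=\frac{1}{T}\sum_{t=1}^{T}\omega_{it}^{0}\Bigl\{\bigl[\mathbf{1}\{\epsilon_{it}\leq 0\}-\mathbf{1}\{\epsilon_{it}\leq \dot{\iota}_{it}\}\bigr]-\bigl[F_{it}(0)-F_{it}(\dot{\iota}_{it})\bigr]\Bigr\},
\end{equation*}
which is exactly of the form handled in Step 5 of the proof of Theorem \ref{Thm3}(i) (the term $\hat{\mathbb{W}}_i^{II}$ there). First I would reduce $\dot{\iota}_{it}$ to its controlled form: by \eqref{Lem21.4}--\eqref{Lem21.6} (or equivalently \eqref{step4_2}) we have $|\dot{\iota}_{it}|\leq R^1_{\iota,it}(|\mu_{1,it}|+|e_{1,it}|)+R^2_{\iota,it}$ with $\max_{i\in I_3,t\in[T]}|R^1_{\iota,it}|,\max_{i\in I_3,t\in[T]}|R^2_{\iota,it}|=O_p(\eta_N)$, so on a high-probability event $\mathscr{A}_9(M)$ we can replace the random argument $\dot{\iota}_{it}$ by a supremum over a shrinking ball $\mathbb{S}=\{s:\|s\|_2\le C\eta_N\}$ of the function $g_{it}(s)=s'\dot{\Phi}_{it}^{(1)}+\dot{\Delta}_{t,v}'\Psi_{it}^0$ — this is precisely the chaining-over-$\mathbb{S}$ device used in \eqref{Wdot_1}.

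Next I would run the same discretization-plus-Bernstein argument as in \eqref{Wdot_2}--\eqref{Wdot_4}: cover $\mathbb{S}$ by $n_{\mathbb{S}}\lesssim T^{\sum_j K_j}$ cells of radius $\varepsilon/T$, bound the oscillation of $\overline{\mathbb{W}}_i(\cdot)$ across each cell by a term of order $O_p(\varepsilon/T)$ plus an indicator-difference term $\overline{\mathbb{W}}_i^{III}(m)$, and control the centered sums pointwise in $m$ by the conditional Bernstein inequality (Lemma \ref{Lem:Bern}(ii)) given $\mathscr{D}_{e_i}^{I_1\cup I_2}$. The key variance inputs are the conditional variance and covariance bounds: $\mathrm{Var}(\omega_{it}^0[\cdots])\lesssim \xi_N^2\cdot|\dot\iota_{it}|\lesssim \xi_N^3\eta_N$ (using that $F_{it}$ has bounded density, Assumption \ref{ass:1}(viii)) and, via Davydov's inequality under conditional strong mixing (Assumption \ref{ass:1}(iii)), $\sum_{s>t}\|\mathrm{Cov}\|\lesssim \xi_N^{(6+2\vartheta)/(2+\vartheta)}\eta_N^{2/(2+\vartheta)}$. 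Feeding these into the Bernstein bound exactly as in \eqref{Wdot_exp2} gives, on the good events, a rate of $O_p\bigl(\xi_N^{(5+\vartheta)/(2+\vartheta)}(\tfrac{\log(N\vee T)}{N\wedge T})^{1/(4+2\vartheta)}\sqrt{\tfrac{\log(N\vee T)}{T}}\bigr)$, which by Assumption \ref{ass:1}(ix) is $o_p((N\vee T)^{-1/2})$. Because the events $\mathscr{A}_9(M)$, $\mathscr{A}_5(C_{12})$, $\mathscr{A}_8$ etc.\ all have probability at least $1-e$ for $e$ arbitrary, the $o_p$ conclusion passes through.

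The one genuinely new feature relative to Step 5 of Theorem \ref{Thm3}(i) — and what I expect to be the main obstacle — is that here the weight vector contains the factor $e_{1,it}$ (in statement (i)), and $F_{it}$, the CDF of $\epsilon_{it}$ given $\mathscr{D}_e$, is itself a function of $\{e_{j,it}\}$. So I must be careful that $\omega_{it}^0=e_{1,it}O_1^{(1)}v_{t,1}^0$ and the argument $\dot\iota_{it}$ are both measurable with respect to $\mathscr{D}_e^{I_1\cup I_2}$ (which they are, since $\dot{v}^{(1)}_{t,j}$, $\dot{u}^{(1)}_{i,j}$, $\hat\mu_{1,it}$, $\hat e_{1,it}$ depend only on subsamples $I_1,I_2$ and on $\{e_{j,it}\}_{i\in[N],t\in[T]}$), so that conditionally on $\mathscr{D}_e^{I_1\cup I_2}$ the only randomness is in $\{\epsilon_{it}\}_{t\in[T]}$ which remains conditionally strong mixing. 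This mirrors the dependence-handling remark in Remark 3 and the use of $\mathscr{D}_{e}^{I_1\cup I_2}$ rather than $\mathscr{D}_{e_i}^{I_1\cup I_2}$; the extra $e_{1,it}$ merely inflates the a.s.\ bound and the variance proxy by powers of $\xi_N$, which Assumption \ref{ass:1}(ix) absorbs. Once this measurability bookkeeping is in place, statements (i) and (ii) follow verbatim from the Step-5 argument, so I would simply write ``by the same argument as in Step 5 of the proof of Theorem \ref{Thm3}(i), mutatis mutandis'' after recording the variance estimates. $\blacksquare$
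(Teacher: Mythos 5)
Your high-level approach—chaining over a shrinking ball, conditional Bernstein inequality given $\mathscr{D}_e^{I_1\cup I_2}$, and the correct observation that only $\{\epsilon_{it}\}_{t\in[T]}$ remains random after conditioning—is the paper's strategy, and you are right that conditioning on $\mathscr{D}_e^{I_1\cup I_2}$ (rather than $\mathscr{D}_{e_i}^{I_1\cup I_2}$) is the crucial bookkeeping that absorbs the $\hat\mu_{1,it},\hat e_{1,it}$ and $e_{1,it}$ factors. However, there are two substantive problems with the details.

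First, your chaining parameterization is wrong: you write that the indicator argument can be represented as $g_{it}(s)=s'\dot\Phi_{it}^{(1)}+\dot\Delta_{t,v}'\Psi_{it}^0$ for $s$ in a ball of radius $O(\eta_N)$, but this is the form from the proof of Theorem \ref{Thm2}(iii), not the form of $\dot\iota_{it}:=\iota_{it}(O_0^{(1)}u_{i,0}^0,O_1^{(1)}u_{i,1}^0,\dot u_{i,1}^{(1)})$. Expanding $\dot\iota_{it}$ via \eqref{C.2} and \eqref{Lem25.3}, the second-stage estimation error $h_i^I\approx O_1^{(1)\prime}\dot u_{i,1}^{(1)}-u_{i,1}^0$ enters multiplied by $\mu_{1,it}v_{t,1}^{0}$, not by $X_{1,it}\dot v_{t,1}^{(1)}$; the remaining leading term $h_{it}^{II}(\dot\Delta_{t,v})=\Psi_{it}^{0\prime}\dot\Delta_{t,v}$ is $\mathscr{D}_e^{I_1\cup I_2}$-measurable and needs no chaining. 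So the function you must chain over is $\xi\mapsto\mu_{1,it}v_{t,1}^{0\prime}\xi+h_{it}^{II}(\dot\Delta_{t,v})$ with $\xi$ in a $K_1$-dimensional (or $K_0+K_1$-dimensional) ball playing the role of $h_i^I$; your $s'\dot\Phi_{it}^{(1)}$ introduces a spurious $X_{1,it}$ factor and does not actually parameterize $\dot\iota_{it}$.

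Second, and more seriously, $\dot\iota_{it}$ does \emph{not} admit an exact finite-dimensional parameterization: after extracting the $\xi$-linear part and the measurable part, a remainder $\mathcal{R}_{\iota,it}=o_p((N\vee T)^{-1/2})$ (equivalently $O_p(\eta_N^2)$ on a high-probability event) is left over, and this remainder varies over $(i,t)$ in a way that is neither linear in a chaining parameter nor $\mathscr{D}_e^{I_1\cup I_2}$-measurable. Your proposal absorbs the whole of $\dot\iota_{it}$ into the ball $\mathbb{S}$ and runs a single chaining argument, but chaining cannot cover a perturbation that is not itself parameterized by the points in $\mathbb{S}$; without handling $\mathcal{R}_{\iota,it}$ the argument is incomplete. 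The paper splits the expression into $\hat{\mathbb{I}}_{3,it}^{I}$ (indicators evaluated at $\mu_{1,it}v_{t,1}^{0\prime}h_i^I+h_{it}^{II}(\dot\Delta_{t,v})$, handled by chaining exactly as you describe) plus a remainder $\hat{\mathbb{I}}_{3,it}^{II}$, the latter controlled by a thin-band-indicator argument of width $\max_{i,t}|\mathcal{R}_{\iota,it}|=O_p(\eta_N^2)$ combined with chaining over the $\xi$-dependent band center. This second block, which ultimately invokes Assumption \ref{ass:1}(ix) to conclude $O_p(\eta_N^2)=o_p((N\vee T)^{-1/2})$, is essential and is not present in your proposal.

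Finally, a circularity note: \eqref{step4_2} is established \emph{from} \eqref{Lem25.3} inside the proof of Theorem \ref{Thm3}(i) and thus should not be cited as an input to Lemma \ref{Lem25} (your parenthetical ``or equivalently \eqref{step4_2}''); and Step 5 of Theorem \ref{Thm3}(i) invokes Lemmas \ref{Lem25}--\ref{Lem26} rather than proving them, so ``by the same argument as in Step 5'' is not a self-contained justification—your true anchor is the $\dot{\mathbb{W}}_i^{II}$ treatment in Theorem \ref{Thm2}(iii) (your references to \eqref{Wdot_1}--\eqref{Wdot_4}), suitably adapted.
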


\begin{proof}
(i) We still assume $K_{1}=1$ for notation simplicity. Let $%
O^{(1)}=diag\left(O_{0}^{(1)},O_{1}^{(1)}\right) $. Recall from Theorem \ref%
{Thm2}(iii) that with 
\begin{align*}
& D_{i}^{I}=O^{(1)}\frac{1}{T}\sum_{t=1}^{T}f_{it}(0) 
\begin{bmatrix}
v_{t,0}^{0}v_{t,0}^{\prime} &  & v_{t,0}^{0}v_{t,1}^{0\prime}X_{1,it} \\ 
v_{t,1}^{0}v_{t,0}^{0\prime}X_{1,it} &  & v_{t,1}^{0}v_{t,1}^{0%
\prime}X_{1,it}^{2}%
\end{bmatrix}%
O^{(1)\prime},\quad D_{i}^{II}=O^{(1)}\frac{1}{T}\sum_{t=1}^{T}\left[ \tau -%
\mathbf{1}\left\{ \epsilon_{it}\leq 0\right\} \right] 
\begin{bmatrix}
v_{t,0}^{0} \\ 
v_{t,1}^{0}X_{1,it}%
\end{bmatrix}
\\
& \mathbb{J}_{i}\left(\left\{ \dot{\Delta}_{t,v}\right\}_{t\in [T]}\right)
:=O^{(1)}\frac{1}{T}\sum_{t=1}^{T}\left[ \mathbf{1}\left\{\epsilon_{it}\leq
0\right\} -\mathbf{1}\left\{ \epsilon_{it}\leq \dot{\Delta}%
_{t,v}^{\prime}\Psi_{it}^{0}\right\} \right] 
\begin{bmatrix}
v_{t,0}^{0} \\ 
v_{t,1}^{0}X_{1,it}%
\end{bmatrix}%
,
\end{align*}%
uniformly over $i\in I_{3}$, we have 
\begin{align*}
\dot{\Delta}_{i,u}& =\left[ D_{i}^{I}\right] ^{-1}\left[ D_{i}^{II}+\mathbb{J%
}_{i}\left(\left\{ \dot{\Delta}_{t,v}\right\}_{t\in[T]}\right) \right]
+o_{p}\left(\left(N\vee T\right) ^{-\frac{1}{2}}\right) \\
& =\left(O^{(1)\prime}\right) ^{-1}\left(\frac{1}{T}\sum_{t=1}^{T}f_{it}(0)%
\begin{bmatrix}
v_{t,0}^{0}v_{t,0}^{\prime} &  & v_{t,0}^{0}v_{t,1}^{0\prime}X_{1,it} \\ 
v_{t,1}^{0}v_{t,0}^{0\prime}X_{1,it} &  & v_{t,1}^{0}v_{t,1}^{0%
\prime}X_{1,it}^{2}%
\end{bmatrix}%
\right) ^{-1}\Bigg[\frac{1}{T}\sum_{t=1}^{T}\left[ \tau -\mathbf{1}%
\left\{\epsilon_{it}\leq 0\right\} \right] 
\begin{bmatrix}
v_{t,0}^{0} \\ 
v_{t,1}^{0}X_{1,it}%
\end{bmatrix}
\\
& +\frac{1}{T}\sum_{t=1}^{T}\left[ \mathbf{1}\left\{ \epsilon_{it}\leq
0\right\} -\mathbf{1}\left\{ \epsilon_{it}\leq \dot{\Delta}%
_{t,v}^{\prime}\Psi_{it}^{0}\right\} \right] 
\begin{bmatrix}
v_{t,0}^{0} \\ 
v_{t,1}^{0}X_{1,it}%
\end{bmatrix}%
\Bigg]+o_{p}\left(\left(N\vee T\right) ^{-\frac{1}{2}}\right) \\
& :=h_{i}+o_{p}\left(\left(N\vee T\right) ^{-\frac{1}{2}}\right) .
\end{align*}%
Let $\mathfrak{l}=\left(\mathbf{0}_{K_{0}}^{\prime},\mathbf{1}%
_{K_{1}}^{\prime}\right) ^{\prime}$ with $\mathbf{0}_{K_{0}}$ being a $%
K_{0}\times 1$ vector of zeros and $\mathbf{1}_{K_{1}}$ a $K_{1}\times 1$
vector of ones. Let 
\begin{equation}
h_{i}^{I}=\mathfrak{l}^{\prime}h_{i}.  \label{h_I}
\end{equation}%
Then we have 
\begin{equation*}
O_{1}^{(1)\prime}\dot{u}_{i,1}^{(1)}-u_{i,1}^{0}=h_{i}^{I}+o_{p}\left(%
\left(N\vee T\right) ^{-\frac{1}{2}}\right) \quad \text{uniformly over $i\in
I_{3}$,}
\end{equation*}%
and $\max_{i\in I_{3}}||h_{i}^{I}||_{2}=O_{p}(\eta_{N}).$ 

Combining (\ref{Lem21.4})-(\ref{Lem21.6}), uniformly in $i\in I_{3}$ and $%
t\in \left[T\right]$ we have 
\begin{align}
& \iota_{it}\left(O_{0}^{(1)}u_{i,0}^{0},O_{1}^{(1)}u_{i,1}^{0},\dot{u}%
_{i,1}^{(1)}\right)  \notag  \label{Lem25.3} \\
& =\left[ \left(O_{0}^{(1)}u_{i,0}^{0}\right) ^{\prime}\left(\dot{v}%
_{t,0}^{(1)}-O_{0}^{(1)}v_{t,0}^{0}\right) \right] +\left(\hat{\mu}_{1,it}%
\dot{u}_{i,1}^{(1)\prime}\dot{v}_{t,1}^{(1)}-\mu_{1,it}u_{i,1}^{0%
\prime}v_{t,1}^{0}\right) +\left(\hat{e}_{1,it}u_{i,1}^{0\prime}O_{1}^{(1)}%
\dot{v}_{t,1}^{(1)}-e_{1,it}u_{i,1}^{0\prime}v_{t,1}^{0}\right)  \notag \\
& =\left(O_{0}^{(1)}u_{i,0}^{0}\right) ^{\prime}\left(\dot{v}%
_{t,0}^{(1)}-O_{0}^{(1)}v_{t,0}^{0}\right) +\left(\hat{\mu}%
_{1,it}^{(1)}-\mu_{1,it}\right) \left(O_{1}^{(1)}u_{i,1}^{0}\right)
^{\prime}O_{1}^{(1)}v_{t,1}^{0}  \notag \\
& +\mu_{1,it}\left(\dot{u}_{i,1}^{(1)}-O_{1}^{(1)}u_{i,1}^{0}\right)^{%
\prime}O_{1}^{(1)}v_{t,1}^{0}+\mu_{1,it}\left(O_{1}^{(1)}u_{i,1}^{0}\right)
^{\prime}\left(\dot{v}_{t,1}^{(1)}-O_{1}^{(1)}v_{t,1}^{0}\right) +(\hat{e}%
_{1,it}-e_{1,it})\left(O_{1}^{(1)}u_{i,1}^{0}\right)
^{\prime}O_{1}^{(1)}v_{t,1}^{0}  \notag \\
& +e_{1,it}\left(O_{1}^{(1)}u_{i,1}^{0}\right) ^{\prime}\left(\dot{v}%
_{t,1}^{(1)}-O_{1}^{(1)}v_{t,1}^{0}\right) +o_{p}\left(\left(N\vee T\right)
^{-\frac{1}{2}}\right)  \notag \\
& =\mu_{1,it}\left(\dot{u}_{i,1}^{(1)}-O_{1}^{(1)}u_{i,1}^{0}\right)^{%
\prime}O_{1}^{(1)}v_{t,1}^{0}+\left(O_{0}^{(1)}u_{i,0}^{0}\right)
^{\prime}\left(\dot{v}_{t,0}^{(1)}-O_{0}^{(1)}v_{t,0}^{0}\right)
+\left(O_{1}^{(1)}u_{i,1}^{0}\right) ^{\prime}\left(\dot{v}%
_{t,1}^{(1)}-O_{1}^{(1)}v_{t,1}^{0}\right) X_{1,it}  \notag \\
& +o_{p}\left(\left(N\vee T\right) ^{-\frac{1}{2}}\right)  \notag \\
& =\mu_{1,it}v_{t,1}^{0\prime}h_{i}^{I}+\left\{
\left(O_{0}^{(1)}u_{i,0}^{0}\right) ^{\prime}\left(\dot{v}%
_{t,0}^{(1)}-O_{0}^{(1)}v_{t,0}^{0}\right)
+\left(O_{1}^{(1)}u_{i,1}^{0}\right) ^{\prime}\left(\dot{v}%
_{t,1}^{(1)}-O_{1}^{(1)}v_{t,1}^{0}\right) X_{1,it}\right\} +\mathcal{R}%
_{\iota,it}  \notag \\
& :=\mu_{1,it}v_{t,1}^{0\prime}h_{i}^{I}+h_{it}^{II}\left(\dot{\Delta}%
_{t,v}\right) +\mathcal{R}_{\iota,it} \\
& :=R_{\iota ,it}^{1}\left(\left\vert \mu_{1,it}\right\vert +\left\vert
e_{1,it}\right\vert \right) +R_{\iota ,it}^{2}  \notag
\end{align}%
such that $\max_{i\in I_{3},t\in[T]}\left\vert \mathcal{R}%
_{\iota,it}\right\vert =o_{p}\left(\left(N\vee T\right) ^{-\frac{1}{2}%
}\right)$, $\max_{i\in I_{3},t\in[T]}\left\vert R_{\iota ,it}^{1}\right\vert
=O_{p}(\eta_{N})$ and $\max_{i\in I_{3},t\in[T]}\left\vert
R_{\iota,it}^{2}\right\vert =O_{p}(\eta_{N})$, where we use the fact that $%
\hat{\mu}_{1,it}-\mu_{1,it}+\hat{e}_{1,it}-e_{1,it}=0$, and $%
h_{it}^{II}\left(\dot{\Delta}_{t,v}\right) =\Psi_{it}^{0\prime}\dot{\Delta}%
_{t,v}$ with $\Psi_{it}=\left(O_{0}^{(1)}u_{i,0}^{0})^{%
\prime},(O_{1}^{(1)}u_{i,1}^{0}X_{1,it})^{\prime}\right) ^{\prime}$. Let 
\begin{align*}
\hat{\mathbb{I}}_{3,it}^{I}&=e_{1,it}v_{t,1}^{0}\Bigg\{\left[ \mathbf{1}%
\left\{ \epsilon_{it}\leq 0\right\} -\mathbf{1}\left\{
\epsilon_{it}\leq\mu_{1,it}v_{t,1}^{0\prime}h_{i}^{I}+h_{it}^{II}\left(\dot{%
\Delta}_{t,v}\right) \right\} \right] -\left(F_{it}(0)-F_{it}\left[
\mu_{1,it}v_{t,1}^{0\prime}h_{i}^{I}+h_{it}^{II}\left(\dot{\Delta}%
_{t,v}\right) \right] \right) \Bigg\}, \\
\hat{\mathbb{I}}_{3,it}^{II}& =e_{1,it}v_{t,1}^{0}\Bigg\{\left[ \mathbf{1}%
\left\{ \epsilon_{it}\leq
\mu_{1,it}v_{t,1}^{0\prime}h_{i}^{I}+h_{it}^{II}\left(\dot{\Delta}%
_{t,v}\right) +\mathcal{R}_{\iota,it}\right\} -\mathbf{1}\left\{
\epsilon_{it}\leq \mu_{1,it}v_{t,1}^{0\prime}h_{i}^{I}+h_{it}^{II}\left(\dot{%
\Delta}_{t,v}\right) \right\} \right] \\
& -\left(F_{it}\left[ \mu_{1,it}v_{t,1}^{0\prime}h_{i}^{I}+h_{it}^{II}\left(%
\dot{\Delta}_{t,v}\right) +\mathcal{R}_{\iota,it}\right] -F_{it}\left[
\mu_{1,it}v_{t,1}^{0\prime}h_{i}^{I}+h_{it}^{II}\left(\dot{\Delta}%
_{t,v}\right) \right] \right) \Bigg\}.
\end{align*}%
We first show that $\max_{i\in I_{3}}\Bigg\Vert\frac{1}{T}\sum_{t=1}^{T}\hat{%
\mathbb{I}}_{3,it}^{I}\Bigg\Vert_{2}=O_{p}\left(\eta_{N}^{2}\right) $.

For some sufficiently large constant $M$, define event $\mathscr{A}%
_{11}(M)=\left\{ \max_{i\in I_{3}}\left\Vert h_{i}^{I}\right\Vert_{2}\leq
M\eta_{N}\right\} $. We have $\mathbb{P}\left(\mathscr{A}_{11}^{c}(M)\right)
\leq e$ for any $e>0$. For some positive large enough constant $c_{26}$, we
have 
\begin{align}
& \mathbb{P}\left(\max_{i\in I_{3}}\Bigg\Vert\frac{1}{T}\sum_{t=1}^{T}\hat{%
\mathbb{I}}_{3,it}^{I}\Bigg\Vert_{2}>c_{26}\xi_{N}^{\frac{5+\vartheta }{%
2+\vartheta }}\left(\frac{\log (N\vee T)}{N\wedge T}\right) ^{\frac{1}{%
4+2\vartheta }}\sqrt{\frac{\log (N\vee T)}{T}}\right)  \notag  \label{I3_1}
\\
& \leq \mathbb{P}\left(\max_{i\in I_{3}}\Bigg\Vert\frac{1}{T}\sum_{t=1}^{T}%
\hat{\mathbb{I}}_{3,it}^{I}\Bigg\Vert_{2}>c_{26}\xi_{N}^{\frac{5+\vartheta}{%
2+\vartheta }}\left(\frac{\log (N\vee T)}{N\wedge T}\right) ^{\frac{1}{%
4+2\vartheta }}\sqrt{\frac{\log (N\vee T)}{T}},\mathscr{A}_{11}(M)\right) +e
\notag \\
& \leq \mathbb{P}\left(\max_{i\in I_{3}}\sup_{\xi \in \Xi }\Bigg\Vert\frac{1%
}{T}\sum_{t=1}^{T}\hat{\mathbb{I}}_{3,it}^{I}(\xi )\Bigg\Vert%
_{2}>c_{26}\xi_{N}^{\frac{5+\vartheta }{2+\vartheta }}\left(\frac{\log
(N\vee T)}{N\wedge T}\right) ^{\frac{1}{4+2\vartheta }}\sqrt{\frac{\log
(N\vee T)}{T}}\right)+e.
\end{align}%
with $\Xi ^{1}:=\left\{ \xi \in \mathbb{R}^{\left(K_{0}+K_{1}\right) \times
1}:\left\Vert \xi \right\Vert_{2}\leq M\eta_{N}\right\} $ and 
\begin{equation*}
\hat{\mathbb{I}}_{3,it}^{I}(\xi )=e_{1,it}v_{t,1}^{0}\Bigg\{\left[ \mathbf{1}%
\left\{ \epsilon_{it}\leq 0\right\} -\mathbf{1}\left\{
\epsilon_{it}\leq\mu_{1,it}v_{t,1}^{0\prime}\xi +h_{it}^{II}\left(\dot{\Delta%
}_{t,v}\right)\right\} \right] -\left[ F_{it}(0)-F_{it}\left(%
\mu_{1,it}v_{t,1}^{0\prime}\xi +h_{it}^{II}\left(\dot{\Delta}_{t,v}\right)
\right) \right] \Bigg\}.
\end{equation*}%
Divide $\Xi ^{1}$ into sub classes $\Xi_{s}^{1}$ for $s=1,...,n_{\Xi ^{1}}$
such that $\left\Vert \xi -\tilde{\xi}\right\Vert_{2}<\frac{\varepsilon }{T}$
for $\forall \xi ,\,\,\bar{\xi}\in \Xi_{s}^{1}$ and $n_{\Xi ^{1}}\asymp
T^{K_{0}+K_{1}}$. With analogous arguments from (\ref{Wdot_1})-(\ref{Wdot_4}%
), for $\forall \xi_{s}\in \Xi_{s}^{1}$, we have 
\begin{equation*}
\left\Vert \frac{1}{T}\sum_{t\in[T]}\hat{\mathbb{I}}_{3,it}^{I}(\xi)\right%
\Vert_{2}\leq \left\Vert \frac{1}{T}\sum_{t\in[T]}\hat{\mathbb{I}}%
_{3,it}^{I}(\xi_{s})\right\Vert_{2}+\left\Vert \frac{1}{T}\sum_{t\in[T]}%
\left[ \hat{\mathbb{I}}_{3,it}^{I}(\xi )-\hat{\mathbb{I}}_{3,it}^{I}(\xi_{s})%
\right] \right\Vert_{2}
\end{equation*}%
with 
\begin{align}
& \max_{i\in I_{3},s\in [n_{\Xi ^{1}}]}\sup_{\xi \in \Xi_{s}}\left\Vert 
\frac{1}{T}\sum_{t\in[T]}\left[ \hat{\mathbb{I}}_{3,it}^{I}(\xi )-\hat{%
\mathbb{I}}_{3,it}^{I}(\xi_{s})\right] \right\Vert_{2} & &  \notag
\label{I3_3} \\
& \leq \max_{i\in I_{3}}\frac{1}{T}\sum_{t\in[T]}\mathbb{E}\left[\left\Vert
e_{1,it}v_{t,1}^{0}\right\Vert_{2}\mathbf{1}\left\{
\left\vert\epsilon_{it}-h_{it}^{II}\left(\dot{\Delta}_{t,v}\right)
\right\vert \leq\left\vert \mu_{1,it}\right\vert \left\Vert
v_{t,1}^{0\prime}\right\Vert_{2}\frac{\varepsilon }{T}\right\} \bigg|%
\mathscr{D}_{e}^{I_{1}\cup I_{2}}\right] & &  \notag \\
& +\max_{i\in I_{3}}\left\Vert \hat{\mathbb{I}}_{3,i}^{III}\right\Vert_{2}+%
\max_{i\in I_{3}}\frac{1}{T}\sum_{t\in[T]}\left\Vert
e_{1,it}v_{t,1}^{0}\right\Vert_{2}\left\vert \mu_{1,it}\right\vert\left\Vert
v_{t,1}^{0\prime}\right\Vert_{2}\frac{\varepsilon }{T} & &  \notag \\
& \leq \max_{i\in I_{3}}\left\Vert \hat{\mathbb{I}}_{3,i}^{III}\right%
\Vert_{2}+O\left(\frac{\varepsilon }{T}\right) ~a.s. & &
\end{align}%
where $\hat{\mathbb{I}}_{3,i}^{III}=\frac{1}{T}\sum_{t\in[T]}\hat{\mathbb{I}}%
_{3,it}^{III}$ and 
\begin{align*}
\hat{\mathbb{I}}_{3,it}^{III}& =\left\Vert e_{1,it}v_{t,1}^{0}\right\Vert_{2}%
\Bigg\{\mathbf{1}\left\{ \left\vert \epsilon_{it}-h_{it}^{II}\left(\dot{%
\Delta}_{t,v}\right) \right\vert \leq \left\vert
\mu_{1,it}\right\vert\left\Vert v_{t,1}^{0\prime}\right\Vert_{2}\frac{%
\varepsilon }{T}\right\} \\
& -\mathbb{E}\left[ \mathbf{1}\left\{ \left\vert
\epsilon_{it}-h_{it}^{II}\left(\dot{\Delta}_{t,v}\right) \right\vert
\leq\left\vert \mu_{1,it}\right\vert \left\Vert
v_{t,1}^{0\prime}\right\Vert_{2}\frac{\varepsilon }{T}\right\} \bigg|%
\mathscr{D}_{e}^{I_{1}\cup I_{2}}\right]\Bigg\}.
\end{align*}%
Similarly to (\ref{Wdot_4}), we can show that 
\begin{align}
& \mathbb{P}\left(\max_{i\in I_{3}}\left\Vert \hat{\mathbb{I}}%
_{3,it}^{III}\right\Vert_{2}>c_{26}\xi_{N}^{\frac{5+\vartheta }{2+\vartheta }%
}\left(\frac{\log (N\vee T)}{N\wedge T}\right) ^{\frac{1}{4+2\vartheta }}%
\sqrt{\frac{\log (N\vee T)}{T}}\right) =o(1),\quad \text{and}  \label{I3_5}
\\
& \mathbb{P}\left(\max_{i\in I_{3}}\max_{s\in [n_{\Xi^{1}}]}\left\Vert \hat{%
\mathbb{I}}_{3,it}^{I}\left(\xi_{s}\right)\right\Vert_{2}>c_{26}\xi_{N}^{%
\frac{5+\vartheta }{2+\vartheta }}\left(\frac{\log (N\vee T)}{N\wedge T}%
\right) ^{\frac{1}{4+2\vartheta }}\sqrt{\frac{\log (N\vee T)}{T}}\right)
=o(1).
\end{align}%
Combining (\ref{I3_1})-(\ref{I3_5}), we have 
\begin{equation*}
\max_{i\in I_{3}}\Bigg\Vert\frac{1}{T}\sum_{t=1}^{T}\hat{\mathbb{I}}%
_{3,it}^{I}\Bigg\Vert_{2}=o_{p}\left(\left(N\vee T\right) ^{-\frac{1}{2}%
}\right) .
\end{equation*}

Next, we notice that 
\begin{align}  \label{I3_6}
&\max_{i\in I_{3}}\Bigg\Vert\frac{1}{T}\sum_{t=1}^{T}\hat{\mathbb{I}}%
_{3,it}^{II}\Bigg\Vert_{2}  \notag \\
&\leq\max_{i\in I_{3}}\frac{1}{T}\sum_{t=1}^{T}\left\Vert
e_{1,it}v_{t,1}^{0}\right\Vert_{2}\mathbf{1}\left\{\left\vert%
\mu_{1,it}v_{t,1}^{0\prime}h_{i}^{I}+h_{it}^{II}\left(\dot{\Delta}%
_{t,v}\right)
\right\vert\leq\left\vert\epsilon_{it}\right\vert\leq\left\vert%
\mu_{1,it}v_{t,1}^{0\prime}h_{i}^{I}+h_{it}^{II}\left(\dot{\Delta}%
_{t,v}\right)\right\vert+\left\vert\mathcal{R}_{\iota,it}\right\vert \right\}
\notag \\
&+\max_{i\in I_{3}}\frac{1}{T}\sum_{t=1}^{T}\left\Vert
e_{1,it}v_{t,1}^{0}\right\Vert_{2}\left\vert F_{it}\left[%
\mu_{1,it}v_{t,1}^{0\prime}h_{i}^{I}+h_{it}^{II}\left(\dot{\Delta}%
_{t,v}\right)+\mathcal{R}_{\iota,it} \right]-F_{it}\left[%
\mu_{1,it}v_{t,1}^{0\prime}h_{i}^{I}+h_{it}^{II}\left(\dot{\Delta}%
_{t,v}\right) \right]\right\vert  \notag \\
&\leq\max_{i\in I_{3}}\frac{1}{T}\sum_{t=1}^{T}\left\Vert
e_{1,it}v_{t,1}^{0}\right\Vert_{2}\mathbf{1}\left\{\left\vert%
\mu_{1,it}v_{t,1}^{0\prime}h_{i}^{I}+h_{it}^{II}\left(\dot{\Delta}%
_{t,v}\right)
\right\vert\leq\left\vert\epsilon_{it}\right\vert\leq\left\vert%
\mu_{1,it}v_{t,1}^{0\prime}h_{i}^{I}+h_{it}^{II}\left(\dot{\Delta}%
_{t,v}\right)\right\vert+\left\vert\mathcal{R}_{\iota,it}\right\vert \right\}
\notag \\
&+\max_{i\in I_{3}}\frac{1}{T}\sum_{t=1}^{T}\left\Vert
e_{1,it}v_{t,1}^{0}\right\Vert_{2}\left\vert\mathcal{R}_{\iota,it}\right\vert
\notag \\
&=\max_{i\in I_{3}}\frac{1}{T}\sum_{t=1}^{T}\hat{\mathbb{I}}%
_{3,it}^{IV}+o_{p}\left(\left(N\vee T\right)^{-\frac{1}{2}}\right),
\end{align}
where the first inequality is by triangle inequality, the second inequality
is by mean value theorem and the last line is by the uniform convergence
rate for $\mathcal{R}_{\iota,it}$ with 
\begin{equation*}
\hat{\mathbb{I}}_{3,it}^{IV}=\left\Vert e_{1,it}v_{t,1}^{0}\right\Vert_{2}%
\mathbf{1}\left\{\left\vert\mu_{1,it}v_{t,1}^{0\prime}h_{i}^{I}+h_{it}^{II}%
\left(\dot{\Delta}_{t,v}\right)
\right\vert\leq\left\vert\epsilon_{it}\right\vert\leq\left\vert%
\mu_{1,it}v_{t,1}^{0\prime}h_{i}^{I}+h_{it}^{II}\left(\dot{\Delta}%
_{t,v}\right)\right\vert+\left\vert\mathcal{R}_{\iota,it}\right\vert
\right\}.
\end{equation*}

Define the event $\mathscr{A}_{12}(M):=\left\{ \max_{i\in I_{3},t\in
[T]}\left\vert \mathcal{R}_{\iota,it}\right\vert \leq M\eta_{N}^{2}\right\} $
with $\mathbb{P}\left\{ \mathscr{A}_{12}^{c}(M)\right\}\leq e$ for any $e>0$%
. Then for a large enough constant $c_{26}$, similarly as (\ref{step4_4}),
we have 
\begin{align}
& \mathbb{P}\left(\max_{i\in I_{3}}\left\Vert \frac{1}{T}\sum_{t=1}^{T}\hat{%
\mathbb{I}}_{3,it}^{IV}\right\Vert_{2}>c_{26}\eta_{N}^{2}\right) \leq\mathbb{%
P}\left(\max_{i\in I_{3}}\left\Vert \frac{1}{T}\sum_{t=1}^{T}\hat{\mathbb{I}}%
_{3,it}^{IV}\right\Vert_{2}>c_{26}\eta_{N}^{2},\mathscr{A}_{12}(M)\right) +e
\notag  \label{I3_7} \\
& \leq \mathbb{P}\left(\max_{i\in I_{3}}\frac{1}{T}\sum_{t=1}^{T}\tilde{%
\mathbf{1}}_{it}(h_{i}^{I})>c_{26}\eta_{N}^{2}\right) +e  \notag \\
& \leq \mathbb{P}\left(\sup_{i\in I_{3},\xi \in \Xi ^{1}}\frac{1}{T}%
\sum_{t=1}^{T}\tilde{\mathbf{1}}_{it}(\xi )>c_{26}\eta_{N}^{2}\right) +2e 
\notag \\
& \leq \mathbb{P}\left(\max_{i\in I_{3}}\sup_{\xi \in \Xi ^{1}}\left\vert 
\frac{1}{T}\sum_{t=1}^{T}\tilde{\mathbf{1}}_{it}(\xi )-\bar{\tilde{\mathbf{1}%
}}_{it}(\xi )\right\vert >\frac{c_{26}\eta_{N}^{2}}{2}\right) +\mathbb{P}%
\left(\max_{i\in I_{3}}\sup_{\xi \in \Xi ^{1}}\left\vert \frac{1}{T}%
\sum_{t=1}^{T}\bar{\tilde{\mathbf{1}}}_{it}(\xi )\right\vert >\frac{%
c_{26}\eta_{N}^{2}}{2}\right) +2e  \notag \\
& =\mathbb{E}\left[ \mathbb{P}\left(\max_{i\in I_{3}}\sup_{\xi \in
\Xi^{1}}\left\vert \frac{1}{T}\sum_{t=1}^{T}\tilde{\mathbf{1}}_{it}(\xi )-%
\bar{\tilde{\mathbf{1}}}_{it}(\xi )\right\vert >\frac{c_{26}\eta_{N}^{2}}{2}%
\bigg|\mathscr{D}_{e}^{I_{1}\cup I_{2}}\right) \right] +\mathbb{P}%
\left(\max_{i\in I_{3}}\sup_{\xi \in \Xi ^{1}}\left\vert \frac{1}{T}%
\sum_{t=1}^{T}\bar{\tilde{\mathbf{1}}}_{it}(\xi )\right\vert >\frac{%
c_{26}\eta_{N}^{2}}{2}\right) +2e
\end{align}%
where $\tilde{\mathbf{1}}_{it}(h_{i}^{I}):=\left\Vert
e_{1,it}v_{t,1}^{0}\right\Vert_{2}\mathbf{1}\left\{ \left\vert
\mu_{1,it}v_{t,1}^{0\prime}h_{i}^{I}+h_{it}^{II}\left(\dot{\Delta}%
_{t,v}\right)\right\vert \leq \left\vert \epsilon_{it}\right\vert \leq
\left\vert \mu_{1,it}v_{t,1}^{0\prime}h_{i}^{I}+h_{it}^{II}\left(\dot{\Delta}%
_{t,v}\right)\right\vert +M\eta_{N}^{2}\right\} $ and $\bar{\tilde{\mathbf{1}%
}}_{it}(h_{i}^{I})=\mathbb{E}\left[ \tilde{\mathbf{1}}_{it}(h_{i}^{I})\bigg|%
\mathscr{D}_{e}^{I_{1}\cup I_{2}}\right]$. By analogous arguments for the
first term on the RHS of inequality (\ref{I3_1}), we can show that 
\begin{equation*}
\mathbb{E}\left[ \mathbb{P}\left(\max_{i\in I_{3}}\sup_{\xi \in
\Xi^{1}}\left\vert \frac{1}{T}\sum_{t=1}^{T}\tilde{\mathbf{1}}_{it}(\xi )-%
\bar{\tilde{\mathbf{1}}}_{it}(\xi )\right\vert >\frac{c_{26}\eta_{N}^{2}}{2}%
\bigg|\mathscr{D}_{e}^{I_{1}\cup I_{2}}\right) \right] =o(1).
\end{equation*}%
Besides, we observe that 
\begin{align*}
& \max_{i\in I_{3}}\sup_{\xi \in \Xi ^{1}}\left\vert \frac{1}{T}%
\sum_{t=1}^{T}\bar{\tilde{\mathbf{1}}}_{it}(\xi )\right\vert \\
& \leq \max_{i\in I_{3}}\sup_{\xi \in \Xi ^{1}}\frac{1}{T}%
\sum_{t=1}^{T}\left\Vert e_{1,it}v_{t,1}^{0}\right\Vert_{2}\left\vert F_{it}%
\left[ \left\vert \mu_{1,it}v_{t,1}^{0\prime}\xi +h_{it}^{II}\left(\dot{%
\Delta}_{t,v}\right) \right\vert +M\eta_{N}^{2}\right] -F_{it}\left[%
\left\vert \mu_{1,it}v_{t,1}^{0\prime}\xi +h_{it}^{II}\left(\dot{\Delta}%
_{t,v}\right) \right\vert \right] \right\vert \\
& \leq \max_{i\in I_{3}}\frac{1}{T}\sum_{t=1}^{T}\left\Vert
e_{1,it}v_{t,1}^{0}\right\Vert_{2}M\eta_{N}^{2},
\end{align*}%
which yields 
\begin{equation}
\mathbb{P}\left(\max_{i\in I_{3}}\sup_{\xi \in \Xi ^{1}}\left\vert \frac{1}{T%
}\sum_{t=1}^{T}\bar{\tilde{\mathbf{1}}}_{it}(\xi )\right\vert >\frac{%
c_{26}\eta_{N}^{2}}{2}\right) =o(1).  \label{I3_9}
\end{equation}

Combining (\ref{I3_6})-(\ref{I3_9}) yields $\max_{i\in I_{3}}\Bigg\Vert\frac{%
1}{T}\sum_{t=1}^{T}\hat{\mathbb{I}}_{3,it}^{II}\Bigg\Vert_{2}=O_{p}\left(%
\eta_{N}^{2}\right) =o_{p}\left(\left(N\vee T\right) ^{-\frac{1}{2}}\right) $
by Assumption \ref{ass:1}(ix). This concludes the proof of statement (i).

(ii) The proof is analogous to that of part (i) and thus omitted.
\end{proof}

\begin{lemma}
{\small \label{Lem26} } Under Assumptions \ref{ass:1}-\ref{ass:10},
uniformly in $i\in I_{3}$, we can show that

\begin{itemize}
\item[(i)] $\max_{i\in I_{3}}\Bigg\Vert\frac{1}{T}\operatornamewithlimits{%
\sum}\limits_{t=1}^{T}e_{1,it}O_{1}^{(1)}v_{t,1}^{0}\Bigg\{\mathbf{1}%
\left\{\epsilon_{it}\leq 0\right\}- \mathbf{1}\left\{\epsilon_{it}\leq%
\iota_{it}\left(\hat{u}_{i,0}^{(3,1)},\hat{u}_{i,1}^{(3,1)},\dot{u}%
_{i,1}^{(1)}\right)\right\}-\left(F_{it}(0)-F_{it}\left[\iota_{it}\left(\hat{%
u}_{i,0}^{(3,1)},\hat{u}_{i,1}^{(3,1)},\dot{u}_{i,1}^{(1)}\right)\right]%
\right) \Bigg\}\Bigg\Vert_{2}\newline
=o_{p}\left(\left(N\vee T\right)^{-\frac{1}{2}}\right)$,

\item[(ii)] $\max_{i\in I_{3}}\Bigg\Vert\frac{1}{T}\operatornamewithlimits{%
\sum}\limits_{t=1}^{T}v_{t,0}^{0}\Bigg\{\mathbf{1}\left\{\epsilon_{it}\leq
0\right\}- \mathbf{1}\left\{\epsilon_{it}\leq \iota_{it}\left(\hat{u}%
_{i,0}^{(3,1)},\hat{u}_{i,1}^{(3,1)},\dot{u}_{i,1}^{(1)}\right)\right\}-%
\left(F_{it}(0)-F_{it}\left[\iota_{it}\left(\hat{u}_{i,0}^{(3,1)},\hat{u}%
_{i,1}^{(3,1)},\dot{u}_{i,1}^{(1)}\right)\right]\right) \Bigg\}\Bigg\Vert_{2}%
\newline
=o_{p}\left(\left(N\vee T\right)^{-\frac{1}{2}}\right)$.
\end{itemize}
\end{lemma}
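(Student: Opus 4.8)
The plan is to run the proof of Lemma \ref{Lem25} almost verbatim, the only difference being that the shift inside the indicator (and inside $F_{it}$) is now $\iota_{it}\left(\hat{u}_{i,0}^{(3,1)},\hat{u}_{i,1}^{(3,1)},\dot{u}_{i,1}^{(1)}\right)$ in place of $\iota_{it}\left(O_{0}^{(1)}u_{i,0}^{0},O_{1}^{(1)}u_{i,1}^{0},\dot{u}_{i,1}^{(1)}\right)$. The first step is to obtain a workable representation of the new shift. Combining \eqref{step5_1}, which controls the gap between the two shifts, with the expansion \eqref{Lem25.3} of $\iota_{it}\left(O_{0}^{(1)}u_{i,0}^{0},O_{1}^{(1)}u_{i,1}^{0},\dot{u}_{i,1}^{(1)}\right)$, I would write
\begin{equation*}
\iota_{it}\left(\hat{u}_{i,0}^{(3,1)},\hat{u}_{i,1}^{(3,1)},\dot{u}_{i,1}^{(1)}\right)=\mu_{1,it}v_{t,1}^{0\prime}h_{i}^{I}+\Psi_{it}^{0\prime}\dot{\Delta}_{t,v}-\hat{\Delta}_{i,0}^{\prime}v_{t,0}^{0}-\hat{\Delta}_{i,1}^{\prime}v_{t,1}^{0}e_{1,it}+\widetilde{\mathcal{R}}_{\iota,it},
\end{equation*}
where $\hat{\Delta}_{i,0}=O_{0}^{(1)\prime}\hat{u}_{i,0}^{(3,1)}-u_{i,0}^{0}$ and $\hat{\Delta}_{i,1}=O_{1}^{(1)\prime}\hat{u}_{i,1}^{(3,1)}-u_{i,1}^{0}$ satisfy $\max_{i\in I_{3}}\left\Vert\hat{\Delta}_{i,0}\right\Vert_{2}+\max_{i\in I_{3}}\left\Vert\hat{\Delta}_{i,1}\right\Vert_{2}=O_{p}(\eta_{N})$ by the preliminary bound \eqref{C.5}, $h_{i}^{I}$ is as in \eqref{h_I} with $\max_{i\in I_{3}}\left\Vert h_{i}^{I}\right\Vert_{2}=O_{p}(\eta_{N})$ by Theorem \ref{Thm2}, the term $\Psi_{it}^{0\prime}\dot{\Delta}_{t,v}$ is measurable with respect to $\mathscr{D}_{e}^{I_{1}\cup I_{2}}$, and $\max_{i\in I_{3},t\in[T]}\left\vert\widetilde{\mathcal{R}}_{\iota,it}\right\vert=o_{p}\left(\left(N\vee T\right)^{-1/2}\right)$ by Lemma \ref{Lem20} and Assumption \ref{ass:1}(ix). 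Note that only the uniform rates of $\hat{\Delta}_{i,0},\hat{\Delta}_{i,1},h_{i}^{I}$ are used, so there is no circularity with Step 6 of the proof of Theorem \ref{Thm3}.

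With this representation I would mimic the split of $\hat{\mathbb{I}}_{3,it}$ into $\hat{\mathbb{I}}_{3,it}^{I}$ and $\hat{\mathbb{I}}_{3,it}^{II}$ in the proof of Lemma \ref{Lem25}: a leading piece with shift $a_{it}:=\mu_{1,it}v_{t,1}^{0\prime}h_{i}^{I}+\Psi_{it}^{0\prime}\dot{\Delta}_{t,v}-\hat{\Delta}_{i,0}^{\prime}v_{t,0}^{0}-\hat{\Delta}_{i,1}^{\prime}v_{t,1}^{0}e_{1,it}$, and a remainder piece isolating the effect of $\widetilde{\mathcal{R}}_{\iota,it}$. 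For the leading piece I would restrict to the event $\left\{\max_{i\in I_{3}}\left(\left\Vert\hat{\Delta}_{i,0}\right\Vert_{2}\vee\left\Vert\hat{\Delta}_{i,1}\right\Vert_{2}\vee\left\Vert h_{i}^{I}\right\Vert_{2}\right)\leq M\eta_{N}\right\}$, which holds with probability at least $1-e$ for $M$ large, replace the triple of data-dependent coefficients by a generic point $\xi$ of a ball $\Xi\subset\mathbb{R}^{K_{0}+2K_{1}}$ of radius $M\eta_{N}$, keep $\Psi_{it}^{0\prime}\dot{\Delta}_{t,v}$ unchanged, and take the supremum over $\xi\in\Xi$. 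Discretizing $\Xi$ into $n_{\Xi}\asymp T^{K_{0}+2K_{1}}$ cells of diameter $\varepsilon/T$, I would apply the conditional Bernstein inequality of Lemma \ref{Lem:Bern}(ii) given $\mathscr{D}_{e}^{I_{1}\cup I_{2}}$ at each grid point (the conditional variance and covariance sum being of order $\xi_{N}^{8/3}(\varepsilon/T)^{2/3}$, since $\left\{\epsilon_{it}\right\}_{t\in[T]}$ is conditionally strong mixing), and control the fluctuation across a cell by bounding a difference of indicators by the indicator of a strip of width $O(\varepsilon/T)$ around $\epsilon_{it}$, whose conditional mean is $O(\varepsilon/T)$ by the bounded density in Assumption \ref{ass:1}(viii); this yields $o_{p}\left(\left(N\vee T\right)^{-1/2}\right)$ for both components, with weights $e_{1,it}O_{1}^{(1)}v_{t,1}^{0}$ in part (i) and $v_{t,0}^{0}$ in part (ii), both a.s.\ uniformly bounded up to a factor $\xi_{N}$. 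For the remainder piece I would use $\left\vert\mathbf{1}\left\{\epsilon_{it}\leq a+b\right\}-\mathbf{1}\left\{\epsilon_{it}\leq a\right\}\right\vert\leq\mathbf{1}\left\{\left\vert\epsilon_{it}-a\right\vert\leq\left\vert b\right\vert\right\}$ with $a=a_{it}$ and $b=\widetilde{\mathcal{R}}_{\iota,it}$, restrict to $\left\{\max_{i\in I_{3},t\in[T]}\left\vert\widetilde{\mathcal{R}}_{\iota,it}\right\vert\leq M\eta_{N}^{2}\right\}$, again replace the data-dependent coefficients in $a_{it}$ by points of $\Xi$, and show the resulting supremum is $O_{p}(\eta_{N}^{2})=o_{p}\left(\left(N\vee T\right)^{-1/2}\right)$ by the same discretization plus conditional Bernstein argument, using that the conditional mean of $\mathbf{1}\left\{\left\vert\epsilon_{it}-a_{it}\right\vert\leq M\eta_{N}^{2}\right\}$ given $\mathscr{D}_{e}^{I_{1}\cup I_{2}}$ is $O(\eta_{N}^{2})$.

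The main obstacle, exactly as in Lemma \ref{Lem25} and Theorem \ref{Thm2}(iii), is that $\iota_{it}\left(\hat{u}_{i,0}^{(3,1)},\hat{u}_{i,1}^{(3,1)},\dot{u}_{i,1}^{(1)}\right)$ depends on the same innovations $\left\{\epsilon_{it}\right\}_{t\in[T]}$ whose empirical process is being bounded, so one cannot condition on it directly; the ball-supremum device, combined with conditioning on $\mathscr{D}_{e}^{I_{1}\cup I_{2}}$ (which freezes $\dot{v}_{t,j}^{(1)}$, $\hat{e}_{1,it}$, $\hat{\mu}_{1,it}$ and $\Psi_{it}^{0\prime}\dot{\Delta}_{t,v}$) and a union bound over $n_{\Xi}\asymp T^{K_{0}+2K_{1}}$ grid points, resolves it. The only additional bookkeeping relative to Lemma \ref{Lem25} is that the ball now has dimension $K_{0}+2K_{1}$ rather than $K_{0}+K_{1}$, which is harmless since $K_{0}$ and $K_{1}$ are fixed, so the union bound still loses only a polynomial-in-$T$ factor that the exponential tail absorbs. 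Part (ii) is proved by the identical argument with $v_{t,0}^{0}$ in place of $e_{1,it}O_{1}^{(1)}v_{t,1}^{0}$, so I would only sketch it.
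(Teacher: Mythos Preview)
Your proposal is correct and follows essentially the same route as the paper. The paper's proof derives the expansion
\[
\iota_{it}\left(\hat{u}_{i,0}^{(3,1)},\hat{u}_{i,1}^{(3,1)},\dot{u}_{i,1}^{(1)}\right)=\mu_{1,it}v_{t,1}^{0\prime}h_{i}^{I}+\left(\hat{u}_{i,0}^{(3,1)}-O_{0}^{(1)}u_{i,0}^{0}\right)^{\prime}O_{0}^{(1)}v_{t,0}^{0}+e_{1,it}\left(\hat{u}_{i,1}^{(3,1)}-O_{1}^{(1)}u_{i,1}^{0}\right)^{\prime}O_{1}^{(1)}v_{t,1}^{0}+h_{it}^{II}\left(\dot{\Delta}_{t,v}\right)+\mathcal{R}_{\hat{\iota},it}
\]
with $\max_{i\in I_{3},t\in[T]}|\mathcal{R}_{\hat{\iota},it}|=o_{p}\left((N\vee T)^{-1/2}\right)$, enlarges the event set $\mathscr{A}_{11}(M)$ of Lemma~\ref{Lem25} to also include $\max_{i\in I_{3}}\left\Vert\hat{u}_{i,j}^{(3,1)}-O_{j}^{(1)}u_{i,j}^{0}\right\Vert_{2}\leq M\eta_{N}$, and then repeats the ball-supremum plus discretization plus conditional Bernstein argument of Lemma~\ref{Lem25}; this is exactly what you outline, with the same enlargement of the parameter-ball dimension to accommodate $(\hat{\Delta}_{i,0},\hat{\Delta}_{i,1},h_{i}^{I})$.
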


\begin{proof}
As in (\ref{Lem25.3}), we can show that 
\begin{align}
& \iota_{it}\left(\hat{u}_{i,0}^{(3,1)},\hat{u}_{i,1}^{(3,1)},\dot{u}%
_{i,1}^{(1)}\right)  \notag  \label{Lem26.1} \\
& =\mu_{1,it}v_{t,1}^{0\prime}h_{i}^{I}+\left[ \left(\hat{u}%
_{i,0}^{(3,1)}-O_{0}^{(1)}u_{i,0}^{0}\right)
^{\prime}O_{0}^{(1)}v_{t,0}^{0}+e_{1,it}\left(\hat{u}%
_{i,1}^{(3,1)}-O_{1}^{(1)}u_{i,1}^{0}\right) ^{\prime}O_{1}^{(1)}v_{t,1}^{0}%
\right]  \notag \\
& +\left[ \left(O_{0}^{(0)}u_{i,0}^{0}\right) ^{\prime}\left(\dot{v}%
_{t,0}^{(1)}-O_{0}^{(1)}v_{t,0}^{0}\right)
+X_{1,it}\left(O_{1}^{(1)}u_{i,1}^{0}\right) ^{\prime}\left(\dot{v}%
_{t,1}^{(1)}-O_{1}^{(1)}v_{t,1}^{0}\right) \right] +\mathcal{R}_{\hat{\iota}%
,it}  \notag \\
:& =\mu_{1,it}v_{t,1}^{0\prime}h_{i}^{I}+\left(\hat{u}%
_{i,0}^{(3,1)}-O_{0}^{(1)}u_{i,0}^{0}\right)
^{\prime}O_{0}^{(1)}v_{t,0}^{0}+e_{1,it}\left(\hat{u}%
_{i,1}^{(3,1)}-O_{1}^{(1)}u_{i,1}^{0}\right)
^{\prime}O_{1}^{(1)}v_{t,1}^{0}+h_{it}^{II}\left(\dot{\Delta}_{t,v}\right) +%
\mathcal{R}_{\hat{\iota},it},
\end{align}%
where $\max_{i\in I_{3},t\in[T]}\left\vert \mathcal{R}_{\hat{\iota}%
,it}\right\vert =o_{p}\left(\left(N\vee T\right) ^{-\frac{1}{2}}\right) $.
As in the proof of Theorem \ref{Thm2}, we can show that $\max_{i\in
I_{3}}\left\Vert \hat{u}_{i,j}^{(3,1)}-O_{j}^{(1)}u_{i,j}^{0}\right\Vert_{2}$
$=O_{p}\left(\eta_{N}\right) $ for $\forall j\in [p]\cup \{0\}$. Then by
changing the event set $\mathscr{A}_{11}(M)$ to 
\begin{equation*}
\left\{ \max_{i\in I_{3}}\left\Vert h_{i}^{I}\right\Vert_{2}\leq
M\eta_{N},\max_{i\in I_{3}}\left\Vert \hat{u}_{i,j}-O_{j}^{(1)}u_{i,j}^{0}%
\right\Vert_{2}\leq M\eta_{N}\right\} ,
\end{equation*}%
we can repeat the analysis in Lemma \ref{Lem25} and obtain the desired
results for statement (i). With some obvious adjustment, statement (ii) can
be proved.
\end{proof}

\begin{lemma}
{\small \label{Lem27} } Under Assumptions \ref{ass:1}-\ref{ass:10}, for
block matrices $\hat{D}_{t}^{F}$, $D_{t}^{F}$, $\hat{D}_{t}^{J}$ and $%
D_{t}^{J}$ defined in Appendix A, we have 
\begin{equation*}
\max_{t\in[T]}||\hat{D}_{t}^{F}-D_{t}^{F}||_{F}=O_{p}(\eta_{N}),\text{ and }%
\max_{t\in[T]}||\hat{D}_{t}^{J}-D_{t}^{J}||_{F}=\left\Vert 
\begin{bmatrix}
O_{p}\left(\eta_{N}^{2}\right) & O_{p}(\eta_{N}) \\ 
O_{p}\left(\eta_{N}^{2}\right) & O_{p}\left(\eta_{N}^{2}\right)%
\end{bmatrix}
\right\Vert_{F}.
\end{equation*}
\end{lemma}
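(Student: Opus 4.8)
\textbf{Proof proposal for Lemma \ref{Lem27}.}

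The plan is to follow verbatim the template established in the proofs of Lemmas \ref{Lem21} and \ref{Lem22} for the ``dotted'' quantities $\dot D_i^F,\dot D_i^J$ in the row-wise debiasing step, adapting every step to the column-wise debiasing step in which the roles of $i\in I_3$ and $t\in[T]$ are interchanged and the sample-average is taken over $i\in I_3$ rather than $t\in[T]$. Concretely, recall that
\[
\hat D_t^F=\frac{1}{N_3}\sum_{i\in I_3}f_{it}\!\left[\varrho_{it}\!\left(\hat u_{i,0}^{(3,1)},\hat u_{i,1}^{(3,1)},(O_{u,0}^{(1)})^{\prime-1}v_{t,0}^0,(O_{u,1}^{(1)})^{\prime-1}v_{t,1}^0\right)\right]
\begin{bmatrix}
\hat u_{i,0}^{(3,1)}\hat u_{i,0}^{(3,1)\prime} & \hat e_{1,it}\hat u_{i,0}^{(3,1)}\hat u_{i,1}^{(3,1)\prime}\\
\hat e_{1,it}\hat u_{i,1}^{(3,1)}\hat u_{i,0}^{(3,1)\prime} & \hat e_{1,it}^2\hat u_{i,1}^{(3,1)}\hat u_{i,1}^{(3,1)\prime}
\end{bmatrix},
\]
and that $D_t^F$ replaces $\hat u_{i,j}^{(3,1)}$ by $O_{u,j}^{(1)}u_{i,j}^0$, $\hat e_{1,it}$ by $e_{1,it}$, $f_{it}(\tilde\varrho_{it})$ by $f_{it}(0)$, and zeroes out the off-diagonal blocks. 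First I would write $\hat D_t^F-D_t^F$ as a sum of three terms exactly as in \eqref{Lem21.1}--\eqref{Lem21.8}: a term $F_{1,t}$ collecting the replacement of $(\hat u_{i,j}^{(3,1)},\hat e_{1,it})$ by their population analogues at $f_{it}(0)$, a term $F_{2,t}$ equal to $\frac{1}{N_3}\sum_{i\in I_3}f_{it}(0)$ times the off-diagonal blocks $O_{u,0}^{(1)}u_{i,0}^0 u_{i,1}^{0\prime}O_{u,1}^{(1)\prime}e_{1,it}$, and a term $F_{3,t}$ carrying the smoothing error $f_{it}(\tilde\varrho_{it})-f_{it}(0)$.

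For $F_{1,t}$ I would decompose each block into cross terms and use Theorem \ref{Thm3}(i) ($\max_{i\in I_3}\|\hat u_{i,j}^{(3,1)}-O_{u,j}^{(1)}u_{i,j}^0\|_2=O_p(\eta_N)$ for $j=0$, and the $\sqrt{\log(N\vee T)/T}$ rate for $j=1$, both $O_p(\eta_N)$), Lemma \ref{Lem20}(iii) ($\max|\hat e_{1,it}-e_{1,it}|=O_p(\eta_N)$), Assumption \ref{ass:1}(iv) (to bound $\frac{1}{N_3}\sum_{i\in I_3}e_{1,it}^2$ and $\frac{1}{N_3}\sum_{i\in I_3}|e_{1,it}|$ a.s.), and Lemma \ref{Lem:bounded u&v_tilde}(i) (boundedness of $u_{i,j}^0$); this yields $\max_t\|F_{1,t}\|_F=O_p(\eta_N)$. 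For $F_{2,t}$ I would use $\mathbb{E}\!\left(f_{it}(0)u_{i,0}^0 u_{i,1}^{0\prime}e_{1,it}\mid\mathscr D\right)=0$, which follows from Assumption \ref{ass:10}(i) together with the fact that $u_{i,j}^0$ is deterministic, and then apply the \emph{unconditional} Bernstein inequality in Lemma \ref{Lem:Bern}(i) given $\mathscr D$ \textit{plus} the independence across $i$ from Assumption \ref{ass:1}(i)—here the averaging is over the independent index $i\in I_3$, so a Hoeffding/Bernstein bound for independent summands conditional on $\mathscr D$ gives $\max_t\|F_{2,t}\|_F=O_p\!\big(\sqrt{\tfrac{\log(N\vee T)}{N}}\xi_N\big)=O_p(\eta_N)$, in direct analogy with \eqref{Lem:F2}. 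For $F_{3,t}$ I would bound $|f_{it}(\tilde\varrho_{it})-f_{it}(0)|\lesssim|\tilde\varrho_{it}|\le|\varrho_{it}(\cdots)|$ by Assumption \ref{ass:1}(viii) (Lipschitz density), and then use the expansion of $\varrho_{it}$ analogous to \eqref{Lem21.4}--\eqref{Lem21.6}—namely $\max_{i\in I_3,t\in[T]}|\varrho_{it}(\hat u_{i,0}^{(3,1)},\hat u_{i,1}^{(3,1)},(O_{u,0}^{(1)})^{\prime-1}v_{t,0}^0,(O_{u,1}^{(1)})^{\prime-1}v_{t,1}^0)|=O_p(\eta_N)$, which is recorded in \eqref{rho}—to get $\max_t\frac{1}{N_3}\sum_{i\in I_3}|\tilde\varrho_{it}|=O_p(\eta_N)$ and hence $\max_t\|F_{3,t}\|_F=O_p(\eta_N)$. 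Combining the three pieces gives $\max_t\|\hat D_t^F-D_t^F\|_F=O_p(\eta_N)$.

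The argument for $\hat D_t^J-D_t^J$ is the column-wise mirror of Lemma \ref{Lem22}: I would split $\hat D_t^J-D_t^J=J_{1,t}+J_{2,t}+J_{3,t}$ with the same block structure, and the key point—exactly as in Lemma \ref{Lem22}—is that the $(1,2)$ (upper-right) block has the slower rate $O_p(\eta_N)$ while all other blocks are $O_p(\eta_N^2)$, because the dominant surviving contribution in the $(1,2)$ block is $\frac{1}{N_3}\sum_{i\in I_3}f_{it}(0)O_{u,0}^{(1)}u_{i,0}^0 u_{i,1}^{0\prime}O_{u,1}^{(1)\prime}(e_{1,it}-\hat e_{1,it})$, which is precisely the bias term that is carried forward into the linear expansion of $\hat v_{t,0}^{(3,1)}$. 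I would handle the $(1,1)$, $(2,1)$, $(2,2)$ blocks by plugging in the linear expansion of $e_{1,it}-\hat e_{1,it}$ from Lemma \ref{Lem20}(iii), separating diagonal ($i^\ast=i$) from off-diagonal ($i^\ast\ne i$) sums, and bounding the latter via Lemma \ref{Lem:Bern}(i) conditional on $\mathscr D$ after forming the strong-mixing-in-$t$ vectors $\tilde e_{1,i^\ast}$ as in \eqref{Lem:J1.4}; the off-diagonal sums contribute $O_p(\eta_N^2)$. The smoothing-error block $J_{3,t}$ is handled as $F_{3,t}$ above using the Lipschitz density and $\max_t\frac{1}{N_3}\sum_{i\in I_3}|\tilde\varrho_{it}|=O_p(\eta_N)$, which—since it multiplies a block that is already $O_p(\eta_N)$ in the $(1,2)$ position and $O_p(\eta_N)$ elsewhere—yields $O_p(\eta_N^2)$. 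The main obstacle, as in Lemma \ref{Lem22}, is the careful bookkeeping of the $(1,2)$ block of $J_{1,t}$: one must verify that after substituting the two-term expansion of $\hat e_{1,it}-e_{1,it}$, every term except the single term $\frac{1}{N_3}\sum_{i\in I_3}f_{it}(0)O_{u,0}^{(1)}v_{t,0}^0 v_{t,1}^{0\prime}O_{u,1}^{(1)\prime}$-type contribution is genuinely $O_p(\eta_N^2)$ uniformly in $t$, which requires the conditional Bernstein bounds (now over the independent index $i\in I_3$) together with Theorem \ref{Thm3}(ii)'s uniform rates for $\hat v_{t,j}^{(3,1)}$ and Assumption \ref{ass:1}(ix) to convert $O_p(\eta_N^2)$ into $o_p((N\vee T)^{-1/2})$ where needed downstream.
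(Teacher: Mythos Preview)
Your proposal is correct and follows precisely the approach the paper intends: the paper's own proof is the single sentence ``By analogous arguments as used in the proofs of Lemmas \ref{Lem21} and \ref{Lem22}, we can prove the lemma,'' and your writeup is a faithful, detailed unpacking of that analogy with the roles of $i\in I_3$ and $t\in[T]$ swapped and the mixing-in-$t$ Bernstein bounds replaced by independence-across-$i$ bounds conditional on $\mathscr D$. Two harmless slips: in the surviving $(1,2)$ bias term you wrote $v_{t,0}^0 v_{t,1}^{0\prime}$ where you mean $u_{i,0}^0 u_{i,1}^{0\prime}$, and the closing reference to ``Theorem \ref{Thm3}(ii)'s rates for $\hat v_{t,j}^{(3,1)}$'' should be to the already-established rates for $\hat u_{i,j}^{(3,1)}$ in \eqref{max:u} (Lemma \ref{Lem27} is used \emph{inside} the proof of Theorem \ref{Thm3}(ii), so you cannot invoke the latter here).
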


\begin{proof}
By analogous arguments as used in the proofs of Lemmas \ref{Lem21} and \ref%
{Lem22}, we can prove the lemma.
\end{proof}

\begin{lemma}
{\small \label{Lem28} } Under Assumptions \ref{ass:1}-\ref{ass:10}, we have

\begin{itemize}
\item[(i)] $\max_{t\in[T]}\left\Vert \frac{1}{N_{3}}\sum_{i\in
I_{3}}O_{u,0}^{(1)}u_{i,0}^{0}f_{it}(0)\left(\mu_{1,it}u_{i,1}^{0%
\prime}v_{t,1}^{0}-\hat{\mu}_{1,it}\hat{u}_{i,1}^{(3,1)\prime}\dot{v}%
_{t,1}^{(1)}\right) \right\Vert_{2}=O_{p}(\eta_{N})$,

\item[(ii)] $\max_{t\in[T]}\left\Vert \frac{1}{N_{3}}\sum_{i\in
I_{3}}e_{1,it}O_{u,1}^{(1)}u_{i,1}^{0}f_{it}(0)\left(\mu_{1,it}u_{i,1}^{0%
\prime}v_{t,1}^{0}-\hat{\mu}_{1,it}\hat{u}_{i,1}^{(3,1)\prime}\dot{v}%
_{t,1}^{(1)}\right) \right\Vert_{2}=o_{p}\left(\left(N\vee T\right)^{-\frac{1%
}{2}}\right)$.
\end{itemize}
\end{lemma}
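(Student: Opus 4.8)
\textbf{Proof plan for Lemma \ref{Lem28}.}

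The plan is to exploit the decomposition of $\hat{\mu}_{1,it}\hat{u}_{i,1}^{(3,1)\prime}\dot{v}_{t,1}^{(1)}-\mu_{1,it}u_{i,1}^{0\prime}v_{t,1}^{0}$ into three error pieces, exactly as in the proof of Lemma \ref{Lem24} but now with the row index $i$ playing the role of $t$ and the column dimension $N_3$ playing the role of $T$. First I would write
\begin{align*}
\hat{\mu}_{1,it}\hat{u}_{i,1}^{(3,1)\prime}\dot{v}_{t,1}^{(1)}-\mu_{1,it}u_{i,1}^{0\prime}v_{t,1}^{0}
&=\left(\hat{\mu}_{1,it}-\mu_{1,it}\right)u_{i,1}^{0\prime}v_{t,1}^{0}
+\mu_{1,it}\left(\hat{u}_{i,1}^{(3,1)}-O_{u,1}^{(1)}u_{i,1}^{0}\right)^{\prime}O_{u,1}^{(1)}v_{t,1}^{0}\\
&\quad+\mu_{1,it}\left(O_{u,1}^{(1)}u_{i,1}^{0}\right)^{\prime}\left(\dot{v}_{t,1}^{(1)}-O_{u,1}^{(1)}v_{t,1}^{0}\right)+O_{p}\left(\eta_{N}^{2}\right),
\end{align*}
using $\|\hat u_{i,1}^{(3,1)}-O_{u,1}^{(1)}u_{i,1}^0\|_2=O_p(\eta_N)$ uniformly over $i\in I_3$ from \eqref{max:u}, the bound $\max_{t}\|\dot v_{t,1}^{(1)}-O_{u,1}^{(1)}v_{t,1}^0\|_2=O_p(\eta_N)$ from Theorem \ref{Thm2}(ii) together with $\|O_{u,1}^{(1)}-O_1^{(1)}\|_F=O_p(\eta_N)$, and $\max_{i,t}|\hat\mu_{1,it}-\mu_{1,it}|=O_p(\eta_N)$ from Lemma \ref{Lem20}(iii); the cross terms are genuinely $O_p(\eta_N^2)$ and contribute negligibly after averaging.

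For statement (ii) the key point is that the averaging over $i\in I_3$ kills the first order term. For the piece $\mu_{1,it}(O_{u,1}^{(1)}u_{i,1}^0)^\prime(\dot v_{t,1}^{(1)}-O_{u,1}^{(1)}v_{t,1}^0)$, I would note $\dot v_{t,1}^{(1)}$ is fixed given $\mathscr{D}_{e_i}^{I_1\cup I_2}$ (it comes only from subsamples $I_1,I_2$), so $\frac{1}{N_3}\sum_{i\in I_3}e_{1,it}O_{u,1}^{(1)}u_{i,1}^0 f_{it}(0)\mu_{1,it}(O_{u,1}^{(1)}u_{i,1}^0)^\prime$ times that vector has, by Assumption \ref{ass:10}(i) and the independence structure in Assumption \ref{ass:1}(i), conditional mean zero across $i$; a conditional Bernstein bound (Lemma \ref{Lem:Bern}(i)) gives an $O_p(\sqrt{\tfrac{\log(N\vee T)}{N}}\xi_N)$ rate, so multiplying by the $O_p(\eta_N)$ size of $\dot v_{t,1}^{(1)}-O_{u,1}^{(1)}v_{t,1}^0$ yields $O_p(\eta_N^2)$, which is $o_p((N\vee T)^{-1/2})$ by Assumption \ref{ass:1}(ix). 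For the term $\mu_{1,it}(\hat u_{i,1}^{(3,1)}-O_{u,1}^{(1)}u_{i,1}^0)^\prime O_{u,1}^{(1)}v_{t,1}^0$, I would substitute the linear expansion of $\hat u_{i,1}^{(3,1)}-O_{u,1}^{(1)}u_{i,1}^0$ from \eqref{step6_2}, so this becomes (up to remainders) a double average $\frac1{N_3 T}\sum_{i,s}e_{1,it}e_{1,is}(\cdots)(\tau-\mathbf 1\{\epsilon_{is}\le 0\})$; the diagonal-in-$i$ contribution is $O_p(\xi_N^2/N_3)$ and the off-diagonal-in-$i$ part is again mean zero given $\mathscr D$ and handled by Bernstein, and the remainder term $\mathcal R_{i,u}^1$ contributes $o_p((N\vee T)^{-1/2})$. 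The term $(\hat\mu_{1,it}-\mu_{1,it})u_{i,1}^{0\prime}v_{t,1}^0=(e_{1,it}-\hat e_{1,it})u_{i,1}^{0\prime}v_{t,1}^0$ I would handle exactly as in \eqref{Lem22.5}, inserting the expansion of $e_{1,it}-\hat e_{1,it}$ from Lemma \ref{Lem20}(iii): the factor-part is controlled as in Lemma \ref{Lem24} and the loading-part produces diagonal $O_p(\xi_N^2/N)$ plus an off-diagonal mean-zero piece. Summing, statement (ii) follows.

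For statement (i), the leading term $\frac{1}{N_3}\sum_{i\in I_3}O_{u,0}^{(1)}u_{i,0}^0 f_{it}(0)(\hat\mu_{1,it}-\mu_{1,it})u_{i,1}^{0\prime}v_{t,1}^0$ does \emph{not} vanish—$u_{i,0}^0$ plays the role that $v_{t,0}^0$ plays in Lemma \ref{Lem24}(ii)—and is merely $O_p(\eta_N)$; this is exactly the analogue of the bias term $\hat{\mathbb I}_{2,t}$ that is retained when constructing $O_{v_1,t}^{(1)}$ in the proof of Theorem \ref{Thm3}(ii). So here I would just bound each of the three pieces by $O_p(\eta_N)$ crudely, using $\max_i\|u_{i,0}^0\|_2\le M$ from Lemma \ref{Lem:bounded u&v_tilde}(i), $\max_{i,t}|\hat\mu_{1,it}-\mu_{1,it}|=O_p(\eta_N)$, the boundedness of $f_{it}(0)$, and Assumption \ref{ass:1}(iv). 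The main obstacle is the bookkeeping for statement (ii): one must verify that after plugging in the linear expansions of both $\hat u_{i,1}^{(3,1)}-O_{u,1}^{(1)}u_{i,1}^0$ and $e_{1,it}-\hat e_{1,it}$, \emph{every} resulting double-sum term is either a negligible diagonal term of size $O_p(\xi_N^2/N)$ or an off-diagonal term that is conditionally mean zero given the appropriate $\sigma$-field and hence controllable by the conditional Bernstein inequality—this requires carefully tracking which $\sigma$-field ($\mathscr D$, $\mathscr D_{e_i}^{I_1\cup I_2}$, or $\mathscr D_e^{I_1\cup I_2}$) makes each term a martingale/independent-sum, and using Assumption \ref{ass:10}(i)-(ii) to get the zero conditional mean. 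The rates then close by Assumption \ref{ass:1}(ix), just as in Lemmas \ref{Lem22}--\ref{Lem24}.
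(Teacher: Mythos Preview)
Your plan is correct and mirrors the paper's proof closely: the same three-term decomposition, the same insertion of the expansion \eqref{step6_2} for the $\hat u-O_{u,1}^{(1)}u^{0}$ piece, the same reference to the \eqref{Lem22.5}-type argument for the $\hat\mu-\mu$ piece, and the same conditional-mean-zero-plus-concentration argument for the $\dot v-Ov^{0}$ piece.

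Two small points to tighten. First, your decomposition centers both $\hat u$ and $\dot v$ at $O_{u,1}^{(1)}$, but $O_{u,1}^{(1)}=\bar O_{u,1}O_{1}^{(1)}$ is \emph{not} orthogonal (only $O_{1}^{(1)}$ is), so $(O_{u,1}^{(1)}u_{i,1}^{0})^{\prime}O_{u,1}^{(1)}v_{t,1}^{0}-u_{i,1}^{0\prime}v_{t,1}^{0}=O_{p}(\eta_{N})$, not $O_{p}(\eta_{N}^{2})$, and your ``cross terms are genuinely $O_{p}(\eta_{N}^{2})$'' claim is off by one order. The paper avoids this by centering $\dot v$ at $O_{1}^{(1)}v_{t,1}^{0}$ instead, which makes the leading cancellation exact and produces an explicit fourth term $\mu_{1,it}v_{t,1}^{0\prime}O_{1}^{(1)\prime}(O_{u,1}^{(1)}-O_{1}^{(1)})u_{i,1}^{0}$; this fourth term is then handled by exactly the argument you already give for your third term (it is $O_{p}(\eta_{N})$ times a sum that is conditionally mean zero given $\mathscr D^{I_{1}\cup I_{2}}$ by Assumption~\ref{ass:10}(i)). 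Second, for the $\hat u-O_{u,1}^{(1)}u^{0}$ piece, after substituting \eqref{step6_2} the double sum runs over $(i,s)\in I_{3}\times[T]$ with a single $i$-index, so there is no diagonal/off-diagonal split in $i$; the whole sum is conditionally mean zero given $\mathscr D_{e}$ and is bounded directly by the Bernstein-type argument around \eqref{Lem:J1.4}.
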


\begin{proof}
(i) Note that 
\begin{align}
& \hat{\mu}_{1,it}\hat{u}_{i,1}^{(3,1)\prime}\dot{v}_{t,1}^{(1)}-%
\mu_{1,it}u_{i,1}^{0\prime}v_{t,1}^{0}  \notag  \label{Lem28.1} \\
& =\left(\hat{\mu}_{1,it}-\mu_{1,it}\right) \left(\hat{u}%
_{i,1}^{(3,1)}-O_{u,1}^{(1)}u_{i,1}^{0}\right) ^{\prime}\left(\dot{v}%
_{t,1}^{(1)}-O_{1}^{(1)}v_{t,1}^{0}\right) +\left(\hat{\mu}%
_{1,it}-\mu_{1,it}\right) \left(O_{u,1}^{(1)}u_{i,1}^{0}\right)
^{\prime}\left(\dot{v}_{t,1}^{(1)}-O_{1}^{(1)}v_{t,1}^{0}\right)  \notag \\
& +\left(\hat{\mu}_{1,it}-\mu_{1,it}\right) \left(\hat{u}%
_{i,1}^{(3,1)}-O_{u,1}^{(1)}u_{i,1}^{0}\right)
^{\prime}O_{1}^{(1)}v_{t,1}^{0}+\left(\hat{\mu}_{1,it}-\mu_{1,it}\right)
\left(O_{u,1}^{(1)}u_{i,1}^{0}\right) ^{\prime}O_{1}^{(1)}v_{t,1}^{0}  \notag
\\
& +\mu_{1,it}\left(\hat{u}_{i,1}^{(3,1)}-O_{u,1}^{(1)}u_{i,1}^{0}\right)^{%
\prime}\left(\dot{v}_{t,1}^{(1)}-O_{1}^{(1)}v_{t,1}^{0}\right)
+\mu_{1,it}\left(\hat{u}_{i,1}^{(3,1)}-O_{u,1}^{(1)}u_{i,1}^{0}\right)
^{\prime}O_{1}^{(1)}v_{t,1}^{0}  \notag \\
& +\mu_{1,it}\left(O_{u,1}^{(1)}u_{i,1}^{0}\right) ^{\prime}\left(\dot{v}%
_{t,1}^{(1)}-O_{1}^{(1)}v_{t,1}^{0}\right)
+\mu_{1,it}\left(O_{u,1}^{(1)}u_{i,1}^{0}\right)
^{\prime}O_{1}^{(1)}v_{t,1}^{0}-\mu_{1,it}u_{i,1}^{0\prime}v_{t,1}^{0} 
\notag \\
& =\left(\hat{\mu}_{1,it}-\mu_{1,it}\right)
\left(O_{u,1}^{(1)}u_{i,1}^{0}\right)
^{\prime}O_{1}^{(1)}v_{t,1}^{0}+\mu_{1,it}\left(\hat{u}%
_{i,1}^{(3,1)}-O_{u,1}^{(1)}u_{i,1}^{0}\right)
^{\prime}O_{1}^{(1)}v_{t,1}^{0}+\mu_{1,it}\left(O_{u,1}^{(1)}u_{i,1}^{0}%
\right)^{\prime}\left(\dot{v}_{t,1}^{(1)}-O_{1}^{(1)}v_{t,1}^{0}\right) 
\notag \\
&
+\mu_{1,it}v_{t,1}^{0\prime}O_{1}^{(1)\prime}\left(O_{u,1}^{(1)}-O_{1}^{(1)}%
\right) u_{i,1}^{0}+O_{p}\left(\eta_{N}^{2}\right)  \notag \\
& =\left(\hat{\mu}_{1,it}-\mu_{1,it}\right)
u_{i,1}^{0\prime}v_{t,1}^{0}+\mu_{1,it}\left(\hat{u}%
_{i,1}^{(3,1)}-O_{u,1}^{(1)}u_{i,1}^{0}\right)
^{\prime}O_{1}^{(1)}v_{t,1}^{0}+\mu_{1,it}\left(O_{1}^{(1)}u_{i,1}^{0}%
\right) ^{\prime}\left(\dot{v}_{t,1}^{(1)}-O_{1}^{(1)}v_{t,1}^{0}\right) 
\notag \\
&
+\mu_{1,it}v_{t,1}^{0\prime}O_{1}^{(1)\prime}\left(O_{u,1}^{(1)}-O_{1}^{(1)}%
\right) u_{i,1}^{0}+O_{p}\left(\eta_{N}^{2}\right) =O_{p}(\eta_{N}),
\end{align}%
uniformly over $i\in I_{3}$ and $t\in[T]$, where the last equality holds by
the fact that $\left\Vert
O_{u,1}^{(1)}-O_{1}^{(1)}\right\Vert_{F}=O_{p}(\eta_{N})$. It follows that 
\begin{equation*}
\max_{t\in[T]}\left\Vert \frac{1}{N_{3}}\sum_{i\in
I_{3}}O_{u,0}^{(1)}u_{i,0}^{0}f_{it}(0)\left(\hat{\mu}_{1,it}\hat{u}%
_{i,1}^{(3,1)\prime}\dot{v}_{t,1}^{(1)}-\mu_{1,it}u_{i,1}^{0%
\prime}v_{t,1}^{0}\right) \right\Vert_{2}=O_{p}(\eta_{N}).
\end{equation*}

(ii) Observe that 
\begin{align}
& \frac{1}{N_{3}}\sum_{i\in
I_{3}}e_{1,it}O_{u,1}^{(1)}u_{i,1}^{0}f_{it}(0)\left(\hat{\mu}_{1,it}^{(1)}%
\hat{u}_{i,1}^{(3,1)\prime}\dot{v}_{t,1}^{(1)}-\mu_{1,it}u_{i,1}^{0%
\prime}v_{t,1}^{0}\right)  \notag  \label{Lem28.2} \\
& =O_{u,1}^{(1)}\frac{1}{N_{3}}\sum_{i\in
I_{3}}e_{1,it}u_{i,1}^{0}f_{it}(0)\left(\hat{\mu}_{1,it}^{(1)}-\mu_{1,it}%
\right) u_{i,1}^{0\prime}v_{t,1}^{0}  \notag \\
& +O_{u,1}^{(1)}\frac{1}{N_{3}}\sum_{i\in
I_{3}}e_{1,it}u_{i,1}^{0}f_{it}(0)\mu_{1,it}\left(\hat{u}%
_{i,1}^{(3,1)}-O_{u,1}^{(1)}u_{i,1}^{0}\right)
^{\prime}O_{1}^{(1)}v_{t,1}^{0}  \notag \\
& +O_{u,1}^{(1)}\frac{1}{N_{3}}\sum_{i\in
I_{3}}e_{1,it}u_{i,1}^{0}f_{it}(0)\mu_{1,it}\left(O_{1}^{(1)}u_{i,1}^{0}%
\right) ^{\prime}\left(\dot{v}_{t,1}^{(1)}-O_{1}^{(1)}v_{t,1}^{0}\right) 
\notag \\
& +O_{u,1}^{(1)}\frac{1}{N_{3}}\sum_{i\in
I_{3}}e_{1,it}u_{i,1}^{0}f_{it}(0)\mu_{1,it}v_{t,1}^{(1)\prime}O_{1}^{(1)%
\prime}\left(O_{u,1}^{(1)}-O_{1}^{(1)}\right)
u_{i,1}^{0}+O_{p}\left(\eta_{N}^{2}\right) .
\end{align}%
By similar arguments as used in (\ref{Lem22.5}), we can show that the first
term on the RHS of (\ref{Lem28.2}) is $o_{p}\left(\left(N\vee T\right)^{-%
\frac{1}{2}}\right)$ uniformly over $t$. For the second term, by inserting
the linear expansion for $\hat{u}_{i,1}^{(3,1)}-O_{u,1}^{(1)}u_{i,1}^{0}$ in
(\ref{step6_2}), we notice that 
\begin{align*}
& \frac{1}{N_{3}}\sum_{i\in
I_{3}}e_{1,it}u_{i,1}^{0}f_{it}(0)\mu_{1,it}\left(\hat{u}%
_{i,1}^{(3,1)}-O_{u,1}^{(1)}u_{i,1}^{0}\right)
^{\prime}O_{1}^{(1)}v_{t,1}^{0} \\
& =\frac{1}{N_{3}T}\sum_{i\in I_{3}}\sum_{t\in
[T]}e_{1,it}f_{it}\left(0\right) \mu_{1,it}u_{i,1}^{0}v_{t^{*},1}^{0\prime}%
\hat{V}_{u_{1}}^{-1}e_{1,it^{*}}v_{t^{*},1}^{0}\left(\tau -\mathbf{1}\left\{
\epsilon_{it^{*}}\leq 0\right\} \right) +o_{p}\left(\left(N\vee T\right) ^{-%
\frac{1}{2}}\right) .
\end{align*}%
By arguments as used in (\ref{Lem:J1.4}), we can show the leading term in
the last equality is $O_{p}\left(\sqrt{\frac{\log (N\vee T)}{NT}}%
\xi_{N}^{2}\right) $. Then the second term on the RHS of (\ref{Lem28.2}) is $%
o_{p}\left(\left(N\vee T\right) ^{-\frac{1}{2}}\right) $. For the third and
the fourth terms on the RHS of (\ref{Lem28.2}), by conditional on $%
\mathscr{D}^{I_{1}\cup I_{2}}$, we notice that $\frac{1}{N_{3}}\sum_{i\in
I_{3}}e_{1,it}u_{i,1}^{0}f_{it}(0)\mu_{1,it}\left(O_{1}^{(1)}u_{i,1}^{0}%
\right) ^{\prime}\left(\dot{v}_{t,1}^{(1)}-O_{1}^{(1)}v_{t,1}^{0}\right) $
and $\frac{1}{N_{3}}\sum_{i\in
I_{3}}e_{1,it}u_{i,1}^{0}f_{it}(0)\mu_{1,it}v_{t,1}^{0\prime}O_{1}^{(1)%
\prime}\left(O_{u,1}^{(1)}-O_{1}^{(1)}\right) u_{i,1}^{0}$ are both mean
zero and the randomness depends only on $\{e_{1,it}f_{it}(0)\}$. By
conditional Hoeffding's inequality, we can show that both these two terms
are $o_{p}\left(\left(N\vee T\right) ^{-\frac{1}{2}}\right) $. 
% define  event $\mathscr{A}_{9}(M)=\left\{\left\Vert O_{1}^{(1)\prime}\left(O_{u,1}^{(1)}-O_{1}^{(1)} \right)\right\Vert_{F}\leq M\eta_{N} \right\}$ with $\mathbb{P}\left(\mathscr{A}_{9}^{c}(M) \right)\leq e$ for any $e>0$. Then we have
% \begin{align*}
%     &\mathbb{P}\left(\max_{t\in[T]}\left\Vert\frac{1}{N_{3}}\sum_{i\in I_{3}}e_{1,it}u_{i,1}^{0}\mathfrak{f}_{it}(0)\mu_{1,it}v_{t,1}^{(1)\prime}O_{1}^{(1)\prime}\left(O_{u,1}^{(1)}-O_{1}^{(1)} \right)u_{i,1}^{0}\right\Vert_{2}>c_{26}\eta_{N}^{2} \right)\\
%     &\leq\mathbb{P}\left(\max_{t\in[T]}\left\Vert\frac{1}{N_{3}}\sum_{i\in I_{3}}e_{1,it}u_{i,1}^{0}\mathfrak{f}_{it}(0)\mu_{1,it}v_{t,1}^{(1)\prime}O_{1}^{(1)\prime}\left(O_{u,1}^{(1)}-O_{1}^{(1)} \right)u_{i,1}^{0}\right\Vert_{2}>c_{26}\eta_{N}^{2}\bigg| \mathscr{A}_{9}(M)\right)+e\\
%     &\leq\mathbb{P}\left(\max_{t\in[T]}\sup_{S\in\mathbb{S}}\left\Vert\frac{1}{N_{3}}\sum_{i\in I_{3}}e_{1,it}u_{i,1}^{0}\mathfrak{f}_{it}(0)\mu_{1,it}v_{t,1}^{(1)\prime}S u_{i,1}^{0}\right\Vert_{2}>c_{26}\eta_{N}^{2}\right)+e,
% \end{align*}
% where $\mathbb{S}=\left\{S\in\mathbb{R}^{(K_{0}+K_{1})\times(K_{0}+K_{1})}:\left\Vert S\right\Vert_{F}\leq M\eta_{N} \right\}$. By dividing $\mathbb{S}$ into subsets like in (\ref{Wdot_1}), we can show that $$\mathbb{P}\left(\max_{t\in[T]}\sup_{S\in\mathbb{S}}\left\Vert\frac{1}{N_{3}}\sum_{i\in I_{3}}e_{1,it}u_{i,1}^{0}\mathfrak{f}_{it}(0)\mu_{1,it}v_{t,1}^{(1)\prime}S u_{i,1}^{0}\right\Vert_{2}>c_{26}\eta_{N}^{2}\right)=o(1).$$ Combining arguments above, we are done with the proof of statement (ii).
\end{proof}

\bigskip

Define 
\begin{align}
& \mathbb{J}_{it}\left(\dot{\Delta}_{t,v}\right) =\left[ \mathbf{1}%
\left\{\epsilon_{it}\leq 0\right\} -\mathbf{1}\left\{ \epsilon_{it}\leq \dot{%
\Delta}_{t,v}^{\prime}\Psi_{it}^{0}\right\} \right] 
\begin{bmatrix}
v_{t,0}^{0} \\ 
v_{t,1}^{0}X_{1,it}%
\end{bmatrix}%
,  \notag \\
& h_{i}^{I,1}=\frac{1}{T}\sum_{t=1}^{T}\mathbb{E}\left[ f_{it}(0)%
\begin{bmatrix}
v_{t,0}^{0}v_{t,0}^{\prime} &  & v_{t,0}^{0}v_{t,1}^{0\prime}X_{1,it} \\ 
v_{t,1}^{0}v_{t,0}^{0\prime}X_{1,it} &  & v_{t,1}^{0}v_{t,1}^{0%
\prime}X_{1,it}^{2}%
\end{bmatrix}%
\bigg|\mathscr{D}\right] ,  \notag \\
& h_{i}^{I,2}=\frac{1}{T}\sum_{t=1}^{T}\mathbb{E}\left(\mathbb{J}_{i}\left(%
\dot{\Delta}_{t,v}\right) \bigg|\mathscr{D}^{I_{1}\cup I_{2}}\right) , 
\notag \\
& h_{it}^{III}=v_{t,0}^{0\prime}\hat{V}_{u_{0},i}^{-1}\frac{1}{T}%
\sum_{t^{*}=1}^{T}v_{t^{*},0}^{0}\left(\tau -1\left\{ \epsilon_{it^{*}}\leq
0\right\} \right) +X_{1,it}v_{t,1}^{0\prime}\hat{V}_{u_{1}}^{-1}\frac{1}{T}%
\sum_{t^{*}=1}^{T}e_{1,it^{*}}v_{t^{*},1}^{0}\left(\tau -1\left\{
\epsilon_{it^{*}}\leq 0\right\} \right)  \notag \\
& -\left(v_{t,0}^{0\prime}\hat{V}_{u_{0},i}^{-1}\frac{1}{T}%
\sum_{t^{*}=1}^{T}v_{t^{*},0}^{0}f_{it^{*}}(0)\mu_{1,it^{*}}v_{t^{*},1}^{0%
\prime}\right) \mathfrak{l}^{\prime}\left(O_{1}^{(1)}\right)
^{-1}\left(h_{i}^{I,1}\right) ^{-1}\frac{1}{T}\sum_{t^{*}=1}^{T}\left[ \tau -%
\mathbf{1}\left\{ \epsilon_{it^{*}}\leq 0\right\} \right] 
\begin{bmatrix}
v_{t^{*},0}^{0} \\ 
v_{t^{*},1}^{0}X_{1,it^{*}}%
\end{bmatrix}
\notag \\
& -\left(v_{t,0}^{0\prime}\hat{V}_{u_{0},i}^{-1}\frac{1}{T}%
\sum_{t^{*}=1}^{T}v_{t^{*},0}^{0}\mathbb{E}\left[ f_{it^{*}}(0)\bigg|%
\mathscr{D}^{I_{1}\cup I_{2}}\right] \mu_{1,it^{*}}v_{t^{*},1}^{0\prime}%
\right) \mathfrak{l}^{\prime}\left(O_{1}^{(1)}\right)^{-1}\left(h_{i}^{I,1}%
\right) ^{-1}h_{i}^{I,2}  \notag \\
& =\mathcal{R}_{h,it}^{1}+X_{1,it}\mathcal{R}_{h,it}^{2},  \label{order_h}
\end{align}%
with $\max_{i\in I_{3},t\in[T]}\left\vert \mathcal{R}_{h,it}^{a}\right\vert
=O_{p}(\eta_{N})$ for $a\in \{1,2\}$. Note that $\mathbb{E}\left[ f_{it}(0)%
\bigg|\mathscr{D}^{I_{1}\cup I_{2}}\right] =\mathbb{E}\left[ f_{it}(0)\bigg|%
\mathscr{D}\right]$.

\begin{ass}
\label{ass:15} Let $\mathcal{F}_{it}(\cdot)$ and $f_{it|h_{it}^{III}}(\cdot)$
be the conditional CDF and PDF of $\epsilon_{it}$ given $\mathscr{D}_{e}$
and $h_{it}^{III}$.

\begin{itemize}
\item[(i)] The derivative of the density $f_{it|h_{it}^{III}}$ is uniformly
bounded in absolute value.

\item[(ii)] $\max_{i \in [N],t \in [T]}\left\vert
f_{it|h_{it}^{III}}(0)-f_{it|h_{it}^{III}=0}(0)\right\vert \leq C\left\vert
h_{it}^{III}\right\vert$ for some Lipschitz constant $C>0$.
\end{itemize}
\end{ass}

\begin{lemma}
{\small \label{Lem29} } Under Assumptions \ref{ass:1}-\ref{ass:10} and
Assumption \ref{ass:15}, we have

\begin{itemize}
\item[(i)] $\max_{t\in[T]}\Bigg\Vert\frac{1}{N_{3}}\sum_{i\in
I_{3}}e_{1,it}O_{u,1}^{(1)}u_{i,1}^{0}\Bigg\{\left[ \mathbf{1}%
\left\{\epsilon_{it}\leq 0\right\} -\mathbf{1}\left\{ \epsilon_{it}\leq
\varrho_{it}\left(\hat{u}_{i,0}^{(3,1)},\hat{u}_{i,1}^{(3,1)},%
\left(O_{u,0}^{(1)}\right)
^{\prime-1}v_{t,0}^{0},\left(O_{u,1}^{(1)}\right)^{\prime-1}v_{t,1}^{0}%
\right) \right\} \right] \newline
-\left(F_{it}(0)-F_{it}\left[ \varrho_{it}\left(\hat{u}_{i,0}^{(3,1)},\hat{u}%
_{i,1}^{(3,1)},\left(O_{u,0}^{(1)}\right)
^{\prime-1}v_{t,0}^{0},\left(O_{u,1}^{(1)}\right)
^{\prime-1}v_{t,1}^{0}\right) \right] \right) \Bigg\}\Bigg\Vert%
_{2}=o_{p}\left(\left(N\vee T\right) ^{-\frac{1}{2}}\right) $,

\item[(ii)] $\max_{t\in[T]}\Bigg\Vert\frac{1}{N_{3}}\sum_{i\in
I_{3}}O_{u,0}^{(1)}u_{i,0}^{0}\Bigg\{\left[ \mathbf{1}\left\{
\epsilon_{it}\leq 0\right\} -\mathbf{1}\left\{ \epsilon_{it}\leq
\varrho_{it}\left(\hat{u}_{i,0}^{(3,1)},\hat{u}_{i,1}^{(3,1)},%
\left(O_{u,0}^{(1)}\right)
^{\prime-1}v_{t,0}^{0},\left(O_{u,1}^{(1)}\right)^{\prime-1}v_{t,1}^{0}%
\right) \right\} \right] \newline
-\left(F_{it}(0)-F_{it}\left[ \varrho_{it}\left(\hat{u}_{i,0}^{(3,1)},\hat{u}%
_{i,1}^{(3,1)},\left(O_{u,0}^{(1)}\right)
^{\prime-1}v_{t,0}^{0},\left(O_{u,1}^{(1)}\right)
^{\prime-1}v_{t,1}^{0}\right) \right] \right) \Bigg\}\Bigg\Vert%
_{2}=O_{p}(\eta_{N})$.
\end{itemize}
\end{lemma}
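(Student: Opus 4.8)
\textbf{Proof proposal for Lemma \ref{Lem29}.}

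The plan is to follow the same template developed for Lemma \ref{Lem25} and Lemma \ref{Lem26}, namely: (i) obtain a tractable expansion of the perturbation argument $\varrho_{it}(\hat{u}_{i,0}^{(3,1)},\hat{u}_{i,1}^{(3,1)},(O_{u,0}^{(1)})^{\prime-1}v_{t,0}^{0},(O_{u,1}^{(1)})^{\prime-1}v_{t,1}^{0})$ in terms of known sums, (ii) discretize the relevant parameter set into $\asymp N^{c}$ cells, (iii) on a high-probability event control the discretization error by a chaining/Lipschitz argument, and (iv) apply a Bernstein-type inequality (here in the cross-section index $i$, since conditional on $\mathscr{D}_e^{I_1\cup I_2}$ the summands are independent across $i \in I_3$ by Assumption \ref{ass:1}(i)) together with a union bound over the cells. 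The key difference from Lemma \ref{Lem25}/\ref{Lem26} is that the outer average is now over $i\in I_3$ with weights $e_{1,it}O_{u,1}^{(1)}u_{i,1}^{0}$ (resp. $O_{u,0}^{(1)}u_{i,0}^{0}$) rather than over $t$; this is why we invoke Assumption \ref{ass:15}, which is the regularity condition on the conditional density of $\epsilon_{it}$ given $\mathscr{D}_e$ and given the ``leave-in'' linear term $h_{it}^{III}$ defined in \eqref{order_h}, needed precisely because the column-wise debiasing mixes the randomness of $\{\epsilon_{is}\}_{s}$ across $t$ through $\hat{u}_{i,1}^{(3,1)}$.

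Concretely, first I would expand $\varrho_{it}(\hat{u}_{i,0}^{(3,1)},\hat{u}_{i,1}^{(3,1)},(O_{u,0}^{(1)})^{\prime-1}v_{t,0}^{0},(O_{u,1}^{(1)})^{\prime-1}v_{t,1}^{0})$ along the lines of \eqref{Lem25.3} and \eqref{Lem26.1}: substitute the linear expansions of $\hat{u}_{i,0}^{(3,1)}-O_{u,0}^{(1)}u_{i,0}^{0}$ and $\hat{u}_{i,1}^{(3,1)}-O_{u,1}^{(1)}u_{i,1}^{0}$ from Theorem \ref{Thm3}(i) (equations \eqref{step6_2}--\eqref{step6_3}), of $\hat{\mu}_{1,it}-\mu_{1,it}=e_{1,it}-\hat{e}_{1,it}$ from Lemma \ref{Lem20}(iii), and of $\dot v_{t,1}^{(1)}-O_1^{(1)}v_{t,1}^0$ from Theorem \ref{Thm2}(ii), and collect terms so that $\varrho_{it}=h_{it}^{III}+\mathcal{R}_{\varrho,it}$ where $h_{it}^{III}$ is the leading linear-in-$\{\epsilon_{is}\}$ object in \eqref{order_h} and $\max_{i\in I_3,t}|\mathcal{R}_{\varrho,it}|=o_p((N\vee T)^{-1/2})$ after using Assumption \ref{ass:1}(ix). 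Then I would split the oscillation term $\mathbf{1}\{\epsilon_{it}\le 0\}-\mathbf{1}\{\epsilon_{it}\le\varrho_{it}\}-(F_{it}(0)-F_{it}(\varrho_{it}))$ into a piece evaluated at $h_{it}^{III}$ and a remainder piece accounting for $\mathcal{R}_{\varrho,it}$, exactly as $\hat{\mathbb{I}}_{3,it}^{I}$ and $\hat{\mathbb{I}}_{3,it}^{II}$ were separated in Lemma \ref{Lem25}. For the $h_{it}^{III}$ piece one writes $F_{it}(0)-F_{it}(h_{it}^{III})$ using Assumption \ref{ass:15}(ii), which replaces $f_{it|h_{it}^{III}}(0)$ by $f_{it|h_{it}^{III}=0}(0)$ up to a Lipschitz error of order $|h_{it}^{III}|$; this is what makes the resulting martingale-difference structure clean enough for a Bernstein bound. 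The remainder piece is handled by the envelope/indicator argument in \eqref{step4_3}--\eqref{step4_4}: bound the jump by $\mathbf{1}\{0\le|\epsilon_{it}|\le |\mathcal{R}_{\varrho,it}|\}$, pass to a uniform-over-$s\in\mathbb{S}$ event with $\mathbb{S}=\{\|s\|_2\le M\eta_N\}$, discretize, and apply Bernstein conditional on $\mathscr{D}_e^{I_1\cup I_2}$.

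For statement (i) the target rate is $o_p((N\vee T)^{-1/2})$, so after the above reduction I would run the Bernstein inequality in $i$ over $I_3$ exactly as in \eqref{Wdot_exp2}/\eqref{step4_4}, using that $\max_{i,t}\|e_{1,it}O_{u,1}^{(1)}u_{i,1}^0\|_2\le M\xi_N$ a.s.\ (Assumption \ref{ass:1}(v), Lemma \ref{Lem:bounded u&v_tilde}(i)) and that the conditional variance contributed by each summand is of order $\xi_N^2\eta_N$ (since the indicator restricts $\epsilon_{it}$ to an interval of width $O(\eta_N)$); combined with the union bound over $\asymp N^{K_0+K_1}$ cells and Assumption \ref{ass:1}(ix) this gives the claimed negligibility. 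For statement (ii) the weights are $O_{u,0}^{(1)}u_{i,0}^0$, which do \emph{not} carry the extra $e_{1,it}$ mean-zero factor, so the analogous leading term need only be shown to be $O_p(\eta_N)$ rather than $o_p((N\vee T)^{-1/2})$; this is weaker and follows from the same Bernstein step without needing the sharper cancellations. I expect the main obstacle to be the $h_{it}^{III}$ piece of statement (i): unlike in Lemma \ref{Lem25}, here the ``conditioning set'' for the density is entangled with the very randomness being summed, so one must carefully verify that Assumption \ref{ass:15} is exactly the hook that decouples $f_{it}(0)$ from $h_{it}^{III}$ up to a Lipschitz error, and that this Lipschitz error, when multiplied by $e_{1,it}u_{i,1}^0$ and averaged over $i\in I_3$, is itself $o_p((N\vee T)^{-1/2})$ rather than merely $O_p(\eta_N^2)$ — which again is where Assumption \ref{ass:1}(ix) and the $\xi_N$ bookkeeping must be invoked one last time. $\blacksquare$
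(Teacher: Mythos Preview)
Your overall template is right, but there is a genuine gap in the decomposition of $\varrho_{it}$. You write $\varrho_{it}=h_{it}^{III}+\mathcal{R}_{\varrho,it}$ with $\max_{i,t}|\mathcal{R}_{\varrho,it}|=o_p((N\vee T)^{-1/2})$; this is false. Carrying out the expansion carefully (as in the paper's equation following \eqref{order_h}) yields $\varrho_{it}=h_{it}^{III}+h_{it}^{IV}+\mathcal{R}_{\varrho,it}$ with $\max_{i,t}|\mathcal{R}_{\varrho,it}|=O_p(\eta_N^2)$, where $h_{it}^{IV}$ collects the pieces coming from $\dot v_{t,1}^{(1)}-O_1^{(1)}v_{t,1}^0$ and $O_{u,1}^{(1)}-O_1^{(1)}$. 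This $h_{it}^{IV}$ is $\mathscr{D}^{I_1\cup I_2}$-measurable and of exact order $\eta_N$, so it cannot be absorbed into an $o_p((N\vee T)^{-1/2})$ remainder. If you try to treat it via your envelope/indicator argument, the indicator is over an interval of width $\eta_N$ rather than $\eta_N^2$, and the bound degrades to $O_p(\eta_N)$ --- too coarse for part (i).

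The paper's fix is to keep $h_{it}^{IV}$ explicit and, after introducing the conditional CDF $\mathcal{F}_{it}$ of $\epsilon_{it}$ given $\mathscr{D}_e$ \emph{and} $h_{it}^{III}$, split the target into four pieces $\mathbb{I}_{6,t}^{I},\ldots,\mathbb{I}_{6,t}^{IV}$. The two centered pieces $\mathbb{I}_{6,t}^{I},\mathbb{I}_{6,t}^{II}$ are handled by Bernstein in $i$ conditional on $(\mathscr{D}_e^{I_1\cup I_2},h_{it}^{III})$, as you anticipated. The other two are \emph{not} martingale pieces: one Taylor-expands $F_{it}$ (resp.\ $\mathcal{F}_{it}$) to first order, obtaining $\frac{1}{N_3}\sum_{i\in I_3} e_{1,it}u_{i,1}^0 f_{it}(0)(h_{it}^{III}+h_{it}^{IV})$, and then substitutes the seven explicit components of $h_{it}^{III}+h_{it}^{IV}$ and shows each is $o_p((N\vee T)^{-1/2})$ by a separate argument. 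For the $h_{it}^{IV}$ components the key is that $h_{it}^{IV}$ is $\mathscr{D}^{I_1\cup I_2}$-measurable and $\mathbb{E}[f_{it}(0)e_{1,it}\mid\mathscr{D}]=0$ (Assumption \ref{ass:10}(i)), so a Hoeffding bound conditional on $\mathscr{D}^{I_1\cup I_2}$ supplies an extra $\sqrt{\log(N\vee T)/N}$ factor on top of $\eta_N$. Your proposal never invokes Assumption \ref{ass:10}(i) and has no mechanism to gain this factor; without it the $h_{it}^{IV}$ contribution remains $O_p(\eta_N)$ and part (i) fails. Your reading of Assumption \ref{ass:15} is correct in spirit but is only one ingredient in the $\mathbb{I}_{6,t}^{IV}$ analysis, which further splits $f_{it|h_{it}^{III}}(0)$ around $f_{it|h_{it}^{III}=0}(0)$ and around its conditional mean $f_{it}(0)$ into six sub-terms.
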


\begin{proof}
(i) Recall from (\ref{h_I}) that 
\begin{align*}
h_{i}^{I}& =\mathfrak{l}^{\prime}\left(O_{1}^{(1)}\right) ^{-1}\left(\frac{1%
}{T}\sum_{t=1}^{T}f_{it}(0)%
\begin{bmatrix}
v_{t,0}^{0}v_{t,0}^{\prime} &  & v_{t,0}^{0}v_{t,1}^{0\prime}X_{1,it} \\ 
v_{t,1}^{0}v_{t,0}^{0\prime}X_{1,it} &  & v_{t,1}^{0}v_{t,1}^{0%
\prime}X_{1,it}^{2}%
\end{bmatrix}%
\right) ^{-1}\Bigg[\frac{1}{T}\sum_{t=1}^{T}\left[ \tau -\mathbf{1}%
\left\{\epsilon_{it}\leq 0\right\} \right] 
\begin{bmatrix}
v_{t,0}^{0} \\ 
v_{t,1}^{0}X_{1,it}%
\end{bmatrix}
\\
& +\frac{1}{T}\sum_{t=1}^{T}\left[ \mathbf{1}\left\{ \epsilon_{it}\leq
0\right\} -\mathbf{1}\left\{ \epsilon_{it}\leq \dot{\Delta}%
_{t,v}^{\prime}\Psi_{it}^{0}\right\} \right] 
\begin{bmatrix}
v_{t,0}^{0} \\ 
v_{t,1}^{0}X_{1,it}%
\end{bmatrix}%
\Bigg].
\end{align*}%
By Bernstein's inequality in Lemma \ref{Lem:Bern}, we can show that 
\begin{align*}
& \max_{i\in I_{3}}\left\Vert \frac{1}{T}\sum_{t=1}^{T}\left\{ f_{it}(0)%
\begin{bmatrix}
v_{t,0}^{0}v_{t,0}^{\prime} &  & v_{t,0}^{0}v_{t,1}^{0\prime}X_{1,it} \\ 
v_{t,1}^{0}v_{t,0}^{0\prime}X_{1,it} &  & v_{t,1}^{0}v_{t,1}^{0%
\prime}X_{1,it}^{2}%
\end{bmatrix}%
-\mathbb{E}\left(f_{it}(0)%
\begin{bmatrix}
v_{t,0}^{0}v_{t,0}^{\prime} &  & v_{t,0}^{0}v_{t,1}^{0\prime}X_{1,it} \\ 
v_{t,1}^{0}v_{t,0}^{0\prime}X_{1,it} &  & v_{t,1}^{0}v_{t,1}^{0%
\prime}X_{1,it}^{2}%
\end{bmatrix}%
\bigg|\mathscr{D}\right) \right\} \right\Vert_{F} \\
& =O_{p}\left(\sqrt{\frac{\log (N\vee T)}{T}}\xi_{N}^{2}\right) ,\text{ and}
\\
& \max_{i\in I_{3}}\left\Vert \frac{1}{T}\sum_{t=1}^{T}\left[ \tau -\mathbf{1%
}\left\{ \epsilon_{it}\leq 0\right\} \right] 
\begin{bmatrix}
v_{t,0}^{0} \\ 
v_{t,1}^{0}X_{1,it}%
\end{bmatrix}%
\right\Vert_{2}=O_{p}\left(\sqrt{\frac{\log (N\vee T)}{T}}\xi_{N}\right) .
\end{align*}%
Besides, we observe that 
\begin{align}
& \max_{i\in I_{3},t\in[T]}\left\Vert Var\left(\mathbb{J}_{it}\left(\dot{%
\Delta}_{t,v}\right) \bigg|\mathscr{D}^{I_{1}\cup I_{2}}\right)
\right\Vert_{F}  \notag  \label{I6_1} \\
& \leq \max_{i\in I_{3},t\in[T]}\left\Vert \mathbb{E}\left(\left[\mathbf{1}%
\left\{ \epsilon_{it}\leq 0\right\} -\mathbf{1}\left\{ \epsilon_{it}\leq 
\dot{\Delta}_{t,v}^{\prime}\Psi_{it}^{0}\right\} \right] 
\begin{bmatrix}
v_{t,0}^{0}v_{t,0}^{\prime} &  & v_{t,0}^{0}v_{t,1}^{0\prime}X_{1,it} \\ 
v_{t,1}^{0}v_{t,0}^{0\prime}X_{1,it} &  & v_{t,1}^{0}v_{t,1}^{0%
\prime}X_{1,it}^{2}%
\end{bmatrix}%
\bigg|\mathscr{D}^{I_{1}\cup I_{2}}\right) \right\Vert_{F}  \notag \\
& =\max_{i\in I_{3},t\in[T]}\left\Vert \mathbb{E}\left(\left[%
F_{it}(0)-F_{it}\left(\dot{\Delta}_{t,v}^{\prime}\Psi_{it}^{0}\right) \right]
\begin{bmatrix}
v_{t,0}^{0}v_{t,0}^{\prime} &  & v_{t,0}^{0}v_{t,1}^{0\prime}X_{1,it} \\ 
v_{t,1}^{0}v_{t,0}^{0\prime}X_{1,it} &  & v_{t,1}^{0}v_{t,1}^{0%
\prime}X_{1,it}^{2}%
\end{bmatrix}%
\bigg|\mathscr{D}^{I_{1}\cup I_{2}}\right) \right\Vert_{F}  \notag \\
& =\max_{t\in[T]}\left\Vert \dot{\Delta}_{t,v}\right\Vert_{2}\max_{i\in
I_{3},t\in[T]}\left\Vert \mathbb{E}\left(f_{it}(\tilde{s}_{it})\Psi_{it}^{0}%
\begin{bmatrix}
v_{t,0}^{0}v_{t,0}^{\prime} &  & v_{t,0}^{0}v_{t,1}^{0\prime}X_{1,it} \\ 
v_{t,1}^{0}v_{t,0}^{0\prime}X_{1,it} &  & v_{t,1}^{0}v_{t,1}^{0%
\prime}X_{1,it}^{2}%
\end{bmatrix}%
\bigg|\mathscr{D}^{I_{1}\cup I_{2}}\right) \right\Vert_{F}  \notag \\
& =O_{p}(\eta_{N}),
\end{align}%
where the first equality holds by law of iterated expectations, the second
equality is by mean-value theorem for $\left\vert \tilde{s}_{it}\right\vert $
lies between 0 and $\left\vert \dot{\Delta}_{t,v}^{\prime}\Psi_{it}^{0}\right%
\vert $ and the last line is by Theorem \ref{Thm2}(ii) and Assumption \ref%
{ass:1}(iv). Similarly, for any $\vartheta >0$, we also have 
\begin{equation*}
\max_{i\in I_{3},t\in[T]}\sum_{s=t+1}^{T}\left\Vert Cov\left(\mathbb{J}%
_{it}\left(\dot{\Delta}_{t,v}\right) ,\mathbb{J}_{is}\left(\dot{\Delta}%
_{t,v}\right) ^{\prime}\bigg|\mathscr{D}^{I_{1}\cup
I_{2}}\right)\right\Vert_{F}=O_{p}\left(\eta_{N}^{\frac{2}{2+\vartheta }%
}\right) .
\end{equation*}%
By similar arguments as used in (\ref{Wdot_exp1}) and (\ref{Wdot_exp2}), we
have 
\begin{equation}
\max_{i\in I_{3}}\left\Vert \frac{1}{T}\sum_{t=1}^{T}\left\{ \mathbb{J}%
_{it}\left(\dot{\Delta}_{t,v}\right) -\mathbb{E}\left(\mathbb{J}_{it}\left(%
\dot{\Delta}_{t,v}\right) \bigg|\mathscr{D}^{I_{1}\cup I_{2}}\right)
\right\} \right\Vert_{2}=o_{p}\left(\left(N\vee T\right)^{-1/2}\right) .
\label{pross_J}
\end{equation}%
Together with the fact that $\max_{i\in I_{3}}\left\Vert \mathbb{J}_{i}\left(%
\dot{\Delta}_{t,v}\right) \right\Vert_{2}=O_{p}(\eta_{N})$, uniformly over $t%
\in[T]$, we have 
\begin{equation*}
h_{i}^{I}=\mathfrak{l}^{\prime}\left(O_{1}^{(1)}\right)
^{-1}\left(h_{i}^{I,1}\right) ^{-1}\frac{1}{T}\sum_{t=1}^{T}\left[ \tau -%
\mathbf{1}\left\{ \epsilon_{it}\leq 0\right\} \right] 
\begin{bmatrix}
v_{t,0}^{0} \\ 
v_{t,1}^{0}X_{1,it}%
\end{bmatrix}%
+\mathfrak{l}^{\prime}\left(O_{1}^{(1)}\right) ^{-1}\left(h_{i}^{I,1}\right)
^{-1}h_{i}^{I,2}+o_{p}\left(\left(N\vee T\right)^{-1/2}\right) .
\end{equation*}%
By Assumption \ref{ass:1}(iv) and Lemma \ref{Lem:bounded u&v_tilde}, we have 
$\max_{i\in I_{3}}\left\Vert h_{i}^{I,1}\right\Vert_{F}=O(1)~a.s.$. Like (%
\ref{I6_1}), we can show that $\max_{i\in I_{3}}\left\Vert
h_{i}^{I,2}\right\Vert_{2}=O_{p}(\eta_{N})$.

Similarly as (\ref{Lem28.1}), uniformly in $i\in I_{3}$ and $t\in[T]$ we
have 
\begin{align}
& \varrho_{it}\left(\hat{u}_{i,0}^{(3,1)},\hat{u}_{i,1}^{(3,1)},%
\left(O_{u,0}^{(1)}\right)^{\prime-1}v_{t,0}^{0},\left(O_{u,1}^{(1)}%
\right)^{\prime-1}v_{t,1}^{0}\right)  \notag  \label{I6_2} \\
& =\hat{u}_{i,0}^{(3,1)\prime}\left(O_{u,0}^{(1)}\right)^{%
\prime-1}v_{t,0}^{0}-u_{i,0}^{0\prime}v_{t,0}^{0}+\hat{\mu}_{1,it}\hat{u}%
_{i,1}^{(3,1)\prime}\dot{v}_{t,1}^{(1)}-\mu_{1,it}u_{i,1}^{0%
\prime}v_{t,1}^{0}  \notag \\
& +\hat{e}_{1,it}\hat{u}_{i,1}^{(3,1)\prime}\left(O_{u,1}^{(1)}\right)
^{\prime-1}v_{t,1}^{0}-e_{1,it}u_{i,1}^{0\prime}v_{t,1}^{0}  \notag \\
& =v_{t,0}^{0\prime}O_{u,0}^{-1}\left(\hat{u}%
_{i,0}^{(3,1)}-O_{u,0}^{(1)}u_{i,0}^{0}\right)
+X_{1,it}v_{t,1}^{0\prime}O_{u,1}^{(1)-1}\left(\hat{u}%
_{i,1}^{(3,1)}-O_{u,1}^{(1)}u_{i,1}^{0}\right)+\mu_{1,it}%
\left(O_{1}^{(1)}u_{i,1}^{0}\right) ^{\prime}\left(\dot{v}%
_{t,1}^{(1)}-O_{1}^{(1)}v_{t,1}^{0}\right)  \notag \\
&
+\mu_{1,it}u_{i,1}^{0\prime}O_{u,1}^{(1)\prime}%
\left(O_{u,1}^{(1)}-O_{1}^{(1)}\right) v_{t,1}^{0}+o_{p}\left(\left(N\vee
T\right)^{-1/2}\right)  \notag \\
& =v_{t,0}^{0\prime}\left(O_{0}^{(1)}\right) ^{-1}\left(\hat{u}%
_{i,0}^{(3,1)}-O_{u,0}^{(1)}u_{i,0}^{0}\right)
+X_{1,it}v_{t,1}^{0\prime}\left(O_{1}^{(1)}\right) ^{-1}\left(\hat{u}%
_{i,1}^{(3,1)}-O_{u,1}^{(1)}u_{i,1}^{0}\right)
+\mu_{1,it}\left(O_{1}^{(1)}u_{i,1}^{0}\right) ^{\prime}\left(\dot{v}%
_{t,1}^{(1)}-O_{1}^{(1)}v_{t,1}^{0}\right)  \notag \\
&
+\mu_{1,it}u_{i,1}^{0\prime}O_{1}^{(1)\prime}\left(O_{u,1}^{(1)}-O_{1}^{(1)}%
\right) v_{t,1}^{0}+o_{p}\left(\left(N\vee T\right) ^{-1/2}\right) ,
\end{align}%
where the last line is by the fact that $\left\Vert
O_{u,0}^{(1)}-O_{0}^{(1)}\right\Vert_{F}=O_{p}(\eta_{N})$ and $\left\Vert
O_{u,1}^{(1)}-O_{1}^{(1)}\right\Vert_{F}=O_{p}(\eta_{N})$. Combining (\ref%
{step6_2}), (\ref{step6_3}) and Theorem \ref{Thm2}(iii), we have 
\begin{align}
& \varrho_{it}\left(\hat{u}_{i,0}^{(3,1)},\hat{u}_{i,1}^{(3,1)},%
\left(O_{u,0}^{(1)}\right)
^{\prime-1}v_{t,0}^{0},\left(O_{u,1}^{(1)}\right)^{\prime-1}v_{t,1}^{0}%
\right)  \notag  \label{I6_3} \\
& =v_{t,0}^{0\prime}\hat{V}_{u_{0},i}^{-1}\frac{1}{T}%
\sum_{t^{*}=1}^{T}v_{t^{*},0}^{0}\left(\tau -1\left\{ \epsilon_{it^{*}}\leq
0\right\} \right) +v_{t,0}^{0\prime}\hat{V}_{u_{0},i}^{-1}\frac{1}{T}%
\sum_{t^{*}=1}^{T}v_{t^{*},0}^{0}f_{it^{*}}(0)\left(\mu_{1,it^{*}}u_{i,1}^{0%
\prime}v_{t^{*},1}^{0}-\hat{\mu}_{1,it^{*}}^{(1)}\dot{u}_{i,1}^{(1)\prime}%
\dot{v}_{t^{*},1}^{(1)}\right)  \notag \\
& +v_{t,0}^{0\prime}\hat{V}_{u_{0},i}^{-1}\frac{1}{T}%
\sum_{t=1}^{T}f_{it^{*}}(0)v_{t^{*},0}^{0}v_{t^{*},1}^{0\prime}u_{i,1}^{0}%
\left(e_{1,it^{*}}-\hat{e}_{1,it^{*}}\right) +X_{1,it}v_{t,1}^{0\prime}\hat{V%
}_{u_{1}}^{-1}\frac{1}{T}\sum_{t^{*}=1}^{T}e_{1,it^{*}}v_{t^{*},1}^{0}\left(%
\tau -1\left\{ \epsilon_{it^{*}}\leq 0\right\} \right)  \notag \\
& +\mu_{1,it}\left(O_{1}^{(1)}u_{i,1}^{0}\right) ^{\prime}\left(\dot{v}%
_{t,1}^{(1)}-O_{1}^{(1)}v_{t,1}^{0}\right)
+\mu_{1,it}u_{i,1}^{0\prime}O_{1}^{(1)\prime}\left(O_{u,1}^{(1)}-O_{1}^{(1)}%
\right)v_{t,1}^{0}+o_{p}\left(\left(N\vee T\right) ^{-1/2}\right)  \notag \\
& =v_{t,0}^{0\prime}\hat{V}_{u_{0},i}^{-1}\frac{1}{T}%
\sum_{t^{*}=1}^{T}v_{t^{*},0}^{0}\left(\tau -1\left\{ \epsilon_{it^{*}}\leq
0\right\} \right) +X_{1,it}v_{t,1}^{0\prime}\hat{V}_{u_{1}}^{-1}\frac{1}{T}%
\sum_{t^{*}=1}^{T}e_{1,it^{*}}v_{t^{*},1}^{0}\left(\tau -1\left\{
\epsilon_{it^{*}}\leq 0\right\} \right)  \notag \\
& -v_{t,0}^{0\prime}\hat{V}_{u_{0},i}^{-1}\frac{1}{T}%
\sum_{t^{*}=1}^{T}v_{t^{*},0}^{0}f_{it^{*}}(0)\mu_{1,it^{*}}\left(\dot{u}%
_{i,1}^{(1)}-O_{1}^{(1)}u_{i,1}^{0}\right)
^{\prime}O_{1}^{(1)}v_{t^{*},1}^{0}  \notag \\
& -v_{t,0}^{0\prime}\hat{V}_{u_{0},i}^{-1}\frac{1}{T}%
\sum_{t^{*}=1}^{T}v_{t^{*},0}^{0}f_{it^{*}}(0)\left(\hat{\mu}%
_{1,it^{*}}-\mu_{1,it^{*}}\right)
u_{i,1}^{0\prime}v_{t^{*},1}^{0}-v_{t,0}^{0\prime}\hat{V}_{u_{0},i}^{-1}%
\frac{1}{T}\sum_{t^{*}=1}^{T}v_{t^{*},0}^{0}f_{it^{*}}(0)\mu_{1,it^{*}}%
\left(O_{1}^{(1)}u_{i,1}^{0}\right) ^{\prime}\left(\dot{v}%
_{t^{*},1}^{(1)}-O_{1}^{(1)}v_{t^{*},1}^{0}\right)  \notag \\
& +v_{t,0}^{0\prime}\hat{V}_{u_{0},i}^{-1}\frac{1}{T}%
\sum_{t^{*}=1}^{T}f_{it^{*}}(0)v_{t^{*},0}^{0}v_{t^{*},1}^{0%
\prime}u_{i,1}^{0}\left(e_{1,it^{*}}-\hat{e}_{1,it^{*}}\right)
+\mu_{1,it}\left(O_{1}^{(1)}u_{i,1}^{0}\right) ^{\prime}\left(\dot{v}%
_{t,1}^{(1)}-O_{1}^{(1)}v_{t,1}^{0}\right)  \notag \\
&
+\mu_{1,it}u_{i,1}^{0\prime}O_{1}^{(1)\prime}\left(O_{u,1}^{(1)}-O_{1}^{(1)}%
\right)v_{t,1}^{0}+o_{p}\left(\left(N\vee T\right) ^{-1/2}\right)  \notag \\
& =v_{t,0}^{0\prime}\hat{V}_{u_{0},i}^{-1}\frac{1}{T}%
\sum_{t^{*}=1}^{T}v_{t^{*},0}^{0}\left(\tau -1\left\{ \epsilon_{it^{*}}\leq
0\right\} \right) +X_{1,it}v_{t,1}^{0\prime}\hat{V}_{u_{1}}^{-1}\frac{1}{T}%
\sum_{t^{*}=1}^{T}e_{1,it^{*}}v_{t^{*},1}^{0}\left(\tau -1\left\{
\epsilon_{it^{*}}\leq 0\right\} \right)  \notag \\
& -v_{t,0}^{0\prime}\hat{V}_{u_{0},i}^{-1}\frac{1}{T}%
\sum_{t^{*}=1}^{T}v_{t^{*},0}^{0}\mathbb{E}\left[ f_{it^{*}}(0)\bigg|%
\mathscr{D}^{I_{1}\cup I_{2}}\right] \mu_{1,it^{*}}v_{t^{*},1}^{0%
\prime}h_{i}^{I}+h_{it}^{IV}+o_{p}\left(\left(N\vee T\right)^{-1/2}\right) 
\notag \\
& -v_{t,0}^{0\prime}\hat{V}_{u_{0},i}^{-1}\frac{1}{T}%
\sum_{t^{*}=1}^{T}v_{t^{*},0}^{0}\left(f_{it^{*}}(0)-\mathbb{E}\left[%
f_{it^{*}}(0)\bigg|\mathscr{D}^{I_{1}\cup I_{2}}\right] \right)
\mu_{1,it}\left(O_{1}^{(1)}u_{i,1}^{0}\right) ^{\prime}\left(\dot{v}%
_{t^{*},1}^{(1)}-O_{1}^{(1)}v_{t^{*},1}^{0}\right)  \notag \\
& -v_{t,0}^{0\prime}\hat{V}_{u_{0},i}^{-1}\frac{1}{T}%
\sum_{t^{*}=1}^{T}v_{t^{*},0}^{0}\left(f_{it^{*}}(0)-\mathbb{E}\left[%
f_{it^{*}}(0)\bigg|\mathscr{D}^{I_{1}\cup I_{2}}\right] \right)
\mu_{1,it^{*}}v_{t^{*},1}^{0\prime}h_{i}^{I}  \notag \\
& :=h_{it}^{III}+h_{it}^{IV}+\mathcal{R}_{\varrho ,it}
\end{align}%
such that $\max_{i\in I_{3},t\in[T]}\left\vert \mathcal{R}%
_{\varrho,it}\right\vert =O_{p}(\eta_{N}^{2})$, where the first equality is
by inserting the linear expansion for $\hat{u}%
_{i,0}^{(3,1)}-O_{u,0}^{(1)}u_{i,0}^{0}$ in (\ref{step6_2}) and $\hat{u}%
_{i,1}^{(3,1)}-O_{u,1}^{(1)}u_{i,1}^{0}$ in (\ref{step6_3}), the second
equality is by inserting the linear expansion for $\dot{u}%
_{i,0}^{(1)}-O_{u,0}^{(1)}u_{i,0}^{0}$, the third equality is by the fact
that $\hat{\mu}_{1,it}-\mu_{1,it}=e_{1,it}-\hat{e}_{1,it}$ which leads to
the cancelling of the fourth and sixth terms in the second equality, the
last equality is by the definition in (\ref{order_h}) and 
\begin{align*}
h_{it}^{IV}& =-v_{t,0}^{0\prime}\hat{V}_{u_{0},i}^{-1}\frac{1}{T}%
\sum_{t^{*}=1}^{T}v_{t^{*},0}^{0}\mathbb{E}\left[ f_{it^{*}}(0)\bigg|%
\mathscr{D}^{I_{1}\cup I_{2}}\right] \mu_{1,it^{*}}%
\left(O_{1}^{(1)}u_{i,1}^{0}\right) ^{\prime}\left(\dot{v}%
_{t^{*},1}^{(1)}-O_{1}^{(1)}v_{t^{*},1}^{0}\right) \\
& +\mu_{1,it}\left(O_{1}^{(1)}u_{i,1}^{0}\right) ^{\prime}\left(\dot{v}%
_{t,1}^{(1)}-O_{1}^{(1)}v_{t,1}^{0}\right)
+\mu_{1,it}u_{i,1}^{0\prime}O_{1}^{(1)\prime}\left(O_{u,1}^{(1)}-O_{1}^{(1)}%
\right) v_{t,1}^{0} \\
& =\mathcal{R}_{h,it}^{3}+\mu_{1,it}\mathcal{R}_{h,it}^{4}
\end{align*}
with $\max_{i\in I_{3},t\in[T]}\left\vert \mathcal{R}_{h,it}^{a}\right\vert
=O_{p}(\eta_{N})$ for $a\in \{3,4\}$, and the last equality holds by the
fact that 
\begin{equation*}
\max_{i\in I_{3}}\left\Vert \frac{1}{T}\sum_{t^{*}=1}^{T}v_{t^{*},0}^{0}%
\left(f_{it^{*}}(0)-\mathbb{E}\left[ f_{it^{*}}(0)\bigg|\mathscr{D}%
^{I_{1}\cup I_{2}}\right] \right)
\mu_{1,it^{*}}\left(O_{1}^{(1)}u_{i,1}^{0}\right) ^{\prime}\left(\dot{v}%
_{t^{*},1}^{(1)}-O_{1}^{(1)}v_{tv,1}^{0}\right)
\right\Vert_{2}=o_{p}\left(\left(N\vee T\right) ^{-1/2}\right)
\end{equation*}%
and 
\begin{align*}
& \max_{i\in I_{3}}\left\Vert \frac{1}{T}\sum_{t^{*}=1}^{T}v_{t,0}^{0}%
\left(f_{it^{*}}(0)-\mathbb{E}\left[ f_{it^{*}}(0)\bigg|\mathscr{D}%
^{I_{1}\cup I_{2}}\right] \right)
\mu_{1,it^{*}}v_{t^{*},1}^{0\prime}h_{i}^{I}\right\Vert_{2} \\
& \leq \max_{i\in I_{3}}\left\vert \frac{1}{T}%
\sum_{t^{*}=1}^{T}v_{t^{*},0}^{0}\left(f_{it^{*}}(0)-\mathbb{E}\left[%
f_{it^{*}}(0)\bigg|\mathscr{D}^{I_{1}\cup I_{2}}\right] \right)
\mu_{1,it^{*}}v_{t^{*},1}^{0\prime}\right\vert \max_{i\in
I_{3}}||h_{i}^{I}||_{2}=o_{p}\left(\left(N\vee T\right) ^{-1/2}\right)
\end{align*}%
by conditional Bernstein's inequality given $\mathscr{D}^{I_{1}\cup I_{2}}$
and similar arguments as used in (\ref{pross_J}).

We notice that 
\begin{align}  \label{I6_4}
&\frac{1}{N_{3}}\sum_{i\in I_{3}}e_{1,it}u_{i,1}^{0}\Bigg\{\left[\mathbf{1}%
\left\{\epsilon_{it}\leq 0\right\}-\mathbf{1}\left\{\epsilon_{it}\leq%
\varrho_{it}\left(\hat{u}_{i,0}^{(3,1)},\hat{u}_{i,1}^{(3,1)},%
\left(O_{u,0}^{(1)}\right)^{\prime-1}v_{t,0}^{0},\left(O_{u,1}^{(1)}%
\right)^{\prime-1}v_{t,1}^{0}\right)\right\}\right]  \notag \\
&-\left(F_{it}(0)-F_{it}\left[ \varrho_{it}\left(\hat{u}_{i,0}^{(3,1)},\hat{u%
}_{i,1}^{(3,1)},\left(O_{u,0}^{(1)}\right)^{\prime-1}v_{t,0}^{0},%
\left(O_{u,1}^{(1)}\right)^{\prime-1}v_{t,1}^{0}\right) \right]\right)\Bigg\}
\notag \\
&=\frac{1}{N_{3}}\sum_{i\in I_{3}}e_{1,it}u_{i,1}^{0}\left\{\left[\mathbf{1}%
\left\{\epsilon_{it}\leq 0\right\}-\mathbf{1}\left\{\epsilon_{it}\leq
h_{it}^{III}+h_{it}^{IV}\right\}\right]-\left(F_{it}(0)-F_{it}\left[%
h_{it}^{III}+h_{it}^{IV} \right]\right)\right\}  \notag \\
&+\frac{1}{N_{3}}\sum_{i\in I_{3}}e_{1,it}u_{i,1}^{0}\Bigg\{\left[\mathbf{1}%
\left\{\epsilon_{it}\leq h_{it}^{III}+h_{it}^{IV}\right\}-\mathbf{1}%
\left\{\epsilon_{it}\leq h_{it}^{III}+h_{it}^{IV}+\mathcal{R}%
_{\varrho,it}\right\}\right]  \notag \\
&-\left[F_{it}\left(h_{it}^{III}+h_{it}^{IV}\right)-F_{it}%
\left(h_{it}^{III}+h_{it}^{IV}+\mathcal{R}_{\varrho,it} \right)\right]\Bigg\}
\notag \\
&=\frac{1}{N_{3}}\sum_{i\in I_{3}}e_{1,it}u_{i,1}^{0}\left\{\left[\mathbf{1}%
\left\{\epsilon_{it}\leq 0\right\}-\mathbf{1}\left\{\epsilon_{it}\leq
h_{it}^{III}+h_{it}^{IV}\right\}\right]-\left[\mathcal{F}_{it}(0)-\mathcal{F}%
_{it}\left(h_{it}^{III}+h_{it}^{IV}\right)\right]\right\}  \notag \\
&+\frac{1}{N_{3}}\sum_{i\in I_{3}}e_{1,it}u_{i,1}^{0}\Bigg\{\left[\mathbf{1}%
\left\{\epsilon_{it}\leq h_{it}^{III}+h_{it}^{IV}\right\}-\mathbf{1}%
\left\{\epsilon_{it}\leq h_{it}^{III}+h_{it}^{IV}+\mathcal{R}%
_{\varrho,it}\right\}\right]  \notag \\
&-\left[\mathcal{F}_{it}\left(h_{it}^{III}+h_{it}^{IV}\right)-\mathcal{F}%
_{it}\left(h_{it}^{III}+h_{it}^{IV}+\mathcal{R}_{\varrho,it} \right)\right]%
\Bigg\}  \notag \\
&+\frac{1}{N_{3}}\sum_{i\in I_{3}}e_{1,it}u_{i,1}^{0}\left\{\left[\mathcal{F}%
_{it}(0)-\mathcal{F}_{it}\left(h_{it}^{III}+h_{it}^{IV}\right)\right]-\left[%
F_{it}(0)-F_{it}\left(h_{it}^{III}+h_{it}^{IV} \right)\right]\right\}  \notag
\\
&+\frac{1}{N_{3}}\sum_{i\in I_{3}}e_{1,it}u_{i,1}^{0}\left[\mathcal{F}%
_{it}\left(h_{it}^{III}+h_{it}^{IV}\right)-\mathcal{F}_{it}%
\left(h_{it}^{III}+h_{it}^{IV}+\mathcal{R}_{\varrho,it} \right)\right] 
\notag \\
&-\frac{1}{N_{3}}\sum_{i\in I_{3}}e_{1,it}u_{i,1}^{0}\left[%
F_{it}\left(h_{it}^{III}+h_{it}^{IV}\right)-F_{it}%
\left(h_{it}^{III}+h_{it}^{IV}+\mathcal{R}_{\varrho,it} \right)\right] 
\notag \\
&:=\mathbb{I}_{6,t}^{I}+\mathbb{I}_{6,t}^{II}-\mathbb{I}_{6,t}^{III}+\mathbb{%
I}_{6,t}^{IV}+o_{p}\left(\left(N\vee T\right)^{-\frac{1}{2}}\right)
\end{align}
such that $\mathbb{I}_{6,t}^{I}:=\frac{1}{N_{3}}\sum_{i\in I_{3}}\mathbb{I}%
_{6,it}^{I}$, $\mathbb{I}_{6,t}^{II}:=\frac{1}{N_{3}}\sum_{i\in I_{3}}%
\mathbb{I}_{6,it}^{II}$, $\mathbb{I}_{6,t}^{III}:=\frac{1}{N_{3}}\sum_{i\in
I_{3}}\mathbb{I}_{6,it}^{III}$, $\mathbb{I}_{6,t}^{IV}:=\frac{1}{N_{3}}%
\sum_{i\in I_{3}}\mathbb{I}_{6,it}^{IV}$, with 
\begin{align}  \label{Lem:I6_2}
\mathbb{I}_{6,it}^{I}&=e_{1,it}u_{i,1}^{0}\left\{\left[\mathbf{1}%
\left\{\epsilon_{it}\leq 0\right\}-\mathbf{1}\left\{\epsilon_{it}\leq
h_{it}^{III}+h_{it}^{IV}\right\}\right]-\left[\mathcal{F}_{it}(0)-\mathcal{F}%
_{it}\left(h_{it}^{III}+h_{it}^{IV}\right)\right]\right\},  \notag \\
\mathbb{I}_{6,it}^{II}&=e_{1,it}u_{i,1}^{0}\Bigg\{\left[\mathbf{1}%
\left\{\epsilon_{it}\leq h_{it}^{III}+h_{it}^{IV}\right\}-\mathbf{1}%
\left\{\epsilon_{it}\leq h_{it}^{III}+h_{it}^{IV}+\mathcal{R}%
_{\varrho,it}\right\}\right]  \notag \\
&-\left[\mathcal{F}_{it}\left(h_{it}^{III}+h_{it}^{IV}\right)-\mathcal{F}%
_{it}\left(h_{it}^{III}+h_{it}^{IV}+\mathcal{R}_{\varrho,it} \right)\right]%
\Bigg\}, \\
\mathbb{I}_{6,it}^{III}&=e_{1,it}u_{i,1}^{0}\left[F_{it}(0)-F_{it}%
\left(h_{it}^{III}+h_{it}^{IV} \right)\right],  \notag \\
\mathbb{I}_{6,it}^{IV}&=e_{1,it}u_{i,1}^{0}\left[\mathcal{F}_{it}(0)-%
\mathcal{F}_{it}\left(h_{it}^{III}+h_{it}^{IV}\right)\right].  \notag
\end{align}
The last line in (\ref{I6_4}) is due to the fact that the last two terms in
the second equality is $o_{p}\left(\left(N\vee T\right)^{-\frac{1}{2}%
}\right) $ by mean-value theorem, Assumption (\ref{ass:1})(iv), and
Assumption (\ref{ass:1})(viii) and the union bound for $\mathcal{R}%
_{\varrho,it}$.

For $\mathbb{I}_{6,t}^{I}$ and $\mathbb{I}_{6,t}^{II}$, conditional on $%
\mathscr{D}_{e}^{I_{1}\cup I_{2}}$ and $h_{it}^{III}$, the randomness is
from $\epsilon_{it}$, and we observe that $\mathbb{I}_{6,it}^{I}$ and $%
\mathbb{I}_{6,it}^{II}$ are independent over $i$ by conditioning on $%
\mathscr{D}_{e}^{I_{1}\cup I_{2}}$ and $h_{it}^{III}$. Therefore, we obtain
that $\max_{t\in[T]}\left\Vert \mathbb{I}_{6,t}^{I}\right\Vert_{2}=o_{p}%
\left(\left(N\vee T\right)^{-\frac{1}{2}}\right)$ and $\max_{t\in[T]%
}\left\Vert \mathbb{I}_{6,t}^{II}\right\Vert_{2}=o_{p}\left(\left(N\vee
T\right)^{-\frac{1}{2}}\right)$ by conditional Bernstein's inequality for
independent sequence given $\mathscr{D}_{e}^{I_{1}\cup I_{2}}$ and $%
h_{it}^{III}$.

For $\mathbb{I}_{6,t}^{III}$, we notice that 
\begin{align}
\mathbb{I}_{6,t}^{III}& =\frac{1}{N_{3}}\sum_{i\in I_{3}}e_{1,it}u_{i,1}^{0}%
\left[ F_{it}(0)-F_{it}\left(h_{it}^{III}+h_{it}^{IV}\right) \right]  \notag
\label{I6_5} \\
& =\frac{1}{N_{3}}\sum_{i\in I_{3}}e_{1,it}u_{i,1}^{0}f_{it}(\dot{s}%
_{it})\left(h_{it}^{III}+h_{it}^{IV}\right)  \notag \\
& =\frac{1}{N_{3}}\sum_{i\in
I_{3}}e_{1,it}u_{i,1}^{0}f_{it}(0)\left(h_{it}^{III}+h_{it}^{IV}\right) +%
\frac{1}{N_{3}}\sum_{i\in I_{3}}e_{1,it}u_{i,1}^{0}\left[ f_{it}(\dot{s}%
_{it})-f_{it}(0)\right] \left(h_{it}^{III}+h_{it}^{IV}\right)  \notag \\
& =\frac{1}{N_{3}}\sum_{i\in
I_{3}}e_{1,it}u_{i,1}^{0}f_{it}(0)\left(h_{it}^{III}+h_{it}^{IV}\right)
+o_{p}\left(\left(N\vee T\right) ^{-\frac{1}{2}}\right) \quad \text{%
uniformly,}
\end{align}%
where the second line is by mean-value theorem with $\left\vert \dot{s}%
_{it}\right\vert $ lies between 0 and $\left\vert
h_{it}^{III}+h_{it}^{IV}\right\vert $ and the last line is by Assumption \ref%
{ass:1}(viii). By inserting $h_{it}^{III}$ and $h_{it}^{IV}$, we have 
\begin{align}
& \frac{1}{N_{3}}\sum_{i\in
I_{3}}e_{1,it}u_{i,1}^{0}f_{it}(0)\left(h_{it}^{III}+h_{it}^{IV}\right) 
\notag  \label{I6_6} \\
& =\frac{1}{N_{3}}\sum_{i\in
I_{3}}e_{1,it}u_{i,1}^{0}f_{it}(0)v_{t,0}^{0\prime}\hat{V}_{u_{0},i}^{-1}%
\frac{1}{T}\sum_{t^{*}=1}^{T}v_{t^{*},0}^{0}\left(\tau
-1\left\{\epsilon_{it^{*}}\leq 0\right\} \right)  \notag \\
& +\frac{1}{N_{3}}\sum_{i\in
I_{3}}e_{1,it}u_{i,1}^{0}f_{it}(0)X_{1,it}v_{t,1}^{0\prime}\hat{V}%
_{u_{1}}^{-1}\frac{1}{T}\sum_{t^{*}=1}^{T}e_{1,it^{*}}v_{t^{*},1}^{0}\left(%
\tau -1\left\{ \epsilon_{it^{*}}\leq 0\right\} \right)  \notag \\
& -\frac{1}{N_{3}}\sum_{i\in
I_{3}}e_{1,it}u_{i,1}^{0}f_{it}(0)\left(v_{t,0}^{0\prime}\hat{V}%
_{u_{0},i}^{-1}\frac{1}{T}\sum_{t^{*}=1}^{T}v_{t^{*},0}^{0}f_{it^{*}}(0)%
\mu_{1,it^{*}}v_{t^{*},1}^{0\prime}\right) \mathfrak{l}^{\prime}%
\left(O_{1}^{(1)}\right) ^{-1}\left(h_{i}^{I,1}\right) ^{-1}\frac{1}{T}%
\sum_{t^{*}=1}^{T}\left[ \tau -\mathbf{1}\left\{ \epsilon_{it^{*}}\leq
0\right\} \right] 
\begin{bmatrix}
v_{t^{*},0}^{0} \\ 
v_{t^{*},1}^{0}X_{1,it^{*}}%
\end{bmatrix}
\notag \\
& -\frac{1}{N_{3}}\sum_{i\in
I_{3}}e_{1,it}u_{i,1}^{0}f_{it}(0)\left(v_{t,0}^{0\prime}\hat{V}%
_{u_{0},i}^{-1}\frac{1}{T}\sum_{t^{*}=1}^{T}v_{t^{*},0}^{0}\mathbb{E}\left[
f_{it^{*}}(0)\bigg|\mathscr{D}^{I_{1}\cup I_{2}}\right]
\mu_{1,it^{*}}v_{t^{*},1}^{0\prime}\right) \mathfrak{l}^{\prime}%
\left(O_{1}^{(1)}\right)^{-1}\left(h_{i}^{I,1}\right) ^{-1}h_{i}^{I,2} 
\notag \\
& -\frac{1}{N_{3}}\sum_{i\in
I_{3}}e_{1,it}u_{i,1}^{0}f_{it}(0)v_{t,0}^{0\prime}\hat{V}_{u_{0},i}^{-1}%
\frac{1}{T}\sum_{t^{*}=1}^{T}v_{t^{*},0}^{0}\mathbb{E}\left[ f_{it^{*}}(0)%
\bigg|\mathscr{D}^{I_{1}\cup I_{2}}\right] \mu_{1,it^{*}}%
\left(O_{1}^{(1)}u_{i,1}^{0}\right) ^{\prime}\left(\dot{v}%
_{t^{*},1}^{(1)}-O_{1}^{(1)}v_{t^{*},1}^{0}\right)  \notag \\
& +\frac{1}{N_{3}}\sum_{i\in
I_{3}}e_{1,it}u_{i,1}^{0}f_{it}(0)\mu_{1,it}\left(O_{1}^{(1)}u_{i,1}^{0}%
\right) ^{\prime}\left(\dot{v}_{t,1}^{(1)}-O_{1}^{(1)}v_{t,1}^{0}\right) 
\notag \\
& +\frac{1}{N_{3}}\sum_{i\in
I_{3}}e_{1,it}u_{i,1}^{0}f_{it}(0)\mu_{1,it}u_{i,1}^{0\prime}O_{1}^{(1)%
\prime}\left(O_{u,1}^{(1)}-O_{1}^{(1)}\right) v_{t,1}^{0}  \notag \\
& :=\sum_{m\in [7]}\mathbb{I}_{6,t}^{III,m}.
\end{align}%
$\mathbb{I}_{6,t}^{III,1}$, $\mathbb{I}_{6,t}^{III,2}$, $\mathbb{I}%
_{6,t}^{III,3}$ can be analyzed in the same manner, and we take $\mathbb{I}%
_{6,t}^{III,1}$ for instance. Noticed that 
\begin{equation*}
\mathbb{I}_{6,t}^{III,1}=\frac{1}{N_{3}T}\sum_{i\in
I_{3}}\sum_{t^{*}=1}^{T}e_{1,it}u_{i,1}^{0}f_{it}(0)v_{t,0}^{0\prime}\hat{V}%
_{u_{0},i}^{-1}v_{t^{*},0}^{0}\left(\tau -1\left\{ \epsilon_{it^{*}}\leq
0\right\} \right) ,
\end{equation*}%
it is clear that conditioning on $\mathscr{D}_{e}$, $\mathbb{I}%
_{6,t}^{III,1} $ is mean zero by Assumption \ref{ass:1}(ii) and the
randomness is from $\epsilon_{it}$ which is strong mixing. With the similar
arguments as the second term in (\ref{Lem:J1.4}), we obtain that $\max_{t\in
[T]}\left\Vert \mathbb{I}_{6,t}^{III,1}\right\Vert_{2}=O_{p}\left(\sqrt{%
\frac{\log (N\vee T)}{NT}}\xi_{N}\right) $, and by Assumption \ref{ass:1}%
(ix), it follows that 
\begin{equation*}
\max_{t\in[T]}\left\Vert \mathbb{I}_{6,t}^{III,1}\right\Vert
_{2}=o_{p}\left(\left(N\vee T\right) ^{-1/2}\right) .
\end{equation*}

We can also analyze $\mathbb{I}_{6,t}^{III,4}$ and $\mathbb{I}_{6,t}^{III,5}$
in the same manner. Take $\mathbb{I}_{6,t}^{III,4}$ for instance. Note that 
\begin{equation*}
-\mathbb{I}_{6,t}^{III,4}=\frac{1}{N_{3}T}\sum_{i\in
I_{3}}\sum_{t^{*}=1}^{T}e_{1,it}u_{i,1}^{0}f_{it}(0)\left(v_{t,0}^{0\prime}%
\hat{V}_{u_{0},i}^{-1}v_{t,0}^{0}\mathbb{E}\left[ f_{it^{*}}(0)\bigg|%
\mathscr{D}^{I_{1}\cup I_{2}}\right] \mu_{1,it^{*}}v_{t^{*},1}^{0\prime}%
\right) \mathfrak{l}^{\prime}\left(O_{1}^{(1)}\right)
^{-1}\left(h_{i}^{I,1}\right) ^{-1}h_{i}^{I,2}
\end{equation*}%
which is mean zero by conditioning on $\mathscr{D}^{I_{1}\cup I_{2}}$ owing
to Assumption \ref{ass:10}(i) and the fact that $h_{i}^{I,1}$ and $%
h_{i}^{I,2}$ are fixed given $\mathscr{D}^{I_{1}\cup I_{2}}$. Then by
similar arguments for $\mathbb{I}_{6,t}^{III,1}$ above, we have 
\begin{equation}
\max_{t\in[T]}\left\Vert \mathbb{I}_{6,t}^{III,4}\right\Vert=o_{p}\left(%
\left(N\vee T\right) ^{-1/2}\right) .  \label{I6_8}
\end{equation}

For $\mathbb{I}_{6,t}^{III,6}$ and $\mathbb{I}_{6,t}^{III,7}$, we have 
\begin{align}
\max_{t\in[T]}\left\Vert \mathbb{I}_{6,t}^{III,6}\right\Vert_{2}&\leq \max_{t%
\in[T]}\left\Vert \frac{1}{N_{3}}\sum_{i\in
I_{3}}e_{1,it}u_{i,1}^{0}f_{it}(0)\mu_{1,it}u_{i,1}^{0\prime}\right\Vert_{2}%
\max_{t\in[T]}\left\Vert \dot{v}_{t,1}^{(1)}-O_{1}^{(1)}v_{t,1}^{0}\right%
\Vert_{2}  \notag  \label{I6_10} \\
& \leq \max_{t\in[T]}\left\Vert \frac{1}{N_{3}}\sum_{i\in
I_{3}}e_{1,it}u_{i,1}^{0}f_{it}(0)\mu_{1,it}u_{i,1}^{0\prime}\right%
\Vert_{2}O_{p}(\eta_{N})  \notag \\
& \leq O_{p}\left(\sqrt{\frac{\log (N\vee T)}{N}}\xi_{N}\right)O_{p}(%
\eta_{N})=o_{p}\left(\left(N\vee T\right) ^{-1/2}\right) , \\
\max_{t\in[T]}\left\Vert \mathbb{I}_{6,t}^{III,7}\right\Vert_{2}&\leq \max_{t%
\in[T]}\left\Vert \frac{1}{N_{3}}\sum_{i\in
I_{3}}e_{1,it}u_{i,1}^{0}f_{it}(0)\mu_{1,it}u_{i,1}^{0\prime}\right\Vert_{2}%
\max_{t\in[T]}\left\Vert v_{t,1}^{0}\right\Vert_{2}\left\Vert
O_{u,1}^{(1)}-O_{1}^{(1)}\right\Vert_{F}  \notag \\
& \leq O_{p}\left(\sqrt{\frac{\log (N\vee T)}{N}}\xi_{N}\right)O_{p}(%
\eta_{N})=o_{p}\left(\left(N\vee T\right) ^{-1/2}\right) ,
\end{align}%
where the second inequality is by Theorem \ref{Thm2}(ii) and the third
inequality is by Hoeffding's inequality conditional on $\mathscr{D}$ and the
last line combines Lemma \ref{Lem:bounded u&v_tilde}(i) and the fact that $%
\left\Vert O_{u,1}^{(1)}-O_{1}^{(1)}\right\Vert_{F}=O_{p}(\eta_{N})$.
Combining (\ref{I6_5})-(\ref{I6_10}), we have $\max_{t\in [T]}\left\Vert 
\mathbb{I}_{6,it}^{III}\right\Vert_{2}=o_{p}\left(\left(N\vee T\right)
^{-1/2}\right) $.

Last, we analyze $\mathbb{I}_{6,t}^{IV}$. Like (\ref{I6_5}), we have 
\begin{align}
\mathbb{I}_{6,t}^{IV}& =\frac{1}{N_{3}}\sum_{i\in I_{3}}e_{1,it}u_{i,1}^{0}%
\left[ \mathcal{F}_{it}(0)-\mathcal{F}_{it}\left(h_{it}^{III}+h_{it}^{IV}%
\right) \right]  \notag  \label{I6_11} \\
& =\frac{1}{N_{3}}\sum_{i\in I_{3}}e_{1,it}u_{i,1}^{0}f_{it|h_{it}^{III}}(%
\dot{s}_{it})\left(h_{it}^{III}+h_{it}^{IV}\right)  \notag \\
& =\frac{1}{N_{3}}\sum_{i\in
I_{3}}e_{1,it}u_{i,1}^{0}f_{it|h_{it}^{III}}(0)%
\left(h_{it}^{III}+h_{it}^{IV}\right) +\frac{1}{N_{3}}\sum_{i\in
I_{3}}e_{1,it}u_{i,1}^{0}\left[ f_{it|h_{it}^{III}}(\dot{s}%
_{it})-f_{it|h_{it}^{III}}(0)\right] \left(h_{it}^{III}+h_{it}^{IV}\right) 
\notag \\
& =\frac{1}{N_{3}}\sum_{i\in
I_{3}}e_{1,it}u_{i,1}^{0}f_{it|h_{it}^{III}}(0)h_{it}^{III}+\frac{1}{N_{3}}%
\sum_{i\in
I_{3}}e_{1,it}u_{i,1}^{0}f_{it|h_{it}^{III}}(0)h_{it}^{IV}+o_{p}\left(%
\left(N\vee T\right) ^{-\frac{1}{2}}\right)  \notag \\
& =\frac{1}{N_{3}}\sum_{i\in I_{3}}e_{1,it}u_{i,1}^{0}f_{it|h_{it}^{III}}(0)%
\left[ h_{it}^{III}-\mathbb{E}\left(h_{it}^{III}\bigg|\mathscr{D}%
_{e}^{I_{1}\cup I_{2}}\right) \right] +\frac{1}{N_{3}}\sum_{i\in
I_{3}}e_{1,it}u_{i,1}^{0}f_{it|h_{it}^{III}}(0)\mathbb{E}\left(h_{it}^{III}%
\bigg|\mathscr{D}_{e}^{I_{1}\cup I_{2}}\right)  \notag \\
& +\frac{1}{N_{3}}\sum_{i\in I_{3}}e_{1,it}u_{i,1}^{0}h_{it}^{IV}\left[%
f_{it|h_{it}^{III}}(0)-\mathbb{E}\left(f_{it|h_{it}^{III}}(0)|\mathscr{D}%
_{e}^{I_{1}\cup I_{2}}\right) \right] +\frac{1}{N_{3}}\sum_{i\in
I_{3}}e_{1,it}u_{i,1}^{0}h_{it}^{IV}\mathbb{E}\left(f_{it|h_{it}^{III}}(0)|%
\mathscr{D}_{e}^{I_{1}\cup I_{2}}\right)  \notag \\
& +o_{p}\left(\left(N\vee T\right) ^{-\frac{1}{2}}\right)  \notag \\
& =\frac{1}{N_{3}}\sum_{i\in I_{3}}e_{1,it}u_{i,1}^{0}\left[
f_{it|h_{it}^{III}}(0)-f_{it|h_{it}^{III}=0}(0)\right] \left[ h_{it}^{III}-%
\mathbb{E}\left(h_{it}^{III}\bigg|\mathscr{D}_{e}^{I_{1}\cup I_{2}}\right) %
\right]  \notag \\
& +\frac{1}{N_{3}}\sum_{i\in
I_{3}}e_{1,it}u_{i,1}^{0}f_{it|h_{it}^{III}=0}(0)\left[ h_{it}^{III}-\mathbb{%
E}\left(h_{it}^{III}\bigg|\mathscr{D}_{e}^{I_{1}\cup I_{2}}\right) \right] 
\notag \\
& +\frac{1}{N_{3}}\sum_{i\in I_{3}}e_{1,it}u_{i,1}^{0}\mathbb{E}\left(
h_{it}^{III}\bigg|\mathscr{D}_{e}^{I_{1}\cup I_{2}}\right) \left[%
f_{it|h_{it}^{III}}(0)-\mathbb{E}\left(f_{it|h_{it}^{III}}(0)|\mathscr{D}%
_{e}^{I_{1}\cup I_{2}}\right) \right]  \notag \\
& +\frac{1}{N_{3}}\sum_{i\in I_{3}}e_{1,it}u_{i,1}^{0}\mathbb{E}%
\left(h_{it}^{III}\bigg|\mathscr{D}_{e}^{I_{1}\cup I_{2}}\right) \mathbb{E}%
\left(f_{it|h_{it}^{III}}(0)|\mathscr{D}_{e}^{I_{1}\cup I_{2}}\right)  \notag
\\
& +\frac{1}{N_{3}}\sum_{i\in I_{3}}e_{1,it}u_{i,1}^{0}h_{it}^{IV}\left[%
f_{it|h_{it}^{III}}(0)-\mathbb{E}\left(f_{it|h_{it}^{III}}(0)|\mathscr{D}%
_{e}^{I_{1}\cup I_{2}}\right) \right] +\frac{1}{N_{3}}\sum_{i\in
I_{3}}e_{1,it}u_{i,1}^{0}h_{it}^{IV}\mathbb{E}\left(f_{it|h_{it}^{III}}(0)|%
\mathscr{D}_{e}^{I_{1}\cup I_{2}}\right)  \notag \\
& +o_{p}\left(\left(N\vee T\right) ^{-\frac{1}{2}}\right)  \notag \\
& :=\sum_{m\in [6]}\mathbb{I}_{6,t}^{IV,m}+o_{p}\left(\left(N\vee T\right)
^{-\frac{1}{2}}\right) ,
\end{align}%
where all the $o_{p}\left(\left(N\vee T\right) ^{-\frac{1}{2}}\right) $
terms hold uniformly over $t\in[T]$ and $o_{p}\left(\left(N\vee T\right) ^{-%
\frac{1}{2}}\right) $ term in the fourth equality is by mean-value theorem,
Assumption \ref{ass:15}(i) and the fact that $\left\vert \dot{s}%
_{it}\right\vert $ lies between 0 and $\left\vert
h_{it}^{III}+h_{it}^{IV}\right\vert $.

For $\mathbb{I}_{6,t}^{IV,1}$, with Assumption \ref{ass:15}(ii), (\ref%
{order_h}) and the fact that 
\begin{align*}
&\max_{i\in I_{3},t\in[T]}\left\vert \mathbb{E}\left(h_{it}^{III}\bigg|%
\mathscr{D}_{e}^{I_{1}\cup I_{2}}\right)\right\vert \\
&=\max_{i\in I_{3},t\in[T]}\left\vert\mathbb{E}\left[\left(v_{t,0}^{0\prime}%
\hat{V}_{u_{0},i}^{-1}\frac{1}{T}\sum_{t=1}^{T}v_{t,0}^{0}\mathbb{E}\left[%
f_{it}(0)\bigg|\mathscr{D}^{I_{1}\cup I_{2}}\right]\mu_{1,it}v_{t,1}^{0%
\prime}\right)\mathfrak{l}^{\prime}\left(O_{1}^{(1)}\right)^{-1}%
\left(h_{i}^{I,1}\right)^{-1}h_{i}^{I,2} \bigg|\mathscr{D}_{e}^{I_{1}\cup
I_{2}}\right]\right\vert \\
&=\max_{i\in I_{3},t\in[T]}\left\vert\left(v_{t,0}^{0\prime}\hat{V}%
_{u_{0},i}^{-1}\frac{1}{T}\sum_{t=1}^{T}v_{t,0}^{0}\mathbb{E}\left[f_{it}(0)%
\bigg|\mathscr{D}^{I_{1}\cup I_{2}}\right]\mu_{1,it}v_{t,1}^{0\prime}\right)%
\mathfrak{l}^{\prime}\left(O_{1}^{(1)}\right)^{-1}\left(h_{i}^{I,1}%
\right)^{-1}h_{i}^{I,2}\right\vert \\
&=O_{p}(\eta_{N}),
\end{align*}
we have 
\begin{align}  \label{I6_12}
\max_{t\in[T]}\left\Vert \mathbb{I}_{6,t}^{IV,1}\right\Vert_{2}\lesssim%
\max_{t\in[T]}\frac{1}{N_{3}}\sum_{i\in I_{3}}\left\Vert
e_{1,it}u_{i,1}^{0}\right\Vert_{2} \left\vert
h_{it}^{III}\right\vert\left(\left\vert h_{it}^{III}\right\vert+\left\vert%
\mathbb{E}\left(h_{it}^{III}\bigg|\mathscr{D}_{e}^{I_{1}\cup
I_{2}}\right)\right\vert\right)=O_{p}(\eta_{N}^{2}).
\end{align}

For $\mathbb{I}_{6,t}^{IV,2}$, $\mathbb{I}_{6,t}^{IV,3}$ and $\mathbb{I}%
_{6,t}^{IV,5}$, conditioning on $\mathscr{D}_{e}^{I_{1}\cup I_{2}}$, the
randomness is only from $h_{it}^{III}$, which is independent across $i$, and 
$\mathbb{I}_{6,t}^{IV,2}$, $\mathbb{I}_{6,t}^{IV,3}$ and $\mathbb{I}%
_{6,t}^{IV,5}$ are zero mean by conditioning on $\mathscr{D}_{e}^{I_{1}\cup
I_{2}}$. Similar to the arguments for $\mathbb{I}_{6,t}^{I}$ and $\mathbb{I}%
_{6,t}^{II}$ in (\ref{Lem:I6_2}), we have 
\begin{align}  \label{I6_13}
\max_{t\in[T]}\left\Vert \mathbb{I}_{6,t}^{IV,m}\right\Vert_{2}=o_{p}\left(%
\left(N\vee T\right)^{-\frac{1}{2}}\right), m\in\{2,3,5\}.
\end{align}

For $\mathbb{I}_{6,t}^{IV,4}$, by inserting $\mathbb{E}\left(h_{it}^{III}%
\bigg|\mathscr{D}_{e}^{I_{1}\cup I_{2}}\right) $ and the fact that $\mathbb{E%
}\left(f_{it|h_{it}^{III}}(0)|\mathscr{D}_{e}^{I_{1}\cup
I_{2}}\right)=f_{it}(0)$, it yields 
\begin{align}
\mathbb{I}_{6,t}^{IV,4}& =-\frac{1}{N_{3}}\sum_{i\in
I_{3}}e_{1,it}f_{it}(0)u_{i,1}^{0}\left(v_{t,0}^{0\prime}\hat{V}%
_{u_{0},i}^{-1}\frac{1}{T}\sum_{t=1}^{T}v_{t,0}^{0}\mathbb{E}\left[ f_{it}(0)%
\bigg|\mathscr{D}^{I_{1}\cup I_{2}}\right] \mu_{1,it}v_{t,1}^{0\prime}\right)%
\mathfrak{l}^{\prime}\left(O_{1}^{(1)}\right)
^{-1}\left(h_{i}^{I,1}\right)^{-1}h_{i}^{I,2}  \notag  \label{I6_14} \\
& =\mathbb{I}_{6,t}^{III,4}=o_{p}\left(\left(N\vee T\right)
^{-1/2}\right)\quad \text{uniformly},
\end{align}
where the last equality is by (\ref{I6_8}).

For $\mathbb{I}_{6,t}^{IV,6}$, we notice that 
\begin{align}
\mathbb{I}_{6,t}^{IV,6}& =\frac{1}{N_{3}}\sum_{i\in
I_{3}}e_{1,it}u_{i,1}^{0}h_{it}^{IV}\mathbb{E}\left(f_{it|h_{it}^{III}}(0)|%
\mathscr{D}_{e}^{I_{1}\cup I_{2}}\right)  \notag  \label{I6_15} \\
& =\frac{1}{N_{3}}\sum_{i\in
I_{3}}e_{1,it}f_{it}(0)u_{i,1}^{0}h_{it}^{IV}=\sum_{m\in \{5,6,7\}}\mathbb{I}%
_{6,t}^{III,m}=o_{p}\left(\left(N\vee T\right) ^{-1/2}\right) \quad \text{%
uniformly}.
\end{align}%
Combining (\ref{I6_11})-(\ref{I6_15}) yields $\max_{t\in [T]}\left\Vert 
\mathbb{I}_{6,it}^{III}\right\Vert_{2}=o_{p}\left(\left(N\vee T\right)
^{-1/2}\right) ,$ which leads to the desired result in statement (i).

(ii) As in (\ref{I6_4}), we have 
\begin{align*}
& \frac{1}{N_{3}}\sum_{i\in I_{3}}O_{u,0}^{(1)}u_{i,0}^{0}\Bigg\{\left[ 
\mathbf{1}\left\{ \epsilon_{it}\leq 0\right\} -\mathbf{1}\left\{
\epsilon_{it}\leq \varrho_{it}\left(\hat{u}_{i,0}^{(3,1)},\hat{u}%
_{i,1}^{(3,1)},\left(O_{u,0}^{(1)}\right)
^{\prime-1}v_{t,0}^{0},\left(O_{u,1}^{(1)}\right)^{\prime-1}v_{t,1}^{0}%
\right) \right\} \right] \\
& -\left(F_{it}(0)-F_{it}\left[ \varrho_{it}\left(\hat{u}_{i,0}^{(3,1)},\hat{%
u}_{i,1}^{(3,1)},\left(O_{u,0}^{(1)}\right)
^{\prime-1}v_{t,0}^{0},\left(O_{u,1}^{(1)}\right)
^{\prime-1}v_{t,1}^{0}\right) \right] \right) \Bigg\} \\
& =\frac{1}{N_{3}}\sum_{i\in I_{3}}u_{i,1}^{0}\left\{ \left[ \mathbf{1}%
\left\{ \epsilon_{it}\leq 0\right\} -\mathbf{1}\left\{ \epsilon_{it}\leq
h_{it}^{III}+h_{it}^{IV}\right\} \right] -\left[ \mathcal{F}_{it}(0)-%
\mathcal{F}_{it}\left(h_{it}^{III}+h_{it}^{IV}\right) \right] \right\} \\
& +\frac{1}{N_{3}}\sum_{i\in I_{3}}u_{i,1}^{0}\Bigg\{\left[ \mathbf{1}%
\left\{ \epsilon_{it}\leq h_{it}^{III}+h_{it}^{IV}\right\} -\mathbf{1}%
\left\{ \epsilon_{it}\leq h_{it}^{III}+h_{it}^{IV}+\mathcal{R}%
_{\varrho,it}\right\} \right] \\
& -\left[ \mathcal{F}_{it}\left(h_{it}^{III}+h_{it}^{IV}\right) -\mathcal{F}%
_{it}\left(h_{it}^{III}+h_{it}^{IV}+\mathcal{R}_{\varrho ,it}\right) \right]%
\Bigg\} \\
& +\frac{1}{N_{3}}\sum_{i\in I_{3}}u_{i,1}^{0}\left\{ \left[ \mathcal{F}%
_{it}(0)-\mathcal{F}_{it}\left(h_{it}^{III}+h_{it}^{IV}\right) \right] -%
\left[ F_{it}(0)-F_{it}\left(h_{it}^{III}+h_{it}^{IV}\right) \right]\right\}
\\
& +\frac{1}{N_{3}}\sum_{i\in I_{3}}u_{i,1}^{0}\left[ \mathcal{F}%
_{it}\left(h_{it}^{III}+h_{it}^{IV}\right) -\mathcal{F}_{it}%
\left(h_{it}^{III}+h_{it}^{IV}+\mathcal{R}_{\varrho ,it}\right) \right] \\
& -\frac{1}{N_{3}}\sum_{i\in I_{3}}u_{i,1}^{0}\left[ F_{it}%
\left(h_{it}^{III}+h_{it}^{IV}\right) -F_{it}\left(h_{it}^{III}+h_{it}^{IV}+%
\mathcal{R}_{\varrho ,it}\right) \right] \\
& =\frac{1}{N_{3}}\sum_{i\in I_{3}}u_{i,1}^{0}\left\{ \left[ \mathbf{1}%
\left\{ \epsilon_{it}\leq 0\right\} -\mathbf{1}\left\{ \epsilon_{it}\leq
h_{it}^{III}+h_{it}^{IV}\right\} \right] -\left[ \mathcal{F}_{it}(0)-%
\mathcal{F}_{it}\left(h_{it}^{III}+h_{it}^{IV}\right) \right] \right\} \\
& +\frac{1}{N_{3}}\sum_{i\in I_{3}}u_{i,1}^{0}\Bigg\{\left[ \mathbf{1}%
\left\{ \epsilon_{it}\leq h_{it}^{III}+h_{it}^{IV}\right\} -\mathbf{1}%
\left\{ \epsilon_{it}\leq h_{it}^{III}+h_{it}^{IV}+\mathcal{R}%
_{\varrho,it}\right\} \right] \\
& -\left[ \mathcal{F}_{it}\left(h_{it}^{III}+h_{it}^{IV}\right) -\mathcal{F}%
_{it}\left(h_{it}^{III}+h_{it}^{IV}+\mathcal{R}_{\varrho ,it}\right) \right]%
\Bigg\} \\
& +\frac{1}{N_{3}}\sum_{i\in I_{3}}u_{i,1}^{0}\left\{ \left[ \mathcal{F}%
_{it}(0)-\mathcal{F}_{it}\left(h_{it}^{III}+h_{it}^{IV}\right) \right] -%
\left[ F_{it}(0)-F_{it}\left(h_{it}^{III}+h_{it}^{IV}\right) \right]\right\}
+o_{p}\left(\left(N\vee T\right) ^{-1/2}\right) ,
\end{align*}%
where the last equality holds by similar arguments as used in the last line
in (\ref{I6_4}). We can show the first and second terms are $%
o_{p}\left(\left(N\vee T\right) ^{-1/2}\right) $ by similar arguments for $%
\mathbb{I}_{6,t}^{I}$ and $\mathbb{I}_{6,t}^{II}$. The third term is $%
O_{p}(\eta_{N})$ by mean-value theorem and Assumption \ref{ass:1}(viii).
Compared with $\mathbb{I}_{6,t}^{III}$ and $\mathbb{I}_{6,t}^{IV}$ in the
proof of statement (i), the third term here is not mean zero, and converges
to zero at the rate $\eta_{N}$.
\end{proof}

\begin{lemma}
{\small \label{Lem30} } Under Assumptions \ref{ass:1}-\ref{ass:10} and
Assumption \ref{ass:15}, we have

\begin{itemize}
\item[(i)] $\max_{t\in[T]}\Bigg\Vert\frac{1}{N_{3}}\sum_{i\in
I_{3}}e_{1,it}O_{u,1}^{(1)}u_{i,1}^{0}\Bigg\{\mathbf{1}\left\{
\epsilon_{it}\leq 0\right\} -\mathbf{1}\left\{ \epsilon_{it}\leq
\varrho_{it}\left(\hat{u}_{i,0}^{(3,1)},\hat{u}_{i,1}^{(3,1)},\hat{v}%
_{t,0}^{(3,1)},\hat{v}_{t,1}^{(3,1)}\right) \right\} \newline
-\left(F_{it}(0)-F_{it}\left[ \varrho_{it}\left(\hat{u}_{i,0}^{(3,1)},\hat{u}%
_{i,1}^{(3,1)},\hat{v}_{t,0}^{(3,1)},\hat{v}_{t,1}^{(3,1)}\right) \right]
\right) \Bigg\}\Bigg\Vert_{2}=o_{p}\left(\left(N\vee T\right) ^{-\frac{1}{2}%
}\right) $,

\item[(ii)] $\max_{t\in[T]}\Bigg\Vert\frac{1}{N_{3}}\sum_{i\in
I_{3}}O_{u,0}^{(1)}u_{i,0}^{0}\Bigg\{\mathbf{1}\left\{ \epsilon_{it}\leq
0\right\} -\mathbf{1}\left\{ \epsilon_{it}\leq \varrho_{it}\left(\hat{u}%
_{i,0}^{(3,1)},\hat{u}_{i,1}^{(3,1)},\hat{v}_{t,0}^{(3,1)},\hat{v}%
_{t,1}^{(3,1)}\right) \right\} \newline
-\left(F_{it}(0)-F_{it}\left[ \varrho_{it}\left(\hat{u}_{i,0}^{(3,1)},\hat{u}%
_{i,1}^{(3,1)},\hat{v}_{t,0}^{(3,1)},\hat{v}_{t,1}^{(3,1)}\right) \right]
\right) \Bigg\}\Bigg\Vert_{2}=O_{p}(\eta_{N})$.
\end{itemize}
\end{lemma}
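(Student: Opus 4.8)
The plan is to reduce Lemma \ref{Lem30} to the already-established Lemma \ref{Lem29} by writing the argument of $\varrho_{it}$ evaluated at $(\hat u_{i,0}^{(3,1)},\hat u_{i,1}^{(3,1)},\hat v_{t,0}^{(3,1)},\hat v_{t,1}^{(3,1)})$ as the argument appearing in Lemma \ref{Lem29} (evaluated at the rotated true factors $(O_{u,0}^{(1)})^{\prime-1}v_{t,0}^{0}$ and $(O_{u,1}^{(1)})^{\prime-1}v_{t,1}^{0}$) plus a perturbation
\[
\delta_{it}=\hat u_{i,0}^{(3,1)\prime}\bigl(\hat v_{t,0}^{(3,1)}-(O_{u,0}^{(1)})^{\prime-1}v_{t,0}^{0}\bigr)+\hat e_{1,it}\hat u_{i,1}^{(3,1)\prime}\bigl(\hat v_{t,1}^{(3,1)}-(O_{u,1}^{(1)})^{\prime-1}v_{t,1}^{0}\bigr).
\]
First I would substitute the linear expansions $\hat u_{i,j}^{(3,1)}=O_{u,j}^{(1)}u_{i,j}^{0}+(\text{small})$ from Theorem \ref{Thm3}(i) (cf. (\ref{step6_2})--(\ref{step6_3})), the expansion of $\hat v_{t,1}^{(3,1)}$ from (\ref{C.25}), the bound $\max_t\|\hat v_{t,0}^{(3,1)}-O_{u,0}^{(1)}v_{t,0}^{0}\|_2=O_p(\eta_N)$ from (\ref{max:v}), and the uniform facts $\|O_{u,j}^{(1)}-O_j^{(1)}\|_F=O_p(\eta_N)$, $\hat e_{1,it}=e_{1,it}+O_p(\eta_N)$. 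This turns $\delta_{it}$ into an explicit leading term $h_{it}^{V}$ whose $j=1$ part has the form $e_{1,it}u_{i,1}^{0\prime}(\hat V_{v_1,t}^{(3)})^{-1}\frac1{N_3}\sum_{i'\in I_3}e_{1,i't}u_{i',1}^{0}(\tau-\mathbf1\{\epsilon_{i't}\le0\})$, i.e. a cross-sectional average over $I_3$, plus a remainder that is $o_p((N\vee T)^{-1/2})$ uniformly. Combined with (\ref{I6_3}) this gives $\varrho_{it}(\hat u_{i,0}^{(3,1)},\hat u_{i,1}^{(3,1)},\hat v_{t,0}^{(3,1)},\hat v_{t,1}^{(3,1)})=h_{it}^{III}+h_{it}^{IV}+h_{it}^{V}+\mathcal R_{\varrho,it}$ with $\max_{i\in I_3,t\in[T]}|\mathcal R_{\varrho,it}|=O_p(\eta_N^2)$.

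With this linearized argument I would copy the decomposition of Lemma \ref{Lem29}, i.e. (\ref{I6_4})--(\ref{I6_6}) and (\ref{I6_11})--(\ref{I6_15}), with $h_{it}^{III}+h_{it}^{IV}$ replaced everywhere by $h_{it}^{III}+h_{it}^{IV}+h_{it}^{V}$. The $\mathcal R_{\varrho,it}$-term is absorbed by a union bound over $t$, the mean-value theorem, and Assumptions \ref{ass:1}(iv),(viii); the stochastic-equicontinuity pieces and the Lipschitz correction between $\mathcal F_{it}$ and $F_{it}$ (via Assumption \ref{ass:15}(i)--(ii)) are handled exactly as there; and the cancellations $\mathbb I_{6,t}^{IV,4}=\mathbb I_{6,t}^{III,4}$ and $\mathbb I_{6,t}^{IV,6}=\sum_{m\in\{5,6,7\}}\mathbb I_{6,t}^{III,m}$, together with their new $h^{V}$-counterparts, still hold because, conditional on the appropriate $\sigma$-field, $\mathbb E[f_{it|\cdot}(0)\mid\mathscr D_e^{I_1\cup I_2}\text{ or }\mathscr D_e]=f_{it}(0)$. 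The genuinely new bias contribution $\frac1{N_3}\sum_{i\in I_3}e_{1,it}O_{u,1}^{(1)}u_{i,1}^{0}f_{it}(0)h_{it}^{V}$ is, after inserting the above form of $h_{it}^{V}$ and using the identity $\frac1{N_3}\sum_{i\in I_3}e_{1,it}^{2}f_{it}(0)u_{i,1}^{0}u_{i,1}^{0\prime}=\hat V_{v_1,t}^{(3)}$, cancelled against the corresponding $\mathcal F_{it}$-correction term exactly as $\mathbb I_{6,t}^{III,4}$ cancels $\mathbb I_{6,t}^{IV,4}$ in Lemma \ref{Lem29}, leaving an $o_p((N\vee T)^{-1/2})$ remainder; this proves part (i). For part (ii) the weight $O_{u,0}^{(1)}u_{i,0}^{0}$ carries no factor $e_{1,it}$, so the analogous bias term is not mean zero and no cancellation occurs, but by the mean-value theorem and Assumption \ref{ass:1}(viii) it is $O_p(\eta_N)$, which is all that is claimed, while the stochastic-equicontinuity and Lipschitz pieces are unchanged.

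The main obstacle is that, unlike in Lemma \ref{Lem29}, the new argument $h_{it}^{V}$ is itself a cross-sectional average over $I_3$, so the diagonal index $i'=i$ makes $h_{it}^{V}$ depend on $\epsilon_{it}$ and destroys the conditional independence across $i$ (given $\mathscr D_e^{I_1\cup I_2}$) that underlies both the conditional exponential inequalities and the $\mathcal F_{it}$--$F_{it}$ cancellations. The remedy I would use is twofold: (a) split $h_{it}^{V}$ into its diagonal part (a single $i'=i$ term, contributing $O_p(\xi_N^2/N)=o_p((N\vee T)^{-1/2})$ after the outer average over $i$) and its off-diagonal part $h_{it}^{V,\neq}$; and (b) work conditionally on the enlarged $\sigma$-field generated by $\mathscr D_e^{I_1\cup I_2}$, $\{\epsilon_{i't}\}_{i'\in I_3,\,i'\neq i}$ and $h_{it}^{III}$, with respect to which $h_{it}^{V,\neq}$ is measurable while, by Assumption \ref{ass:1}(i), $\epsilon_{it}$ still has conditional c.d.f. $F_{it|h_{it}^{III}}$. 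This restores both the independence-across-$i$ structure needed for Lemma \ref{Lem:Bern}(ii) and the identity $\mathbb E[f_{it|h_{it}^{III},h_{it}^{V,\neq}}(0)\mid\mathscr D_e^{I_1\cup I_2}]=f_{it}(0)$ that drives the cancellations; everything else is a transcription of the Lemma \ref{Lem29} argument, and no rate condition beyond Assumptions \ref{ass:1}(ix), \ref{ass:10} and \ref{ass:15} already in force is needed.
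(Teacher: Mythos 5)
The proposal reduces Lemma \ref{Lem30} to Lemma \ref{Lem29} by linearising $\hat v_{t,j}^{(3,1)}$ via (\ref{C.25}) and inserting the resulting extra shift $h_{it}^{V}$ into the argument of $\varrho_{it}$. This is a natural idea, and step (a) (peeling off the diagonal $i'=i$ term, which is of order $\xi_N^2/N$) is fine. The difficulty is step (b). After you condition on the $i$-specific enlarged $\sigma$-field $\mathscr G_{-i}:=\sigma\bigl(\mathscr D_e^{I_1\cup I_2},\{\epsilon_{i't}\}_{i'\in I_3,\,i'\neq i},h_{it}^{III}\bigr)$, you indeed obtain the right conditional c.d.f. for $\epsilon_{it}$ (by Assumption \ref{ass:1}(i)), and the pointwise cancellation $\mathbb E[f_{it|h_{it}^{III},h_{it}^{V,\neq}}(0)\mid\mathscr D_e^{I_1\cup I_2}]=f_{it}(0)$ holds. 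But this conditioning cannot "restore the independence-across-$i$ structure" that Lemma \ref{Lem:Bern} requires for the cross-sectional sum $\frac1{N_3}\sum_{i\in I_3}(\cdot)_i$: the $\sigma$-field $\mathscr G_{-i}$ is different for each $i$, so there is no single conditioning event under which the $N_3$ summands are mutually independent. Indeed, each summand now depends on the entire cross-section $\{\epsilon_{i't}\}_{i'\in I_3}$ through $h_{it}^{V,\neq}$, which is nearly identical across $i$ (it differs only by one term of order $1/N$), so the summands are strongly mutually dependent. A Bernstein-type inequality conditioned on $\mathscr D_e^{I_1\cup I_2}$ is therefore not applicable as written, and the step establishing $\max_t\|\mathbb I_{6,t}^{I}\|_2=o_p((N\vee T)^{-1/2})$ is missing its justification.

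The paper resolves this by the covering argument of Lemma \ref{Lem26}: the offending quantities ($\hat v_{t,j}^{(3,1)}$, or equivalently the common shift corresponding to your $h_{t}^{V}$, and, as the paper phrases it, $\{\hat u_{i,j}^{(3,1)}\}$) are placed in a shrinking parameter ball and a supremum plus discretisation over a net of cardinality $T^{O(1)}$ is taken; for each fixed parameter value the summands are conditionally independent given $\mathscr D_e^{I_1\cup I_2}$, and a union bound then delivers the uniform-in-$t$ rate. To repair your proof you would replace the conditioning step (b) with this chaining argument: treat the cross-sectional average $h_t^V$ (which is the same for all $i$ up to an $O(1/N)$ diagonal correction handled in (a)) as a parameter $s\in\mathbb S$ with $\|s\|_2\lesssim\eta_N$, bound $\sup_{s\in\mathbb S}\max_t\|\frac1{N_3}\sum_{i\in I_3}e_{1,it}u_{i,1}^0\{\text{centred indicator at } h_{it}^{III}+h_{it}^{IV}+s\}\|_2$ via a net plus Lemma \ref{Lem:Bern}, and only then substitute $s=h_t^V$. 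Your treatment of the bias cancellation (the analogue of $\mathbb I_{6,t}^{IV,4}=\mathbb I_{6,t}^{III,4}$, and the part (ii) claim that the non-mean-zero bias is merely $O_p(\eta_N)$) is fine; it is specifically the stochastic-equicontinuity bound that is not supported as written.
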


\begin{proof}
To handle the correlation between $\{\epsilon_{it},e_{it}\}$ and $\left\{%
\hat{u}_{i,0}^{(3,1)},\hat{u}_{i,1}^{(3,1)}\right\} $, we follow similar
arguments as used in the proof of Lemma \ref{Lem26} by putting $\left\{ \hat{%
u}_{i,0}^{(3,1)},\hat{u}_{i,1}^{(3,1)}\right\} $ in a parameter set. Then by
similar arguments as used in the proof of Lemma \ref{Lem29}, we can obtain
the desired results.
\end{proof}

\begin{lemma}
{\small \label{Lem31} } Under Assumptions \ref{ass:1}-\ref{ass:10} and
Assumption \ref{ass:15}, we have 
\begin{align*}
\max_{t\in[T]}\left\Vert
O_{v_{1},t}^{(1)}-\left(O_{u,1}^{(1)\prime}\right)^{-1}\right\Vert_{F}=o_{p}%
\left(\left(N\vee T\right)^{-\frac{1}{2}}\right).
\end{align*}
\end{lemma}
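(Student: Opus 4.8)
The plan is to work from the explicit form of $O_{v_{1},t}^{(1)}$ recorded just above (\ref{C.25}) in the proof of Theorem \ref{Thm3}(ii). Rearranging that formula gives
\[
O_{v_{1},t}^{(1)}-\left(O_{u,1}^{(1)\prime}\right)^{-1}=\left(O_{u,1}^{(1)\prime}\right)^{-1}\left(\hat{V}_{v_{1},t}^{3}\right)^{-1}\left[\frac{1}{N_{3}}\sum_{i\in I_{3}}f_{it}(0)e_{1,it}^{2}u_{i,1}^{0}\left(O_{u,1}^{(1)}u_{i,1}^{0}-\hat{u}_{i,1}^{(3,1)}\right)^{\prime}\right]\left(O_{u,1}^{(1)\prime}\right)^{-1}.
\]
Since $O_{u,1}^{(1)}$ is a bounded invertible matrix and $\hat{V}_{v_{1},t}^{3}=\frac{1}{N_{3}}\sum_{i\in I_{3}}f_{it}(0)e_{1,it}^{2}u_{i,1}^{0}u_{i,1}^{0\prime}$ is, w.p.a.1 and uniformly over $t$, bounded and bounded away from zero — it is, up to the bounded invertible factors $O_{u,1}^{(1)}$, the $(2,2)$ block of the matrix $D_{t}^{F}$ shown to be uniformly well conditioned in the proof of Lemma \ref{Lem27}, using $f_{it}(0)\geq\underline{f}$, $|e_{1,it}|\lesssim\xi_{N}$ a.s., Lemma \ref{Lem:bounded u&v_tilde}(i) and Assumption \ref{ass:12} — it suffices to show
\[
\max_{t\in[T]}\left\Vert\frac{1}{N_{3}}\sum_{i\in I_{3}}f_{it}(0)e_{1,it}^{2}u_{i,1}^{0}\left(\hat{u}_{i,1}^{(3,1)}-O_{u,1}^{(1)}u_{i,1}^{0}\right)^{\prime}\right\Vert_{F}=o_{p}\left((N\vee T)^{-1/2}\right).
\]

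I would then substitute the linear expansion (\ref{step6_2}) of Theorem \ref{Thm3}(i), namely $\hat{u}_{i,1}^{(3,1)}-O_{u,1}^{(1)}u_{i,1}^{0}=O_{1}^{(1)}\hat{V}_{u_{1}}^{-1}\frac{1}{T}\sum_{s=1}^{T}b_{1,is}^{0}+\mathcal{R}_{i,u}^{1}$ with $b_{1,is}^{0}=e_{1,is}v_{s,1}^{0}(\tau-\mathbf{1}\{\epsilon_{is}\leq0\})$ and $\max_{i\in I_{3}}|\mathcal{R}_{i,u}^{1}|=o_{p}((N\vee T)^{-1/2})$, splitting the target into the leading double average
\[
\frac{1}{N_{3}T}\sum_{i\in I_{3}}\sum_{s=1}^{T}f_{it}(0)e_{1,it}^{2}\,u_{i,1}^{0}\,b_{1,is}^{0\prime}\hat{V}_{u_{1}}^{-1\prime}O_{1}^{(1)\prime}
\]
and the remainder $\frac{1}{N_{3}}\sum_{i\in I_{3}}f_{it}(0)e_{1,it}^{2}u_{i,1}^{0}\mathcal{R}_{i,u}^{1\prime}$. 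For the leading term, I would condition on $\mathscr{D}_{e}$: then every factor other than $a_{is}=\tau-\mathbf{1}\{\epsilon_{is}\leq0\}$ is measurable (in particular $\hat{V}_{u_{1}}$, which is $\mathscr{D}$-measurable and a.s.\ bounded with bounded inverse w.p.a.1), $\mathbb{E}(a_{is}\mid\mathscr{D}_{e})=0$ by Assumption \ref{ass:1}(ii), the sequences $\{\epsilon_{is}\}_{s}$ are conditionally independent across $i$ by Assumption \ref{ass:1}(i) and strong mixing in $s$ within each $i$ by Assumption \ref{ass:1}(iii). Applying the conditional mixing exponential inequality of Lemma \ref{Lem:Bern}(ii) along the $s$-direction together with a Bernstein bound over the independent index $i$ — exactly the combination used to control the double sums in (\ref{Lem22.5})--(\ref{Lem:J1.4}) — and a union bound over $t\in[T]$ and over the $O(1)$ matrix entries, I expect to get that the leading term is $O_{p}\!\left(\xi_{N}^{3}\sqrt{\log(N\vee T)/(N_{3}T)}\right)$ uniformly in $t$. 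Since $N_{3}\asymp N$ this is $O_{p}\!\left(\xi_{N}^{3}\sqrt{\log(N\vee T)/(N\wedge T)}\,(N\vee T)^{-1/2}\right)$, which is $o_{p}((N\vee T)^{-1/2})$ because Assumption \ref{ass:1}(ix) forces $\xi_{N}^{6}\log(N\vee T)/(N\wedge T)\to0$.

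The hard part will be the remainder $\frac{1}{N_{3}}\sum_{i\in I_{3}}f_{it}(0)e_{1,it}^{2}u_{i,1}^{0}\mathcal{R}_{i,u}^{1\prime}$: the weight $f_{it}(0)e_{1,it}^{2}u_{i,1}^{0}$ is only $O(\xi_{N}^{2})$ a.s., so the naive bound $\xi_{N}^{2}\max_{i}|\mathcal{R}_{i,u}^{1}|$ does not obviously reach $o_{p}((N\vee T)^{-1/2})$, and I would need to reopen the decomposition of $\mathcal{R}_{i,u}^{1}$ built up in Steps 1--6 of the proof of Theorem \ref{Thm3}(i) and bound its pieces one by one against the (averaged) weight: the pieces that are $O_{p}(\eta_{N}^{2})$ uniformly in $i$ contribute $O_{p}(\eta_{N}^{2})$ once one uses that $\frac{1}{N_{3}}\sum_{i\in I_{3}}f_{it}(0)e_{1,it}^{2}u_{i,1}^{0}$ is $O(1)$; the conditionally mean-zero score-type averages over $i$ given $\mathscr{D}_{e}$ (such as the term dominating $J_{1,i}^{4}$, cf.\ (\ref{Lem22.5})) are handled by the same conditional-Bernstein-over-$i$ argument as the leading term; and any piece with a non-vanishing conditional mean of order $T^{-1}$ contributes $O_{p}(\xi_{N}^{2}/T)$, which is again $o_{p}((N\vee T)^{-1/2})$ by Assumption \ref{ass:1}(ix). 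Collecting the leading-term and remainder bounds and multiplying back by $(\hat{V}_{v_{1},t}^{3})^{-1}$ and the bounded factors $O_{u,1}^{(1)}$ yields the claimed uniform rate. The leading-term estimate is a routine exponential-inequality calculation mirroring displays already in the paper; the careful bookkeeping of the components of $\mathcal{R}_{i,u}^{1}$ is where the real effort lies.
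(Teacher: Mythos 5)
Your decomposition, substitution of the linear expansion from (\ref{step6_2}), and conditional-Bernstein treatment of the resulting double average coincide exactly with the paper's argument. The one place where you overthink things is the remainder $\frac{1}{N_{3}}\sum_{i\in I_{3}}f_{it}(0)e_{1,it}^{2}u_{i,1}^{0}\mathcal{R}_{i,u}^{1\prime}$: there is no need to reopen the internal structure of $\mathcal{R}_{i,u}^{1}$, because H\"older gives $\bigl\Vert\frac{1}{N_{3}}\sum_{i}f_{it}(0)e_{1,it}^{2}u_{i,1}^{0}\mathcal{R}_{i,u}^{1\prime}\bigr\Vert\le\bigl(\frac{1}{N_{3}}\sum_{i}\vert f_{it}(0)e_{1,it}^{2}\vert\,\Vert u_{i,1}^{0}\Vert\bigr)\max_{i\in I_{3}}\Vert\mathcal{R}_{i,u}^{1}\Vert$, and the cross-sectional average in the first factor is $O(1)$ a.s.\ (uniformly in $t$) by Assumption \ref{ass:1}(iv) and Lemma \ref{Lem:bounded u&v_tilde}(i) — the $\xi_{N}^{2}$ pointwise bound on the weight is a red herring. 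With that, $\max_{i\in I_{3}}\Vert\mathcal{R}_{i,u}^{1}\Vert=o_{p}((N\vee T)^{-1/2})$ from Theorem \ref{Thm3}(i) immediately gives the remainder the right order, which is precisely what the paper does in its second equality.
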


\begin{proof}
Recall that 
\begin{equation*}
O_{v_{1},t}^{(1)}=\left\{ I_{K_{1}}+\left(O_{u,1}^{(1)\prime}\right) ^{-1} 
\left[ \hat{V}_{v_{1},t}^{I}\right] ^{-1}\left[ \frac{1}{N_{3}}\sum_{i\in
I_{3}}f_{it}(0)e_{1,it}^{2}u_{i,1}^{0}\left(O_{u,1}^{(1)}u_{i,1}^{0}-\hat{u}%
_{i,1}^{(3,1)}\right) ^{\prime}\right] \right\}
\left(O_{u,1}^{(1)\prime}\right) ^{-1}.
\end{equation*}%
Then 
\begin{equation*}
O_{v_{1},t}^{(1)}-\left(O_{u,1}^{(1)\prime}\right)
^{-1}=\left(O_{u,1}^{(1)\prime}\right) ^{-1}\left[ \hat{V}_{v_{1},t}^{3}%
\right] ^{-1}\left[ \frac{1}{N_{3}}\sum_{i\in
I_{3}}f_{it}(0)e_{1,it}^{2}u_{i,1}^{0}\left(O_{u,1}^{(1)}u_{i,1}^{0}-\hat{u}%
_{i,1}^{(3,1)}\right) ^{\prime}\right] .
\end{equation*}%
Note that 
\begin{align*}
& \frac{1}{N_{3}}\sum_{i\in
I_{3}}f_{it}(0)e_{1,it}^{2}u_{i,1}^{0}\left(O_{u,1}^{(1)}u_{i,1}^{0}-\hat{u}%
_{i,1}^{(3,1)}\right) ^{\prime} \\
& =\frac{1}{N_{3}}\sum_{i\in I_{3}}f_{it}(0)e_{1,it}^{2}\left\{ O_{1}^{(1)}%
\hat{V}_{u_{1}}^{-1}\frac{1}{T}\sum_{t=1}^{T}e_{1,it}v_{t,1}^{0}\left(\tau -%
\mathbf{1}\left\{ \epsilon_{it}\leq 0\right\} \right) +\mathcal{R}%
_{i,u}^{1}\right\} u_{i,1}^{0\prime} \\
& =\frac{1}{N_{3}}\sum_{i\in I_{3}}f_{it}(0)e_{1,it}^{2}O_{1}^{(1)}\hat{V}%
_{u_{1}}^{-1}\frac{1}{T}\sum_{t=1}^{T}e_{1,it}v_{t,1}^{0}\left(\tau -\mathbf{%
1}\left\{ \epsilon_{it}\leq 0\right\} \right)
u_{i,1}^{0\prime}+o_{p}\left(\left(N\vee T\right) ^{-\frac{1}{2}}\right) \\
& =\frac{1}{N_{3}T}\sum_{i\in
I_{3}}\sum_{t=1^{\prime}}^{T}f_{it}(0)e_{1,it}^{2}O_{1}^{(1)}\hat{V}%
_{u_{1}}^{-1}e_{1,it^{*}}v_{t^{*},1}^{0}\left(\tau -1\left\{
\epsilon_{it^{*}}\leq 0\right\} \right)
u_{i,1}^{0\prime}+o_{p}\left(\left(N\vee T\right) ^{-\frac{1}{2}}\right) \\
& =o_{p}\left(\left(N\vee T\right) ^{-\frac{1}{2}}\right) \quad \text{%
uniformly over }t\in[T]\text{,}
\end{align*}%
where the second equality is by uniform convergence rate of $\mathcal{R}%
_{i,u}^{1}$ and the last line follows by similar arguments as in (\ref%
{Lem:J1.4}) by Bernstein's inequality conditional on $\mathscr{D}_{e}$. Then
the result follows by noting that $O_{u,1}$ is bounded and $\hat{V}%
_{v_{1},t}^{I}$ is bounded uniformly over $t\in[T]$.
\end{proof}

\subsection{Lemmas for the Consistent Estimation of the Asymptotic Variances}

Recall that 
\begin{align*}
\hat{\mathbb{V}}_{u_{j}}& =\frac{1}{NT}\sum_{i\in [N]}\sum_{t\in[T]%
}k_{h_{N}}(\hat{\epsilon}_{it})\hat{e}_{j,it}^{2}\hat{\mathrm{v}}%
_{t,t,j},\quad \hat{\mathbb{V}}_{v_{j}}=\frac{1}{NT}\sum_{i\in [N]}\sum_{t\in%
[T]}k_{h_{N}}(\hat{\epsilon}_{it})\hat{e}_{j,it}^{2}\hat{\mathrm{u}}_{i,i,j},
\\
\hat{\Omega}_{u_{j}}& =\frac{1}{NT}\sum_{i\in [N]}\sum_{t\in[T]}\tau (1-\tau
)\hat{e}_{j,it}^{2}\hat{\mathrm{v}}_{t,t,j} \\
& +\frac{1}{NT}\sum_{i\in [N]}\sum_{t=1}^{T-T_{1}}\sum_{s=t+1}^{t+T_{1}}\hat{%
e}_{j,it}\hat{e}_{j,is}\hat{\mathrm{v}}_{t,s,j}\left[ \tau -K\left(\frac{%
\hat{\epsilon}_{it}}{h_{N}}\right) \right] \left[ \tau -K\left(\frac{\hat{%
\epsilon}_{is}}{h_{N}}\right) \right] \\
& +\frac{1}{NT}\sum_{i\in [N]}\sum_{t=1+T_{1}}^{T}\sum_{s=t-T_{1}}^{t-1}\hat{%
e}_{j,it}\hat{e}_{j,is}\hat{\mathrm{v}}_{t,s,j}\left[ \tau -K\left(\frac{%
\hat{\epsilon}_{it}}{h_{N}}\right) \right] \left[ \tau -K\left(\frac{\hat{%
\epsilon}_{is}}{h_{N}}\right) \right] , \\
\hat{\Omega}_{v_{j}}& =\tau (1-\tau )\frac{1}{NT}\sum_{i\in [N]}\sum_{t\in[T]%
}\hat{e}_{j,it}^{2}\hat{\mathrm{u}}_{i,i,j}, \\
\hat{\Sigma}_{u_{j}}& =\left(\hat{\mathbb{V}}_{u_{j}}\right)^{-1}\hat{\Omega}%
_{u_{j}}\left(\hat{\mathbb{V}}_{u_{j}}\right)^{-1},\quad \hat{\Sigma}%
_{v_{j}}=\left(\hat{\mathbb{V}}_{v_{j}}\right) ^{-1}\hat{\Omega}%
_{v_{j}}\left(\hat{\mathbb{V}}_{v_{j}}\right) ^{-1}, \\
V_{u_{j}}& =\frac{1}{T}\sum_{t=1}^{T}\mathbb{E}\left[%
f_{it}(0)e_{j,it}^{2}v_{t,j}^{0}v_{t,j}^{0\prime}\right] ,\quad
V_{v_{j}}^{(a)}=\frac{1}{N_{a}}\sum_{i\in I_{a}}\mathbb{E}\left[
f_{it}(0)e_{j,it}^{2}\right]u_{i,j}^{0}u_{i,j}^{0\prime}, \\
\Omega_{u_{j}}& =Var\left[ \frac{1}{\sqrt{T}}%
\sum_{t=1}^{T}e_{j,it}v_{t,j}^{0}(\tau -\mathbf{1}\left\{ \epsilon_{it}\leq
0\right\} )\right] ,\quad \Omega_{v_{j}}=\tau \left(1-\tau \right) \frac{1}{N%
}\sum_{i\in I_{3}}\mathbb{E}\left(e_{j,it}^{2}u_{i,j}^{0}u_{i,j}^{0\prime}%
\right) , \\
\Sigma_{u_{j}}&
=O_{j}^{(1)}V_{u_{j}}^{-1}\Omega_{u_{j}}V_{u_{j}}^{-1}O_{j}^{(1)\prime},%
\quad
\Sigma_{v_{j}}=O_{j}^{(1)}V_{v_{j}}^{-1}%
\Omega_{v_{j}}V_{v_{j}}^{-1}O_{j}^{(1)\prime}.
\end{align*}

\begin{lemma}
{\small \label{Lem:covhat}} Under Assumptions \ref{ass:1}-\ref{ass:11} and
Assumption \ref{ass:15}, $\hat{\Sigma}_{u_{j}}=\Sigma_{u_{j}}+o_{p}(1)$ and $%
\hat{\Sigma}_{v_{j}}=\Sigma_{v_{j}}+o_{p}(1)$.
\end{lemma}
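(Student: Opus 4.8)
The plan is to prove the consistency of $\hat{\Sigma}_{u_{j}}$ and $\hat{\Sigma}_{v_{j}}$ by establishing the consistency of their three constituent pieces: the kernel-weighted Hessian-type estimators $\hat{\mathbb{V}}_{u_{j}}$ and $\hat{\mathbb{V}}_{v_{j}}$, and the sandwich-middle estimators $\hat{\Omega}_{u_{j}}$ and $\hat{\Omega}_{v_{j}}$. Given these, since Assumption \ref{ass:12} ensures $V_{u_{j}}$, $V_{v_{j}}$, $\Omega_{u_{j}}$, $\Omega_{v_{j}}$ are all well-separated from singularity, the continuous mapping theorem applied to the matrix inversion and multiplication finishes the argument: $\hat{\Sigma}_{u_{j}}=\hat{\mathbb{V}}_{u_{j}}^{-1}\hat{\Omega}_{u_{j}}\hat{\mathbb{V}}_{u_{j}}^{-1}\convP V_{u_{j}}^{-1}\Omega_{u_{j}}V_{u_{j}}^{-1}$, and the rotation matrices $O_j^{(1)}$ are absorbed as in the discussion after Theorem \ref{Thm3}. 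So the real work is the four intermediate convergences, and I would follow the idea of \cite{chen2019two} as the paper suggests.

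First I would handle $\hat{\mathbb{V}}_{u_{j}}$. Write $\hat{\mathbb{V}}_{u_{j}}-V_{u_{j}}$ as a sum of terms capturing (a) replacing $\hat{\epsilon}_{it}$ by $\epsilon_{it}$ inside the kernel, (b) replacing $\hat{e}_{j,it}$ by $e_{j,it}$, (c) replacing $\hat{\mathrm{v}}_{t,t,j}$ by $v_{t,j}^0 v_{t,j}^{0\prime}$ (up to rotation), and (d) the "oracle" difference $\frac{1}{NT}\sum_{i,t}k_{h_N}(\epsilon_{it})e_{j,it}^2 v_{t,j}^0 v_{t,j}^{0\prime}-\frac{1}{T}\sum_t \mathbb{E}[f_{it}(0)e_{j,it}^2 v_{t,j}^{0}v_{t,j}^{0\prime}]$. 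For (d) I would condition on $\mathscr{D}_e$, use a bias-variance decomposition for the kernel estimator with Assumption \ref{ass:11}(i),(iv) (the $m$-th order kernel and smoothness of $f_{it}$) to control bias, and a law-of-large-numbers / Bernstein argument across $(i,t)$ for the variance. For (a)–(c) I would use the uniform rates $\max_{i,t}|\hat{\epsilon}_{it}-\epsilon_{it}|=O_p(\sqrt{\log(N\vee T)/(N\wedge T)})$ (from Proposition \ref{Pro4}(ii), since $\hat\epsilon_{it}-\epsilon_{it}=-(\hat\Theta_{0,it}-\Theta_{0,it}^0)-\sum_j X_{j,it}(\hat\Theta_{j,it}-\Theta_{j,it}^0)$), $\max_{i,t}|\hat{e}_{j,it}-e_{j,it}|=O_p(\eta_N)$ (Lemma \ref{Lem20}), $\max_t\|\hat{v}_{t,j}^{(a,b)}-O_{u,j}^{(b)}v_{t,j}^0\|_2=O_p(\eta_N)$ (Theorem \ref{Thm3}), together with Lipschitz continuity of $k$ and the bandwidth condition in Assumption \ref{ass:11}(ii), namely $(\log(N\vee T)/(N\wedge T))^{1/4}\xi_N^2/h_N\to 0$, which is exactly what kills the $1/h_N$ blowup from differentiating the kernel. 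The estimator $\hat{\mathbb{V}}_{v_{j}}$ is entirely analogous, using $\max_i\|\hat{u}_{i,j}^{(a,b)}-O_{u,j}^{(b)}u_{i,j}^0\|_2=O_p(\eta_N)$ from Theorem \ref{Thm3} in place of the factor rate. Assumption \ref{ass:11}(v),(vi) are what let me treat $V_{u_j,i}$ and $V_{v_j}^{(a)}$ as their homogenized/full-sample limits $V_{u_j}$, $V_{v_j}$.

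Next, $\hat{\Omega}_{v_{j}}$: this one is the easiest since it has no kernel and no serial-correlation correction — it is $\tau(1-\tau)\frac{1}{NT}\sum_{i,t}\hat{e}_{j,it}^2\hat{\mathrm{u}}_{i,i,j}$, and the same plug-in replacement bounds plus a LLN give $\hat{\Omega}_{v_{j}}\convP \tau(1-\tau)\frac{1}{N}\sum_{i\in I_3}\mathbb{E}(e_{j,it}^2 u_{i,j}^0 u_{i,j}^{0\prime})=\Omega_{v_j}$ (using Assumption \ref{ass:11}(vi) to identify the subsample moment with the full-sample one). The genuinely delicate piece — and the main obstacle — is $\hat{\Omega}_{u_{j}}$, which is a HAC-type long-run-variance estimator: it carries the truncated double sum $\sum_{t}\sum_{s:|t-s|\le T_1} S_{j,its}$ with $S_{j,its}=\hat{e}_{j,it}\hat{e}_{j,is}\hat{\mathrm{v}}_{t,s,j}[\tau-K(\hat\epsilon_{it}/h_N)][\tau-K(\hat\epsilon_{is}/h_N)]$, and the target $\Omega_{u_j}=\mathrm{Var}[\frac{1}{\sqrt T}\sum_t e_{j,it}v_{t,j}^0(\tau-\mathbf{1}\{\epsilon_{it}\le 0\})]$ involves all autocovariances up to lag $T$. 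I would: (i) show $K(\hat\epsilon_{it}/h_N)$ estimates $\mathbf{1}\{\epsilon_{it}\le 0\}$ in an averaged sense — $\mathbb{E}[K(\epsilon_{it}/h_N)|\mathscr{G}_{i,t-1}]\to\mathbf{1}\{\cdot\}$-type quantities up to smoothing bias controlled by Assumption \ref{ass:11}(i),(iv) — while the perturbation from $\hat\epsilon_{it}\to\epsilon_{it}$ and $\hat e_{j,it}\to e_{j,it}$ is absorbed via $\max_{i,t}|\hat\epsilon_{it}-\epsilon_{it}|$, $\max_{i,t}|\hat e_{j,it}-e_{j,it}|$ and the crucial rate condition Assumption \ref{ass:11}(iii), $\sqrt{\log(N\vee T)/(N\wedge T)}\,\xi_N^2 T_1/h_N^2\to 0$, which bounds the accumulated error over the $T_1$ lags; (ii) use the conditional strong-mixing decay from Assumption \ref{ass:1}(iii) (exponential mixing coefficients) together with Davydov's inequality to show the truncation at lag $T_1$ (with $T_1\to\infty$) loses a negligible tail $\sum_{|t-s|>T_1}\|\mathrm{Cov}\|=o(1)$; and (iii) average over $i$ and apply a Bernstein/LLN argument (conditioning on $\mathscr{D}_e$, using Assumption \ref{ass:1}(i)) to get the remaining sample-mean convergence, invoking Assumption \ref{ass:11}(v) to identify the per-$i$ limit with the common $\Omega_{u_j}$. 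The interplay between the three tuning parameters $h_N$, $T_1$ and the estimation rate $\eta_N$ — making sure conditions \ref{ass:11}(ii),(iii) simultaneously dominate the kernel-derivative blowup and the lag accumulation — is where I expect the bookkeeping to be hardest, but it is the same structure as in \cite{chen2019two} and no new idea is needed beyond careful tracking of these rates.
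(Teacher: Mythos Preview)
Your proposal is correct and follows essentially the same route as the paper: decompose $\hat{\mathbb{V}}_{u_j}$, $\hat{\mathbb{V}}_{v_j}$, $\hat{\Omega}_{u_j}$, $\hat{\Omega}_{v_j}$ into oracle terms plus plug-in perturbations, control the latter via the uniform rates from Theorem~\ref{Thm3}, Lemma~\ref{Lem20}, and Proposition~\ref{Pro4} together with the Lipschitz property of $k$ and $K$, handle the oracle bias via the $m$-th order kernel (Assumption~\ref{ass:11}(i),(iv)) and a Bernstein-type LLN, and for $\hat{\Omega}_{u_j}$ use mixing plus the truncation $T_1\to\infty$ exactly as in \cite{chen2019two}. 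One small correction: the uniform residual rate is $\max_{i,t}|\hat{\epsilon}_{it}-\epsilon_{it}|=O_p(\eta_N)$, driven by the intercept term $\hat{\Theta}_{0,it}-\Theta_{0,it}^0$ (which Proposition~\ref{Pro4}(ii) does \emph{not} cover; see \eqref{max:theta}), not $O_p(\sqrt{\log(N\vee T)/(N\wedge T)})$ --- but this is precisely what Assumption~\ref{ass:11}(ii)--(iii) are calibrated for, so your argument goes through unchanged.
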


\begin{proof}
First, we show that $\hat{\mathbb{V}}_{u_{j}}=O_{j}^{(1)}V_{u_{j}}O_{j}^{(1)%
\prime}+o_{p}(1)$. Note that 
\begin{align}
\max_{i\in I_{3},t\in[T]}\left\vert \hat{\epsilon}_{it}-\epsilon_{it}\right%
\vert & \leq \max_{i\in I_{3},t\in[T]}\left\vert \hat{\Theta}%
_{0,it}-\Theta_{0,it}^{0}\right\vert +\max_{i\in I_{3},t\in[T]}\sum_{j\in
[p]}\left\vert X_{j,it}\right\vert \left\vert \hat{\Theta}%
_{j,it}-\Theta_{j,it}^{0}\right\vert  \notag  \label{kerneldiff} \\
& =R_{\epsilon ,it}^{1}+\max_{i\in I_{3},t\in[T],j\in [p]}\left\vert
X_{j,it}\right\vert R_{\epsilon ,it}^{2},\text{ and}  \notag \\
\max_{i\in I_{3},t\in[T]}\left\vert k_{h_{N}}(\hat{\epsilon}%
_{it})-k_{h_{N}}(\epsilon_{it})\right\vert & =\frac{1}{h_{N}}\max_{i\in
I_{3},t\in[T]}\left\vert k(\frac{\hat{\epsilon}_{it}}{h_{N}})-k(\frac{%
\epsilon_{it}}{h_{N}})\right\vert \lesssim \frac{1}{h_{N}^{2}}\max_{i\in
I_{3},t\in[T]}\left\vert \hat{\epsilon}_{it}-\epsilon_{it}\right\vert  \notag
\\
& =R_{k,it}^{1}+\max_{i\in I_{3},t\in[T],j\in [p]}\left\vert
X_{j,it}\right\vert R_{k,it}^{2},
\end{align}
where $\max_{i\in I_{3},t\in[T]}\left\vert R_{\epsilon,it}^{1}\right\vert
=O_{p}(\eta_{N})$, $\max_{i\in I_{3},t\in[T]}\left\vert R_{\epsilon
,it}^{2}\right\vert =O_{p}\left(\frac{\log N\vee T}{N\wedge T}\right)$, $%
\max_{i\in I_{3},t\in[T]}\left\vert
R_{k,it}^{1}\right\vert=O_{p}(\eta_{N}h_{N}^{-2})$ and $\max_{i\in I_{3},t\in%
[T]}\left\vert R_{k,it}^{2}\right\vert =O_{p}(\frac{\log N\vee T}{N\wedge T}%
h_{N}^{-2})$ by (\ref{max:theta}) and Assumption \ref{ass:1}(iv). Let 
\begin{equation*}
\mathrm{v}_{t,t,j}^{0}=\frac{1}{6}\sum_{a\in[3]}\sum_{b\in[3]%
\setminus\{a\}}O_{j}^{(b)}v_{t,j}^{0}v_{t,j}^{0\prime}O_{j}^{(b)\prime}
\end{equation*}
and recall that $\hat{\mathrm{v}}_{t,t,j}=\frac{1}{6}\sum_{a\in[3]}\sum_{b\in%
[3]\setminus\{a\}}\hat{v}_{t,j}^{(a,b)}\hat{v}_{t,j}^{(a,b)\prime}$. With
Theorem \ref{Thm3}, it is clear that 
\begin{align*}
\max_{t\in[T]}\left\Vert\hat{\mathrm{v}}_{t,t,j}-\mathrm{v}%
_{t,t,j}^{0}\right\Vert_{F}&=\frac{1}{6}\sum_{a\in[3]}\sum_{b\in[3]%
\setminus\{a\}}\left(\hat{v}_{t,j}^{(a,b)}\hat{v}_{t,j}^{(a,b)%
\prime}-O_{j}^{(b)}v_{t,j}^{0}v_{t,j}^{0\prime}O_{j}^{(b)\prime}\right) \\
&=\frac{1}{6}\sum_{a\in[3]}\sum_{b\in[3]\setminus\{a\}}\bigg[\left(\hat{v}%
_{t,j}^{(a,b)}-O_{j}^{(b)}v_{t,j}^{0} \right)\left(\hat{v}%
_{t,j}^{(a,b)}-O_{j}^{(b)}v_{t,j}^{0}
\right)^{\prime}+O_{j}^{(b)}v_{t,j}^{0}\left(\hat{v}%
_{t,j}^{(a,b)}-O_{j}^{(b)}v_{t,j}^{0} \right)^{\prime} \\
&+\left(\hat{v}_{t,j}^{(a,b)}-O_{j}^{(b)}v_{t,j}^{0}
\right)\left(O_{j}^{(b)}v_{t,j}^{0} \right)^{\prime}\bigg] \\
&=O_{p}\left(\sqrt{\frac{\log N\vee T}{N}}\right).
\end{align*}
Let $\mathrm{v}_{t,s,j}^{0}=\frac{1}{6}\sum_{a\in[3]}\sum_{b\in[3]%
\setminus\{a\}}O_{j}^{(b)}v_{t,j}^{0}v_{s,j}^{0\prime}O_{j}^{(b)\prime}$ and
recall that $\hat{\mathrm{v}}_{t,s,j}=\frac{1}{6}\sum_{a\in[3]}\sum_{b\in[3]%
\setminus\{a\}}\hat{v}_{t,j}^{(a,b)}\hat{v}_{s,j}^{(a,b)\prime}$, similarly
as above, we have $\max_{t\in[T],s\in[T]}\left\Vert\hat{\mathrm{v}}_{t,s,j}-%
\mathrm{v}_{t,s,j}^{0}\right\Vert_{F}=O_{p}\left(\sqrt{\frac{\log N\vee T}{N}%
}\right)$. It follows that 
\begin{align}
\hat{\mathbb{V}}_{u_{j}}& =\frac{1}{NT}\sum_{i\in [N]}\sum_{t\in[T]%
}k_{h_{N}}(\hat{\epsilon}_{it})\hat{e}_{j,it}^{2}\hat{\mathrm{v}}_{t,t,j} 
\notag  \label{Vhat_1} \\
& =\frac{1}{NT}\sum_{i\in [N]}\sum_{t\in[T]}k_{h_{N}}(%
\epsilon_{it})e_{j,it}^{2}\mathrm{v}_{t,t,j}^{0}+\frac{1}{NT}\sum_{i\in
[N]}\sum_{t\in[T]}k_{h_{N}}(\epsilon_{it})e_{j,it}^{2}\left(\hat{\mathrm{v}}%
_{t,t,j}-\mathrm{v}_{t,t,j}^{0}\right)  \notag \\
& +\frac{1}{NT}\sum_{i\in [N]}\sum_{t\in[T]}k_{h_{N}}(\epsilon_{it})\left(%
\hat{e}_{j,it}^{2}-e_{j,it}^{2}\right)\mathrm{v}_{t,t,j}^{0} +\frac{1}{NT}%
\sum_{i\in [N]}\sum_{t\in[T]}k_{h_{N}}(\epsilon_{it})\left(\hat{e}%
_{j,it}^{2}-e_{j,it}^{2}\right)\left(\hat{\mathrm{v}}_{t,t,j}-\mathrm{v}%
_{t,t,j}^{0}\right)  \notag \\
& +\frac{1}{NT}\sum_{i\in [N]}\sum_{t\in[T]}\left[k_{h_{N}}(\hat{\epsilon}%
_{it})-k_{h_{N}}(\epsilon_{it})\right]e_{j,it}^{2}\mathrm{v}_{t,t,j}^{0}+%
\frac{1}{NT}\sum_{i\in [N]}\sum_{t\in[T]}\left[k_{h_{N}}(\hat{\epsilon}%
_{it})-k_{h_{N}}(\epsilon_{it})\right]e_{j,it}^{2}\left(\hat{\mathrm{v}}%
_{t,t,j}-\mathrm{v}_{t,t,j}^{0}\right)  \notag \\
& +\frac{1}{NT}\sum_{i\in [N]}\sum_{t\in[T]}\left[k_{h_{N}}(\hat{\epsilon}%
_{it})-k_{h_{N}}(\epsilon_{it})\right] \left(\hat{e}_{j,it}^{2}-e_{j,it}^{2}%
\right) \mathrm{v}_{t,t,j}^{0}  \notag \\
&+\frac{1}{NT}\sum_{i\in [N]}\sum_{t\in[T]}\left[k_{h_{N}}(\hat{\epsilon}%
_{it})-k_{h_{N}}(\epsilon_{it})\right] \left(\hat{e}_{j,it}^{2}-e_{j,it}^{2}%
\right)\left(\hat{\mathrm{v}}_{t,t,j}-\mathrm{v}_{t,t,j}^{0}\right)  \notag
\\
& =\frac{1}{NT}\sum_{i\in [N]}\sum_{t\in[T]}k_{h_{N}}(%
\epsilon_{it})e_{j,it}^{2}\mathrm{v}_{t,t,j}^{0}+O_{p}\left(%
\eta_{N}h_{N}^{-2}\right)  \notag \\
& =\frac{1}{NT}\sum_{i\in [N]}\sum_{t\in[T]}k_{h_{N}}(%
\epsilon_{it})e_{j,it}^{2}\mathrm{v}_{t,t,j}^{0}+o_{p}(1),
\end{align}%
where the last two lines combines (\ref{max:v}), (\ref{kerneldiff}),
Assumption \ref{ass:11}(ii) and facts that $\hat{e}_{j,it}^{2}-e_{j,it}^{2}=%
\left(\hat{e}_{j,it}-e_{j,it}\right) ^{2}+e_{j,it}\left(\hat{e}%
_{j,it}-e_{j,it}\right) =O_{p}(\eta_{N}^{2})+e_{j,it}O_{p}(\eta_{N})$
uniformly by Lemma \ref{Lem20} and $\max_{i\in I_{3},t\in[T]}\left\vert
k_{h_{N}}(\epsilon_{it})\right\vert =O(h_{N}^{-1})$. By Bernstein's
inequality, we obtain that 
\begin{align}
& \left\Vert \frac{1}{NT}\sum_{i\in [N]}\sum_{t\in[T]}\left[%
k_{h_{N}}(\epsilon_{it})e_{j,it}^{2}-\mathbb{E}\left(k_{h_{N}}(%
\epsilon_{it})e_{j,it}^{2}\big|\mathscr{D}\right) \right]
v_{t,j}^{0}v_{t,j}^{0\prime}\right\Vert_{F} =O_{p}\left(\sqrt{\frac{\log N}{%
NT}}\frac{\xi_{N}^{2}}{h_{N}}\right) ,  \notag \\
& \left\Vert \frac{1}{T}\sum_{t\in[T]}\mathbb{E}\left[f_{it}(0)e_{j,it}^{2}%
\big|\mathscr{D}\right] v_{t,j}^{0}v_{t,j}^{0\prime}-\mathbb{E}\left[
f_{it}(0)e_{j,it}^{2}v_{t,j}^{0}v_{t,j}^{0\prime}\right]\right%
\Vert_{F}=O_{p}\left(\sqrt{\frac{\log T}{T}}\xi_{N}^{2}\right) .
\label{Vhat_2}
\end{align}%
Besides, by Assumption \ref{ass:11}(i), we observe that 
\begin{equation}
\mathbb{E}\left[ k_{h_{N}}(\epsilon_{it})\big|\mathscr{D}_{e}\right]%
=f_{it}(0)+O(h_{N}^{m}),  \label{Vhat_3}
\end{equation}%
together with Assumption \ref{ass:11}(v), and it gives 
\begin{equation}
\frac{1}{NT}\sum_{i\in [N]}\sum_{t\in[T]}\mathbb{E}\left(k_{h_{N}}(%
\epsilon_{it})e_{j,it}^{2}\big|\mathscr{D}\right)v_{t,j}^{0}v_{t,j}^{0%
\prime}=\frac{1}{T}\sum_{t\in[T]}\mathbb{E}\left[f_{it}(0)e_{j,it}^{2}\big|%
\mathscr{D}\right] v_{t,j}^{0}v_{t,j}^{0\prime}+O(h_{N}^{m}).  \label{Vhat_4}
\end{equation}%
Combining (\ref{Vhat_1})-(\ref{Vhat_4}) and Assumption \ref{ass:11}(ii), we
obtain that $\hat{\mathbb{V}}_{u_{j}}=O_{j}^{(1)}V_{u_{j}}O_{j}^{(1)%
\prime}+o_{p}(1)$. By analogous analysis and Assumption \ref{ass:11}(v), we
can also show that $\hat{\mathbb{V}}%
_{v_{j}}^{(1)}=O_{j}^{(1)}V_{v_{j}}O_{j}^{(1)\prime}+o_{p}(1)$.

Next, we show the consistency of $\hat{\Omega}_{u_{j}}$. With the
restriction for $T_{1}$ in Assumption \ref{ass:11}(iii), we first note that 
\begin{align*}
\Omega_{u_{j}}& =Var\left[ \frac{1}{\sqrt{T}}%
\sum_{t=1}^{T}e_{j,it}v_{t,j}^{0}(\tau -\mathbf{1}\left\{ \epsilon_{it}\leq
0\right\} )\right] \\
& =\tau (1-\tau)\frac{1}{T}\sum_{t=1}^{T}\mathbb{E}%
\left(e_{j,it}^{2}v_{t,j}^{0}v_{t,j}^{0\prime}\right) +\frac{1}{T}%
\sum_{t=1}^{T-T_{1}}\sum_{s=t+1}^{t+T_{1}}\mathbb{E}\left[%
e_{j,it}e_{j,is}v_{t,j}^{0}v_{s,j}^{0\prime}(\tau -\mathbf{1}\left\{
\epsilon_{it}\leq 0\right\} )(\tau -\mathbf{1}\left\{ \epsilon_{is}\leq
0\right\} )\right] \\
& +\frac{1}{T}\sum_{t=1+T_{1}}^{T}\sum_{s=t-T_{1}}^{t-1}\mathbb{E}\left[%
e_{j,it}e_{j,is}v_{t,j}^{0}v_{s,j}^{0\prime}(\tau -\mathbf{1}\left\{
\epsilon_{it}\leq 0\right\} )(\tau -\mathbf{1}\left\{ \epsilon_{is}\leq
0\right\} )\right] +O(\alpha ^{T_{1}}) \\
& =\tau (1-\tau )\frac{1}{T}\sum_{t=1}^{T}\mathbb{E}%
\left(e_{j,it}^{2}v_{t,j}^{0}v_{t,j}^{0\prime}\right) +\frac{1}{T}%
\sum_{t=1}^{T-T_{1}}\sum_{s=t+1}^{t+T_{1}}\mathbb{E}\left[%
e_{j,it}e_{j,is}v_{t,j}^{0}v_{s,j}^{0\prime}\left(F_{i,ts}(0,0)-\tau^{2}%
\right) \right] \\
& +\frac{1}{T}\sum_{t=1+T_{1}}^{T}\sum_{s=t-T_{1}}^{t-1}\mathbb{E}\left[%
e_{j,it}e_{j,is}v_{t,j}^{0}v_{s,j}^{0\prime}\left(F_{i,ts}(0,0)-\tau^{2}%
\right) \right] +o(1),
\end{align*}
where the second equality is by Assumption \ref{ass:1}(iii), the third
equality is by Assumption \ref{ass:1}(vii) and Assumption \ref{ass:11}(iii).
Compared to $\hat{\Omega}_{u_{j}}$, what remains to show are 
\begin{align}
& \frac{1}{NT}\sum_{i\in [N]}\sum_{t\in[T]}\hat{e}_{j,it}^{2}\hat{\mathrm{v}}%
_{t,t,j}=\frac{1}{T}\sum_{t=1}^{T}\mathbb{E}\left(e_{j,it}^{2}\mathrm{v}%
_{t,t,j}^{0}\right) +o_{p}(1),  \label{omegahat_1} \\
& \frac{1}{NT}\sum_{i\in [N]}\sum_{t=1}^{T-T_{1}}\sum_{s=t+1}^{t+T_{1}}\hat{e%
}_{j,it}\hat{e}_{j,is}\hat{\mathrm{v}}_{t,s,j}\left[\tau -K\left(\frac{\hat{%
\epsilon}_{it}}{h_{N}}\right) \right] \left[ \tau-K\left(\frac{\hat{\epsilon}%
_{is}}{h_{N}}\right) \right]  \notag \\
& =\frac{1}{T}\sum_{t=1}^{T-T_{1}}\sum_{s=t+1}^{t+T_{1}}\mathbb{E}\left[%
e_{j,it}e_{j,is}\mathrm{v}_{t,s,j}^{0}\left(F_{i,ts}(0,0)-\tau^{2}\right) %
\right] +o_{p}(1),  \label{omegahat_2} \\
& \frac{1}{NT}\sum_{i\in [N]}\sum_{t=1+T_{1}}^{T}\sum_{s=t-T_{1}}^{t-1}\hat{e%
}_{j,it}\hat{e}_{j,is}\hat{\mathrm{v}}_{t,s,j}\left[\tau -K\left(\frac{\hat{%
\epsilon}_{it}}{h_{N}}\right) \right] \left[ \tau-K\left(\frac{\hat{\epsilon}%
_{is}}{h_{N}}\right) \right]  \notag \\
& =\frac{1}{T}\sum_{t=1+T_{1}}^{T}\sum_{s=t-T_{1}}^{t-1}\mathbb{E}\left[%
e_{j,it}e_{j,is}\mathrm{v}_{t,s,j}^{0}\left(F_{i,ts}(0,0)-\tau^{2}\right) %
\right] +o_{p}(1).  \label{omegahat_3}
\end{align}%
For (\ref{omegahat_1}), like (\ref{Vhat_1}), we notice that 
\begin{align}
& \frac{1}{NT}\sum_{i\in [N]}\sum_{t\in[T]}\hat{e}_{j,it}^{2}\hat{\mathrm{v}}%
_{t,t,j}=\frac{1}{NT}\sum_{i\in [N]}\sum_{t\in[T]}e_{j,it}^{2}\mathrm{v}%
_{t,t,j}^{0}+o_{p}(1)  \notag  \label{omegahat_4} \\
& =\frac{1}{T}\sum_{t=1}^{T}\mathbb{E}\left(e_{j,it}^{2}\mathrm{v}%
_{t,t,j}^{0}\right) +\frac{1}{NT}\sum_{i\in [N]}\sum_{t\in[T]}\left[
e_{j,it}^{2}\mathrm{v}_{t,t,j}^{0}-\mathbb{E}\left(e_{j,it}^{2}\mathrm{v}%
_{t,t,j}^{0}\right) \right]+o_{p}(1)  \notag \\
& =\frac{1}{T}\sum_{t=1}^{T}\mathbb{E}\left(e_{j,it}^{2}\mathrm{v}%
_{t,t,j}^{0}\right) +o_{p}(1),
\end{align}%
where the first equality is by Lemma \ref{Lem20} and (\ref{max:v}), and the
second equality is by Bernstein's inequality and Assumption \ref{ass:11}(v).

For (\ref{omegahat_2}), we observe that 
\begin{align}
& \frac{1}{NT}\sum_{i\in [N]}\sum_{t=1}^{T-T_{1}}\sum_{s=t+1}^{t+T_{1}}\hat{e%
}_{j,it}\hat{e}_{j,is}\hat{\mathrm{v}}_{t,s,j}\left[\tau -K\left(\frac{\hat{%
\epsilon}_{it}}{h_{N}}\right) \right] \left[ \tau-K\left(\frac{\hat{\epsilon}%
_{is}}{h_{N}}\right) \right]  \notag \\
& =\tau ^{2}\frac{1}{NT}\sum_{i\in[N]}\sum_{t=1}^{T-T_{1}}%
\sum_{s=t+1}^{t+T_{1}}\hat{e}_{j,it}\hat{e}_{j,is}\hat{\mathrm{v}}%
_{t,s,j}-\tau \frac{1}{NT}\sum_{i\in[N]}\sum_{t=1}^{T-T_{1}}%
\sum_{s=t+1}^{t+T_{1}}\hat{e}_{j,it}\hat{e}_{j,is}\hat{\mathrm{v}}%
_{t,s,j}K\left(\frac{\hat{\epsilon}_{it}}{h_{N}}\right)  \notag \\
& -\tau \frac{1}{NT}\sum_{i\in[N]}\sum_{t=1}^{T-T_{1}}\sum_{s=t+1}^{t+T_{1}}%
\hat{e}_{j,it}\hat{e}_{j,is}\hat{\mathrm{v}}_{t,s,j}K\left(\frac{\hat{%
\epsilon}_{is}}{h_{N}}\right) +\frac{1}{NT}\sum_{i\in[N]%
}\sum_{t=1}^{T-T_{1}}\sum_{s=t+1}^{t+T_{1}}\hat{e}_{j,it}\hat{e}_{j,is}\hat{%
\mathrm{v}}_{t,s,j}K\left(\frac{\hat{\epsilon}_{it}}{h_{N}}\right) K\left(%
\frac{\hat{\epsilon}_{is}}{h_{N}}\right)  \label{omegahat_5}
\end{align}%
such that 
\begin{align}
& \frac{1}{NT}\sum_{i\in [N]}\sum_{t=1}^{T-T_{1}}\sum_{s=t+1}^{t+T_{1}}\hat{e%
}_{j,it}\hat{e}_{j,is}\hat{\mathrm{v}}_{t,s,j}=\frac{1}{NT}\sum_{i\in[N]%
}\sum_{t=1}^{T-T_{1}}\sum_{s=t+1}^{t+T_{1}}e_{j,it}e_{j,is}\mathrm{v}%
_{t,s,j}^{0}+O_{p}\left(T_{1}\eta_{N}\right)  \notag \\
& =\frac{1}{NT}\sum_{i\in[N]}\sum_{t=1}^{T-T_{1}}\sum_{s=t+1}^{t+T_{1}}%
\mathbb{E}\left(e_{j,it}e_{j,is}\mathrm{v}_{t,s,j}^{0}\right) +\frac{1}{NT}%
\sum_{i\in [N]}\sum_{t=1}^{T-T_{1}}\sum_{s=t+1}^{t+T_{1}}\left[%
e_{j,it}e_{j,is}\mathrm{v}_{t,s,j}^{0}-\mathbb{E}\left(e_{j,it}e_{j,is}%
\mathrm{v}_{t,s,j}^{0}\right) \right]  \notag \\
& +O_{p}\left(T_{1}\eta_{N}\right)  \notag \\
& =\frac{1}{NT}\sum_{i\in [N]}\sum_{t=1}^{T-T_{1}}\sum_{s=t+1}^{t+T_{1}}%
\mathbb{E}\left(e_{j,it}e_{j,is}\mathrm{v}_{t,s,j}^{0}\right) +o_{p}(1),
\label{omegahat_6}
\end{align}%
where the first equality is by the similar arguments as (\ref{Vhat_1}) and
the last line combines Assumption \ref{ass:11}(iii) and the fact that the
second term in the second equality can be shown to be $o_{p}(1)$ by
Bernstein's inequality. Furthermore, with the fact that $\max_{i\in I_{3},t%
\in[T]}\left\vert K\left(\frac{\hat{\epsilon}_{it}}{h_{N}}\right) -K\left(%
\frac{\epsilon_{it}}{h_{N}}\right) \right\vert \lesssim \frac{1}{h_{N}}%
\max_{i\in I_{3},t\in[T]}\left\vert \hat{\epsilon}_{it}-\epsilon_{it}\right%
\vert =O_{p}\left(\eta_{N}h_{N}^{-1}\right) $ and by the analogous arguments
as above, we can show that 
\begin{align}
& \frac{1}{NT}\sum_{i\in [N]}\sum_{t=1}^{T-T_{1}}\sum_{s=t+1}^{t+T_{1}}\hat{e%
}_{j,it}\hat{e}_{j,is}\hat{\mathrm{v}}_{t,s,j}K\left(\frac{\hat{\epsilon}%
_{it}}{h_{N}}\right)  \notag \\
& =\frac{1}{NT}\sum_{i\in
[N]}\sum_{t=1}^{T-T_{1}}\sum_{s=t+1}^{t+T_{1}}e_{j,it}e_{j,is}\mathrm{v}%
_{t,s,j}^{0}K\left(\frac{\epsilon_{it}}{h_{N}}\right)
+O_{p}\left(T_{1}\eta_{N}h_{N}^{-1}\right)  \notag \\
& =\frac{1}{NT}\sum_{i\in[N]}\sum_{t=1}^{T-T_{1}}%
\sum_{s=t+1}^{t+T_{1}}e_{j,it}e_{j,is}\mathrm{v}_{t,s,j}^{0}\mathbb{E}\left[
K\left(\frac{\epsilon_{it}}{h_{N}}\right) \bigg|\mathscr{D}_{e}\right]
+o_{p}(1)  \notag \\
& =\frac{\tau }{NT}\sum_{i\in[N]}\sum_{t=1}^{T-T_{1}}%
\sum_{s=t+1}^{t+T_{1}}e_{j,it}e_{j,is}\mathrm{v}%
_{t,s,j}^{0}+O(h_{N}^{m})+o_{p}(1)  \notag \\
& =\frac{\tau }{NT}\sum_{i\in[N]}\sum_{t=1}^{T-T_{1}}\sum_{s=t+1}^{t+T_{1}}%
\mathbb{E}\left[e_{j,it}e_{j,is}\mathrm{v}_{t,s,j}^{0}\right]
+O(h_{N}^{m})+o_{p}(1)  \notag \\
& =\frac{\tau }{NT}\sum_{i\in[N]}\sum_{t=1}^{T-T_{1}}\sum_{s=t+1}^{t+T_{1}}%
\mathbb{E}\left[e_{j,it}e_{j,is}\mathrm{v}_{t,s,j}^{0}\right] +o_{p}(1),
\label{omegahat_7}
\end{align}%
where the first equality is similar as (\ref{Vhat_1}), the second equality
is by addition and subtracting and Assumption \ref{ass:11}(iii), the third
equality is by the fact that $\mathbb{E}\left[ K\left(\frac{\epsilon_{it}}{%
h_{N}}\right) \big|\mathscr{D}_{e}\right] =\tau +O(h_{N}^{m})$ by the
calculation of nonparametric kernel estimator which can be found in \cite%
{galvao2016smoothed}. The last equality is similar as the second equality
and combines Assumption \ref{ass:11}(i) and Assumption \ref{ass:11}(ii).

Moreover, similarly as (\ref{omegahat_6}), with the fact that $\mathbb{E}%
\left[ K\left(\frac{\epsilon_{it}}{h_{N}}\right) K\left(\frac{\epsilon_{is}}{%
h_{N}}\right) \big|\mathscr{D}_{e}\right] =F_{i,ts}(0,0)+O(h_{N}^{m})$, we
can show that 
\begin{align}
& \frac{1}{NT}\sum_{i\in [N]}\sum_{t=1}^{T-T_{1}}\sum_{s=t+1}^{t+T_{1}}\hat{e%
}_{j,it}\hat{e}_{j,is}\hat{\mathrm{v}}_{t,s,j}K\left(\frac{\hat{\epsilon}%
_{it}}{h_{N}}\right) K\left(\frac{\hat{\epsilon}_{is}}{h_{N}}\right)  \notag
\\
& =\frac{1}{NT}\sum_{i\in[N]}\sum_{t=1}^{T-T_{1}}%
\sum_{s=t+1}^{t+T_{1}}e_{j,it}e_{j,is}\mathrm{v}_{t,s,j}^{0}K\left(\frac{%
\epsilon_{it}}{h_{N}}\right) K\left(\frac{\epsilon_{is}}{h_{N}}\right)
+O_{p}\left(\sqrt{\frac{\log (N\vee T)}{N\wedge T}}\frac{T_{1}\xi_{N}^{2}}{%
h_{N}^{2}}\right)  \notag \\
& =\frac{1}{NT}\sum_{i\in[N]}\sum_{t=1}^{T-T_{1}}\sum_{s=t+1}^{t+T_{1}}%
\mathbb{E}\left[e_{j,it}e_{j,is}\mathrm{v}_{t,s,j}^{0}F_{i,ts}(0,0)\right]
+o_{p}(1).  \label{omegahat_8}
\end{align}

Combining (\ref{omegahat_5})-(\ref{omegahat_8}), we complete the proof for (%
\ref{omegahat_2}). By the analogous arguments, we can show the proof for (%
\ref{omegahat_3}), which yields $\hat{\Omega}_{u_{j}}=%
\Omega_{u_{j}}+o_{p}(1) $.
\end{proof}

\section{Algorithm for low-rank Estimation}

In this section, we provide the algorithm for the case of low-rank
estimation with two regressors, the case of more than two regressors is
self-evident. To solve the regularized quantile regression, let the
optimization problem with two regressors be as 
\begin{equation*}
\operatornamewithlimits{min}\limits_{\Theta_{0},\Theta_{1},\Theta_{2}}\quad%
\frac{1}{NT}\sum_{i=1}^{N}\sum_{t=1}^{T}\rho_{\tau
}\left(y_{it}-\Theta_{0,it}-x_{1,it}\Theta_{1,it}-x_{2,it}\Theta_{2,it}%
\right) +\nu_{0}\left\Vert \Theta_{0}\right\Vert_{\ast }+\nu_{1}\left\Vert
\Theta_{1}\right\Vert_{\ast }+\nu_{2}\left\Vert \Theta_{2}\right\Vert_{\ast
}.
\end{equation*}%
As in \cite{belloni2019high}, the above minimization problem is equivalent
to the following one: 
\begin{align*}
& \operatornamewithlimits{min}\limits_{\Theta_{0},\Theta_{1},%
\Theta_{2},V,W,Z_{\Theta_{0}},Z_{\Theta_{1}},Z_{\Theta_{2}}}\frac{1}{NT}%
\sum_{i=1}^{N}\sum_{t=1}^{T}\rho_{\tau }\left(V_{it}\right)
+\nu_{0}\left\Vert \Theta_{0}\right\Vert_{\ast }+\nu_{1}\left\Vert
Z_{\Theta_{1}}\right\Vert_{\ast }+\nu_{2}\left\Vert
Z_{\Theta_{2}}\right\Vert_{\ast } \\
& s.t.\,\,V=W,\,\,\,W=Y-X_{1}\odot \Theta_{1}-X_{2}\odot
\Theta_{2}-Z_{\Theta_{0}}, \\
&
Z_{\Theta_{0}}-\Theta_{0}=0,\,\,\,Z_{\Theta_{1}}-\Theta_{1}=0,\,\,\,Z_{%
\Theta_{2}}-\Theta_{2}=0.
\end{align*}%
As our theoretical results show, $\nu_{0}$, $\nu_{1}$ and $\nu_{2}$ converge
to zero at rate $\frac{\sqrt{N}\vee \sqrt{T}}{NT}$. The augmented Lagrangian
is 
\begin{align*}
& \mathscr{L}\left(V,W,\Theta_{0},Z_{\Theta_{0}},\Theta_{1},Z_{\Theta_{1}},%
\Theta_{2},Z_{\Theta_{2}},U_{v},U_{w},U_{\Theta_{0}},U_{\Theta_{1}},U_{%
\Theta_{2}}\right) \\
& =\frac{1}{NT}\sum_{i=1}^{N}\sum_{t=1}^{T}\rho_{\tau
}(V_{it})+\nu_{0}\left\Vert \Theta_{0}\right\Vert_{\ast }+\nu_{1}\left\Vert
Z_{\Theta_{1}}\right\Vert_{\ast }+\nu_{2}\left\Vert
Z_{\Theta_{2}}\right\Vert_{\ast }+\frac{\rho }{2NT}\left\Vert
V-W+U_{v}\right\Vert_{F}^{2} \\
& +\frac{\rho }{2NT}\left\Vert W-Y+X_{1}\odot \Theta_{1}+X_{2}\odot
\Theta_{2}+Z_{\Theta_{0}}+U_{W}\right\Vert_{F}^{2}+\frac{\rho }{2NT}%
\left\Vert Z_{\Theta_{0}}-\Theta_{0}+U_{\Theta_{0}}\right\Vert_{F}^{2} \\
& +\frac{\rho }{2NT}\left\Vert
Z_{\Theta_{1}}-\Theta_{1}+U_{\Theta_{1}}\right\Vert_{F}^{2}+\frac{\rho }{2NT}%
\left\Vert Z_{\Theta_{2}}-\Theta_{2}+U_{\Theta_{2}}\right\Vert_{F}^{2},
\end{align*}%
where $\rho >0$ is the penalty parameter.

By ADMM algorithm, similarly as Belloni et al. (2019), updates are as
follows: 
\begin{align}  \label{G.1}
&V^{k+1}\gets \argmin\limits_{V}\left\{ \frac{1}{NT}\sum_{i}\sum_{t}\rho_{%
\tau}\left(V_{it}\right)+\frac{\rho}{2NT}\left\Vert
V-W^{k}+U_{v}^{k}\right\Vert_{F}^{2} \right\} \\
\label{G.2}
&(\Theta_{1}^{k+1}, \Theta_{2}^{k+1)}\gets\argmin\limits_{\Theta_{1},%
\Theta_{2}}\bigg\{\left\Vert
W^{k}-Y+X_{1}\odot\Theta_{1}+X_{2}\odot\Theta_{2}+Z_{%
\Theta_{0}}^{k}+U_{W}^{k}\right\Vert_{F}^{2}+\left\Vert
Z_{\Theta_{1}}^{k}-\Theta_{1}+U_{\Theta_{1}}^{k}\right\Vert_{F}^{2} \\
&+\left\Vert
Z_{\Theta_{2}}^{k}-\Theta_{2}+U_{\Theta_{2}}^{k}\right\Vert_{F}^{2} \bigg\} 
\notag \\
\label{G.3}
&\Theta_{0}^{k+1}\gets\argmin\limits_{\Theta_{0}}\left\{\frac{1}{2}%
\left\Vert
Z_{\Theta_{0}}^{k}-\Theta_{0}+U_{\Theta_{0}}^{k}\right\Vert_{F}^{2}+\frac{%
\nu_{0}NT}{\rho}\left\Vert\Theta_{0}\right\Vert_{\ast} \right\} \\
&Z_{\Theta_{1}}^{k+1}\gets\argmin\limits_{Z_{\Theta_{1}}}\left\{\frac{1}{2}%
\left\Vert\Theta_{1}^{k+1}-U_{\Theta_{1}}^{k}-Z_{\Theta_{1}}\right%
\Vert_{F}^{2}+\frac{\nu_{1}NT}{\rho}\left\Vert
Z_{\Theta_{1}}\right\Vert_{\ast} \right\}  \notag \\
&Z_{\Theta_{2}}^{k+1}\gets\argmin\limits_{Z_{\Theta_{2}}}\left\{\frac{1}{2}%
\left\Vert\Theta_{2}^{k+1}-U_{\Theta_{2}}^{k}-Z_{\Theta_{2}}\right%
\Vert_{F}^{2}+\frac{\nu_{2}NT}{\rho}\left\Vert
Z_{\Theta_{2}}\right\Vert_{\ast} \right\}  \notag \\
&(Z_{\Theta_{0}}^{k+1},W^{k+1})\gets \argmin\limits_{Z_{\Theta_{0}},W}\bigg\{%
\left\Vert V^{k+1}-W+U_{v}^{k}\right\Vert_{F}^{2}+\left\Vert
W-Y+X_{1}\odot\Theta_{1}^{k+1}+X_{2}\odot\Theta_{2}^{k+1}+Z_{%
\Theta_{0}}+U_{W}^{k}\right\Vert_{F}^{2}  \notag \\
&+\left\Vert
Z_{\Theta_{0}}-\Theta_{0}^{k+1}+U_{\Theta_{0}}^{k}\right\Vert_{F}^{2}\bigg\}
\notag \\
&U_{v}^{k+1}\gets V^{k+1}-W^{k+1}+U_{v}^{k}  \notag \\
&U_{W}^{k+1}\gets
W^{k+1}-Y+X_{1}\odot\Theta_{1}^{k+1}+X_{2}\odot\Theta_{2}^{k+1}+Z_{%
\Theta_{0}}^{k+1}+U_{W}^{k}  \notag \\
&U_{\Theta_{0}}^{k+1}\gets Z_{\Theta_{0}}^{k+1}-\tilde{\Theta_{0}}%
^{k+1}+U_{\Theta_{0}}^{k}  \notag \\
&U_{\Theta_{1}}^{k+1}\gets
Z_{\Theta_{1}}^{k+1}-\Theta_{1}^{k+1}+U_{\Theta_{1}}^{k}  \notag \\
&U_{\Theta_{2}}^{k+1}\gets
Z_{\Theta_{2}}^{k+1}-\Theta_{2}^{k+1}+U_{\Theta_{2}}^{k}  \notag
\end{align}

For (\ref{G.1}), by Ali et al. (2016), 
\begin{align*}
V^{k+1}\gets P_{+}\left(W^{k}-U_{v}^{k}-\frac{\tau}{\rho}\iota_{N}%
\iota_{T}^{\prime}\right)+P_{-}\left(W^{k}-U_{v}^{k}-\frac{%
\left(1-\tau\right)}{\rho}\iota_{N}\iota_{T}^{\prime}\right).
\end{align*}
where $\iota_{N}$ is the $N\times 1$ all-ones vector, and same for $%
\iota_{T} $. For (\ref{G.2}), first order condition gives 
\begin{align*}
\Theta_{1,it}^{k+1}=\frac{\left(1+x_{2,it}^{2}\right)\left(Z_{%
\Theta_{1},it}^{k}+U_{\Theta_{1},it}^{k}-A_{it}x_{1,it}%
\right)-x_{1,it}x_{2,it}\left(Z_{\Theta_{2},it}^{k}+U_{%
\Theta_{2},it}^{k}-A_{it}x_{2,it}\right)}{1+x_{1,it}^{2}+x_{2,it}^{2}}, \\
\Theta_{2,it}^{k+1}=\frac{\left(1+x_{1,it}^{2}\right)\left(Z_{%
\Theta_{2},it}^{k}+U_{\Theta_{2},it}^{k}-A_{it}x_{2,it}%
\right)-x_{1,it}x_{2,it}\left(Z_{\Theta_{1},it}^{k}+U_{%
\Theta_{1},it}^{k}-A_{it}x_{1,it}\right)}{1+x_{1,it}^{2}+x_{2,it}^{2}},
\end{align*}
where 
\begin{align*}
A:=W^{k}+Z_{\Theta_{0}}^{k}+U_{W}^{k}-Y.
\end{align*}
To solve (\ref{G.3}), by singular value thresholding estimations, the update
for $\Theta_{0}^{k+1}$ is 
\begin{align*}
\Theta_{0}^{k+1}\gets P_{0}D_{0,\frac{\nu_{1}NT}{\rho}}Q_{0}^{\prime},
\end{align*}
where $Z_{\Theta_{0}}^{k}+U_{\Theta_{0}}^{k}=P_{0}D_{0}Q_{0}^{\prime}$, and $%
D_{0,\frac{\nu_{0}}{\rho},ii}=\max(D_{0,ii}-\frac{\nu_{0}}{\rho},0)$.
Similarly for $Z_{\Theta_{1}}^{k+1}$ and $Z_{\Theta_{2}}^{k+1}$, 
\begin{align*}
Z_{\Theta_{1}}^{k+1}\gets P_{1}D_{1,\frac{\nu_{1}NT}{\rho}}Q_{1}^{\prime}, \\
Z_{\Theta_{2}}^{k+1}\gets P_{2}D_{2,\frac{\nu_{2}NT}{\rho}}Q_{2}^{\prime},
\end{align*}
where $\Theta_{1}^{k+1}-U_{\Theta_{1}}^{k}=P_{1}D_{1}Q_{1}^{\prime}$, $%
\Theta_{2}^{k+1}-U_{\Theta_{2}}^{k}=P_{2}D_{2}Q_{2}^{\prime}$, $D_{1,\frac{%
\nu_{1}}{\rho},ii}=\max\left(D_{1,ii}-\frac{\nu_{1}}{\rho},0\right)$, and $%
D_{2,\frac{\nu_{2}}{\rho},ii}=\max\left(D_{2,ii}-\frac{\nu_{2}}{\rho}%
,0\right)$.

Finally, let $\tilde{A}:=-Y+X_{1}\odot\Theta_{1}^{k+1}+X_{2}\odot%
\Theta_{2}^{k+1}+U_{W}^{k+1},\,\,\tilde{B}:=-V^{k+1}-U_{v}^{k},\,\,\tilde{C}%
:=-\tilde{\Theta_{0}}^{k+1}+U_{\Theta_{0}}^{k}$, then 
\begin{align*}
Z_{\Theta_{0}}^{k+1}\gets\frac{-\tilde{A}-2\tilde{C}+\tilde{B}}{3}, \\
W^{k+1}\gets -\tilde{A}-\tilde{C}-2Z_{\Theta_{0}}^{k+1}.
\end{align*}

\end{document}